\documentclass[a4wide,11pt,leqno]{amsart}
\usepackage{a4wide,color,array}
\usepackage{mathtools}
\usepackage{hyperref}
\usepackage{cite}
\hypersetup{linktocpage,colorlinks, citecolor={magenta}}
\usepackage{esint,amsmath,amssymb,amsthm}
\usepackage[english]{babel}
\usepackage[utf8]{inputenc}
\usepackage[cyr]{aeguill}
\usepackage{esint}
\usepackage{xargs}
\usepackage{enumitem}
\usepackage{lmodern}
\usepackage{bm}
\usepackage{mathrsfs}
\usepackage{accents}

\usepackage{tikz}
\usetikzlibrary{shapes.misc}
\usetikzlibrary{decorations.pathreplacing,calligraphy}
\usetikzlibrary{shapes,arrows}
\tikzset{cross/.style={cross out, draw=black, minimum size=2*(#1-\pgflinewidth), inner sep=0pt, outer sep=0pt},
cross/.default={1pt}}
\usepackage{subcaption}

\newcommand{\Z}{\mathbb{Z}}
\newcommand{\Q}{\mathbb{Q}}
\newcommand{\R}{\mathbb{R}}

\newtheorem{theo}{Theorem}
\newtheorem*{theo*}{Theorem}
\newtheorem{prop}{Proposition}[section]
\newtheorem{lem}[prop]{Lemma}
\newtheorem{coro}[prop]{Corollary}
\newtheorem{remark}[prop]{Remark}
\newtheorem{defi}[prop]{Definition}

\theoremstyle{plain}

\numberwithin{equation}{section}

\def\t0{\rightarrow 0} 

\newcommand{\f}{\frac}

\newcommand{\ep}{\varepsilon}

\newcommand{\hal}{\frac{1}{2}}

\newcommand{\supp}{\mathrm{supp }\,} 
\def\div{\mathrm{div} \, } 
\def\1{\mathbf{1}} 


\def \mc{\mathcal}
\def \ep{\varepsilon}
\def\ux{X}

\def \ZNbeta{Z_{N,\beta}} 

\def\({\left(}
\def\){\right)}

\def\yg{|y|^\gamma}





\def\P{\mathbb{P}} 

\def \PNbeta{\P_{N, \beta}} 
\def \PgN2{\mathbf{P}_{N,2}} 


\def \HN{\mathcal{H}_N}

\def\Esp{\mathbb{E}} 
\def \E{\Esp}

\def \Ent{\mathrm{Ent}}   


\def \F{\mathcal{F}} 


\def \K{\mathcal{K}}

\def\M{\mathsf{M}}

\def\Ani{\mathsf{A}}

\def \dist{d}
\def\l{\ell}

\def \dist{\mathrm{dist}}

\newcommand{\cm}[1]{{\color{blue}{*** #1 ***}}}
\def\muv{\meseq}

\def\I{\mathcal I}

\def\nab{\nabla}

\def\indic{\mathbf{1}}

\def \Var{\mathrm{Var}}

\def \XN{X_N}

\def\K{\mathsf{K}}
\def\F{\mathsf{F}}
\def\FN{\F_N}

\def\G{\mathsf{G}}

\def\muv{\mu_V}
\def\mut{\mu_t}

\def\rr{\mathsf{r}}
\def\rrh{\hat{\mathsf{r}}}
\def\mn{{\bar{\mathrm{n}}}}
\def\rrt{\tilde{\mathsf{r}}}


\def\Xint#1{\mathchoice
   {\XXint\displaystyle\textstyle{#1}}%
   {\XXint\textstyle\scriptstyle{#1}}%
   {\XXint\scriptstyle\scriptscriptstyle{#1}}%
   {\XXint\scriptscriptstyle\scriptscriptstyle{#1}}%
   \!\int}
\def\XXint#1#2#3{{\setbox0=\hbox{$#1{#2#3}{\int}$}
     \vcenter{\hbox{$#2#3$}}\kern-.5\wd0}}
\def\dashint{\Xint-}

\def \carr{\square} 

\def \Old{\mathcal{O}}
\def \New{\mathcal{N}}

\def \Escr{E^{\rm{scr}}}


\def\P{\mathbb{P}} 

\def \PNbeta{\P_{N, \beta}} 
\def \PgN2{\mathbf{P}_{N,2}} 



\def\g{\mathsf{g}}

\def \I{\mathcal{E}}

\def \C{\mathcal{C}}

\def\nab{\nabla}

\def\pa{{\partial}}

\def\ep{\varepsilon}

\def \fluct{\mathrm{fluct}}
\def\Fluct{\mathrm{Fluct}}

\def\hal{\frac{1}{2}}

\makeatletter
\def\namedlabel#1#2{\begingroup
    #2%
    \def\@currentlabel{#2}%
    \phantomsection\label{#1}\endgroup
}

\makeatother

\def \d{\mathsf{d}}
\def \s{\mathsf{s}}

\def \f{\mathsf{f}}

\def \p{\partial}

\def \cds{\mathsf{c}_{\d,\s}} 
\def \c{\cds}
\def\Rd{\R^\d} 

\def \drd{\delta_{\Rd}}

\def\cd{\mathsf{c}_{\d}}

\def\Esp{\mathbb{E}} 
\def \E{\Esp}

\def \be{\begin{equation}}
\def \ee{\end{equation}}
\def \beq*{\begin{equation*}}
\def \eeq*{\end{equation*}}
\def \ba{\begin{eqnarray}}
\def \ea{\end{eqnarray}}
\def \ba*{\begin{eqnarray*}}
\def \ea*{\end{eqnarray*}}

\def\N{{n_\mathcal{O}}}

\def\mf{f_{\d,\s}}

\def\id{I}

\def\omc{{\overset{\circ}{\Omega}}}

\def\Vt{V_t}
\def\a{\alpha}

\theoremstyle{definition}

\newtheorem{rem}[prop]{Remark}




\def\veta{\vec{\eta}}





\renewcommand{\div}{\mathrm{div}\,}

\def\bulk{\hat \Sigma}
\def\Gc{\mathcal{G}}

\makeatletter
\let\@wraptoccontribs\wraptoccontribs
\makeatother

\author[Luke Peilen]{Luke Peilen}
\address[Luke Peilen]{Department of Mathematics, Temple University, 1805 N. Broad St., Philadelphia, PA 19122, United States}
\email{luke.peilen@temple.edu}

\author[Sylvia Serfaty]{Sylvia Serfaty}
\address[Sylvia Serfaty]{Courant Institute of Mathematical Sciences, New York University, 251 Mercer St., New York, NY 10012, United States\\ and Sorbonne Universit\'e,
 CNRS, Universit\'e de Paris,  Laboratoire Jacques-Louis Lions (LJLL), F-75005 Paris }
\email{serfaty@cims.nyu.edu}

\contrib[with an appendix by]{Xavier Ros-Oton}
\address{ICREA, Pg. Llu\'is Companys 23, 08010 Barcelona, Spain \& Universitat de Barcelona, Departament de Matem\`atiques i Inform\`atica, Gran Via de les Corts Catalanes 585, 08007 Barcelona, Spain \& Centre de Recerca Matem\`atica, Edifici C, Campus Bellaterra, 08193 Bellaterra, Spain}
\email{xros@icrea.cat}

\title{Local Laws and Fluctuations for Super-Coulombic Riesz Gases}
\begin{document}
\maketitle
\begin{abstract}
We study the local statistical behavior of the super-Coulombic Riesz gas of particles in Euclidean space of arbitrary dimension, with inverse power distance repulsion integrable near $0$, and with a general confinement potential,  in a certain regime of inverse temperature. Using a bootstrap procedure, we prove local laws on the next order energy and control on fluctuations of linear statistics that are valid down to the microscopic lengthscale, and provide controls for instance, on the number of particles in a (mesoscopic or microscopic) box, and the existence of a limit point process up to subsequences.

As a consequence of the local laws, we derive an almost additivity of the free energy that allows us to exhibit for the first time a CLT for Riesz gases corresponding to small enough inverse powers, at small mesoscopic length scales, which can be interpreted as the convergence of the associated potential to a fractional Gaussian field.

Compared to the Coulomb interaction case, the main new  issues arise from the nonlocal aspect of the Riesz kernel. This manifests in  (i) a  novel technical difficulty in generalizing the transport approach of Lebl\'e and the second author to the Riesz gas which now requires analyzing a degenerate and singular elliptic PDE, (ii)  the fact that the transport map is not localized, which makes it more delicate to localize the estimates, (iii) the need for coupling the local laws and the fluctuations control inside the same bootstrap procedure. 
\end{abstract}

\section{Introduction}

We are interested in proving local laws and studying the fluctuations of super-Coulombic  Riesz gases. These are ensembles of point configurations $\XN=(x_1, \dots, x_N)$ with $x_i\in \R^\d$ whose law is given by 
\be \label{PN}
d\PNbeta(x_1, \dots, x_N)= \frac{1}{\ZNbeta} e^{-\beta N^{-\frac{\s}{\d}} \HN(x_1, \dots, x_N)} dx_1 \dots dx_N\ee
with 
\be \label{defHN}
 \HN(\XN)= \hal \sum_{i\neq j} \g(x_i-x_j) + N \sum_{i=1}^N V(x_i)\ee
 where 
\be \label{riesz}
\g(x)=\begin{cases} \frac{1}{\s} |x|^{-\s} &\quad  \text{for} \  \s \neq 0\\
  -\log |x| &\quad \text{for} \ \s=0\end{cases} \ee
with  the condition 
\be\label{intervalles} \d-2< \s<\d .\ee
The condition \eqref{intervalles} that we will use throughout implies that the case  $\s=0$, or log gas case,  is then only encountered in dimension $\d=1$. 
The case $\s=\d-2$    corresponds to the Coulomb case in any dimension, this is why the condition \eqref{intervalles} corresponds to a super-Coulombic Riesz gas. Note that as  $\s$ becomes larger than $\d$, the  kernel becomes nonintegrable near $0$ but integrable at infinity.  The regime $\s>\d$,  called the {\it hypersingular case} \cite{BHS}, corresponds to a short-range  interaction regime, and the behavior is quite different (see for instance \cite{hlss}), hence the restriction to $\s<\d$. The sub-Coulomb case $\s<\d-2$ on the other hand is  longer range due to the slower decay of the interaction, bringing in new difficulties that are outside the scope of this paper.


The function $V$ is an external confining potential, on which we shall place assumptions later. 

The parameter $\beta>0$ is an inverse temperature, and we have chosen to multiply the energy by $N^{-\frac\s\d}$ because $N^{\frac\s\d}$ is the typical energy per particle.
Finally, the normalization factor
\be \label{defZN}
\ZNbeta:= \int_{(\R^\d)^N} e^{-\beta N^{-\frac\s\d} \HN(x_1, \dots, x_N)} dx_1\dots dx_N\ee is called the partition function.

The Coulomb case is particularly natural and physical, because $\g$ is the fundamental solution to the Laplacian:
\be \label{gcoul} -\Delta \g= \cd \delta_0\ee
where $\cd$ is a constant depending only on the dimension, and $\delta_0$ is the Dirac mass. In the Riesz case with $\s\in (\d-2,\d)$, and in that interval only, $\g$ is instead the fundamental solution to a {\it fractional Laplacian}
\be \label{eqgriesz} (-\Delta)^{\frac{\d-\s}{2}}\g= \cds \delta_0. \ee
In the following we denote
\be \label{defalpha}
\alpha=\frac{\d-\s}{2}.\ee
The constant $\cds$ is given by (see \cite{K17})
\begin{equation}\label{constant definition}
\cds= \frac{\pi^{\frac\d2}4^\alpha \Gamma(\alpha)}{\Gamma(\frac\d2-\alpha)}, \quad \alpha=\frac{\d-\s}{2}.
\end{equation}

 We recall that the fractional Laplacian is a nonlocal operator (contrarily to the Laplacian associated to the Coulomb case). It can be seen as an integral operator defined by (its  various definitions can be found for instance in \cite{K17} and \cite{LPGe20})
\begin{equation}\label{def fraclap}
(-\Delta)^\alpha f(x)=\text{P.V.}\int (f(x)-f(x+y))\frac{c_{\d,\alpha}}{|y|^{\d+2\alpha}}dy, \quad c_{\d,\alpha}= \frac{4^\alpha \Gamma(\frac\d2+\alpha)}{\pi^{\frac\d2} |\Gamma(-\alpha)|}=\frac{\alpha 4^\alpha \Gamma(\frac\d2+\alpha)}{\pi^{\frac\d2}\Gamma(1-\alpha)}.
\end{equation}

We will also use the homogeneous Sobolev norm  defined by 
\be \label{homogsobo}
\|f\|_{\dot{H}^{-\alpha}}^2:= \iint \g(x-y) df(x) df(y)\ee
which is equivalent with the usual $H^s$ definition via Fourier transform 
\be \label{soboFT}
\|f\|_{\dot{H}^{-\alpha}}^2:=\frac{2}{c_{\d,\s}}\int |\xi|^{-2\alpha}|\hat f|^2(\xi) d\xi
\ee
as in \cite[Proposition 3.4]{DPV12}, where $c_{\d,\s}$ is the constant in \eqref{def fraclap}, since $\hat \g(\xi)= C_{\d,\s} |\xi|^{\s-\d}$ for some constant $C_{\d,\s}$ (cf. \cite[Proposition 2.14]{S24}).

The nonlocality of the operator creates much of the difficulties encountered in the Riesz case.

The Coulomb gas is an important model of statistical physics, in particular due to its connection to plasma physics, random matrix theory, quantum mechanics models,  and conformal field theory. We refer to the introduction of \cite{S24} for more detail. The Riesz gas is less understood but also physically interesting (in solid state physics, ferrofluids, elasticity), see \cite{mazars,bbdr,CDR,CDFR,torquato}, and has attracted quite a bit of attention in the recent physics literature, see for instance \cite{schehr1d,schehr1d2,schehr1d3}.

\subsection{The equilibrium measure}\label{sec22}
The first order or mean-field asymptotic behavior of the gas is well understood (see for instance \cite[Chap.~2]{S24}). From \cite{F35}, \cite{C58}, if the potential $V$ is lower semicontinuous, bounded below, finite on a set of positive capacity, and satisfies the growth condition
\begin{equation}
\lim_{|x|\rightarrow +\infty}\(V(x)+\g(x)\)=+\infty,
\end{equation}
then the continuous approximation of the Hamiltonian  given by 
\begin{equation}\label{Continuous energy}
\I(\mu):=\frac{1}{2}\iint_{\R^\d\times \R^\d} \g(x-y)\, d\mu(x)d\mu(y)+\int_{\R^\d} V(x)\, d\mu(x)
\end{equation}
is well-defined as long as $\s<\d$ (hence the restriction to that regime -- this is called the {\it potential case}) and
has a unique, compactly supported minimizer $\muv$ among the set of probability measures on $\R^{\d}$, called the {\it equilibrium measure} and characterized by the following Euler-Lagrange equation: there exists a constant $c_V$ such that 
\begin{equation}\label{Riesz Euler-Lagrange equation}
\begin{cases}
h^{\muv}+V=c_V & \text{quasi-everywhere on }\Sigma \\
h^{\muv}+V \geq c_V & \text{quasi-everywhere}
\end{cases}
\end{equation}
where $h^{\muv}=\g\ast \muv$ is the potential generated by $\muv$, and $\Sigma$ denotes the support of $\muv$.   In the following, we  denote the corresponding \textit{effective potential} by 
\begin{equation}\label{Riesz effective potential}
\zeta_V:=h^{\muv}+V-c_V.
\end{equation}

It is worth noting  that in the Coulomb case, since $-\Delta h^{\muv}= \cd \muv$ in view of \eqref{gcoul}, taking the Laplacian of the first relation in \eqref{Riesz Euler-Lagrange equation} we find
\be \label{densmuv}\muv=\frac{1}{\cd}\Delta V\quad \text{ in } \overset{\circ}\Sigma,\ee where $\overset{\circ}\Sigma$ denotes the interior of $\Sigma$. 
In contrast, such a manipulation is no longer possible in the Riesz nonlocal case, and  there is then no local or explicit expression for $\muv$ in terms of $V$. The interested reader can refer to the  articles  \cite{chafaisaff1,chafaisaff2} for  examples of Riesz equilibrium measures.

In the Riesz case $\s\in [\d-2,\d)$, this equilibrium measure problem can be rephrased in terms of an obstacle problem / a fractional obstacle problem as was observed for instance in \cite[Chapter 2]{S15};  this will be very useful for us as it will allow us to use results in that area on the behavior of $\muv$ and additional results proved in the appendix by  X. Ros-Oton.  Let us recall more precisely the correspondence (the interested reader can also refer to \cite[Section 2.4]{S24}, \cite{CDM16,AS22}: the {\it fractional obstacle problem} 
\begin{equation}\label{fractional obstacle}
\min\{(-\Delta)^\alpha h, h-\varphi\}=0
\end{equation}
with obstacle $\varphi=c_V-V$ and $\alpha=\frac{\d-\s}{2}$ has solution $h=h^{\muv}$. The much-studied classical obstacle problem corresponds to the (Coulomb) case $\alpha=1$. The fractional obstacle problem, studied for instance in \cite{CSS08,CDS17,ROS17}, is another free boundary problem. There are two sets in the solution to \eqref{fractional obstacle}: the set
 $\{h^{\muv}=\varphi\}$ where the solution touches the obstacle is known as the \textit{contact set} or coincidence set, and its boundary is the \textit{free boundary}.  In the set where $h^{\muv}>\varphi$, then $\muv=(-\Delta)^{\alpha} h^{\muv}$ must vanish. This corresponds to the complement of $\Sigma$  in \eqref{Riesz Euler-Lagrange equation}. 
 Note that in general the contact  set $\{\zeta_V=h^{\muv}-\varphi=0\}$ contains the  \textit{droplet} $\Sigma=\supp \muv$, but may be larger, although  generically it is not. 
    In the Coulomb case, imposing  $\Delta V>0$ in a neighborhood of $\Sigma$ ensures that the two sets coincide by taking the Laplacian of $\zeta_V=0$ (as in the computation for \eqref{densmuv}), but in the fractional case this computation does not work. Since we want to appeal to the regularity of the free boundary, we take as an assumption that the droplet and free boundary coincide,    this is accomplished by requiring that $\zeta_V>0$ on $\Sigma^c$, which is guaranteed by assumption \eqref{itemnondeg}.

In the fractional case (contrarily to the Coulomb case), the density of $\muv$ generically vanishes as one approaches $\partial \Sigma$ from the inside, near {\it regular points}. This is one of the main results of the appendix that we will need: if $x_0$ is a regular point of $\partial \Sigma$, then
\begin{equation}\label{decayeqmeasure}
\muv(x) \sim \dist(x,\partial \Sigma)^{1-\alpha} \quad \text{as }  x \to x_0, x\in \Sigma.
\end{equation}
Note that this behavior for instance matches the {\it semi-circle law} behavior of the equilibrium measure  in the one-dimensional log case (for which $\s=0$ and $\alpha=1/2$). 

The recent paper \cite{colombofigalli} also exploits the correspondence with the fractional obstacle problem to prove that this behavior near all boundary points is generic with respect to $V$  in dimension $\d\le 3$, for any $\s \in [\d-2,\d)$. 
For simplicity, we will thus assume that all boundary points are regular, which can be guaranteed by assumption \eqref{itemposLap} and \eqref{itemfbLip} below. 

Additionally, we will need some regularity on the quotient $\frac{\muv}{\dist(x,\partial \Sigma)^{1-\alpha}}$,  this is assumption~\eqref{itemeqreg}.

Finally, the ``lift-off'' rate from the obstacle, i.e.~the growth of $\zeta_V$ is known from the fractional obstacle problem literature: it is 
\be \label{liftoff}\zeta_V(x) \ge c(x) \dist(x, \Sigma)^{1+\alpha}\ee
with $c(x)>0$ near regular points. This is in assumption \eqref{itemnondeg}.

\subsection{Goal of the paper}
Proving local laws for the gas roughly corresponds to understanding how much the distribution of the points deviates from the mean-field distribution $\muv$. We do so via local controls on the {\it next-order electric energy} (called modulated energy in the dynamics context) $\F_N(\XN, \muv)$, first introduced in \cite{PS17} and defined via
\be \label{defFN} \F_N(\XN, \mu):= \hal \iint_{\triangle^c} \g(x-y) \, d\Big( \sum_{i=1}^N \delta_{x_i}-N\mu\Big)(x)\,  d\Big( \sum_{i=1}^N \delta_{x_i}-N\mu\Big)(y),\ee where $\triangle$ denotes the diagonal in $\R^\d\times \R^\d$.
This quantity, whose properties are described in \cite[Chap.~4]{S24},  is bounded below  by $-CN^{1+\frac\s\d}$ where $C$ depends only on $\|\mu\|_{L^\infty}$ and behaves effectively like the square of a distance between the empirical measure $\frac1N\sum_i \delta_{x_i}$ and the reference probability density $\mu$, more precisely like 
$$ N^2 \left\|\frac1N\sum_{i=1}^N \delta_{x_i}-\mu\right\|^2_{\dot{H}^{\frac{\s-\d}{2}}}$$
(where $\dot{H}$ denotes a homogeneous Sobolev norm as in \eqref{homogsobo}--\eqref{soboFT}), but here defined in a {\it renormalized} manner that allows to admit Diracs thanks to the removal of the (infinite) diagonal self-interaction terms.

As shown in prior works \cite{PS17,LS15}, see \cite[Chap.~4]{S24}, the electric energy $\F_N$ provides good controls on the difference 
$ \frac1N\sum_i \delta_{x_i}-\mu$,  such as rough bounds on linear statistics, bounds on minimal distances between points and bounds on  charge discrepancies (i.e. integrals of the difference over balls or cubes). These controls, based on the {\it electric formulation} of the energy $\F_N$, will be recalled in Section \ref{sec:remindersF}, particularly Proposition \ref{pro:controlfluct}.

In parallel with the local laws, we wish to address the related question of understanding the asymptotic behavior of fluctuations of linear statistics, of the form 
\be\label{def:fluct} \Fluct_{\muv}(\varphi):=\sum_{i=1}^N \varphi(x_i)-N \int \varphi d\muv,\ee
for regular enough test-functions $\varphi$.

This program has been in large part completed in the Coulomb case in prior works:  \cite{L17} for local laws in two dimensions (see \cite{bbny1} for related local laws), \cite{AS22} for local laws in general dimension, \cite{LS18,BBNY19} for fluctuations in  the two-dimensional case, \cite{S22} for fluctuations in  general dimension. We also refer the reader to\cite{S24} for a recap. These local laws are proved by a method of  boostrap on scales first introduced in \cite{L17}, while the fluctuations analysis is based on a transport method, introduced in \cite{LS18} and re-used in \cite{S22}.

The one-dimensional logarithmic case, which corresponds to the well-known situation of $\beta$-ensembles, and is closely related to random matrix theory, was intensively studied: fluctuations and questions similar to local laws were analyzed in \cite{J98,Sh13,Sh14,BG13,BG16,BEY12,BEY14,BMP22,BL18,BLS18}, closer to our analysis here is the paper \cite{P24} which really studies local laws for \eqref{defFN} and serves partly as a blueprint for the present paper. Finally, the one-dimensional Riesz gas with general $\s<1$ was extensively studied in \cite{boursier1,boursier2},  leveraging on the convexity  of the interaction in dimension 1, which permits a very different treatment based on the Helffer-Sj\"ostrand representation that yields stronger results.

Our goal here is to carry out the program of \cite{L17,LS18,AS22,S22,P24} for super-Coulombic Riesz gases of arbitrary dimension.
Note that in contrast with \cite{S22} and \cite{S24}, we will not be able to obtain estimates that are optimal in $\beta$ as $\beta$ gets small, because this would require using the ``thermal equilibrium measure'' (see \cite[Chap. 2]{S24}) instead of the equilibrium measure and, contrarily to the Coulomb case \cite{AS22},  its behavior in the nonlocal case is much less understood than the  standard equilibrium measure for which we have detailed knowledge (describe above) thanks to the fractional obstacle problem correspondance.

 The paper  focuses in particular 
 on the  new difficulties brought by the nonlocality of the fractional Laplacian, which are the following:
 \begin{itemize}
 \item Generalizing the transport approach of \cite{LS18}  to the Riesz gas now requires analyzing a degenerate and singular elliptic PDE.
 This was not needed in \cite{P24} because in the one-dimensional log case, and in that case only, explicit formulas are available for the ``master operator inversion''  or equivalently for the transport map. 
 \item  That transport map ends up having nonlocalized tails, even if the test function $\varphi$ one considers is localized at a mesoscale.
 \item As in prior work, the local laws rely on a {\it  screening procedure}, which itself relies on the electric formulation of the energy together with a dimension-extension procedure to handle the nonlocality. The screening in extended dimension requires subtle adaptations, in particular in this probabilistic setting, and controlling its errors requires one to couple the local laws and the fluctuations control inside the same bootstrap procedure, a difficulty not present in the (local) Coulomb case.
 \end{itemize}

\subsection{Assumptions} \label{sec:def}
We will use the notation $|f|_{C^\sigma}$ for the  H\"older semi-norm of order $\sigma$ for any $\sigma \ge 0$ (not necessarily integer). 
For instance 
$|f|_{C^0}=\|f\|_{L^\infty}$, $|f|_{C^k}= \|D^k f\|_{L^\infty}$ and if $\sigma \in (k, k+1)$ for some $k$ integer, we let
$$ |f|_{C^\sigma(\Omega)}= \sup_{x\neq y \in \Omega} \frac{|D^k f (x)- D^k f (y)|}{|x-y|^{\sigma-k}}.$$
We emphasize that with this convention $f\in C^k$  does not mean that $f$ is $k$ times differentiable but rather that $D^{k-1} f$ is Lipschitz.
For $k\ge 1$ integer, we will then use the notation 
\be\|f\|_{C^k}= \|f\|_{L^\infty}+ \sum_{m=1}^k |f|_{C^m}.\ee

For  our main results, we assume the following, for some integer $k\geq 3$ that will be specified and for some $\epsilon>0$.

\begin{enumerate}[label=\textbf{A\arabic{enumi}}]
\item (Nondegeneracy): $\zeta_V\ge c\,  \dist(x, \partial \Sigma)^{1+\alpha}$ with $c>0$  in some strict neighborhood of  $\Sigma$. \label{itemnondeg}
\item (Positive Laplacian): $\Delta V \geq 0$ on $\{c_V>V\}$, where $c_V$ is as in \eqref{Riesz Euler-Lagrange equation}. 
\label{itemposLap}
\item (Lipschitz Free Boundary): The free boundary $\partial \Sigma$  is a Lipschitz graph. \label{itemfbLip}
\item (Regularity of Equilibrium Measure):  
\begin{equation}\label{eq reg}
\muv(x)=s(x)\dist(x,\partial \Sigma)^{1-\alpha}
\end{equation}
for some function $s(x) \in C^{k+\epsilon}(\Sigma)$ that is bounded from below. 
\label{itemeqreg}
\item (Regularity and growth of $\nab V$): We assume $V \in C^{k+1}$ and there exist $r\ge0$,  $c>0$ and $C>0$  such that 
\begin{equation}|\nab V|\ge  c |x|^r, \quad  |\nab^{\otimes m} V|\ge C |x|^{r-m+1}\quad \text{for } |x|\ge C, \ 1\le m \le k.\end{equation}  
\label{itemgrowthV}
\item (One-cut) $\Sigma=\supp \muv$ is a connected set. \label{onecut}

\end{enumerate}
Note that these assumptions are genericity assumptions that avoid irregular cases. \eqref{itemposLap} allows us to apply Proposition \ref{quantitative-regular}, which shows that all points of the free boundary are regular. The assumption \eqref{eq reg} is for instance reasonable in view of the generic behavior \eqref{decayeqmeasure}, as shown in \cite{colombofigalli}. We also make the additional assumption \eqref{onecut} that $\Sigma$ is connected (what is commonly referred to the \textit{one-cut} regime in the one-dimensional case) to avoid some of the additional difficulties that the nonlocality of the interaction creates in the multicut regime (cf. \cite{BG16} for a treatment of the log-gas in the multicut regime). We refer to further discussion of these assumptions in Section \ref{subsec: prelimfluct}.


\subsection{Main results}
We prove the following two theorems in conjunction, by a bootstrap on sales. We state both at the traditional scale (although the local law will be proven at blown up scale).

\subsubsection{Local laws and consequences}

The first theorem establishes local laws {\it  down to the  microscale} $N^{-1/\d}$. They are  expressed in terms of a  version of the energy localized in a cube  $\carr_\ell$ of size $\ell$ (centered at an unspecified point), denoted 
$\tilde \F_N^{\carr_\ell}(\XN)$.
The precise definition of $\tilde \F_N$ will be given in  Section \ref{sec: fluct prelim}, it relies on the electric formulation of the energy  (as first introduced in \cite{PS17}, see \cite[Chap.~4, Chap.~7]{S24}), in terms of the electric potential $h_N$ 
associated to the couple $(\XN, \mu)$ via 
\be h_N[\XN,\mu] = \g* \( \sum_{i=1}^N \delta_{x_i}- N \mu\)\ee 
and extended to be a function of $\R^{\d+1}$. Briefly,
\begin{equation*}
\tilde \F_N^{\carr_\ell}(\XN,\mu)= \int_{\square_\ell\times [-\ell, \ell]} \yg|\nab h_{N, \rr}|^2 
\end{equation*}
where $\yg$ is a weight to be specified later, and $h_{N,\rr}$ is a suitable truncation of $h_N[\XN,\mu]$.
At this point what matters is to know that this quantity is positive, coercive, and controls charge discrepancies and fluctuations of linear statistics in the cube $\carr_\ell$.
The lengthscale $\ell$ will range from $\ell=1$, which corresponds to the macroscale, to $\ell\propto N^{-1/\d}$ which corresponds to the local scale or microscale.


The local laws are only  valid in the bulk, i.e.~well in the interior of $\Sigma$, the support  of $\mu_V$,  except if $\ell =1$ (or $\ell $ is larger than a fixed constant) in which case they  are easily seen to hold and are valid globally (we will not state this but refer to \eqref{macrolaw1}). More precisely, they are valid in $\bulk$ defined as 
\be\label{bulk}\bulk:= \{ x\in \Sigma, \dist (x, \pa \Sigma ) \ge \ep\}\ee
where $\ep>0$ is some small fixed number.

In the whole paper we will denote $\#I_\Omega$ the cardinality of $\{\XN\}\cap \Omega$, i.e.~the number of points in the set $\Omega$.
The notation $A\lesssim B$ means that $A/B$ is bounded by a constant independent of $\beta$, $N$, and other parameters of the problem, except possibly for $\d,\s, V, \ep$. The notation
$C_\beta $ and $\C_\beta$ will denote constants which are independent of $\beta $ when $\beta$ is larger than a positive constant, say $1$, and which may depend on $\beta$ for $\beta \le 1$. In the same way $A\lesssim_\beta B$ means $ A \le C_\beta B$ for such a $C_\beta$, and $O_\beta$ is also defined in the same way.
\begin{theo}[Local laws]\label{Local Law}
Assume \eqref{itemnondeg}--\eqref{onecut}.
There exists  constants $C>0, C_0>0, C_1>0, C_2>0$, depending only on $\d,\s, V,\ep$, and $\C_\beta >0$ depending on $\beta$ only if $\beta<1$, and a lengthscale $4 \le \rho_\beta\lesssim_\beta 1$, such that the following holds.
For any $\ell\ge \rho_\beta N^{-\frac1\d}$ and any cube $ \carr_\ell \subset \bulk$, there exists  a good event $\mathcal{G}_{\ell}$ satisfying
\begin{equation}
\PNbeta( \mathcal{G}_{\ell}^c)\leq C_1 e^{-C_2 \beta \ell^\d N}
\end{equation}
 such that, if $\XN \in \mathcal G_\ell$ 
 then 
\begin{equation}
\label{loiloc}\tilde \F_N^{\carr_\ell}(\XN, \muv) 
 \leq \C_\beta \ell^\d N^{1+\frac{\s}{\d}}
\end{equation}
and \be\label{controlnbreintro}
C_0 \#I_{\carr_\ell}\le \C_\beta \ell^\d N .\ee

\end{theo}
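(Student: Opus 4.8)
The plan is to run a bootstrap by \emph{downward induction on the lengthscale} $\ell$, starting from the macroscopic scale where the estimate is essentially elementary, and descending by a fixed multiplicative factor $\theta<1$ at each step until reaching the microscale $\rho_\beta N^{-1/\d}$. At every scale one carries \emph{both} the energy bound \eqref{loiloc} and, coupled to it inside the same induction, the control on fluctuations of linear statistics at that scale; the latter is forced into the loop because, as noted in the introduction, the Riesz transport/screening map has non-localized tails, so the screening errors cannot be closed with purely local energy information. Granting \eqref{loiloc}, the number bound \eqref{controlnbreintro} follows at once: writing $\#I_{\carr_\ell}=N\int_{\carr_\ell}d\muv+\bigl(\#I_{\carr_\ell}-N\int_{\carr_\ell}d\muv\bigr)$, the first term is $\lesssim N\ell^\d$ because $\muv\in L^\infty$, while the charge discrepancy is controlled by $\tilde\F_N^{\carr_\ell}$ via the coercivity estimates recalled in Proposition~\ref{pro:controlfluct}, which on the good event gives the bound $\C_\beta\ell^\d N$.

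\textbf{Base case.} At $\ell=1$ (or $\ell$ larger than a fixed constant) one establishes the global law $\F_N(\XN,\muv)\lesssim_\beta N^{1+\s/\d}$ on an event of probability $\ge 1-Ce^{-c\beta N}$. This combines three standard ingredients: the universal lower bound $\F_N\ge -CN^{1+\s/\d}$ (see \eqref{defFN} and the surrounding discussion); a lower bound on the partition function $\ZNbeta$ obtained by restricting the integral in \eqref{defZN} to configurations close to $N$ copies of $\muv$, whose electric energy is estimated by a first, soft version of the screening/dimension-extension construction; and Markov's inequality applied to $e^{-\beta N^{-\s/\d}\F_N}$.

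\textbf{Induction step.} Assume the law holds at scale $\ell$ throughout $\bulk$ and fix $\carr_{\ell'}\subset\bulk$ with $\ell'=\theta\ell$. From the scale-$\ell$ energy bound one first extracts, through Proposition~\ref{pro:controlfluct}, a priori controls on cubes of size between $\ell'$ and $\ell$: an upper bound on $\#I$, a lower bound on the minimal interparticle distance, and charge-discrepancy bounds. To upgrade the energy estimate to scale $\ell'$ one then (i) conditions the Gibbs measure on the configuration outside a slightly enlarged cube, producing a localized partition function $Z^{\carr_{\ell'}}$; (ii) lower-bounds $Z^{\carr_{\ell'}}$ by exhibiting an explicit competitor configuration inside $\carr_{\ell'}$ whose electric energy is $\lesssim\ell'^{\d}N^{1+\s/\d}$ — this is the \emph{screening construction}, performed after the dimension extension that replaces the nonlocal fractional-Laplacian problem by a local elliptic problem with weight $\yg$, and using here the fine structure of $\muv$ (assumptions \eqref{itemeqreg}--\eqref{itemnondeg}); (iii) applies Markov's inequality in the localized Gibbs measure to bound $\PNbeta\bigl(\tilde\F_N^{\carr_{\ell'}}>\C_\beta\ell'^{\d}N^{1+\s/\d}\bigr)$ by $C_1 e^{-C_2\beta\ell'^{\d}N}$; and (iv) takes a union bound over a net of cubes covering $\bulk$, polynomial in $N$ and hence absorbed by the exponential. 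A crucial bookkeeping point is that the multiplicative loss in the constant per step must be $1+o(1)$ uniformly, so that after the $O(\log N)$ steps from scale $1$ down to $N^{-1/\d}$ the final constant $\C_\beta$ stays bounded; this dictates the choices of $\theta$ and of the stopping scale $\rho_\beta$.

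\textbf{Main obstacle.} The core difficulty, and the genuinely new point relative to the Coulomb case, is step (ii): carrying out the screening in the extended dimension in this probabilistic setting and controlling its error terms. The weight $\yg$ makes the extension operator $\div(\yg\nabla\cdot)$ simultaneously degenerate and singular (as $\gamma=1-2\alpha$ ranges over $(-1,1)$), and the non-localization of the screening map means that the construction near $\pa\carr_{\ell'}$ feels the configuration far away; quantitatively closing this requires the simultaneous fluctuation bounds, which is exactly why the two theorems must be proven in conjunction rather than sequentially. A secondary obstruction is that the control of the screening near $\pa\Sigma$ relies on the $\dist(\cdot,\pa\Sigma)^{1-\alpha}$ vanishing of $\muv$ and the regularity of the quotient $s$; this is why \eqref{loiloc} is only asserted in the bulk $\bulk$ once $\ell<1$.
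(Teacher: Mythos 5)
Your outline matches the paper's bootstrap closely: downward induction on scales with screening in the extended dimension, a macroscopic base case from the partition-function bound, Chernoff/Markov in a localized Gibbs measure, and the need to couple fluctuation controls into the same loop. Two points of precision are worth flagging.

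First, your diagnosis of \emph{why} the fluctuations must be coupled in is slightly off target. You attribute it to ``non-localized tails of the Riesz transport/screening map.'' The transport map's nonlocal tails are the difficulty in the CLT part (Sections~\ref{sec: tport}--\ref{sec: fluct}), not in the local-laws bootstrap. What forces the coupling here is the screenability condition \eqref{Riesz screenability}, which requires controlling $e(X_n,w,h)=\int_{Q_R\times\{\pm h\}}\yg|\nabla w|^2$, the energy on \emph{horizontal slices} at height $h$ in the Caffarelli--Silvestre extension. This quantity has no local analogue in the Coulomb case (where one can work purely in $\R^\d$), and the paper controls it (Proposition~\ref{decay control}) by observing that the components of the electric field at height $h$ are fluctuations $\Fluct_\mu(\kappa^i_{(\vec a,h)})$ of explicit smooth test functions, then applying the fluctuation bound at the previous scale. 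Your ``screening map'' phrasing risks suggesting a different mechanism.

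Second, your derivation of \eqref{controlnbreintro} from \eqref{loiloc} via discrepancy estimates would work, but it is not what the paper does, and the paper's way is arguably cleaner: the induction is carried for the \emph{single} bundled quantity $\F^{\carr_L\times[-L,L]}(\XN,\mu,\Lambda)+2C_0\#I_{\carr_L}$, and \eqref{arrivC0} shows $\F+C_0\#I\ge 0$, so $C_0\#I\le \F+2C_0\#I\le \C_\beta L^\d$ drops out directly, while the energy bound \eqref{loiloc} also follows from \eqref{arrivC0}. Relatedly, your remark that ``the multiplicative loss per step must be $1+o(1)$'' is not the bookkeeping used here: Proposition~\ref{Main Bootstrap} propagates the \emph{same} constant $\C_\beta$ from scale $2L$ to scale $L$ by choosing $\C_\beta$ large enough to absorb the screening errors, and only the exceptional events accumulate (geometrically summable in scale). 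These are stylistic rather than fatal differences, but you should be aware that a $1+o(1)$-per-step accounting over $O(\log N)$ scales is a strictly weaker and more delicate argument than the fixed-constant absorption the paper uses.
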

Here, in contrast with \cite{AS22}, the local laws are not obtained as exponential moments controls, but only with bounds on the event tails. This is due to the nonlocal nature of the problem and the need to couple local laws controls and fluctuations controls at each scale. 
As in the Coulomb case in \cite{AS22}, the local laws are valid down to a {\it temperature-dependent minimal scale} $\rho_\beta$, which depends on $\beta$ and is expected to blow up as $\beta \to 0$.
Because we use a description in terms of the equilibrium measure  rather than in terms of the more precise thermal equilibrium measure as in \cite{S24}, our $\beta$-dependence in the estimates is sharp for $\beta \ge 1$  but less good when $\beta \to 0$, and also our estimate of the minimal scale is not optimal.

Thanks to the coercivity of $\tilde \F_N^{\carr_\ell}(\XN,\mu)$ and the controls  provided in  Proposition \ref{pro:controlfluct}, imported from  
\cite[Chap.~4]{S24}, we deduce the following corollaries.

\begin{coro}Assume \eqref{itemnondeg}--\eqref{onecut}. There exist $C_1, C_2, C>0$ depending only on $\d,\s, V,\ep$ such that the following holds.
\begin{enumerate}
\item(Discrepancy control)
 Let  $B_\ell$ be a ball of radius $\ell\ge \rho_\beta N^{-1/\d}$,  and if $\ell<1$ assume moreover that $B_\ell \subset \bulk$.
 Letting $D(B_\ell):=\int_{B_\ell} (\sum_{i=1}^N \delta_{x_i}-N\mu_V)$, we have either $|D(B_\ell)|\le C  N^{1-\frac1\d} \ell^{\d-1} $ or
\begin{equation}\label{controldiscintro}
\frac{D(B_\ell)^2}{\ell^\s} \left|\min \left(1, \frac{D(B_\ell)}{\ell^\d}\right)\right|\lesssim_\beta \ell^\d N^{1+\frac\s\d}
\end{equation}
except with probability $\le C_1 e^{-C_2\beta N\ell^\d}$.
\item(Fluctuations control) Assume $\carr_\ell$ is a cube of sidelength $\ell\ge \rho_\beta N^{-1/\d}$ included in $\bulk$, and  $\varphi$ is a function such that $\carr_\ell$ contains a $2N^{-1/\d}$-neighborhood of the support of $\varphi$. For any $\eta \ge N^{-1/\d}$ and $\sigma \in [0,1]$, we have 
\begin{multline}\label{controlfluctrough}
\left|\int_{\R^\d} \varphi \( \sum_{i=1}^N \delta_{x_i}multline N d\mu\) \right| \lesssim_\beta \( \eta^{\gamma-1}\|\varphi\|_{L^2(\Omega)}^2 + \eta^{\gamma+1}\|\nab \varphi\|_{L^2(\Omega)}^2\)^\hal \( \ell^\d N^{1+\frac\s\d}\)^\hal \\
+\ell^\d N^{1+\frac{\s-\sigma}{\d}}  |\varphi|_{C^\sigma}
,\end{multline}
except with probability $\le C_1 e^{-C_2\beta N\ell^\d}$.
\item(Minimal distance control)
For a configuration $\XN$, let 
\be \label{defri}\rr_i:= \frac14 \min (\min_{j\neq i} |x_i-x_j|, N^{-1/\d}).\ee
Assume $\carr_\ell$ is a cube of sidelength $\ell$ included in $\bulk$. We have 
\begin{align}
& \sum_{x_i \in \carr_{\ell}} \g(\rr_i) \lesssim_\beta \ell^\d N^{1+\frac\s\d}&\text{if} \ \s\neq 0
\\
 &\sum_{x_i \in \carr_{\ell}} \g(40 \rr_i N^{-\frac1\d}) \lesssim_\beta \ell^\d N^{1+\frac\s\d}&\text{if} \ \s= 0,
\end{align}
except with probability $\le C_1 e^{-C_2\beta N\ell^\d}$.

\end{enumerate}
\end{coro}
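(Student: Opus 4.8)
The Corollary is a direct consequence of Theorem \ref{Local Law} combined with the deterministic coercivity properties of $\tilde \F_N^{\carr_\ell}$ recorded in Proposition \ref{pro:controlfluct}. The overall strategy for each of the three items is the same: work on the good event $\mathcal{G}_\ell$ (or $\mathcal G_{\ell'}$ for a suitable comparison scale), on which \eqref{loiloc} gives the bound $\tilde \F_N^{\carr_\ell}(\XN,\muv)\le \C_\beta \ell^\d N^{1+\s/\d}$, feed this upper bound into the appropriate deterministic estimate of Proposition \ref{pro:controlfluct}, and then absorb the exceptional probability $C_1 e^{-C_2\beta \ell^\d N}$ into the stated $C_1 e^{-C_2\beta N\ell^\d}$ (relabeling constants). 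The only genuine work is matching the scale on which the local law is applied to the scale appearing in each statement, and handling the dichotomy in the discrepancy bound.

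\textbf{Item (1), discrepancy.} First I would cover the ball $B_\ell$ by a cube $\carr_{\ell'}\subset \bulk$ with $\ell'\simeq \ell$ (if $\ell<1$; if $\ell\ge 1$ one uses the trivial global macroscopic law \eqref{macrolaw1}), apply Theorem \ref{Local Law} on $\mathcal G_{\ell'}$ to get $\tilde\F_N^{\carr_{\ell'}}(\XN,\muv)\le \C_\beta \ell^\d N^{1+\s/\d}$, and then invoke the charge-discrepancy estimate from Proposition \ref{pro:controlfluct}, which in the Riesz setting bounds $\F_N$ from below by a quantity of the form $\frac{D(B_\ell)^2}{\ell^\s}\min(1, D(B_\ell)/\ell^\d)$ minus an additive error of size $CN^{1-1/\d}\ell^{\d-1}$ coming from the boundary layer of width $N^{-1/\d}$. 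Rearranging gives precisely the dichotomy: either $|D(B_\ell)|$ is dominated by this boundary error $CN^{1-1/\d}\ell^{\d-1}$, or the nontrivial inequality \eqref{controldiscintro} holds. The probability bound is inherited directly from $\PNbeta(\mathcal G_{\ell'}^c)$.

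\textbf{Items (2) and (3), fluctuations and minimal distance.} For the fluctuation control, given $\varphi$ supported so that a $2N^{-1/\d}$-neighborhood of its support sits in $\carr_\ell$, one applies Theorem \ref{Local Law} on $\mathcal G_\ell$ and then the linear-statistics estimate of Proposition \ref{pro:controlfluct}: smearing each Dirac at scale $\eta\ge N^{-1/\d}$ produces, via Cauchy--Schwarz against $\nab h_{N,\rr}$ on $\carr_\ell\times[-\ell,\ell]$ with the weight $\yg$, a bound of the form $\big(\tilde\F_N^{\carr_\ell}\big)^{1/2}$ times a Sobolev-type norm of $\varphi$ regularized at scale $\eta$, which is exactly the factor $(\eta^{\gamma-1}\|\varphi\|_{L^2}^2+\eta^{\gamma+1}\|\nab\varphi\|_{L^2}^2)^{1/2}$; inserting \eqref{loiloc} yields \eqref{controlfluctrough}. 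For the minimal-distance control, $\F_N$ dominates (up to the usual truncation bookkeeping) the sum $\sum_{x_i\in\carr_\ell}\g(\rr_i)$ — this is a standard consequence of the renormalized energy lower bound in the electric formulation, with the extra rescaling by $N^{-1/\d}$ and the factor $40$ in the $\s=0$ case accounting for the logarithmic truncation at the microscale — so again \eqref{loiloc} gives the result, with the same exceptional probability.

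\textbf{Main obstacle.} None of the steps is deep: the content is entirely in Theorem \ref{Local Law}, which is proved separately by the bootstrap. The only points requiring care are (a) the geometric reduction from balls to cubes and the attendant loss of a constant factor in $\ell$, which must be done so that the enlarged cube still lies in $\bulk$ (harmlessly shrinking $\ep$), and (b) checking that the additive boundary error $N^{1-1/\d}\ell^{\d-1}$ in the discrepancy estimate — which has no analogue in the leading term when $D(B_\ell)$ is large — is correctly carried through so as to produce the clean dichotomy in \eqref{controldiscintro} rather than an inhomogeneous inequality. I expect (b) to be the one place where one must be mildly attentive rather than purely mechanical.
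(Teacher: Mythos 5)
Your proposal is correct and is exactly the (unwritten) route the paper intends: the corollary is simply stated to follow ``thanks to the coercivity of $\tilde\F_N^{\carr_\ell}$ and the controls provided in Proposition \ref{pro:controlfluct}'', and your plan of inserting the local-law bound \eqref{loiloc} on the good event into \eqref{discest}, \eqref{rieszfluct2} and Proposition \ref{prop:MElb} is the intended argument.

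Two small remarks. In item~(1) the dichotomy is \emph{stated directly} in Proposition~\ref{pro:controlfluct} as ``either $|D(B_R)|\le CN^{1-1/\d}R^{\d-1}$ or \eqref{discest} holds'', so you do not need to rearrange an inhomogeneous inequality; your framing is equivalent but cosmetically different. In item~(2) there is a genuine omission: the estimate \eqref{rieszfluct2} carries the additive remainder $C\#I_\Omega|\varphi|_{C^1}N^{-1/\d}$, which your sketch drops entirely. This term must be controlled separately, using the number control \eqref{controlnbreintro} (equivalently \eqref{loiloc} via \eqref{arrivC0}), which on the good event yields $\#I_{\carr_\ell}\lesssim_\beta N\ell^\d$; the paper itself flags this point with the sentence ``thanks to the control \eqref{loiloc}, which provides a control on $\#I_\Omega$, we easily deduce controls on the quantities in \eqref{rieszfluct2} and \eqref{discest}''. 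Your proof should include this step to be complete.
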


The control of linear statistics fluctuations of \eqref{controlfluctrough}  is not optimal, we will provide a better one below under stronger regularity assumptions on $\varphi$.

Since \eqref{controlnbreintro} provides an $N$-independent control on the number of points in a ball of size $  \rho_\beta N^{-1/\d}<\ell \lesssim N^{-1/\d}$, blowing up the configuration by the factor $N^{1/\d}$,  taking large enough microscopic balls and using a Borel-Cantelli type argument, we obtain the existence of a limiting point process up to extraction (we will not provide details as they are almost identical  to \cite[proof of Corollary 1.1]{AS22}).

\begin{coro} \label{corolimitpp} Assume \eqref{itemnondeg}--\eqref{onecut}. Let $x \in \bulk$ and for every $i$, let $x_i'= N^{1/\d} x_i$. For fixed $\beta$,  as $N \to \infty$, the law of the point configuration
$\{x_1'-x,\dots, x_N'-x\}$ converges after extraction of a subsequence, to a  limiting point process with simple points and finite first and second correlation functions.

\end{coro}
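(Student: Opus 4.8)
The plan is to deduce Corollary \ref{corolimitpp} from the number control \eqref{controlnbreintro} in Theorem \ref{Local Law}, following the scheme of \cite[proof of Corollary 1.1]{AS22}. Fix $x\in\bulk$ and let $\Pi_N$ denote the law of the blown-up and recentered configuration $\{x_1'-x,\dots,x_N'-x\}$ on the space of locally finite point configurations of $\R^\d$, equipped with the topology of vague convergence of the associated counting measures (equivalently, convergence of $\#I_K$ for all compact $K$ with null-measure boundary). To get subsequential convergence it suffices to establish tightness of $(\Pi_N)_N$ in this topology; by a standard criterion (see e.g. \cite{DVJ}) this reduces to showing that for every fixed $R>0$ the family of random variables $\#I_{B_R^{\mathrm{bl}}}$ (the number of points of the blown-up configuration in the ball $B_R$ of radius $R$ around the origin) is tight, i.e. $\sup_N \PNbeta(\#I_{B_R^{\mathrm{bl}}}\ge M)\to 0$ as $M\to\infty$.

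First I would translate $\#I_{B_R^{\mathrm{bl}}}$ back to the original scale: $\#I_{B_R^{\mathrm{bl}}}$ equals the number of original points $x_i$ in the ball $B(x,RN^{-1/\d})$, hence is bounded by $\#I_{\carr_\ell}$ for a cube $\carr_\ell$ centered at $x$ of sidelength $\ell=2RN^{-1/\d}$. For $N$ large enough (so that $\ell\le$ the fixed size making $\carr_\ell\subset\bulk$, using $x\in\bulk$) and provided $\ell\ge\rho_\beta N^{-1/\d}$, which holds once $2R\ge\rho_\beta$ — we may always enlarge $R$, since tightness for all large $R$ gives tightness for all $R$ — Theorem \ref{Local Law} applies: on the good event $\mathcal G_\ell$, which has probability at least $1-C_1 e^{-C_2\beta\ell^\d N}=1-C_1 e^{-C_2\beta (2R)^\d}$, the bound \eqref{controlnbreintro} gives $C_0\#I_{\carr_\ell}\le \C_\beta\ell^\d N=\C_\beta(2R)^\d$, a bound \emph{independent of $N$}. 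Therefore
\begin{equation*}
\sup_N \PNbeta\Big(\#I_{B_R^{\mathrm{bl}}}\ge \tfrac{\C_\beta(2R)^\d}{C_0}+1\Big)\le \PNbeta(\mathcal G_\ell^c)\le C_1 e^{-C_2\beta(2R)^\d},
\end{equation*}
which is $<\infty$ and in fact small; in particular $\sup_N\PNbeta(\#I_{B_R^{\mathrm{bl}}}\ge M)\to 0$ as $M\to\infty$, giving tightness. Extracting a subsequence along which $\Pi_N$ converges vaguely yields a limiting point process.

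It remains to check the two qualitative properties of the limit: simplicity of the points and finiteness of the first two correlation functions. Simplicity follows from the minimal-distance control in item (3) of the Corollary: on a good event of overwhelming probability, $\sum_{x_i\in\carr_\ell}\g(\rr_i)\lesssim_\beta\ell^\d N^{1+\s/\d}$, which after blow-up prevents any two points from colliding, so with probability one under the limit no two points coincide. For the correlation functions, the first correlation function (intensity) is controlled by taking expectations in the number bound: $\E[\#I_{B_R^{\mathrm{bl}}}]$ is bounded uniformly in $N$ by integrating the tail estimate above, and this bound passes to the limit by Fatou, so the limiting intensity measure is locally finite, i.e. the first correlation function is finite; the second correlation function is handled identically using $\E[\#I_{B_R^{\mathrm{bl}}}(\#I_{B_R^{\mathrm{bl}}}-1)]$, whose uniform bound again follows from the exponentially small tails of $\#I_{B_R^{\mathrm{bl}}}$ beyond the deterministic threshold $\C_\beta(2R)^\d/C_0$ (a random variable with such Gaussian-type tails has all moments bounded in $N$). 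The main obstacle is not any single step — each is routine given Theorem \ref{Local Law} — but rather the bookkeeping of ensuring $\carr_\ell\subset\bulk$ and $\ell\ge\rho_\beta N^{-1/\d}$ simultaneously, which is why the statement is restricted to $x\in\bulk$ and why one works with $R$ large; this is exactly the point where the restriction to the bulk in Theorem \ref{Local Law} is used, and it is the reason we do not reproduce the (otherwise essentially verbatim) details of \cite[Corollary 1.1]{AS22}.
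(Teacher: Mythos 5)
Your overall approach — tightness via the number control \eqref{controlnbreintro}, then simplicity via minimal distances and moment bounds for correlation functions — is exactly what the paper sketches. However, there is a genuine gap in the tightness step as you have written it. Your displayed inequality bounds
\[
\sup_N \PNbeta\Big(\#I_{B_R^{\mathrm{bl}}}\ge \tfrac{\C_\beta(2R)^\d}{C_0}+1\Big)\le C_1 e^{-C_2\beta(2R)^\d},
\]
but this gives a bound only at one specific threshold $M=\C_\beta(2R)^\d/C_0+1$, with a tail probability that is a \emph{fixed constant} depending on $R$ alone. Increasing $M$ beyond that threshold only shrinks the event, so the best bound you can extract from this display is $\sup_N\PNbeta(\#I_{B_R^{\mathrm{bl}}}\ge M)\le C_1 e^{-C_2\beta(2R)^\d}$ for all $M\ge M_0$ — a constant, not a quantity tending to $0$ as $M\to\infty$. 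Hence the assertion ``in particular $\sup_N\PNbeta(\#I_{B_R^{\mathrm{bl}}}\ge M)\to 0$ as $M\to\infty$'' does not follow, and tightness is not proved. You used the remark ``we may always enlarge $R$'' only to guarantee $2R\ge\rho_\beta$; that is not the place where enlarging $R$ is actually needed.

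The fix is precisely the point the paper alludes to by ``taking large enough microscopic balls.'' For a \emph{fixed} ball $B_R^{\mathrm{bl}}$, enclose it in cubes $\carr_{\ell'}$ of variable side $\ell'=2R'N^{-1/\d}$ with $R'\ge R$ (still $\subset\bulk$ for $N$ large depending on $R'$). Theorem~\ref{Local Law} at scale $\ell'$ gives $\#I_{B_R^{\mathrm{bl}}}\le\#I_{\carr_{\ell'}}\le \C_\beta(2R')^\d/C_0$ outside an event of probability $\le C_1 e^{-C_2\beta(2R')^\d}$. Letting $R'\to\infty$, the threshold tends to $\infty$ while the tail tends to $0$: this yields $\sup_N\PNbeta(\#I_{B_R^{\mathrm{bl}}}\ge M)\to 0$ as $M\to\infty$ (along the thresholds $\C_\beta(2R')^\d/C_0$, which suffices), hence tightness. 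The same substitution is needed in your moment computations for the first and second correlation functions: as written, your tail estimate is constant in $M$ and so $\sum_{m}\PNbeta(\#I>m)$ need not converge uniformly in $N$; with $R'$ decoupled from $R$ you get an exponential-in-$M$ tail, from which uniform boundedness of all moments follows. The simplicity argument is correct in spirit but similarly benefits from spelling out that the minimal-distance control at blown-up scale reads $\sum_i\tilde\rr_i^{-\s}\lesssim_\beta(2R)^\d$ (for $\s>0$), which by Markov's inequality prevents coincidences uniformly in $N$.
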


We thus provide the first evidence of the existence of a limiting point process, that one may call a Riesz-$\beta$ point process, but only after subsequences. Note that the only situation outside of $\s=0$ in which the existence of such a point process is known is the  one-dimensional Riesz gas, thanks to the work \cite{boursier2}.  For $\s=0$, it is known in one dimension and is the sine-$\beta$ point process, or in two dimensions in the special determinantal case of $\beta=2$, with the Ginibre point process.

We note that with the ingredients developed for Theorem \ref{Local Law} we could prove a local version of the Large Deviations Principle on empirical fields obtained in \cite{LS15} characterizing the local point process averaged at a scale $\gg N^{-1/\d}$ as the minimum of a rate function. In the particular case of energy minimizers, i.e.~$\beta=\infty$, taking advantage of the $\beta$-dependence in the estimates, we could in particular derive precise equidistribution of the energy and number of points as in \cite{PRN} but without the extra decay assumption needed there.
 The details, both in the case of general $\beta$ and $\beta=\infty$ would be completely similar to the proof in \cite{AS21}, so we omit these results.
 
\subsubsection{Fluctuations of linear statistics}

Our second theorem concerns fluctuations of linear statistics, as defined in \eqref{def:fluct}.
We will 
assume that $\varphi\in C^k_c(\R^\d) $, that $ \rho_\beta N^{-1/\d}<\ell<1$ and  $\supp \varphi \subset \carr_\ell \subset \carr_{2\ell}\subset \bulk$  for some cube $\carr_\ell$ of sidelength $\ell$. The case $\ell =1$ corresponds to the macroscopic case, the case $\ell \ll 1$ to the mesoscopic case. We believe that in the macroscopic case $\ell=1$, our results hold without the interior condition $\supp \varphi \subset \bulk$, however the  PDE analysis of the transport map equation is a little trickier and we chose not to pursue this generality here.

Because of the nonlocal nature of the problem, the results always involve $\alpha$-{\it harmonic extensions} of the test-functions. These are defined as follows: if $\varphi$ is a function in $\R^\d$, we let 
$\varphi^\Sigma$ denote its $\alpha$-harmonic extension to $\Sigma^c$, that is the solution to 
\begin{equation}\label{aharmext}
\begin{cases}
\varphi^{\Sigma}=\varphi & \text{in }\Sigma \\
(-\Delta)^\alpha \varphi^\Sigma=0 & \text{in } \Sigma^c.
\end{cases}
\end{equation}
This extension is possible for nice enough functions, and enjoys some regularity, that will be important to us. In particular it is shown in Lemma \ref{lem1}   that 
\be(-\Delta )^\alpha \varphi^\Sigma =w(x)\dist(x, \pa \Sigma)^{-\alpha} \quad \text{as} \ x\to \pa \Sigma,  x \in \Sigma\ee
  for some bounded $w(x)$  
 and we will need to assume some regularity of  $w$  on $\pa \Sigma$. In particular, we will assume that 
 \be\label{assumpw}
 w \in C^k(U), \qquad \|w\|_{C^k} \lesssim \left\|\frac{\varphi^\Sigma -\varphi}{\dist(x,\partial \Sigma)^{\alpha}}\right\|_{C^k(U \setminus \Sigma)}.
 \ee
 \eqref{assumpw} holds for $k=0,1$ as a consequence of Lemma \ref{def aharmext} and Lemma \ref{lem1} in the Appendix ($k=0$) and \cite[Lemma 3.2]{FR2}. It is expected to hold for all $k$ once $\varphi \in C^{k+\alpha+\epsilon}$, although we were not able to locate this result in the literature.

We develop a Riesz transport method (counterpart of that of  \cite{LS18}, described in \cite[Chap 9,10]{S24}, in the Coulomb case), to obtain the following control of fluctuations of linear statistics, as defined in \eqref{def:fluct}. Here $\carr_\ell(z)$ denotes the cube of size $\ell$ centered at $z$.
\begin{theo}[Fluctuations control]\label{FirstFluct}
Assume \eqref{itemnondeg}--\eqref{onecut} with $k=5$ and \eqref{assumpw} with $w\in C^{5+\epsilon}$. 

Assume  $\varphi\in C^5_c(\R^\d) $, that $\ell \ge \rho_\beta N^{-1/\d}$, and  $\supp \varphi \subset \carr_\ell (z)\subset\carr_{2\ell}(z)\subset \bulk$, with the estimates 
\be\label{estxiintro}
\forall \sigma \le 5, \quad |\varphi|_{C^\sigma}\le \M \ell^{-\sigma}
\ee
for some constant $\M>0$.


 Then, there exist $C_1, C_2>0$ depending only on $\d, \s,\ep, V, \M$ such that for $N$ sufficiently large, there is an event $\mathcal{G}_\ell$ with 
\begin{equation}
\PNbeta(\mathcal{G}_\ell^c)\leq C_1e^{-C_2 \beta N\ell^\d }
\end{equation}
 such that  if $\tau (\ell N^{\frac1\d})^{\s-\d}$ is smaller than a constant depending only on $\d,\s,\ep, V, \M$, \footnote{Since $\s<\d$, this is satisfied for $N$ large enough for instance  as soon as $\ell \gg N^{-1/\d}$.} we have 

\be
\left|\log\Esp_{\PNbeta}\left[\exp\( \tau \frac{\beta}{1+\beta} \Fluct_{\muv}(\varphi)\indic_{\mathcal{G}_\ell}\)\right]\right|\lesssim_\beta \((|\tau|+|\tau|^2 )(\ell N^{\frac{1}{\d}})^\s\).
\ee
\end{theo}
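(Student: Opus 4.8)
The statement is a concentration estimate for $\Fluct_{\muv}(\varphi)$ on the good event $\mathcal G_\ell$, to be proven by a transport (change of variables) argument in the partition function, in the spirit of \cite{LS18,S22,P24}. The plan is to build a transport map $\psi_t = \mathrm{id} + t\,\vec v$ on $\R^\d$ (or rather its dimension-extended counterpart acting on the electric potential) pushing $\muv$ to a perturbed measure $\mu_t$ whose linear statistics cost exactly matches $\Fluct_{\muv}(\varphi)$. Concretely, I would solve the ``master equation'' for $\vec v$, which in the Coulomb case is an elliptic PDE but here — because $\g$ is the kernel of $(-\Delta)^\alpha$ — becomes a \emph{degenerate and singular elliptic equation}; the relevant right-hand side involves the $\alpha$-harmonic extension $\varphi^\Sigma$ of \eqref{aharmext}, which is why the hypotheses \eqref{assumpw} with $w\in C^{5+\epsilon}$ and the equilibrium-measure regularity \eqref{itemeqreg} are imposed. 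The solvability and the required $C^k$ bounds on $\vec v$, with the correct behavior of $\vec v$ near $\partial\Sigma$ (where $\muv$ vanishes like $\dist^{1-\alpha}$), are the delicate analytic input; these are where one pays for the loss of sharpness in $\beta$ and where the appendix results on the fractional obstacle problem enter.

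\medskip

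The probabilistic core goes as follows. Write $Z_{N,\beta}$ as an integral, perform the change of variables $x_i \mapsto \psi_t(x_i)$, and expand $\HN(\psi_t(\XN)) - \HN(\XN)$ together with the Jacobian $\sum_i \log|\det D\psi_t(x_i)|$ to first and second order in $t$. Using the energy-splitting identity $\HN(\XN) = N^2 \I(\muv) + 2 N^{1+\s/\d}\F_N(\XN,\muv) + (\text{lower order})$ and the fact that $\vec v$ transports $\muv$ appropriately, the first-order term in $t$ produces $\tfrac{\beta}{1+\beta}\tau\,\Fluct_{\muv}(\varphi)$ (after the right normalization of $\vec v$ by $\tau$), while the $O(t^2)$ terms and the commutator/anisotropy terms are controlled by $\tilde\F_N^{\carr_{2\ell}}(\XN,\muv)$ and by number-of-points bounds in $\carr_{2\ell}$. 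This is exactly where Theorem \ref{Local Law} and the Corollary after it are invoked on the event $\mathcal G_\ell$: on $\mathcal G_\ell$ we have $\tilde\F_N^{\carr_{2\ell}}(\XN,\muv)\lesssim_\beta \ell^\d N^{1+\s/\d}$ and $\#I_{\carr_{2\ell}}\lesssim_\beta \ell^\d N$, which bounds the error terms by $\lesssim_\beta (|\tau|+|\tau|^2)(\ell N^{1/\d})^\s$ — the claimed order. The restriction $\supp\varphi\subset\carr_\ell(z)\subset\carr_{2\ell}(z)\subset\bulk$ ensures that only the local energy in $\carr_{2\ell}$ is seen, despite the fact that $\vec v$ itself has nonlocalized tails; handling those tails (estimating the contribution of $\vec v$ far from $\carr_{2\ell}$, where $\varphi$ vanishes but $\vec v$ does not) is the second main technical point, and it forces the appearance of $\varphi^\Sigma$ and the extension to $\R^{\d+1}$ in the electric formulation.

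\medskip

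Putting this together: $\log\Esp_{\PNbeta}[\exp(\tau\tfrac{\beta}{1+\beta}\Fluct_{\muv}(\varphi)\indic_{\mathcal G_\ell})]$ equals $\log\big(Z_{N,\beta}^{-1}\int_{\psi_t^{-1}(\mathcal G_\ell)}\cdots\big)$ up to the controlled errors, and since $\psi_t$ is a small perturbation ($t = O(\tau (\ell N^{1/\d})^{\s-\d})$ small by hypothesis, so $\psi_t$ is a diffeomorphism close to the identity and $\psi_t^{-1}(\mathcal G_\ell)$ is comparable to $\mathcal G_\ell$), the ratio of partition-function-type integrals is bounded, yielding the two-sided estimate. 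The smallness condition ``$\tau(\ell N^{1/\d})^{\s-\d}$ small'' is precisely what makes the transport admissible and the second-order remainder summable; the footnote observes this is automatic once $\ell\gg N^{-1/\d}$. I expect the PDE construction of the transport map $\vec v$ — solving the degenerate singular fractional-elliptic master equation with sharp boundary behavior and $C^5$ interior bounds, matching \eqref{assumpw} — to be the main obstacle, with the control of its nonlocal tails a close second; the probabilistic bookkeeping, while lengthy, is routine once Theorem \ref{Local Law} is in hand.
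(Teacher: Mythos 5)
Your proposal captures the paper's approach: it is indeed a transport/change-of-variables argument in which one solves a degenerate and singular elliptic master equation for the transport field (Proposition~\ref{transport}), rewrites the Laplace transform as $e^{T_0}\,\Esp_{\PNbeta}[e^{T_1+T_2}\indic_{\mathcal G_\ell}]$ via the change of variables (Lemma~\ref{lem2.1}), computes $T_0$ deterministically, makes $T_1$ vanish at first order by the choice of $\psi$, and bounds $T_2$ using the commutator estimates of Proposition~\ref{pro:commutator} together with the local laws. The $\alpha$-harmonic extension $\varphi^\Sigma$, the boundary regularity $w\in C^{5+\epsilon}$ from \eqref{assumpw}, and the appendix results on the fractional obstacle problem enter the construction of $\psi$ exactly as you indicate; and the smallness hypothesis on $\tau(\ell N^{1/\d})^{\s-\d}$ is used, as you say, so that $\Phi_t=\id+t\psi$ is a perturbation of the identity and the expansions converge.

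One phrase in your sketch is however misleading and worth fixing. You write that the restriction $\supp\varphi\subset\carr_\ell\subset\carr_{2\ell}\subset\bulk$ "ensures that only the local energy in $\carr_{2\ell}$ is seen," and that the $T_2$/commutator terms are "controlled by $\tilde\F_N^{\carr_{2\ell}}(\XN,\muv)$." This is not quite right: because the transport $\psi$ does \emph{not} vanish outside $\carr_{2\ell}$ (it only decays polynomially, per \eqref{tscale}), the commutator estimate \eqref{main1} necessarily sees the energy at \emph{all} dyadic scales $2^k\ell$. The paper resolves this by decomposing $\psi=\sum_k\chi_k\psi$ across dyadic annuli, applying the commutator estimate scale-by-scale, invoking the local law $\tilde\F_N^{\carr_{2^k\ell}}\lesssim_\beta (2^k\ell)^\d N^{1+\s/\d}$ in each annulus (and the macroscopic law \eqref{macrolaw2} outside $\bulk$), and summing the resulting geometric series against the decay of $\psi$; this is the content of Proposition~\ref{pf loiloc} and Lemma~\ref{lemsubdi}. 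Correspondingly, the good event $\mathcal G_\ell$ is not merely the event where the local law holds in $\carr_{2\ell}$; it is where the local laws hold at every scale $\geq 2\ell$ (plus the macroscopic bound), and this is also why Theorems~\ref{Local Law} and~\ref{FirstFluct} must be proved \emph{jointly} in a bootstrap: fluctuation control at scale $\ell$ requires local laws down to $2\ell$, which in turn requires fluctuation control at scales $\geq 2\ell$ to carry out the screening. A minor slip: the splitting identity is $\HN=N^2\I(\muv)+N\sum_i\zeta_V(x_i)+\F_N(\XN,\muv)$ (Lemma~\ref{Riesz Splitting Formula}), not $N^2\I(\muv)+2N^{1+\s/\d}\F_N+\cdots$.
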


Note that if we did not make the assumption that $\Sigma$ is connected but instead the union of $n$ connected components, we would need to make the assumption
\be \label{assconncomp}\int_{\Sigma_i} (-\Delta )^\alpha \varphi^\Sigma=0\ \text{on each connected component }\Sigma_i \text{ of } \Sigma,\ee which suffices to build the needed transport map.
Without the condition \eqref{assconncomp}, the result may be false due to the possibility of integer fluctuations in the multi-component case (see \cite{BG16} for the treatment of this situation in the one-dimensional log case). This will also be reflected in the fact that without this condition, 
 the equation for the transport map may not be solvable. Since we couple the proofs of Theorem \ref{FirstFluct} and \ref{Local Law}, and in particular make use of Theorem \ref{FirstFluct} for functions that may not solve \eqref{assconncomp}, we restrict our attention throughout the paper to the case where $\supp(\muv)=\Sigma$ is connected.

In the Coulomb case, the transport method combined with a local free energy expansion with a good enough rate allows us in principle to derive a full Central Limit Theorem for fluctuations of linear statistics, as described in \cite{S22} and \cite[Chap 9,10]{S24}. 
In dimension two (and one as well), any rate is sufficient to conclude, as seen in \cite{LS18}. In dimension three and higher, the currently available rate, which corresponds to a surface error, is not enough to conclude. The situation in the Riesz case is similar: the method in principle allows to get convergence for any $\s$ and $\d$ such that $\s\in (\d-2,\d)$, except that the rate we are able to obtain is only sufficient when $\s$ is small enough and $\ell $ is small enough.

Let us first discuss the question of free energy expansion for the Riesz gas.  In \cite{LS15}, the following  expansion was  shown:
\begin{equation}\label{freeZexp}
 \begin{cases}
\log Z_{N,\beta}= - \beta N^{2-\frac\s\d}\mathcal{E}(\mu_V)+ \(\frac\beta{2\d} N\log N\)\indic_{\s=0} +\log \K_{N,\beta}(\mu_V,\zeta_V)\\
\log \K_{N,\beta}(\mu_V,\zeta_V)=- N \Ent(\mu_V) + N\mathcal{Z} (\beta, \mu_V) +o(N).
\end{cases}
\end{equation}
Here $\Ent (\mu) = \int_{\R^\d} \mu \log \mu$, $\mathcal E$ is as in \eqref{Continuous energy} and 
\begin{equation}\label{defcalZ}
\mathcal{Z}(\beta, \mu):= -\beta \int_{\R^\d} \mu^{1+\frac\s\d}(x) \mf(\beta \mu^{\frac\s\d}(x))dx + \frac{\beta}{2\d}\Ent(\mu)\indic_{\s=0}.
\end{equation}
The quantity $\mf$ is a function of the inverse temperature (note that here an effective temperature $\beta \mu^{\s/\d}$ appears), and corresponds to the pressure at that temperature, meaning the large volume limit of the free energy per unit volume for a unit charge density.  The expression \eqref{defcalZ} then simply corresponds to  scaling that limit in terms of the effective density, in other words, to a local density approximation.

In \cite{LS15}, the function $\mf$  was characterized via the minimum of a rate function governing a large deviations principle on point processes. Here we will characterize it directly as the pressure, by showing in Lemma \ref{lem: const dens} the existence of the large volume limit of the free energy (per unit volume) of a Riesz gas confined to a box, {\it with a rate}, corresponding to surface additivity errors. This will be done by an almost additivity property of the free energy obtained as a consequence of the local laws, as done in \cite{AS22} in the Coulomb case.
The rate obtained in that limit allows in principle to improve the $o(N)$ error in \eqref{freeZexp}. However, there are additional surface errors on $\partial \Sigma$ which damage this improvement. But in view of obtaining a CLT via the transport method, what matters is not the rate in the free energy expansion, but rather the rate in the {\it relative free energy expansion}
$$\log \K_{N,\beta}(\mu_t)-\log \K_{N,\beta}(\mu_0)$$
where $\mu_t$ is the {\it transported measure},  and $\mu_0=\mu_V$ the original one, and for that we can obtain an improved rate.


To go further, we will thus  assume a relative expansion of next-order partition functions (see Sections \ref{sec: ll prelim} and \ref{sec: partition} for a formal definition)
\begin{multline} \label{2eway0 intro}
\log \K_{N,\beta}^{\mathcal{G}_\ell}( \mu_t,\zeta\circ \Phi_t^{-1})- \log \K_{N,\beta}(\mu_0,\zeta) +N(\Ent(\mu_t)-\Ent(\mu_0) ) \\= N\( \mathcal{Z} (\beta, \mu_t)-\mathcal{Z}(\beta, \mu_0) \) + O((\beta+1) N\ell^\d \mathcal R_t)\end{multline}
where $\mathcal R_t$ is the error rate and $\K_{N,\beta}^{\mathcal{G}_\ell}$ corresponds to the free energy restricted to the event $\mathcal{G}_\ell$ where local laws hold as in Theorem \ref{Local Law}.

 We will also need an additional assumption on $\mf$: we 
assume that the function $y \mapsto \mf(y) $ is $p$ times differentiable and satisfies 
\be \label{425} \forall n \le p,\ \text{for} \   y \in \left[\hal\beta \min_{\carr_\ell} \mu^{\s/\d}, 2\beta \max_{\carr_\ell}\mu^{\s/\d}\right],  \ \text{we have}\   |(y\mf(y))^{(n)} |\lesssim_\beta |y|^{1-n}  .\ee

This can be interpreted as a {\it no-phase transition} assumption at the effective temperature  $\beta \muv(x)^{\frac{\s}{\d}}$. It is reasonable to expect that the CLT result may fail if there is a phase transition. We formulated an assumption that is uniform as $\beta \ge 1$ in order to be able to deduce a zero temperature result by taking a uniform limit as $\beta \to \infty$, but one may dispense with the uniformity in the assumption is one is not interested in that.

Our CLT result has two parts: the first is conditional and asserts that if $\mathcal R_t$ can be shown to be small enough (in terms of the scale $N^{1/\d}\ell$), then the result holds. The second part asserts that the rate can  indeed be obtained  small enough when $\s>0$ is small enough and $\ell$ small enough, which effectively restricts the result to dimensions $1$ and $2$ (because of the constraint $\s>\d-2$). This is the same limitation as encountered in the Coulomb case in \cite{S22}. However, we expect that our error rate  which saturates at surface errors is not sharp, and thus that the result holds much beyond this setting.

\begin{theo}[Central Limit Theorem]\label{CLT}Assume \eqref{itemnondeg}--\eqref{onecut} and \eqref{assumpw} for some $k\ge 5$.  
Assume $\varphi= \varphi_0(\frac{\cdot-z}{\ell}) $ for some $\varphi_0\in C^k_c(\R^\d) $ and that $ \rho_\beta N^{-1/\d}\le \ell\le 1$ and  $\supp \varphi \subset \carr_\ell(z)\subset \carr_{2\ep}(z) \subset \Sigma$  for some cube $\carr_\ell(z)$ of sidelength $\ell$.   If $\s \neq 0$ assume that \eqref{425} holds.
 
Assume  
  \eqref{2eway0 intro} with a rate $\mathcal R_t$ such that 
\begin{equation}\label{goodrate}
\(\ell N^{\frac{1}{\d}}\)^{\frac{\s}{2}}\(\max_{|s|\leq \ell^{\d-\s}}\mathcal{R}_s\)^{1-\frac{1}{p}}\rightarrow 0
\end{equation}
as $N \rightarrow \infty$ for some $ p \ge 2$ such that $k \ge 2p+3$.  Then,
\begin{equation}
\sqrt{2\beta} \frac{\Fluct_{\muv}(\varphi)}{\(N^{\frac{1}{\d}}\ell\)^{\frac{\s}{2}}}- \(\ell N^{-\frac{1}{\d}}\)^{-\frac{\s}{2}} \mathrm{Mean} (\varphi)
\end{equation}
converges in distribution to a centered Gaussian with variance 
\be \mathrm{Var}(\varphi)=\frac{c_{\d,\frac{\d-\s}{2}}}{2\cds}\begin{cases}
\|\varphi_0^\Sigma\|_{\dot{H}^{\frac{\d-\s}{2}}}^2& \text{if }  \ell =1\\
\|\varphi_0\|_{\dot{H}^{\frac{\d-\s}{2}}}^2& \text{if } \ell\to 0 \ \text{as} \, N\to \infty,\end{cases}
\ee
where $\cds$ and $c_{\d,\alpha}$ are as in \eqref{constant definition} and \eqref{def fraclap}, and
\begin{align}
\mathrm{Mean}(\varphi)=\begin {cases}  \displaystyle\frac{\sqrt2}{\sqrt{\beta}}\frac{1}{\cds}\(- 1+\frac{\beta}{2\d}\indic_{\s=0}\) \int_\Sigma(-\Delta)^\alpha \varphi^\Sigma(\log \muv ) &\\
 - \displaystyle\frac{\sqrt{2\beta}}{\cds}\int_{\Sigma} (-\Delta)^{\alpha}\varphi^\Sigma\(  (1+\frac\s\d) \mf(\beta\muv^{\frac\s\d})\muv^{\frac\s\d} + \frac\s\d \mf'(\beta \muv^{\frac\s\d}) \muv^{2\frac\s\d} \)\indic_{\s\ge 0}& \text{if} \ \ell=1\\
 0 & \text{if}\  \ell \to 0\end{cases}
 \end{align}
  Moreover, 
\be \label{secondstate} \sqrt{2} \frac{\Fluct_{\muv}(\varphi)}{\(N^{\frac{1}{\d}}\ell\)^{\frac{\s}{2}}}- \frac1{\sqrt{\beta}}\(\ell N^{-\frac{1}{\d}}\)^{-\frac{\s}{2}} \mathrm{Mean} (\varphi)\ee converges in distribution to $0$ as $N \to \infty$, uniformly as $\beta \to \infty$.

The result holds without the additional assumption \eqref{goodrate}    for $\d=1,2$ and $\s \leq \s_0$, where 
\begin{equation*}
0<\s_0\approx \begin{cases}
0.03973& \text{in }\d=1,\\
0.06059 & \text{in }\d=2.
\end{cases}
\end{equation*}
as long as
\begin{equation*}
\ell \ll N^{-\frac{\s}{\d(\s+2)}} .\end{equation*}
\end{theo}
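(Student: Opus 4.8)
The plan is to implement the Riesz version of the transport method developed in \cite{LS18,S22}, coupled with the almost-additivity of the free energy coming from the local laws of Theorem \ref{Local Law}. The starting point is that the fluctuation $\Fluct_{\muv}(\varphi)$ can be realized as the derivative at $t=0$ of the energy along a transport: one constructs a flow $\Phi_t = \mathrm{id} + t\psi + \dots$ of $\R^\d$ pushing $\muv$ to a measure $\mu_t$ so that the linearization of $\I(\mu_t)+\int V d\mu_t$ (equivalently of $-\log\K_{N,\beta}$ after the energy split) produces exactly $\int \varphi \, d(\sum\delta_{x_i}-N\muv)$ up to lower-order terms. The key PDE step — and this is where the nonlocality really bites, as flagged in the introduction — is to solve the ``master operator'' equation for $\psi$: one needs $\div(\muv\psi)$ (or rather its $\alpha$-fractional analogue) to match $\varphi^\Sigma$-type data, which amounts to analyzing a degenerate, singular elliptic equation whose weight degenerates like $\dist(x,\partial\Sigma)^{1-\alpha}$ at the free boundary (using assumption \eqref{itemeqreg}) and whose right-hand side involves $(-\Delta)^\alpha\varphi^\Sigma$ behaving like $\dist(x,\partial\Sigma)^{-\alpha}$ (Lemma \ref{lem1}, assumption \eqref{assumpw}). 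The regularity \eqref{assumpw} with $w\in C^{k}$ and $V\in C^{k+1}$ with $k\ge 2p+3$ is precisely what is needed to get the transport map regular enough that the error terms in the change of variables are controlled at scale $\ell$, and the hypothesis \eqref{assconncomp} (here automatic by \eqref{onecut}) is what makes this equation solvable — it is the compatibility condition that kills the integer-fluctuation obstruction.

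Next I would write the generating function identity. Changing variables $x_i \mapsto \Phi_t(x_i)$ in $\ZNbeta$, one obtains
\begin{equation*}
\Esp_{\PNbeta}\big[e^{\,\tau\frac{\beta}{1+\beta}\Fluct_{\muv}(\varphi)\indic_{\mathcal{G}_\ell}}\big]
= \frac{\K_{N,\beta}^{\mathcal{G}_\ell}(\mu_t,\zeta\circ\Phi_t^{-1})}{\K_{N,\beta}(\mu_0,\zeta)}\cdot e^{\,(\text{energy and entropy terms})},
\end{equation*}
where $t$ is tuned proportionally to $\tau$ at the relevant scale. The entropy term contributes $-N(\Ent(\mu_t)-\Ent(\mu_0))$, the first-order energy term contributes the deterministic $\mathrm{Mean}(\varphi)$ piece (using the Euler--Lagrange relation \eqref{Riesz Euler-Lagrange equation} and the $\alpha$-harmonic extension), the second-order energy term contributes the $\dot H^{\alpha}$-variance via \eqref{homogsobo}--\eqref{soboFT} and the constant $\frac{c_{\d,\alpha}}{2\cds}$, and the Jacobian/commutator terms produce remainders of size $O((\beta+1)N\ell^\d\mathcal R_t)$ by \eqref{2eway0 intro} together with the fluctuation control of Theorem \ref{FirstFluct}. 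Expanding to second order in $t$ and inserting \eqref{2eway0 intro} gives $\log\Esp[\cdots] = (\text{linear in }\tau)\,\mathrm{Mean}(\varphi) + (\text{quadratic in }\tau)\,\mathrm{Var}(\varphi) + O(\tau^2(\ell N^{1/\d})^\s \mathcal R_t^{1-1/p})$, using the assumption \eqref{425} to control the $p$-th order Taylor remainder of $\mf$ in the local-density term of $\mathcal Z(\beta,\mu_t)$. Under \eqref{goodrate} the error vanishes, Laplace/Lévy continuity then yields convergence of $\sqrt{2\beta}\,\Fluct_{\muv}(\varphi)(N^{1/\d}\ell)^{-\s/2}$ minus the centering to the stated Gaussian; dividing instead by $\sqrt{2}$ and tracking the $\beta$-uniformity of all constants (the constants in Theorem \ref{Local Law} and Theorem \ref{FirstFluct} are $\beta$-independent for $\beta\ge1$) gives the zero-temperature statement \eqref{secondstate}.

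Finally, for the unconditional statement in $\d=1,2$, I would show that the available rate $\mathcal R_t$ — coming from the screening/additivity construction behind Theorem \ref{Local Law}, which saturates at surface errors of order $(\ell N^{1/\d})^{-1}$ plus a free-boundary contribution of order $(\ell N^{1/\d})^{-(\d-\s)}$-type — satisfies \eqref{goodrate} exactly when $\ell N^{1/\d}$ grows fast enough relative to $\s$. Optimizing the exponent $\frac{\s}{2} - (1-\frac1p)\cdot(\text{rate exponent})$ over the admissible $p$ (constrained by $k\ge 2p+3$) and over $\ell \ll N^{-\s/(\d(\s+2))}$ produces the threshold $\s_0 \approx 0.03973$ in $\d=1$ and $\s_0\approx 0.06059$ in $\d=2$; the constraint $\s>\d-2$ is vacuous there, so no analogous statement survives in $\d\ge 3$ with the current rate. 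The main obstacle throughout is the first one: the solvability and quantitative regularity theory for the degenerate/singular fractional transport PDE with the $\dist^{1-\alpha}$ weight and $\dist^{-\alpha}$ data — everything downstream (the size of $\mathcal R_t$, the range of admissible $p$, and hence $\s_0$) is governed by how many derivatives one can push through that equation up to the free boundary.
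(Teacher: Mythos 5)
The broad architecture you describe---construct the transport map by solving the degenerate elliptic equation, change variables in $\ZNbeta$, expand the log Laplace transform into a deterministic piece plus transport remainders, and insert the relative free-energy expansion \eqref{2eway0 intro}---matches the paper's route, and your observations about the role of \eqref{assconncomp}, \eqref{assumpw}, and the regularity count $k\ge 2p+3$ are all on target. However, there is one genuine gap, and it concerns the mechanism that produces the exponent $1-\tfrac1p$ on $\mathcal R_t$.

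You assert that ``expanding to second order in $t$ and inserting \eqref{2eway0 intro} gives $\dots + O\big(\tau^2(\ell N^{1/\d})^\s\,\mathcal R_t^{1-1/p}\big)$, using the assumption \eqref{425} to control the $p$-th order Taylor remainder of $\mf$.'' But a direct comparison between the second-order Taylor expansion of the transported energy and the free-energy expansion \eqref{2eway0 intro} only gives an error of order $(\beta+1)N\ell^\d\mathcal R_t$, i.e.\ the exponent on $\mathcal R_t$ is $1$, not $1-\tfrac1p$. Since $\mathcal R_t^{1-1/p}\gg\mathcal R_t$ when $\mathcal R_t\to 0$, the assumption \eqref{goodrate} is \emph{weaker} than what your argument, as written, would require; in particular, for $\s>0$ the factor $(\ell N^{1/\d})^{\s/2}$ is growing, so the naive $O(\mathcal R_t)$ error is not enough to close the CLT, and your sketch does not explain where the improvement comes from. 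Assumption \eqref{425} on $\mf$ controls derivatives of the pressure inside $\mathcal Z(\beta,\mu_t)$; it does not by itself buy a fractional power of $\mathcal R_t$.

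The missing ingredient is the H\"older interpolation step (Section 4.4 of the paper, the analogue of the argument first used in \cite{LS18}). Rather than stopping at second order, one Taylor-expands the transported energy to order $p$ along $\Phi_t=\mathrm{id}+t\psi$, producing loop-equation-type coefficients $\gamma_k$ combining $\Ani_k$, $\Fluct(c_k(\psi))$, and $B_k(\beta,\mu_0,\psi)$. One then compares with \eqref{2eway0 intro} at an auxiliary parameter $a$ chosen by the fixed-point relation \eqref{defa}, $a=\frac{c}{\|\varphi_0\|_{C^{2p+3}}}\big(\max_{|r|\le a}\mathcal R_r\big)^{1/p}\ell^{\d-\s}$, which balances the $a^p$ Taylor-remainder error against the $\mathcal R_a$ free-energy error. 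Evaluating the resulting bound at $p$ equally spaced multiples $\{aX_i\}$ with $X_i\in[\tfrac12,1]$ and using Lagrange interpolation of the polynomial $\sum_k\gamma_k a^k X^k$, one isolates each $\gamma_k a^k$ individually, and a final application of H\"older's inequality in the probability space at the true parameter $t$ with $|t|/a<1$ produces an extra factor $|t|/a$. The fractional exponent $1-\tfrac1p$ appears precisely when one substitutes the definition of $a$ into $\frac{|t|}{a}\mathcal R_a$; this is the content of \eqref{pconc}. Without this step, the argument cannot reach the stated threshold \eqref{goodrate}, and the thresholds $\s_0$ you quote would have to be recomputed for a strictly worse rate (and would in fact vanish).

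A secondary point: the description of the error rate $\mathcal R_t$ as ``surface errors of order $(\ell N^{1/\d})^{-1}$ plus a free-boundary contribution of order $(\ell N^{1/\d})^{-(\d-\s)}$'' is not what the paper obtains. The actual rate $\mathscr{E}(L)$ in \eqref{error rate} arises from optimizing the screening height $r$ against the discrepancy and flux errors in the extended-space screening and has exponent $-\kappa$ with $\kappa=1-\frac{4}{4+\d-\s_+}$ (or $1-\frac{\d-1}{(\d-1)+(\d-\s)}$ in high dimension), which then gets fed through Proposition \ref{Kcomp} and the additional $R$-optimization to yield \eqref{final error rate}. The numerical values of $\s_0$ and the constraint $\ell\ll N^{-\s/(\d(\s+2))}$ emerge from balancing this rate against $(\ell N^{1/\d})^{\s/2}$ under the H\"older exponent $1-\tfrac1p$ with $p=2$, and the resulting threshold inequalities are explicit cubics in $\s$; they are not a consequence of the rough surface-order heuristics you describe.
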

This theorem provides us with the precise (expected) order of the fluctuations
after one removes  a deterministic shift (the factor containing the mean) which is {\it divergent} as soon as  $\s>0$.

Notice that the additional assumption on $\ell$ small holds automatically in the case $\s \leq 0$ if $\ell=o(1)$, which allows to  match the result of \cite{boursier1} in the one-dimensional case, which was the only prior results in the Riesz case (however much less regularity of the test function was needed in \cite{boursier1}).
We could also have a result for larger $\s>0$ but at the expense of being restricted to smaller $\ell$'s.

The second statement \eqref{secondstate} in the theorem is meant to allow to take   $\beta\to \infty $,  which implies, under the same suitable assumptions that 
for minimizers of the energy $\HN$, a convergence result for $\frac{\Fluct_{\muv}(\varphi) }{(N^{\frac1\d}\ell)^{\s/2}}$ after substracting off the appropriate shift.

Let us discuss the comparison of this result with the Coulomb case, which is solved only in dimension 1 (see \cite{S24}) and more interestingly, in dimension 2 in \cite{LS18,S22} (refer also to \cite{S24,S22}). In the Coulomb case, when $\varphi$ is supported in $\Sigma$, then its harmonic extension (in that case $\alpha=1$)  outside $\Sigma $ is simply itself and the mean and variance expressions involve only $\varphi$ itself.
However, when the support of $\varphi$ overlaps $\Sigma^c$, then, as first shown in \cite{AHM15} in the particular (determinantal) setting of $\beta=2$, the expressions involve $\varphi^\Sigma$ instead of $\varphi$.
In the Riesz case, even when $\varphi$ is supported in $\Sigma$, it does not coincide with its $\alpha$-harmonic extension due to the nonlocality of the fractional Laplacian, thus $\varphi^\Sigma$ is always involved, and this counts among the difficulties in handling the Riesz case. 

In the (two-dimensional) Coulomb case,  the variance expression was $\int |\nab \varphi|^2$ (or more generally $\int |\nab \varphi^\Sigma|^2$), leading to interpreting the limit of $\Delta^{-1}(\sum_{i=1}^N \delta_{x_i}-N \muv)$ as the \textit{Gaussian free field}. Here in the Riesz case the variance  is a homogeneous fractional Sobolev norm of $\varphi^\Sigma$. Modulo the $\alpha$-harmonic extension, this structure is characteristic of a \textit{fractional Gaussian field} (for a review of the notion, we refer to \cite{LSSW}).

This leads to the following interpretation:
\begin{coro}
If convergence holds, the quantity $(-\Delta)^{-\alpha} ( \sum_{i=1}^N \delta_{x_i}-N \muv) $ converges to a Gaussian field of Hurst parameter $-\s/2$ in the mesoscopic case $\ell \to 0$,  or  a variant\footnote{due to the harmonic extension} of it in the macroscopic case.
\end{coro}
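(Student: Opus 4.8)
The plan is to deduce this corollary from Theorem~\ref{CLT} by a duality argument. Set $\xi_N:=\sum_{i=1}^N\delta_{x_i}-N\muv$, so that by \eqref{eqgriesz} the random distribution $(-\Delta)^{-\alpha}\xi_N$ is, up to the constant $\cds$, exactly the electric potential $h_N=\g\ast\xi_N$. For any test function $\varphi$ in the admissible class of Theorem~\ref{CLT} (in the mesoscopic case $\varphi=\varphi_0((\cdot-z)/\ell)$ with $\ell\to0$; in the macroscopic case $\ell=1$), self-adjointness of the fractional Laplacian gives
\[
\big\langle (-\Delta)^{-\alpha}\xi_N,\ (-\Delta)^\alpha\varphi\big\rangle=\langle\xi_N,\varphi\rangle=\Fluct_{\muv}(\varphi).
\]
Pairing the field against functions of the form $(-\Delta)^\alpha\varphi$, rather than against an arbitrary $\psi$ and then writing $\varphi=(-\Delta)^{-\alpha}\psi$, is precisely what makes this clean: it sidesteps the non-compactly-supported, slowly decaying tails of $(-\Delta)^{-\alpha}\psi$, i.e.\ exactly the nonlocality obstruction highlighted in the introduction. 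Theorem~\ref{CLT} then says that after subtracting the deterministic $\mathrm{Mean}$ shift (which vanishes when $\ell\to0$) and dividing by $(N^{\frac1\d}\ell)^{\s/2}$, the right-hand side converges in law to a centered Gaussian of variance $\tfrac{c_{\d,\alpha}}{2\cds}\|\varphi_0\|^2_{\dot H^{(\d-\s)/2}}$, with $\varphi_0$ replaced by its $\alpha$-harmonic extension $\varphi_0^\Sigma$ when $\ell=1$.

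The key step is to recognize this limiting variance as that of a fractional Gaussian field. Using $2\alpha=\d-\s$ and $\widehat{(-\Delta)^\alpha\varphi}(\xi)=|\xi|^{2\alpha}\hat\varphi(\xi)$, Plancherel gives, with $\psi_0:=(-\Delta)^\alpha\varphi_0$,
\[
\|\varphi_0\|^2_{\dot H^{(\d-\s)/2}}=\int|\xi|^{\d-\s}\,|\hat\varphi_0|^2\,d\xi=\int|\xi|^{-2\alpha}\,|\hat\psi_0|^2\,d\xi,
\]
which by \eqref{homogsobo}--\eqref{soboFT} equals, up to the factor $c_{\d,\s}/2$, the quantity $\iint\g(x-y)\,\psi_0(x)\psi_0(y)\,dx\,dy$. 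Hence the limit law is the centered Gaussian field whose covariance form is a fixed multiple of $\iint\g\,d\psi\,d\psi'$, i.e.\ (by \eqref{eqgriesz}) a Gaussian field with covariance kernel proportional to $\g$ itself; since $\g$ is homogeneous of degree $-\s$ and is the kernel of $(-\Delta)^{-\alpha}$, this is exactly the fractional Gaussian field of index $s=\alpha=\tfrac{\d-\s}{2}$, i.e.\ of Hurst parameter $H=s-\d/2=-\s/2$ in the convention of \cite{LSSW}. In the macroscopic case $\ell=1$, the same computation run with $\varphi_0^\Sigma$ yields the covariance $\iint\g\,d\big((-\Delta)^\alpha\varphi^\Sigma\big)\,d\big((-\Delta)^\alpha{\varphi'}^\Sigma\big)$, the $\Sigma$-dependent ``variant''. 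In the mesoscopic regime the statement is to be read after the natural dilation by $\ell$ around the base point $z$: the scaling $\|\psi\|^2_{\dot H^{-\alpha}}=\ell^\s\|\psi_0\|^2_{\dot H^{-\alpha}}$ is precisely what makes the rescaled covariance the scale-invariant one of the $\mathrm{FGF}$ of Hurst $-\s/2$ on $\R^\d$.

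To pass from convergence of a single linear statistic to convergence of the whole field, I would apply the previous two steps to finite linear combinations: the admissible class is a linear space and $\varphi\mapsto\|\varphi_0\|^2_{\dot H^{(\d-\s)/2}}$ is a quadratic form, so for $\varphi=\sum_j t_j\varphi^{(j)}$ Theorem~\ref{CLT} delivers a centered Gaussian limit whose variance is that quadratic form evaluated at $\sum_j t_j\varphi^{(j)}$; polarizing identifies the bilinear covariance $B(\varphi,\varphi')=\tfrac{c_{\d,\alpha}}{2\cds}\langle\varphi_0,\varphi'_0\rangle_{\dot H^{(\d-\s)/2}}$, and the Cram\'er--Wold device then gives joint convergence of the finite-dimensional marginals of the recentered, renormalized $(-\Delta)^{-\alpha}\xi_N$ to the centered Gaussian field with covariance $B$, i.e.\ to $\mathrm{FGF}_{-\s/2}$ (or its $\alpha$-harmonic-extension variant). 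A routine tightness/density argument then extends the test-function class to its closure.

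The main obstacle is not this (soft) probabilistic repackaging but the input it rests on: one must verify that the functions to which Theorem~\ref{CLT} is applied genuinely satisfy its hypotheses — the scaling bounds \eqref{estxiintro}, the interior condition $\supp\varphi\subset\carr_\ell(z)\subset\carr_{2\ell}(z)\subset\bulk$, and, in the macroscopic case, the boundary-regularity assumption \eqref{assumpw} on the function $w$ attached to $(-\Delta)^\alpha\varphi^\Sigma$, which the paper only establishes for small $k$. The mesoscopic case is comparatively harmless, but the honest formulation of ``the field converges'' still requires uniform control in $\ell$ and passage to a dense family of test functions; together with fitting the $\alpha$-harmonic extensions into the framework when $\ell=1$, this is where the genuine work lies. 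Since the corollary is stated conditionally on the convergence asserted in Theorem~\ref{CLT}, the content reduces to the identification carried out above.
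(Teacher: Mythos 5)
Your proposal is correct and matches the paper's (implicit) intent: the corollary is stated as a heuristic interpretation of Theorem~\ref{CLT} with no separate proof given, and your duality identity $\langle(-\Delta)^{-\alpha}\xi_N,(-\Delta)^\alpha\varphi\rangle=\Fluct_{\muv}(\varphi)$ together with the Plancherel computation turning $\|\varphi_0\|^2_{\dot H^{(\d-\s)/2}}$ into $\|(-\Delta)^\alpha\varphi_0\|^2_{\dot H^{-\alpha}}$ is exactly the reading behind the phrase ``characteristic of a fractional Gaussian field.'' The Cram\'er--Wold step and the caveats about the admissible test-function class, the $\alpha$-harmonic extension in the macroscopic case, and the $\ell$-dependent dilation are the right places to flag incompleteness, which is why the statement is offered conditionally (``if convergence holds'') rather than as a theorem.
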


To our knowledge, this is a new and natural  occurence of such fractional Gaussian fields.

\subsection{The transport method}
Let us next outline the implementation of the transport method, emphasizing the differences with the local situation.

The starting point is the rewriting of the Laplace transform (or moment generation function) of fluctuations as a ratio of partition functions, as done in this context  for instance since \cite{J98}: a straightforward computation shows that 
$\Fluct_{\muv}(\varphi)$ being  defined  in \eqref{def:fluct}, we have 
\begin{equation}\label{expansion1}
\Esp_{\PNbeta}\left[\exp\(-\beta t N^{1-\frac{\s}{\d}} \Fluct_{\muv}(\varphi)\)\right]=e^{t\beta N^{2-\frac{\s}{\d}}\int \varphi\, d\muv}\frac{\ZNbeta(V_t)}{\ZNbeta(V)},
\end{equation}
where $V_t:=V+t\varphi$ and 
$\ZNbeta(V)$ denotes the partition function of the Riesz gas of potential $V$, as in \eqref{defZN}.
The interesting regime will be the regime of small $t$, thus we can think of the Laplace transform of fluctuations as a ratio of two partition functions, that of a Riesz gas with perturbed potential $V_t$ to that of  the original Riesz gas.
It then appears that understanding the log Laplace transform of the fluctuations amounts to obtaining a fine expansion of the free energy $\log \ZNbeta(V)$ in terms of $V$. Since we cannot get a precise enough expansion for this, we will bypass this by combining a cruder expansion of the free energy,  with a second way of estimating the ratio of partition functions. That second way is via a transport, which is a good  choice of  change of variables. Computations reveal that a good choice is one that transports the original equilibrium measure $\muv$ into the equilibrium measure $\mu_{V_t}$ associated to the perturbed external potential. When working in the regime of small $t$, we do not need an exact transport map, it suffices to find a map which is a perturbation of identity and transports at leading order $\muv$ to $\mu_{V_t}$, i.e.~a map $\psi: \R^\d\to \R^\d$ such that 
\be \label{tr1o}
(I+t\psi)\# \muv= \mu_{V_t}+o(t),\ee
which, by linearization of the Monge-Amp\`ere equation for instance, amounts to solving 
\be \label{tr1o2}\div (\psi \muv)= \frac{\p}{\p t}\mu_{V_t}.\ee

In the bulk Coulomb case where the perturbation $\varphi$ is supported in the support of $\muv$ (call this the interior case), it is easy to check that the difference of the equilibrium measures is supported in $\Sigma$ and is then explicit in view of \eqref{densmuv} : it is 
\be \mu_{V_t}-\muv= \frac{t}{\cd} \Delta \varphi.\ee
This makes solving \eqref{tr1o2} immediate : it suffices to take $\psi= \frac{\nab \varphi}{\cd \muv}$.

Still in the Coulomb case, when $\varphi$ is not supported in $\Sigma$, this is no longer possible, as the support of the equilibrium measure itself varies, as studied in \cite{SS18}, and the difference in equilibrium measures involves the harmonic extension of $\varphi$. The resolution of \eqref{tr1o2} becomes much more delicate. In the present Riesz case, we always encounter that difficulty because the support of the equilibrium measure always changes, even in the interior  case.  To carry the method through, we need not only to find a solution $\psi$, but to ensure it is sufficiently regular, and have sharp estimates for its derivatives in terms of those of $\varphi$, in particular as $\varphi$ get supported on small scales. This is the core of the new analysis performed in Section \ref{sec: tport}, which is  of analysis and PDE nature.

Note that the one-dimensional log case is also a nonlocal case. In that setting, finding $\psi$ solving \eqref{tr1o2} turns out to be the same as the inversion of a so-called ``master operator'' (see for instance \cite{BG13,BG16,BL18} and references therein). This is an integral operator in one dimension, for which an explicit inversion formula is  available from classical analysis, see \cite{M92}, allowing to read off  the needed estimates.  This is the route that was taken in \cite{P24}. In higher dimension, or for $\s\neq 0$, these formulae are not available, and the inversion of the master operator should rather be seen as the solution of the appropriate PDE  
\eqref{Rtransport} which is a rather nonstandard PDE, for which we need to get sharp estimates.

An important difficulty  in the Riesz case is that, contrarily to the interior Coulomb case (where we could take $\psi=\frac{\nab \varphi}{\cd\muv}$),  the transport map $\psi$ is no longer   supported in the support of $\varphi$. When we study  fluctuations for test functions that live on a mesoscale, this prevents us from localizing the estimates to the support of $\varphi$. 
Instead, $\psi$ has decaying tails, and we need to estimate their decay speed away from $\supp \varphi$, and to estimate the
 contributions of the tails of the transport to the errors, which need to remain small in terms of the chosen mesoscale.

Once the transport has been found, a change of variables reduces the analysis of the ratio of partition functions to one crucial term corresponding to the expectation of quantities of the form 
\be \iint_{(\R^\d\times \R^\d) \backslash \triangle} 
(\psi(x)-\psi(y))^{\otimes n} : \nab^{\otimes n}  \g (x-y) d\Big( \sum_{i=1}^N \delta_{x_i}- N\mu\Big)(x)  \Big( \sum_{i=1}^N \delta_{x_i}- N\mu\Big)(y),\ee which correspond to the $n$-th variation of the energy $\F_N$ along the transport $(I+t\psi)$, as described for instance in \cite[Chap.~6]{S24}, and have a commutator structure. Such terms are the analogues of  the so-called ``loop equation terms'' in the case $\s=0$.
To control them, we will make crucial use of the sharp and localized commutator estimates recently proven in \cite{RS22} and recalled in Proposition \ref{pro:commutator}. These ensure that for all configurations,  these terms are all controlled by a constant (which depends on norms of derivatives of $\psi$) times (the localized version of) $\F_N + C N^{1+\frac\s\d}$.  The control of this quantity at first order ($n=1$), inserted into \eqref{expansion11} after the change of variables already allows to obtain the fluctuation bounds of Theorem \ref{FirstFluct}. To obtain Theorem \ref{CLT}, we need the commutator estimates up to order $n=p$, and the combination with another free energy expansion with a good enough rate.

\subsection{Rest  of the  proof and plan of the paper}
In addition to the transport problem, dealing with the nonlocal Riesz case involves other difficulties, already encountered in \cite{P24}. First, as introduced in \cite{PS17}, we use the Caffarelli-Silvestre extension to represent the fractional Laplacian as a {\it local} divergence-form operator with singular weight. This makes the electric formulation possible and  is described in Section \ref{sec: fluct prelim}. 
The next ingredient is the screening procedure which is performed in the extended space $\R^{\d+1}$, and is the crucial tool  to show the local laws, almost additivity and perform the bootstrap procedure. The need to control the energy on good ``heights'' in the extended dimension in order to perform the screening  requires to couple the bootstrap with the control of fluctuations.

The results are coupled and proved as follows:
\begin{itemize}
\item We  
 assume local laws down to scale $2\ell$.
 We show that this implies the control of fluctuations for test functions varying on scales $\ge 2\ell$ stated in Theorem \ref{FirstFluct}.
 \item
Still assuming local laws hold down to scale $2\ell$, we use the fluctuations control to get a control on the ``electric field'' at heights comparable to $\ell$. This allows to perform the screening, which then allows to show that local laws hold down to scale $\ell$.
\item One may then iterate down the scales to obtain the local laws at all scales $\ell\ge \rho_\beta N^{-1/\d}$, hence Theorem \ref{Local Law}.
\item This also implies that the fluctuations control of Theorem \ref{FirstFluct} holds at all scales $\ell\ge \rho_\beta N^{-1/\d}$.
\end{itemize}
Note that  this coupling of the two proofs was not needed in the Coulomb case where no control of the electric field in the extended dimension is needed for the screening.

As mentioned above, the proof of the CLT relies on the free expansion with a rate of \eqref{2eway0 intro}.
This in turn is done as in \cite{AS21,S22} by first obtaining a rate of convergence for the free energy in a cube with  uniform background  equilibrium measure, obtained by almost additivity of the free energy, then  using the almost additivity again to split the domain into regions where  $\muv$ is close to constant, and using the transport method to estimate the difference with the free energy with uniform background. This is done in Section \ref{sec: partition}.

The first part of the paper is devoted to proving the fluctuations bound stated in Theorem~\ref{FirstFluct} at scale $2\ell$, assuming the local law Theorem~\ref{Local Law} at scales $\ge 2\ell$. It starts with   Section \ref{sec: tport}, devoted to solving the transport map problem and showing good estimates for it. Section \ref{sec: fluct prelim} reviews the electric formulation and extends the transport method to the Riesz setting, and finally Section \ref{sec: fluct} concludes with the proof of  the fluctuations bound stated in Theorem~\ref{FirstFluct} at scale $2\ell$, assuming the local law Theorem~\ref{Local Law} at scales $\ge 2\ell$. Much of the argument for Theorems~\ref{FirstFluct} and \ref{CLT} can be understood without a technical understanding of the local laws bootstrap, and so we present the proof of these theorems first, for readers who are interested primarily in fluctuations of linear statistics. 
The second part of the paper, in particular Section~\ref{sec: mb},  is then devoted to proving that the local law, Theorem \ref{Local Law}, holds at scale $\ell$ if it holds at scale $\ge 2\ell$. In Section~\ref{sec: partition}, we obtain the almost additivity of the free energy as a consequence of the local law, and prove the free energy expansion with a rate via domain partitioning plus transport.
Section~\ref{sec: flap} gathers some independent estimates on fractional Laplacians that are needed, in particular in the construction of the transport map. Section~\ref{sec: pfscn} presents the proof of the screening procedure, adapted from \cite{PS17}. Finally the paper concludes with the appendix by Xavier Ros-Oton providing the fine behavior of solutions to the fractional obstacle problem.

\medskip

{\bf Acknowledgements:}
This work  was supported by NSF grant DMS-2247846  and by the Simons Foundation through the Simons Investigator program. The authors would also like to thank Susanna Terracini and Giorgio Tortone for helpful conversations regarding degenerate elliptic equations.

\section{Solving the transport problem }\label{sec: tport}

\subsection{Preliminary: reexpressing the Laplace transform}
Let us start with the rewriting of the Laplace transform of linear statistics, which will make the correct choice of transport appear.
We start by a generic change of variables $\Phi_t$ and will make a specific choice below.

\begin{lem}[Reexpressing the Laplace transform] \label{lem2.1}
Let $\varphi$ be a compactly supported measurable test function, and let $\Fluct_{\muv}(\varphi)$ be as in \eqref{def:fluct}. Set $\Vt:= V+t\varphi$.  Then, for any $t\in \R$ and any event $\mathcal{G}\subset\R^{\d N}$, we have
\begin{equation}\label{expansion11}
\Esp_{\PNbeta}\left[\exp\(-\beta t N^{1-\frac{\s}{\d}} \Fluct_{\muv}(\varphi)\)\indic_{\mathcal{G}}\right]=\exp\( t\beta N^{2-\frac{\s}{\d}}\int_{\R^\d} \varphi\,d\muv\)\frac{\ZNbeta^{\Vt, \mathcal{G}}}{\ZNbeta},
\end{equation}
with 
\begin{equation*}
\ZNbeta^{\Vt,\mathcal{G}}:=\int_{\mathcal{G} \subset (\R^\d)^N}\exp \left(-\beta N^{-\frac{\s}{\d}}\left(\sum_{i \ne j}\frac{1}{2}\g(x_i-x_j)+N\sum_{i=1}^N V_t(x_i)\right)\right)~d\XN.
\end{equation*} 
Furthermore,  we can expand
\begin{equation}\label{splitt}
\Esp_{\PNbeta}\left[\exp\(- \beta t N^{1-\frac\s\d} \Fluct_{\muv}(\varphi)\)\indic_{\mathcal{G}}\right]=e^{T_0}\Esp_{\PNbeta}\left[e^{T_1+T_2}\indic_{\mathcal{G}}\right],
\end{equation}
where
\begin{align}
\label{t0}
T_0:= &-\beta N^{2-\frac{\s}{\d}}\biggl(\frac{1}{2}\iint_{\R^\d\times \R^\d} (\g(\Phi_t(x)-\Phi_t(y))-\g(x-y))\, d\muv(x)d\muv(y)+\int (V_t\circ \Phi_t-V)\, d\muv\biggr) \\
\notag&+N \int_{\R^\d} (\log \det D\Phi_t)\, d\muv+\beta tN^{2-\frac\s\d} \int_{\R^\d} \varphi\,d\muv,\end{align}
\be
\label{t1} T_1: =-\beta N^{1-\frac{\s}{\d}}  \int_{\R^\d}\(\int_{ \R^\d}\g(\Phi_t(x)-\Phi_t(y))-\g(x-y))\, d\muv(y)+(V_t \circ \Phi_t -V)(x)\)d\fluct_{\muv}(x),\ee 
\begin{align}
\label{t2} T_2:= &-\frac{\beta N^{-\frac{\s}{\d}}}{2}\iint_{\triangle^c}(\g(\Phi_t(x)-\Phi_t(y))-\g(x-y))\,d\fluct_{\muv}(y)d\fluct_{\muv}(x)\\ \notag &
+\int_{\R^\d}\log \det D\Phi_t\,d\fluct_{\muv}(x)\\
\notag  =&-\beta N^{-\frac\s\d}\(\F_N(\Phi_t(\XN),\Phi_t\#\muv)-\F_N(\XN, \muv)\)+\Fluct_{\muv}( \log \det D\Phi_t) 
\end{align}
where we have let 
\begin{equation}
\label{defpfluct}
\fluct_{\mu}:= \sum_{i=1}^N \delta_{x_i}- N \mu.\end{equation}
\end{lem}

\begin{proof}
The relation \eqref{expansion11} is immediate from spelling out the definition of $\Fluct_{\muv}(\varphi)$.
We  then perform the  change of variables $y_i=\Phi_t(x_i)$ in the integral defining $\ZNbeta^{V_t,\mathcal{G}}$  to obtain 
\begin{align*}
& \Esp_{\PNbeta}\left[\exp\(- \beta t N^{1-\frac\s\d} \Fluct_{\muv}(\varphi)\)\indic_{\mathcal{G}}\right] \exp\(-\beta  t N^{2-\frac\s\d} \int_{\R^\d}\varphi\, d\muv\)\\ &=\frac{1}{\ZNbeta}
 \int_{\mathcal{G}} \exp \biggl(-\beta N^{-\frac\s\d} \Big(\frac{1}{2}\sum_{i \ne j}\g(\Phi_t(x_i)-\Phi_t(x_j))+N\sum_{i=1}^NV_t(\Phi_t(x_i))\Big)+\sum_{i=1}^N \log \det D\Phi_t(x_i)\biggr)d\XN \\
&= \Esp_{\PNbeta} \biggl[\exp \biggl(-\beta N^{-\frac\s\d}\biggl( \frac{1}{2}\sum_{i \ne j}(\g(\Phi_t(x_i)-\Phi_t(x_j))-\g(x_i-x_j))+ N \sum_{i=1}^N (V_t\circ \Phi_t-V)(x_i)\biggr)\\\ &\qquad \qquad+\sum_{i=1}^N \log \det D\Phi_t(x_i)\biggr)\indic_{\mathcal{G}}\biggr].
\end{align*}
Expanding $\sum_{i=1}^N \delta_{x_i}$ as $N\muv+ \fluct_{\muv}$ and collecting terms, we find the result. The relation \eqref{t2} follows  by the  definition \eqref{defFN}.

%
%
\end{proof}

\subsection{Choice of transport and estimates}\label{subsec: transport}
The choice of transport is made so that $\Phi_t$ is a perturbation of identity of the form $\Phi_t=\id + t\psi$, chosen so that, at leading order in $t\to 0$, the $T_1$ term 
in the expansion above vanishes, leaving only the constant term $T_0$ to compute, and the subleading order term $T_2$ to analyze.

Plugging in $\Phi_t= \id + t\psi$ and linearizing the $T_1$ term in $t$, we find that $\psi$ must be chosen so that 
\be \int_{\R^\d}\(\int_{\R^\d}  \nab \g(x-y)\cdot (\psi(x)-\psi(y)) d\muv(y)+ \varphi(x)+ \nab V (x)\cdot \psi(x) \)d\fluct_{\muv}(x).\ee
We will choose $\psi$ such that the term in factor of $\fluct_{\muv}$ vanishes identically. The reader can recognize the condition $\Xi(\psi)=\varphi$ for a certain ``master operator'' $\Xi$ that needs to be inverted (as in \cite{BG13} and references therein for the one-dimensional log gas).  

This can be rewritten as 
\be \label{2.7}(\nab (\g * \muv) + \nab V) \cdot \psi- \g* (\div (\psi \muv))+\varphi=0 \quad \text{in} \ \R^\d.\ee
Using  the notation $h^f$ for the potential $\g*f$ generated by $f$, and recalling the definition of $\zeta_V$ in \eqref{Riesz effective potential}, this becomes
\be\label{lowtemp}
\psi \cdot \nab \zeta_V- h^{\div(\psi \muv) }+\varphi=0\quad \text{in} \ \R^\d.\ee
This is the generalization to arbitrary dimension and interaction potential of the master operator inversion.

Let us now explain how to  solve \eqref{lowtemp}. Since $\muv$ vanishes outside $\Sigma$, the function $h^{\div(\psi \muv)}$ is $\alpha$-harmonic there, hence the only possibility to solve \eqref{lowtemp} is to make $\psi \cdot \nab \zeta_V+ \varphi$ be $\alpha$-harmonic outside of $\Sigma$. Since $\zeta_V$ vanishes in $\Sigma$, that function equals $\varphi$ in $\Sigma$. Thus, the only possibility to solve it continuously is to have $\psi \cdot \nab \zeta_V=\varphi^\Sigma-\varphi$.
In other words we need to find $\psi$ such that, in $\R^\d$, there holds
\be \label{eqars} 
\begin{cases}
h^{\div (\psi\muv)} =\varphi^\Sigma \\
\psi \cdot \nab \zeta_V=\varphi^\Sigma-\varphi 
\end{cases}\ee
or equivalently
\be \label{eqars2} 
\begin{cases}
\div (\psi\muv) =\frac1\cds(-\Delta)^\alpha \varphi^\Sigma \\
\psi \cdot \nab \zeta_V=\varphi^\Sigma-\varphi.
\end{cases}\ee
It is sufficient to solve \eqref{eqars2} in order to obtain \eqref{eqars} since convolution with $\g$ uniquely inverts $(-\Delta)^\alpha$ for $L^1$ functions, cf \cite[Theorem 2.4]{K19}. By Lemma \ref{def aharmext}, $\varphi^\Sigma \in \dot{H}^\alpha$ which embeds into $L^1$ by \cite[Theorem 6.5]{DPV12}. $\div(\psi \muv)$ is continuous in $\Sigma$, compactly supported and has an integrable singularity $(\sim (\dist(,x\partial \Sigma))^{-\alpha})$ at $\partial \Sigma$, so it is also in $L^1$.

Since $\alpha<1$, by \eqref{eq reg} $\muv$ vanishes at the boundary of $\Sigma$ and, if $\psi$ is regular enough,  $\psi \muv$ is continuous across $\pa \Sigma$ and thus 
\be\label{divpsimu0}\div (\psi \muv)= \div(\psi \muv) \indic_\Sigma,\ee
which means that the first equation needs to be solved in each connected component of $\Sigma$ only. While we only prove Theorems \ref{Local Law}-\ref{CLT} in the case where $\Sigma$ is connected, the results we state in this section remain true in the case where $\Sigma$ is a disjoint union of finitely many connected components. In that case, the solvability in each connected component is ensured by the condition \eqref{assconncomp}. 

Care is needed for the boundary condition because of the blow up of $(-\Delta )^\alpha \varphi^\Sigma$ as $w \, \dist (\cdot, \pa \Sigma)^{-\alpha}$ on the boundary as in \eqref{assumpw}.
Because $\psi \muv$ behaves like $\psi s \dist(\cdot, \pa \Sigma)^{1-\alpha}$ as in \eqref{eq reg}, whose normal derivative along the boundary blows up like $\psi s(1-\alpha) \dist(\cdot, \pa \Sigma)^{-\alpha}$, equating the two blowing-up rates leads to imposing $\psi \cdot \vec{n} s (1-\alpha) = w$, thus we are led to solving
\begin{equation}\label{Rtransport0}
\begin{cases}
\div(\psi \muv)=\frac{1}{\c}(-\Delta)^\a \varphi^{\Sigma} & \text{in }\Sigma\\
\psi \cdot \vec n =\frac{w}{(1-\alpha)s} & \text{on }\partial \Sigma \\
\psi \cdot \nab \zeta_V=\varphi^\Sigma-\varphi & \text{in }   \Sigma^c.
\end{cases}\ee
We will next see that this is solvable, and that the last two relations are compatible at $\pa \Sigma$. 

Before proving the solvability with estimates, let us mention a word about the Coulomb case in which $\alpha=1$. In that case $\muv$ is generically discontinuous across the boundary of $\Sigma$ and $\div (\psi \muv)$ has a singular part on the boundary. The equation to solve is then 
\begin{equation}\label{Rtransportcoul}
\begin{cases}
\div(\psi \muv)=- \frac{1}{\cd}\Delta \varphi & \text{in }\Sigma\\
\psi \cdot \vec n =\frac{1}{\cd \muv} [\nab \varphi^\Sigma]\cdot \vec{n}  & \text{on }\partial \Sigma \\
\psi \cdot \nab \zeta_V=\varphi^\Sigma-\varphi & \text{in} \  \Sigma^c.
\end{cases}\ee
We refer to \cite[Lemma 3.4]{LS18} where this problem is formally derived and solved in any dimension.

\begin{prop}[Good transport map and estimates]\label{transport}Let $U$ be an open neighborhood of $\Sigma$. Suppose that $\muv$ satisfies \eqref{itemeqreg} for $s(x)\in C^{k+\epsilon}(\Sigma)$ for some $\epsilon \in (0,1)$, and that $\partial \Sigma$ is $C^{k+1+\epsilon}$.
Assume $V \in C^{k+1}(\R^{\d+1})$, $\varphi\in C^{k+1}_c(\R^\d) $,  that  $\ell\le 1$ and  $\supp \varphi \subset \carr_\ell (z)\subset\carr_{2\ell}(z)\subset \bulk$, with the estimates 
\be\label{estxi}
\forall \sigma \le k+1, \quad \|\varphi\|_{C^\sigma}\le \M \ell^{-\sigma}\ee
for some constant $\M>0$.
 Assume that   $\varphi$ satisfies  \eqref{assumpw} for some $w(x)\in C^{k+\epsilon}(U)$. 
Finally, assume \eqref{assconncomp}.

Then, there exists a map $\psi$ defined in $\R^\d$, continuous in $\R^\d$, and a map $\psi^\perp$  defined in $\R^\d\backslash \Sigma$, vanishing in $\R^\d\backslash U$ and continous in $\R^\d\backslash \Sigma$, perpendicular to $\nab \zeta_V$, and  such that 
\begin{equation}\label{Rtransport}
\begin{cases}
\div(\psi \muv)=\frac{1}{\c}(-\Delta)^\a \varphi^{\Sigma} & \text{in }\Sigma\\
\psi \cdot \vec n =\frac{w}{(1-\alpha)s} & \text{on }\partial \Sigma \\
\psi=(\varphi^{\Sigma}-\varphi)\frac{\nabla \zeta_V}{|\nabla \zeta_V|^2}+\psi^\perp & \text{in }\Sigma^c,
\end{cases}
\end{equation}  and $\psi$ satisfies  \eqref{lowtemp} in $\R^\d$. Moreover, 
$\psi \in C^{k}(\Sigma) \cap C^{k}(\Sigma^c) \cap C(U)$. 

Furthermore, for any $m\leq k-3$, we have for any $x \notin \partial \Sigma$, 
\begin{equation}\label{tscale}
\left|\nab^{\otimes m} \psi(x)\right|\lesssim \M \begin{cases}
\frac{1}{\ell^{\d-\s+{m-1}}} & \text{if }x \in \carr_{2\ell}(z) \\
\frac{\ell^\d}{|x-z|^{2\d-\s+{m-1}}} & \text{if }x \in U\backslash \carr_{2\ell}(z)\\
\frac{\ell^{\d}}{|x-z|^{\s+2+m}}& \text{if } x\in U^c.\end{cases}
\end{equation}
\end{prop}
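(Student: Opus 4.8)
The plan is to solve \eqref{Rtransport} by decoupling the three regimes — inside $\Sigma$, outside $\Sigma$ but near it, and far from $\supp\varphi$ — and then patching. First I would construct $\varphi^\Sigma$ and record its behavior: by Lemma~\ref{def aharmext} and Lemma~\ref{lem1} (plus \eqref{assumpw}), $\varphi^\Sigma\in\dot H^\alpha$, it is $C^k$ up to $\partial\Sigma$ from inside and outside, and $(-\Delta)^\alpha\varphi^\Sigma$ blows up like $w(x)\dist(x,\partial\Sigma)^{-\alpha}$ with $w\in C^{k+\epsilon}(U)$. In $\Sigma^c$ I simply \emph{define} $\psi$ by the third line of \eqref{Rtransport}, choosing $\psi^\perp\equiv 0$ for now (a nonzero $\psi^\perp$ supported in $U$ is only needed to smoothly cut the tail off outside $U$, which I address at the end); since $\zeta_V\ge c\,\dist(x,\partial\Sigma)^{1+\alpha}$ with $c>0$ near $\partial\Sigma$ by \eqref{itemnondeg} and $|\nabla\zeta_V|$ is correspondingly bounded below near $\partial\Sigma$, and since $\varphi^\Sigma-\varphi$ vanishes in $\Sigma$ and is $C^k$, the map $(\varphi^\Sigma-\varphi)\nabla\zeta_V/|\nabla\zeta_V|^2$ is $C^k$ in $\Sigma^c\cap U$ away from $\partial\Sigma$ and has a controlled boundary trace; this is where I verify the compatibility of the second and third lines of \eqref{Rtransport} at $\partial\Sigma$ (the $\dist^{-\alpha}$-rates match exactly because of the algebraic identity $\psi\cdot\vec n\, s(1-\alpha)=w$, which is how the Neumann datum was chosen).

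The heart of the matter is the interior problem: find $\psi$ on $\Sigma$ with $\div(\psi\muv)=\c^{-1}(-\Delta)^\alpha\varphi^\Sigma$ and the prescribed normal trace on $\partial\Sigma$. Writing $\muv=s(x)\dist(x,\partial\Sigma)^{1-\alpha}$ as in \eqref{eq reg}, I would look for $\psi=\nabla u/\muv$ for a scalar $u$, reducing to the degenerate elliptic equation $\div\!\big(\muv^{-1}\nabla u\big)=\c^{-1}(-\Delta)^\alpha\varphi^\Sigma$ in $\Sigma$ with Neumann condition $\muv^{-1}\partial_n u = w/((1-\alpha)s)$ on $\partial\Sigma$. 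Equivalently, multiplying through, this is a divergence-form equation with weight $\muv^{-1}\sim \dist^{\alpha-1}$ degenerating at the boundary; since $\alpha-1\in(-1,0)$ this weight is in the Muckenhoupt class $A_2$, so the weighted elliptic theory (Fabes–Kenig–Serapioni, and the boundary-regularity results the authors thank Terracini–Tortone for) applies: a unique (up to additive constant) solution exists in the weighted Sobolev space, provided the compatibility condition $\int_\Sigma \c^{-1}(-\Delta)^\alpha\varphi^\Sigma = \int_{\partial\Sigma} \muv\,\psi\cdot\vec n$ holds — and that is exactly the content of \eqref{assconncomp} component by component, after checking the surface integral equals $\int_{\partial\Sigma} w/((1-\alpha)) \,d\mathcal H^{\d-1}$ cancels correctly with the distributional identity relating $\int_\Sigma(-\Delta)^\alpha\varphi^\Sigma$ to the boundary flux. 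I expect this compatibility bookkeeping and the weighted-regularity bootstrap (upgrading from a weak solution to $\psi\in C^k(\Sigma)$, using $s\in C^{k+\epsilon}$, $\partial\Sigma\in C^{k+1+\epsilon}$, and the $C^{k+\epsilon}$ boundary blow-up coefficient $w$) to be the main obstacle — the degeneracy means one cannot quote off-the-shelf Schauder theory and must track how the $\dist^{\alpha-1}$ weight interacts with Hölder norms near $\partial\Sigma$, essentially via the change of variables that straightens the boundary and the substitution $t=\dist^{\alpha}$ converting the degenerate operator into a uniformly elliptic one with Hölder coefficients.

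Once $\psi$ is constructed in $\Sigma$ and in $\Sigma^c\cap U$ with matching traces on $\partial\Sigma$ (continuity of $\psi\muv$ across $\partial\Sigma$ following from \eqref{divpsimu0} since $\alpha<1$ forces $\muv\to 0$), I would glue the two pieces and check $\psi\in C^k(\Sigma)\cap C^k(\Sigma^c)\cap C(U)$; then verify that $\psi$ solves \eqref{lowtemp} globally — inside $\Sigma$ this is the first equation of \eqref{eqars2} after applying $\g*(\cdot)$, outside it is the algebraic identity forced by $\alpha$-harmonicity of $h^{\div(\psi\muv)}$ in $\Sigma^c$, as explained in the discussion preceding \eqref{eqars}. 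Finally, the decay estimates \eqref{tscale}: inside $\carr_{2\ell}(z)$ they follow from the interior estimate $\|\varphi\|_{C^\sigma}\le\M\ell^{-\sigma}$ propagated through the (scale-invariant) weighted elliptic solve; away from $\carr_{2\ell}(z)$, the source $(-\Delta)^\alpha\varphi^\Sigma$ decays like $\ell^\d|x-z|^{-\d-2\alpha}=\ell^\d|x-z|^{-2\d+\s}$ (this is the standard far-field decay of the fractional Laplacian of a localized function, using $\int(-\Delta)^\alpha\varphi^\Sigma=0$ on each component to kill the leading multipole), and I would feed this into the representation $\psi=\nabla(\g*\div(\psi\muv))\cdot(\ldots)$ — more precisely use that $h^{\div(\psi\muv)}=\varphi^\Sigma$ together with the Poisson-kernel decay of the $\alpha$-harmonic extension — to get the second line of \eqref{tscale}; in $U^c$ the formula $\psi=(\varphi^\Sigma-\varphi)\nabla\zeta_V/|\nabla\zeta_V|^2$ plus the polynomial growth \eqref{itemgrowthV} of $\nabla V$ (hence of $\nabla\zeta_V$) against the decay $|\varphi^\Sigma(x)|\lesssim \ell^\d|x-z|^{-\d-2\alpha}$ of the harmonic extension gives the stated $\ell^\d|x-z|^{-\s-2-m}$ rate, and $\psi^\perp$ is chosen as a compactly-supported correction (vanishing outside $U$, perpendicular to $\nabla\zeta_V$) that does not affect $\div(\psi\muv)$ and whose derivatives obey the same bounds. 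Differentiating these relations $m\le k-3$ times and using the already-established $C^k$ regularity of $\psi$ yields \eqref{tscale}; the loss of three derivatives is the expected cost of converting the PDE estimate into a pointwise derivative bound near $\partial\Sigma$.
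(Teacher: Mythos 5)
Your interior ansatz is inconsistent with the equation you claim it produces, and this is the crux of the argument. You propose $\psi=\nabla u/\mu_V$ so that $\div(\psi\mu_V)=\div(\nabla u)=\Delta u$; but you then write the resulting PDE as $\div(\mu_V^{-1}\nabla u)=\c^{-1}(-\Delta)^\alpha\varphi^\Sigma$ and apply weighted $A_2$ theory with weight $\mu_V^{-1}\sim\dist^{\alpha-1}$. These two do not match: with your ansatz the bulk operator is just the Laplacian and all the degeneracy migrates into the Neumann datum $\mu_V^{-1}\partial_n u=w/((1-\alpha)s)$, which now forces $\partial_n u\to 0$ at rate $\dist^{1-\alpha}$ and does not sit inside the Fabes--Kenig--Serapioni framework. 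The paper's ansatz is the simpler $\psi=\nabla u$ (no division by $\mu_V$), which yields the degenerate divergence-form equation $-\div(\mu_V\nabla u)=\c^{-1}(-\Delta)^\alpha\varphi^\Sigma$ with the $A_2$ weight $\mu_V\sim\dist^{1-\alpha}$ appearing in the bulk, a finite Neumann datum $\partial_n u=w/((1-\alpha)s)$, and the singular right-hand side $\sim\dist^{-\alpha}$ absorbed by rewriting it as $\div(\mu_V G)$ for a regular $G$ after straightening the boundary (this is precisely what makes the Riesz-representation and TTV22 boundary-regularity arguments applicable). Your ``substitution $t=\dist^\alpha$'' heuristic is not how the paper proceeds either; the regularity is pulled from the weighted Schauder estimates of TTV22 and the $\alpha$-harmonic boundary Harnack of ARo20, both quoted explicitly.

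A second, smaller gap: you set $\psi^\perp\equiv0$ in a first pass and say it ``is only needed to smoothly cut the tail off outside $U$.'' That misses its actual role. Matching the normal component $\psi\cdot\vec n$ across $\partial\Sigma$ is the computation you describe, but the tangential trace of $\psi$ from inside $\Sigma$ is generically nonzero and independent of the outer formula $(\varphi^\Sigma-\varphi)\nabla\zeta_V/|\nabla\zeta_V|^2$, which is purely normal near $\partial\Sigma$. The paper builds $\psi^\perp$ precisely by extending the tangential trace from $\partial\Sigma$ and projecting off $\nabla\zeta_V$; without it, $\psi$ is discontinuous across $\partial\Sigma$ and the construction fails. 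The far-field decay reasoning in $U^c$ is fine in outline, but note that the key input is $|\varphi^\Sigma(x)|\lesssim\M\ell^\d|x-z|^{-\s-2}$ plus the lower bound on $|\nabla V|$, rather than a raw multipole expansion of $(-\Delta)^\alpha\varphi^\Sigma$.
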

 Notice that in the one-dimensional log case $\s=0$, this scaling agrees with that of  the equivalent formulation of that transport in \cite[Lemma 4.10]{P24} obtained there by exact formulas for the inversion of the master operator. We have been careful to assume that $s,w \in C^{k+\epsilon}$ and $\partial \Sigma$ is $C^{k+1+\epsilon}$ since some of the fractional and degenerate regularity results we seek to apply, namely \cite[Theorem 1.1]{TTV22} and \cite[Theorem 1.3]{ARo20}, do not necessarily hold for integer H\"older classes.

Before moving on to the proof of Proposition \ref{transport}, let us recall more about the solutions to the fractional obstacle problem.

\subsection{Preliminaries on the fractional obstacle problem}\label{subsec: prelimfluct}

Theorem \ref{FB-thm} in the appendix recalls the  result  from the work of \cite{CSS08}, \cite{CDS17} and \cite{ROS17}, which classifies the free boundary points as regular or degenerate (or singular) points.

 In our notation, regular points are those for which 
 \begin{equation}\label{regular point}
\zeta_V(x)=c(x_0)\dist(x,\partial \Sigma)^{1+\alpha}(x)+O(|x-x_0|^{1+\alpha+a})
\end{equation}
for some $c(x_0)>0$ and $a>0$ with $\alpha+a<1$.

If we make an additional assumption on the obstacle, then we can guarantee that all points are regular. Namely, this holds if the free boundary is Lipschitz, $\psi \in C^{2+\gamma}$ for some $\gamma>\alpha$ and if $\Delta \psi<0$ on $\{\psi=c_V-V>0\}$; this is Proposition \ref{quantitative-regular} in the appendix. As we will see below, we actually need a slightly stronger condition on the free boundary; however, as long as \eqref{itemfbLip} and \eqref{itemposLap} hold, we can conclude that all points of the free boundary are regular. With the additional regularity of $V$, X. Ros Oton proves in Proposition \ref{prop1} in the appendix a quantified decay rate of $\muv$ near the free boundary: as $x \rightarrow x_0\in \partial \Sigma_V$ with $x_0$ a regular free boundary point then we have in our notation
\begin{equation}\label{decay of equilibrium measure}
\muv(x) \sim \dist(x,\partial \Sigma_V)^{1-\alpha},
\end{equation}
justifying the assumption \eqref{eq reg}.

Let $U \supset \Sigma$ be an open set such that \eqref{regular point} holds in $U \setminus \Sigma_V$. In order for us to define the transport map with $C^\sigma$ regularity for a general $0<\sigma<1$, we will need the free boundary to have $C^{1+\sigma}$ regularity. As a result of \cite[Theorem 1.1]{JN17}, we have $C^{2+\gamma}$ regularity at regular points for some $\gamma>0$ dependent on $\sigma$ 
(and thus, in particular, $C^{1+\sigma}$) once $V \in C^{4}$; hence, for standard existence of a general H\"older continuous transport, we will need to assume $V\in C^4$. 
Higher regularity can be obtained with higher regularity assumptions on $V$ and in fact if $V$ is smooth, the free boundary will be smooth as well. This result was established independently as well in \cite[Theorem 1.1]{KRS19}. The regularity assumptions \eqref{itemeqreg}-\eqref{itemgrowthV} are the additional higher regularity assumptions that we will need to recover the $C^k$ regularity estimates on $\psi$ given in Proposition \ref{transport}.

\subsection{Proof of Proposition \ref{transport}}
We now give the proof of our main analytic result, on the existence of a good transport map with estimates on its derivatives. We will use some auxiliary results on fractional harmonic extensions proved in Section \ref{sec: flap}.

First, we observe that under our assumptions $\muv$ is an $A_2$-Muckenhoupt weight on $\Sigma$. Since $\muv$ is bounded below away from $\partial \Sigma$, we only need to verify this for small balls near the boundary. We use the assumption that all points of the free boundary are regular, and the quantitative form \eqref{itemeqreg} that we assume.
 Thus, for $\epsilon>0$ small we find a uniform bound 
\begin{equation*}
\dashint_{B_\epsilon}\muv\dashint_{B_\epsilon}\muv^{-1} \lesssim \dashint_{B_\epsilon}|x_1|^{1-\alpha}\dashint_{B_\epsilon}|x_1|^{\alpha-1}\lesssim \epsilon^{1-\alpha}\epsilon^{\alpha-1}\lesssim 1,
\end{equation*}
establishing the required uniform bound, by definition an $A_2$ weight. We now seek to establish the existence of weak energy solutions in $\Sigma$ of 
\be \begin{cases}
-\div (\muv\nab u)= \frac1\c(-\Delta)^\alpha \varphi^\Sigma& \text{in} \ \Sigma\\
\nabla u \cdot \vec{n}= \frac{w}{(1-\alpha)s}&\text{on} \ \Sigma\end{cases}
\ee
in the Hilbert space $H_{\muv}^1$. This is the Hilbert space (see \cite[Theorem 1]{GU09}) of functions $u$ such that 
\begin{equation*}
\int_{\Sigma}u^2 \muv+\int_{\Sigma}|\nabla u|^2\muv<\infty
\end{equation*}
equipped with inner product 
\begin{equation}\label{inner product}
\langle u, v \rangle=\int_{\Sigma}uv\muv+\int_{\Sigma}\nabla u \cdot \nabla v \muv.
\end{equation}
These weak solutions are ones that satisfy
\begin{equation}\label{weak solution}
\int_{\Sigma} \nabla u \cdot \nabla \phi \muv=\frac{1}{\cds}\int_{\Sigma} (-\Delta)^\alpha\varphi^{\Sigma} \phi
\end{equation}
for all $\phi \in H_{\muv}^1$.

\subsection*{Step 1: Existence of weak energy solutions}
Let us first restrict to the subspace $H$ of all functions $u$ with $\int u \muv=0$, which is a closed subspace since $\int u \muv$ is a bounded linear functional on $H_{\muv}^1$ as
\begin{equation*}
\left|\int u\muv \right|\leq \int |u|\sqrt{\muv}\sqrt{\muv} \leq \left(\int u^2 \muv\right)^{1/2}\left( \int \muv \right)^{1/2}\leq \|u\|_{H^1_{\muv}}.
\end{equation*}
Notice that on $H$, the inner product 
\begin{equation}\label{new inner product}
\langle u, v \rangle_1:=\int \nabla u \cdot \nabla v \muv
\end{equation}
induces a norm equivalent to the one defined in \eqref{inner product} because of the weighted Poincar\'e inequality for $A_2$-weights in \cite[Theorem 1.5]{FKS82}. Letting $\overline{u}=\int u \muv=\fint u \muv$ 
\begin{equation*}
\langle u, u\rangle_1 \leq \langle u, u \rangle=\langle u, u \rangle_1+\int u^2 \muv=\langle u, u \rangle_1+\int (u-\overline{u})^2\muv\leq C\langle u, u \rangle_1
\end{equation*}
since $\overline{u}=0$ in $H$.  We will conclude as usual via Riesz representation theorem once we establish that $\phi \mapsto \int (-\Delta)^\alpha \varphi^\Sigma \phi$ is a bounded linear functional on $H$; this, however, presents some difficulties because $(-\Delta)^\alpha \varphi^\Sigma$ blows up like $\dist(x,\partial \Sigma)^{-\alpha}$ as $x \rightarrow \partial \Sigma$. To get around this, we follow the approach of \cite[Theorem 2.10]{TTV22} and show that $(-\Delta)^\alpha \varphi^\Sigma$ is at least locally the divergence of a regular function.

\subsubsection*{Straightening the Boundary} 
Let $k$ be fixed as in the assumptions of Proposition \ref{transport}. 
Choose $x_0 \in \partial \Sigma$, and without loss of generality set $x_0=0$. Locally then, in some open neighborhood $\mathcal{O}$ we can write $x_n=\phi(x_1,\dots,x_{n-1})$ with $\phi\in C^{1+k}(\mathcal{O}\cap \Sigma)$. 
Define the diffeomorphism
\begin{equation}\label{straigtening}
\Phi(x_1,\dots,x_{n-1},x_n)=(x_1,\dots,x_{n-1},x_n+\phi(x))=(\overline{x_1},\dots,\overline{x_n})
\end{equation}
which maps $B_R^+= B_R\cap \{x_n>0\} $ to $ \mathcal{O} \cap \{\overline{x_n}>0\}$ for some $R>0$. We can assume $\Phi(0)=\Phi^{-1}(0)=0$, and observe also that $\Phi(B_R \cap \{x_n =0\})\subset \mathcal{O}\cap \partial \Sigma$. The Jacobian of this map is then 
\begin{equation}\label{Jacobian}
(J_\Phi(x))_{ij}=\begin{cases} 
1 & \text{if }i=j \\
\frac{\partial \phi}{\partial x_i} & \text{if }i=n, \: 1 \leq j \leq n-1 \\
0 & \text{otherwise}
\end{cases}
\end{equation}
and satisfies $|\det(J_\Phi)|\equiv 1$. One can readily compute (albeit with some tedious multivariable calculus) that $-\div(\muv \nabla u)=\frac{1}{\cds}(-\Delta)^\alpha \varphi^\Sigma$ in $\mathcal{O} \cap \Sigma$ if and only if 
\begin{equation}\label{chart equation}
-\div(\overline{\muv}\overline{A}\nabla \overline{u})=\overline{f}
\end{equation}
in $B_R \cap \{x_n>0\}$, with $\overline{\muv}=\muv \circ \Phi$, $\overline{u}=u \circ \Phi$, $\overline{A}=J_{\Phi}^{-1}(J_{\Phi}^{-1})^t$ and $\overline{f}=\frac{1}{\cds}(-\Delta)^\alpha \varphi^\Sigma \circ \Phi$. Now, it follows (cf \cite[Lemma 2.4]{TTV22}, as in the proof of \cite[Corollary 2.9]{TTV22}) from our regularity assumptions on the boundary that 
\begin{equation*}
\frac{\dist(x,\partial \Sigma)\circ \Phi}{x_n} \in C^{k}(B_R^+) \hspace{3mm} \text{and}\hspace{3mm}\frac{\dist(x,\partial \Sigma)\circ \Phi}{x_n} \geq c>0
\end{equation*}
for some constant $c$. In particular, in view of \eqref{eq reg}, shrinking $R$ if necessary, we have
\begin{equation}\label{easy weight}
-\div(x_n^{1-\alpha}\tilde{A}\nabla \overline{u})=\overline{f}
\end{equation}
with $\tilde{A}=(s \circ \Phi)\frac{\dist(\Phi(x),\partial \Sigma)^{1-\alpha}}{x_n^{1-\alpha}}\overline{A} \in C^{k}(B_R^+)$. Additionally, using \eqref{assumpw}, we can write
\begin{equation*}
\overline{f}=(w \circ \Phi)\dist(\Phi(x),\partial \Sigma)^{1-\alpha}:=\overline{w}x_n^{-\alpha},
\end{equation*}
with $\overline{w}\in C^{k}(B_R^+)$ and bounded from below. Then, if we define 
\begin{equation}\label{vec field}
F=\( 0, 0, \dots, 0, \frac{1}{x_n^{1-\alpha}}\int_0^{x_n}\overline{w}(x_1,\dots,x_{n-1},t)t^{-\alpha}~dt\)
\end{equation}
we have $F \in C^{k}(B_R^+)$ with $\|F\|_{C^{k}(B_R^+)} \lesssim \|\overline{w}\|_{C^k(B_R^+)} \lesssim \|w\|_{C^{k}(B_R^+)}$ as in \cite[Lemma 2.4]{TTV22} and the arguments therein. Furthermore, we can rewrite \eqref{easy weight} as
\begin{equation*}
-\div(x_n^{1-\alpha}\tilde{A}\nabla \overline{u})=f=\div(x_n^{1-\alpha}F).
\end{equation*}
Reverting to our weight $\overline{\muv}$ and defining $\tilde{F}=\frac{x_n^{1-\alpha}}{\overline{\muv}}F \in C^{k}(B_R^+)$, we find 
\begin{equation}\label{div in chart}
-\div(\overline{\muv}\overline{A}\nabla \overline{u})=\div(\overline{\muv}\tilde{F})
\end{equation}
in $B_R^+$. Finally, if we define $G=J_\Phi(\tilde{F}\circ \Phi^{-1})$, we have $\|G\|_{C^k(\mathcal O)} \lesssim \|w\|_{C^k}$ and  $\tilde{F}=J_\Phi^{-1}(G \circ \Phi)$ and one can readily compute that 
\begin{equation*}
-\div(\overline{\muv}\overline{A}\nabla \overline{u})=\div(\overline{\muv}J_\Phi^{-1}(G \circ \Phi))
\end{equation*}
in $B_R^+$ if and only if 
\begin{equation}\label{div form}
-\div(\muv \nabla u)=\div(\muv G)
\end{equation}
in $\mathcal{O} \cap \Sigma$.

\subsubsection*{Bounded Linear Functional}
We use the divergence form above to show that $\int (-\Delta)^\alpha\varphi^\Sigma \phi$ is a bounded linear functional on $H$. Let $\{\mathcal{O}_i\}$ denote a finite collection of charts covering a small neighborhood $E$ of $\partial \Sigma$, and let $\{\psi_i\}$ denote a partition of unity on $E$ subject to this collection of charts. Using the previous substep, we can write
\begin{equation*}
\frac{1}{\cds}(-\Delta)^\alpha \varphi^\Sigma=\sum_i\psi_i\div(\muv G_i)
\end{equation*}
for some $G_i$ with $\|G_i\|_{C^{k}(\mathcal{O}_i)} \lesssim \|w\|_{C^{k}}$, and so 
\begin{align*}
\int_E \frac{1}{\cds} \phi (-\Delta)^\alpha \varphi^\Sigma =\sum_i \int_{\mathcal{O}_i} \frac{1}{\cds}\phi_i (-\Delta)^\alpha \varphi^\Sigma \psi_i &=\sum_i \int_{\mathcal{O}_i}\phi_i \div(\muv G_i) \psi_i  \\
&=-\sum_i \int_{\mathcal{O}_i}\muv G_i \cdot \nabla (\psi_i \phi) \\
&=-\sum_i \int_{\mathcal{O}_i}\sqrt{\muv}G_i \cdot \left(\sqrt{\muv}\psi_i \nabla \phi+\sqrt{\muv}\phi\nabla \psi_i  \right).
\end{align*}
By Cauchy-Schwarz and the boundedness of $\psi_i$, we find 
\begin{equation}\label{csb}
\left|\int_E \frac{1}{\cds}\phi(-\Delta)^\alpha \varphi^\Sigma \right|\lesssim \sum_i \( \int \muv |G_i|^2 \)^{1/2}\(\int_{\mathcal{O}_i}\muv |\nabla \phi|^2+\muv \phi^2 \)^{1/2} \lesssim \|\phi\|_H.
\end{equation}
We can similarly bound using Cauchy-Schwarz on $\Sigma \setminus E$ since $(-\Delta)^\alpha \varphi^\Sigma$ and $\muv$ are bounded above and below there to conclude that $\phi \mapsto \int \phi (-\Delta)^\alpha\varphi^\Sigma $ is a bounded linear functional on $H$.

\subsubsection*{Existence}
The Riesz Representation theorem guarantees a unique $u \in H$ such that \eqref{weak solution} holds for all $\phi \in H$. To extend \eqref{weak solution} to all test functions $\phi \in H_{\muv}^1$, we need
\begin{equation*}
\int_{ \Sigma}(-\Delta)^\alpha \varphi^{\Sigma}=0
\end{equation*}
for compatibility.
This follows from the fact that $(-\Delta)^\alpha$ is a mean-zero operator and thus 
\begin{equation*}
0=\int (-\Delta)^\alpha \varphi^{\Sigma}=\int_{\Sigma} (-\Delta)^\alpha \varphi^{\Sigma},
\end{equation*}
as desired.

\subsection*{Step 2: Boundary Condition}
Next, we show that these weak energy solutions have the desired boundary condition. From \cite[Theorem 1.1]{TTV22}, in the chart $\mathcal{O}_i$ the boundary condition satisfies 
\begin{equation*}
(\nabla u+G_i)\cdot \vec{n}=0,
\end{equation*}
where $\vec{n}$ is the outward pointing unit normal. Using the relation 
\begin{equation*}
\frac{1}{\cds}(-\Delta)^\alpha \varphi^\Sigma=w\, \dist(x,\partial \Sigma)^{-\alpha}=\div(\muv G_i)= \div(s \, \dist(x, \pa \Sigma)^{1-\alpha} G_i) 
\end{equation*}
in view of  \eqref{eq reg}, we find 
\begin{multline*}
w\, \dist(x,\partial \Sigma)^{-\alpha}=\dist(x,\partial \Sigma)^{1-\alpha}G_i \nabla s +s(1-\alpha)\dist(x,\partial \Sigma)^{-\alpha}\nabla \dist(x,\partial \Sigma)\cdot G_i \\
+s\, \dist(x,\partial \Sigma)^{1-\alpha}\div G_i,
\end{multline*}
which can be rewritten as 
\begin{equation*}
s(1-\alpha)\nabla \dist(x,\partial \Sigma) \cdot G_i=w-\dist(x,\partial \Sigma) \nabla s G_i-s\,\dist(x,\partial \Sigma)\div G_i.
\end{equation*}
Taking $x \rightarrow \partial \Sigma$ and using the regularity of $s, w$ and $G_i$ coupled with $\nabla \dist(x,\partial \Sigma) \rightarrow -\vec{n}$, we find 
\begin{equation*}
G_i \cdot \vec{n}=-\frac{w}{s(1-\alpha)}
\end{equation*}
and thus 
\begin{equation*}
\nabla u \cdot \vec{n}=\frac{w}{s(1-\alpha)}.
\end{equation*}
Setting $\psi=\nabla u$, we find a solution to \eqref{Rtransport} in $\Sigma$.

\subsection*{Step 3: Schauder Estimate up to the Boundary}
First, let us examine the behavior of $\psi=\nabla u$ in the interior of $\Sigma$. Writing $E=\cup \mathcal{O}_i$ as above for the union of charts covering $\partial \Sigma$, observe that $\muv$ is bounded below in $\Sigma \setminus E$. So, we can appeal to standard  elliptic regularity to control $u$ inside $E$, thus finding 
\begin{equation}\label{interior estimate}
\|u\|_{C^{2+\sigma}(A)}\lesssim \|w\|_{C^\sigma(A)}
\end{equation}
for any $A \subset E$, recalling that for integer $\sigma$ we define $C^{\sigma}$ as the H\"older space $C^{\sigma-1,1}$.  Applying \eqref{interior estimate} to $A=\carr_{2\ell}$ (we omit the $z$ in the notation) and using  \eqref{ptwise fraclap}, we obtain for any $ 1\le m\le k$, 
\begin{equation*}
\|\psi\|_{C^m(\carr_{2\ell})}\lesssim \ell^{\d+2}\|(-\Delta)^\alpha \varphi\|_{L^\infty}+\ell^{2+\s-\d}\|\varphi\|_{C^{1+m}} \lesssim 
\M (\ell^{2+\s} +\ell^{\s-\d-m+1})
\end{equation*}
where we have also used \eqref{estxi} and Lemma \ref{lemestxi}. This proves \eqref{tscale} in $\carr_{2\ell}$.
  
In $\hat \Sigma \backslash \carr_{2\ell}$ we can obtain the decay as well by applying elliptic estimates in dyadic annuli centered around $\carr_{2\ell}$ of radii $2^k\ell$ until we hit the edge of the bulk. More precisely, choose $k_\ast$ such that $2^{k_\ast}\ell$ is at macroscopic scale and $\carr_{2^{k_\ast} \ell}\subset \bulk$. Let $A_k=\carr_{2^\ell}\setminus \carr_{2^{k-2}\ell}$. We can use \eqref{interior estimate} and \eqref{ptwise fraclap} to obtain that 
\begin{equation*}
\left\|\psi\right\|_{C^m(A_k)}\lesssim \left\|(-\Delta)^\alpha \varphi^\Sigma \right\|_{C^{m-1}(A_k)}  \lesssim \ell^{\d+2}\|(-\Delta)^\alpha\varphi\|_{L^\infty}+\frac{\ell^\d\|\varphi\|_{L^\infty}}{\(2^{k-2}\ell\)^{2\d-\s+m-1}}.
\end{equation*}
The first term is dominant by Lemma \ref{lemestxi}, so 
\begin{equation*}
\left\|\psi\right\|_{C^m(A_k)}\lesssim \frac{\M\ell^\d}{\(2^{k-2}\ell\)^{2\d-\s+m-1}}
\end{equation*}
by \eqref{estxi}. Adjusting constants and summing over $A_k$ yields
\begin{equation*}
|\nab^{\otimes m}\psi(x)|\lesssim \frac{\M\ell^\d}{|x-z|^{2\d-\s+m-1}}
\end{equation*}
which is \eqref{tscale} in $\bulk \backslash \carr_{2\ell}$.

Near the boundary of $\Sigma$, we need to be more careful due to the decay of $\muv$ and the blowup of $(-\Delta)^\alpha \xi^\Sigma$. Applying \cite[Theorem 1.1]{TTV22}, we obtain the control
\begin{equation}\label{prelim est}
\|u\|_{C^{1+\sigma}(\mathcal{O}_i \cap \Sigma) }\lesssim \|u\|_{L^2(\mathcal{O}_i \cap \Sigma)}+\|G_i\|_{C^{\sigma}(\mathcal{O}_i \cap \Sigma)}
\end{equation}
for any $\sigma \notin \mathbb{N}$. We can rewrite this estimate using the equation \eqref{Rtransport} and our divergence form for the right hand side. With $E=\cup \mathcal{O}_i$ as above and recalling that $\int |\nabla u|^2\muv$ is an equivalent norm on $H$, we write 
\begin{equation}\label{initial u L2}
\|u\|_{H}^2=\int_\Sigma | \nabla u |^2 \muv=\int_{\Sigma}\frac{1}{\cds}u (-\Delta)^\alpha \varphi^\Sigma =\int_{E}\frac{1}{\cds}(-\Delta)^\alpha \varphi^\Sigma u+\int_{\Sigma \setminus E}\frac{1}{\cds}(-\Delta)^\alpha \varphi^\Sigma u.
\end{equation}
The integral on $\Sigma \setminus E$ can be controlled via Cauchy-Schwarz, writing
\begin{align*}
\int_{\Sigma \setminus E}u (-\Delta)^\alpha \varphi^{\Sigma}  &\lesssim \left(\int_{\Sigma \setminus E}\frac{\left|(-\Delta)^\alpha \varphi^{\Sigma}\right|^2}{\muv}\right)^{1/2}\left(\int_{\Sigma \setminus E}u^2 \muv \right)^{1/2}
\end{align*}
and so since $\int \muv^{-1}<\infty$,
\begin{equation*}
\left|\int_{\Sigma \setminus E}\frac{1}{\cds}(-\Delta)^\alpha \varphi^\Sigma u\right| \lesssim \left\|(-\Delta)^\alpha \varphi^{\Sigma} \right\|_{L^\infty(\Sigma \setminus E)}\|u\|_H \lesssim \|w\|_{L^\infty(\Sigma)}\|u\|_H.
\end{equation*}
For the integral in $E$, we argue as in \eqref{csb}  that 
\begin{align*}
\left|\int_E \frac{1}{\cds}u (-\Delta)^\alpha \varphi^\Sigma \right|& \lesssim \sum_i \( \int \muv G_i \)^{1/2}\(\int_{\mathcal{O}_i}\muv |\nabla u|^2+\muv |u|^2 \)^{1/2}\\
& \lesssim \sum_i \|G_i\|_{L^\infty}\|u \|_H \lesssim \|w\|_{L^\infty(\Sigma)}\|u\|_H.
\end{align*}
Inserting these into \eqref{initial u L2} we find 
\begin{equation*}
\|u\|_{H}^2 \lesssim \|w\|_{L^\infty}\|u\|_H
\end{equation*}
and from \eqref{prelim est} we conclude that
\begin{equation*}
\|u\|_{C^{1+\sigma}(E)} \lesssim \|w\|_{C^{\sigma}(E)}.
\end{equation*}
for $\sigma \notin \mathbb{N}$. Recalling the definition $\psi=\nab u$ in $\Sigma$,  we can rephrase the global estimate as 
\begin{equation}\label{Schauder estimate}
\|\psi\|_{C^{\sigma}(\Sigma)} \lesssim \|w\|_{C^{\sigma}(\Sigma)}=\left\|\frac{(-\Delta)^\alpha\varphi^\Sigma}{\dist(x,\partial \Sigma)^{-\alpha}}\right\|_{C^{\sigma}(\Sigma)}.
\end{equation}
In view of H\"older interpolation of\eqref{ptwise fraclap}, we have (applying \eqref{Schauder estimate} for $\sigma=k+\epsilon$) 
\begin{equation*}
\|\psi\|_{C^{k+\epsilon}(\Sigma\backslash \bulk)} \lesssim \ell^{\d} \|\varphi\|_{L^\infty}+\ell^{\d+2}\|(-\Delta)^\alpha \varphi\|_{L^\infty}
\end{equation*}
which yields, in view of Lemma \ref{lemestxi},
\be\label{Schauder estimate2}
\|\psi\|_{C^{k}(\Sigma\backslash \bulk)} \lesssim \M \ell^\d
\ee 
using \eqref{estxi} and Lemma \ref{lemestxi}, which is \eqref{tscale} in $\Sigma \setminus \bulk$. Note that we needed to be careful to apply \eqref{Schauder estimate} for $\sigma \notin \mathbb{N}$ as \cite[Theorem 1.1]{TTV22} is stated only for noninteger $\sigma$.

Finally, the above estimates were only for $\nab^{\otimes m} \psi$ with $m \geq 1$. Extending the above estimates on $\nab^{\otimes m} \psi$ for $m \geq 1$ to an $L^\infty$ $(m=0)$ bound on $\psi$ then follows from using that $|\psi|=O(\ell^d)$ at $\partial \Sigma$ and integrating the derivative estimate.

\subsection*{Step 4: Continuity Across the Boundary}
We now establish continuity of $\psi$ across the boundary. 
\subsubsection*{Rewriting the external transport at the boundary}
Let $x_0 \in \partial \Sigma$, and let $c_1(x_0)$ be the unique constant such that 
\begin{equation}\label{c1}
\nabla \zeta_V \cdot \vec n=c_1(x_0)\dist(x,\partial \Sigma)^\alpha+O\(|x-x_0|^{\alpha+a}\)
\end{equation}
where $a>0$ is such that $\alpha+a<1$. Thus, in $U \setminus \Sigma$, as $x$ approaches $x_0$ from the outside,
\begin{equation*}
\lim_{x \rightarrow x_0}\psi \cdot \vec n(x)=\lim_{x \rightarrow x_0}\frac{\varphi^\Sigma-\varphi}{c_1(x_0)\dist(x,\partial \Sigma)^\alpha}
\end{equation*}
Furthermore, since $\varphi \in C^2$ we know that $(-\Delta)^\alpha \varphi \in L^\infty$; then, regularity for fractional elliptic problems (cf. \cite[Theorem 1.2]{RoS14}) gives us a unique constant $c_2(x_0)$ near $x_0$ such that 
\begin{equation}\label{c2}
\varphi^\Sigma(x)-\varphi(x)=c_2(x_0)\dist(x,\partial \Sigma)^\alpha+o( \dist (x, \pa \Sigma)^\alpha) \quad \text{as } \ x\to x_0
\end{equation}
and so 
\begin{equation}\label{ext bdy}
\lim_{x\rightarrow x_0}\psi \cdot \vec n=\frac{c_2(x_0)}{c_1(x_0)}
\end{equation}
as $x\rightarrow x_0$ from $U \setminus \Sigma$.

\subsubsection*{Agreement of Boundary Conditions}
We now claim that the Neumann boundary condition in \eqref{Rtransport} and \eqref{ext bdy} agree. First, recalling that for $x_0$ on the boundary $$w(x_0)=\lim_{x \rightarrow x_0}\frac{1}{\c} (-\Delta)^\alpha \varphi^\Sigma(x) \dist(x,\partial \Sigma)^\alpha$$ and $s(x_0)=\lim_{x \rightarrow x_0}\frac{\muv(x)}{\dist(x,\partial \Sigma)^{1-\alpha}}$, we can rewrite the Neumann condition in \eqref{Rtransport} as 
\begin{equation}\label{int bdy}
\psi \cdot \vec{n}(x_0)=\lim_{x \rightarrow x_0} \frac{(-\Delta)^\alpha \varphi^\Sigma(x) \dist(x, \partial \Sigma)}{(1-\alpha)\cds \muv(x)}.
\end{equation}
Lemma \ref{lem1}  yields
\begin{equation}\label{RHS exp}
(-\Delta)^\alpha \varphi^\Sigma=(-\Delta)^\alpha \varphi+\overline{c}_\alpha c_2(x_0)\dist(x,\partial \Sigma)^{-\alpha}+o(\dist(x, \partial \Sigma)^{-\alpha}) 
\end{equation}
where 
\begin{equation}\label{c alpha}
\overline{c}_\alpha=-\frac{\Gamma(1+\alpha)}{\Gamma(1-\alpha)}.
\end{equation}
Proposition \ref{prop1} also gives
\begin{equation}\label{eq exp}
(1-\alpha)\cds \muv(x)=\overline{c}_\alpha c_1(x_0)\dist(x,\partial \Sigma)^{1-\alpha}+o(\dist(x, \partial \Sigma)^{1-\alpha}) \quad \text{as } \ x\to x_0\end{equation}
where $c_1(x_0)$ is as in \eqref{c1}.
We then compute 
\begin{align}
\notag  \psi \cdot \vec{ n}(x_0)&=\lim_{x \rightarrow x_0} \frac{(-\Delta)^\alpha \varphi^\Sigma(x) \dist(x, \partial \Sigma)}{(1-\alpha)\cds \muv(x)}\\
\notag &=\lim_{x \rightarrow x_0} \frac{\((-\Delta)^\alpha \varphi+\overline{c}_\alpha c_2(x_0)\dist(x,\partial \Sigma)^{-\alpha}\)\dist(x, \partial \Sigma)}{\overline{c}_\alpha c_1(x_0)\dist(x,\partial \Sigma)^{1-\alpha}}\\
\notag &=\lim_{x \rightarrow x_0}\frac{(-\Delta)^\alpha \varphi}{\overline{c}_\alpha c_1(x_0)}\dist(x,\partial \Sigma)^{\alpha}+\lim_{x \rightarrow x_0} \frac{\overline{c}_\alpha c_2(x_0)\dist(x,\partial \Sigma)^{1-\alpha}}{\overline{c}_\alpha c_1(x_0)\dist(x,\partial \Sigma)^{1-\alpha}} \\
 \label{int limit} &=\frac{c_2(x_0)}{c_1(x_0)},
\end{align} 
which agrees with \eqref{ext bdy}. Thus, $\psi$ is continuous in its normal component across the boundary. We may then build $\psi$ continuous exactly in the same way as in the proof of \cite[Lemma 3.4]{LS18} by compensating with an appropriate tangential vector field $\psi^\perp$: consider the trace $\psi -(\psi\cdot \vec{n}) \vec{n}$ on $\pa \Sigma$ from the inside, and  extend it   to a regular vector field vanishing outside $U$,   then subtract off the projection of that vector field onto $\nabla \zeta_V$ to obtain a vector field  $\psi^\perp$ which remains perpendicular to $\nabla \zeta_V$ and vanishing in $U^c$. 
In view of \eqref{Schauder estimate2}, and the $C^{k+1}$ regularity of $\partial \Sigma$, we can obtain $\psi^\perp$ such that, for $m\le k$, 
\be \label{contrpsiperp}
\|\nab^{\otimes m} \psi^\perp\|_{L^\infty(\Sigma^c)} \lesssim \|\varphi\|_{L^\infty}\ell^\d.\ee

The vector field $\psi$ is now defined in $\R^\d$.

\subsection*{Step 4: Exterior Schauder Estimate} 
We apply the boundary Harnack inequality for $\alpha$-harmonic functions from \cite[Theorem 1.3]{ARo20}. Up to a rotation, we may as well assume at a point $x_0 \in \partial \Sigma$ that $\vec n=\vec{e}_n$, the unit vector in the $x_n$ direction. The regularity of $\psi^\perp$ is provided in \eqref{contrpsiperp}, so it is sufficiently to consider 
\begin{equation*}
(\varphi^\Sigma-\varphi)\frac{\nabla \zeta_V}{|\nabla \zeta_V|^2}.
\end{equation*}
Notice that outside of $\Sigma$, we have $(-\Delta)^\alpha \zeta_V=\muv+(-\Delta)^\alpha(V-c_V)=(-\Delta)^\alpha(V-c_V)$. Since $\zeta_V\equiv 0$ in $\Sigma$, we then have
\begin{equation*}
\begin{cases}
(-\Delta)^\alpha (\partial_i \zeta_V)=(-\Delta)^\a (V-c_V) & \text{in }B_r(x_0) \cap \Sigma^c \\
\partial_i \zeta_V=0 & \text{in }B_r(x_0) \cap \Sigma
\end{cases}
\end{equation*}
for $x_0 \in \partial \Sigma$ and $r>0$ sufficiently small. We also have 
\begin{equation*}
\begin{cases}
(-\Delta)^\alpha (\varphi^\Sigma-\varphi)=-(-\Delta)^\a \varphi& \text{in }B_r(x_0) \cap \Sigma^c \\
\varphi^\Sigma-\varphi=0 & \text{in }B_r(x_0) \cap \Sigma.
\end{cases}
\end{equation*}
Notice also that it follows from \eqref{regular point} (see also \cite{ROS17}) that $\partial_n \zeta_V \gtrsim \dist(x,\partial \Sigma)^\alpha$. Now, it follows from \cite[Theorem 1.3]{ARo20} that 
\begin{equation*}
\frac{\varphi^\Sigma-\varphi}{\partial_n \zeta_V}, \hspace{3mm} \frac{\partial_i \zeta_V}{\partial_n \zeta_V} \in C^{\sigma}
\end{equation*}
for any $\sigma\notin \mathbb{N}$ such that $\sigma+\alpha, \sigma-\alpha \notin \mathbb{N}$. As in the proof of \cite[Theorem 1.1]{ARo20}, we write
\begin{equation}\label{coordinate est}
\frac{(\varphi^\Sigma-\varphi)\partial_i\zeta_V}{|\nabla \zeta_V|^2}=\frac{(\varphi^\Sigma-\varphi)}{\partial_n \zeta_V}\frac{\partial_i \zeta_V \partial_n \zeta_V}{|\nabla \zeta_V|^2}.
\end{equation}
Now, 
\begin{equation*}
\frac{\partial_i \zeta_V \partial_n \zeta_V}{|\nabla \zeta_V|^2}=\frac{\frac{\partial_i \zeta_V}{\partial_n\zeta_V}}{1+\sum_{i=1}^{n-1}\left(\frac{\partial_i \zeta_V}{\partial_n \zeta_V}\right)^2} \in C^{\sigma}(B_r(x_0))
\end{equation*}
by the boundary Harnack inequality, and 
\begin{align*}
 \left\|\frac{(\varphi^\Sigma-\varphi)}{\partial_n \zeta_V}\right\|_{C^{\sigma}(B_r(x_0))}& \lesssim \|(-\Delta)^\alpha \varphi\|_{C^{\sigma-\alpha}(B_r(x_0))}+\|\varphi^\Sigma-\varphi\|_{L^\infty(\R^n)} \\
 &\lesssim \|(-\Delta)^\alpha \varphi\|_{C^{\sigma-\alpha}(U \setminus \Sigma)}.
 \end{align*}
where we have used \cite[Theorem 1.2]{RoS16} to control the $L^\infty$ term. Inserting these into (\ref{coordinate est}) and using \cite[(4.7)]{GT01}, we find 
\begin{equation*}
\left\|(\varphi^\Sigma-\varphi)\frac{\nabla \zeta_V}{|\nabla \zeta_V|^2}\right\|_{C^{k+\epsilon}(B_r(x_0))} \lesssim  \|(-\Delta)^\alpha \varphi\|_{C^{k+\epsilon-\alpha}(U \setminus \Sigma)},
\end{equation*}
which, using \eqref{fraclap der} and \eqref{estxi} coupled with H\"older interpolation completes the proof. We have been careful to apply \cite[Theorem 1.3]{ARo20} for $\sigma=k+\epsilon$, since that result does not hold for integer $k$.

 To control the decay of $\psi$ in $U^c$, we note that by definition \eqref{Riesz effective potential} we have 
 $\nab \zeta_V=\nab h^{\mu_V}+\nab V$. Since $\mu_V$ has compact support, $\nab h^{\mu_V}$ and its derivatives decay like those of $\g$, i.e.~faster than $|x|^{-\s-1}$. Since $\s>\d-2\ge -1$, this means that $\nab h^{\mu_V}$ and all its derivatives tend to $0$ at infinity, thus $\nab \zeta_V\sim \nab V$ at infinity. It follows from the definition of $\psi$ in \eqref{Rtransport} (and the fact that $\psi^\perp$ is compactly supported) that 
 $$\nab^{\otimes m} \psi \sim \nab^{\otimes m} \( \varphi^\Sigma\frac{\nab V}{|\nab V|^2}\) \quad \text{as} \ |x|\to \infty$$ 
 
 Using \eqref{decayxisigma-scale}, \eqref{estxi}, Lemma \ref{lemestxi}, $\s >\d-2$ and \eqref{itemgrowthV} we have that 
  \begin{equation*}
 |\psi|\lesssim \frac{\left\|(-\Delta)^\alpha \varphi\right\|_{L^\infty}\ell^{\d+2}+\|\varphi\|_{L^\infty}\ell^{\d}}{|x-z|^{\s+2}}\lesssim \frac{\M\ell^{\d}}{|x-z|^{\s+2}},
 \end{equation*}
 which is \eqref{tscale}. For the derivatives, we have using \eqref{decayxisigma-scale}, \eqref{itemgrowthV}, \eqref{estxi} and the Faa-di Bruno formula that 
\begin{equation*}
|\nab^{\otimes m} \psi|\lesssim \sum_{j=0}^m \left\|D^j \varphi^\Sigma\right\| \left\|D^{m-j}\(\frac{1}{|\nabla V|}\)\right\| \lesssim \frac{\left\|(-\Delta)^\alpha \varphi\right\|_{L^\infty}\ell^{\d+2}+\|\varphi\|_{L^\infty}\ell^{\d}}{|x-z|^{\s+2+m}} \lesssim \frac{\M \ell^{\d}}{|x-z|^{\s+2+m}}\end{equation*}
for $|x|$ large enough.
This completes the proof of \eqref{tscale} and of Proposition \ref{transport}.

\smallskip

We will often need the following set of consequences.
\begin{lem}\label{computationratio}
Assume \eqref{estxi} hold for $k= 3$.
For $\psi$ as above, and any $n\ge 1$ integer, for  $U$ as above, we have
\begin{equation}\label{cpsi}
\int_U |\psi|^n \lesssim \M^n\ell^{(1-n)\d+n\s+n},
\end{equation}
\begin{equation}\label{cdpsi}
\int_U|D\psi|^n \lesssim  \M^n \ell^{(1-n) \d + n\s},
\end{equation}
\begin{equation}\label{cratio1}
\int_{\Sigma}\frac{|\psi(x)-\psi(y)|^n}{|x-y|^{\s+n}}\, dy\lesssim
\M^n\begin{cases}  \ell^{n\d}\max( |x-z|,\ell)^{\d(1-2n)+\s(n-1)}& \text{if} \ x\in U\\
\ell^{n\d} |x-z|^{-n(\s+2)}& \text{if} \ x\in U^c\end{cases}
\end{equation} where $z$ is the center of $\carr_{\ell}$, and 
\be\label{cratio}
\int_{\Sigma^2}\frac{|\psi(x)-\psi(y)|^n}{|x-y|^{\s+n}}\, dxdy \lesssim \M^n\ell^{(2-n)\d+(n-1)\s}.
\ee
\end{lem}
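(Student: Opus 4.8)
The plan is to deduce all four estimates directly from the pointwise bounds of Proposition~\ref{transport}, which it is convenient to record in the uniform form: writing $r_x:=\max(|x-z|,\ell)$, for every $x\in U$ and $m\in\{0,1\}$ one has $|\nab^{\otimes m}\psi(x)|\lesssim \M\,\ell^\d\, r_x^{-(2\d-\s-1+m)}$ (this is \eqref{tscale} with $m=0,1$, the $m=1$ version being among the derivative bounds established in the proof of Proposition~\ref{transport}), while for $x\in U^c$ one has $|\nab^{\otimes m}\psi(x)|\lesssim \M\,\ell^\d\,|x-z|^{-(\s+2+m)}$. Every estimate is then obtained by inserting these bounds and computing the resulting integrals in polar coordinates centred at $z$; the only point to watch is that, thanks to $\d-2<\s<\d$, the radial integrals that occur are controlled either by their contribution at the scale $\ell$ when their integrand is singular enough, or by their macroscopic endpoint otherwise, and in each case the outcome is the stated power of $\ell$.

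For \eqref{cpsi} and \eqref{cdpsi} I would split $U=\carr_{2\ell}(z)\cup\big(U\setminus\carr_{2\ell}(z)\big)$. On $\carr_{2\ell}(z)$ one has $r_x\sim\ell$, so $\int_{\carr_{2\ell}(z)}|\nab^{\otimes m}\psi|^n\lesssim \ell^\d\big(\M\ell^{-(\d-\s+m-1)}\big)^n$, which is precisely the claimed bound for $m=0$ (giving \eqref{cpsi}) and $m=1$ (giving \eqref{cdpsi}). On $U\setminus\carr_{2\ell}(z)$, polar coordinates give $\int|\nab^{\otimes m}\psi|^n\lesssim \M^n\ell^{n\d}\int_\ell^{O(1)}r^{\d-1-n(2\d-\s-1+m)}\,dr$; since $2\d-\s-1+m>\d$ for $m\ge1$ (because $\s<\d$), the integrand is integrably singular, so the integral is $\lesssim\ell^{\d-n(2\d-\s-1+m)}$ and multiplying by $\ell^{n\d}$ reproduces the $\carr_{2\ell}$ power, while in the remaining cases the radial integral is $O(1)$ and the bound is only smaller.

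For \eqref{cratio1} I would fix $x$ and decompose $\Sigma$ according to the size of $|x-y|$. On $\{y\in\Sigma:\ |x-y|\le\tfrac1{10}r_x\}$ (for $x\in U$) the mean value inequality and the $m=1$ bound give $|\psi(x)-\psi(y)|\lesssim|x-y|\,\M\ell^\d r_x^{-(2\d-\s)}$, whence $\int\frac{|\psi(x)-\psi(y)|^n}{|x-y|^{\s+n}}\,dy\lesssim\big(\M\ell^\d r_x^{-(2\d-\s)}\big)^n\int_{|x-y|\le r_x}|x-y|^{-\s}\,dy\lesssim \M^n\ell^{n\d}r_x^{\d(1-2n)+\s(n-1)}$, using $\s<\d$. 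On the complementary set I would use $|\psi(x)-\psi(y)|\le|\psi(x)|+|\psi(y)|$ with the $m=0$ bounds and sum dyadically in $|x-y|$ and in $|y-z|$; each geometric sum is controlled by the same power of $r_x$, the convergence of the tail sums coming from $\s>\d-2$ (so that $\s+n>\d$ for the relevant $n$ and $\int_{|x-y|>R,\,y\in\Sigma}|x-y|^{-(\s+n)}\,dy\lesssim R^{\d-\s-n}$). For $x\in U^c$ one has $|x-y|\sim|x-z|$ uniformly in $y\in\Sigma$, so $\int_\Sigma\frac{|\psi(x)-\psi(y)|^n}{|x-y|^{\s+n}}\,dy\lesssim|x-z|^{-(\s+n)}\big(|\Sigma|\,|\psi(x)|^n+\int_\Sigma|\psi|^n\big)$, and inserting the far-field bound for $|\psi(x)|$ together with \eqref{cpsi} gives $\lesssim \M^n\ell^{n\d}|x-z|^{-n(\s+2)}$.

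Finally, \eqref{cratio} follows by integrating \eqref{cratio1} in $x$ over $\Sigma\subset U$ (only its first two cases occur): $\int_{\Sigma^2}\frac{|\psi(x)-\psi(y)|^n}{|x-y|^{\s+n}}\,dx\,dy\lesssim \M^n\ell^{n\d}\int_\Sigma r_x^{\d(1-2n)+\s(n-1)}\,dx$, and the last integral is, in polar coordinates, $\lesssim\ell^\d\,\ell^{\d(1-2n)+\s(n-1)}+\int_\ell^{O(1)}r^{\d-1+\d(1-2n)+\s(n-1)}\,dr\lesssim\ell^{\d+\d(1-2n)+\s(n-1)}$, the $r$-exponent being $<-1$ for $n\ge2$ precisely because $\s<\d$; this yields $\M^n\ell^{(2-n)\d+(n-1)\s}$, and the case $n=1$ is obtained by performing the same near/far splitting directly on the double integral. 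The only genuine difficulty is the bookkeeping: one must check in each subcase that the exponents of the radial integrands have the right sign, so that each integral concentrates at the expected endpoint, but every such condition reduces to the standing assumption $\d-2<\s<\d$ (e.g.\ $\d-\s-1+m>0$ for $m\ge1$, $\s+n>\d$ for $n\ge2$, $(n-1)(\s-2\d)<0$); since the estimates are not claimed to be sharp, the few borderline exponents cause no trouble.
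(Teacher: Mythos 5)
Your proposal follows essentially the same route as the paper's proof: insert the pointwise bounds \eqref{tscale}, split $U$ into $\carr_{2\ell}(z)$ and its complement, use a mean-value/Lipschitz bound for $|x-y|$ small and the triangle inequality for $|x-y|$ large, and compute the remaining radial integrals; the only stylistic difference is that you unify the two interior cases of \eqref{cratio1} via $r_x=\max(|x-z|,\ell)$, whereas the paper treats them separately. One technical caveat worth flagging, since you arrive at it by a slightly different computation than the paper: in your $x\in U^c$ estimate the term $|x-z|^{-(\s+n)}\int_\Sigma|\psi|^n\lesssim \M^n\ell^{(1-n)\d+n(\s+1)}|x-z|^{-\s-n}$ is \emph{not} bounded by $\M^n\ell^{n\d}|x-z|^{-n(\s+2)}$ for large $|x-z|$ (the decay exponent $\s+n$ is strictly smaller than $n(\s+2)$ since $(n-1)\s+n>0$); this mild overstatement of \eqref{cratio1} in the exterior case is present in the paper's proof as well, and is harmless because only the $x\in U$ cases are used to derive \eqref{cratio}, but you should be aware that the exterior bound as stated is not implied by your (or the paper's) argument.
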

\begin{proof}
From the estimates \eqref{tscale}, we have
\be\int_U|\psi|^n\lesssim
\M^n\( \frac{\ell^\d}{\ell^{n(\d-\s-1)}}+ \ell^{n\d} \int_\ell^1\frac{r^{\d-1}}{r^{n(2\d-\s-1)}}\, dr \) \lesssim \M^n\ell^{\d(1-n) +n(\s+1)}.\ee 
Next, we write 
\begin{equation}\label{cdpsi2}
\int_U|D\psi|^n \lesssim \M^n\(\frac{\ell^\d}{\ell^{n(\d-\s)}}+\ell^{n\d}\int_\ell^1\frac{r^{\d-1}}{r^{n (2\d-\s)}}\, dr \)\lesssim \M^n\ell^{(1-n) \d + n\s}.\ee

For \eqref{cratio1}, we first consider $x\in \carr_{2\ell}$.
Then, using $\frac{|\psi(x)-\psi(y)|}{|x-y|}\lesssim \M\ell^{\s-\d}$ from \eqref{tscale}, we find
\begin{align*}
\int_{U}\frac{|\psi(x)-\psi(y)|^n}{|x-y|^{\s+n}}\, dy &\lesssim \int_{y\in U, |y-x|\leq \ell}\frac{|\psi(x)-\psi(y)|^n}{|x-y|^{\s+n}}\, d+\int_{y\in U, |y-x|>\ell}\frac{|\psi(x)|^n+|\psi(y)|^n}{|x-y|^{\s+n}}\, dy \\
&\lesssim \frac{\M^n}{\ell^{n(\d-\s)}}\int_0^\ell\frac{1}{r^\s}r^{\d-1}\, dr+\M^n\ell^{-n(\d-\s-1)}\int_{\ell \leq |u|\leq C}\frac{1}{|u|^{\s+n}}\, du\\
& \lesssim   \M^n  \ell^{(1-n)(\d-\s)} .\end{align*}

Next, for $x\in U \backslash \carr_{2\ell}$, we obtain similarly using \eqref{tscale}, $z$ being the center of $\carr_{2\ell}$,
\begin{align*}
&\int_{U}\frac{|\psi(x)-\psi(y)|^n}{|x-y|^{\s+n}}\, dy \lesssim \int_{y\in U, |y-x|\leq \hal |x-z|}\frac{|\psi(x)-\psi(y)|^n}{|x-y|^{\s+n}}\, dy+\int_{y\in U, |y-x|>\hal |x-z|}\frac{|\psi(x)|^n}{|x-y|^{\s+n}}\, dy \\
& \qquad \qquad + \int_{y\in U, |y-x|>\hal |x-z|}\frac{|\psi(y)|^n}{|x-y|^{\s+n}}\, dy\\
&\lesssim \frac{\M^n\ell^{n\d}}{|x-z|^{n(2\d-\s)}}\int_0^{\hal |x-z|}\frac{1}{r^\s}r^{\d-1}\, dr+\M^n\ell^{n\d}|x-z|^{-n(2\d-\s-1)}\int_{\hal |x-z| \leq |u|\leq C}\frac{1}{|u|^{\s+n}}\, du\\
& \qquad \qquad + \M^n\int_{\hal |x-z| \leq |y-x|\leq C}\frac{1}{|y-x|^{\s+n}}\frac{\ell^{n\d}}{\max(|y-z|,\ell)^{n(2\d-\s-1)}}\, dy
\\
& \lesssim   \M^n  \ell^{n\d} |x-z|^{\d(1-2n)+\s(n-1)} .\end{align*}
Finally, for $x \in U^c$,  we obtain similarly
\begin{align*}
&\int_{\Sigma}\frac{|\psi(x)-\psi(y)|^n}{|x-y|^{\s+n}}\, dy \lesssim \int_{y\in \Sigma, |y-x|\leq \hal |x-z|}\frac{|\psi(x)-\psi(y)|^n}{|x-y|^{\s+n}}\, dy+\int_{y\in U, |y-x|>\hal |x-z|}\frac{|\psi(x)|^n}{|x-y|^{\s+n}}\, dy \\
& \qquad \qquad + \int_{y\in \Sigma, |y-x|>\ep}\frac{|\psi(y)|^n}{|x-y|^{\s+n}}\, dy\\
&\lesssim  \M^n  \ell^{n\d} |x-z|^{\d(1-2n)+\s(n-1)}+  \M^n\frac{\ell^{n\d}}{|x-z|^{n(\s+2)}}.\end{align*}
This proves \eqref{cratio1}.

For \eqref{cratio}, we integrate \eqref{cratio1} over $U$ to find 
\begin{align*}
\int_{\Sigma^2}\frac{|\psi(x)-\psi(y)|^n}{|x-y|^{\s+n}}\, dxdy \lesssim 
 \M^n \( \ell^\d \ell^{ (\d-\s)(1-n)}+   \ell^{n\d}\ell^{\d+\d(1-2n) + \s(n-1)}\),\end{align*}
 which yields the result.
\end{proof}

\section{Splitting, the electric formulation, and transport calculus}\label{sec: fluct prelim}

In order to complete the proof of Theorems \ref{FirstFluct}-\ref{CLT}, we need to recall the main ingredients of our electric-formulation based analysis, originating in \cite{PS17}. Most of what follows is based on material that can be found in \cite{S24}.

\subsection{The next-order energy and partition functions}\label{sec:remindersF}

As discussed in the introduction, under our assumptions on $V$ the sequence of empirical measures $\frac{1}{N}\sum \delta_{x_i}$ converges in a large deviations sense \cite[Theorem 3.3]{S24} to the equilibrium measure $\muv$. Splitting off the main term of the interaction leads to a definition of the next-order energy.

\begin{lem}[Splitting Formula]\label{Riesz Splitting Formula}
Given any configuration $\XN \in (\R^\d)^N$, the energy being as in \eqref{defHN}, we have
\begin{align}\label{splitting}
\HN(\XN)=N^2\I(\muv)+N\sum_{i=1}^N\zeta_V(x_i)+\FN(\XN,\muv)
\end{align}
where $\I$ is defined in \eqref{Continuous energy} and  the next-order energy $\FN(\XN,\muv)$ is defined  in \eqref{defFN}.
\end{lem}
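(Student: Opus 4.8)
The plan is to prove the Splitting Formula \eqref{splitting} by a direct algebraic manipulation of $\HN(\XN)$, substituting the decomposition $\sum_i \delta_{x_i} = N\muv + \fluct_{\muv}$ into the interaction term and bookkeeping the pieces, with care taken regarding the diagonal self-interaction terms that are excluded from $\FN$. First I would recall that $\FN(\XN,\muv)$ is, by \eqref{defFN}, the renormalized (off-diagonal) energy
\[
\FN(\XN,\muv)= \hal \iint_{\triangle^c} \g(x-y)\, d\Big(\sum_i \delta_{x_i}-N\muv\Big)(x)\, d\Big(\sum_i \delta_{x_i}-N\muv\Big)(y),
\]
and that the Euler--Lagrange relation \eqref{Riesz Euler-Lagrange equation}--\eqref{Riesz effective potential} gives $h^{\muv}+V = c_V + \zeta_V$ on all of $\R^\d$, with $\zeta_V=0$ on $\Sigma$; this is the identity that converts the confinement term into the effective potential term.

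The key steps, in order, are as follows. Expand the double integral defining $\FN$ using bilinearity into three groups: (i) the pair term $\hal\sum_{i\neq j}\g(x_i-x_j)$, which is exactly the first part of $\HN$ (the diagonal $i=j$ is removed precisely because we integrate over $\triangle^c$, which is why the renormalization is needed — the pure Dirac self-energy $\g(0)$ is infinite); (ii) the cross terms $-N\sum_i \int \g(x_i-y)\,d\muv(y) = -N\sum_i h^{\muv}(x_i)$, appearing twice by symmetry, hence a total of $-2N\sum_i h^{\muv}(x_i)$, but with the factor $\hal$ this contributes $-N\sum_i h^{\muv}(x_i)$; here I note that removing the diagonal from a continuous-times-atomic product changes nothing since $\muv$ has no atoms; (iii) the background--background term $\hal N^2 \iint \g(x-y)\,d\muv(x)\,d\muv(y)$. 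Then I would add and subtract $N\sum_i V(x_i)$ to reconstruct $\HN$, and use $h^{\muv}(x_i)+V(x_i) = c_V + \zeta_V(x_i)$ from \eqref{Riesz effective potential} to rewrite $-N\sum_i h^{\muv}(x_i)+N\sum_i V(x_i)$. Finally, using $\I(\muv)=\hal\iint \g\,d\muv d\muv + \int V\,d\muv$ together with the fact that (by the Euler--Lagrange equation integrated against $\muv$) $\int h^{\muv}d\muv = \int(c_V-V)\,d\muv$ on $\supp\muv$, one collects the constant-order terms into $N^2\I(\muv)$; some care is needed to see that the leftover constants $-N^2 c_V$ type contributions cancel against the $c_V$ arising from $h^{\muv}+V = c_V$ summed over the $N$ particles weighted appropriately, which is the standard "$\zeta_V$ absorbs the chemical potential" computation.

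The main obstacle — really the only subtle point — is the careful handling of the diagonal and of which terms are genuinely renormalized: one must be sure that when expanding $\iint_{\triangle^c}$, the only place the diagonal exclusion bites is the particle--particle term (giving $\sum_{i\neq j}$ rather than a divergent $\sum_{i,j}$), whereas the mixed and background terms are unaffected because $\muv\in L^\infty$ has no atomic part, so $\iint_{\triangle^c}=\iint$ there. Everything else is linear algebra plus the Euler--Lagrange identity, so I would present it as a one-paragraph computation. This is exactly the computation carried out in \cite[Chap.~4]{S24} (and originally \cite{PS17}), so I would either reproduce it in a few lines or simply cite it; given that the lemma is labeled as a recollection, a short self-contained verification with the bookkeeping made explicit is the appropriate level of detail.
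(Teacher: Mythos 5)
Your proposal is correct and is exactly the standard expansion that the paper simply defers to \cite[Lemma~5.1]{S24}: expand $\F_N$ bilinearly into particle--particle, cross, and background--background pieces (the diagonal removal only affects the atomic--atomic piece since $\mu_V$ has no atoms), use $h^{\mu_V}+V=c_V+\zeta_V$ to handle the cross and confinement terms, and recover $N^2\I(\mu_V)$ from the background term together with $N^2 c_V$ via the identity $c_V=\iint \g\,d\mu_V\,d\mu_V+\int V\,d\mu_V$, obtained by integrating the Euler--Lagrange equality on $\Sigma$ against $\mu_V$. One small sign slip in the write-up: the combination to feed into the Euler--Lagrange identity is $+N\sum_i h^{\mu_V}(x_i)+N\sum_i V(x_i)$ (the cross term changes sign when moved from the $\F_N$ side to the $\H_N$ side); the expression $-N\sum_i h^{\mu_V}(x_i)+N\sum_i V(x_i)$ as you wrote it does not simplify under $h^{\mu_V}+V=c_V+\zeta_V$. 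With that sign corrected the argument is complete.
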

We refer to \cite[Lemma 5.1]{S24} for the proof.

Associated to this next-order energy is the \textit{next-order partition function}
\begin{equation}
\label{defK fluct}
 \K_{N,\beta}(\mu,\zeta):=  \int_{(\R^{\d })^N} 
\exp\left(- \beta N^{-\frac{\s}{\d}}\(\F_N(X_N,\mu) +N \sum_{i=1}^N \zeta(x_i) \) \right)
\, dX_N.\end{equation}
The Gibbs measure can then be rewritten as 
\begin{equation}\label{splitgibbs}
d\PNbeta(\XN)= \frac{1}{\K_{N,\beta}(\muv,\zeta_V)}\exp\(- \beta N^{-\frac\s\d} \(\F_N(\XN, \muv)+ N\sum_{i=1}^N \zeta_V(x_i)\) \) dX_N.\end{equation}

We will also  need in the proofs of Theorem \ref{FirstFluct} and \ref{CLT} a next-order partition function restricted to a given event $\mathcal{G}$, which will be denoted as  
\begin{equation}\label{defK fluct event}
 \K_{N,\beta}^{\mathcal{G}}(\mu,\zeta):=  \int_{(\R^{\d })^N} 
\exp\left(- \beta N^{-\frac{\s}{\d}}\(\F_N(X_N,\mu,U) +N \sum_{i=1}^N \zeta(x_i) \) \right)\indic_{\mathcal{G}}
\, dX_N.\end{equation}

In \cite[Corollary 5.23]{S24} an exponential moment control of the energy in the form 
\be \label{expmomentcontrol}
\left|\log \Esp_{\PNbeta}\( \exp\(\frac{\beta}{2}N^{-\frac\s\d} \( \F_N(\XN, \muv)+\(\frac{N}{2\d}\log N\) \indic_{\s=0}+N \sum_{i=1}^N \zeta_V(x_i) \)\)\) \right|\le C(1+\beta) N,
\ee where $C>0$ depends on $\s,\d, \zeta_V$ and $\|\muv\|_{L^\infty}$, 
are shown from upper and lower bounds on $ \K_{N,\beta}(\mu,\zeta)$.

They can be seen as a ``local law'' at the macroscale.
In particular it follows that, except with probability $e^{-C\beta N}$, we have 
\be \label{macrolaw1} \F_N(\XN, \muv)+\(\frac{N}{2\d}\log N\) \indic_{\s=0}+N \sum_{i=1}^N \zeta_V(x_i) \lesssim_\beta N^{1+\frac\s\d}\ee
where $C>0$ depends on $\s,\d, \zeta_V$ and $\|\muv\|_{L^\infty}$. We will thus be able to intersect all our good events with this large probability event.

\subsubsection{Electric formulation}

We will use the so-called \textit{electrostatic} approach to studying $\F_N$, introduced for the general Riesz gas in \cite{PS17}. A key tool in this approach is the Caffarelli-Silvestre \cite{CS07} extension procedure for reinterpreting fractional Laplace operators, as introduced in the Riesz context in \cite{PS17}. The connection is based on the observation that (up to a constant) $\g$ is the solution kernel for the fractional Laplace operator $(-\Delta)^\alpha$, where $\alpha=\frac{\d-\s}{2}$. Considering the extended function $\g(x,y)$ on $\R^{\d+1}$ (where $x \in \R^\d$ and $y \in \R$) one can observe  that $\g$ is a fundamental solution of a degenerate elliptic equation, i.e.~up to a constant solves
\begin{equation}\label{fundamental solution of extension}
\begin{cases}
\div(|y|^\gamma \nabla u)=0 & \text{in }\R^\d \times (\R \setminus \{0\}) \\
-\lim_{|y|\downarrow 0}|y|^{\gamma}\partial_y u(\cdot,y)=\delta_0 & \text{on }\R^\d \times \{0\}
\end{cases}
\end{equation}
with $\gamma$ satisfying
\begin{equation}\label{definition of gamma}
\d-1+\gamma=\s.
\end{equation}
In particular, if we let $\mu$ denote a measure on $\R^\d$, then the extension of the potential $h^\mu=\g \ast \mu$ to $\R^{\d+1}$ given by 
\begin{equation}\label{extended potential}
h^\mu(x,y)=\int_{\R^{\d+1}}\frac{1}{|(x,y)-(x',y')|^\s}\, d(\mu\delta)_{\R^\d}(x',y')= \int_{\R^{\d}}\frac{1}{|(x,y)-(x',0)|^\s}\, d \mu(x')
\end{equation}
solves 
\begin{equation}\label{definition of degenerate elliptic PDE}
-\div(|y|^\gamma\nabla h^\mu)=\cds\mu\delta_{\R^\d}\quad \text{in} \ \R^{\d+1},
\end{equation}
where we denote by $\delta_{\R^\d}$ the uniform measure on $\R^\d\times \{0\}$ characterized by the fact that for any continuous function $\varphi $ in $\R^{\d+1}$, $\int_{\R^{\d+1}} \varphi \, \drd=\int_{\R^\d}\varphi(x,0)dx$.

To formalize the electrostatic rewriting of the next-order energy, we will need a smearing procedure that regularizes the electrostatic potential generated by point charges.  That procedure is only needed in the case $\s\ge 0$ where $\g$ is singular at the origin. While much of what we define can be found in \cite[Sec.~4.1.3]{S24}, we restate much of it here for readability. If we define
\begin{equation}\label{extended truncation}
\f_\eta(x)=(\g(x)-\g(\eta))_+
\end{equation}
either for $x \in \R^\d$ or by extension for  $x\in \R^{\d+1}$, this is a function supported in $B(0,\eta)$ and  satisfying 
\begin{equation}\label{PDE truncation}
-\div(|y|^\gamma \nabla {\f}_\eta)=\cds\left(\delta_0-\delta_0^{(\eta)}\right)
\end{equation}
where we define
$\delta_{x_0}^{(\eta)}=-\frac1\cds \div ( \yg \nab \g_\eta(x-x_0))$. It is a weighted  measure of mass $1$ supported on $\partial B(0,\eta)$ satisfying 
\begin{equation*}
\int \varphi \delta_0^{(\eta)}=\frac{1}{\cds}\int_{\partial B(0,\eta)}\varphi(x,y)|y|^\gamma \g'(\eta)
\end{equation*}
for smooth $\varphi$.  
If $u$ solves
\begin{equation}\label{true Riesz potential}
-\div(|y|^\gamma \nabla u)=\cds\left(\sum_{i=1}^N \delta_{(x_i,0)}-\mu\delta_{\R^\d}\right),
\end{equation}
for any truncation vector $\vec{\eta}=(\eta_1,\dots,\eta_N)$, we define 
\begin{equation}\label{truncated potential}
u_{\vec{\eta}}(X):=u-\sum_{i=1}^N {\f}_{\eta_i}(x-(x_i,0))
\end{equation}
and observe that
\begin{equation}\label{PDE truncated potential}
-\div(|y|^\gamma \nabla u_{\vec{\eta}})=\cds \left(\sum_{i=1}^N \delta_{(x_i,0)}^{(\eta_i)}-\mu\delta_{\R^\d}\right).
\end{equation} 
We will also commonly identify $x_i\in \R^\d$ with $(x_i, 0)\in \R^{\d+1}$.

For our computations, we will often make use of the identification
\begin{equation}\label{extended truncation is Rd truncation}
\int_{U \times [-h,h]}{\f}_\eta(x,y)\, d\mu\delta_{\R^d}(x,y)=\int_U \f_\eta(x)\, d\mu(x).
\end{equation}
We will frequently use that 
\be\label{eq:intf}
\int_{\R^{\d}} |\nab^{\otimes n} \f_{\eta}|\le C \eta^{\d-\s-n}.
\ee
A truncation parameter  that we will often choose is the minimal distance \eqref{defri}.
With this truncation defined, we can now integrate by parts. The following exact formula follows from an examination of the proof of \cite[Proposition 1.6]{PS17} using the decay of $h_{N,\vec{\eta}}$ at infinity; see \cite[Lemma 4.10]{S24}.
\begin{lem}[Riesz electric formulation  of the next order energy]\label{riesz electric}
Let $\XN$ be a configuration of points in $\R^\d$, $\mu$ a probability density on $\R^\d$ such that 
\begin{equation}\label{electric assumption}
\iint |\g(x-y)|~d|\mu|(x)d|\mu|(y)<+\infty
\end{equation}
and let $\vec{\eta}$ denote a truncation vector with $\eta_i \leq \rr_i$ for all $1 \leq i \leq N$.  Let 
\be\label{defhN}
h_N[\XN,\mu] := \g* \( \sum_{i=1}^N \delta_{x_i}- N \mu\)\ee (which will most often be abbreviated as $h_N$).
Then, with the notation \eqref{truncated potential}, we have 
\begin{equation}\label{spelloutFN}
\F(\XN,\mu)=\frac{1}{2\cds}\left(\int_{\R^{\d+1}}|y|^\gamma|\nabla h_{N,\vec{\eta}}|^2-\cds \sum_{i=1}^N\g(\eta_i)\right)-N\sum_{i=1}^N \int_{\R^\d}\f_\eta(x-x_i)\, d\mu(x).
\end{equation}
\end{lem}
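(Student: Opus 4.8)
The plan is to establish \eqref{spelloutFN} by a renormalized integration by parts in the Caffarelli--Silvestre extension $\R^{\d+1}$, exactly along the lines of \cite[Proposition 1.6]{PS17} and \cite[Lemma 4.10]{S24}. Set $\nu:=\sum_{i=1}^N\delta_{x_i}-N\mu$, so that by \eqref{definition of degenerate elliptic PDE} the extended potential $h_N=\g*\nu$ solves $-\div(|y|^\gamma\nabla h_N)=\cds\,\nu\,\drd$, and by \eqref{PDE truncated potential} the truncated potential $h_{N,\vec{\eta}}$ of \eqref{truncated potential} solves $-\div(|y|^\gamma\nabla h_{N,\vec{\eta}})=\cds\bigl(\sum_i\delta_{x_i}^{(\eta_i)}-N\mu\,\drd\bigr)$. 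Since $\nu(\R^\d)=0$, the function $h_N$ together with its gradient decays at infinity strictly faster than $\g,\nabla\g$, and each $\f_{\eta_i}$ is compactly supported; hence $h_{N,\vec{\eta}}$ and $|y|^\gamma|\nabla h_{N,\vec{\eta}}|$ decay fast enough that multiplying the PDE for $h_{N,\vec{\eta}}$ by $h_{N,\vec{\eta}}$ and integrating by parts over $\R^{\d+1}$ produces no boundary term at infinity:
\begin{equation*}
\int_{\R^{\d+1}}|y|^\gamma|\nabla h_{N,\vec{\eta}}|^2=\cds\int_{\R^{\d+1}}h_{N,\vec{\eta}}\,d\Bigl(\sum_{i=1}^N\delta_{x_i}^{(\eta_i)}-N\mu\,\drd\Bigr).
\end{equation*}

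The next step is to evaluate the right-hand side by inserting $h_{N,\vec{\eta}}=h_N-\sum_j\f_{\eta_j}(\cdot-x_j)$ and $h_N=\sum_k\g(\cdot-x_k)-N\,\g*\mu$ and expanding. Three elementary facts handle every resulting term. First, $\g*\delta_0^{(\eta)}=\g_\eta:=\g-\f_\eta$ (both sides solve $-\div(|y|^\gamma\nabla\cdot)=\cds\delta_0^{(\eta)}$ with matching behavior at infinity), which yields the smeared mean-value evaluations $\int\g(\cdot-x_k)\,d\delta_{x_i}^{(\eta_i)}=\g(x_i-x_k)$ whenever $k\ne i$ (using $|x_i-x_k|\ge4\rr_i\ge4\eta_i>\eta_i$, so $\g(\cdot-x_k)$ solves $\div(|y|^\gamma\nabla\cdot)=0$ on $B(x_i,\eta_i)$) and $\int(\g*\mu)\,d\delta_{x_i}^{(\eta_i)}=(\g*\mu)(x_i)-\int_{\R^\d}\f_{\eta_i}(x-x_i)\,d\mu(x)$. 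Second, $\int\g(\cdot-x_i)\,d\delta_{x_i}^{(\eta_i)}=\g(\eta_i)$, since $\delta_{x_i}^{(\eta_i)}$ has mass $1$ and is carried by the sphere $\partial B(x_i,\eta_i)$, on which $\g(\cdot-x_i)\equiv\g(\eta_i)$. Third, $\int\f_{\eta_j}(\cdot-x_j)\,d\delta_{x_i}^{(\eta_i)}=0$ for all $i,j$: for $i\ne j$ the supports $B(x_j,\eta_j)$ and $\partial B(x_i,\eta_i)$ are disjoint because $\eta_i+\eta_j\le\rr_i+\rr_j\le\hal|x_i-x_j|$ by \eqref{defri}, while for $i=j$ the integrand vanishes identically on $\partial B(x_i,\eta_i)$ since $\f_{\eta_i}(\eta_i)=0$. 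Collecting the surviving contributions (the $\f$--$\delta^{(\eta)}$ cross terms drop, the $\sum_i\g(\eta_i)$ appears from the $k=i$ diagonal, and the $\mu$-pieces combine) gives
\begin{equation*}
\int_{\R^{\d+1}}|y|^\gamma|\nabla h_{N,\vec{\eta}}|^2=\cds\Bigl(\sum_{i\ne k}\g(x_i-x_k)-2N\sum_i(\g*\mu)(x_i)+N^2\iint\g\,d\mu\,d\mu\Bigr)+\cds\sum_i\g(\eta_i)+2\cds N\sum_i\int_{\R^\d}\f_{\eta_i}(x-x_i)\,d\mu(x).
\end{equation*}

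It then remains to recognize the first parenthesis as $2\F_N(\XN,\mu)$: expanding $\nu$ in \eqref{defFN} and using \eqref{electric assumption} to legitimize the $\mu$--$\mu$ and $\delta_{x_i}$--$\mu$ double integrals gives $\iint_{\triangle^c}\g\,d\nu\,d\nu=\sum_{i\ne k}\g(x_i-x_k)-2N\sum_i(\g*\mu)(x_i)+N^2\iint\g\,d\mu\,d\mu$, since deleting the diagonal $\triangle$ removes precisely the self-interactions of the point masses and does not affect the other pieces. Substituting this identity and dividing by $2\cds$ yields \eqref{spelloutFN} (with the understanding that $\f_\eta$ there abbreviates $\f_{\eta_i}$, and using that $\f_{\eta_i}$ is even).

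The one genuinely delicate point — and the reason the statement is quoted from \cite{PS17,S24} rather than reproved from scratch — is making the first two steps rigorous in the presence of the degenerate, singular weight $|y|^\gamma$: the object $\delta_{x_i}^{(\eta_i)}$ is a $|y|^\gamma$-weighted surface measure on $\partial B(x_i,\eta_i)$, a sphere that crosses the hyperplane $\{y=0\}$ where the weight vanishes, so the mean-value identities above, and the decay estimates underlying the boundary-term-free integration by parts, rest on the weighted potential theory developed in \cite{PS17}. Once those tools are granted (as recorded in \cite[Proposition 1.6]{PS17} and \cite[Lemma 4.10]{S24}), the computation above is purely algebraic, and the claimed identity follows.
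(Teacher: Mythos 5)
Your proof is correct and reproduces the same computation that underlies \cite[Proposition 1.6]{PS17} and \cite[Lemma 4.10]{S24}, which is exactly what the paper invokes: a renormalized Green's identity in the Caffarelli--Silvestre extension, followed by evaluation of $\int h_{N,\vec{\eta}}\,d\bigl(\sum_i\delta_{x_i}^{(\eta_i)}-N\mu\,\drd\bigr)$ via the smeared mean-value property of $\delta_{x_i}^{(\eta_i)}$ for weighted-harmonic functions, the mass-one property on the sphere, and disjointness of the truncation supports (guaranteed by $\eta_i\le\rr_i\le\frac14\min_{j\ne i}|x_i-x_j|$). The algebra collapses correctly to $2\cds\F_N+\cds\sum_i\g(\eta_i)+2N\cds\sum_i\int\f_{\eta_i}(x-x_i)\,d\mu$, and you rightly flag that the only genuinely delicate inputs --- the decay of $h_{N,\vec\eta}$ justifying the boundary-term-free integration by parts, and the weighted potential theory behind the mean-value identities for the degenerate $|y|^\gamma$-operator --- are precisely what the cited references supply; the paper itself offers no more than this citation.
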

We will very often use the truncated potential at distances $\rr_i$, then simply denoted $h_{N,\rr}$.

We can then define the local energy as announced in the introduction.
\begin{defi}[Local energy]
If $\carr_\ell$ is some cube of sidelength $\ell$ included in $\R^\d$, we let 
\begin{equation}
\tilde \F_N^{\carr_\ell}(\XN,\mu)= \int_{\square_\ell\times [-\ell, \ell]} \yg|\nab h_{N, \rr}|^2 
\end{equation}
where $h_N$ is defined as in \eqref{defhN}.
\end{defi}
The following provides   minimal distance controls. 
\begin{prop}\label{prop:MElb}
Assume $\s\in [(\d-2)_+, \d)$.
Let $\mu \in L^1(\R^\d)\cap L^\infty(\R^\d)$ with $\int_{\R^\d}\mu=1$, satisfying \eqref{electric assumption}, and let  $\ux_N \in (\R^\d)^N$ be a pairwise distinct configuration.

Let $\Omega \subset \R^\d$.
For any $\vec{\eta}$ satisfying $ \frac{1}{2} \rr_i \le \eta_i \le \rr_i$ for every $1\leq i\leq N$, it holds that 
\begin{multline}\label{eq:13}
 \frac{1}{\cd}\int_{\Omega \times [-N^{-1/\d}, N^{-1/\d}]}  \yg|\nab h_{N,\vec{\eta}}|^2 
\ge 
\begin{cases} 
\displaystyle \frac1{ C}\sum_{i: x_i\in \Omega, \dist(x_i, \pa \Omega) \ge \frac14 N^{-1/\d}}\g(\eta_i) & \text{if} \ \s\neq 0\\
\displaystyle \sum_{i: x_i\in  \Omega , \dist(x_i, \pa \Omega) \ge \frac14N^{-1/\d}}\g(40 \eta_i N^{1/\d} ) & \text{if} \ \s=0,\end{cases}
\end{multline}
where $C>0$ depends only on $\d$, $\s$ and $\|\mu\|_{L^\infty}$.

If $\Omega=\R^\d$, then  we also have that 
\begin{equation}\label{eq:pr1}
2 \left(\F_N(\ux_N, \mu)+ \frac{ N  \log N  }{{2} \d}  
\indic_{\s=0}\right) + C N^{1+\frac{\s}{\d}} \ge \begin{cases} 
\displaystyle \frac1{ C}\sum_{i=1}^N\g(\eta_i) & \text{if} \ \s\neq 0\\
\displaystyle \sum_{i=1}^N\g(\eta_i N^{1/\d} ) & \text{if} \ \s=0,\end{cases}
\end{equation}
and
\begin{equation}\label{eq:pr2}
\int_{\R^{\d+1}} |y|^\gamma |\nabla h_{N,\vec{\eta}}|^2 \leq C\(\F_N(\ux_N, \mu) +\frac{ N \log N }{2 \d} \indic_{\s=0}\) + C  N^{1+\frac{\s}{\d}} ,
\end{equation}
where $C>0$ depends only on $\d$, $\s$ and $\|\mu\|_{L^\infty}$.

\end{prop}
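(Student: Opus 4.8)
Everything is driven by the Riesz electric formulation \eqref{spelloutFN} of Lemma~\ref{riesz electric}, which I would rewrite as the identity
\begin{equation*}
\int_{\R^{\d+1}}|y|^\gamma|\nabla h_{N,\vec\eta}|^2=2\cds\,\F_N(\ux_N,\mu)+\cds\sum_{i=1}^N\g(\eta_i)+2\cds N\sum_{i=1}^N\int_{\R^\d}\f_{\eta_i}(x-x_i)\,d\mu(x),
\end{equation*}
together with a (possibly multiscale) lower bound for the weighted Dirichlet energy of $h_{N,\vec\eta}$. Since $\f_{\eta_i}\ge0$, $\mu\in L^\infty$ and $\int_{\R^\d}\f_{\eta_i}\le C\eta_i^{\d-\s}\le CN^{\frac{\s}{\d}-1}$ by \eqref{eq:intf} and $\eta_i\le\rr_i\le\tfrac14N^{-1/\d}$, the last term lies in $[0,C\|\mu\|_{L^\infty}N^{1+\frac{\s}{\d}}]$. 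Hence \eqref{eq:pr2} is immediate once \eqref{eq:pr1} is known, and the whole proposition reduces to proving the localized lower bound \eqref{eq:13} and a global lower bound sharp enough to yield \eqref{eq:pr1}.

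\textbf{The localized bound \eqref{eq:13}.} First I would fix $i$ with $x_i\in\Omega$ and $\dist(x_i,\partial\Omega)\ge\tfrac14N^{-1/\d}$, and set $R_i:=c_0\min\big(\tfrac12\min_{j\ne i}|x_i-x_j|,\ \dist(x_i,\partial\Omega),\ \tfrac12N^{-1/\d}\big)$ with $c_0\in(0,1)$ a small dimensional constant. Using $\eta_i\le\rr_i\le\tfrac14\min(\min_{j\ne i}|x_i-x_j|,N^{-1/\d})$ one checks that $R_i>\eta_i$, that the extended annulus $A_i:=B((x_i,0),R_i)\setminus B((x_i,0),\eta_i)$ is contained in $\Omega\times[-N^{-1/\d},N^{-1/\d}]$, and that the $A_i$ are pairwise disjoint. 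On $B((x_i,0),\rho)$ with $\eta_i<\rho<R_i$ the only smeared point charge of $h_{N,\vec\eta}$ enclosed is $\delta_{x_i}^{(\eta_i)}$ (mass $1$), plus a continuous $\mu$-charge of mass $\le C\|\mu\|_{L^\infty}N\rho^\d\le\tfrac12$ for $c_0$ small; so the outward flux $\int_{\partial B((x_i,0),\rho)}|y|^\gamma\partial_\rho h_{N,\vec\eta}$ has modulus $\ge\tfrac12\cds$, and since $\int_{\partial B((x_i,0),\rho)}|y|^\gamma\sim\rho^{\s+1}$ (using $\d-1+\gamma=\s$ and $\s>\d-2$, which makes the angular integral finite), Cauchy--Schwarz on each sphere and integration in $\rho$ give
\begin{equation*}
\int_{A_i}|y|^\gamma|\nabla h_{N,\vec\eta}|^2\gtrsim\int_{\eta_i}^{R_i}\frac{d\rho}{\rho^{\s+1}}\sim\g(\eta_i)-\g(R_i).
\end{equation*}
For $\s\ne0$ this is $\gtrsim\g(\eta_i)$, since $R_i\ge2\eta_i$ forces $\g(\eta_i)-\g(R_i)\ge(1-2^{-\s})\g(\eta_i)$; summing over admissible $i$ and using disjointness yields \eqref{eq:13}. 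For $\s=0$ a bounded-ratio annulus only gives a bounded amount of energy, and this step must be replaced by the Sandier--Serfaty ball-growth construction (as in \cite{SS12,PS17}; see \cite[Chap.~4]{S24}): one grows a disjoint family of annuli each of which may enclose a cluster of nearby $x_i$, the energy gathered being bounded below by the \emph{square} of the enclosed charge; stopping the growth at the interparticle scale $N^{-1/\d}$ (whence the hypothesis $\dist(x_i,\partial\Omega)\ge\tfrac14N^{-1/\d}$) and redistributing over each cluster produces exactly $\sum_i\g(40\,\eta_iN^{1/\d})$. Carrying out this ball construction in the extended space, and in particular checking that the grown annuli stay inside the prescribed slab $\Omega\times[-N^{-1/\d},N^{-1/\d}]$, is the step I expect to be the main obstacle.

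\textbf{The global bounds \eqref{eq:pr1}, \eqref{eq:pr2}.} Here I would run the same construction with $\Omega=\R^\d$ (the boundary condition being then vacuous) but extract a \emph{sharp} lower bound: on a cluster of $k$ mutually close points the Dirichlet energy gathered grows like $k^2$ rather than $k$, so that after subtracting the $\cds\sum_i\g(\eta_i)$ already bookkept by the electric identity there remains a definite positive fraction of $\sum_i\g(\eta_i)$ --- whereas a point with no close neighbour has $\eta_i\sim N^{-1/\d}$, hence $\g(\eta_i)=O(N^{\s/\d})$ (resp.\ $O(\log N)$ if $\s=0$), so the contribution of such points is of lower order. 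This improvement over the mere $\F_N\ge -CN^{1+\frac\s\d}$ is precisely the content of the minimal-distance lower bounds of \cite[Chap.~4]{S24} (and \cite{PS17} in the Riesz case); combined with the identity above and $N\sum_i\int\f_{\eta_i}\,d\mu\le CN^{1+\frac\s\d}$ it gives \eqref{eq:pr1}, the $\tfrac{N\log N}{2\d}\indic_{\s=0}$ correction accounting for the accumulation of the logarithm only up to the interparticle scale. Finally, re-inserting \eqref{eq:pr1} into the identity and absorbing the background term yields \eqref{eq:pr2}.
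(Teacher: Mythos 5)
The paper does not prove Proposition~\ref{prop:MElb}: it is stated as a recall, imported from the Riesz electric--formulation toolbox of \cite{PS17} and \cite[Chap.~4]{S24}. So your proposal is filling in a proof the authors suppressed, and I will judge it on its own terms.

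Your high-level structure is the right one and matches those references: start from the electric identity of Lemma~\ref{riesz electric}, observe that the smearing term $N\sum_i\int\f_{\eta_i}\,d\mu$ lies in $[0,C\|\mu\|_{L^\infty}N^{1+\s/\d}]$ by \eqref{eq:intf} and $\eta_i\lesssim N^{-1/\d}$, reduce \eqref{eq:pr2} to \eqref{eq:pr1}, and obtain the lower bounds from disjoint annular/flux estimates plus a ball-merging argument. That is correct.

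The gap is in the annulus step for \eqref{eq:13}. With your choice $R_i=c_0\min\bigl(\tfrac12\min_{j\ne i}|x_i-x_j|,\,\dist(x_i,\partial\Omega),\,\tfrac12N^{-1/\d}\bigr)$ and $c_0\in(0,1)$ \emph{small} (which you need so that $N\mu(B_{R_i})\le\tfrac12$ when $\|\mu\|_{L^\infty}$ is not small), the claimed inequalities ``$R_i>\eta_i$'' and a fortiori ``$R_i\ge 2\eta_i$'' are false in general. Using $\dist(x_i,\partial\Omega)\ge\tfrac14 N^{-1/\d}$ one only gets $R_i\ge c_0\,\rrh_i$ in the worst case (e.g.\ when $\min_{j\ne i}|x_i-x_j|\ge N^{-1/\d}$, so that $\rrh_i=\tfrac14N^{-1/\d}$), and since $\eta_i$ is allowed up to $\rrh_i$, the annulus $A_i=B((x_i,0),R_i)\setminus B((x_i,0),\eta_i)$ can be empty for $c_0<1$. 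So your single-annulus flux estimate, as written, yields nothing for such points, yet \eqref{eq:13} still demands a lower bound $\gtrsim\g(\eta_i)\sim N^{\s/\d}$. There is no way to fix this purely by adjusting $c_0$: making $c_0$ bigger to widen the annulus breaks the ``enclosed background charge $\le\tfrac12$'' claim once $\|\mu\|_{L^\infty}$ is large, and that failure kills the flux lower bound. This is exactly the clustering/background difficulty that the Sandier--Serfaty ball-growth construction is designed to resolve, and it is needed for \emph{all} $\s$, not only $\s=0$. (In \cite{S24} it is also packaged via a rescaling by $\lambda=\|\mu\|_{L^\infty}^{-1/\d}N^{-1/\d}$ rather than $N^{-1/\d}$, which is precisely the dependence the present paper elects not to track.) Once you run the ball growth uniformly and keep track of the interparticle stopping radius — producing $\g(\eta_i)-\g(cN^{-1/\d})$ in the Riesz case and $\g(40\eta_i N^{1/\d})$ when $\s=0$ — the rest of the argument (the $\mathcal O(N^{1+\s/\d})$ absorption, the $\tfrac{N\log N}{2\d}$ bookkeeping from $\g(\eta_i)=\g(\eta_i N^{1/\d})+\tfrac1\d\log N$, and the derivation of \eqref{eq:pr2} from \eqref{eq:pr1}) goes through as you describe.
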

Note that it follows from \eqref{eq:pr2} (if $\s< 0$ there is nothing to prove) that 
\be \label{energylowerbound}
\F_N(\ux_N, \mu)+ \frac{ N  \log N  }{{2} \d}  \indic_{\s=0}\ge - C N^{1+\frac\s\d},
\ee
where $C>0$ depends only on $\d,\s$ and $\|\mu\|_{L^\infty}$, which proves that $\F_N$ is uniformly bounded below.


We note that \eqref{eq:pr2} implies that the macroscopic law \eqref{macrolaw1} can be expressed as a control of the form 
\be\label{macrolaw2}
\int_{\R^{\d+1}}\yg |\nab h_{N,\rr}|^2 + N \sum_{i=1}^N \zeta_V(x_i) \lesssim_\beta N^{1+\frac\s\d}\ee
except with probability $\le e^{-C \beta N}$, where $C>0$ depends only on $\d,\s, \zeta_V$ and $\|\muv\|_{L^\infty}$.

The following is obtained  by combining  Proposition 4.28, Lemma 4.20, Lemma 4.25 and Lemma 4.26 in \cite{S24} (taking $N^{-1/\d}$ for the value of $\lambda$ there since we do not track the $\|\mu\|_{L^\infty}$ dependence).

\begin{prop}[Control of fluctuations, discrepancies and minimal distances]\label{pro:controlfluct}
 Assume $\varphi$ is a function such that $\Omega \subset \R^\d$ contains a $ 2N^{-1/\d}$-neighborhood of its support.
For any configuration $\XN(\in \R^\d)^N$, let  $I_\Omega$ denote $\{i, x_i \in \Omega\}$ and $\#I_\Omega$ its cardinality. 
For any $\eta \ge N^{-1/\d}$, we have
  \begin{multline}\label{rieszfluct2} 
\left|\int_{\R^\d} \varphi\( \sum_{i=1}^N \delta_{x_i}- N d\mu\) \right|  
\\
\le C \( \eta^{\gamma-1} \|\varphi\|_{L^2(\Omega)}^2 + \eta^{\gamma+1}\|\nab \varphi\|_{L^2(\Omega)}^2 \)^\hal  
  \(\int_{ \Omega\times [-2\eta,2\eta]} \yg |\nab h_{N,\rr}|^2\)^{\hal} + C  \# I_\Omega |\varphi|_{C^1} N^{-\frac{1}{\d}} ,
  \end{multline}
where $C>0$ depends only on $\d$ and $\s$ and $\|\muv\|_{L^\infty}$.

If $B_R $ is some  ball  of radius $R>2 N^{-1/\d}$, letting $D(B_R)= \int_{B_R}d\( \sum_{i=1}^N \delta_{x_i}-N\muv\)$, we have either 
$|D(B_R)|\le CN^{1-\frac1\d}R^{\d-1}$ or 
\be \label{discest}
\frac{D(B_R)^2}{R^\s}\left| \min \( 1, \frac{D(B_R)}{R^\d}\) \right|\le C \int_{B_{2R}\times[-2R,2R] }\yg|\nab h_{N,\rr}|^2 [\XN, \mu]
\ee 
where $C>0$ depends on $\d,\s$ and $\|\mu\|_{L^\infty}$.

If $\Omega$ is a general set of finite perimeter and $\Omega_\delta$ its $\delta$-neighborhood, if  $D(\Omega) \ge 0$, for any  $\|\mu\|_{L^\infty}^{-1/\d} <\delta \le N^{1/\d}$,
  \be   \label{disc2buriesz}\Big( D(\Omega)- \|\mu\|_{L^\infty} |\Omega_\delta\backslash \Omega| \Big)_+^2\le C
  \( \frac{|\Omega_\delta|}{|\p \Omega_\delta|}\)^\gamma \frac{|\Omega_\delta|}{\delta} 
  \int_{ (\Omega_\delta \backslash \Omega) \times  -\frac{|\Omega_\delta|}{|\partial \Omega_\delta|}-\delta ,\frac{|\Omega_\delta|}{|\partial \Omega_\delta|}+\delta]}\yg |\nab h_{N,\rr}|^2 \ee
If  $D(\Omega) \le 0$, for any $-N^{1/\d} \le \delta < -\|\mu\|_{L^\infty}^{-1/\d} $,
  \be \label{disc1buriesz}
  \Big( D(\Omega)+\|\mu\|_{L^\infty} |\Omega\backslash \Omega_\delta|\Big)_-^2\le  C
  \( \frac{|\Omega|}{|\p \Omega|}\)^\gamma \frac{|\Omega|}{\delta} 
   \int_{ ( \Omega\backslash \Omega_\delta)\times [ -\frac{|\Omega|}{|\partial \Omega|}-|\delta| ,\frac{|\Omega|}{|\partial \Omega|}+|\delta|]}\yg |\nab h_{N, \rr}|^2   .\ee
   \end{prop}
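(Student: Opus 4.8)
The plan is to obtain the four estimates by assembling results from the electric-formulation toolbox of \cite{PS17} and \cite[Chap.~4]{S24}, rewritten in the present notation and specialized to the lengthscale $\lambda=N^{-1/\d}$ (without tracking the dependence on $\|\mu\|_{L^\infty}$). The backbone is Lemma \ref{riesz electric}, which realizes $\F_N(\XN,\mu)$ as the renormalized weighted Dirichlet energy $\frac{1}{2\cds}\int\yg|\nab h_{N,\rr}|^2$ (minus explicit self-interaction and smearing terms) of the truncated extended potential solving \eqref{PDE truncated potential}.

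First I would prove the fluctuation bound \eqref{rieszfluct2} by testing the defining relation $-\div(\yg\nab h_{N,\rr})=\cds(\sum_i\delta_{x_i}^{(\rr_i)}-N\mu\,\drd)$ against a weighted-harmonic extension $\bar\varphi$ of $\varphi$ to $\R^{\d+1}$: a function with $\bar\varphi(\cdot,0)=\varphi$, with $\div(\yg\nab\bar\varphi)=0$ off $\{y=0\}$, supported in $\Omega\times[-2\eta,2\eta]$, and satisfying $\int\yg|\nab\bar\varphi|^2\lesssim\eta^{\gamma-1}\|\varphi\|_{L^2(\Omega)}^2+\eta^{\gamma+1}\|\nab\varphi\|_{L^2(\Omega)}^2$ (such an extension is built in \cite[Chap.~4]{S24}). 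Integration by parts in that slab yields $\int\varphi\,d\big(\sum_i\delta_{x_i}^{(\rr_i)}-N\mu\big)=\frac{1}{\cds}\int_{\Omega\times[-2\eta,2\eta]}\yg\,\nab h_{N,\rr}\cdot\nab\bar\varphi$, and Cauchy--Schwarz gives the first term on the right of \eqref{rieszfluct2}. Since each $\f_{\rr_i}$ is supported within distance $\rr_i\le N^{-1/\d}$ of $x_i$, replacing the smeared charges by the true point masses costs at most $\sum_{i\in I_\Omega}|\varphi|_{C^1}\rr_i\lesssim\#I_\Omega\,|\varphi|_{C^1}N^{-1/\d}$, which is the second term; this reproduces \cite[Proposition 4.28]{S24}.

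Next, for the ball-discrepancy estimate \eqref{discest} I would apply \eqref{rieszfluct2} with $\Omega=B_{2R}$, $\eta\sim R$, to a Lipschitz cutoff $\varphi$ equal to $1$ on $B_R$, supported in $B_{2R}$, with $|\nab\varphi|\lesssim R^{-1}$; using $\gamma=\s-\d+1$ the prefactor collapses to $\sim R^{\s/2}\big(\int_{B_{2R}\times[-2R,2R]}\yg|\nab h_{N,\rr}|^2\big)^{1/2}$, so $|D(B_R)|\lesssim R^{\s/2}\big(\int_{B_{2R}\times[-2R,2R]}\yg|\nab h_{N,\rr}|^2\big)^{1/2}+C\#I_{B_{2R}}R^{\d-1}$. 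When the number of points in $B_{2R}$ exceeds $CR^\d$ the second term is absorbed into the first (this is the branch producing \eqref{discest} with $\min(1,\cdot)=1$), otherwise one lands in the dichotomy alternative $|D(B_R)|\le CN^{1-1/\d}R^{\d-1}$; the nonlinear $\min(1,D(B_R)/R^\d)$ refinement for a large excess or deficit comes from the coercivity/monotonicity of the weighted energy as in \cite{PS17}, which controls a higher power of the discrepancy. This is \cite[Lemma 4.20]{S24}. For the buried-ball estimates \eqref{disc2buriesz}--\eqref{disc1buriesz}, for a set $\Omega$ of finite perimeter with a net excess I would follow \cite[Lemmas 4.25 and 4.26]{S24}: construct in the collar $(\Omega_\delta\setminus\Omega)$, lifted to a slab of height $\sim|\Omega_\delta|/|\partial\Omega_\delta|+\delta$, a weighted divergence-free competitor $v$ carrying the net discrepancy as normal flux across $\partial\Omega$ and vanishing on the outer boundary, compare $\int\yg|v|^2$ with $\int\yg|\nab h_{N,\rr}|^2$ on the collar, and absorb the geometry through a weighted trace/Poincar\'e inequality, which produces the prefactor $(|\Omega_\delta|/|\partial\Omega_\delta|)^\gamma\,|\Omega_\delta|/\delta$; the subtracted term $\|\mu\|_{L^\infty}|\Omega_\delta\setminus\Omega|$ accounts for the amount of discrepancy the collar can neutralize against the background at no energy cost, and the negative-discrepancy case is symmetric.

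The step I expect to be most delicate is ensuring that every weighted Dirichlet energy appearing on the right-hand sides is genuinely \emph{localized in the extended variable} $y$, i.e.\ that the competitor extensions and vector fields above can be taken compactly supported in slabs of the prescribed small heights ($2\eta$, $2R$, or $\sim|\Omega_\delta|/|\partial\Omega_\delta|+\delta$); this localization, which must be carried out with care because of the degenerate weight $\yg$, is exactly what makes the controls usable all the way down to the microscale. A secondary point is keeping the nonlinear structure of \eqref{discest} sharp. All these verifications are carried out in detail in \cite[Chap.~4]{S24}, so in the present paper it suffices to invoke those statements (Proposition~4.28 and Lemmas~4.20, 4.25, 4.26 there) and match constants.
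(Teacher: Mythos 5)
Your proposal matches the paper's approach exactly: the paper derives this proposition simply by invoking Proposition 4.28 and Lemmas 4.20, 4.25, 4.26 of \cite{S24} with $\lambda=N^{-1/\d}$, which are precisely the four results you identify (and whose proofs you sketch correctly). Your final remark that in the present paper it suffices to cite those statements is exactly what the authors do.
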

Thus, thanks to the control \eqref{loiloc}, which provides a control on $\#I_\Omega$, we easily deduce controls on the quantities in \eqref{rieszfluct2} and \eqref{discest} except with small probability.

We will also need the following simple control on fluctuations.
\begin{lem}\label{RoughFluct}
Assume that  $\varphi \in C^1(\R^\d)$ is  compactly supported in some cube $\carr_r\subset \bulk.$  Let $\XN$ be a configuration such that the local laws of Theorem \ref{Local Law} hold on $\carr_{2r} \supset \carr_r$. Then,
\begin{equation}\label{Linf}
\left|\int_{\carr_r}\varphi(x)\, d\fluct_{\muv}(x)\right|\lesssim_\beta  Nr^\d\|\varphi\|_{L^\infty(\carr_r)}.
\end{equation}
\end{lem}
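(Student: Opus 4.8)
The plan is to bound the fluctuation of $\varphi$ against $\fluct_{\muv}$ by splitting it into a number-of-points contribution and an energetic contribution, both of which are controlled once the local laws hold on $\carr_{2r}$. First I would apply Proposition~\ref{pro:controlfluct}, taking $\Omega = \carr_{r}$ (which indeed contains a $2N^{-1/\d}$-neighborhood of $\supp\varphi$ once one enlarges slightly, or one works with $\carr_{2r}$ as the ambient cube and uses that $\varphi$ is compactly supported strictly inside $\carr_r$), and the truncation scale $\eta = N^{-1/\d}$, the smallest admissible value. This yields
\be
\left|\int_{\carr_r}\varphi\, d\fluct_{\muv}\right| \le C\(\eta^{\gamma-1}\|\varphi\|_{L^2(\carr_r)}^2 + \eta^{\gamma+1}\|\nab\varphi\|_{L^2(\carr_r)}^2\)^{\hal}\(\int_{\carr_r\times[-2\eta,2\eta]}\yg|\nab h_{N,\rr}|^2\)^{\hal} + C\,\#I_{\carr_r}\,|\varphi|_{C^1}N^{-1/\d}.
\ee

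Next I would estimate each factor crudely. For the $L^2$ norms we bound $\|\varphi\|_{L^2(\carr_r)}^2 \le r^\d\|\varphi\|_{L^\infty}^2$ and $\|\nab\varphi\|_{L^2(\carr_r)}^2 \le r^\d\|\nab\varphi\|_{L^\infty}^2$, and since $\varphi$ varies on the cube of size $r$ a rough bound $\|\nab\varphi\|_{L^\infty}\lesssim r^{-1}\|\varphi\|_{L^\infty}$ is available (or, if one does not want to assume that, one simply keeps $\|\varphi\|_{C^1}$ and notes $r \le 1$). Using $\gamma = \s - \d + 1$ from \eqref{definition of gamma} and $\eta = N^{-1/\d}$, the prefactor is $\lesssim (N^{-(\gamma-1)/\d} r^\d)^{\hal}\|\varphi\|_{L^\infty} = (N^{(\d-\s)/\d}r^\d)^{\hal}\|\varphi\|_{L^\infty}$ up to the lower-order gradient term, because $\eta^{\gamma+1}/\eta^{\gamma-1} = \eta^2 \le r^2$. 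For the energy factor, I invoke the local law \eqref{loiloc} of Theorem~\ref{Local Law}, which on $\carr_{2r}\supset\carr_r$ (noting $[-2\eta,2\eta]\subset[-2r,2r]$ since $\eta = N^{-1/\d}\le r$) gives $\int_{\carr_r\times[-2\eta,2\eta]}\yg|\nab h_{N,\rr}|^2 \le \tilde\F_N^{\carr_{2r}}(\XN,\muv) \lesssim_\beta r^\d N^{1+\frac\s\d}$. Multiplying the two square roots yields $\lesssim_\beta (N^{(\d-\s)/\d}r^\d)^{\hal}(r^\d N^{1+\frac\s\d})^{\hal}\|\varphi\|_{L^\infty} = r^\d N\|\varphi\|_{L^\infty}$, since the powers of $N$ combine as $N^{(\d-\s)/(2\d)}N^{(1+\s/\d)/2} = N$. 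For the number term, the discrepancy/counting bound \eqref{controlnbreintro} of Theorem~\ref{Local Law} gives $\#I_{\carr_r}\lesssim_\beta r^\d N$, so that $C\#I_{\carr_r}|\varphi|_{C^1}N^{-1/\d}\lesssim_\beta r^\d N\cdot(r^{-1}N^{-1/\d})\|\varphi\|_{L^\infty}$, which is lower order (indeed smaller by a factor $r^{-1}N^{-1/\d}\le 1$). Collecting both contributions gives \eqref{Linf}.

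The one point requiring a little care — and the main (minor) obstacle — is the bookkeeping of the neighborhood hypothesis in Proposition~\ref{pro:controlfluct}: that proposition wants $\Omega$ to contain a $2N^{-1/\d}$-neighborhood of $\supp\varphi$, whereas here $\varphi$ is merely compactly supported in $\carr_r$ with no quantitative margin. The clean fix is to observe that $N$ is large, so $2N^{-1/\d}$ is tiny compared to the distance from $\supp\varphi$ to $\partial\carr_r$ for $N$ large, or alternatively to state and use the bound with $\Omega = \carr_{3r/2}\subset\carr_{2r}$, whose energy and point count are still controlled by the local law on $\carr_{2r}$ after relabeling constants; this only changes the implicit constants. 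A second bookkeeping point is that the local law in \eqref{loiloc} is stated for the full extended slab $\carr_\ell\times[-\ell,\ell]$ and $\tilde\F_N^{\carr_\ell}$, and one uses monotonicity of the integral in the domain (the integrand $\yg|\nab h_{N,\rr}|^2$ is nonnegative) to restrict to $\carr_r\times[-2\eta,2\eta]$. Neither of these is a genuine difficulty; the estimate is, as the lemma's placement suggests, a soft consequence of Theorem~\ref{Local Law} combined with Proposition~\ref{pro:controlfluct}.
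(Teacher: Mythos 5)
Your proof goes through Proposition~\ref{pro:controlfluct}, which is a genuinely different — and substantially heavier — route than the paper's, and it contains a real gap. The paper's own proof is a one-line total variation bound: since $d\fluct_{\muv}=\sum_i\delta_{x_i}-N\muv\,dx$, one has
\begin{equation*}
\left|\int_{\carr_r}\varphi\, d\fluct_{\muv}\right|\le \|\varphi\|_{L^\infty(\carr_r)}\big(\#I_{\carr_r}+N\,|\carr_r|\,\|\muv\|_{L^\infty}\big),
\end{equation*}
and then \eqref{controlnbreintro} (valid on $\carr_{2r}\supset\carr_r$) gives $\#I_{\carr_r}\lesssim_\beta Nr^\d$, which is the whole proof. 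No electric-energy machinery, no $C^1$ norm, no choice of truncation scale $\eta$ enters; in fact the argument works for bounded $\varphi$ without any regularity.

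The gap in your version is precisely the one you flag as ``a little care'' and then wave away: the route through Proposition~\ref{pro:controlfluct} necessarily produces a bound involving $\|\nab\varphi\|_{L^2}$ and $|\varphi|_{C^1}$, and these cannot be absorbed into $\|\varphi\|_{L^\infty}$. The ``rough bound'' $\|\nab\varphi\|_{L^\infty}\lesssim r^{-1}\|\varphi\|_{L^\infty}$ that you invoke is false in general and in fact goes in the wrong direction: since $\varphi$ is compactly supported in $\carr_r$ it vanishes on $\partial\carr_r$, which forces $|\varphi|_{C^1}\gtrsim r^{-1}\|\varphi\|_{L^\infty}$, and nothing prevents a highly oscillatory $\varphi$ from having $|\varphi|_{C^1}$ arbitrarily large relative to $\|\varphi\|_{L^\infty}/r$. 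Your fallback — ``keep $\|\varphi\|_{C^1}$ and note $r\le 1$'' — proves a weaker statement with $\|\varphi\|_{C^1}$ on the right-hand side, not \eqref{Linf} with $\|\varphi\|_{L^\infty}$ alone. The additive term $C\#I_\Omega|\varphi|_{C^1}N^{-1/\d}$ has the same problem. So as written this is not a proof of the stated lemma; it becomes one only after the additional (unstated, and not generically valid) assumption $|\varphi|_{C^1}\lesssim r^{-1}\|\varphi\|_{L^\infty}$. The moral is that when the desired bound scales like the total variation of $\fluct_{\muv}$ on the cube, one should not reach for the fluctuation/energy estimate, which is designed to beat that trivial scale and pays for it with derivative norms.
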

\begin{proof}
We may bound 
\begin{equation*}
\left|\int_{\carr_r}\varphi(x)\, d\fluct_{\muv}(x)\right|\le \|\varphi\|_{L^\infty(\carr_r)} (\# I_{\carr_R} + N |\carr_r| \|\muv\|_{L^\infty} ) .
\end{equation*}
The local laws on $\carr_{2r}$  provide the desired bound.

\end{proof}

\subsection{Transport calculus and commutator estimates}
We may now recast the change of variables made in the proof of Lemma \ref{lem2.1} at this next-order level as follows. This is a ``post-splitting'' version of Lemma \ref{lem2.1}. 

\begin{lem} Let $\Phi_t=\id + t\psi$ with $\psi$ as in Proposition \ref{transport}, and $\mu_t=\Phi_t\#\muv$. Let $\mathcal{G}$ be an event.
We have
\begin{align}\label{09}
 &\log\frac{\K_{N,\beta}^{\mathcal{G}}(\mu_t,\zeta_V\circ \Phi_t^{-1})}{\K_{N,\beta}(\muv,\zeta_V)} +N (\Ent(\mu_t)- \Ent(\muv))\\
 \notag
 &=\log \Esp_{\PNbeta} \(  \exp\(-\beta N^{-\frac\s\d}\(\F_N(\Phi_t(\XN),\Phi_t\#\muv)-\F_N(\XN, \muv)\)+\Fluct_{\muv}( \log \det D\Phi_t)  \)\indic_{\mathcal G}\)\\   \notag
 &=\log\Esp_{\PNbeta}(e^{T_2}\indic_{\mathcal G} ),\end{align}
 where $T_2$ is as in \eqref{t2}.
\end{lem}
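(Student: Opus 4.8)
The plan is to read the identity as a change of variables in the next-order partition function, exactly parallel to the proof of Lemma \ref{lem2.1} but now carried out ``post-splitting''. The starting point is that, by \eqref{splitgibbs}, $\K_{N,\beta}(\muv,\zeta_V)$ is the normalizing constant of $\PNbeta$, so for any event $\mathcal G$ and any integrable $F$,
\[
\Esp_{\PNbeta}\big[F(\XN)\indic_{\mathcal G}\big]=\frac{1}{\K_{N,\beta}(\muv,\zeta_V)}\int_{\mathcal G}F(\XN)\,\exp\Big(-\beta N^{-\frac\s\d}\big(\F_N(\XN,\muv)+N\textstyle\sum_i\zeta_V(x_i)\big)\Big)\,d\XN .
\]
Applying this with $F=e^{T_2}$ and spelling out $T_2$ from \eqref{t2}, the two copies of $\F_N(\XN,\muv)$ cancel and, after writing $\Fluct_{\muv}(\log\det D\Phi_t)=\sum_i\log\det D\Phi_t(x_i)-N\int\log\det D\Phi_t\,d\muv$, the exponent becomes
\[
-\beta N^{-\frac\s\d}\Big(\F_N(\Phi_t(\XN),\Phi_t\#\muv)+N\textstyle\sum_i\zeta_V(x_i)\Big)+\textstyle\sum_i\log\det D\Phi_t(x_i)-N\!\int\log\det D\Phi_t\,d\muv .
\]
The last term does not depend on $\XN$ and factors out of the integral; the factor $e^{\sum_i\log\det D\Phi_t(x_i)}=\prod_i\det D\Phi_t(x_i)$ is precisely the Jacobian of the map $\XN\mapsto(\Phi_t(x_1),\dots,\Phi_t(x_N))$.

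Next I would perform the change of variables $y_i=\Phi_t(x_i)$. For $|t|$ small this is legitimate: by Proposition \ref{transport}, $\psi$ is continuous on $\R^\d$, bounded, with bounded derivatives on $\Sigma$ and on $\Sigma^c$ separately and decay at infinity, so $\Phi_t=\id+t\psi$ is a bi-Lipschitz homeomorphism of $\R^\d$ with $\det D\Phi_t>0$ everywhere, and $\mu_t=\Phi_t\#\muv$ again has a bounded density so that $\Ent(\mu_t)$ is well defined. Under this substitution, $\F_N(\Phi_t(\XN),\Phi_t\#\muv)=\F_N(Y_N,\mu_t)$ by \eqref{defFN}, $\sum_i\zeta_V(x_i)=\sum_i(\zeta_V\circ\Phi_t^{-1})(y_i)$, $\prod_i\det D\Phi_t(x_i)\,d\XN=dY_N$, and $\indic_{\mathcal G}$ becomes the indicator of $\Phi_t(\mathcal G)$ in the $Y_N$ variables, which is exactly the event built into $\K_{N,\beta}^{\mathcal G}(\mu_t,\cdot)$ in \eqref{defK fluct event} under the (harmless) convention that the superscript event there is imposed after the transport. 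Comparing with \eqref{defK fluct event} yields
\[
\Esp_{\PNbeta}\big[e^{T_2}\indic_{\mathcal G}\big]=\frac{e^{-N\int\log\det D\Phi_t\,d\muv}}{\K_{N,\beta}(\muv,\zeta_V)}\,\K_{N,\beta}^{\mathcal G}\big(\mu_t,\zeta_V\circ\Phi_t^{-1}\big).
\]

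Finally I would invoke the entropy transformation rule: since $\mu_t=\Phi_t\#\muv$, changing variables in $\int\mu_t\log\mu_t$ gives $\Ent(\mu_t)=\Ent(\muv)-\int\log\det D\Phi_t\,d\muv$, hence $-N\int\log\det D\Phi_t\,d\muv=N\big(\Ent(\mu_t)-\Ent(\muv)\big)$. Substituting this into the previous display and taking logarithms produces the first equality in \eqref{09}; the intermediate expression in \eqref{09} is merely the definition \eqref{t2} of $T_2$ written out, so the chain of equalities is complete. I do not expect a serious obstacle here: the only points requiring care are the verification that $\Phi_t$ is a genuine bi-Lipschitz diffeomorphism of $\R^\d$ for small $t$ (supplied by the estimates of Proposition \ref{transport}, in particular the global continuity of $\psi$ and its decay at infinity) and the bookkeeping of how the good event $\mathcal G$ is carried along by the transport, which is purely a matter of fixing the convention in \eqref{defK fluct event}.
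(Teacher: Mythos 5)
Your argument is correct and is essentially the same as the paper's, just carried out in the opposite direction: the paper starts from the ratio $\K_{N,\beta}^{\mathcal G}(\mu_t,\zeta_V\circ\Phi_t^{-1})/\K_{N,\beta}(\muv,\zeta_V)$, changes variables $y_i=\Phi_t(x_i)$, and recognizes the resulting integral as $\Esp_{\PNbeta}[e^{T_2}\indic_{\mathcal G}]$ via \eqref{splitgibbs}, while you begin from the expectation, expand $T_2$, change variables, and compare to \eqref{defK fluct event}; the entropy identity $\Ent(\mu_t)-\Ent(\mu_0)=-\int\log\det D\Phi_t\,d\mu_0$ is used identically in both. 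Your extra remarks on why $\Phi_t$ is a bi-Lipschitz diffeomorphism (via the regularity and decay of $\psi$ from Proposition \ref{transport}) and on the convention by which the event $\mathcal G$ is carried along by the transport are sensible clarifications of points the paper leaves implicit, but they do not change the substance of the argument.
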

\begin{proof} By the change of variables $y_i=\Phi_t(x_i)$ we have
 \begin{align*}
& \frac{\K_{N,\beta}^{\mathcal{G}}(\mu_t,\zeta\circ \Phi_t^{-1})}{\K_{N,\beta}(\muv,\zeta_V)} 
\\ &=
  \frac{1}{ \K_{N,\beta}(\muv, \zeta_V)} \int_{(\R^\d)^N } \exp\( -\beta N^{-\frac\s\d} \F_N(\Phi_t(\XN), \Phi_t \# \muv) + N\sum_{i=1}^N \zeta_V(x_i) \) \prod_{i=1}^N\det D\Phi_t (x_i)\indic_{\mathcal G}~ dX_N\\
&  = \Esp_{\PNbeta} \( \exp\(- \beta N^{-\frac\s\d} \( \F_N(\Phi_t(\XN), \Phi_t \# \muv) -\F_N(\XN, \muv) \) 
   +\sum_{i=1}^N \log \det D\Phi_t (x_i)\)\)\indic_{\mathcal G}~ dX_N,\end{align*}
   where we used \eqref{splitgibbs}.
Since $\mu_t=\Phi_t\#\mu_0$ we have  $\det D\Phi_t= \frac{\mu_0}{\mu_t\circ \Phi_t}$, and thus
\be\int \log \det D\Phi_t d\mu_0= \int \log \mu_0 d\mu_0- \int \log \mu_t(\Phi_t(x)) d\mu_t= \Ent(\mu_0)-\Ent(\mu_t). \ee
The result follows by  \eqref{t2}.
\end{proof}

Wishing to linearize these expressions as $t\to 0$, we are led to considering successive derivatives of $\F_N$ along a general transport. More precisely, we denote for any $\psi$, 
\be\label{defA1} \Ani_1(\XN , \mu, \psi):= \hal \int_{\triangle^c} \nab \g(x-y)\cdot (\psi(x)-\psi(y)) d\( \sum_{i=1}^N \delta_{x_i}- N\mu\)(x)\( \sum_{i=1}^N \delta_{x_i}- N\mu\)(y)
\ee
and more generally
\be\label{defAn} \Ani_n(\XN, \mu, \psi):=\hal  \int_{\triangle^c} \nab^{\otimes n} \g(x-y): (\psi(x)-\psi(y))^{\otimes n}  d\( \sum_{i=1}^N \delta_{x_i}- N\mu\)(x)\( \sum_{i=1}^N \delta_{x_i}- N\mu\)(y).
.\ee
It is easy to check \cite[Lemma 4.1]{S22}
that, if $\Phi_t = \id + t\psi$, we have 
\be \label{dtF}
\frac{d^n}{dt^n} \F_N(\Phi_t(\XN), \Phi_t\#\mu)= \Ani_n (\Phi_t(\XN), \Phi_t\#\mu, \psi\circ \Phi_t^{-1}) . \ee

To control such terms, we will need the  following recent sharp and localized  commutator estimates of all order from \cite{RS22}. \footnote{They are slightly restated, because up to allowing constants that depend on $\|\mu\|_{L^\infty}$ we can work with $\lambda= N^{-1/\d}$ there.}

\begin{prop}[\cite{RS22}]\label{pro:commutator}Let $\mu \in L^1(\R^\d)\cap L^\infty(\R^\d)$ with $\int_{\R^\d}\mu\,  dx=1$ satisfying \eqref{electric assumption}.  There exists a constant $C>0$ depending only $\d$, $\s$  and $\|\mu\|_{L^\infty}$ such that the following holds.  Let $\psi$ be  a Lipschitz vector field $\psi:\R^\d\rightarrow\R^\d$  and $\Omega$ be a closed set  containing a $2N^{-1/\d}$-neighborhood of $\supp \psi$.  For any pairwise distinct configuration $\XN \in (\R^\d)^N$, it holds that
\begin{multline}\label{main1}
\left|\int_{(\R^\d)^2\setminus\triangle}(\psi(x)-\psi(y))\cdot\nabla\g(x-y)d\Big(\sum_{i=1}^N\delta_{x_i} -N \mu\Big)^{\otimes 2}(x,y)\right| \\
\leq C\|\nabla \psi\|_{L^\infty}\Bigg( \int_{\Omega\times [-\ell, \ell]} \yg|\nab h_{N,\rr}|^2 + {C\# I_\Omega N^{\frac{\s}{\d}} }\Bigg).
\end{multline}
 Suppose in addition  that $\nabla^{\otimes(n-1)}\psi$ is Lipschitz and that  $\Omega$ contains a $(5\ell+N^{-1/\d})$-neighborhood of $\supp \psi$, where $\ell$ satisfies $\ell>2  N^{-1/\d}$. For any $n \ge  2$, we have 
\begin{align} \label{mainn}
& \left|\int_{(\R^\d)^2\setminus\triangle}\nabla^{\otimes n}\g(x-y) : (\psi(x)-\psi(y))^{\otimes n} d\Big(\sum_{i=1}^N\delta_{x_i} -N \mu\Big)^{\otimes 2}(x,y)\right| \\
\notag & \qquad \le 
C\sum_{p=0}^n (\ell\|\nab^{\otimes 2}\psi\|_{L^\infty})^p
 \sum_{\substack{1\leq c_1,\ldots,c_{n-p} \\ n-p\le c_1+\cdots+c_{n-p} \le 2n}} N^{\frac{1}\d ((n-p)-\sum_{k=1}^{p} c_{n-k} )} \|\nabla^{\otimes c_1} \psi\|_{L^\infty}\cdots\|\nabla^{\otimes c_{n-p}} \psi\|_{L^\infty}  \\ \notag
 &\qquad \qquad \qquad\qquad\qquad \qquad \qquad\qquad \times\Bigg( \int_{\Omega\times [-\ell,\ell]} \yg|\nab h_{N,\rr}|^2
+ C  \#I_\Omega N^{\frac\s\d}   \Bigg).\end{align}
\end{prop}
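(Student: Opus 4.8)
The plan is to reduce both \eqref{main1} and the higher-order bound \eqref{mainn} to the localized electric energy $\int_{\Omega\times[-\ell,\ell]}\yg|\nab h_{N,\rr}|^2$ plus a self-energy remainder of size $\#I_\Omega N^{\s/\d}$, by converting the double integral against $\nu:=\sum_{i=1}^N\delta_{x_i}-N\mu$ into an integral of a stress--energy tensor in the Caffarelli--Silvestre extended space. For the first-order statement \eqref{main1}, I would extend $\psi$ to $\bar\psi(x,y)=(\psi(x),0)$ on $\R^{\d+1}$ and use $-\div(\yg\nabla h_N)=\cds\,\nu\,\drd$, the analogue of \eqref{definition of degenerate elliptic PDE} for $h_N$. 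Integrating by parts, first at the level of the untruncated potential, $\Ani_1(\XN,\mu,\psi)$ of \eqref{defA1} becomes, up to the constant $\cds$, an integral over $\R^{\d+1}$ of $\yg\big((\div\psi)|\nabla h_N|^2-2\,\nabla h_N\cdot(D\bar\psi)\nabla h_N\big)$; the choice of extension with vanishing vertical component is precisely what kills the contribution of $\partial_y(\yg)$ on the hyperplane. The pointwise inequality $\big|(\div\psi)|\xi|^2-2\,\xi\cdot(D\bar\psi)\xi\big|\le C\|\nab\psi\|_{L^\infty}|\xi|^2$ then bounds the main term by $C\|\nab\psi\|_{L^\infty}\int\yg|\nabla h_N|^2$, and since $D\bar\psi$ vanishes off $\supp\psi\subset\Omega$ the integrand is supported in a neighborhood of $\supp\psi$.

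To make this rigorous I would perform the integration by parts with the truncated potential $h_{N,\rr}$: replacing $h_N$ by $h_{N,\rr}$ creates correction terms supported in the balls $B(x_i,\rr_i)$, which I would control using \eqref{eq:intf}, the finite overlap of these balls, and the bound $\g(\rr_i)\lesssim N^{\s/\d}$, valid since $\rr_i\le\tfrac14N^{-1/\d}$. Summing over $i$ with $x_i\in\Omega$ — which is forced by the hypothesis that $\Omega$ contains a $2N^{-1/\d}$-neighborhood of $\supp\psi$ — produces exactly the remainder $C\|\nab\psi\|_{L^\infty}\#I_\Omega N^{\s/\d}$. The last ingredient for \eqref{main1} is to cut the vertical variable down to $[-\ell,\ell]$: since $\nab\psi$ tests the field on the hyperplane only at scale $\ell$, and $h_{N,\rr}$ is $\yg$-harmonic at heights $\gtrsim\ell$, a cutoff at height $\ell$ followed by an integration by parts (or a mean-value/reflection argument for the degenerate operator) re-expresses the contribution of $|y|\ge\ell$ through the energy in the slab plus a further $\#I_\Omega N^{\s/\d}$ error.

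For $n\ge 2$, I would expand $\nabla^{\otimes n}\g(x-y):(\psi(x)-\psi(y))^{\otimes n}$ by Taylor-expanding $\psi$ along the segment $[y,x]$ and distributing the $n$ derivatives falling on $\g$; this yields a finite sum of terms of the first-order type but with $\g$ replaced by some $\nabla^{\otimes c}\g$ and with extra factors $\nabla^{\otimes c_1}\psi,\dots,\nabla^{\otimes c_{n-p}}\psi$. The homogeneity of $\nabla^{\otimes c}\g$ together with the normalization $N^{-1/\d}$ accounts for the powers $N^{\frac1\d((n-p)-\sum c_{n-k})}$ and the factors $(\ell\|\nab^{\otimes 2}\psi\|_{L^\infty})^p$, and each individual term is then treated exactly as in the first-order case. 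It is convenient to organize this recursion through \eqref{dtF}, which identifies $\Ani_n$ with the $n$-th derivative of $\F_N$ along $\Phi_t=\id+t\psi$; the enlarged condition that $\Omega$ contains a $(5\ell+N^{-1/\d})$-neighborhood of $\supp\psi$, together with $\ell>2N^{-1/\d}$, is what keeps every term localized to $\Omega\times[-\ell,\ell]$.

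The main obstacle I expect is twofold. First, arranging the integration by parts so that the singular weight $\yg$ and the diagonal excision generate only the $\#I_\Omega N^{\s/\d}$ remainder and nothing larger; this is a delicate bookkeeping that uses in an essential way the choice of truncation $\eta_i\simeq\rr_i$ and the finite overlap of the truncation balls. Second, the reduction to the finite slab $\Omega\times[-\ell,\ell]$: this localization in the extended variable is the genuinely new feature compared with the classical (untruncated) commutator estimates, and for $n\ge2$ it must be performed uniformly over the whole combinatorial sum over $c_1,\dots,c_{n-p}$ appearing in \eqref{mainn}.
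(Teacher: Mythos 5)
First, a bookkeeping point: the paper does not prove Proposition~\ref{pro:commutator} at all. It is imported verbatim (with a minor restatement of the smearing scale noted in the footnote) from \cite{RS22}, so there is no internal proof to compare against. Your overall strategy — recast the commutator as a stress-energy integral in the Caffarelli--Silvestre extended space, with the extension $\bar\psi=(\psi(x),0)$ chosen so that the hyperplane $\{y=0\}$ is preserved and the $\bar\psi_{\d+1}\,\partial_y(\yg)$ contribution vanishes, then recurse for $n\ge 2$ via \eqref{dtF} — is indeed consistent with the approach of \cite{RS22}.

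However, your treatment of the truncation correction contains a genuine error that would invalidate the bound. You write that $\g(\rr_i)\lesssim N^{\s/\d}$ "since $\rr_i\le\tfrac14 N^{-1/\d}$." For $\s\ge 0$ the kernel $\g(r)=\tfrac1\s r^{-\s}$ is \emph{decreasing}, so $\rr_i\le\tfrac14 N^{-1/\d}$ gives the reverse inequality $\g(\rr_i)\ge \g(\tfrac14 N^{-1/\d})\gtrsim N^{\s/\d}$; when two particles nearly collide $\rr_i$ is tiny and $\g(\rr_i)$ diverges. The pure-truncation terms produced by replacing $h_N$ with $h_{N,\rr}$ are of order $\sum_{i:x_i\in\Omega}\g(\rr_i)$, and these cannot be absorbed into $\#I_\Omega N^{\s/\d}$; they must be controlled via the coercivity bound \eqref{eq:13}, which dominates $\sum_{i:x_i\in\Omega}\g(\rr_i)$ by the local electric energy $\int_{\Omega\times[-\ell,\ell]}\yg|\nab h_{N,\rr}|^2$ (up to a further $\#I_\Omega N^{\s/\d}$). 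That is precisely why the energy term sits alongside $\#I_\Omega N^{\s/\d}$ on the right-hand side of \eqref{main1}. Separately, your account of the vertical localization is too vague to check: with $\bar\psi=(\psi(x),0)$ independent of $y$, the factor $\div\bar\psi$ is supported on the full vertical cylinder $\supp\psi\times\R$, not a slab of height $\ell$; introducing a cutoff $\chi(y)$ creates a shear block $\psi\otimes\chi'(y)$ in $D\bar\psi$ and boundary fluxes on $\{|y|=\ell\}$, and estimating these is exactly what forces the enlarged $(5\ell+N^{-1/\d})$-neighborhood hypothesis and $\ell>2N^{-1/\d}$ in \eqref{mainn}. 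The one-sentence "cutoff plus integration by parts" gesture does not resolve these issues.
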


\subsection{Variation of energy along a transport}
In the macroscopic case $\ell=1$, taking $\Omega= \R^\d$, we can immediately control the energy of the transported configuration in terms of the initial configuration: let $$\Xi(t):= \F_N(\Phi_t(\XN), \Phi_t\# \mu_0) + \frac{N\log N}{2\d} \indic_{\s=0}+ C N^{1+\frac\s\d}, $$ for $C$ large enough; in view of \eqref{dtF}, 
applying the first order commutator estimate \eqref{main1} and combining it with \eqref{eq:pr2}, 
we find that 
\be \Xi'(t)\le C\|\nab \psi\|_{L^\infty} \Xi(t),\ee
and applying Gronwall's lemma yields
\be \Xi(t) \le \exp(C t\|\nab\psi\|_{L^\infty}) \Xi(0),\ee
which gives the desired control. 

The mesoscopic case $\ell <1$  is much more delicate, and presents additional difficulties compared to the Coulomb case, due to the nonlocalized nature of the transport. We address this difficulty by considering the transport on increasing dyadic scales and leveraging the decay of the transport away from $\supp \varphi$.

Let as above  $\carr_\ell$ be a cube of size $\ell$ such that  $\supp \varphi\subset \carr_\ell \subset \carr_{2\ell}\subset \bulk$ and assume without loss of generality, that it is centered at the origin. Let $k^*$ be such that 
\be\label{defkstar}
U \subset \carr_{2^{k*}\ell},\ee where $U$ is a neighborhood of $\Sigma$ as in Proposition~\ref{transport}.
For $k $  in $[0,k_*]$, let
\be\label{defAk} D_k= \carr_{2^k \ell}, \quad A_k= D_{k+4}\backslash D_{k-2}, \quad A_{k_*+1}= D_{k_*}^c.\ee
Finally, denote 
$$\rho=2^k\ell$$
where the dependence in $k$ is implicit.

First we prove a preliminary lemma, which is a weighted trace inequality.
\begin{lem}\label{trace}
Let $h$ be a function in $A_k\times [-\lambda\rho, \lambda \rho]$ such that $\int_{A_k\times [-\lambda\rho,\lambda \rho]} \yg |\nab h|^2<\infty$. 
Let $\bar h=\dashint_{A_k \times [-\lambda \rho,\lambda\rho]} h$. Then 
\be \label{traceth}
\int_{\pa(D_k\times [-\lambda \rho, \lambda\rho])} \yg |h-\bar h|^2\le C\rho \int_{A_k\times [-\lambda\rho, \lambda \rho]} \yg |\nab h|^2,\ee where $C>0$ depends only on $\s $, $\d$ and $\lambda>0$.
The same holds with $A_k$ replaced by $D_k$.
\end{lem}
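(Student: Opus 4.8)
The statement is a weighted Poincar\'e--trace inequality on the (extended) annular region $A_k\times[-\lambda\rho,\lambda\rho]$, with the weight $|y|^\gamma$ and $\gamma=\s-\d+1\in(-1,1)$. Since this weight is an $A_2$-Muckenhoupt weight in $\R^{\d+1}$ (indeed $|y|^\gamma$ is $A_2$ precisely when $|\gamma|<1$, which holds here because $\d-2<\s<\d$ forces $\gamma\in(-1,1)$), one has at disposal the weighted Sobolev space theory of Fabes--Kenig--Serapioni and the associated weighted trace theorems. The plan is to reduce the claim to a scale-invariant statement on a fixed reference domain by rescaling, and then invoke the standard weighted trace inequality on that fixed domain.

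\textbf{Step 1: Rescaling to unit scale.} Set $\tilde h(z)=h(\rho z)$ for $z$ in the rescaled annulus $\tilde A_k\times[-\lambda,\lambda]$, where $\tilde A_k= \frac{1}{\rho}A_k = \carr_{16}\setminus\carr_{2}$ is \emph{independent of $k$} (this is why $A_k$ was defined with the fixed dyadic ratio in \eqref{defAk}). Under this change of variables, using $|y|^\gamma$ homogeneous of degree $\gamma$ and accounting for the $(\d+1)$-dimensional Jacobian $\rho^{\d+1}$ and the $\rho^{-2}$ from $|\nabla h|^2$, both sides of \eqref{traceth} scale by the same power of $\rho$ except for exactly one extra factor of $\rho$ on the right-hand side (the boundary integral picks up $\rho^{\d+\gamma}$ while the bulk integral picks up $\rho^{\d+1+\gamma}$). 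So it suffices to prove, on the fixed domain $\tilde A_k\times[-\lambda,\lambda]$, the inequality
\begin{equation*}
\int_{\partial(\tilde D_k\times[-\lambda,\lambda])}|y|^\gamma\,|\tilde h-\overline{\tilde h}|^2\ \le\ C\int_{\tilde A_k\times[-\lambda,\lambda]}|y|^\gamma\,|\nabla \tilde h|^2,
\end{equation*}
with $C$ depending only on $\d,\s,\lambda$, and $\overline{\tilde h}$ the weighted average over $\tilde A_k\times[-\lambda,\lambda]$. (The variant with $A_k$ replaced by $D_k$ is handled identically, replacing the annulus by the full cube $\carr_{16}$.)

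\textbf{Step 2: Weighted Poincar\'e plus weighted trace on a fixed domain.} On the fixed bounded Lipschitz domain $\Omega_0=\tilde A_k\times[-\lambda,\lambda]$ (or $\carr_{16}\times[-\lambda,\lambda]$), with the $A_2$ weight $\mathrm{w}(x,y)=|y|^\gamma$, one applies: (i) the weighted Poincar\'e inequality of Fabes--Kenig--Serapioni \cite[Theorem 1.5]{FKS82} (already invoked in the excerpt), giving $\int_{\Omega_0}\mathrm{w}\,|\tilde h-\overline{\tilde h}|^2\lesssim \int_{\Omega_0}\mathrm{w}\,|\nabla\tilde h|^2$; and (ii) a weighted trace inequality $\int_{\partial\Omega_0}\mathrm{w}\,|g|^2\lesssim \int_{\Omega_0}\mathrm{w}\,(|g|^2+|\nabla g|^2)$ valid for $A_2$ weights that are ``admissible'' near the boundary — here $\partial\Omega_0$ consists of pieces lying at $y=\pm\lambda$ (where $\mathrm{w}$ is bounded above and below, so the classical unweighted trace theorem applies) and pieces lying in $\{-\lambda<y<\lambda\}$ transverse to $\R^\d\times\{0\}$, where one uses that $|y|^\gamma$ restricted to such a boundary piece is itself a one-dimensional $A_2$ weight and applies the anisotropic weighted trace theorem (e.g.\ via the Caffarelli--Silvestre/Nekvinda trace characterization for $|y|^\gamma$, or directly the weighted trace lemma as used in \cite{S24,PS17}). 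Combining (i) and (ii) applied to $g=\tilde h-\overline{\tilde h}$ and absorbing the lower-order $\int\mathrm{w}\,|\tilde h-\overline{\tilde h}|^2$ term via (i) yields the claimed inequality on $\Omega_0$. Undoing the rescaling of Step 1 gives \eqref{traceth}.

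\textbf{Main obstacle.} The only nontrivial point is the weighted trace inequality in (ii) across the portions of $\partial(\tilde D_k\times[-\lambda,\lambda])$ that are \emph{parallel} to the $y$-axis, i.e.\ that cross the singular set $\{y=0\}$ of the weight; one must check that the trace of an $H^1(|y|^\gamma)$ function onto such a hyperplane is controlled in the right weighted $L^2$ norm. This is where the $A_2$ condition $|\gamma|<1$ is genuinely used, and it is the standard technical input underlying the whole Caffarelli--Silvestre extension machinery already exploited throughout the paper; one can cite the relevant weighted trace statement (as in \cite{FKS82} combined with the boundary behavior, or as already packaged in \cite[Chap.~4, Chap.~7]{S24}) rather than reprove it. Everything else — the scaling bookkeeping of Step 1 and the application of the weighted Poincar\'e inequality — is routine.
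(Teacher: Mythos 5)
Your proposal is correct in outline, but it takes a genuinely different route, and the one place it defers to a citation is exactly where the paper's proof does its real work. The paper does not rescale; it works directly at scale $\rho$ with a self-contained slicing argument: it introduces the slice-averages $\tilde h_z=\dashint_{A_k\times\{z\}}h$, bounds $|\tilde h_z-\bar h|^2$ by Cauchy--Schwarz in $y$ using $\int_{-\lambda\rho}^{\lambda\rho}|y|^{-\gamma}dy\lesssim\rho^{1-\gamma}$ (this is where $|\gamma|<1$ enters), handles the top/bottom faces $A_k\times\{\pm\lambda\rho\}$ via the ordinary trace theorem on $A_k\times[\lambda\rho/2,\lambda\rho]$ (where the weight $\yg$ is bounded above and below), and handles the lateral faces $\pa A_k\times[-\lambda\rho,\lambda\rho]$ by applying the ordinary \emph{unweighted} trace theorem on each horizontal slice $A_k\times\{y\}$ (on which $\yg$ is constant) and then integrating in $y$. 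Your approach instead rescales to a fixed reference domain and invokes (a) the FKS weighted Poincar\'e inequality and (b) a general weighted trace theorem for the $A_2$ weight $|y|^\gamma$ across hyperplanes transverse to $\{y=0\}$. Step (a) is fine; step (b) is precisely the content of the lemma, and none of the references you point to (FKS82, Nekvinda's trace characterization, ``as used in \cite{S24,PS17}'') states the lateral trace estimate you need --- the Caffarelli--Silvestre/Nekvinda machinery gives traces onto $\R^\d\times\{0\}$, not onto boundary pieces parallel to the $y$-axis. The paper's slicing argument is essentially an elementary proof of exactly that lateral-trace statement, so either adopt it or find and cite a precise reference; as written, your Step 2(ii) is where a gap remains. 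There is also a small arithmetic slip in your scaling bookkeeping: under $x=\rho\tilde x$, $y=\rho\tilde y$, the bulk integral $\int\yg|\nab h|^2$ picks up a factor $\rho^{\d+\gamma-1}$ (Jacobian $\rho^{\d+1}$ times $\rho^\gamma$ from the weight times $\rho^{-2}$ from the gradient), not $\rho^{\d+1+\gamma}$; your conclusion that the stated inequality is scale-consistent is nevertheless correct.
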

\begin{proof}Let us denote by $
\tilde h_z=\dashint_{A_k\times \{z\}} h.$
Since $\bar h= \dashint_{A_k\times [-\lambda\rho,\lambda\rho]}= \tilde h_{z_0}$ for some $z_0\in [-\lambda \rho, \lambda \rho]$ by the mean value property, we may write, using Cauchy-Schwarz, assuming without loss of generality that $z_0\le z$, 
\begin{multline}\label{thhp}
|\tilde h_{z}-\bar h|^2\le \frac{C}{\rho^{2\d}} \( \int_{z_0}^{z}\partial_y\( \int_{A_k \times \{y\}} h (x,y) dx\) dy\)^2
\le \frac{C}{\rho^{2\d}} \(\int_{A_k \times [z,z_0] }\partial_y h\)^2\\
\le\frac{C}{\rho^{2\d} } \int_{A_k\times [-\lambda\rho,\lambda \rho]} \yg |\nab h|^2 \int_{A_k\times [-\lambda \rho,\lambda \rho]}  \frac{1}{\yg} \le C \rho^{1-\gamma-\d} \int_{A_k\times [-\lambda\rho,\lambda\rho]} \yg |\nab h|^2  , \end{multline} where $C>0$  depends  on $\d, \s$ and $\lambda$. 
It follows that, letting $\bar h_+=\dashint_{A_k\times [\lambda \rho/2, \lambda \rho]} h$, $|\bar h_+ -\bar h|^2$ is controlled in the same way.
Thus, using the standard trace theorem in $A_k\times  [\lambda \rho/2, \lambda \rho] $ and the triangle inequality, we deduce that 
\begin{multline}\int_{A_k\times \{\lambda \rho\}} \yg |h-\bar h|^2 \le
2\int_{ A_k\times \{\lambda\rho\}} \yg |h-\bar h_+|^2 +2\int_{A_k\times \{\lambda\rho\}} \yg |\bar h_+-\bar h|^2 
\\ \le
 C  \rho \int_{A_k \times [\lambda \rho/2,\lambda \rho]} \yg |\nab h|^2 + C\rho^{\gamma +\d+1-\gamma-\d}\int_{A_k\times [-\lambda \rho,\lambda\rho]} \yg |\nab h|^2\\
 \le C  \rho \int_{A_k \times [-\lambda \rho,\lambda \rho]} \yg |\nab h|^2 .\end{multline}
The same relation holds on $A_k\times \{-\lambda \rho\}$.
We next turn to the integral on  $\pa A_k \times [-\lambda \rho,\lambda \rho]$. By standard trace theorem, we have
$$\int_{\pa A_k \times \{y\}} \yg |h-\tilde h_y|^2 \le C\rho\int_{ A_k \times \{y\}}\yg |\nab h|^2,$$
and using \eqref{thhp} and the triangle inequality, we deduce 
$$\int_{\pa A_k \times [-\lambda \rho,\lambda \rho]} \yg |h-\bar h|^2 \le C\rho\int_{ A_k \times [-\lambda \rho,\lambda \rho]}\yg |\nab h|^2.$$
The result is proven for $A_k$. The proof is the same over $D_k$.
\end{proof}

\begin{lem}\label{corocontrenergyt2}
Assume $\mu$ is a bounded probability density satisfying \eqref{electric assumption}.    
Let   $\Omega$ be  $A_k$ for some $0\le k\le  k_*$. Let $\tilde \psi$ be a map supported in  $\Omega$ or in  $\Omega^c$ and  let $\tilde \Phi_t:= \id + t\tilde \psi$.  Assume that  $\partial \Omega$ is at distance $\ge \hal  \rho= 2^{k-1} \ell$ from $\supp \tilde \psi$, and $|\tau|\|\tilde \psi\|_{L^\infty} < \min (\frac14 \rho,\hal)$. 
 Then letting $\eta_i$ be the minimum over $t\in [0,\tau]$ of the $\rr_i$'s of the  $\tilde\Phi_t(\XN)$, we have
 \begin{multline} \label{controlenergyt2}  \forall t \in [0,\tau],  \quad 
\int_{\Omega\times [-\lambda \rho,\lambda \rho]}\yg|\nab h_{N,\vec{\eta}}[\tilde \Phi_t(\XN), \tilde \Phi_t\#\mu]|^2
\le C \exp\(Ct  (\rho^{-1}\|\tilde \psi\|_{L^\infty}+ |\tilde\psi|_{C^1})  \) \\ \times \(\int_{\Omega\times [-\lambda \rho,\lambda \rho]}\yg|\nab h_{N,\vec{\eta}}[\XN, \mu]|^2
+ \#I_\Omega N^{\frac\s\d}  +  N^{-\frac2\d}(\#I_\Omega )^2 \rho^{-\s-2}\) \end{multline}
where $C$ depends only on $\s, $  $\d$, $\lambda>0$  and $\|\mu\|_{L^\infty}$.

\end{lem}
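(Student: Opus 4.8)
The plan is to set up a Gronwall inequality for the localized extended energy and integrate it. Write $\mathcal{O}:=\Omega\times[-\lambda\rho,\lambda\rho]$, fix the $t$-independent truncation vector $\vec\eta$ as in the statement (admissible for every $t\in[0,\tau]$ by construction), let $\nu_t:=\tilde\Phi_t\#\bigl(\sum_i\delta_{x_i}-N\mu\bigr)$, let $h^t:=\g*\nu_t$ be its extension to $\R^{\d+1}$ and $h^t_{\vec\eta}$ the corresponding $\vec\eta$-truncation, so that the left-hand side of \eqref{controlenergyt2} is
\[
E(t):=\int_{\mathcal O}\yg\,\bigl|\nab h^t_{\vec\eta}\bigr|^2,\qquad R:=\#I_\Omega N^{\frac\s\d}+N^{-\frac2\d}(\#I_\Omega)^2\rho^{-\s-2}.
\]
It suffices to prove the differential inequality $\bigl|E'(t)\bigr|\le C\bigl(\rho^{-1}\|\tilde\psi\|_{L^\infty}+|\tilde\psi|_{C^1}\bigr)\bigl(E(t)+R\bigr)$ on $[0,\tau]$, with $C=C(\d,\s,\lambda,\|\mu\|_{L^\infty})$; Gronwall applied to $t\mapsto E(t)+R$ then gives \eqref{controlenergyt2}. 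I would use throughout that $|\tau|\,\|\tilde\psi\|_{L^\infty}<\tfrac14\rho$ forces the region where $\nu_t$ differs from $\nu_0$ — a small neighborhood of $\tilde\Phi_t(\supp\tilde\psi)\cup\supp\tilde\psi$ — to stay at distance $\ge\tfrac14\rho$ from $\partial\Omega$; this region lies in $\Omega$ when $\supp\tilde\psi\subset\Omega$, in $\Omega^c$ when $\supp\tilde\psi\subset\Omega^c$, and in the latter case $\nu_t$ does not move at all on $\mathcal O$. Differentiating, using that $\tilde\Phi_t$ is the flow of $\tilde\psi_t:=\tilde\psi\circ\tilde\Phi_t^{-1}$ (so $\dot h^t(X)=-\int\nab\g(X-Y)\cdot\tilde\psi_t(Y)\,d\nu_t(Y)$ and $\dot h^t_{\vec\eta}=\dot h^t+\sum_i\nab\f_{\eta_i}(\cdot-\tilde\Phi_t(x_i))\cdot\tilde\psi(x_i)$), and integrating by parts over $\mathcal O$ with the PDE \eqref{PDE truncated potential}, one gets
\[
E'(t)=2\cds\int_{\mathcal O}\dot h^t_{\vec\eta}\;d\Bigl(\textstyle\sum_i\delta^{(\eta_i)}_{\tilde\Phi_t(x_i)}-N\tilde\Phi_t\#\mu\,\drd\Bigr)+2\int_{\partial\mathcal O}\yg\,(\partial_\nu h^t_{\vec\eta})\,\dot h^t_{\vec\eta}.
\]

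For the bulk term, substituting the expression for $\dot h^t_{\vec\eta}$ reorganizes it — exactly as in the derivation of \eqref{dtF}, since $E(t)$ agrees with $2\cds$ times a localized $\F_N$ up to self-energy — into a constant multiple of the first-order commutator $\int_{(\R^\d)^2\setminus\triangle}(\tilde\psi_t(x)-\tilde\psi_t(y))\cdot\nab\g(x-y)\,d\nu_t(x)\,d\nu_t(y)$ plus corrections coming from smearing the point charges. When $\supp\tilde\psi\subset\Omega$, a $2N^{-1/\d}$-neighborhood of $\supp\tilde\psi_t$ lies inside $\Omega$ away from $\partial\Omega$, so the localized commutator estimate \eqref{main1} applies with that neighborhood as localization set and controls this term by $C|\tilde\psi_t|_{C^1}\bigl(\int_{\Omega\times[-\lambda\rho,\lambda\rho]}\yg|\nab h_{N,\rr}[\tilde\Phi_t(\XN),\tilde\Phi_t\#\mu]|^2+\#I_\Omega N^{\frac\s\d}\bigr)$; since $|\tilde\psi_t|_{C^1}\lesssim|\tilde\psi|_{C^1}$ for $|t|\,\|\tilde\psi\|_{L^\infty}$ small, since passing from the $\rr$-truncation of $\tilde\Phi_t(\XN)$ to $\vec\eta$ costs $\lesssim\sum_i\g(\eta_i)$, itself $\lesssim E(t)+\#I_\Omega N^{\frac\s\d}$ by \eqref{eq:13} (using $\eta_i\le\tfrac14N^{-1/\d}$), and since the smearing corrections are $O\bigl(N^{-2/\d}(\#I_\Omega)^2\rho^{-\s-2}\bigr)$, the bulk term is $\le C|\tilde\psi|_{C^1}(E(t)+R)$. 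When $\supp\tilde\psi\subset\Omega^c$ the bulk source on $\mathcal O$ is $t$-independent, $\dot h^t$ is $|y|^\gamma$-harmonic on $\mathcal O$, and this term reduces (after one more integration by parts) to a boundary contribution of the type handled next.

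The boundary term is the main obstacle. On $\partial\mathcal O$ we are at distance $\ge\tfrac14\rho$ from the support of the source variation, so there $\dot h^t_{\vec\eta}=\dot h^t$ is $|y|^\gamma$-harmonic in a neighborhood of $\partial\mathcal O$; writing $\dot h^t$ as the $\g$-potential of the displaced charge — whose relevant mass is $\lesssim\#I_\Omega$ after using the decay of $\nab^{\otimes 2}\g$ together with the local law to discard charges far from $\partial\Omega$ — displaced by $\lesssim\|\tilde\psi\|_{L^\infty}$, interior estimates for the degenerate elliptic operator give $\|\dot h^t\|_{L^\infty(\partial\mathcal O)}\lesssim\rho^{-\s-2}\|\tilde\psi\|_{L^\infty}\#I_\Omega$. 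For the flux factor I would replace $h^t_{\vec\eta}$ by $h^t_{\vec\eta}-\bar h$, with $\bar h$ its $\yg$-weighted average over $A_k\times[-\lambda\rho,\lambda\rho]$ (which changes neither $E(t)$ nor $\partial_\nu h^t_{\vec\eta}$), and combine a Caccioppoli inequality on a slightly fattened annulus with the weighted trace inequality \eqref{traceth} of Lemma \ref{trace} to get $\int_{\partial\mathcal O}\yg|\partial_\nu h^t_{\vec\eta}|\lesssim\rho^{(\s+1)/2}E(t)^{1/2}$; Young's inequality then gives $\bigl|\int_{\partial\mathcal O}\yg(\partial_\nu h^t_{\vec\eta})\dot h^t\bigr|\le C\rho^{-1}\|\tilde\psi\|_{L^\infty}(E(t)+R)$. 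The genuinely delicate points are (i) transferring control of the \emph{boundary} normal derivative of $h^t_{\vec\eta}$ back to the \emph{bulk} energy $E(t)$ without picking up a constant that degenerates as $\lambda\downarrow 0$ — which is exactly why the trace inequality is set on the annulus $A_k$ rather than on a fixed collar — and (ii) checking that the charge governing $\dot h^t$ near $\partial\mathcal O$ is indeed $\lesssim\#I_\Omega$ in every configuration of $\supp\tilde\psi$ permitted by the hypotheses. Adding the bulk and boundary estimates gives the differential inequality, and Gronwall on $[0,\tau]$ finishes the proof.
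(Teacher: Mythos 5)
The overall Gronwall scaffold is the right one and is the same as the paper's, but the two algebraic choices you make inside the differential inequality — how you treat the bulk and which field carries the flux in the boundary term — both go wrong, and in the same place the paper takes pains to avoid.

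\textbf{The bulk term is not the commutator.} After integrating by parts over $\mathcal O = \Omega\times[-\lambda\rho,\lambda\rho]$ you correctly obtain $E'(t)=2\cds\int_{\mathcal O}\dot h^t_{\vec\eta}\,d(\text{source}|_{\mathcal O})+2\int_{\partial\mathcal O}\yg(\partial_\nu h^t_{\vec\eta})\dot h^t_{\vec\eta}$. You then claim that the first term ``reorganizes, as in the derivation of \eqref{dtF}, into a constant multiple of the first-order commutator.'' This is false: what equals the commutator (plus smearing corrections) is the integral of $\dot h^t_{\vec\eta}$ against the \emph{full} source, i.e.\ over $\R^\d$, because \eqref{dtF} is a statement about the global $\F_N$, and repeating the integration by parts over all of $\R^{\d+1}$ gives $\tfrac{d}{dt}\int_{\R^{\d+1}}\yg|\nab h^t_{\vec\eta}|^2 = 2\cds\int_{\R^\d}\dot h^t_{\vec\eta}\,d(\text{source})$. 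Your bulk over $\mathcal O$ is this quantity minus the cross term $2\cds\int_{\mathcal O^c}\dot h^t_{\vec\eta}\,d(\text{source}|_{\mathcal O^c})$. Since the unchanged source outside $\Omega$ carries $O(N)$ total mass, this cross term is not a priori negligible; it is precisely the nonlocal interaction that the paper converts into a boundary flux (via integrating by parts the energy change on $\mathcal O^c$, where $u^t:=h^t_{\vec\eta}-h^0_{\vec\eta}$ is $\yg$-harmonic). The commutator estimate \eqref{main1} therefore does not apply to your bulk term; it applies to the full commutator, and you would then have to estimate the cross term separately, which is most of the work.

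\textbf{The flux factor is the wrong one.} Your boundary term puts $\partial_\nu h^t_{\vec\eta}$ in the flux and $\dot h^t_{\vec\eta}$ in the other factor. The field $h^t_{\vec\eta}$ is \emph{not} $\yg$-harmonic near $\partial\mathcal O$: the time-frozen point charges and the background $\mu\drd$ are still there at distance $\le\rrh_i\le\frac14N^{-1/\d}$ from the boundary. Consequently neither Caccioppoli nor the weighted trace inequality \eqref{traceth} controls $\int_{\partial\mathcal O}\yg|\partial_\nu h^t_{\vec\eta}|^2$: the trace lemma bounds the boundary trace of $h-\bar h$, not of $\nab h$, and the boundary $L^2$ of $\nab h$ is a $3/2$-derivative quantity that the energy $E(t)$ does not control (it would blow up from the nearby smeared charges). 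The paper's choice is the opposite one, and it is load-bearing: it writes the boundary term as $\int_{\partial\mathcal O}\yg(h^0_{\vec\eta}-\bar h^0_{\vec\eta})\,\partial_n u^t$, so the flux is $\partial_n u^t$ where $u^t$ \emph{is} $\yg$-harmonic in a $\frac14\rho$-collar of $\partial\mathcal O$ (the source difference vanishes there since $\tilde\Phi_t$ is the identity near $\partial\Omega$). That makes $\nab u^t$ amenable to a pointwise estimate via its explicit Green representation, giving the correct $N^{-1/\d}\#I_\Omega\rho^{-\s-2}$ factor, while the function factor $h^0_{\vec\eta}-\bar h$ is the one controlled by the trace lemma via the bulk energy. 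Your sketch drops the $N^{-1/\d}$ factor (the paper gets it from $\eta_i\le\frac14N^{-1/\d}$), which would leave you with an error $N^{2/\d}$ larger than the $R$ you are aiming for.

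In short, you would need both to (i) restore the cross term $\int_{\mathcal O^c}\dot h^t_{\vec\eta}\,d(\text{source})$ and estimate it, and (ii) rearrange the boundary term so that the flux falls on a field that is harmonic in a collar of $\partial\mathcal O$. Doing both lands you essentially on the paper's organization (estimate the global $t$-derivative by the commutator estimate, then estimate the change of energy on $\mathcal O^c$ by IBP using the harmonicity of $u^t$).
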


\begin{proof} For shortcut, let us drop the tildes for the proof, and  denote $h_N^t = h_N[\tilde \Phi_t(\XN), \tilde \Phi_t\#\mu]$  as in \eqref{defhN}. 
Let us start with the case where $\psi$ is supported in $\Omega= A_k$.  We note that since $\supp \psi$ is at distance $\ge \hal\rho$ from $\partial \Omega$ and $|\tau|\|\psi\|_{L^\infty} <\frac14\rho$, $\Phi_t$ maps $\Omega$ to $\Omega$, and coincides with the identity in $\Omega^c$ and in the part  of $\Omega$ at distance $\le \frac14 \rho$ from $\partial \Omega$, for any $t\in [0,\tau]$. Denoting $\tilde \Omega=\{x\in \Omega, \dist(x, \partial \Omega) \ge \hal \rho\}$, we have that $\supp \psi\subset \tilde \Omega$ and $\supp\psi$ is even at distance $\ge \frac14 \rho\ge \frac14 \ell\ge \hal\rho_\beta N^{-1/\d}\ge 2N^{-1/\d}$ from $\partial \tilde \Omega$.

The commutator estimate provides us with an estimate on 
  $\frac{d}{dt} \F_N( \Phi_t(\XN), \mu_t)$, so we need to estimate the difference between that  and $\frac{d}{dt}\int_{\Omega\times [-\rho,\rho]}\yg |\nab h_{N,\rr}^t|^2$.

  Since $\Phi_t $ coincides with the identity map outside $\Omega$ for each $t \in [0,\tau]$, spelling out  the definition  \eqref{spelloutFN} (the equality case with $\eta_i=\rr_i$), we may write  that 
\begin{align}\notag  \F_N( \Phi_t(\XN), \mu_t)-
\F_N(\XN, \mu_0)
\notag &= 
\frac1{2\cds} \int_{\Omega\times[-\lambda \rho,\lambda\rho]} \yg|\nab h^t_{N,\veta}|^2 -\yg|\nab h^0_{N,\veta}|^2\\ \notag
& + \frac{1}{2\cds}\int_{\R^{\d+1}\setminus(\Omega\times [\lambda\rho,\lambda \rho])}\yg|\nab h_{N,\veta}^t|^2- \yg|\nab h_{N,\veta}^0|^2,\\ \label{alignemen}
 & - N\sum_{i\in I_\Omega} \int_{\R^\d}\f_{\eta_i}(x-\Phi_t(x_i)) d\mu_t(x) + N\sum_{i\in I_\Omega} \int_{\R^\d} \f_{\eta_i}(x-x_i) d\mu_0(x),
\end{align}
where we note that   $\rr_i$ for $\XN$  and $\eta_i$  are within a factor  in $[\hal, 2]$ of each other. 

Since  points at distance $\le \frac14\rho$ from $\pa \Omega$  and points in $\Omega^c$ are fixed points of $\Phi_t$, and since $\frac14\rho \ge \frac14 \ell\ge \frac14 \rho_\beta N^{-1,\d}$ and $\rr_i\le \frac 14 N^{-1/\d}$, since $\rho_\beta\ge 4$, the smeared charges $\delta_{x_i}^{(\eta_i)}$ coincide in $\Omega^c$, and thus 
the function  $h^t_{N,\veta}-h^0_{N,\veta}:=u^t$ solves $$-\div (\yg \nab u^t)=0 \quad \text{in} \ \R^{\d+1}\setminus(\Omega\times [-\lambda \rho, \lambda \rho]),$$ and decays at infinity, and its gradient as well. Hence, writing $\nab h^t_{N,\veta}= \nab h^0_{N,\veta}+\nabla u^t$ and  integrating by parts, we obtain
\begin{align}\notag
&\int_{\R^{\d+1}\setminus(\Omega\times [-\lambda \rho, \lambda \rho])} 
\yg|\nab h_{N,\veta}^t|^2- \yg|\nab h_{N,\veta}^0|^2=
\int_{\R^{\d+1}\setminus(\Omega\times [-\lambda \rho, \lambda \rho])}\yg |\nab u^t|^2 \\ 
\label{intomegc}
&+ 2\int_{\partial (\Omega\times [-\lambda \rho, \lambda \rho])} \yg ( h_{N,\veta}^0 -\bar h_{N,\veta}^0)\frac{\partial  u^t}{\partial n}\end{align}
where $\vec{n}$ is the inner pointing  unit normal to $\pa \Omega$ and $\bar h_{N,\veta}^0$ is the weighted average of $h_{N,\veta}^0$ on $\pa(\Omega \times [-\lambda \rho, \lambda \rho])$.  By Cauchy-Schwarz and Lemma \ref{trace} we may write  
\begin{multline} \label{inthntu}\left|\int_{\partial (\Omega\times [-\lambda \rho, \lambda \rho])} \yg( h_{N,\veta}^0 -\bar h_{N,\veta}^0)\frac{\partial  u^t}{\partial n}\right|
\le 
C\rho^{\hal} \|   \nab h_{N,\veta}^0\|_{L^2_{\yg}(\Omega\times [-\lambda \rho, \lambda \rho])} \|\nab u^t\|_{L^2_{\yg}(\pa (\Omega\times [-\lambda \rho, \lambda \rho]))}
.\end{multline}
We next estimate $\nab u^t$ in the complement of $\Omega \times [-\lambda \rho, \lambda \rho]$.
By definition, 
$$u^t(x)= \int_{\R^\d} \g(x-x') d\(\sum_{i=1}^N \delta_{\Phi_t(x_i)}^{(\eta_i)}-N\mu_t\)(x')- \int_{\R^\d} \g(x-x') d\(\sum_{i=1}^N \delta_{x_i}^{(\eta_i)}-N\mu_0\)(x').$$
We compute that 
\begin{align*}
 \partial_t u^t(x)&= \partial_t \(\sum_{i=1}^N\dashint_{\pa B(0, \eta_i)} \g(x-\Phi_t(x_i)-\cdot)-N\int_{\R^\d} \g(x-\Phi_t(y)) d\mu_0(y)\)\\
& = \sum_{i=1}^N\int_{\R^\d}\nab \g(x-y) \cdot \psi_t(\Phi_t(x_i))\delta_{\Phi_t(x_i)}^{(\eta_i)}(x')- N\int_{\R^\d}\nab \g(x-x') \cdot \psi_t(x') d\mu_t(x') 
\end{align*}
where we have denoted $\psi_t:=\frac{d\Phi_t}{dt}$. Hence 
\begin{align}\notag u^t(x)&=\int_0^t  \int_{\R^\d}\nab  \g(x-x')\cdot \psi_s(x') d\(\sum_{i=1}^N \delta_{\Phi_s(x_i)}^{(\eta_i)}-N\mu_s\)(x')\, ds \\ &+\int_0^t \sum_{i=1}^N\int_{\R^\d} \nab \g(x-x') \cdot \(\psi_s(\Phi_s(x_i))- \psi_s(x')\) \delta_{\Phi_t(x_i)}^{(\eta_i)}(x')\, ds.
\end{align}
and 
\begin{align}\notag \nab u^t(x)&=-\frac1{\cds}\int_0^t  \int_{\R^\d}\nab^{\otimes 2}  \g(x-x'): \psi_s(x') \div(\yg \nab h_N^s)(x')\, ds \\ \label{nabutformula}&+\int_0^t \sum_{i=1}^N\int_{\R^\d\times \R^1} \nab^{\otimes 2} \g(x-y) : \(\psi_s(\Phi_s(x_i))- \psi_s(x')\) \delta_{\Phi_t(x_i)}^{(\eta_i)}(x')\, ds.
\end{align}
The function $\chi_x(x'):= \nab^{\otimes 2}  \g(x-x'): \psi_s(x')$ is compactly supported where $\psi_s$ is, moreover, it can be checked to satisfy 
\be |\nab \chi_x(x')|\lesssim \(\frac{\|\psi_s\|_{L^\infty(\Omega)}}{\dist(x, \supp \psi_s)^{\s+3}}+ \frac{|\psi_s|_{C^1(\Omega)}}{\dist(x, \supp \psi_s)^{\s+2}}\).\ee
We extend the function $\chi_x$ to $\Omega \times [-\lambda \rho, \lambda \rho]$ by multiplying $\chi_x$ by a function $\varphi(y)$ supported in $[-\lambda \rho, \lambda \rho]$ with $\|\varphi\|_{L^\infty}\leq 1$ and $\left\|\frac{d}{dy}\varphi\right\|_{L^\infty}\lesssim \frac{1}{\rho}$. Then,  for $x \in \Omega^c$, 
\be \int_{\Omega\times [-\lambda \rho, \lambda \rho]} \yg|\nab \chi_x|^2  \lesssim (\|\psi_s\|_{L^\infty(\Omega)}+\rho |\psi_s|_{C^1(\Omega)})^2 \rho^{-\s-4}\ee
using  $\d+\gamma=\s+1$.
Using Green's formula and the Cauchy-Schwarz inequality to control the first term in \eqref{nabutformula}, and a rough bound for the second term, we are led to 
\begin{multline*} 
|\nab u^t(x)|^2 \lesssim\\
t \int_0^t   (\|\psi_s\|_{L^\infty}+\rho |\psi_s|_{C^1})^2 \rho^{-\s-4}  \int_{\Omega\times [-\lambda \rho, \lambda \rho]}\yg |\nab h^s_{N,\veta}|^2 ds + t \int_0^t |\psi_s|_{C^1}^2\( \sum_{i\in I_\Omega} \eta_i \rho^{-\s-2}\)^2 ds.
\end{multline*}
Since this is true for all $x\in (\Omega \times [-\lambda \rho, \lambda \rho])^c$, 
inserting  $\eta_i\le  N^{-1/\d}$, it  follows that 
\begin{multline*} \int_{\pa (\Omega\times [-\lambda \rho, \lambda \rho])} \yg|\nab u^t|^2 \lesssim \rho^{\d+\gamma}
\\   \times  t \int_0^t (\|\psi_s\|_{L^\infty}+\rho |\psi_s|_{C^1})^2\( \rho^{-\s-4}\int_{\Omega\times [-\lambda \rho, \lambda \rho]} \yg|\nab h^s_{N,\veta}|^2 +  N^{-\frac2\d}(\#I_\Omega )^2 \rho^{-2\s-6}\) ds\end{multline*}
Combining with \eqref{intomegc} and \eqref{inthntu}, and noting that $\int_{\R^{\d+1}\setminus(\Omega\times [-\lambda \rho, \lambda \rho])}\yg |\nab u^t|^2$ is an order $O(t^2)$ term,  it follows that 
\begin{align*}
&\left|\frac{d}{dt}_{|t=0} \int_{\R^{\d+1}\setminus(\Omega\times[-\lambda \rho, \lambda \rho])}\yg 
|\nab h_{N,\veta}^t|^2\right|
\lesssim \rho^{\hal} \|\nab h_{N,\vec{\eta}}^0\|_{L^2_{\yg}(\Omega\times [-\lambda \rho, \lambda \rho]))} \rho^{\frac{\s+1}{2}}   (\|\psi_0\|_{L^\infty}+\rho |\psi_0|_{C^1})\\
& \hspace{3cm}\times \( \rho^{-\frac{\s}2-2} \|\nab h^0_{N,\veta}\|_{L^2_{\yg}(\Omega\times [-\lambda \rho, \lambda \rho])} +  N^{-\frac1\d}(\#I_\Omega ) \rho^{-\s-3}\)
.\end{align*}
On the other hand, by definition  \eqref{extended truncation},  we have
\begin{align*}
&\left|\frac{d}{dt}_{|t=0} \sum_{i\in I_\Omega} \int_{\R^\d} \f_{\eta_i}(x-\Phi_t(x_i))d\mu_t(x)\right|
= \left|\sum_{i\in I_\Omega}  \int_{\R^\d} \nab\f_{\eta_i}(x-x_i)\cdot (\psi_0(x)-\psi_0(x_i)) d\mu_0(x)\right|
\\
&\le |\psi_0|_{C^1} \sum_{i\in I_\Omega} \int_{B(0,\eta_i)} |x|^{-\s}
\le C \#I_\Omega|\psi_0|_{C^1} N^{-\frac{\d-\s}{\d}}.
\end{align*}

Combining the above with \eqref{alignemen} and \eqref{intomegc}, and by definition \eqref{dtF} and the commutator  estimate in the form \eqref{main1}, after using Young's inequality, we are led to
\begin{align*}
&\left|\frac{d}{dt}_{|_{t=0} }\int_{\Omega\times [-\lambda \rho, \lambda \rho]}\yg|\nab h_{N,\vec{\eta}}^t|^2\right| \le C (\|\psi_0\|_{L^\infty}+\rho |\psi_0|_{C^1})\times
\\ &    \(\rho^{-1} \int_{\Omega\times [-\lambda \rho, \lambda \rho]}\yg|\nab h_{N,\vec{\eta}}^0|^2  +\rho^{-1} \int_{\tilde\Omega\times [-\lambda \rho, \lambda \rho]}\yg |\nab h_{N,\rr}^0|^2+  N^{-\frac2\d}(\#I_\Omega )^2 \rho^{-\s-3}+ \#I_\Omega \rho^{-1} N^{\frac\s\d}\)
 .\end{align*}
 We next wish to estimate $\int \yg |\nab h_{N,\rr}|^2$ in terms of $\int \yg |\nab h_{N,\vec{\eta}}|^2$.
 Using the definition \eqref{truncated potential}, if $\alpha_i $ and $\kappa_i$ are $\le \rr_i$, we have
 $$\nab h_{N,\vec{\alpha}}= \nab h_{N,\vec{\kappa}}+ \sum_{i=1}^N\nab ( \f_{\kappa_i}-\f_{\alpha_i})(x-x_i)$$
 hence, using that the $B(x_i,\rr_i)$ are disjoint, we find
 \begin{align}
 \notag \int_{ \Omega\times [-\lambda \rho, \lambda \rho]}\yg |\nab h_{N,\vec{\alpha}}|^2& \le 2 \int_{\Omega\times [-\lambda \rho, \lambda \rho]} \yg |\nab h_{N,\vec{\kappa}}|^2 +\sum_{i\in I_\Omega} \int_{\R^{\d+1}} \yg |\nab (\f_{\kappa_i}-\f_{\alpha_i})|^2 \\
 & \lesssim \int_{\Omega\times [-\lambda \rho, \lambda \rho]} \yg |\nab h_{N,\vec{\kappa}}|^2 + \sum_{i\in \Omega} (\kappa_i^{-\s}+\alpha_i^{-\s}).
 \end{align}
 Applying in $\tilde \Omega$  to $\alpha_i=\rr_i$ and $\kappa_i=\eta_i \ge \hal \rr_i$, we obtain 
  $$\int_{\tilde \Omega\times [-\lambda \rho, \lambda \rho]} \yg |\nab h_{N,\rr}^0|^2\lesssim  \int_{\tilde \Omega\times [-\lambda \rho, \lambda \rho]}\yg |\nab h_{N,\vec{\eta}}^0|^2+  \sum_{i\in I_{\tilde \Omega} }\rr_i^{-\s},$$
 and using \eqref{eq:13} and the definition of $\tilde \Omega$, we may absorb these terms into the others.

The same reasoning yields that the relation is true at any $t$, so we find 
\begin{align}\notag
 & \left|\frac{d}{dt}\int_{\Omega\times [-\lambda \rho, \lambda \rho]}\yg|\nab h_{N,\veta}^t|^2\right|\\ \label{dtOt}
 & 
 \le C (\rho^{-1}\|\psi_t\|_{L^\infty}+ |\psi_t|_{C^1}) \( \int_{\Omega\times [-\lambda \rho, \lambda \rho]}|\nab h_{N,\vec{\eta}}^t|^2+ \#I_\Omega N^{\frac\s\d}  +  N^{-\frac2\d}(\#I_\Omega )^2 \rho^{-\s-2}\).
 \end{align}
 Applying Gronwall's lemma,  we deduce the result \eqref{corocontrenergyt2} in the case $\psi$ supported in $\Omega$.
  
 Let us now turn to the case where the support of $\psi$ is in $\Omega^c$. With the same notation, we may write $h_{N,\vec{\eta}}^t-h_{N,\vec{\eta}}^0= u^t$ where $u^t$ solves $-\div (\yg \nab u^t)=0$ in  $\Omega \times \R$, and we may write 
 \begin{multline}\int_{\Omega\times [-\lambda \rho, \lambda \rho]} 
\yg|\nab h_{N,\veta}^t|^2- \yg|\nab h_{N,\veta}^0|^2\\=
\int_{\Omega\times [-\lambda \rho, \lambda \rho]}\yg |\nab u^t|^2 + 2\int_{\partial( \Omega\times [-\lambda \rho, \lambda \rho]} \yg ( h_{N,\veta}^0 -\bar h_{N,\veta}^0)\frac{\partial  u^t}{\partial n}\\
= O(t^2) + O \( \( \int_{\Omega\times [-\lambda \rho, \lambda \rho]} \yg|\nab h_{N,\vec{\eta}}|^2 \)^\hal \( \int_{\pa \Omega \times [-\lambda \rho, \lambda \rho]} \yg \left|\frac{\pa u^t}{\pa n}\right|^2\)^\hal\),
\end{multline}
where $\vec{n}$ is the outwards pointing  unit normal to $\pa \Omega$ and $\bar h_{N,\veta}^0$ is the weighted average of $h_{N,\veta}^0$ on $\pa\Omega$. The rest of the computation is identical to the first step and yields \eqref{dtOt}, and we finish the proof in the same way. We can then check that all the cases considered in the statement lemma have been treated.
\end{proof}

\begin{prop}\label{pf loiloc}
Let $\psi$ be as constructed in Proposition \ref{transport}, and let $\Phi_t=\id+t\psi$.
Assume that $\XN\in \mathcal G_\ell$, a set of configurations with $\PNbeta(\mathcal G_\ell^c) \le C_1 e^{-\beta C_2 \ell^\d N}$ such that the local laws \eqref{loiloc}--\eqref{controlnbreintro} hold in each $D_k$ of \eqref{defAk} with $k \geq 1$ (in $\bulk$) and that \eqref{macrolaw2} holds.  Then if $t$ is small enough that 
$|t| \ell^{\s-\d}\M $ is smaller than a constant depending only on  the constant in \eqref{tscale}, for any $k\in [1, k_+]$ integer and $D_k$ as in \eqref{defAk},  $k_*$ as in \eqref{defkstar}, we have 
\be \label{localizationGN}
\int_{(  D_{k+\frac32}\backslash D_{k+\hal} )\times [-2^k \ell,2^k\ell]} \yg |\nab h_N[\Phi_t(\XN), \Phi_t\#\muv]|^2 \le 
C(2^k \ell)^\d N^{1+\frac\s\d},\ee
and
\be\label{localizationGNext}
\int_{D_{3/2}\times [-\ell, \ell]} \yg |\nab h_N[\Phi_t(\XN), \Phi_t\#\muv]|^2 \le 
C \ell^\d N^{1+\frac\s\d},
\ee where $C>0$ depends only on $\s,\d,\|\muv\|_{L^\infty}$ and the prior constants.
\end{prop}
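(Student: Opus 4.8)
Fix $k\in[1,k_+]$ and write $\rho=2^k\ell$; the region in \eqref{localizationGN} sits inside $A_k\times[-2\rho,2\rho]$, and \eqref{localizationGNext} is the case $k=0$ with $D_{3/2}$ in place of the thin shell and no inner tail. The strategy is to transport ``scale by scale'', iterating the energy--monotonicity estimate of Lemma~\ref{corocontrenergyt2}. Using a smooth partition of unity one splits
\[\psi=\psi^{\mathrm{out}}+\psi^{\mathrm{near}}+\psi^{\mathrm{in}},\]
where $\psi^{\mathrm{out}}$ is supported beyond a fixed number of dyadic octaves above $\rho$ (say in $D_{k+c}^c$), $\psi^{\mathrm{in}}$ is supported near the origin (say in $D_{k-c}$), and $\psi^{\mathrm{near}}$ is a sum of $O_\d(1)$ pieces, each localized at a scale comparable to $\rho$ and supported well inside an annulus $A_j$ with $|j-k|=O(1)$ (if that annulus is itself too small to ``avoid its own hole'', one or two octaves larger). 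We realize $\Phi_t=\id+t\psi$, up to an $O(t^2)$ correction supported in the finitely many transition shells and absorbed into the same estimates, as the composition of the elementary transports $\id+t(\cdot)$, applied in the order $\psi^{\mathrm{out}}$, then the pieces of $\psi^{\mathrm{near}}$ outermost-first, then $\psi^{\mathrm{in}}$. By the decay \eqref{tscale} and the smallness hypothesis on $t$, every elementary field $\widetilde\psi$ and region $\Omega$ of scale $\simeq\rho$ used below satisfies $|t|\bigl(\rho^{-1}\|\widetilde\psi\|_{L^\infty}+|\widetilde\psi|_{C^1}\bigr)\lesssim|t|\M\ell^{\s-\d}\ll1$, so each application of Lemma~\ref{corocontrenergyt2} costs only an $O(1)$ multiplicative constant.

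\textbf{Iteration and inputs.} For $\psi^{\mathrm{out}}$, which for $c$ large enough lies in $A_k^c$ at distance $\ge\rho/2$ from $\partial A_k$, Lemma~\ref{corocontrenergyt2} in its ``$\widetilde\psi\subset\Omega^c$'' form with $\Omega=A_k$ bounds $\int_{A_k\times[-2\rho,2\rho]}\yg|\nabla h_N[\Phi_t^{\mathrm{out}}\XN,\Phi_t^{\mathrm{out}}\#\muv]|^2$ by a constant times the same integral for $h_N[\XN,\muv]$, plus $\#I_{A_k}N^{\s/\d}+N^{-2/\d}(\#I_{A_k})^2\rho^{-\s-2}$; crucially $\widetilde\psi$ need not be compactly supported, so \emph{all} the outer scales are disposed of in this single step, with no logarithm. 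The pre-transport integral is controlled by the local law \eqref{loiloc} on $D_{k+4}$ (or \eqref{macrolaw2} at the macroscale), together with the passage between $h_N$ and the truncated $h_{N,\rr}$ via \eqref{eq:13}--\eqref{eq:pr2} (whose right-hand sides are again controlled by the minimal-distance content of the local law), giving $\lesssim_\beta\rho^\d N^{1+\frac\s\d}$. We then transport by the pieces of $\psi^{\mathrm{near}}$: each sits well inside some $A_j$ with $|j-k|=O(1)$, and Lemma~\ref{corocontrenergyt2} with that $\Omega$ controls the energy on $A_j\times[\ldots]\supset$ the shell after the transport, the pre-transport integral having just been bounded and only an $O(1)$ factor and an additive $\lesssim_\beta 2^{j\d}N^{1+\frac\s\d}\lesssim_\beta\rho^\d N^{1+\frac\s\d}$ being incurred. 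Finally, for $\psi^{\mathrm{in}}$, supported near the origin, we cannot use $A_k$ because its hole $D_{k-2}$ is too small compared to $\rho$; instead we cover the shadow $D_{k+\frac32}\setminus D_{k+\hal}$ of the shell by $O_\d(1)$ cubes $Q$ of sidelength $\simeq\rho$, each at distance $\gtrsim\rho$ from $D_{k-c}$, and apply Lemma~\ref{corocontrenergyt2} (whose proof, through Lemma~\ref{trace}, is insensitive to the shape of $\Omega$) in its ``$\widetilde\psi\subset\Omega^c$'' form with $\Omega=Q$, the pre-transport integral on $Q\times[\ldots]\subset A_k\times[-2\rho,2\rho]$ having just been bounded.

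\textbf{Errors and conclusion.} At each step $\#I_\Omega\lesssim_\beta\rho^\d N$ by \eqref{controlnbreintro}, so $\#I_\Omega N^{\s/\d}\lesssim_\beta\rho^\d N^{1+\frac\s\d}$, and
\[N^{-2/\d}(\#I_\Omega)^2\rho^{-\s-2}\lesssim_\beta N^{2-\frac2\d}\rho^{2\d-\s-2}=\bigl(N^{1/\d}\rho\bigr)^{-(\s+2-\d)}\rho^\d N^{1+\frac\s\d}\lesssim_\beta\rho^\d N^{1+\frac\s\d},\]
using $\s>\d-2$ (so the exponent $\s+2-\d>0$) and $\rho\ge\ell\ge\rho_\beta N^{-1/\d}\ge4N^{-1/\d}$. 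Since the chain consists of $O_\d(1)$ steps, each contributing an $O(1)$ multiplicative factor and an additive $\lesssim_\beta\rho^\d N^{1+\frac\s\d}$, and the input is $\lesssim_\beta\rho^\d N^{1+\frac\s\d}$, unrolling the one-step estimates gives $\int_{(D_{k+\frac32}\setminus D_{k+\hal})\times[-\rho,\rho]}\yg|\nabla h_N[\Phi_t\XN,\Phi_t\#\muv]|^2\lesssim_\beta\rho^\d N^{1+\frac\s\d}$, i.e.\ \eqref{localizationGN}; \eqref{localizationGNext} is the same argument at $k=0$, with $D_{3/2}$ in place of the shell and an empty inner piece.

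\textbf{Main obstacle.} The genuine difficulty, absent in the local Coulomb case, is that $\psi$ is not supported near $\supp\varphi$ but decays only polynomially (cf.\ \eqref{tscale}), so the energy on a shell far from $\supp\varphi$ feels the transport at every scale. Localization survives for two reasons: first, the influence on the target shell of a transport supported away from it is a \emph{far-field} effect --- a crude triangle-inequality estimate would lose a factor $(N^{1/\d}\rho)^{\d-\s}$, fatal for small $\rho$ --- but the commutator cancellation built into Lemma~\ref{corocontrenergyt2} captures it with the correct surface-type size; second, all of the outer scales can be bundled into a single, non-compactly supported field treated in one application of the lemma, so that only $O_\d(1)$ additive errors, each of the right order $\rho^\d N^{1+\frac\s\d}$, ever appear, and no $\log N$ is produced. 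The bulk of the work is then the geometric bookkeeping: choosing, for each elementary transport, a region $\Omega$ of scale $\simeq\rho$ (the annulus $A_k$ or a nearby $A_j$ for the near and outer pieces, small cubes for the inner piece, since $A_k$'s hole is too small to reach the origin) so that all the distance hypotheses of Lemma~\ref{corocontrenergyt2} hold simultaneously and the pre-transport energies stay under control throughout the chain.
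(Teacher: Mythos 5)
Your overall strategy (decompose the transport into scale-localized pieces, iterate Lemma~\ref{corocontrenergyt2}, then use the local laws and discrepancy control to bound the additive errors) is the same as the paper's, and your error bookkeeping at each step is right: the exponent $|t|(\rho^{-1}\|\tilde\psi\|_{L^\infty}+|\tilde\psi|_{C^1})\lesssim|t|\M\ell^{\s-\d}$ is small, $\#I_\Omega N^{\s/\d}\lesssim_\beta\rho^\d N^{1+\s/\d}$ by \eqref{controlnbreintro}, and $N^{-2/\d}(\#I_\Omega)^2\rho^{-\s-2}$ is lower order because $\s>\d-2$. But your implementation differs from the paper's in two places, and the second one is a genuine gap.

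First, the decomposition. You split $\psi=\psi^{\mathrm{out}}+\psi^{\mathrm{near}}+\psi^{\mathrm{in}}$ additively and then ``realize $\Phi_t$ as the composition of $\id+t(\cdot)$ up to an $O(t^2)$ correction.'' The paper avoids this correction altogether: it uses the exact compositional decomposition \eqref{inductionpsik}--\eqref{inductionpsik2}, defining each $\psi_j$ by
$\id+t\psi_j=(\id+t(\sum_{p\le j}\chi_{m_p})\psi)\circ(\id+t(\sum_{p\le j-1}\chi_{m_p})\psi)^{-1}$,
so that the chain of elementary maps composes to $\id+t\psi$ identically, and each $\psi_j$ inherits exactly the $C^1$ bounds of $\chi_{m_j}\psi$ from \eqref{tscale}. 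Your $O(t^2)$ remainder, while quadratic in $t$, lives in the transition shells, carries two derivatives of $\psi$, and would need its own iteration of Lemma~\ref{corocontrenergyt2}; it is not ``absorbed'' for free. This is fixable by adopting the paper's compositional definition, but as written it is a hole.

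Second, and more seriously, you bundle \emph{all} the outer (and inner) dyadic scales into a single non-compactly-supported $\psi^{\mathrm{out}}$ and claim that one application of Lemma~\ref{corocontrenergyt2} with $\Omega=A_k$ in its exterior form handles them simultaneously, ``so that only $O_\d(1)$ additive errors ever appear, and no $\log N$ is produced.'' The lemma's \emph{statement} does not forbid this, but its conclusion (which involves only $\#I_\Omega$ and the local energy on $\Omega$) is not self-evidently correct for a $\tilde\psi$ whose support spans all scales from $\rho$ to $1$. The estimate of $\nabla u^t$ on $\partial(\Omega\times[-\lambda\rho,\lambda\rho])$ in the exterior case comes from the representation formula \eqref{nabutformula}, which integrates over $\supp\psi_s$; the Green's-formula step then pairs a test function supported on $\supp\psi_s$ against $\nabla h_N^s$, so the resulting Cauchy--Schwarz involves the energy over a neighborhood of $\supp\psi_s$, \emph{not} over $\Omega$. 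In the paper's application each $\psi_j$ is supported at a single dyadic scale $2^{m_j}\ell$, and the energy there is controlled by the local law \eqref{loiloc} \emph{at that scale}, giving $\lesssim_\beta(2^{m_j}\ell)^\d N^{1+\s/\d}$; the iterative scheme then records, for each $j$, a multiplicative factor $\exp(Ct(\rho^{-1}\|\psi_j\|_{L^\infty}+|\psi_j|_{C^1}))$ whose exponents are summed geometrically in $j$ (see \eqref{itertout}) and stay bounded precisely because \eqref{tscale} decays. That geometric summation is the quantitative content of the ``no $\log N$'' claim; it is invisible if you apply the lemma once to a bundled $\psi^{\mathrm{out}}$, and the lemma as stated does not quote a multi-scale energy term that would let you carry it out in one shot. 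What you would need to supply (and have not) is an explicit multi-scale version of the $\nabla u^t$ estimate showing that the contributions of the far scales $2^j\ell\gg\rho$, weighted by the kernel $\dist(\partial\Omega,A_j)^{-\s-3}$ and the decay $|\psi|_{C^1(A_j)}\lesssim\M\ell^\d(2^j\ell)^{-(2\d-\s)}$, reproduce exactly the additive terms $\#I_\Omega N^{\s/\d}+N^{-2/\d}(\#I_\Omega)^2\rho^{-\s-2}$.

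In short: the paper's route (exact compositional decomposition, one application of Lemma~\ref{corocontrenergyt2} per dyadic scale, geometric control of the compounded exponents) is the safe one; your bundled version requires an extension of the lemma's exterior case that you have asserted but not verified, and an explicit treatment of the $O(t^2)$ defect in the additive splitting.
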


\begin{proof}
Let, for $k$ integer, 
\be \label{defBk}
B_k= D_{k+2}\backslash D_{k} \quad  \text{for}\,  k \le k_*, \quad B_{k_*+1}= D_{k_*+1}^c.\ee
Let  $\chi_k$ be a partition of unity associated to $\{B_k\}_{k=0}^{k_*+1}$, such that $\supp \chi_k \subset B_k$,  $\sum_{k=0}^{k_*+1}\chi_k=1$ and $|\nab \chi_k|\lesssim 2^{-k}\ell^{-1}$.

Let $k\in [0, k_*]$ be a given integer.  
We are going to decompose $\id + t\psi$ as a composition of transports $\id + t\psi_j$ such that each $\psi_j$ has support in $B_j$.
For that, we let $m_0, \dots, m_{k_*+1}$ be the sequence given by 
$$\begin{cases} 
m_0= k-1&\\
m_1=k&\\
m_2= k+1 & \\
m_p= p- 3 & \text{for }3\le  p \le k+1\\
m_p=p & \text{for } k+2 \le p \le k_*+1
\end{cases}
$$
i.e.~we take away the indices from $k-1$ to $k+1$ and put them at the beginning. If $k=0$, we do the same but moving only two indices to the beginning.

 We define inductively $\psi_j$ by 
\be \label{inductionpsik}
\id+ t\psi_j= \Big(\id +t (\sum_{p=0}^j \chi_{m_p})\psi\Big) \circ \Big( \id +t(\sum_{p=0}^{j-1}\chi_{m_p}) \psi\Big)^{-1}.\ee
In other words, we have decomposed the transport into successive localized transports $\id + t\psi_{j}$. One may check by induction that 
\be \label{inductionpsik2}\id + t\psi= \id +t \sum_{j=0}^{k_*+1} \chi_j \psi = (\id + t\psi_{k_*+1}) \circ \dots \circ(\id+t\psi_0).\ee
We decompose this as 
\be \id+ t\psi= T_{\mathrm{out}}\circ T_{\mathrm{inn}}\ee
where $T_{\mathrm{inn}}= (\id + t\psi_{2}) \circ (\id + t\psi_1) \circ(\id+t\psi_0)$, and 
$T_{\mathrm{out}}= \prod_{j=3}^{k_*+1} (\id + t\psi_j)$ (or the same with only two maps  set aside if appropriate).

If $t\ell^{\s-\d}\|\varphi_0\|_{C^3}$ is small enough, then by \eqref{tscale}, we can make $t\|\psi\|_{L^\infty}\lesssim t \|\varphi_0\|_{C^3}\frac{\ell}{\ell^{\d-\s}}<\frac14\ell$.
One may then check that in view of \eqref{inductionpsik} and definition of the $\chi_j$, for $j \ge 3$, $\psi_j$ is supported in $B_{m_j}$,  and letting $\rho_j= 2^{m_j} \ell$, and in view of \eqref{tscale} we have 
\be | \psi_j|_{C^1} \lesssim \frac{\ell^\d}{\rho_j^{2\d-\s} },
\qquad \|\psi_j\|_{L^\infty} \lesssim \frac{\ell^\d}{\rho_j^{2\d-\s-1}}.
\ee  

Thus, the transport $T_{\mathrm{inn}}$  leaves $A_k$ invariant (where $A_k$ is as in \eqref{defAk}), is supported at distance $\ge \frac14 2^k\ell$ from its boundary, and coincides with identity outside. Thus we may apply Lemma \ref{corocontrenergyt2}  in that set with $\tilde \psi=T_{\mathrm{inn}}$ (we have that $|t|\| T_{\mathrm{inn}}\|_{L^\infty} \lesssim \frac{\ell^\d}{(2^k\ell)^{2\d-\s}}$ hence if $|t|\ell^{\s-\d} \|\varphi_0\|_{C^1}$ is small depending on the constant in \eqref{tscale}, the assumption $|t|\|T_{\mathrm{inn}}\|_{L^\infty} < \min (\frac14 (2^k \ell) , \frac14)$ is verified)  and obtain 
\begin{multline}\label{itertinn}
  \int_{A_k \times [-2^k\ell,2^k\ell]}
  |\nab h_{N,\rr}[T_{\mathrm{inn}}(\XN), T_{\mathrm{inn}}\#\muv]|^2 \\ \lesssim  \exp\( C t\frac{\ell^\d}{(2^k\ell)^{2\d-\s}}\) 
\int_{A_k\times [-2^k\ell,2^k\ell]}   |\nab h_{N,\rr}[\XN,\muv]|^2
+ \#I_{A_k} N^{\frac\s\d}  +  N^{-\frac2\d}(\#I_{A_k} )^2 (2^k\ell)^{-\s-2}.
\end{multline}

For $j\ge 3$, the transports $\id + t\psi_j$ have support in  $(D_{k+2}\backslash D_k)^c$, so we can say that their support is at distance $\ge 2^k \ell$ from $\partial (D_{k+\frac32}\backslash D_{k+\hal})$ (resp. $\partial D_{3/2}$ if $k=0$).  We may thus apply Lemma \ref{corocontrenergyt2} iteratively in $D_{k+\frac32}\backslash D_{k+\hal}$, resp. $D_{3/2}$ if $k=0$,  (which is a set of similar nature as in the assumption) with $\tilde \psi=\psi_{j}$ (again we verify that the assumption $|t|\|\psi_{j}\|_{L^\infty}<\min(\frac{2^k \ell}{4}, \hal)$ is satisfied if $|t|\ell^{\s-\d}\|\varphi_0\|_{C^1}$ is small enough), and find
\begin{multline}\label{itertout}
  \int_{(  D_{k+\frac32}\backslash D_{k+\hal} )\times [-2^k \ell,2^k\ell]}
  |\nab h_{N,\rr}[T_{\mathrm{out}}(T_{\mathrm{inn}}(\XN)), T_{\mathrm{out}}\#(T_{\mathrm{inn}}\#\muv)]|^2 \\ \lesssim  \exp\( C t \sum_{j=0}^{2^*+1}\frac{\ell^\d}{(2^j\ell)^{2\d-\s}}+\frac{\ell^\d}{ 2^k \ell (2^{j\ell})^{2\d-\s-1}} \) \\ \times
\Bigg( \int_{(  D_{k+\frac32}\backslash D_{k+\hal} )\times [-2^k \ell,2^k\ell]}   |\nab h_{N,\rr}[T_{\mathrm{inn}}(\XN),T_{\mathrm{inn}}(\XN)\#\muv]|^2 \\+  \#I_{D_{k+\frac32}\backslash D_{k+\hal}} N^{\frac\s\d}  +  N^{-\frac2\d}(\#I_{D_{k+\frac32}\backslash D_{k+\hal}} )^2 (2^k\ell)^{-\s-2}\Bigg)
\end{multline}
Combining \eqref{itertinn} and \eqref{itertout}, we are led to 
\begin{multline}
\int_{(  D_{k+\frac32}\backslash D_{k+\hal} )\times [-2^k \ell,2^k\ell]}
  |\nab h_{N,\rr}[(\id + t\psi)(\XN), (\id + t\psi)\# \muv]|^2 
\\ \lesssim e^{Ct \ell^{\s-\d}} \(\int_{(A_k\times [-2^k \ell,2^k\ell]}   |\nab h_{N,\rr}[\XN,\muv]|^2+
 \#I_{A_k} N^{\frac\s\d}  +  N^{-\frac2\d}(\#I_{A_k} )^2 (2^k \ell)^{-\s-2}\).
\end{multline}
The result then follows from the local laws for $\XN$ and \eqref{discest} to control $(\#I_{A_k})^2$. We note that since $\carr_\ell$ is at distance $ \ge \ep>0$ from $\pa \Sigma$, either $A_{k}$ is included in the set where the local laws hold, or $2^k \ell$ is larger than a constant depending on $\ep$, in which case we can use the macroscopic law (for $\ell=1$) valid up to the boundary.

\end{proof}

These local controls on the transported energy cover all dyadic annuli and allow us to control the $\Ani_n$ terms, except with small probability.

\begin{lem}\label{lemsubdi}Assume that $\XN\in \mathcal G_\ell$ with $\Gc_\ell$ as in Proposition \ref{pf loiloc}.
 Let $\psi$ be the transport constructed in Proposition \ref{transport} for $\varphi$ satisfying \eqref{estxi} at order $k=5$.  If $|t| \ell^{\s-\d}\M $ is smaller than a constant depending only on  the constant in \eqref{tscale}, 
we have 
\be \label{controlani}
|\Ani_1(\Phi_t(\XN),\Phi_t\# \muv, \psi\circ \Phi_t^{-1})|\lesssim_\beta \M N^{1+\frac{\s}{\d}}\ell^\s
\ee 
and more generally, if \eqref{estxi} holds at order $2n+3$,  we have
\be \label{controlani higher}
|\Ani_n(\Phi_t(\XN), \Phi_t\# \muv, \psi \circ \Phi_t^{-1})|\lesssim_\beta
\M^n N^{1+\frac{\s}{\d}}\ell^{n\s-(n-1)\d}.\ee
\end{lem}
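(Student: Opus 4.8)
\textbf{Proof plan for Lemma \ref{lemsubdi}.}
The strategy is to combine the commutator estimates of Proposition \ref{pro:commutator} with the localized energy controls of Proposition \ref{pf loiloc}, decomposing the transport $\psi$ into pieces supported in dyadic annuli $B_{k}$ around $\carr_\ell$ using the partition of unity $\{\chi_k\}$ introduced in the proof of Proposition \ref{pf loiloc}. Write $\psi = \sum_{k=0}^{k_*+1}\chi_k\psi =: \sum_k \psi^{(k)}$, so that $\Ani_n(\Phi_t(\XN),\Phi_t\#\muv,\psi\circ\Phi_t^{-1})$ is estimated by the sum over $k$ of the analogous quantities with $\psi$ replaced by $\psi^{(k)}$. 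For each fixed $k$, since $\psi^{(k)}$ is supported in $B_k = D_{k+2}\setminus D_k$ and vanishes on a neighborhood of $\partial(D_{k+3/2}\setminus D_{k+1/2})$ (for $k\ge 1$; and in $D_{3/2}$ for $k=0$), apply the commutator estimate \eqref{main1} (for $n=1$) or \eqref{mainn} (for $n\ge 2$) with $\Omega = D_{k+3/2}\setminus D_{k+1/2}$ (resp.\ $D_{3/2}$) and $\ell$ there taken to be $\rho_k = 2^k\ell$. This produces a bound of the form $C(\text{derivative norms of }\psi^{(k)})\times\big(\int_{\Omega\times[-\rho_k,\rho_k]}\yg|\nab h_{N,\rr}|^2 + \#I_\Omega N^{\s/\d}\big)$, where the derivative norms are evaluated after composition with $\Phi_t^{-1}$, which only changes them by bounded factors since $|t|\ell^{\s-\d}\M$ is small.

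Next I would plug in the pointwise estimates. From \eqref{tscale}, on the annulus $B_{m_k}$ we have $\|\nab^{\otimes c}\psi^{(k)}\|_{L^\infty}\lesssim \M\,\ell^\d\rho_k^{-(2\d-\s+c-1)}$ for $k\ge 1$ and $c\ge 1$, while on the innermost region $\carr_{2\ell}$ we have $\|\nab^{\otimes c}\psi\|_{L^\infty}\lesssim \M\,\ell^{\s-\d-c+1}$. Combining these derivative bounds with Proposition \ref{pf loiloc}, which gives $\int_{\Omega\times[-\rho_k,\rho_k]}\yg|\nab h_N[\Phi_t(\XN),\Phi_t\#\muv]|^2\lesssim \rho_k^\d N^{1+\s/\d}$, together with the local law \eqref{controlnbreintro}, $\#I_\Omega\lesssim_\beta \rho_k^\d N$, one obtains for $n=1$ a per-annulus bound roughly of order $\M\,\ell^\d\rho_k^{\s-2\d}\cdot\rho_k^\d N^{1+\s/\d} = \M N^{1+\s/\d}\ell^\d\rho_k^{\s-\d}$. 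Since $\s<\d$, summing $\rho_k^{\s-\d} = (2^k\ell)^{\s-\d}$ over $k\ge 1$ converges and is dominated by the term $k=0$ of size $\ell^{\s-\d}$, giving total contribution $\lesssim_\beta \M N^{1+\s/\d}\ell^\s$; the innermost piece on $\carr_{2\ell}$ contributes the same order directly from \eqref{localizationGNext} and $\|\nab\psi\|_{L^\infty}\lesssim\M\ell^{\s-\d}$. This proves \eqref{controlani}. For general $n$, the products of $n$ derivative norms in \eqref{mainn} scale as $\M^n\ell^{n\d}\rho_k^{-n(2\d-\s)+\sum(\text{shifts})}$, and the factors $N^{\frac1\d(\cdots)}$ from \eqref{mainn} together with $\#I_\Omega\lesssim_\beta\rho_k^\d N$ combine so that each term of the $p$-sum contributes $\lesssim_\beta \M^n N^{1+\s/\d}\ell^{n\d}\rho_k^{n\s-(2n-1)\d}$; summing the geometric series over $k$ (again convergent since the exponent $n\s-(2n-1)\d<0$ for $\s<\d$) and bounding by the $k=0$ (i.e.\ $\rho_k=\ell$) term yields $\lesssim_\beta \M^n N^{1+\s/\d}\ell^{n\s-(n-1)\d}$, and the innermost $\carr_{2\ell}$ piece is handled the same way. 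This is \eqref{controlani higher}, and requires the pointwise bounds \eqref{tscale} up to order $m=2n+1$, i.e.\ $k\ge 2n+3$ in Proposition \ref{transport}, consistent with the hypothesis.

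The main obstacle is the bookkeeping in the higher-order case: one must carefully track the exponents of $\rho_k$ and of $N^{1/\d}$ appearing in the multi-index sum of \eqref{mainn} after substituting the decay rates \eqref{tscale}, and verify that every term — including those with $p>0$ factors of $\rho_k\|\nab^{\otimes 2}\psi\|_{L^\infty}$ — produces an exponent of $\rho_k$ that is strictly negative, so that the dyadic sum over $k$ converges and is governed by its innermost term. A secondary technical point is the matching of scales: the commutator estimate \eqref{mainn} requires $\Omega$ to contain a $(5\rho_k+N^{-1/\d})$-neighborhood of $\supp\psi^{(k)}$ with $\rho_k>2N^{-1/\d}$, which holds because $\psi^{(k)}$ lives in $B_{m_k}$ while we take $\Omega$ to be the slightly larger shell $D_{k+3/2}\setminus D_{k+1/2}$, and because $\ell\ge\rho_\beta N^{-1/\d}$ with $\rho_\beta\ge 4$; one must also invoke, as in Proposition \ref{pf loiloc}, that if some annulus $A_k$ reaches $\partial\Sigma$ then $2^k\ell$ is macroscopic and the macroscopic law \eqref{macrolaw2} is used in place of the interior local law.
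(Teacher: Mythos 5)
Your treatment of the case $n=1$ is correct and follows the paper's argument: decompose $\psi=\sum_k\chi_k\psi$, exploit the \emph{linearity} of $\Ani_1$ in $\psi$, apply the commutator bound \eqref{main1} per annulus together with Proposition \ref{pf loiloc} and the local laws, insert the decay \eqref{tscale}, and sum the convergent geometric series in $k$. No issues there.

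For $n\ge2$, however, there is a genuine gap. The crucial step ``$\Ani_n(\ldots,\psi)$ is estimated by the sum over $k$ of the analogous quantities with $\psi$ replaced by $\psi^{(k)}$'' is false, because $\Ani_n$ is $n$-linear in $\psi$, not linear: the integrand involves $\bigotimes_{i=1}^n(\psi(x)-\psi(y))$. Substituting $\psi=\sum_k\psi^{(k)}$ produces a sum over $n$-tuples $\vec k=(k_1,\dots,k_n)$, not a single-index sum, and only the diagonal tuples $\vec k=(k,\dots,k)$ are commutators of the form to which \eqref{mainn} applies. The off-diagonal cross terms are never estimated in your proposal. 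They cannot be simply absorbed: for indices $k_i$ with overlapping supports one needs a separate argument (the paper uses multilinear polarization to reduce to diagonal terms), and for well-separated indices the supports of $\psi^{(k_i)}$ and $\psi^{(k_j)}$ are disjoint, so the quantity is not a commutator at all. In this latter case the integrand factorizes as $\partial^{\otimes n}\g(x-y)\,\varphi_1(x)\varphi_2(y)$ and must be treated as a double fluctuation, applying the energy-based fluctuation control of Proposition \ref{pro:controlfluct} once in each variable, with a careful optimization of the height parameter $\eta$ at each dyadic scale. This disjoint-support case is the most delicate part of the proof and it is not addressed at all in your write-up; without it the dyadic summation you set up does not cover all contributions to $\Ani_n$.

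A smaller point: since the diagonal terms dominate, the final estimate in your sketch does land on the correct order $\M^n N^{1+\s/\d}\ell^{n\s-(n-1)\d}$, and the scale-matching and boundary remarks at the end are fine. But to turn this into a proof you would need to add (i) the multilinear polarization step converting adjacent-support cross terms into diagonal ones, and (ii) the double-fluctuation estimate for the disjoint-support cross terms, including verifying that the resulting two-parameter sum over $(k,m)$ still converges.
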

\begin{proof} Let us first treat the easiest situation of $n=1$. Let $k_*$ be as above.
Let $\chi_k$ be the  partition of unity  relative to the  $A_k$'s this time defined as 
\be A_k= D_k\backslash D_{k-\frac32}, \ \text{for } k \le k_*+1, \quad A_{k*+2}= D_{k_*+\hal}^c\ee
with $|\nab^{\otimes m} \chi_k|\lesssim (2^k \ell)^{-m}$. The $A_k$'s are chosen to form  a covering of $\R^\d$, and each $A_k$ intersects only $A_{k-1}$ and $A_{k+1}$, with $A_{k-1} $ and $A_{k+1}$ at positive distance $2^{k/2}\ell$ from each other.
Let us decompose $\psi$ into $\sum_{k=0}^{k_*+1} (\chi_k \psi)$ and use the linearity of $\Ani_1$ with respect to $\psi$, the commutator estimate \eqref{main1}, the local laws \eqref{loiloc}, the law \eqref{macrolaw2} in $U^c$, the estimate on $\psi$ \eqref{tscale} and the result of the above proposition to obtain that for $\XN\in \mathcal G_\ell$, we have 
\begin{align*}  
&|\Ani_1(    \Phi_t(\XN),\Phi_t\# \muv, \psi)| \lesssim                                                                                                                                                                                                                                           \sum_{k=0}^{k_*+1} \|\nab (\chi_k\psi)\|_{L^\infty}\int_{B_k\times [-2^k\ell, 2^\ell]} \yg|\nab h_{N,\rr} (\Phi_t(\XN),\Phi_t( \muv))|^2 \\
&\lesssim_\beta\( \sum_{k=0}^{k_\ast}\|\nabla(\psi \chi_k)\|_{L^\infty}\(2^k\ell\)^\d N^{1+\frac{\s}{\d}} +\|\nab (\psi \chi_{k_*+1})\|_{L^\infty(U^c)} N^{1+\frac\s\d} \)\\
&\lesssim_\beta  \M\(\sum_{k=0}^{k_\ast}\frac{\(2^k\ell\)^\d N^{1+\frac{\s}{\d}}\ell^\d}{\(2^k\ell\)^{2\d-\s}} +\ell^\d N^{1+\frac\s\d}\)\\
&\lesssim_\beta \M\ell^{\s} N^{1+\frac{\s}{\d}}\( \sum_{k=0}^{k_\ast}\frac{1}{\(2^{\d-\s}\)^k} + 1\).
\end{align*}
The proof for larger $n$ relies similarly on a dyadic decomposition coupled with the higher order commutator estimates \eqref{mainn}, although the combinatorics are a bit more complicated due to the tensor products in the integrand. We first describe the approach for $n=2$ since it is instructive and easier to follow, before giving a detailed proof for all $n \geq 2$.

For notational ease, we will write $\XN$, $\mu$ and $\psi$ instead of $\Phi_t(\XN)$, $\Phi_t\#\muv$ and $\psi \circ \Phi_t^{-1}$; this makes the notation below more readable, and the local law applies as desired by Proposition~\ref{pf loiloc}.

\textbf{Description for $n=2$.}

As in the $n=1$ case, we decompose $\psi$ as $\sum_{k=0}^{k_*+1} (\chi_k\psi)$; denoting $\chi_k\psi$ by $\psi^k$ for notational ease, we find 
\begin{equation}\label{ani 2 exp}
\Ani_2(\XN,\mu,\psi)=\frac{1}{2}\sum_{k,m=0}^{k_\ast+1} \iint_{\triangle^c}\nabla^{\otimes 2}\g(x-y) :(\psi^k(x)-\psi^k(y))\otimes (\psi^m(x)-\psi^m(y))~d\fluct_N^{\otimes 2}(x,y).
\end{equation}
There are three kinds of terms in the summands:
\begin{enumerate}
\item $k=m$
\item $|m-k|=1$
\item $|m-k|\geq 2$.
\end{enumerate}
The $k=m$ case is simplest, as the term is itself a higher order commutator 
\begin{equation*}
\Ani_2(\XN,\mu,\psi)=\frac{1}{2}\iint_{\triangle^c}\nabla^{\otimes 2}\g(x-y) :(\psi^k(x)-\psi^k(y))^{\otimes 2}~d\fluct_N^{\otimes 2}(x,y).
\end{equation*}
which we will estimate using \eqref{mainn} and the local law \eqref{loiloc} at the scale $2^k\ell$ at which $\psi^k$ lives.

The case where $|m-k|=1$ is a bit challenging, because although it is not literally a commutator it is very close to one as $\psi^k$ and $\psi^m$ live at similar scales. The idea is to use polarization; letting
\begin{equation*}
\varphi(k,m):= \iint_{\triangle^c}\nabla^{\otimes 2}\g(x-y) :(\psi^k(x)-\psi^k(y))\otimes (\psi^m(x)-\psi^m(y))~d\fluct_N^{\otimes 2}(x,y)
\end{equation*}
we notice that $\varphi(k,m)$ is bilinear in $k$ and $m$, where the sum $k+m$ corresponds to $\psi^k+\psi^m$. Polarizing, we find
\begin{equation*}
\varphi(k,m)=\frac{\varphi(k+m,k+m)-\varphi(k,k)-\varphi(m,m)}{2}.
\end{equation*}
Each $\varphi(i,i)$ term is then a commutator, which we can control using \eqref{mainn} and the local law at the corresponding scale \eqref{loiloc} as in Case 1.

In the third case, $|m-k|\geq 2$, the analysis is quite different since the supports of $\psi^k$ and $\psi^m$ are then disjoint. While this case will be a bit computationally intensive, it is conceptually easier because the summand can then be seen as the double fluctuation of a regular function. Notice that the domain of integration for 
\begin{multline*}
\iint_{\triangle^c}\nabla^{\otimes 2}\g(x-y) :(\psi^k(x)-\psi^k(y))\otimes(\psi^m(x)-\psi^m(y))~d\fluct_N^{\otimes 2}(x,y) \\
=\sum_{i,j=1}^\d \iint_{\triangle^c}\partial_{ij}\g(x-y)(\psi_i^k(x)-\psi_i^k(y))(\psi_j^m(x)-\psi_j^m(y))~d\fluct_N^{\otimes 2}(x,y),
\end{multline*}
where we have denoted the $i$th coordinate of $\psi^k$ by $\psi_i^k$, is only on the support of $(\psi_i^k(x)-\psi_i^k(y))(\psi_j^m(x)-\psi_j^m(y))$. Both of these factors must be nonzero, and since the supports are disjoint then we need one of $x$ and $y$ to be in each. In particular, via the symmetry of $\g$,
\begin{multline*}
\sum_{i,j=1}^\d \iint_{\triangle^c}\partial_{ij}\g(x-y)(\psi_i^k(x)-\psi_i^k(y))(\psi_j^m(x)-\psi_j^m(y))~d\fluct_N^{\otimes 2}(x,y)\\
=2\sum_{i,j=1}^\d \iint_{\triangle^c}\partial_{ij}\g(x-y)\psi_i^k(x)\psi_j^m(y)~d\fluct_N^{\otimes 2}(x,y).
\end{multline*}
We will apply the energy estimate Proposition \ref{pro:controlfluct} twice to obtain the requisite control, once on each fluctuation. With this background for $n=2$, we now explain how to work through the argument for generic $n$.
\textbf{Detailed argument for generic $n \geq 2$.}

Decomposing $\psi=\sum \psi^k$ as above, we write
\begin{equation}\label{ani n exp}
\Ani_n(\XN,\mu,\psi)=\frac{1}{2}\sum_{\vec k \in [k_\ast+2]^n}\iint_{\triangle^c}\nabla ^{\otimes n}\g(x-y) : \bigotimes_{i=1}^n \(\psi^{k_i}(x)-\psi^{k_i}(y)\)~d\fluct_N^{\otimes 2}(x,y)
\end{equation}
where $[k]^n=\{0,1,\dots,k\}^n$. We have three kinds of terms in the summation.

\subsubsection*{Case 1: $\vec k=(k,k,\dots,k)$}
This corresponds to the first type discussed in the $n=2$ case. It is a true commutator and we compute using \eqref{mainn}, \eqref{loiloc} and \eqref{tscale} at scale $2^k\ell$
\begin{align*}
&\left|\iint_{\triangle^c}\nabla ^{\otimes n}\g(x-y) :\(\psi^{k}(x)-\psi^{k}(y)\)^{\otimes n}~d\fluct_N^{\otimes 2}(x,y)\right| \lesssim_\beta \(2^k\ell\)^\d N^{1+\frac{\s}{\d}}\sum_{p=0}^n \(2^k\ell\|\nab^{\otimes 2}\psi\|_{L^\infty}\)^p \\
&\hspace{3cm} \times
 \sum_{\substack{1\leq c_1,\ldots,c_{n-p} \\ n-p\le c_1+\cdots+c_{n-p} \le 2n}} N^{\frac{1}{\d}((n-p)-\sum_{k=1}^{n-p} c_{k}) } \|\nabla^{\otimes c_1} \psi\|_{L^\infty}\cdots\|\nabla^{\otimes c_{n-p}} \psi\|_{L^\infty} \\
 &\lesssim_\beta \(2^k\ell\)^\d N^{1+\frac{\s}{\d}}\sum_{p=0}^n \frac{\ell^{\d p}\M^p}{\(2^k\ell\)^{2\d-\s}}\sum_{\substack{1\leq c_1,\ldots,c_{n-p} \\ n-p\le c_1+\cdots+c_{n-p} \le 2n}}N^{\frac1\d((n-p)-\sum_{k=1}^{n-p} c_{k}) } \prod_{q=1}^{n-p}\frac{\ell^\d\M}{\(2^k\ell\)^{2\d-\s+c_q-1}} \\
 &\lesssim_\beta \(2^k\ell\)^\d N^{1+\frac{\s}{\d}}\frac{\ell^{\d n}\M^n}{\(2^k\ell\)^{(2\d-\s)n}}=\ell^{\d(1-n) +\s n}N^{1+\frac{\s}{\d}}\M^n\(2^{n(\s-\d)+\d(1-n)}\)^k
\end{align*}
where we have used that the largest possible $c_q$ is $2n$ and that
\begin{equation*}
N^{\frac{1}{\d}( (n-p)-\sum_{k=1}^{n-p} c_{k}) }\(2^k\ell\)^{(n-p)+\sum_{q=1}^{n-p}c_q}\lesssim 1
\end{equation*}
since $N^{-1/\d} \leq 2^k \ell$ and $(n-p)-\sum_{k=1}^{n-p} c_{k} \geq 0$. Summing over $k$ yields 
\begin{multline*}
\sum_{\vec k=(k,k,\dots,k), k\in [k_*+1]}\left|\iint_{\triangle^c}\nabla ^{\otimes n}\g(x-y) : \bigotimes_{i=1}^n \(\psi^{k_i}(x)-\psi^{k_i}(y)\)~d\fluct_N^{\otimes 2}(x,y)\right| \\
\lesssim_\beta \ell^{(1-n) \d+\s n}N^{1+\frac{\s}{\d}}\M^n\end{multline*}
In the case $k=k_*+2$, we use the bound \eqref{tscale} and \eqref{macrolaw2} to obtain that this term is controlled by $\ell^{\d}N^{1+\frac\s\d}$, which can be incorporated into the previous estimate. 

\subsubsection*{Case 2: $\bigcup_i\supp(\psi^{k_i})$ is connected}
This corresponds to the second type discussed in the $n=2$ case. Since the supports of $\psi^k$ and $\psi^m$ only overlap for $|m-k|\leq 1$, this is only possible if the index vector $\vec k$ takes the form (up to reordering)
\begin{equation*}
\vec k=(k, k+a_1, k+a_1+a_2, \dots, k+a_1+\dots + a_{n-1}), \qquad a_i\in \{0,1\}
\end{equation*}
Note that for each $k$, there are only a constant dependent on $n$ number of such vectors. For these indices, we make use of multilinear polarization. For notational ease, we again let
\begin{equation*}
\varphi(k_1,\dots,k_n):= \iint_{\triangle^c}\nabla ^{\otimes n}\g(x-y) : \bigotimes_{i=1}^n \(\psi^{k_i}(x)-\psi^{k_i}(y)\)~d\fluct_N^{\otimes 2}(x,y).
\end{equation*}
Notice that $\varphi(k_1,\dots,k_n)$ is symmetric and multilinear in $\vec k$, where the sum $k+m$ again corresponds to $\psi^k+\psi^m$. Using the polarization formula for symmetric, multilinear forms we have
\begin{equation*}
\varphi(k_1,\dots,k_n)=\frac{1}{n!}\sum{p=1}^n \sum_{1 \leq j_1 < \cdots < j_p \leq n}(-1)^{n-p}\varphi(k_{j_1}+\cdots+k_{j_p})
\end{equation*}
where we have simplified $\varphi(m):=\varphi(m,m,\dots,m)$. Now, for a vector of the form $\vec k=(k, k+a_1, k+a_1+a_2, \dots, k+a_1+\dots + a_{n-1})$ with $a_i$ as above, all of the $k_i$ live at the same scale (up to an $n$-dependent constant). Hence, we can apply \eqref{loiloc} and \eqref{tscale} at scale $2^k\ell$, up to adjusting constants, which coupled with \eqref{mainn} as in Case 1 yields
\begin{equation*}
|\varphi(k_1,\dots,k_n)|\lesssim_\beta \ell^{\d(1-n)+\s n }N^{1+\frac{\s}{\d}}\M^n\(2^{ n(\s-\d)+(1-n)\d}\)^k
\end{equation*}
In the case $k=k_*+2$, we again use the bound \eqref{tscale} and \eqref{macrolaw2} to obtain that this term is controlled by $\ell^{\d}N^{1+\frac\s\d}$, which can be incorporated into the previous estimate. 
Summing over $k$ again yields
\begin{multline*}
\sum_{\vec k \text{ is Case }2}\left|\iint_{\triangle^c}\nabla ^{\otimes n}\g(x-y) : \bigotimes_{i=1}^n \(\psi^{k_i}(x)-\psi^{k_i}(y)\)~d\fluct_N^{\otimes 2}(x,y)\right| \\
\lesssim_\beta \M^n \ell^{n\s-(n-1)\d}N^{1+\frac{\s}{\d}}
\end{multline*}
using that there are an $O(n)$ number of $\vec k$ associated to each $k$.
\subsubsection*{Case 3: $\bigcup_i\supp(\psi^{k_i})$ is not connected}
This corresponds to the third type discussed in the $n=2$ case. The first observation to make here is that we only need to consider the situation where $\bigcup_i\supp(\psi^{k_i})$ is a disjoint union of two connected sets. Indeed, suppose that it could be written as a disjoint union of three connected sets. Then, since the support of distinct $\psi^k$ and $\psi^m$ only overlap for $|m-k|\leq 1$, we can write the $\vec k$ (up to reordering) as 
\begin{multline*}
\vec k=(k, k+a_0,\dots, k+a_1+\cdots+a_{p-1}, m, m+b_0, \dots, m+b_1+\cdots+b_{q-1}, \\
r, r+c_0,\dots, r+c_1+\cdots+c_{n-p-q-1})
\end{multline*}
where all of the $a_i,b_i$ and $c_i$ belong to $\{0,1\}$ and the sets 
\begin{equation*}
A=\bigcup_{i=1}^{p}\supp(\psi^{k+a_1+\cdots+a_{i-1}}), \hspace{3mm}
B=\bigcup_{i=1}^{q}\supp(\psi^{m+b_1+\cdots+b_{i-1}}), \hspace{3mm}
C=\bigcup_{i=1}^{n-p-q}\supp(\psi^{r+c_1+\cdots +c_{i-1}}) 
\end{equation*}
are pairwise disjoint. We have set $a_0=b_0=c_0$ for notational ease. Now, let us examine the corresponding terms of $\Ani_n$, which look like 
\begin{multline*}
\sum_{i_1,\dots,i_n}\iint_{\triangle^c}\partial_{i_1,\dots,i_n}\g(x-y)\prod_{j=1}^p\(\psi_{i_j}^{k+a_0+\cdots+a_{i-1}}(x)-\psi_{i_j}^{k+a_0+\cdots+a_{i-1}}(y)\)\times \\
\prod_{j=1}^q\(\psi_{i_{p+j}}^{m+b_0+\cdots+b_{i-1}}(x)-\psi_{i_{p+j}}^{m+b_0+\cdots+b_{i-1}}(y)\) \times \\
 \prod_{j=1}^{n-p-1}\(\psi_{i_{p+q+j}}^{r+c_0+\cdots+c_{i-1}}(x)-\psi_{i_{p+q+j}}^{r+c_0+\cdots+c_{i-1}}(y)\)~d\fluct_N^{\otimes 2}(x,y)
\end{multline*}
In order for the integral to be nonzero, all three factors need to not vanish. So, at least one of $x$ or $y$ needs to belong to  $A$, $B$ and $C$. However, since $A$, $B$ and $C$ are pairwise disjoint this is impossible! The same argument works for more than three disjoint sets. 
 Hence, $\bigcup_i\supp(\psi^{k_i})$ is a disjoint union of two connected sets and $\vec k$ takes the form (up to reordering)
\begin{equation*}
\vec k=(k, k+a_1,\dots, k+a_1+\cdots+a_{p-1}, m, m+b_1, \dots, m+b_1+\cdots+b_{n-p-1})
\end{equation*}
where all of the $a_i,b_i$ and $c_i$ belong to $\{0,1\}$ and the sets $A$ and $B$ as above are disjoint. There are a constant bounded by a number dependent only on $n$ number of vectors associated to each $k$ and $m$. The corresponding terms of $\Ani_n$ take the form
\begin{multline*}
\iint_{\triangle^c}\partial_{i_1,\dots,i_n}\g(x-y)\prod_{j=1}^p\(\psi_{i_j}^{k+a_0+\cdots+a_{i-1}}(x)-\psi_{i_j}^{k+a_0+\cdots+a_{i-1}}(y)\) \times \\
\prod_{j=1}^{n-p}\(\psi_{i_{p+j}}^{m+b_0+\cdots+b_{i-1}}(x)-\psi_{i_{p+j}}^{m+b_0+\cdots+b_{i-1}}(y)\) ~d\fluct_N^{\otimes 2}(x,y)
\end{multline*}
and the integral is only nonzero if $x \in A$ and $y \in B$ (or vice versa). Without loss of generality, assume $x \in A$. Then, the terms above become
\begin{multline*}
\iint_{\triangle^c}\partial_{i_1,\dots,i_n}\g(x-y)\prod_{j=1}^p\psi_{i_j}^{k+a_0+\cdots+a_{i-1}}(x)\prod_{j=1}^{n-p}\psi_{i_{p+j}}^{m+b_0+\cdots+b_{i-1}}(y) ~d\fluct_N^{\otimes 2}(x,y)\\
:=\iint_{\triangle^c}\partial_{i_1,\dots,i_n}\g(x-y)\varphi_1(x)\varphi_2(y) ~d\fluct_N^{\otimes 2}(x,y)
\end{multline*}
with $\partial_{i_1,\dots,i_n}\g(x-y)\sim |x-y|^{-\s-n}$. We bound as described in the third type discussed in the $n=2$ case. We set 
\begin{equation*}
f(x)=\int \partial_{i_1,\dots,i_n}\g(x-y)\varphi_2(y)~d\fluct_N(y)
\end{equation*}
and first bound $f$ and its derivatives using Proposition \ref{pro:controlfluct} and \eqref{loiloc} and \eqref{tscale} at scale $2^m \ell$ (if $m=k_\ast+2$ then the same estimates hold with $2^{m}\ell\sim1$). Using $|x-y|\gtrsim 2^m \ell$, we have
\begin{multline*}
\| \partial_{i_1,\dots,i_n}\g(x-y)\varphi_2(y)\|_{L^2}^2 \lesssim_\beta \int_{\cup_{k=m}^{m+b_0+\cdots+b_{n-p-1}}A_k}\frac{1}{|x-y|^{2\s+2n}}\(\frac{\ell^{2\d}\M^2}{|y|^{4\d-2\s-2}}\)^{n-p}~dy \\
\lesssim_\beta \ell^{2\d(n-p)}\M^{2(n-p)}\int_{O\(\(2^{m+b_0+\cdots+b_{n-p-1}}\)^\d\)}\frac{r^{\d-1}}{r^{2\s+2n+(4\d-2\s-2)(n-p)}}~dr \\
\lesssim_\beta \frac{\ell^{2\d(n-p)}\M^{2(n-p)}}{\(2^m \ell\)^{2\s+2n+(4\d-2\s-2)(n-p)-\d}}.
\end{multline*}
An analogous computation yields 
\begin{align*}
\| \nabla\(\partial_{i_1,\dots,i_n}\g(x-y)\varphi_2(y)\)\|_{L^2}^2 &\lesssim_\beta \frac{\ell^{2\d(n-p)}\M^{2(n-p)}}{\(2^m \ell\)^{2\s+2n+(4\d-2\s-2)(n-p)-\d+2}} \\
\| \nabla^{\otimes 2}\(\partial_{i_1,\dots,i_n}\g(x-y)\varphi_2(y)\)\|_{L^2}^2 &\lesssim_\beta \frac{\ell^{2\d(n-p)}\M^{2(n-p)}}{\(2^m \ell\)^{2\s+2n+(4\d-2\s-2)(n-p)-\d+4}} 
\end{align*}
We first use Proposition \ref{pro:controlfluct} to get $L^\infty$ bounds on $h$ and its gradient in $\supp \psi_i^k$. As a simplification, notice that it is sufficient to control
\begin{equation*}
 \(\eta^{\gamma-1}\|\varphi\|_{L^2(\Omega)}^2+\eta^{\gamma+1}\|\nabla \varphi\|_{L^2(\Omega)}^2\)^{1/2}\(\(2^m\ell\)^\d N^{1+\frac{\s}{\d}}\)^{1/2}
\end{equation*} 
for $\eta$ to be chosen below, with $\varphi(y)=\partial_{i_1,\dots,i_n}\g(x-y)\varphi_2(y)$ using \eqref{loiloc} at scale $2^m \ell$, since the error term is strictly smaller; one can bound $\#I_\Omega (N^{-\frac1\d})^{\d-\s}$ by the energy at scale $2^m \ell$. A computation shows that the terms above are balanced by choosing $\eta=2^m\ell$, and we conclude by Proposition \ref{pro:controlfluct} that
\begin{align*}
|f(x)|^2&\lesssim_\beta \frac{\ell^{2\d(n-p)}\M^{2(n-p)}\(2^m \ell\)^\d N^{1+\frac{\s}{\d}}}{\(2^m \ell\)^{2\s+2n+(4\d-2\s-2)(n-p)-\d+1-\gamma}}\\
|\nabla f(x)|^2&\lesssim_\beta \frac{\ell^{2\d(n-p)}\M^{2(n-p)}\(2^m \ell\)^\d N^{1+\frac{\s}{\d}}}{\(2^m \ell\)^{2\s+2n+(4\d-2\s-2)(n-p)-\d+3-\gamma}}.
\end{align*}
We now use these estimates to control
\begin{equation*}
\int \varphi_1(x)f(x)\, d\fluct_N(x)
\end{equation*}
using Proposition \ref{pro:controlfluct}. We find using \eqref{tscale} (again, if $k=k_\ast+2$ then the same estimates hold with $2^{k}\ell\sim1$)
\begin{multline*}
\|\varphi_1(x)f(x)\|_{L^2}^2\lesssim_\beta \frac{\ell^{2\d(n-p)}\M^{2(n-p)}N^{1+\frac{\s}{\d}}}{\(2^m \ell\)^{2\s+2n+(4\d-2\s-2)(n-p)-2\d+1-\gamma}}\int_{\cup_{m=k}^{k+a_0+\cdots+a_{p-1}}A_k}\(\frac{\ell^{2\d}\M^2}{|y|^{4\d-2\s-2}}\)^p~dy \\
\lesssim_\beta  \frac{\ell^{2\d n}\M^{2n}N^{1+\frac{\s}{\d}}}{\(2^m \ell\)^{2\s+2n+(4\d-2\s-2)(n-p)-2\d+1-\gamma}\(2^k\ell\)^{(4\d-2\s-2)p-\d}}.
\end{multline*} 
We also have 
\begin{multline*}
\|\nabla(\varphi_1(x)f(x))\|_{L^2}^2\lesssim_\beta \frac{\ell^{2\d(n-p)}\M^{2(n-p)}N^{1+\frac{\s}{\d}}}{\(2^m \ell\)^{2\s+2n+(4\d-2\s-2)(n-p)-2\d+3-\gamma}}\int_{\cup_{m=k}^{k+a_0+\cdots+a_{p-1}}A_k}\(\frac{\ell^{2\d}\M^2}{|y|^{4\d-2\s-2}}\)^p~dy \\
+\frac{\ell^{2\d(n-p)}\M^{2(n-p)}N^{1+\frac{\s}{\d}}}{\(2^m \ell\)^{2\s+2n+(4\d-2\s-2)(n-p)-2\d+1-\gamma}}\int_{\cup_{m=k}^{k+a_0+\cdots+a_{p-1}}A_k}\(\frac{\ell^{2\d}\M^2}{|y|^{4\d-2\s-2}}\)^{p-1}\(\frac{\ell^{2\d}\M^2}{|y|^{4\d-2\s}}\)~dy 
\end{multline*}
which yields the bound 
\begin{equation*}
\|\nabla(\varphi_1(x)f(x))\|_{L^2}^2\lesssim_\beta\frac{\ell^{2\d n}\M^{2n}N^{1+\frac{\s}{\d}}}{\(2^m \ell\)^{\s+2n+(4\d-2\s-2)(n-p)-\d}\(2^k\ell\)^{(4\d-2\s-2)p-\d}}\(\frac{1}{\(2^m\ell\)^2}+\frac{1}{\(2^k\ell\)^2}\).
\end{equation*}
using $\gamma+\d-1=\s$. Applying Proposition \ref{pro:controlfluct} with $\eta=\(\frac{1}{\(2^m\ell\)^2}+\frac{1}{\(2^k\ell\)^2}\)^{-1/2}$ and \eqref{loiloc} at scale $2^k\ell$ to find 
\begin{equation*}
\left|\int \varphi_1(x)f(x)\, d\fluct_N(x)\right| \lesssim_\beta\frac{\ell^{\d n}\M^{n}\(2^k\ell\)^{\frac{\d}{2}}N^{1+\frac{\s}{\d}}}{\(2^m \ell\)^{\frac{\s}{2}+n+(2\d-\s-1)(n-p)-\frac{\d}{2}}\(2^k\ell\)^{(2\d-\s-1)p-\frac{\d}{2}}}\(\frac{1}{\(2^m\ell\)^2}+\frac{1}{\(2^k\ell\)^2}\)^{\frac{1-\gamma}{4}}.
\end{equation*}
which, simplifying, yields
\begin{multline*}
\iint_{\triangle^c}\partial_{i_1,\dots,i_n}\g(x-y)\prod_{j=1}^p\psi_{i_j}^{k+a_0+\cdots+a_{i-1}}(x)\prod_{j=1}^{n-p}\psi_{i_{p+j}}^{m+b_0+\cdots+b_{i-1}}(y) ~d\fluct_N^{\otimes 2}(x,y)\\
\lesssim_\beta \frac{\M^n\ell^{n\s-(n-1)\d}N^{1+\frac{\s}{\d}}}{\(2^m\)^{\frac{\s}{2}+n+(2\d-\s-1)(n-p)-\frac{\d}{2}}+\(2^k\)^{(2\d-\s-1)p-\frac{\d}{2}}}\(\frac{1}{2^{2m}}+\frac{1}{2^{2k}}\)^{\frac{1-\gamma}{4}}.
\end{multline*}

Summing over all choices of $k$ and $m$, we find 
\begin{equation*}
\sum_{\vec k \text{ is Case }3}\left|\iint_{\triangle^c}\nabla ^{\otimes n}\g(x-y) : \bigotimes_{i=1}^n \(\psi^{k_i}(x)-\psi^{k_i}(y)\)~d\fluct_N^{\otimes 2}(x,y)\right| 
\lesssim_\beta \M^n\ell^{n\s-(n-1)\d}N^{1+\frac{\s}{\d}}.
\end{equation*}
Combining Cases $1-3$ yields the result.
\end{proof}



As a corollary, we obtain a first bound on the term $T_2$ defined in \eqref{t2}.

\begin{coro}[A first bound on $T_2$]\label{first T2 bound}
Under the same assumptions, 
supposing that \eqref{estxi} holds at order $4$,   if $t \ell^{\s-\d}\M$ is small enough, for every $N$ sufficiently large, we have
\begin{equation}\label{firstt2bound}
\left|\log \Esp_{\PNbeta}(e^{T_2}\indic_{\mathcal{G}_\ell})\right|\lesssim_\beta \beta \M|t|N\ell^\s
 \end{equation}
 where $\Gc_\ell$ is as in Proposition \ref{pf loiloc}.
\end{coro}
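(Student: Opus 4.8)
The plan is to prove a deterministic bound on $T_2$ valid on the good event $\mathcal{G}_\ell$, and then pass to the Laplace transform. Recall from \eqref{t2} that $T_2$ is the sum of an energy-difference part $-\beta N^{-\frac{\s}{\d}}\bigl(\F_N(\Phi_t(\XN),\Phi_t\#\muv)-\F_N(\XN,\muv)\bigr)$ and a Jacobian part $\Fluct_{\muv}(\log\det D\Phi_t)$, and I would estimate the two separately for $\XN\in\mathcal{G}_\ell$. For the energy difference, I would use \eqref{dtF} with $n=1$ and the fundamental theorem of calculus to write it as $\int_0^t\Ani_1(\Phi_s(\XN),\Phi_s\#\muv,\psi\circ\Phi_s^{-1})\,ds$. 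Since $\XN\in\mathcal{G}_\ell$, Proposition~\ref{pf loiloc} guarantees that the local laws persist for the transported configuration $\Phi_s(\XN)$ for every $s\in[0,t]$, so Lemma~\ref{lemsubdi} applies at each such $s$ (the smallness of $|s|\,\ell^{\s-\d}\M$ being implied by that of $|t|\,\ell^{\s-\d}\M$) and yields $|\Ani_1(\Phi_s(\XN),\Phi_s\#\muv,\psi\circ\Phi_s^{-1})|\lesssim_\beta\M N^{1+\frac{\s}{\d}}\ell^\s$. Integrating in $s$ and multiplying by $\beta N^{-\frac{\s}{\d}}$ gives a bound $\lesssim_\beta\beta|t|\M N\ell^\s$ for the energy-difference part.

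For the Jacobian part, note that $D\Phi_t=I+tD\psi$ and that, by hypothesis, $|t|\,\|D\psi\|_{L^\infty}\lesssim|t|\M\ell^{\s-\d}$ is small, so $|\log\det D\Phi_t|\lesssim|t|\,|D\psi|$ pointwise. Because $\psi$ is not localized near $\supp\varphi$, I would proceed dyadically: write $\log\det D\Phi_t=\sum_k\chi_k\log\det D\Phi_t$ for a partition of unity $\{\chi_k\}$ subordinate to the dyadic shells of \eqref{defAk} at scales $\rho_k=2^k\ell$. The decay estimates \eqref{tscale} give $\|D\psi\|_{L^\infty}\lesssim\M\ell^{\s-\d}$ on the central cube $\carr_{2\ell}$, $\|D\psi\|_{L^\infty}\lesssim\M\ell^\d\rho_k^{-(2\d-\s)}$ on the $k$-th shell inside $U$, and a strictly faster decay in $U^c$. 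Covering each shell by $O_\d(1)$ cubes of comparable size and applying Lemma~\ref{RoughFluct} on the cubes contained in $\bulk$ — and the macroscopic law \eqref{macrolaw2} to bound the particle count on the shells meeting $\Sigma^c$ or lying beyond $\partial\Sigma$ — I would obtain $|\Fluct_{\muv}(\chi_k\log\det D\Phi_t)|\lesssim_\beta N\rho_k^\d\,\|\chi_k\log\det D\Phi_t\|_{L^\infty}\lesssim_\beta|t|\M N\ell^\s\,2^{k(\s-\d)}$ on the shells inside $U$, with an even faster decaying contribution from $U^c$. Summing the geometric series $\sum_k2^{k(\s-\d)}$, convergent precisely because $\s<\d$, and adding the central contribution ($\lesssim_\beta N\ell^\d\cdot|t|\M\ell^{\s-\d}=|t|\M N\ell^\s$), gives $|\Fluct_{\muv}(\log\det D\Phi_t)|\lesssim_\beta|t|\M N\ell^\s$.

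Combining the two estimates, on $\mathcal{G}_\ell$ one has the deterministic bound $|T_2|\le Q:=C_\beta\,\beta\M|t|N\ell^\s$, whence $e^{-Q}\PNbeta(\mathcal{G}_\ell)\le\Esp_{\PNbeta}(e^{T_2}\indic_{\mathcal{G}_\ell})\le e^Q$. Since $\PNbeta(\mathcal{G}_\ell^c)\le C_1e^{-\beta C_2 N\ell^\d}$, the factor $\PNbeta(\mathcal{G}_\ell)$ lies between $\frac12$ and $1$ for $N$ large, so $|\log\Esp_{\PNbeta}(e^{T_2}\indic_{\mathcal{G}_\ell})|\le Q+|\log\PNbeta(\mathcal{G}_\ell)|\lesssim_\beta\beta\M|t|N\ell^\s$, the probability correction being negligible for $N$ large.

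The step I expect to be the main obstacle is the control of the Jacobian fluctuation $\Fluct_{\muv}(\log\det D\Phi_t)$: since $\psi$ has non-localized tails, one cannot use a single localized fluctuation estimate and must instead run the dyadic summation above, which is exactly the place where the super-Coulombic condition $\s<\d$ is genuinely needed for convergence, and one must handle separately the shells on which the local laws are unavailable (near and beyond $\partial\Sigma$) by means of the global macroscopic control \eqref{macrolaw2}. The heavier analytic inputs — the persistence of the local laws along the transport and the $\Ani_n$ bounds — are already in place in Proposition~\ref{pf loiloc} and Lemma~\ref{lemsubdi}, so the corollary itself should be a relatively short consequence.
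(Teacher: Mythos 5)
Your proof is correct and follows essentially the same route as the paper: you write the energy difference as $\int_0^t \Ani_1\,ds$ and bound it via Lemma~\ref{lemsubdi} (which already packages the persistence of local laws along the transport from Proposition~\ref{pf loiloc}), then estimate $\Fluct_{\muv}(\log\det D\Phi_t)$ by a dyadic decomposition and Lemma~\ref{RoughFluct}, exactly as the paper does. The only difference is that you spell out the final passage from the deterministic bound $|T_2|\le Q$ on $\mathcal{G}_\ell$ to the two-sided bound on $\log\Esp_{\PNbeta}(e^{T_2}\indic_{\mathcal{G}_\ell})$, accounting for $\PNbeta(\mathcal{G}_\ell)$; the paper elides this, but the step is the same.
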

\begin{proof}
Let $\XN\in \mathcal{G}_\ell$.
From \eqref{dtF}, we have 
\begin{align*}
\F_N(\Phi_t(\XN), \Phi_t\#\mu_0)- \F_N(\XN, \mu_0)= \int_0^t \Ani_1(\Phi_s(\XN), \mu_s , \psi\circ\Phi_s^{-1})ds.\end{align*}
In view of \eqref{controlani}
 this is $\lesssim_\beta \M N^{1+\frac{\s}{\d}}\ell^\s$. Similarly, using \eqref{Linf} on dyadic scales can bound
\begin{equation*}
\left|\Fluct_{\muv}(\log \det D\Phi_t)\right|\lesssim_\beta |tN|\M\sum \(2^k\ell\)^\d\frac{\ell^\d}{\(2^k\ell\)^{2\d-\s}}
 \lesssim_\beta  |t|N\|\varphi_0\|_{C^4}\ell^\s
\end{equation*}
using the transport bounds \eqref{tscale}.  Combining all of the above  with \eqref{t2} yields the result.
\end{proof}

\section{Proof of Theorems \ref{FirstFluct} and  \ref{CLT}}\label{sec: fluct}
We now have all of the tools needed to examine fluctuations of linear statistics, and complete the proofs of Theorems \ref{FirstFluct} and \ref{CLT}. Let us first recall the expansion in Lemma \ref{lem2.1}, which allows us to expand the Laplace transform of $\Fluct_{\muv}(\varphi)$ on a good event $\mathcal{G}$ by
\begin{equation*}
\Esp_{\PNbeta}\left[\exp\(- \beta t N^{1-\frac\s\d} \Fluct_{\muv}(\varphi)\)\indic_{\mathcal{G}}\right]=e^{T_0}\Esp_{\PNbeta}\left[e^{T_1+T_2}\indic_{\mathcal{G}}\right].
\end{equation*}
We now examine each of these terms individually.

\subsection{Computation of $T_0$}\label{subsec: T0}
Using the explicit choice of $\psi$ in Proposition \ref{transport}, we can compute $T_0$, which is a completely deterministic computation. We start with the $\log \det D\Phi_t$ term.
\begin{lem} Let $\Phi_t=\id+ t\psi$ with $\psi$ as in Proposition \ref{transport}. Denoting $\mu_t= \Phi_t \# \muv$, 
 we 
 have 
\be\label{ilogdet}
\int_{\R^\d} \log \det D\Phi_t \, d\muv=\Ent( \muv)- \Ent(\mut)=   t \, M(\varphi) +O\(t^2\M^2\ell^{2\s-\d}\)\ee
with 
\be\label{defmean}M(\varphi)=  \frac{1}{\c} \int_\Sigma (-\Delta)^{\a} \varphi^\Sigma (\log \muv).\ee   \end{lem}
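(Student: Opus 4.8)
The plan is to reduce \eqref{ilogdet} to a Taylor expansion in $t$ of $\int_{\R^\d}\log\det D\Phi_t\,d\muv$, the identity $\int_{\R^\d}\log\det D\Phi_t\,d\muv=\Ent(\muv)-\Ent(\mut)$ having already been recorded (it follows from $\det D\Phi_t=\muv/(\mut\circ\Phi_t)$ and the change of variables $y=\Phi_t(x)$, as in the derivation of \eqref{09}).

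First I would expand the integrand pointwise. Since $D\Phi_t=I+tD\psi$ and, by \eqref{tscale} at $m=1$, $\|D\psi\|_{L^\infty}\lesssim\M\ell^{\s-\d}$, whenever $t\M\ell^{\s-\d}$ is small enough we have $t\|D\psi\|_{L^\infty}<\tfrac12$, so the series for $\log\det(I+tD\psi)=\mathrm{tr}\log(I+tD\psi)$ converges and
\begin{equation*}
\bigl|\log\det(I+tD\psi)-t\,\div\psi\bigr|\lesssim t^2|D\psi|^2
\end{equation*}
pointwise (using $\mathrm{tr}(D\psi)=\div\psi$). Integrating against $\muv$, which is bounded and supported on $\Sigma$, and bounding the remainder by $\int_\Sigma|D\psi|^2\lesssim\M^2\ell^{2\s-\d}$ via \eqref{cdpsi} with $n=2$, gives
\begin{equation*}
\int_{\R^\d}\log\det D\Phi_t\,d\muv=t\int_\Sigma(\div\psi)\,\muv\,dx+O\!\left(t^2\M^2\ell^{2\s-\d}\right).
\end{equation*}

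It then remains to identify $\int_\Sigma(\div\psi)\,\muv\,dx$ with $M(\varphi)$ of \eqref{defmean}, which I would do by integrating by parts twice. On the one hand, since $\muv$ vanishes on $\partial\Sigma$ (indeed $\muv\sim\dist(\cdot,\partial\Sigma)^{1-\alpha}$ by \eqref{eq reg}) and $\psi$ is continuous up to $\partial\Sigma$ with bounded gradient there (Proposition~\ref{transport}, using that $\supp\varphi\subset\bulk$ so that one is in the exterior/Schauder regime near $\partial\Sigma$), one gets $\int_\Sigma(\div\psi)\muv\,dx=-\int_\Sigma\psi\cdot\nabla\muv\,dx$ with no boundary term. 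On the other hand, using the first line of \eqref{Rtransport}, $\div(\psi\muv)=\tfrac1\c(-\Delta)^\a\varphi^\Sigma$, and integrating by parts once more,
\begin{equation*}
M(\varphi)=\frac1\c\int_\Sigma(-\Delta)^\a\varphi^\Sigma\,\log\muv=\int_\Sigma\div(\psi\muv)\,\log\muv=-\int_\Sigma\psi\muv\cdot\nabla\log\muv\,dx=-\int_\Sigma\psi\cdot\nabla\muv\,dx,
\end{equation*}
the boundary term $\int_{\partial\Sigma}(\psi\cdot\vec n)\,\muv\log\muv$ vanishing since $\muv\log\muv\to0$. Comparing the two displays gives $\int_\Sigma(\div\psi)\muv\,dx=M(\varphi)$, hence \eqref{ilogdet}.

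The main obstacle is the justification of these integrations by parts near $\partial\Sigma$, where $(-\Delta)^\a\varphi^\Sigma$ blows up like $\dist(\cdot,\partial\Sigma)^{-\alpha}$ and $\log\muv\to-\infty$. One must check that the relevant integrands ($\div(\psi\muv)\log\muv\sim\dist^{-\alpha}|\log\dist|$ and $\psi\cdot\nabla\muv\sim\dist^{-\alpha}$) are integrable on $\Sigma$ — which holds exactly because $\alpha<1$ — and that the flux terms on $\{\dist(\cdot,\partial\Sigma)=\epsilon\}$ tend to $0$ as $\epsilon\to0$; this relies on the sharp boundary behavior \eqref{eq reg} and the boundary regularity of $\psi$ from Proposition~\ref{transport}.
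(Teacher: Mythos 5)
Your proof is correct and follows essentially the same route as the paper's. The only cosmetic difference is in how you identify $\int_\Sigma(\div\psi)\muv$ with $M(\varphi)$: the paper passes through the identity $\int(\div\psi)\muv=\int\div(\psi\muv)(1+\log\muv)$ and then uses the mean-zero property of $(-\Delta)^\alpha\varphi^\Sigma$ to drop the $1$, whereas you show both sides equal $-\int_\Sigma\psi\cdot\nabla\muv$ and rely on vanishing boundary terms. These are equivalent reorganizations of the same two integrations by parts; your version makes the boundary-integrability issue ($\dist^{-\alpha}|\log\dist|$ near $\partial\Sigma$, integrable since $\alpha<1$) a little more visible, which is a point the paper glosses over. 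The Taylor expansion and the remainder estimate via \eqref{cdpsi} match the paper.
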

\begin{proof} Since $\Phi_t=\id + t\psi$, and 
$\mu_t=\Phi_t\#\muv$ we have  $\det D\Phi_t= \frac{\muv}{\mu_t\circ \Phi_t}$, and thus
\be\int \log \det D\Phi_t d\muv= \int \log \muv d\muv- \int \log \mu_t(\Phi_t(x)) d\muv= \Ent(\muv)-\Ent(\mu_t). \ee
We can also write explicitly
 \be
 \log \det D\Phi_t=t\, \div \psi+O\(t^2|D\psi|^2\)
\ee
and have
 \be
  \int_{\R^\d}(\div\psi)\muv= \int_{\R^\d} \div(\psi \muv)-\int_{\R^\d} \psi \muv  \cdot \nabla \log \muv= \int_{\R^\d} \div(\psi \muv) (1+ \log \muv).\ee
  Thus, using \eqref{divpsimu0} and \eqref{Rtransport}
   \begin{align*}
    \int_{\R^\d}(\div\psi)\muv & = \frac{1}{\cds}\int_{\Sigma} (-\Delta)^\alpha \varphi^\Sigma(1+ \log \muv)\\  
    &  = \frac{1}{\cds}\int_{\R^\d} (-\Delta)^\alpha \varphi^\Sigma(1+ \log \muv)\\
    & =\frac1{\cds} \int_\Sigma (-\Delta)^\alpha \varphi^\Sigma \log \muv,
 \end{align*} where we have used that $(-\Delta)^\a \varphi^\Sigma$ is mean zero.
 
 Integrating the error term and using \eqref{cdpsi}, we obtain the result.
\end{proof}
Let us turn to the remaining terms.
\begin{lem}Let $\Phi_t=\id+ t\psi$ with $\psi$ as in Proposition \ref{transport} with \eqref{estxi} at order  $ 4$. 
Let $T_0$ be as in \eqref{t0}. Then, we have 
\be\label{bt0}
T_0=   \frac{\beta  N^{2-\frac\s\d} t^2}{2}  \Var(\varphi)
+ t N   M(\varphi) +O\(t^3 N^{2-\frac\s\d}\beta\M^3\ell^{2\s-\d} + t^2\M^2N\ell^{2\s-\d}\)   \ee
where
\be
\Var(\varphi)=\frac{c_{\d,\alpha}}{2\cds}\left\|\varphi^\Sigma\right\|_{\dot{H}^\frac{\d-\s}{2}}^2
\ee
with $M(\varphi)$ as in \eqref{defmean} and $c_{\d, \alpha}$ as in \eqref{def fraclap}.
\end{lem}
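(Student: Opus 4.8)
The plan is to expand the three pieces of $T_0$ in \eqref{t0} to second order in $t$ and identify the quadratic coefficient as the claimed variance. Recall $T_0$ splits into the "interaction" piece
$-\tfrac{\beta}{2}N^{2-\frac\s\d}\iint(\g(\Phi_t(x)-\Phi_t(y))-\g(x-y))\,d\muv\,d\muv$,
the "potential" piece $-\beta N^{2-\frac\s\d}\int(V_t\circ\Phi_t-V)\,d\muv + \beta t N^{2-\frac\s\d}\int\varphi\,d\muv$,
and the Jacobian piece $N\int\log\det D\Phi_t\,d\muv$, which has already been treated in the previous lemma and contributes $tN\,M(\varphi)+O(t^2\M^2N\ell^{2\s-\d})$. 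So the only task is to show that the first two pieces combine, at order $t^2$, into $\tfrac{\beta}{2}N^{2-\frac\s\d}t^2\,\Var(\varphi)$, with the stated error. First I would Taylor-expand $\Phi_t=\id+t\psi$: writing $\g(\Phi_t(x)-\Phi_t(y))-\g(x-y) = t\,\nabla\g(x-y)\cdot(\psi(x)-\psi(y)) + \tfrac{t^2}{2}\nabla^{\otimes 2}\g(x-y):(\psi(x)-\psi(y))^{\otimes 2} + O(t^3|\cdot|)$, and similarly $V_t\circ\Phi_t - V = t(\varphi + \nabla V\cdot\psi) + \tfrac{t^2}{2}(2\nabla\varphi\cdot\psi + \psi^{\mathsf T}D^2V\,\psi) + O(t^3)$.

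The order-$t$ terms cancel by construction: this is exactly the defining property \eqref{lowtemp} of $\psi$, namely that the coefficient of $\fluct_{\muv}$ vanishes identically — but here integrated against $\muv\otimes\muv$ and $\muv$ rather than against fluctuations, so I should instead directly verify $\tfrac12\iint\nabla\g\cdot(\psi(x)-\psi(y))d\muv d\muv + \int(\varphi+\nabla V\cdot\psi)d\muv - \int\varphi d\muv = \int\psi\cdot\nabla\zeta_V\,d\muv - \int h^{\div(\psi\muv)}d\mu_V = 0$, using integration by parts and $\psi\cdot\nabla\zeta_V=0$ on $\Sigma$ (from \eqref{Rtransport}, third line combined with $\zeta_V\equiv 0$ on $\Sigma$), together with \eqref{eqars}. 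For the order-$t^2$ coefficient, the goal is to show
$-\tfrac14\iint\nabla^{\otimes2}\g:(\psi(x)-\psi(y))^{\otimes2}d\muv d\muv - \tfrac12\int(2\nabla\varphi\cdot\psi+\psi^{\mathsf T}D^2V\psi)d\muv = \tfrac12\Var(\varphi)$. The clean way is to integrate by parts twice in the double integral to rewrite it as $\tfrac12\iint \g(x-y)\,d(\div(\psi\muv))(x)\,d(\div(\psi\muv))(y) = \tfrac12\|\div(\psi\muv)\|_{\dot H^{-\alpha}}^2 = \tfrac1{2\c}\int h^{\div(\psi\muv)}\,(-\Delta)^\alpha\varphi^\Sigma$, and then use \eqref{eqars} to turn this into $\tfrac1{2\c}\int\varphi^\Sigma (-\Delta)^\alpha\varphi^\Sigma = \tfrac{c_{\d,\alpha}}{2\cds\cdot 2}\|\varphi^\Sigma\|_{\dot H^{\alpha}}^2$ via \eqref{homogsobo}--\eqref{soboFT} and \eqref{eqgriesz} (matching constants: $\|f\|_{\dot H^{-\alpha}}^2 = \iint\g\,df\,df$ and $\int g\,(-\Delta)^\alpha g = \cds\|g\|_{\dot H^\alpha}^2\cdot$const). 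One then checks that the cross terms with $\nabla\varphi\cdot\psi$ and $\psi^{\mathsf T}D^2V\psi$ are absorbed: differentiating the order-$t$ identity (which holds for all $t$ in a suitable sense, or rather differentiating the identity $\psi\cdot\nabla\zeta_{V_t}+\varphi^{\Sigma_t}-\varphi=\dots$ is not quite available) — more robustly, I would expand $-\beta N^{2-\s/\d}(\I(\mu_t)-\I(\muv)) - \beta N^{2-\s/\d}\int\zeta_V(d\mu_t - d\muv)$ using that $\mu_t=\Phi_t\#\muv$ and that $\zeta_{V_t}$ is the effective potential for $V_t$, exploiting the variational characterization \eqref{Riesz Euler-Lagrange equation} so that first-order terms in $\I$ pair with $c_{V_t}$ and drop. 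The Hessian-of-energy bookkeeping then leaves precisely $\tfrac{t^2}{2}\beta N^{2-\s/\d}\cdot\tfrac12\|\div(\psi\muv)\|_{\dot H^{-\alpha}}^2$ plus the boundary-variation contribution, which vanishes because $\zeta_V=0$ and $\mu_V$ vanishes on $\partial\Sigma$ (so moving the droplet costs nothing at second order).

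For the error estimates, I would bound the $O(t^3)$ remainder in the $\g$-expansion by $|t|^3\iint |\psi(x)-\psi(y)|^3|\nabla^{\otimes3}\g(x-y)|\,d\muv\,d\muv \lesssim |t|^3\|\muv\|_{L^\infty}^2\int_{\Sigma^2}\frac{|\psi(x)-\psi(y)|^3}{|x-y|^{\s+3}}dxdy \lesssim |t|^3\M^3\ell^{-\d+2\s}$ by \eqref{cratio} with $n=3$, giving the $t^3N^{2-\s/\d}\beta\M^3\ell^{2\s-\d}$ term after multiplying by $\beta N^{2-\s/\d}$; the potential remainder $|t|^3\int(|\psi|^3\|D^3V\|_\infty + \dots)d\muv$ is lower order by \eqref{cpsi}; and the Jacobian error was already recorded as $t^2\M^2N\ell^{2\s-\d}$ in \eqref{ilogdet}. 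I expect the main obstacle to be the careful bookkeeping of the boundary (droplet-variation) terms: since $\varphi$ is supported in $\bulk$ here, $\Sigma_{V_t}=\Sigma$ for small $t$ (the perturbation does not reach $\partial\Sigma$), which makes this step clean — but justifying $\Sigma_{V_t}=\Sigma$ rigorously from \eqref{itemnondeg} (the lift-off rate $\zeta_V\gtrsim\dist^{1+\alpha}$ dominates the perturbation $t\varphi$ near $\partial\Sigma$ for $|t|$ small) and checking that all the $\psi\cdot\nabla\zeta_V$ and $D^2V$-weighted terms cancel against the second variation of $\I$ is where the real care is needed; everything else is the routine Taylor expansion plus the Sobolev-norm identities \eqref{homogsobo}--\eqref{soboFT} and \eqref{eqgriesz} for matching the constant $\tfrac{c_{\d,\alpha}}{2\cds}$.
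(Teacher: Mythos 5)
Your plan follows the paper's route: Taylor expand the three pieces of $T_0$ to second order, kill the order-$t$ terms by the Euler--Lagrange identity $\nabla(h^{\muv}+V)=0$ on $\supp\muv$, rewrite the surviving order-$t^2$ terms via $\div(\psi\muv)=\frac{1}{\cds}(-\Delta)^\alpha\varphi^\Sigma$ and the Sobolev-norm identities, and bound the remainders by \eqref{cratio}--\eqref{cpsi}. This is exactly what the paper does, and your identification of where the $O(t^3)$ error comes from ($\iint |\psi(x)-\psi(y)|^3/|x-y|^{\s+3}$ estimated by \eqref{cratio} with $n=3$) is correct.

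One step in your write-up is imprecise and should be fixed before it would count as a proof. You write that integrating by parts twice turns $-\tfrac14\iint\nabla^{\otimes2}\g:(\psi(x)-\psi(y))^{\otimes2}\,d\muv\,d\muv$ directly into $\tfrac12\iint\g\,d(\div(\psi\muv))\,d(\div(\psi\muv))$, and that afterwards "one then checks that the cross terms with $\nabla\varphi\cdot\psi$ and $\psi^{\mathsf T}D^2V\psi$ are absorbed." That is not the structure you actually get: expanding $(\psi(x)-\psi(y))^{\otimes2}$ and integrating by parts, the double integral splits into a same-argument contribution $\int\psi^{\mathsf T}D^2h^{\muv}\psi\,d\muv$ (which must cancel against the $D^2V$ piece by $D^2(h^{\muv}+V)=0$ in $\interieur\Sigma$, this is the essential cancellation and should be stated explicitly, not checked afterwards) and a cross-argument contribution whose IBP gives a sign-flipped $\g\,\div\div$ term. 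The paper does this carefully, expanding $(\psi_i(x)-\psi_i(y))(\psi_j(x)-\psi_j(y))$ first, cancelling the $\psi_i(x)\psi_j(x)$ piece with $D^2V$ via Euler--Lagrange, and only then integrating by parts on the surviving $\psi_i(x)\psi_j(y)$ piece and on $\int\nabla\varphi\cdot\psi\,d\muv$. Your alternative "more robust" variational route through $\I_{V_t}(\mu_t)-\I_{V_t}(\muv)$ is a valid alternative in spirit, but as phrased it trades a short explicit Taylor computation for a vaguer second-variation argument plus remainder estimates that you don't spell out; the paper's direct expansion is both shorter and makes the cancellation transparent. The digression about whether $\Sigma_{V_t}=\Sigma$ is unnecessary: the paper's $T_0$ computation never invokes $\mu_{V_t}$, only $\mu_t=\Phi_t\#\muv$, so no claim about the stability of the droplet is needed.
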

\begin{proof}

The proof is based on a second order  Taylor expansion in $t$. 
First, a careful computation yields 
\begin{multline*}
\frac{1}{2}\left(\g(\Phi_t(x)-\Phi_t(y))-\g(x-y)\right)=\frac{t}{2}\nabla \g(x-y)\cdot (\psi(x)-\psi(y)) \\
+\frac{t^2}{4}\sum_{i,j}\partial_{i,j}\g(x-y)(\psi_i(x)-\psi_i(y))(\psi_j(x)-\psi_j(y))+O\(t^3\frac{|\psi(x)-\psi(y)|^3}{|x-y|^{3+\s}}\)
\end{multline*}
where the $O$ is uniform despite the singularity in the derivative of $\g$; this again can be obtained by factoring out $|x-y|$ so that we Taylor expand instead about $\frac{x-y}{|x-y|}$, where $\g$ is bounded. Similarly, 
\begin{equation*}
V(x+t\psi(x))-V(x)=t\nabla V \cdot \psi(x)+\frac{t^2}{2}\sum_{i,j}\partial_{i,j}V\psi_i(x)\psi_j(x)+O\(t^3|\psi(x)|^3\)
\end{equation*}
and 
\begin{equation*}
t\varphi(x+t\psi(x))-t\varphi(x)=t^2\nabla \varphi \cdot \psi(x)+O\(t^3|D^2\varphi||\psi|^2\).
\end{equation*} After integration, the first order terms  give by symmetry
\begin{multline*}
\frac{t}{2}\iint \nabla \g(x-y)\cdot(\psi(x)-\psi(y))\, d\muv(x)d\muv(y)+t\int \nabla V \cdot \psi(x)\, d\muv(x) \\
=t\int \nabla \(h^{\muv}+V\right)\cdot \psi(x)\, d\muv(x)=0,
\end{multline*}
where the vanishing is thanks to \eqref{Riesz effective potential} and the fact that $\zeta_V$ vanishes in the support of $\muv$. 
 Now, expanding $(\psi_i(x)-\psi_i(y))(\psi_j(x)-\psi_j(y))$ and using symmetry we find 
\begin{multline*}
\iint \frac{t^2}{4}\sum_{i,j}\partial_{i,j}\g(x-y)(\psi_i(x)-\psi_i(y))(\psi_j(x)-\psi_j(y))\, d\muv(x)d\muv(y) \\
=\frac{t^2}{2}\sum_{i,j}\iint \partial_{i,j}\g(x-y)\psi_i(x)\psi_j(x)\, d\muv(x)d\muv(y)-\frac{t^2}{2}\sum_{i,j}\iint \partial_{i,j}\g(x-y)\psi_i(x)\psi_j(y)\, d\muv(x)d\muv(y).
\end{multline*}
Notice that 
\begin{multline*}
\int \sum_{i,j}\partial_{ij}V \psi_i(x)\psi_j(x)~d\muv(x)+\sum_{i,j}\iint \partial_{i,j}\g(x-y)\psi_i(x)\psi_j(x)\, d\muv(x)d\muv(y) \\
=\int \partial_{i,j}(V+ \g*\muv)(x)\psi_i(x)\psi_j(x)\, d\muv(x)=0
\end{multline*} by the same argument as above using \eqref{Riesz effective potential}. 
Thus, the order $t^2$ terms that remain are just 
\begin{equation*}
t^2 \int  \nabla \varphi \cdot \psi\, d\muv-\frac{t^2}{2}\sum_{i,j}\iint \partial_{i,j}\g(x-y)\psi_i(x)\psi_j(y)\, d\muv(x)d\muv(y).
\end{equation*} 
For the first term, integrating by parts and using \eqref{divpsimu0} and \eqref{Rtransport} yields 
\begin{equation*}
t^2 \int_{\R^\d} \nabla \varphi \cdot \psi\, d\muv=-t^2\int_{\R^\d} \varphi \, \div(\psi \muv)=-\frac{t^2}{\cds}\int \varphi^\Sigma (-\Delta)^\alpha \varphi^\Sigma=-\frac{t^2}{2}\frac{c_{\d,\alpha}}{\cds}\left\|\varphi^\Sigma\right\|_{H^\alpha}^2
\end{equation*} 
since we have via fractional integration by parts and \cite[Proposition 3.6]{DPV12} that
\begin{equation*}
\int f (-\Delta)^\alpha f=\int| (-\Delta)^{\alpha/2}f|^2=
\frac{c_{\d,\alpha}}{2}\|f\|_{\dot{H}^\alpha}^2
\end{equation*}
with $c_{\d,\alpha}$ the constant in \eqref{def fraclap}, in view of the definitions \eqref{def fraclap} and \eqref{soboFT}.
Integrating the second term by parts in $x$ and $y$ and using \eqref{Rtransport} yields
\begin{align*}
&-\frac{t^2}{2}\sum_{i,j}\iint \partial_{i,j}\g(x-y)\psi_i(x)\psi_j(y)\, d\muv(x)d\muv(y)=-\frac{t^2}{2}\iint \g(x-y)\div(\psi\muv)(x)\div(\psi\muv)(y) \\
&=\frac{-t^2}{2\cds^2}\iint \g(x-y)(-\Delta)^\alpha \varphi^\Sigma(y)(-\Delta)^\alpha \varphi^\Sigma(x) \\
&=\frac{-t^2}{2\cds}\int \varphi^\Sigma(x)(-\Delta)^\alpha \varphi^\Sigma(x)=\frac{-t^2}{4}\frac{c_{\d,\alpha}}{\cds}\left\|\varphi^\Sigma\right\|_{H^\alpha}^2
.
\end{align*}Finally, we need to integrate the error terms. This is done using   \eqref{cpsi}, \eqref{cdpsi} and  \eqref{cratio},  to write \begin{equation*}
\int |D^2\varphi||\psi|^2 \lesssim \frac{1}{\ell^2}\ell^{-\d+2\s+2}\M^3,
\end{equation*}
and similarly for the other terms.
Combining  with \eqref{ilogdet} we have thus established \eqref{bt0}.
\end{proof}
 
 \subsection{Control of $T_1$}\label{subsec: T1}
 We next turn our attention to $T_1$ as in \eqref{t1}; the choice of transport $\psi$ made in Section \ref{sec: fluct} was so that $T_1$ would vanish at leading order in $t$. We verify that here, and estimate the next order in $t$.
\begin{lem}\label{T1 cancellation} 
Let $\Phi_t=\id+ t\psi$ with $\psi$ as in Proposition \ref{transport} with \eqref{estxi} at order  $ 5.$
If $t$ is small enough that $\|t\psi\|_{C^1}<\hal$ and $|t| \ell^{\d-\s}\M$ is smaller than a constant, then 
$$T_1=  \beta N^{1-\frac{\s}{\d}}t^2 \int_{\R^\d} u(x)  d\fluct_{\muv}(x)$$
where  $u$ satisfies
\be \label{pw control}
|u(x)|\lesssim \M^2\begin{cases} \ell^{2\d} \max(|x-z|,\ell)^{-3\d+\s}& \text{if} \ x\in U\\
\ell^{2\d} |x-z|^{-2(\s+2)}& \text{if} \ x\in U^c.\end{cases}\ee
and
\begin{equation}\label{der control}
\|Du\|_{L^\infty}\lesssim  \frac{\M^2}{\ell^{\d-\s+1}}.\end{equation}
\end{lem}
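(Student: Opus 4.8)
The plan is to perform a Taylor expansion in $t$ of the integrand of $T_1$ in \eqref{t1}, verify that the coefficient of $t^0$ and $t^1$ in the bracket multiplying $d\fluct_{\muv}$ cancels (this being exactly the defining property of $\psi$ via \eqref{lowtemp}), and then identify the remaining $O(t^2)$ contribution as $\beta N^{1-\s/\d} t^2 \int u\, d\fluct_{\muv}$ for an explicit $u$. First I would write, with $\Phi_t = \id + t\psi$,
\begin{equation*}
\g(\Phi_t(x)-\Phi_t(y))-\g(x-y) = t\,\nabla\g(x-y)\cdot(\psi(x)-\psi(y)) + t^2 R_\g(x,y),
\end{equation*}
where, factoring out $|x-y|$ and Taylor expanding in the unit variable $\tfrac{x-y}{|x-y|}$ on which $\g$ is smooth, $R_\g$ is controlled pointwise by $C\,\frac{|\psi(x)-\psi(y)|^2}{|x-y|^{\s+2}}$ (with a further error term absorbed since $|t|\ell^{\d-\s}\M$ is small so that $\Phi_t$ is a bi-Lipschitz perturbation of the identity). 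Likewise $V_t\circ\Phi_t - V = t(\nabla V\cdot\psi + \varphi) + t^2 R_V$ with $R_V$ bounded pointwise by $C(|D^2 V|\,|\psi|^2 + |D\varphi|\,|\psi|)$. Integrating the $t^1$ part of the bracket against $d\muv(y)$ gives
\begin{equation*}
t\Big(\int_{\R^\d}\nabla\g(x-y)\cdot(\psi(x)-\psi(y))\,d\muv(y) + \nabla V(x)\cdot\psi(x) + \varphi(x)\Big) = t\big(\psi\cdot\nabla\zeta_V - h^{\div(\psi\muv)} + \varphi\big)(x),
\end{equation*}
which vanishes identically in $\R^\d$ by \eqref{lowtemp}, the equation solved by $\psi$ in Proposition \ref{transport}. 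Hence $T_1 = -\beta N^{1-\s/\d} t^2 \int_{\R^\d} u\, d\fluct_{\muv}$, where
\begin{equation*}
u(x) := \int_{\R^\d} R_\g(x,y)\,d\muv(y) + R_V(x)
\end{equation*}
(up to the sign bookkeeping, which I would match to the statement), plus a higher-order-in-$t$ remainder which, by the same bi-Lipschitz control, is of relative size $O(|t|\ell^{\d-\s}\M)$ and can be absorbed into $u$'s estimate.

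It then remains to prove the pointwise bounds \eqref{pw control} and \eqref{der control} for $u$. For the $R_V$ contribution this is immediate from the transport estimates \eqref{tscale} combined with \eqref{itemgrowthV} for $|x|$ large (the $|D^2V|\,|\psi|^2$ and $|D\varphi|\,|\psi|$ terms being smaller than claimed). The main work is the convolution-type term $\int \frac{|\psi(x)-\psi(y)|^2}{|x-y|^{\s+2}}d\muv(y)$: here I would invoke Lemma \ref{computationratio}, precisely \eqref{cratio1} with $n=2$, which gives exactly $\lesssim \M^2\ell^{2\d}\max(|x-z|,\ell)^{\d(1-4)+\s} = \M^2\ell^{2\d}\max(|x-z|,\ell)^{-3\d+\s}$ for $x\in U$ and $\lesssim \M^2\ell^{2\d}|x-z|^{-2(\s+2)}$ for $x\in U^c$, matching \eqref{pw control}. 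For the derivative bound \eqref{der control}, I would differentiate $R_\g$ in $x$ — one derivative falls either on $\nabla^{\otimes 2}\g$ (producing $|x-y|^{-\s-3}$) or on the difference $\psi(x)-\psi(y)$ (producing $|D\psi(x)|\cdot\frac{|\psi(x)-\psi(y)|}{|x-y|^{\s+2}}$) — and again estimate the resulting convolutions by the $n=2$ and mixed cases of Lemma \ref{computationratio} together with \eqref{tscale}; the dominant term is the one supported near $\carr_{2\ell}$ where $|\psi|_{C^1}\lesssim \M\ell^{\s-\d}$ and $\tfrac{|\psi(x)-\psi(y)|}{|x-y|}\lesssim \M\ell^{\s-\d}$, yielding $\|Du\|_{L^\infty}\lesssim \M^2\ell^{\s-\d}\cdot\ell^{-1}\cdot(\text{harmless factor}) = \M^2\ell^{\s-\d-1}$ after integrating $\int_0^\ell r^{-\s}r^{\d-1}dr \cdot \ell^{\d-\s}$-type contributions. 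Care is needed that the $y$-integration is against $d\muv$, which is bounded, so all the $L^1$ estimates of Lemma \ref{computationratio} (stated for Lebesgue measure on $U$, resp. $\Sigma$) apply up to the constant $\|\muv\|_{L^\infty}$.

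The main obstacle I anticipate is not conceptual but bookkeeping: ensuring that the remainder terms beyond second order in $t$ (which come with integrands like $\frac{|\psi(x)-\psi(y)|^3}{|x-y|^{\s+3}}$ etc.) are genuinely absorbable into the stated bounds for $u$ under the smallness hypothesis $|t|\ell^{\d-\s}\M \le c$, and that the gradient estimate near $\partial\carr_{2\ell}$ and in the transition annuli $A_k$ does not pick up an extra factor of $\ell$ in the wrong direction — this is exactly the kind of scale-matching that Lemma \ref{computationratio} was designed to handle, so once its estimates are applied dyadically the bound \eqref{der control} should follow cleanly.
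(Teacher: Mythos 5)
Your proposal follows the same overall strategy as the paper (Taylor expand in $t$, use \eqref{lowtemp} to kill the $O(t)$ term, then estimate the $O(t^2)$ remainder via Lemma~\ref{computationratio} for the pointwise bound). The identification of the leading cancellation and the application of \eqref{cratio1} with $n=2$ to get \eqref{pw control} are correct. However, for the derivative bound \eqref{der control} there is a genuine gap.

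You propose to differentiate $R_\g(x,y)\simeq (\psi(x)-\psi(y))^{\otimes 2}:\int_0^1(1-a)D^2\g(x-y+at(\psi(x)-\psi(y)))\,da$ directly in $x$. When the derivative lands on $D^2\g$ you get $D^3\g\sim|x-y|^{-\s-3}$, and when it lands on the $\psi$-difference you get $D\psi\cdot|\psi(x)-\psi(y)|\cdot|x-y|^{-\s-2}$. In both cases, after applying the mean value bound $|\psi(x)-\psi(y)|\lesssim\|\psi\|_{C^1}|x-y|$, the integrand near the diagonal is of size $\|\psi\|_{C^1}^2|x-y|^{-\s-1}$, not $|x-y|^{-\s}$ as you write. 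The integral $\int_{|x-y|\le\ell}|x-y|^{-\s-1}\,d\muv(y)\sim\int_0^\ell r^{\d-\s-2}\,dr$ diverges for $\s\ge\d-1$, which is inside the admissible range $(\d-2,\d)$ (e.g.\ the one-dimensional log case $\s=0=\d-1$ is exactly borderline, and all of $\s\in[1,2)$ in $\d=2$ is lost). So the direct-differentiation approach fails in that regime.

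The paper circumvents this by factoring $R_\g=\frac{1}{|x-y|^\s}\,\tilde R\bigl(\tfrac{x-y}{|x-y|},\tfrac{\psi(x)-\psi(y)}{|x-y|}\bigr)$ and then splitting $\partial_{x_i}(R_\g)$ into the part where the derivative hits the factored remainder $\tilde R$ and the part where it hits $\frac{1}{|x-y|^\s}$. The latter produces the dangerous $|x-y|^{-\s-1}$ singularity, and the paper removes it by integrating by parts in $y$, using the symmetry of the kernel in $(x,y)$ and the fact that $\nabla\muv\sim\dist(\cdot,\partial\Sigma)^{-\alpha}$ is integrable; after the IBP the derivative lands either back on $\tilde R$ (giving the same type of term as the first part) or on $\muv$, which yields an integrable weight. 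This integration-by-parts step is essential and absent from your proposal. Without it, your bookkeeping ($r^{-\s}$ instead of $r^{-\s-1}$ in the near-diagonal integral) is off by a power, and the estimate does not close for $\s\ge\d-1$.
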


 \begin{proof} 
  The approach is similar to \cite[Lemma 4.14, Lemma 5.5]{P24}.

We use Taylor's formula to write ($m$ denoting multi-indices of length $2$)
\begin{align*}
& \int_{\R^\d}( \g(\Phi_t(x)- \Phi_t(y)) - \g(x-y))d\muv(y) +   (V_t\circ \Phi_t- V) (x)) 
\\ & =   t\(  \int \nab \g(x-y)\cdot  (\psi(x)-\psi(y)) d\muv(y) +  (\nab V\cdot \psi + \varphi)(x)\)
\\ &+2t^2\int_{\R^\d}  \sum_{|m|=2} \frac{1}{m!}  (\psi(x)-\psi(y))^{m} \int_0^1(1-a) D^{ m} \g(x-y+at(\psi(x)-\psi(y)) \, da\, \muv(y) 
\\ &
+ 2t^2 \int \sum_{|m|=2} \frac{1}{m!} \psi(x)^{ m}  \int_0^1(1-a) D^{m} V(x+at\psi(x)) \, da+ t^2 \psi(x) \cdot \int   D\varphi(x+at\psi(x))\, da,\end{align*}
and denote $t^2 u(x)$ the sum of the last two lines.

As in \eqref{2.7}--\eqref{lowtemp}, 
we then may rewrite 
\begin{equation*}
  \int \nab \g(x-y)\cdot  (\psi(x)-\psi(y)) d\muv(y) +  \nab V\cdot \psi + \varphi
= \psi \cdot \nab \zeta_V - h^{\div (\psi \muv)} +\varphi\end{equation*}
which vanishes by our definition of $\psi$ (see  Proposition \ref{transport}). It follows that 
$$T_1=\beta N^{1-\frac\s\d}t^2\int_{\R^\d} u(x) d\fluct_{\muv}(x).$$ 

\noindent
{\bf Step 1: Pointwise control on $u$.}
Let \begin{equation}\label{grem}
u_1(x):= 2\sum_{|m|=2}\frac{1}{m!}\int(\psi(x)-\psi(y))^m\int_0^1(1-a)D^m \g(x-y+at(\psi(x)-\psi(y)))~da~d\muv(y).
\end{equation}
 A computation shows that
\begin{equation*}
D^{ij}\g(z)=\frac{-\s}{|z|^{\s+2}}\indic_{i=j}+\frac{\s(\s+2)z_iz_j}{|z|^{\s+4}} \quad \text{for} \ z\neq0
\end{equation*}
and so the $da$-integrand in \eqref{grem} can be reexpressed as 
\begin{multline*}
\int_0^1(1-a)\biggl( \frac{-\s}{|x-y+at(\psi(x)-\psi(y))|^{\s+2}} \\+\frac{\s(\s+2)(x-y+at(\psi(x)-\psi(y)))_i(x-y+at(\psi(x)-\psi(y)))_j}{|x-y+at(\psi(x)-\psi(y))|^{\s+4}}\biggr)~da.
\end{multline*}
Now, using $\|t\psi\|_{C^1}\leq \frac{1}{2}$ we can write $|x-y+at(\psi(x)-\psi(y))| \gtrsim |x-y|$ and control \eqref{grem}, using \eqref{cratio1},  by 
\begin{equation}\label{gexp control} |u_1(x)|\lesssim  \int \frac{|\psi(x)-\psi(y)|^2}{|x-y|^{\s+2}}\, d\muv(y)\lesssim   \M^2\begin{cases} \ell^{2\d} \max(|x-z|,\ell)^{-3\d+\s}& \text{if} \ x\in U\\
\ell^{2\d} |x|^{-2(\s+2)}& \text{if} \ x\in U^c.\end{cases}
\end{equation}
We can give a similar control, much more easily, for
\begin{equation}\label{Vexp}
u_2(x)=  2  \sum_{|m|=2} \frac{1}{m!} \psi(x)^m  \int_0^1(1-a) D^{m} V(x+at\psi(x)) ~ da+  \psi(x) \cdot \int_0^1  D\varphi(x+at\psi(x))~da.
 \end{equation}
 For $x \in \carr_{2\ell}$, using \eqref{tscale}, we can immediately bound
 \begin{equation*}
 |u_2(x)|\lesssim \M^2 \ell^{2(\s-\d+1)}+\frac{1}{\ell}\M^2\ell^{-\d+\s+1}\lesssim \M^2( \ell^{\s-\d + (\s-\d+2)}+  \ell^{\s-\d}) \lesssim  \M^2 \ell^{\s-\d}.
 \end{equation*}

 For $x \notin \carr_{2\ell}$, if $\|t\psi\|_{C^1}$ is small enough, $x+t\psi(x) \notin \supp \varphi$,  hence $ \int_0^1   D\varphi(x+at\psi(x))~da$ vanishes. Thus, we can immediately apply the decay of $\psi(x)$ to the remaining term to control
 \begin{equation*}
 |u_2(x)|\lesssim  |\psi(x)|^2 \end{equation*}
 finally giving  the decay bound 
 \be\label{Vexp control}
 |u_2(x)|
 \lesssim  \M^2\begin{cases} 
   \ell^{2\d} \max(|x-z|, \ell)^{-4\d+2\s+2} 
   & \text{if }x \in U\\
  \frac{\ell^{2\d}}{|x|^{2\s+ 4} }& \text{if} \ x\in U^c.
   \end{cases}
   \ee
   Taking the dominant terms in \eqref{gexp control} and \eqref{Vexp control}, and recalling that $u=u_1+u_2$, yields the decay  estimate in \eqref{pw control}.

\noindent {\bf Step 2: Derivative control on $u$.}
We will only need an $L^\infty$ control on $Du$.  A computation allows us to write
\begin{multline}\label{factored gexp}
\int (\g(\Phi_t(x)-\Phi_t(y))-\g(x-y))~d\muv(y)-t\int \nabla \g(x-y)\cdot(\psi(x)-\psi(y))~d\muv(y) \\
=\int \frac{1}{|x-y|^\s}\left(\g\(\frac{x-y}{|x-y|}+t\frac{\psi(x)-\psi(y)}{|x-y|}\)-1-t\nabla \g\(\frac{x-y}{|x-y|} \)\cdot \frac{\psi(x)-\psi(y)}{|x-y|}\right)~d\muv(y);
\end{multline}   
the utility of this computation is that the derivatives of $\g$ are uniformly bounded in a neighborhood of $\frac{x-y}{|x-y|}$. Differentiating with respect to any $x$ variable, we see that the derivative either falls on $\frac{1}{|x-y|^\s}$ or the function in parentheses.

Let us do the latter first. We can write the function in parentheses as 
\begin{equation}\label{grem factored}
2t^2\sum_{|m|=2}\frac{1}{m!}\(\frac{\psi(x)-\psi(y)}{|x-y|}\)^m\int_0^1(1-a)D^m \g\(\frac{x-y}{|x-y|}+at\(\frac{\psi(x)-\psi(y)}{|x-y|}\)\)~da.
\end{equation}
A computation, using that the derivatives of $\g$ are uniformly bounded near $\frac{x-y}{|x-y|}$ and a mean value bound, yields that for $|x-y|\le \ell$, 
\begin{multline*}
\left|D\left(2t^2\sum_{|m|=2}\frac{1}{m!}\(\frac{\psi(x)-\psi(y)}{|x-y|}\)^m\int_0^1(1-a)D^m \g\(\frac{x-y}{|x-y|}+at\(\frac{\psi(x)-\psi(y)}{|x-y|}\)\)~da\right)\right|\\
\lesssim t^2 \|\psi\|_{C^1}\|\psi\|_{C^2}+t^2\|\psi\|_{C^1}^2+t^3\|\psi\|_{C^1}^2\|\psi\|_{C^2} \lesssim t^2\|\psi\|_{C^1}\|\psi\|_{C^2},
\end{multline*}
where we have used $\|t\psi\|_{C^1}\lesssim 1$. For $|x-y|\geq \ell$, we do not apply a mean value control to quotients of the form $\frac{\psi(x)-\psi(y)}{x-y}$ (except for those coming from the chain rule on $D^m\g$) and instead bound
\begin{multline*}
\left|D\left(2t^2\sum_{|m|=2}\frac{1}{m!}\(\frac{\psi(x)-\psi(y)}{|x-y|}\)^m\int_0^1(1-a)D^m \g\(\frac{x-y}{|x-y|}+at\(\frac{\psi(x)-\psi(y)}{|x-y|}\)\)~da\right)\right| \\
\lesssim t^2\frac{\|\psi\|_{L^\infty}^2}{|x-y|^3}+t^2\frac{\|\psi\|_{L^\infty}^2}{|x-y|^2}+t^3\frac{\|\psi\|_{L^\infty}^2\|\psi\|_{C^1}}{|x-y|^3}\lesssim  t^2\frac{\|\psi\|_{L^\infty}^2}{|x-y|^3}
\end{multline*}
where we have again used $\|t\psi\|_{C^1}\lesssim 1$ and $|x-y|\lesssim 1$. An explicit analysis of this computation in one dimension can be found in \cite[Appendix C]{P24}. Integrating these bounds, we find for any $i$,
\begin{align*}&\left|
\int \frac{1}{|x-y|^\s}\partial_i\left(\g\(\frac{x-y}{|x-y|}+t\frac{\psi(x)-\psi(y)}{|x-y|}\)-1-t\nabla \g\(\frac{x-y}{|x-y|} \)\cdot \frac{\psi(x)-\psi(y)}{|x-y|}\right)~d\muv(y)\right| \\
&\lesssim \int_{|x-y|\leq \ell}\frac{t^2\|\psi\|_{C^1}\|\psi\|_{C^2}}{|x-y|^\s}~d\muv(y)+\int_{|x-y|\geq \ell}t^2\frac{\|\psi\|_{L^\infty}^2}{|x-y|^{3+\s}}\, d\muv(y) \\
&\lesssim \frac{t^2\M^2}{\ell^{ 2\d-2\s+1 }}\int_0^\ell \frac{1}{r^\s}r^{\d-1}~dr+\frac{t^2\M^2}{\ell^{2\d-2\s-2}}\int_\ell^\infty \frac{1}{r^{3+\s}}r^{\d-1}~dr \\
&\lesssim \frac{t^2\M^2}{\ell^{\d-\s+1}}.
\end{align*}
 Now, returning to \eqref{factored gexp}, we also need to deal with terms of the form 
   \begin{equation*}
   \int \partial_i\left(\frac{1}{|x-y|^\s}\right)\left(\g\(\frac{x-y}{|x-y|}+t\frac{\psi(x)-\psi(y)}{|x-y|}\)-1-t\nabla \g\(\frac{x-y}{|x-y|} \)\cdot \frac{\psi(x)-\psi(y)}{|x-y|}\right)\, d\muv(y)
   \end{equation*}
   where $\partial_i$ indicates differentiation with respect to the $i$th component of the $x$ variable.   This can be dealt with via integration by parts, using that the integrand is symmetric in $x$ and $y$ and that $\muv(y)$ vanishes on $\partial \Sigma$; in the second and third lines of the following environment, $\partial_i$ denotes differentiation in $y$:
   \begin{align*}
   &   \int \partial_i\left(\frac{1}{|x-y|^\s}\right)\left(\g\(\frac{x-y}{|x-y|}+t\frac{\psi(x)-\psi(y)}{|x-y|}\)-1-t\nabla \g\(\frac{x-y}{|x-y|} \)\cdot \frac{\psi(x)-\psi(y)}{|x-y|}\right)~d\muv(y) \\
   &=   \int\left(\frac{1}{|x-y|^\s}\right) \partial_i\left(\g\(\frac{x-y}{|x-y|}+t\frac{\psi(x)-\psi(y)}{|x-y|}\)-1-t\nabla \g\(\frac{x-y}{|x-y|} \)\cdot \frac{\psi(x)-\psi(y)}{|x-y|}\right)~d\muv(y) \\
   &+ \int\left(\frac{1}{|x-y|^\s}\right) \left(\g\(\frac{x-y}{|x-y|}+t\frac{\psi(x)-\psi(y)}{|x-y|}\)-1-t\nabla \g\(\frac{x-y}{|x-y|} \)\cdot \frac{\psi(x)-\psi(y)}{|x-y|}\right)\partial_i \muv(y)~dy.
   \end{align*}
   The second line is what we have just controlled. Furthermore, since $\muv(y)\sim s(y) \dist(y,\partial \Sigma)^{1-\alpha}$ by \eqref{eq reg}, $\partial_i \muv(y) \sim s(y) \dist(y, \partial \Sigma)^{-\alpha}$ is integrable at the boundary. Hence, we can repeat the proof of \eqref{gexp control} (which never used the explicit density $\muv$, just that it was integrable at $\partial \Sigma$) and retrieve the same estimate. In particular, 
   
\begin{multline}\label{gexp derivative control}
\left\|D\int (\g(\Phi_t(x)-\Phi_t(y))-\g(x-y))~d\muv(y)-t\int \nabla \g(x-y)\cdot(\psi(x)-\psi(y))~d\muv(y) \right\|_{L^\infty}\\
\lesssim \frac{t^2 \M^2}{\ell^{\d-\s+1}}.
\end{multline}

The remaining terms  are much easier to control. These terms are 
\begin{equation*}
V_t \circ \Phi_t(x)-V(x)-t\nabla V \cdot \psi(x)-\varphi(x).
\end{equation*}
Differentiating the expansion \eqref{Vexp} yields control
\begin{multline*}
\left\|D\left(V_t \circ \Phi_t(x)-V(x)-t\nabla V \cdot \psi(x)-\varphi(x)\right)\right\|_{L^\infty} \\
\lesssim t^2\|\psi\|_{L^\infty}\|D\psi\|_{L^\infty}+t^2\|\psi\|_{L^\infty}^2+t^3\|\psi\|_{L^\infty}^2\|D\psi\|_{L^\infty}+t^2\|D\psi\|_{L^\infty}\|\varphi\|_{C^1} \\+t^2\|\psi\|_{L^\infty}\|\varphi\|_{C^2}+t^3 \|\psi\|_{L^\infty}\|\varphi\|_{C^2} \|\psi\|_{C^1}
\end{multline*}
which controlling $\|t\psi\|_{C^1}\lesssim 1$ allows us to simplify, using  \eqref{tscale} and \eqref{estxi}, as 
\begin{multline*}
\left\|D\left(V_t \circ \Phi_t(x)-V(x)-t\nabla V \cdot \psi(x)-\varphi(x)\right)\right\|_{L^\infty} \\
\lesssim t^2\|\psi\|_{L^\infty}\|D\psi\|_{L^\infty}+t^2\|\psi\|_{L^\infty}^2+\frac{t^2}{\ell}\|D\psi\|_{L^\infty}\M +\frac{t^2}{\ell^2}\|\psi\|_{L^\infty}\M \\
\lesssim \frac{t^2\M^2}{\ell^{2\d-2\s-2}}+\frac{t^2\M^2}{\ell^{\d-\s+1}}
\lesssim \frac{t^2\M^2}{\ell^{\d-\s+1}},
\end{multline*}
where we have used that $2\d-2\s-1\le \d-\s+1$. Coupling this with \eqref{gexp derivative control} yields the $\|Du\|_{L^\infty}$ bound in \eqref{der control}.

 \end{proof}

 Notice that in the mesoscopic case, $\|t\psi\|_{C^1}$ is small for $\ell N^{\frac{1}{\d}}\rightarrow +\infty$, since by \eqref{tscale}
\begin{equation*}
t\|\psi\|_{C^1}\lesssim N^{-1+\frac{\s}{\d}}\ell^{-\s+\d}=\left(N^{\frac{1}{\d}}\ell \right)^{-(\d-\s)}\rightarrow0.
\end{equation*} 
Once we have these scaling estimates, we can estimate $T_1$. 
\begin{lem}\label{lem: t1}
Let $\Phi_t=\id+ t\psi$ with $\psi$ as in Proposition \ref{transport} with \eqref{estxi} satisfied  at order  $ 5.$
Assume that $\XN\in \mathcal G_\ell$, with $\Gc_\ell$ as in Proposition \ref{pf loiloc} and Lemma \ref{lemsubdi}, a set of configurations such that the local laws \eqref{loiloc}--\eqref{controlnbreintro} hold for each $D_k$ in \eqref{defAk} (in $\bulk$) and \eqref{macrolaw2} holds.
Then, for $N$ large enough,  we have \begin{equation}\label{t1 bound}
|T_1|\lesssim_\beta  \beta  t^2  N^{2-\frac{\s}{\d}}  \M^2\times \begin{cases}
\ell^\s\( \(\ell N^{\frac{1}{\d}}\)^{\frac{(\d-\s)(\s-2\d)}{2(3\d-\s)}} + \ell^{2\d-\s}\) & \text{if }\ell \lesssim N^{\frac{\s-\d}{\d(7\d-3\s)}} \\
 \ell^{2\d} \( (N^{\frac1\d}\ell)^{\frac{\d-\s}{2}} \ell^{3\d-\s}\)^{\frac{-2\s-4}{\frac{\d}{2}+2\s+4}} & \text{otherwise}.
\end{cases}\end{equation}
In either case, we can write the (suboptimal) bound
\begin{equation}\label{applied t1 bound}
|T_1|\lesssim_\beta t^2N^{2-\frac{\s}{\d}}\M^2\ell^\s (1+ (N^{\frac1\d}\ell)^{-\sigma})
\end{equation}
for some $\sigma>0$.
\end{lem}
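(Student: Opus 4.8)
The plan is to use the exact expression for $T_1$ provided by Lemma \ref{T1 cancellation}, namely $T_1 = \beta N^{1-\s/\d} t^2 \int_{\R^\d} u \, d\fluct_{\muv}$, together with the pointwise and derivative bounds \eqref{pw control}--\eqref{der control} on $u$, and to estimate the remaining fluctuation of $u$ by a dyadic decomposition exactly as in the proof of Lemma \ref{lemsubdi}. First I would write $u = \sum_{k=0}^{k_*+2} u \chi_k$ with the same partition of unity $\{\chi_k\}$ attached to the dyadic annuli $A_k$ around $\carr_\ell$ used in Lemma \ref{lemsubdi}, so that $\int u\, d\fluct_{\muv} = \sum_k \int u\chi_k \, d\fluct_{\muv}$. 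On each annulus of scale $\rho = 2^k\ell \subset \bulk$, the local law \eqref{loiloc} applies (and \eqref{macrolaw2} on the outermost piece in $U^c$), and I would invoke the fluctuation control \eqref{rieszfluct2} from Proposition \ref{pro:controlfluct} with test function $u\chi_k$ and a well-chosen truncation parameter $\eta$, balancing the $L^2$-norm term $\eta^{\gamma-1}\|u\chi_k\|_{L^2}^2$ against the gradient term $\eta^{\gamma+1}\|\nabla(u\chi_k)\|_{L^2}^2$, plus the error term $\#I_\Omega |u\chi_k|_{C^1} N^{-1/\d}$ which is controlled by \eqref{controlnbreintro}.

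The key computational step is to propagate the scalings. On $A_k$ (for $k \geq 1$, so $\max(|x-z|,\ell)\sim \rho$ there) the bound \eqref{pw control} gives $\|u\|_{L^\infty(A_k)} \lesssim \M^2 \ell^{2\d}\rho^{-3\d+\s}$, hence $\|u\chi_k\|_{L^2(A_k)}^2 \lesssim \M^4 \ell^{4\d}\rho^{-6\d+2\s}\rho^\d$, while $\|\nabla(u\chi_k)\|_{L^\infty}\lesssim \M^2\ell^{2\d}\rho^{-3\d+\s-1}$ using $|\nabla\chi_k|\lesssim\rho^{-1}$ and combining with \eqref{der control} near the support of $\varphi$ (the $\ell$-scale annulus requires the sharper $\|Du\|_{L^\infty}\lesssim \M^2\ell^{\s-\d-1}$). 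Plugging these into \eqref{rieszfluct2} with $\eta\sim\rho$ and the local law $\int_{A_k\times[-\rho,\rho]}\yg|\nab h_{N,\rr}|^2\lesssim_\beta \rho^\d N^{1+\s/\d}$, one obtains a geometric series in $k$ whose sum is dominated by the $k=0$ term, i.e.\ by the scale-$\ell$ contribution, giving $|\int u\, d\fluct_{\muv}|\lesssim_\beta$ (optimized in the single free parameter $\eta$ at scale $\ell$, which cannot be taken below $N^{-1/\d}$) a quantity of size $\M^2 N^{1/2+\s/(2\d)}\ell^\s(\,\cdots)$ times a factor that is a power of $N^{1/\d}\ell$. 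Multiplying by $\beta N^{1-\s/\d}t^2$ yields the two-branch estimate \eqref{t1 bound}: the first branch comes from the regime where the $\eta^{\gamma-1}\|u\|_{L^2}^2$ and error terms balance, the second from the regime where the constraint $\eta\geq N^{-1/\d}$ is active, and the threshold $\ell\lesssim N^{(\s-\d)/(\d(7\d-3\s))}$ is precisely where these two regimes meet.

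Finally, to get the clean form \eqref{applied t1 bound} I would simply observe that in both branches the prefactor is $t^2 N^{2-\s/\d}\M^2\ell^\s$ times either $1$, $\ell^{2\d-\s}=(\ell)^{2\d-\s}$ (a positive power of $\ell\le 1$, hence bounded), or a negative power of $N^{1/\d}\ell$; since $\d-\s>0$ and the exponents appearing, $\tfrac{(\d-\s)(\s-2\d)}{2(3\d-\s)}$ and $\tfrac{(\d-\s)(\s+2)\cdot(-2)}{\d/2+2\s+4}$ type expressions, are strictly negative, each is of the form $(N^{1/\d}\ell)^{-\sigma}$ for some $\sigma>0$, and one just takes $\sigma$ to be the minimum of the exponents arising. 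The main obstacle I anticipate is purely bookkeeping: carefully tracking which of the three competing terms in \eqref{rieszfluct2} dominates as a function of both $k$ and the global scale $\ell$, verifying that the geometric series over $k$ actually converges (which requires $3\d-\s > \d$, true since $\s<\d$, and similar inequalities for the gradient term), and correctly carrying the $U^c$ tail where one must switch from \eqref{loiloc} to \eqref{macrolaw2} and use the faster decay $|u|\lesssim \M^2\ell^{2\d}|x-z|^{-2(\s+2)}$ from \eqref{pw control}. No genuinely new analytic input beyond Lemma \ref{T1 cancellation}, Proposition \ref{pro:controlfluct}, and the already-established local laws should be needed.
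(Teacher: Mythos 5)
Your route differs from the paper's in a way that matters. The paper does \emph{not} use a dyadic decomposition with Proposition~\ref{pro:controlfluct} at every scale. Instead it picks a single cutoff radius $R\in[\ell,1]$, applies Proposition~\ref{pro:controlfluct} once on $\carr_R$ (with $\eta=\ell$), uses the \emph{rough} $L^\infty$ bound of Lemma~\ref{RoughFluct} on dyadic annuli in $\carr_R^c\cap U$, and a crude $N\|u\|_{L^\infty}$ bound on $U^c$, then optimizes $R$. Your two-branch interpretation is also not what generates the dichotomy in \eqref{t1 bound}: the threshold $\ell\lesssim N^{(\s-\d)/(\d(7\d-3\s))}$ is not where $\eta$ hits the floor $N^{-1/\d}$; it is where the optimizing $R$ would exceed a constant and exit $U$, forcing the paper to instead balance the $\carr_R$ term against the $U^c$ tail.

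The more serious issue is that your decomposition, as written, does not close. You propose to apply \eqref{rieszfluct2} with test function $u\chi_k$ on each annulus $A_k$, $\rho=2^k\ell$, and to control the additive error term $\#I_{A_k}|u\chi_k|_{C^1}N^{-1/\d}$ by \eqref{controlnbreintro}. But Lemma~\ref{T1 cancellation} only supplies the \emph{global} bound \eqref{der control}, $\|Du\|_{L^\infty}\lesssim \M^2\ell^{\s-\d-1}$, with no $\rho$-decay; and the pointwise decay \eqref{pw control} is for $u$, not $Du$. Hence $|u\chi_k|_{C^1}\gtrsim\|Du\|_{L^\infty(A_k)}$ cannot be taken smaller than $\M^2\ell^{\s-\d-1}$ on outer annuli (this term beats $\|u\|_{L^\infty(A_k)}|\nabla\chi_k|$ there). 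Since $\#I_{A_k}\lesssim_\beta N\rho^\d$, the summed error is
\[
\sum_k \#I_{A_k}\,|u\chi_k|_{C^1}\,N^{-1/\d}\;\gtrsim\;\M^2\,N^{1-1/\d}\,\ell^{\s-\d-1},
\]
dominated by the outermost (order-one) scale, and after multiplying by $\beta N^{1-\s/\d}t^2$ this exceeds the target $\beta t^2\M^2N^{2-\s/\d}\ell^\s$ by a factor $N^{-1/\d}\ell^{-\d-1}$, which is as large as $N$ when $\ell\sim\rho_\beta N^{-1/\d}$. So the error term you hoped to treat as bookkeeping actually blows up. The paper avoids this precisely by restricting Proposition~\ref{pro:controlfluct} to the cube $\carr_R$ --- so that the error is $\propto R^\d$, not $\propto\sum_k\rho^\d\sim 1$ --- and using on the outer annuli the derivative-free rough bound of Lemma~\ref{RoughFluct} (which only needs the $L^\infty$ decay \eqref{pw control}). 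To salvage your scheme you would need either a pointwise decay estimate for $Du$ on the annuli, which Lemma~\ref{T1 cancellation} does not provide, or to switch to Lemma~\ref{RoughFluct} on the outer scales, which collapses your argument into the paper's.
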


\begin{proof}

The idea is to use Proposition \ref{pro:controlfluct} on the  cube $\carr_R(z)$ with $R>2\ell$ to be determined below, and estimate more easily with Lemma \ref{RoughFluct} outside.
 Notice that, for  $u$ as in Lemma \ref{T1 cancellation},
\begin{equation*}
\|u\|_{L^2(\carr_R)}^2 \lesssim \frac{\M^4R^\d}{\ell^{2(\d-\s)}}
\end{equation*}
and 
\begin{equation*}
\|\nabla u\|_{L^2(\carr_R)}^2 \lesssim \frac{\M^4R^\d}{\ell^{2(\d-\s)}\ell^2}.
\end{equation*}
If $\ell<1$ and $R<\ep$ (as in the definition of $\bulk$) then the local laws holds in $\carr_R$,  if not we use the global law \eqref{macrolaw2}.
Using Proposition \ref{pro:controlfluct} (applied with $\eta=\ell$) coupled with the local / macroscopic  law depending on the case, we obtain
\begin{equation*}
\left|\int_{\carr_R} u(x)\, d\fluct_{\muv}(x)\right|^2 \lesssim_\beta\(\ell^{\gamma-1}\frac{\M^4R^\d}{\ell^{2(\d-\s)}}+\ell^{\gamma+1}\frac{\M^4R^\d}{\ell^{2(\d-\s)}\ell^2}
\)\min (1,R^\d) N^{1+\frac{\s}{\d}}
\end{equation*}
where we have also used \eqref{discest} (or its consequence \eqref{controlnbreintro}) to show that the additive  error term in \eqref{rieszfluct2} is strictly smaller.  We find 
\begin{multline}\label{int1err}
\left|\int_{\carr_R} u(x)\, d\fluct_{\muv}(x)\right|\lesssim_\beta \( \ell^{\gamma-1+2\s-2\d}\M^4\min (R^\d, R^{2\d})N^{1+\frac{\s}{\d}}\)^{\frac{1}{2}}\\
\lesssim_\beta \M^2\ell^{\frac{3}{2}(\s-\d)}\min (R^{\d/2}, R^\d) N^{\frac{1}{2}+\frac{\s}{2\d}}
\end{multline}
using $\d-1+\gamma=\s$. On the other hand, using Lemma \ref{T1 cancellation} coupled with the local laws on dyadic scales $\ge 2 \ell$, as long as $2^k \ell <\ep$, or \eqref{macrolaw2} otherwise,  and Lemma \ref{RoughFluct}, we obtain
\begin{multline*}
\left|\int_{\carr_R^c \cap U } u(x)\, d\fluct_{\muv}(x)\right|\lesssim_\beta \sum_{k \geq \log_2\frac{R}{\ell}} N\(\min(1, 2^k\ell)\)^\d\frac{\ell^{2\d} \M^2}{(2^k\ell)^{3\d-\s}}\\
 \lesssim_\beta N \ell^{\s}\M^2\sum_{k \ge \log_2 \frac{R}{\ell}}\min \(\ell^{-\d}( 2^{\s-3\d})^k , (2^{\s-2\d})^k \)\lesssim_\beta N \ell^{\s}\M^2\(\frac{\ell}{R}\)^{2\d-\s}\min (1, R^{-\d})\\
 \lesssim_\beta N \M^2\ell^{2\d}\min (R^{\s-2\d} , R^{\s-3\d}).
\end{multline*}
Note that if $R$ exceeds a large enough constant, then $\carr_R^c \cap U=\varnothing$, so the integral vanishes, and we can thus replace the result by 
\begin{equation}\label{ext1err2}
\left|\int_{\carr_R^c \cap U } u(x)\, d\fluct_{\muv}(x)\right|\lesssim_\beta  N \M^2\ell^{2\d} R^{\s-2\d}.
\end{equation}


Finally, using a very crude bound and \eqref{pw control}, we have
\begin{equation}\label{ext2err}
\left|\int_{Q_R^c \cap U^c } u(x)\, d\fluct_{\muv}(x)\right|\lesssim_\beta N \|u\|_{L^\infty(U^c)} \lesssim N 
\ell^{2\d} \min (1, R^{-2(\s+2)})\M^2.
\end{equation}

We then choose 
\begin{equation}\label{Rdef}
R:=\ell \(\ell N^{\frac{1}{\d}}\)^{\frac{\d-\s}{2(3\d-\s)}}
\end{equation}
if this is such that $R $ is smaller than a constant, which happens when $\ell \lesssim N^{\frac{\s-\d}{\d(7\d-3\s)}}$.  Then 
\eqref{int1err} and \eqref{ext1err2} balance and $R \gg \ell$. Indeed, \begin{equation*}
\ell^{\frac{3}{2}(\s-\d)}R^\d N^{\frac{1}{2}+\frac{\s}{2\d}}=N\ell^{2\d}R^{\s-2\d}\iff R^{3\d-\s}=\ell^{\frac{7}{2}\d-\frac{3}{2}\s}N^{\frac{1}{2}-\frac{\s}{2\d}}=\ell^{3\d-\s}\(\ell N^{\frac{1}{\d}}\)^{\frac{\d-\s}{2}}.
\end{equation*}
We arrive at the desired result in that case.

Otherwise, we optimize the sum of \eqref{int1err} and \eqref{ext2err} and take 
\begin{equation}\label{Rdef2}
R= \( (N^{\frac1\d}\ell)^{\frac{\d-\s}{2}} \ell^{3\d-\s}\) ^{\frac{1}{\frac\d2+2\s+4}};
\end{equation}
notice that this is $\gtrsim 1$ precisely when $\ell \gtrsim N^{\frac{\s-\d}{\d(7\d-3\s)}}$, which is the regime under consideration. Then, for $N$ large enough we have $\carr_R^c \cap U=\varnothing$.
In that case we also obtain the result by substituting \eqref{Rdef2} into \eqref{int1err} and \eqref{ext2err}.


\end{proof}
%
%
%

\subsection{Proof of Theorem \ref{FirstFluct}.}
 As in the Coulomb case, once we have control of the $T_0$ term via explicit controls of the transport, of the $T_1$ term via bounds on fluctuations, and of the $T_2$ term  from the commutator estimate, we have a first bound on the fluctuations, which is Theorem~\ref{FirstFluct}. We take $\Gc_\ell$ as in Proposition \ref{pf loiloc} and Lemma \ref{lemsubdi}, a set of configurations such that the local laws \eqref{loiloc}--\eqref{controlnbreintro} hold for each $D_k$ in \eqref{defAk} (in $\bulk$) and \eqref{macrolaw2} holds, which we can assume satisfies $
 \PNbeta(\Gc_\ell^c)\leq  C_1e^{-C_2\beta \ell^\d N}$,
 up to adjusting the definitions of $C_1$ and $C_2$.

Combining  \eqref{splitt},  \eqref{bt0},  \eqref{applied t1 bound} and \eqref{firstt2bound}, we are led to 
 \begin{align*}
&\left|\log \Esp_{\PNbeta}\( e^{-\beta  t N^{1-\frac\s\d} \Fluct_{\muv}(\varphi)} \indic_{\mathcal G_\ell}\)-
 \frac{\beta  N^{2-\frac\s\d} t^2}{2}  \frac{c_{\d,\frac{\d-\s}{2}}}{2\cds}\left\|\varphi^\Sigma\right\|_{\dot{H}^\frac{\d-\s}{2}}^2
-t N   M(\varphi) \right|\\ &\lesssim_\beta\( t^3 N^{2-\frac\s\d}\beta\M^3\ell^{2\s-\d} + t^2 (1+\beta)\M^2N\ell^{2\s-\d}\)  
+\beta t^2 N^{2-\frac\s\d}\M^2 \ell^\s (1+ (N^{\frac1\d}\ell)^{-\sigma})
+\beta \M|t| N\ell^\s,\end{align*}
for some $\sigma>0$, with $M(\varphi)$ is as in \eqref{defmean}.
We then let  $t=-\frac{1}{1+\beta} N^{-1+\frac{\s}{\d}}\tau$, and note that the condition $t\ell^{\s-\d}\M$ small enough that was needed for our proofs amounts to $\tau (\ell N^{\frac1\d})^{\s-\d}\M$ small enough.  In view of the definition \eqref{defmean} and Lemma~\ref{variancemeso}, we obtain the result under this assumption.

\subsection{H\"older trick for the CLT}

We can now turn to the proof of the Central Limit Theorem for fluctuations.  We now assume that $\varphi = \varphi_0(\frac{\cdot-z}{\ell})$, which implies \eqref{estxi} with $\M= \|\varphi_0\|_{C^{k+1}}$. 
The goal of this section is to improve the estimate on $T_2$, assuming that we have an expansion of the relative free energy with a good enough rate. The method of proof, introduced in \cite{LS18} consists in comparing this  expansion of the difference of free energies with the relative expansion obtained by transport in \eqref{09}.

We next assume that we have an expansion of the form \eqref{2eway0 intro}, that is 
\begin{multline} \label{2eway0}
\log \K_{N,\beta}^{\mathcal{G}_\ell}( \mu_t,\zeta_V\circ \Phi_t^{-1})- \log \K_{N,\beta}(\mu_0,\zeta_V) +N(\Ent(\mu_t)-\Ent(\mu_0) ) \\= N\( \mathcal{Z} (\beta, \mu_t)-\mathcal{Z}(\beta, \mu_0) \) + O((1+\beta) N\ell^\d \mathcal R_t)\end{multline}Here,  $\mathcal{Z}$ is as in \eqref{defcalZ}, $\K^{\mathcal G}_{N,\beta}$ is as in \eqref{defK fluct event}, 
 $\mathcal R_t$ is the error rate, and $\mf$ is the pressure for the unit density system defined in Lemma \ref{lem: const dens}. Indeed, this is precisely the expansion that we will prove in Proposition \ref{Kcomp} leveraging the local laws of Theorem \ref{Local Law} down to microscopic scales. We will analyze when the rate $\mathcal{R}_t$ we obtain is sufficient at the end of this section. 

Let us record some information about the function $\mathcal{Z}$.

\begin{lem} 
Denote $\Phi_t=\id + t\psi$, $\mu_t:= \Phi_t\#\muv$ where $\psi$ is the transport map constructed in Proposition~\ref{transport},  with $\varphi_0 \in C^5$.  Assume that the function $y \mapsto \mf(y) $ is $p$ times differentiable and satisfies \eqref{425}. Letting then  $B_k(\beta, \mu, \psi)$ be  the $k$-th derivative at $t=0$ of  the function $\phi(t):=\mathcal{Z}(\beta , \mu_t)$, if $|t|\ell^{\s-\d}\|\varphi_0\|_{C^5}$ is small enough, we have 
\be\label{tayexp}
\mathcal{Z}(\beta,  \mu_t) - \mathcal{Z}(\beta, \mu_0)= \sum_{k=1}^{p-1}\frac{ t^k}{k!} B_k(\beta, \mu_0, \psi) + O\(t^p  \beta  \|\varphi_0\|_{C^4}^p\ell^{(1-p) \d+p\s}\).\ee
and 
\be \label{borneB}
|B_k(\beta, \mu_0, \psi)|\lesssim_\beta \beta \|\varphi_0\|_{C^4}^k\ell^{(1-k) \d+k\s}.\ee 
 We also have the explicit expression 
\begin{multline}\label{expliB1}
B_1(\beta, \muv, \psi)= 
-\frac{\beta}{\cds} (1+\frac\s\d) \int (-\Delta)^{\alpha}\varphi^\Sigma \mf(\beta\muv^{\frac\s\d})\muv^{\frac\s\d} -\frac{\beta}{\cds} \frac\s\d\int \mf'(\beta \muv^{\frac\s\d}) \muv^{2\frac\s\d}(-\Delta)^{\alpha}\varphi^\Sigma \\
+\frac{1}{\cds}\frac\beta{2\d} \indic_{\s=0}\int(-\Delta)^\alpha \varphi^\Sigma \log \muv.\end{multline}

\end{lem}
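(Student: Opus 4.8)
The plan is to prove the three assertions of the lemma — the Taylor expansion \eqref{tayexp}, the bound \eqref{borneB} on the coefficients $B_k$, and the explicit formula \eqref{expliB1} for $B_1$ — by a direct differentiation of the map $\phi(t) = \mathcal{Z}(\beta, \mu_t)$, where $\mu_t = (\id + t\psi)\#\muv$. Recall from \eqref{defcalZ} that
\begin{equation*}
\mathcal{Z}(\beta, \mu) = -\beta \int_{\R^\d} \mu^{1+\frac\s\d}(x)\, \mf(\beta \mu^{\frac\s\d}(x))\, dx + \frac{\beta}{2\d}\Ent(\mu)\indic_{\s=0},
\end{equation*}
which can be written as $\int_{\R^\d} G(\mu(x))\, dx$ plus the entropy term, where $G(u) := -\beta\, u^{1+\frac\s\d}\mf(\beta u^{\frac\s\d}) = -u \cdot \big(y\mf(y)\big)\big|_{y = \beta u^{\s/\d}}$. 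The key point is that assumption \eqref{425} is precisely the statement that $y \mapsto y\mf(y)$ has controlled derivatives on the relevant range of effective temperatures $y \in [\tfrac12\beta\min_{\carr_\ell}\mu^{\s/\d}, 2\beta\max_{\carr_\ell}\mu^{\s/\d}]$, which translates, via the chain rule and Faà di Bruno, into bounds $|u^{j}\, G^{(j)}(u)| \lesssim_\beta \beta\, u$ for all $j \le p$, uniformly for $u$ ranging over the values taken by $\muv$ on $\carr_{2\ell}$ (and $\muv$ is bounded above and below there since $\carr_{2\ell}\subset\bulk$). The same applies to $u\mapsto u\log u$ in the $\s=0$ case.

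The next step is to compute the $t$-derivatives of $\phi(t) = \int G(\mu_t) + \frac{\beta}{2\d}\Ent(\mu_t)\indic_{\s=0}$. Since $\mu_t = \Phi_t\#\muv$ solves the continuity equation $\partial_t \mu_t = -\div(\psi_t \mu_t)$ with $\psi_t = \psi\circ\Phi_t^{-1}$, one has $\frac{d}{dt}\int F(\mu_t) = -\int F'(\mu_t)\div(\psi_t\mu_t) = \int \nabla\big(F'(\mu_t)\big)\cdot\psi_t\, \mu_t$ for any smooth $F$ (the boundary terms vanish because $\muv$, hence $\mu_t$, vanishes at $\partial\Sigma$ at rate $\dist^{1-\alpha}$ as in \eqref{eq reg}, and $\psi$ is controlled by \eqref{tscale}). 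Iterating, $B_k(\beta,\mu_0,\psi) = \frac{d^k}{dt^k}\big|_{t=0}\phi(t)$ is a sum of integrals involving products of up to $k$ copies of $\psi$ and its derivatives, the derivatives $G^{(j)}(\muv)$ for $j\le k$, and derivatives of $\muv$ itself (controlled by \eqref{itemeqreg}, i.e. by the regularity of $s$). Using the homogeneity bounds $|u^j G^{(j)}(u)|\lesssim_\beta \beta u$ on $\carr_{2\ell}$ together with the scaling estimates \eqref{tscale} for $\psi$ (giving $\|\nabla^{\otimes m}\psi\|_{L^\infty(\carr_{2\ell})}\lesssim \M\ell^{\s-\d+1-m}$) and the volume factor $\ell^\d$ of $\carr_\ell$, and recalling $\psi$ is supported (up to fast-decaying tails that contribute lower order, controlled via the $|x-z|$-decay in \eqref{tscale} and $\int_U |\psi|^n$-type bounds of Lemma \ref{computationratio}) near $\carr_\ell$, one reads off $|B_k|\lesssim_\beta \beta\,\M^k\ell^{\d}\cdot\ell^{k(\s-\d)} = \beta\M^k\ell^{(1-k)\d+k\s}$, which is \eqref{borneB} with $\M = \|\varphi_0\|_{C^{k+1}}$ (and $k\le 4$ derivatives of $\varphi_0$ suffice for the terms appearing). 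The Taylor expansion \eqref{tayexp} then follows from Taylor's theorem with integral remainder: the remainder is $\frac{t^p}{(p-1)!}\int_0^1(1-r)^{p-1}\phi^{(p)}(rt)\,dr$, and $\phi^{(p)}(rt)$ is bounded exactly as $B_p$ but at the measure $\mu_{rt}$ instead of $\mu_0$; since $|rt|\ell^{\s-\d}\|\varphi_0\|_{C^5}$ is small, $\mu_{rt}$ stays within a factor $2$ of $\muv$ pointwise (so the effective temperature stays in the range where \eqref{425} applies), giving the claimed $O(t^p\beta\|\varphi_0\|_{C^4}^p\ell^{(1-p)\d+p\s})$.

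Finally, for the explicit formula \eqref{expliB1}, I specialize the first derivative computation: $B_1 = \int \nabla\big(G'(\muv)\big)\cdot\psi\,\muv + \frac{\beta}{2\d}\indic_{\s=0}\int\nabla(\log\muv + 1)\cdot\psi\,\muv$, and integrating by parts back, $B_1 = -\int G'(\muv)\div(\psi\muv) - \frac{\beta}{2\d}\indic_{\s=0}\int(\log\muv+1)\div(\psi\muv)$. Now invoke \eqref{divpsimu0} and the first line of \eqref{Rtransport}, namely $\div(\psi\muv) = \frac{1}{\cds}(-\Delta)^\alpha\varphi^\Sigma$ on $\Sigma$ (and it vanishes off $\Sigma$), together with the fact that $(-\Delta)^\alpha\varphi^\Sigma$ is mean zero (so the constant terms $1$ drop out, and we may integrate over all of $\R^\d$ or just $\Sigma$ interchangeably). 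This gives $B_1 = -\frac{1}{\cds}\int_\Sigma G'(\muv)(-\Delta)^\alpha\varphi^\Sigma + \frac{1}{\cds}\frac{\beta}{2\d}\indic_{\s=0}\int_\Sigma\log\muv\,(-\Delta)^\alpha\varphi^\Sigma$. It remains to compute $G'(u)$: from $G(u) = -\beta u^{1+\frac\s\d}\mf(\beta u^{\s/\d})$, the product and chain rules give $G'(u) = -\beta(1+\frac\s\d)u^{\s/\d}\mf(\beta u^{\s/\d}) - \beta\frac\s\d\,\beta u^{2\s/\d}\mf'(\beta u^{\s/\d})$; substituting $u = \muv$ yields exactly the two main terms of \eqref{expliB1} (with the $\mf'$ term carrying the extra $\beta$ and $\muv^{2\s/\d}$), while the entropy term gives the last summand. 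The main obstacle is bookkeeping rather than conceptual: one must carefully verify that the boundary contributions in the repeated integrations by parts genuinely vanish — this requires matching the $\dist^{1-\alpha}$ vanishing of $\muv$ against the at-most-$\dist^{-\alpha}$-type blowup of the worst factor $(-\Delta)^\alpha\varphi^\Sigma$ and the normal derivative of $\psi\muv$ — and that the non-compactly-supported tails of $\psi$ contribute only to lower-order error, which is where Lemma \ref{computationratio} and the precise decay rates in \eqref{tscale} are needed.
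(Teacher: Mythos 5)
Your proof of \eqref{expliB1} is essentially the paper's: both compute $B_1 = -\int G'(\muv)\,\div(\psi\muv) - \tfrac{\beta}{2\d}\indic_{\s=0}\int(\log\muv+1)\div(\psi\muv)$ using $\partial_t\mu_t|_{t=0} = -\div(\psi\muv)$, then substitute $\div(\psi\muv) = \tfrac{1}{\cds}(-\Delta)^\alpha\varphi^\Sigma$ from \eqref{Rtransport} and use mean-zero-ness of $(-\Delta)^\alpha$. That part is fine.

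For \eqref{borneB} and \eqref{tayexp}, however, your approach has a genuine gap. You compute $\phi^{(k)}$ in Eulerian form: repeatedly differentiate $\int G(\mu_t)$ via the continuity equation $\partial_t\mu_t = -\div(\psi_t\mu_t)$, producing a Leibniz-type sum of integrals of products of $G^{(j)}(\mu_t)$ against powers of $\div(\psi_t\mu_t)$ and its spatial derivatives. The trouble is that near $\partial\Sigma$ the individual terms are not absolutely convergent. Indeed $\div(\psi\muv) = \tfrac{1}{\cds}(-\Delta)^\alpha\varphi^\Sigma \sim w\,\dist^{-\alpha}$ by \eqref{assumpw}, while $|G''(u)|\lesssim_\beta \beta u^{\s/\d-1}$, so already for $k=2$ the term $\int G''(\mu_0)[\div(\psi\mu_0)]^2$ scales at $\partial\Sigma$ like $\dist^{(1-\alpha)(\s/\d-1)-2\alpha}$, whose exponent $-\tfrac{2\alpha(1-\alpha)}{\d}-2\alpha$ is $\le -1$ for all $\alpha\ge 1/2$ (i.e.\ $\s\le\d-1$), hence non-integrable on a substantial part of the admissible range $\s\in(\d-2,\d)$. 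Worse, spatial derivatives of $\div(\psi\muv)$ behave like $\dist^{-\alpha-m}$, compounding the problem for higher $k$. The finite value of $\phi^{(k)}(0)$ therefore relies on cancellations between individually divergent integrals, which you never exhibit; the remark that one must ``carefully verify that the boundary contributions vanish'' understates the issue, since the problem is not boundary terms from integration by parts but the Leibniz integrands themselves.

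The paper circumvents this entirely by passing to the Lagrangian form: since $\int g(\mu_t)\,d\mu_t = \int g(\mu_t\circ\Phi_t)\,d\muv$ and $\mu_t\circ\Phi_t = \muv/\det(\id+tD\psi)$, every $t$-derivative of the argument of $g$ carries an explicit factor of $\muv$; together with $|g^{(m)}(u)|\lesssim_\beta\beta u^{\s/\d - m}$ (from \eqref{425} and Fa\`a di Bruno for $u\mapsto\beta u^{\s/\d}$), each Fa\`a di Bruno term of $\tfrac{d^k}{dt^k}g(\mu_t\circ\Phi_t)$ is bounded by $\beta\muv^{\s/\d}|D\psi|^k$, which vanishes at $\partial\Sigma$ and integrates cleanly against $d\muv$ using \eqref{cdpsi}. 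This is the decisive trick. Your proposal would need to either (i) adopt this change of variables, or (ii) explicitly carry out the cancellations in the Eulerian form near $\partial\Sigma$ to show the $B_k$, $k\ge 2$, are well-defined and satisfy \eqref{borneB}; as written the argument does not close.
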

\begin{proof}
We  may write \begin{equation*}
\int \beta \mu_t^{1+\frac{\s}{\d}}\mf(\beta  \mu_t^{\frac{\s}{\d}})=
\int \beta  \mu_t^{\frac{\s}{\d}}\mf(\beta \mu_t^{\frac{\s}{\d}}) \Phi_t\# \mu_0
= \int \beta \(  \mu_t\circ \Phi_t\)^{\frac{\s}{\d}} \mf\(\beta ( \mu_t\circ \Phi_t)^{\frac{\s}{\d}}\) d\mu_0.
\end{equation*}
Next we recall that by definition of the push forward we have 
\be\label{explicitmu} \mu_t\circ \Phi_t= \frac{\muv}{\det(\id+ tD\psi)},\ee
 hence if $t|D\psi|<\hal$, which in view of \eqref{tscale} is implied by $ |t|\ell^{\s-\d} \|\varphi_0\|_{C^5}$ small enough, we may bound 
\be \left|\frac{d^j}{dt^j}  \mu_t \circ \Phi_t\right|\le |D\psi|^j .\ee
We also set $g(x)= \beta x^{\frac{\s}{\d}} \mf(\beta x^{\frac{\s}{\d}})$ 
and check that by the  assumption \eqref{425}, we have $|g^{(n)}(y)|\le C\beta $ for $n\le p$ when $y$ takes values $ \beta \muv(x)^{\frac\s\d}$.
Using the Faa di Bruno formula, we now have 
\begin{equation*}
\frac{d^k}{dt^k} g(  \mu_t \circ \Phi_t)\\ = \sum_{j_1+2j_2+\dots + kj_k=k} C_{\mathbf{j}} g^{(j_1+\dots +j_k)} (  \mu_t \circ \Phi_t)\prod_{l=1}^k \( \frac{d^l}{dt^l}  \mu_t \circ \Phi_t\)^{j_l} \end{equation*}
where $C_{\mathbf{j}}$ is some combinatorial factor, and inserting the above estimates we deduce that
$$\left|\frac{d^k}{dt^k} g(  \mu_t \circ \Phi_t)\right|\lesssim\beta  |D\psi|^{ k},$$ with a uniform constant over $\beta \ge 1$. In the same way 
$$\int \mu_t\log \mu_t=\int \log ( \mu_t\circ\Phi_t) d\muv$$
and the derivatives of $\log (\mu_t\circ \Phi_t) $ are bounded by $C   |D\psi|^{ k}.$
Integrating against  $ d\muv$ on the support of $\psi$, and using \eqref{cdpsi},  we deduce that 
\be \label{borneB0}  |\phi^{(k)}(t)| \le   C\beta   \int_\Sigma  |D\psi|^{ k}\lesssim  \beta \|\varphi_0\|_{C^5}^k\ell^{(1-k) \d+k\s}, \ee with a constant that is uniform for $\beta \ge 1$.
 The result \eqref{borneB} follows by Taylor expansion.
For \eqref{expliB1}, a direct calculation using that $\partial_t \mu_t|_{t=0}=-\div (\psi \mu_0)$ yields 
\begin{align*}
B_1(\beta, \muv, \psi)= &
\beta (1+\frac\s\d) \int \div (\psi \muv) \mf(\beta\muv^{\frac\s\d})\muv^{\frac\s\d} +\beta \frac\s\d\int \mf'(\beta \muv^{\frac\s\d}) \muv^{2\frac\s\d}\div (\psi \muv) \\
&-\frac\beta{2\d} \indic_{\s=0}\int \div(\psi\muv) \log \muv.\end{align*}
Inserting \eqref{divpsimu0} and \eqref{Rtransport}, we obtain the result.
\end{proof}

We can now obtain our main result on the expansion of partition functions relevant to $T_2$.

\begin{prop} Assume the same hypotheses as in the previous lemma.
Let $\mu_t= (\id+t\psi)\# \muv$ as in the previous lemma. Let $\mathcal{G}_\ell$ be as in Proposition \ref{pf loiloc} and Lemma \ref{lemsubdi}, a set of configurations such that the local laws \eqref{loiloc}--\eqref{controlnbreintro} hold for each $\carr_{2^k\ell}$ with $k \geq 1$ (in $\bulk$) and \eqref{macrolaw2} holds.
Assume we know that for each $r \le \ell^{\d-\s}$, we have
\be \label{relexp}\log \frac{\K_{N,\beta}^{\mathcal{G}_\ell}(\mu_r,\zeta_V\circ\Phi_r^{-1})}{\K_{N,\beta}(\mu)}+ N(\Ent(\mu_r)-\Ent(\mu_0)) =  N\( \mathcal{Z}(\beta,\mu_r)- \mathcal{Z}(\beta,\mu_0)\)+O(( \beta+1) N\ell^\d \mathcal R_r)\ee
with 
$\max_{|r|\le \ell^{\d-\s}} \mathcal R_r \le C$ and $\mathcal R_r$ continuous in $r$ for $|r|\le \ell^{\d-\s}$.
For any integer $p\ge 1$ such that $\varphi_0 \in C^{2p+3}$,
for every $t$ such that  $|t| \ell^{\d-\s} \|\varphi_0\|_{C^{2p+3}} \( \max_{|r|\le \ell^{\d-\s}} \mathcal R_r\)^{-1/p}$ is smaller than a small enough constant (depending only on $\d,\s, \muv,p$), $a$ being the largest number $\le \ell^{\d-\s}$ such that 
\be \label{defa}
a= \frac{c}{\|\varphi_0\|_{C^{2p+3}}} \( \max_{[-a,a]}\mathcal R_r \)^{\frac1p}   \ell^{\d-\s} ,
\ee for some $c>0$ small enough,
  we have 
\be
\label{hold0} \left|\log \Esp_{\PNbeta} \( \exp\Big( \sum_{k=1}^p \gamma_k t^k\Big)\indic_{\mathcal{G}_\ell} \) \right|\le C |t|
(\beta +1) N\ell^\s \|\varphi_0\|_{C^{2p+3}}\(\max_{|r|\le a}\mathcal R_r\)^{1-\frac1p }
\ee where $\gamma_k=\beta N^{-\frac{\s}{\d}} 
\Ani_k(\XN, \mu_0, \psi) - \frac{N}{k!}B_k(\beta, \mu_0, \psi)$.
Moreoever, 
\begin{multline}\label{pconc}
\left|\log \Esp_{\PNbeta}\( e^{T_2} \indic_{\mathcal G_\ell}\)-  N \sum_{k=1}^{p-1}\frac{t^k}{k!}B_k(\beta, \muv,\psi)\right|\\ \lesssim_\beta t (\beta +1) N\ell^{\s} \Big( \max_{|r|\le a}\mathcal R_r \Big)^{1-\frac1p}  +
t^p  \beta  \|\varphi_0\|_{C^{2p+3}}^p N\ell^{(1-p) \d+p\s},
\end{multline}
 
 where the bound depends  on $p$ and the above bounds.
 
\end{prop}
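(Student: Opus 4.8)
The plan is to combine the two independent expansions of the relative partition function: the "transport side" identity \eqref{09}, which says
\[
\log \frac{\K_{N,\beta}^{\mathcal G_\ell}(\mu_t,\zeta_V\circ\Phi_t^{-1})}{\K_{N,\beta}(\muv,\zeta_V)}+N(\Ent(\mu_t)-\Ent(\muv))=\log\Esp_{\PNbeta}(e^{T_2}\indic_{\mathcal G_\ell}),
\]
and the "free energy side" hypothesis \eqref{relexp}, which evaluates the same left-hand quantity as $N(\mathcal Z(\beta,\mu_t)-\mathcal Z(\beta,\muv))+O((\beta+1)N\ell^\d\mathcal R_t)$. Subtracting, and inserting the Taylor expansion \eqref{tayexp} of $\mathcal Z(\beta,\mu_t)$ with its remainder $O(t^p\beta\|\varphi_0\|_{C^{2p+3}}^p\ell^{(1-p)\d+p\s})$, we get that for every $r$ with $|r|\le\ell^{\d-\s}$,
\[
\Big|\log\Esp_{\PNbeta}(e^{T_2(r)}\indic_{\mathcal G_\ell})-N\sum_{k=1}^{p-1}\tfrac{r^k}{k!}B_k(\beta,\muv,\psi)\Big|\lesssim_\beta (\beta+1)N\ell^\d\mathcal R_r+r^p\beta\|\varphi_0\|_{C^{2p+3}}^pN\ell^{(1-p)\d+p\s},
\]
where I write $T_2(r)$ for the $T_2$ term with parameter $r$. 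This already identifies the main polynomial part; the point of the remaining work is to upgrade the $\mathcal R_r$ error (which is only $O(\ell^\d)$, i.e.\ of surface order) into the improved $\ell^\s(\max_{|r|\le a}\mathcal R_r)^{1-1/p}$ error — this is the Hölder trick.

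The Hölder trick proceeds as follows. Using \eqref{dtF} and the derivative identities, $\partial_r^k|_{r=0}$ of the transport side produces exactly $\gamma_k=\beta N^{-\s/\d}\Ani_k(\XN,\mu_0,\psi)-\tfrac{N}{k!}B_k(\beta,\mu_0,\psi)$, so $T_2(r)=\sum_{k\ge1}\tfrac{r^k}{k!}\cdot(\text{derivatives})$ and the "renormalized" quantity $\sum_{k=1}^p\gamma_k r^k$ is the degree-$p$ part whose expectation we must bound. The idea (as in \cite{LS18}, \cite{S22}) is to write $\log\Esp(e^{\sum_{k\le p}\gamma_k t^k}\indic_{\mathcal G_\ell})$ as an integral over the parameter of a logarithmic derivative, controlling each order-$k$ term by Hölder's inequality in the probability measure: for each fixed $k$, $\Esp_{\PNbeta}(|\gamma_k|^q\indic_{\mathcal G_\ell})^{1/q}$ is estimated by applying the bound we just derived at a scaled parameter $r=qa$ (exploiting that $e^{T_2(r)}$ is a positive random variable whose log-expectation we control for all $|r|\le a$), and then optimizing over $q$. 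Concretely, since $\log\Esp(e^{\lambda\gamma_k})\le\frac{1}{k!}(\text{const})$ is controlled for $\lambda$ up to $\asymp a^{-k+1}$-type range coming from $|r|\le a$, Markov/Chernoff gives moment bounds $\Esp(|\gamma_k|^q)\lesssim (q/a^{k})^q\times(\text{the RHS bound at scale }a)$, and summing the resulting series $\sum_k\gamma_k t^k$ — which converges precisely because $|t|\ell^{\d-\s}\|\varphi_0\|_{C^{2p+3}}(\max\mathcal R_r)^{-1/p}$ is small by hypothesis, and $a$ is chosen in \eqref{defa} to make $a^p\asymp(\max_{[-a,a]}\mathcal R_r)\ell^{(\d-\s)p}/\|\varphi_0\|_{C^{2p+3}}^p$ balance the two error sources — yields \eqref{hold0}. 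The choice of $a$ in \eqref{defa} is exactly the one that equates the $p$-th Taylor remainder $a^p\beta\|\varphi_0\|^p\ell^{(1-p)\d+p\s}N$ with the rescaled $\mathcal R$-error $(\beta+1)N\ell^\d\mathcal R_a\cdot(a/\ell^{\d-\s})$, so that after the Hölder balancing both contribute $\lesssim (\beta+1)N\ell^\s\|\varphi_0\|_{C^{2p+3}}(\max_{|r|\le a}\mathcal R_r)^{1-1/p}$.

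Finally \eqref{pconc} follows by going back from the $\gamma_k$'s to $T_2(t)$: write $T_2(t)=\sum_{k=1}^p\tfrac{t^k}{k!}\gamma_k\cdot(\cdots)$ plus a remainder of order $t^{p+1}$ in the $\Ani$-terms controlled by Lemma~\ref{lemsubdi} (the estimate \eqref{controlani higher}, which needs \eqref{estxi} at order $2p+3$, hence the regularity hypothesis $\varphi_0\in C^{2p+3}$) and in the $B_k$-terms controlled by \eqref{borneB}; combining with \eqref{hold0} and the fact that $\Esp_{\PNbeta}(e^{T_2}\indic_{\mathcal G_\ell})$ is handled by the same Hölder argument isolating the main term $N\sum_{k=1}^{p-1}\tfrac{t^k}{k!}B_k(\beta,\muv,\psi)$, one arrives at the stated bound. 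The main obstacle I anticipate is the bookkeeping in the Hölder step: one must verify that the moment bounds on $\gamma_k$ obtained from the log-Laplace control at parameter $a$ are uniform enough (in $k\le p$, in $\beta\ge1$, and with the correct powers of $\ell$) that the series $\sum_k\gamma_k t^k$ can be summed with the claimed total error, and that the positivity of $e^{T_2}$ really lets one invoke \eqref{relexp} at all intermediate parameters $r\in[-a,a]$ rather than only at $r=t$ — this is where continuity of $\mathcal R_r$ in $r$ and the bound $\max_{|r|\le\ell^{\d-\s}}\mathcal R_r\le C$ are used.
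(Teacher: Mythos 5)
Your overall architecture is correct: you combine the transport identity \eqref{09} with the free energy hypothesis \eqref{relexp}, insert the Taylor expansion \eqref{tayexp} of $\mathcal{Z}$, expand $T_2$ to order $p$ via \eqref{dtF}, control $\Ani_p$ by \eqref{controlani higher}, and choose $a$ in \eqref{defa} to balance the two error sources. This matches the paper's proof up to \eqref{debut}.

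The genuine gap is in your account of the Hölder step. You write that ``for each fixed $k$, $\Esp(|\gamma_k|^q)^{1/q}$ is estimated by applying the bound at a scaled parameter $r=qa$'' and that ``$\log\Esp(e^{\lambda\gamma_k})$ is controlled for $\lambda$ up to $\asymp a^{-k+1}$.'' Neither of these is available. What \eqref{debut} actually controls is the log-Laplace transform of the \emph{full polynomial} $\sum_{k=1}^{p-1}\gamma_k r^k$, only for $|r|\le a$; it does not give you the exponential moment of any individual $\gamma_k$, and taking $r=qa$ with $q>1$ takes you outside the regime where \eqref{relexp} is assumed to hold. Extracting control on each $\gamma_k$ separately is the crux, and it requires a concrete decoupling mechanism which you have not supplied. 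The paper's mechanism is a Lagrange interpolation argument: choosing $p$ equally-spaced nodes $X_1,\dots,X_p\in[1/2,1]$ and the associated Lagrange polynomials $P_i$ with coefficients $c_{i,n}$, one has $\gamma_n a^n = \sum_{i=1}^p c_{i,n}\bigl(\sum_{k=1}^p \gamma_k a^k X_i^k\bigr)$, i.e.\ each $\gamma_n a^n$ is a \emph{fixed linear combination} of the controlled quantities $\sum_k \gamma_k (aX_i)^k$, $i=1,\dots,p$, all with $aX_i\le a$. The generalized H\"older inequality then transfers the log-Laplace control on those $p$ combinations into a log-Laplace control on $\gamma_n a^n/C$ as in \eqref{find0}--\eqref{find}, and a further H\"older interpolation with the factor $|t|/a<1$ gives \eqref{hold}. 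Without this polynomial-interpolation step, your ``Markov/Chernoff on individual $\gamma_k$'s'' is circular: it presumes exactly the per-$k$ exponential moment bound that this step is there to establish. You do flag this as a concern at the end, but anticipating that ``bookkeeping'' is needed is not the same as identifying the interpolation idea that makes the decoupling possible.

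A secondary inaccuracy: there is no ``integral of a logarithmic derivative'' step in the paper's argument; the isolation of the $B_k$ main term in \eqref{pconc} is done by directly substituting the bound \eqref{hold0} into the expansion \eqref{exp2forme} and applying H\"older one more time, not by differentiating in the parameter.
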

\begin{proof}
Expanding the next-order energy to order $p$ via \eqref{dtF}, we have
\begin{multline}\label{diffFtransport2}
\F_N(\Phi_t(\XN), \Phi_t\#\mu_0)- \F_N(\XN, \mu_0)\\=\sum_{k=1}^{p-1} \frac{t^k}{k!}\Ani_k(\XN, \mu_0 , \psi) + \frac{1}{p!}\int_0^t(t-s)^{p-1}  \Ani_p(\Phi_s(\XN), \mu_s, \psi\circ \Phi_s^{-1}) ds.\end{multline}
In the same way 
\be \log \det D\Phi_t= \sum_{k=1}^{p-1} t^k c_k(\psi) + O(t^p |D\psi|^p),
\ee
for certain expressions $c_k(\psi)$.
Inserting into \eqref{09} and using \eqref{controlani higher} to bound $\Ani_p$, we obtain 
\begin{multline}\label{exp2forme}
e^{T_2}\indic_{\mathcal{G}_\ell}=
\exp\Bigg( -\beta N^{-\frac{\s}{\d}}  \Bigg(\sum_{k=1}^{p-1} t^k \Ani_k(\XN, \mu_0, \psi) + O_\beta\(|t|^p\|\varphi_0\|_{C^{2p+3}}^p\ell^{(1-p)\d+p\s}N^{1+\frac{\s}{\d}}\)\Bigg) \\
+ \sum_{k=1}^{p-1}t^k \Fluct_{\muv}(c_k(\psi)) + O\(  \Fluct_{\muv}\(t^p|D\psi|^p\) \) \Bigg)\indic_{\mathcal{G}_\ell}.
\end{multline}
Let us first bound $\Fluct_{\muv} (|D\psi|^p)$. For that we rewrite $\psi$ as $\sum_k (\chi_k\psi)$ where $\chi_k$ is a partition of unity relative to the $A_k$'s as in the proof of Lemma \ref{lemsubdi}. We use the rough bound  of Lemma \ref{RoughFluct} (and a rough bound by $N \|\psi\|_{L^\infty}$ in $A_{k_*+2}$), and obtain using \eqref{tscale} that 
\begin{align}
\Fluct_{\muv}\(|D\psi|^p\) \indic_{\mathcal G_\ell}&\lesssim  \sum_{k=0}^\infty \Fluct_{\muv}\(|D(\chi_k\psi)|^p\) \indic_{\mathcal G_\ell}\lesssim N \|\varphi_0\|_{C^5}^p\(
\sum_{k=0}^\infty\(\frac{\ell^\d}{(2^k\ell)^{2\d-\s}}\)^p  (2^k\ell)^\d+  \ell^\d\)\\
&
\lesssim \|\varphi_0\|_{C^5}^p
 N\( \ell^{\d(1-p)+\s p} +\ell^\d\)\lesssim \|\varphi_0\|_{C^5}^p
 N\ell^{\d(1-p)+\s p} .
\end{align}
Inserting this into \eqref{exp2forme} and inserting  \eqref{exp2forme}
into \eqref{09}, we obtain an expansion that we may equate with the  expansion \eqref{relexp} and \eqref{tayexp}.   Setting 
\be \label{defgk}
\gamma_k =  -\beta N^{-\frac{\s}{\d}} 
\Ani_k(\XN, \mu_0, \psi)+ \Fluct(c_k(\psi)) - \frac{N}{k!}B_k(\beta, \mu_0, \psi)\ee
this yields that  \begin{equation*}
\log \Esp_{\PNbeta} \( \exp\Big(  \sum_{k=1}^{p-1} t^k \gamma_k\Big)\indic_{\mathcal{G}_\ell}\)
=O_\beta\(|t|^p (1+ \beta) N \|\varphi_0\|_{C^{2p+3}}^p\ell^{(1-p) \d+p\s}\) +O\((1+\beta) N\ell^\d(\mathcal R_t+\mathcal R_0)\).
\end{equation*}
We next wish to choose $a\le \ell^{\d-\s}$ such  that 
\begin{equation*}
a^p  \|\varphi_0\|_{C^{2p+3}}^p\ell^{(1-p)\d+p\s} 
\le C \ell^\d(\mathcal R_0+\mathcal R_a).
\end{equation*}
For that we choose 
\begin{equation*}
a=\sup\left\{ b\le \ell^{\d-\s},\  b \le \frac{c}{\|\varphi_0\|_{C^{2p+3}}} \( \max_{r\in[-b,b]}\mathcal R_r \)^{\frac1p}   \ell^{\d-\s} \right\}.
\end{equation*}
We note that $a <\ell^{\d-\s}$   if  $ \max_{r\in [-\ell^{\d-\s},\ell^{\d-\s}]}\mathcal R_r $ is bounded and $c>0$ is chosen small enough. Thus by continuity, we must have 
\be  a= \frac{c}{\|\varphi_0\|_{C^{2p+3}}} \( \max_{[-a,a]}\mathcal R_r \)^{\frac1p}   \ell^{\d-\s}. \ee

With this choice we then have 
\be\label{debut0}
\log \Esp_{\PNbeta} \( \exp\Big(  \sum_{k=1}^{p-1} a^k \gamma_k  \Big)\indic_{\mathcal{G}_\ell} \)= O_\beta\( (\beta+1) N\ell^\d \max_{|r|\le a}\mathcal R_r\).
\ee
We note by H\"older's inequality we have $\Esp(e^L) \Esp(e^{-L} )\ge 1$ thus we can transform \eqref{debut} into 
\be\label{debut}
\left|\log \Esp_{\PNbeta} \( \exp\Big( \pm \sum_{k=1}^{p-1} a^k \gamma_k  \Big)\indic_{\mathcal{G}_\ell} \)\right|= O_\beta\( (\beta+1) N\ell^\d \max_{|r|\le a}\mathcal R_r\).
\ee

Let now $X_1, \dots X_p$ be $p$ equally-spaced points in $[1/2,1]$, and let $P_i$'s be the Lagrange interpolation polynomials of degree $p-1$ associated to $X_1, \dots, X_p$, i.e.~such that $P_i(X_j)=\delta_{ij}$. We may expand each $P_i$ as $\sum_{n=0}^{p-1} c_{i, n} X^n$, where the coefficients $c_{i,n}$ depend only on $p$.
Expressing the polynomial $\sum_{n=1}^{p-1} \gamma_n a^n X^n$ along the $P_i$'s we obtain that 
$$\sum_{n=1}^{p-1} \gamma_n a^n X^n= \sum_{i=1}^p ( \sum_{k=1}^{p-1} \gamma_k a^k X_i^k) P_i= \sum_{n=0}^{p-1}\sum_{i=1}^p  ( \sum_{k=1}^{p-1} \gamma_k a^k X_i^k) c_{i,n} X^n.$$
Equating the coefficients, it follows that  for $1 \le n \le p-1$, 
\be\label{gammaaX}
\gamma_n a^n = \sum_{i=1}^{p} c_{i,n} \( \sum_{k=1}^p \gamma_k a^k X_i^k\).\ee
Note that \eqref{debut} is also true with $a$ replaced by $ aX_i$ (for $X_i\in [\hal, 1]$). 
Choosing $C$ a constant large enough (depending only on $p$) and using the generalized H\"older's inequality, we may write that  for each $1\le n\le p-1$, in view of \eqref{debut},
\begin{multline} \label{find0}
\log \Esp_{\PNbeta} \( \exp\( \frac{\gamma_n  a^n}{C}\)\indic_{\mathcal{G}_\ell} \) \le \sum_{i=0}^p |c_{i,n}| \left|\log \Esp_{\PNbeta}\( \exp\( \sum_{k=1}^{p-1} \gamma_k  a^k X_i^k\)\indic_{\mathcal{G}_\ell}\) \right|\\ \lesssim_\beta ( \beta+1)  N\ell^\d \max_{|r|\le a}\mathcal R_r,\end{multline}
where $C$ depends on $p$. The same result holds starting from  the opposite of \eqref{gammaaX}, that is we can also obtain 
\be \label{find'}
\log \Esp_{\PNbeta} \( \exp\( -\frac{\gamma_n  a^n}{C}\)\indic_{\mathcal{G}_\ell} \) \lesssim_\beta   ( \beta+1)  N\ell^\d \max_{|r|\le a}\mathcal R_r.\ee
Using again  $\Esp(e^L) \Esp(e^{-L} )\ge 1$, we obtain a two-sided bound 
\begin{equation} \label{find}
\left|\log \Esp_{\PNbeta} \( \exp\( \pm\frac{\gamma_n  a^n}{C}\)\indic_{\mathcal{G}_\ell} \) \right|   \lesssim_\beta  ( \beta+1)  N\ell^\d \max_{|r|\le a}\mathcal R_r,\end{equation}

Using H\"older's inequality again we deduce  that if  $|t|/a$ is small enough (in particular $<1$),
\begin{align}
\notag & \log \Esp_{\PNbeta} \( \exp\( \sum_{k=1}^{p-1} \gamma_k t^k\)\indic_{\mathcal{G}_\ell} \) \le 
\sum_{k=1}^{p-1} \left|\log \Esp_{\PNbeta} \( \exp\(  \mathrm{sgn}(t^k) \frac{\gamma_k a^k}{C}\indic_{\mathcal{G}_\ell} \)\)\right|^{ C \frac{|t|^k}{a^k}} \\ 
\label{hold}& \le C \frac{|t|}{a}\sum_{k=1}^{p-1} \left|\log \Esp_{\PNbeta} \( \exp\( \mathrm{sgn}(t^k)\frac{\gamma_k a^k}{C}\indic_{\mathcal{G}_\ell} \)\)\right|\lesssim_\beta \frac{|t|}{a}
(\beta +1) N\ell^\d \max_{|r|\le a}\mathcal R_r
 .\end{align}

Inserting \eqref{defgk} and \eqref{hold} into  \eqref{exp2forme} and using  the definition of $a$, we obtain \eqref{hold0}. Inserting \eqref{hold0} into \eqref{exp2forme}, we obtain \eqref{pconc} after another application of H\"older's inequality.
\end{proof}

\subsection{Proof of Theorem \ref{CLT}}
The H\"older trick of the previous subsection gives us improved control on $T_2$, which will allow us to obtain the CLT.

\begin{lem}\label{improved T2}
Under the same assumptions,  suppose there is some $p \geq 2$ such that 
\begin{equation}\label{vanishing condition}
\(\ell N^{\frac{1}{\d}}\)^{\frac{\s}{2}}\(\max_{|r|\leq a}\mathcal{R}_r\)^{1-\frac{1}{p}}\rightarrow 0\quad \text{as } N \rightarrow \infty.
\end{equation}

If  $t=-\frac{\tau\sqrt2}{\sqrt{\beta}}N^{-1+\frac{\s}{2\d}}\ell^{-\frac{\s}{2}}$ where $\tau$ is fixed, then, 
 as $N\to \infty$, we have
\begin{equation}\label{limT2}
\log \Esp_{\PNbeta}\(e^{T_2}\indic_{\mathcal{G}_\ell}\)=-\frac{\tau\sqrt2}{\sqrt{\beta}}N^{-1+\frac{\s}{2\d}}\ell^{-\frac{\s}{2}}
NB_1(\beta,\muv, \psi)\indic_{\s \geq 0}+o(1),
\end{equation}
where $o(1)$ may depend on $\beta$.

If $t= -\frac{\tau \sqrt2}{\beta}N^{\frac{\s}{2\d}-1}\ell^{-\frac{\s}{2}}$
where $\tau $ is fixed, then, as $N \to \infty$, we have
\begin{equation}\label{limT22}
\log \Esp_{\PNbeta}\(e^{T_2}\indic_{\mathcal{G}_\ell}\)= -\frac{\tau \sqrt2}{\beta}N^{\frac{\s}{2\d}-1}\ell^{-\frac{\s}{2}}NB_1(\beta,\muv, \psi)\indic_{\s \geq 0}+o(1),
\end{equation}
where $o(1)$ is uniform as $\beta \to \infty$.

Moreover, \eqref{vanishing condition} holds for $p$ large enough if 
\be \label{small ell}
\ell \ll N^{-\frac{\s}{\d(\s+2)}}
\end{equation}
and  $\s < \s_0$, where $\s_0$ is approximately 
\begin{equation}\label{def: s0}
\begin{cases}
0.03973& \text{in }\d=1,\\
0.06059 & \text{in }\d=2.
\end{cases}
\end{equation}
\end{lem}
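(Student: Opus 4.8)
\textbf{Proof plan for Lemma \ref{improved T2}.}
The plan is to combine the partition function expansion \eqref{pconc} with the computations of $T_0$, $T_1$ already carried out, substituting the CLT-scale $t$ into the identity \eqref{splitt}. First I would record that, by Proposition \ref{pf loiloc} and Lemma \ref{lemsubdi}, for the good event $\mathcal{G}_\ell$ we have $\PNbeta(\mathcal{G}_\ell^c)\le C_1e^{-C_2\beta N\ell^\d}$, which is $o(1)$ (in fact super-polynomially small) under \eqref{small ell} since $N\ell^\d\gg 1$; hence all the exponential-moment estimates on $\mathcal{G}_\ell$ transfer to the full Gibbs measure with $o(1)$ error. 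For $t=-\tfrac{\tau\sqrt2}{\sqrt\beta}N^{-1+\frac{\s}{2\d}}\ell^{-\frac\s2}$ one has $|t|\ell^{\s-\d}\M\sim (\ell N^{1/\d})^{-(\d-\s)+\frac\s2}\to 0$ under \eqref{small ell} (this is where the smallness hypotheses on $\ell$ and on $\s$ enter quantitatively; see the last step below), so the hypotheses of the previous proposition are met. Then \eqref{pconc} gives
\[
\Big|\log\Esp_{\PNbeta}(e^{T_2}\indic_{\mathcal{G}_\ell})-N\sum_{k=1}^{p-1}\tfrac{t^k}{k!}B_k(\beta,\muv,\psi)\Big|\lesssim_\beta |t|(\beta+1)N\ell^\s\big(\max_{|r|\le a}\mathcal R_r\big)^{1-\frac1p}+t^p\beta\|\varphi_0\|_{C^{2p+3}}^pN\ell^{(1-p)\d+p\s}.
\]
Inserting the value of $t$, the first error term is $\lesssim_\beta (\ell N^{1/\d})^{\s/2}(\max_{|r|\le a}\mathcal R_r)^{1-1/p}$, which is exactly the quantity assumed to vanish in \eqref{vanishing condition}; the second error term scales like $(\ell N^{1/\d})^{\frac\s2 p+\s-\d p+\d p\cdot 0}$-type power, more precisely $t^pN\ell^{(1-p)\d+p\s}=(\tau\sqrt2/\sqrt\beta)^p (\ell N^{1/\d})^{p(\frac\s2-\d+\s)}\ell^{\d}N^{\s/\d}\cdot$(bounded), and one checks directly it is $o(1)$ once $p(\d-\s-\frac\s2)>\s/\d$, i.e. for $p$ large under \eqref{small ell}. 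Finally, the leading terms $N\sum_{k\ge 2}\tfrac{t^k}{k!}B_k$ are $o(1)$: using \eqref{borneB}, $N t^k B_k\lesssim_\beta N (\ell N^{1/\d})^{k\s/2}\ell^{-k\d}\ell^\d N^{\s/\d}(\ell N^{1/\d})^{-k\d}$... more simply $Nt^kB_k\lesssim_\beta (\ell N^{1/\d})^{k(\frac\s2-\d)}\ell^\d N^{\s/\d}$ which tends to $0$ for $k\ge 2$ since $\frac\s2-\d<0$, while the $k=1$ term is $-Nt B_1(\beta,\muv,\psi)$, and $B_1$ vanishes for $\s<0$ by \eqref{expliB1} (hence the indicator $\indic_{\s\ge0}$). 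This yields \eqref{limT2}; \eqref{limT22} follows identically with the colder scaling $t=-\tfrac{\tau\sqrt2}{\beta}N^{\frac{\s}{2\d}-1}\ell^{-\frac\s2}$, where the extra factor $\beta^{-1/2}$ makes every $o(1)$ uniform as $\beta\to\infty$ (one must track that all implicit constants in \eqref{pconc}, \eqref{borneB} are uniform for $\beta\ge1$, which is exactly the uniformity built into assumption \eqref{425} and Lemma \ref{corocontrenergyt2}).

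For the last assertion, that \eqref{vanishing condition} holds for $p$ large under \eqref{small ell} and $\s<\s_0$, the plan is to feed in the explicit rate $\mathcal R_r$ produced in Proposition \ref{Kcomp} (the ``free energy expansion with a rate'' referenced in \eqref{2eway0 intro}), whose dominant contribution is a surface error of the form $\mathcal R_r\lesssim (\ell N^{1/\d})^{-\kappa}$ for some explicit exponent $\kappa=\kappa(\d,\s)>0$ coming from the screening construction (surface vs.\ volume scaling in the extended dimension $\R^{\d+1}$), possibly worsened by the additional $\partial\Sigma$ surface errors and by the $\ell$-dependence when $\ell\to 0$. The condition \eqref{vanishing condition} then reads $(\ell N^{1/\d})^{\s/2}\cdot(\ell N^{1/\d})^{-\kappa(1-1/p)}\to 0$, i.e.\ $\kappa(1-\tfrac1p)>\tfrac\s2$; taking $p\to\infty$ this is $\kappa>\s/2$. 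One then needs to check, using the explicit value of $\kappa$ in terms of $\d$ and $\s$ (and the constraint $\d-2<\s<\d$), that $\kappa(\d,\s)>\s/2$ precisely when $\s<\s_0$ with $\s_0$ the root of the corresponding equation, giving the numerical values $0.03973$ in $\d=1$ and $0.06059$ in $\d=2$. The side condition \eqref{small ell} is what guarantees simultaneously that $\mathcal{G}_\ell^c$ has negligible probability, that $|t|\ell^{\s-\d}\M\to0$, and that the $a$ defined in \eqref{defa} satisfies $a\gg |t|$ so the H\"older trick applies; it should be obtained by comparing $\ell^{-\s/(\s+2)}$-type powers with the exponents appearing in \eqref{Rdef2} and in the definition of $a$.

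\textbf{Main obstacle.} I expect the genuinely delicate point to be the quantitative bookkeeping of the rate $\kappa(\d,\s)$: one must carry through the screening/almost-additivity argument of Section \ref{sec: partition} keeping the surface-error exponent sharp (including the $\partial\Sigma$ contributions and the $\ell\to0$ scaling of $\mathcal R_r$), and only then does the elementary inequality $\kappa>\s/2$ translate into the concrete thresholds \eqref{def: s0}. Everything else — substituting $t$, summing the geometric-in-$k$ Taylor remainders, transferring from $\mathcal{G}_\ell$ to $\PNbeta$, and the $\indic_{\s\ge0}$ dichotomy via \eqref{expliB1} — is routine given \eqref{pconc}, \eqref{bt0}, \eqref{borneB} and the earlier estimates.
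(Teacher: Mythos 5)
Your proposal follows the same route as the paper: substitute the CLT-scale $t$ into the relative-free-energy expansion \eqref{pconc}, track the two error terms and the Taylor coefficients $N\sum_k\frac{t^k}{k!}B_k$, keep the $k=1$ term as the leading contribution, and then feed the explicit rate from Proposition \ref{Kcomp} into \eqref{vanishing condition} to extract the thresholds on $\s$ and $\ell$. So the global architecture is right.

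There is, however, a concrete error in your explanation of the indicator $\indic_{\s\geq 0}$. You assert that "$B_1$ vanishes for $\s<0$ by \eqref{expliB1}." It does not: inspecting \eqref{expliB1}, none of the three terms in $B_1(\beta,\muv,\psi)$ vanishes identically when $\s<0$; $B_1$ is generically nonzero for all $\s\in(\d-2,\d)$. The correct reason (and what the paper actually proves) is purely one of scale: with the chosen $t$ one has
\[
\left|t\,N\,B_1\right|\lesssim_\beta\big(\ell N^{1/\d}\big)^{\s/2},
\]
using \eqref{borneB} and \eqref{tscale}, and this tends to $0$ precisely when $\s<0$ (since $\ell N^{1/\d}\to\infty$). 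So for $\s<0$ the $k=1$ Taylor term is absorbed into the $o(1)$ for the same reason the $k\geq 2$ terms are, whereas for $\s\geq 0$ it must be retained, hence $\indic_{\s\geq0}$. Your reasoning would suggest that $B_1$ itself is structurally zero, which is false and would, if pushed, lead you astray when computing the deterministic shift for $\s=0$.

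Two smaller remarks. First, your "$p\to\infty$, so $\kappa>\s/2$" heuristic is not what the paper does and glosses over the fact that $a$ and $\max_{|r|\le a}\mathcal R_r$ themselves depend on $p$ through \eqref{defa}: the paper fixes $p=2$, which after inserting the explicit $\mathcal R$ from Proposition \ref{Kcomp} produces the cubic inequalities $4\s^3-21\s^2+26\s-1<0$ (in $\d=1$) and $4\s^3-33\s^2+68\s-4<0$ (in $\d=2$) whose roots are the stated $\s_0$'s. Second, your explanation of \eqref{small ell} is off target: that condition arises specifically from the second part of \eqref{vancon}, i.e.\ requiring $(\ell N^{1/\d})^{\s/2}\ell=o(1)$, which is equivalent to $\ell\ll N^{-\frac{\s}{\d(\s+2)}}$; the smallness of $|t|\ell^{\s-\d}$ holds already whenever $\ell N^{1/\d}\to\infty$, without invoking \eqref{small ell}, since $|t|\ell^{\s-\d}\sim(\ell N^{1/\d})^{\s/2-\d}$ and $\s/2-\d<0$.
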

Notice that \eqref{small ell} holds automatically in the case $\s<0$ and in the case $\s = 0$ as soon as $\ell=o(1)$.
\begin{proof}
The first item is an immediate consequence of \eqref{pconc}: with the choice of $t$, if \eqref{vanishing condition} holds we have
\begin{equation*}
\left|tN\ell^\s \(\max_{|r|\leq a}\mathcal{R}_r\)^{1-\frac{1}{p}}\right| \lesssim \(\ell N^{\frac{1}{\d}}\)^{\frac{\s}{2}}\(\max_{|r|\leq a}\mathcal{R}_r\)^{1-\frac{1}{p}}\rightarrow 0.
\end{equation*}
 Moreover,
\begin{equation*}
\left|t^p N \ell^{(1-p)\d+p\s}\right| \lesssim N^{-p+\frac{p\s}{2\d}}\ell^{-\frac{\s p}{2}}N\ell^{(1-p)\d+p\s}=\(\ell N^{\frac{1}{\d}}\)^{\frac{p}{2}(\s-\d)+(1-\frac{p}{2})\d}\rightarrow 0
\end{equation*}
since $\s<\d$ and $p \ge 2$, establishing the desired $o(1)$ limiting behavior for $\log \Esp_{\PNbeta}\(e^{T_2}\indic_{\mathcal{G}}\)-\sum_{k=1}^{p-1}t^k\frac{N}{k!}B_k(\beta, \mu_0,\psi)$.
 Finally, using \eqref{borneB} and \eqref{tscale} we obtain 
\begin{equation*}
\left|t^k\frac{N}{k!}B_k(\beta, \mu_0,\psi)\right|\lesssim N N^{\frac{\s k}{2\d}-k}\ell^{-\frac{\s k}{2}}\ell^{(1-k)\d+\s k}=\( \ell N^{\frac{1}{\d}}\)^{(1-k)\d+\frac{\s k}{2}}.
\end{equation*}
For any $k \geq 2$, the exponent is negative as $\d-\d k+\frac{\s k}{2} \leq (k-1)(\s-\d)<0$. For $k=1$, the exponent is only negative for $\s<0$. In the $\s \geq 0$ regime, we possibly have some limit
$
\lim_{N \rightarrow \infty}(N^{-\frac{1}{\d}}\ell)^{-\frac{\s}{2}}B_1(\beta, \muv, \psi)
$. This establishes \eqref{limT2}.
The proof of \eqref{limT22} is analogous and left to the reader.

We now examine the condition under which \eqref{vanishing condition} holds.
From Proposition \ref{Kcomp}, for all $|t|\ell^{\s-\d}\le c$  small enough, we have

$$\mathcal R_t\le\max\( ( |t|\ell^{\s-\d})^{\frac\d{2\d-\s}}\(\mathscr{E}\((N^{\frac{1}{\d}} \ell)^{\frac{1}{\kappa +1}}\)  \)^{\frac{\d-\s}{2\d-\s}} ,|t|\ep^{\s-\d}\)
$$
  with 
 \begin{equation*}
\kappa:=
\begin{cases}
1-\frac{4}{4+\d-\s_+}\ & \text{if }\d \leq 5 \text{ or }\s <\d-1 \\
1-\frac{\d-1}{(\d-1)+(\d-\s)} & \text{otherwise}
\end{cases}
\end{equation*}
and
\begin{equation*}\mathscr{E}(R):= 
R^{-\kappa}\log R 
\end{equation*}
The definition \eqref{defa} together with $a \le \ell^{\d-\s}$ thus implies that 
\be \label{maxR}
\max_{|r|\le a} \mathcal R_r\le\max\(\(\mathscr{E}\((N^{\frac{1}{\d}} \ell)^{\frac{1}{\kappa +1}}\)  \)^{\frac{\d-\s}{2\d-\s}} ,a\ep^{\s-\d}\)
.\ee
Moreover, abbreviating $\mathscr{E}\((N^{\frac{1}{\d}} \ell)^{\frac{1}{\kappa +1}}\)$ into $\mathscr{E}$, 
in view of \eqref{defa}, we have
\be \label{defa2}
a\lesssim \max( \mathscr{E}^{\frac{\d-\s}{p(2\d-\s)}}, (a\ep^{\s-\d})^{\frac1p} )  \ell^{\d-\s} ,
\ee 
which together with \eqref{maxR} yields
\be \max_{|r|\le a} \mathcal R_r\lesssim  
\max\(\(\mathscr{E}\((N^{\frac{1}{\d}} \ell)^{\frac{1}{\kappa +1}}\)  \)^{\frac{\d-\s}{2\d-\s}} ,
(\frac{\ell}{\ep})^{\frac{p}{p-1}}, \( \mathscr{E}((N^{\frac{1}{\d}} \ell)^{\frac{1}{\kappa +1}}\) ^{\frac{\d-\s}{p(2\d-\s)}} (\frac{\ell}{\ep})^{\d-\s}\).\ee
Taking for instance $p=2$, 
in order to guarantee \eqref{vanishing condition}, it is thus sufficient to guarantee that 
\begin{equation}\label{vancon}
\frac{\s}{2}-\frac14\frac{\kappa}{\kappa+1}\(\frac{\d-\s}{2\d-\s}\)<0, \qquad \(\ell N^{\frac{1}{\d}}\)^{\frac{\s}{2}}\ell=o(1) \text{ for some }p \geq 1.
\end{equation}
This condition will only be possibly satisfied for small enough $\s$, which requires $\d=1,2$. The second condition is satisfied as soon as  $\ell \ll N^{-\frac{\s}{\d(\s+2)}}$,
which  yields \eqref{small ell}. 
For the first condition in \eqref{vancon}, we examine each dimension separately.
In $\d=2$ we only have $\s>0$, and the condition becomes 
\begin{equation*}
\frac{\s}{2}-\frac{1}{4}\(\frac{2-\s}{8-2\s}\)\(\frac{2-\s}{4-\s}\)<0 \iff 4\s^3-33\s^2+68\s-4<0,
\end{equation*}
which is true for all $\s<\s_0:\approx 0.06059$. In $\d=1$ we always have \eqref{vanishing condition} for $\s<0$, so we only need to look at $\s\geq 0$; there, we need to check
\begin{equation*}
\frac{\s}{2}-\frac{1}{4}\(\frac{1-\s}{6-2\s}\)\(\frac{1-\s}{2-\s}\)<0\iff 4\s^3-21\s^2+26\s-1<0,
\end{equation*}
which is true for all $\s <\s_0:\approx0.03973$. 
\end{proof}
We can now prove Theorem \ref{CLT}.
\begin{proof}[Proof of Theorem \ref{CLT}]
Letting 
$t=-\frac{\tau \sqrt2}{\sqrt{\beta}}N^{\frac{\s}{2\d}-1}\ell^{-\frac{\s}{2}}$ and assembling the results of  Lemma \ref{lem2.1}, \eqref{bt0}, Lemma \ref{lem: t1} and \eqref{limT2}, under the assumptions \eqref{425} and 
\eqref{vanishing condition} we obtain
\begin{multline*}
\log  \Esp_{\PNbeta}\( \exp\(\tau \sqrt{2\beta}(N^{\frac{1}{\d}}\ell)^{-\frac{\s}{2}}\Fluct_{\muv}(\varphi)\)\indic_{\mathcal{G}_\ell}\)=\\
-\tau\frac{\sqrt2}{\sqrt{\beta}}\(\ell N^{-\frac{1}{\d}}\)^{-\frac{\s}{2}} \(\frac{1}{\cds}\int_{\Sigma}(-\Delta)^\alpha \varphi^\Sigma (\log \muv)+B_1(\beta, \muv, \psi)\) 
+\frac{\tau^2}{2}\ell^{-\s} \frac{c_{\d,\frac{\d-\s}{2}}}{\cds}\left\|\varphi^\Sigma \right\|_{\dot{H}^{\frac{\d-\s}{2}}}^2+o_N(1).
\end{multline*} Furthermore, since 
\begin{multline*}
\Esp_{\PNbeta}\( \exp\(\tau \sqrt{2\beta}(N^{\frac{1}{\d}}\ell)^{-\frac{\s}{2}}\Fluct_{\muv}(\varphi)\)\indic_{\mathcal{G}_\ell}\)=\Esp_{\PNbeta}\( \exp\(\tau \sqrt{2\beta}(N^{\frac{1}{\d}}\ell)^{-\frac{\s}{2}}\Fluct_{\muv}(\varphi)\indic_{\mathcal{G}_\ell}\)\)\\ +\PNbeta(\mathcal{G}_\ell^c)
\end{multline*}
and $\PNbeta(\mathcal{G}_\ell^c) \sim C_1e^{-C_2\beta \ell^\d N}\rightarrow 0$, we have obtained that the Laplace transform of 
\begin{equation*}
\sqrt{2\beta}(N^{\frac{1}{\d}}\ell)^{-\frac{\s}{2}}\Fluct_{\muv}(\varphi)\indic_{\mathcal{G}_\ell}
\end{equation*} 
converges in distribution to that of a Gaussian. Inserting into the above \eqref{limitvariance}, we thus  have that 
 \begin{equation*}
\sqrt{2\beta}\frac{\Fluct_{\muv}(\varphi)\indic_{\mathcal{G}_\ell}}{\(\ell N^{\frac{1}{\d}}\)^{\frac{\s}{2}}}+\frac{\sqrt2}{\sqrt{\beta}}\(\ell N^{-\frac{1}{\d}}\)^{-\frac{\s}{2}}\(\frac{1}{\cds}\int_{\Sigma}(-\Delta)^\alpha \varphi^\Sigma (\log \muv)+B_1(\beta, \muv, \psi)\)\end{equation*}
converges in distribution to a centered Gaussian of variance equal to 
$$\begin{cases}\|\varphi_0^\Sigma \|_{\dot{H}^{\frac{\d-\s}{2}}}^2& \text{if} \ \ell=1\\
\|\varphi_0 \|_{\dot{H}^{\frac{\d-\s}{2}}}^2& \text{if} \ \ell \to 0.\end{cases}$$
In view of \eqref{limitmean}, \eqref{expliB1} 
and $\PNbeta(\mathcal{G}_\ell^c)\rightarrow 0$,  this establishes the first claim of Theorem \ref{CLT}.

If we take $t= -\frac{\tau \sqrt2}{\beta}N^{\frac{\s}{2\d}-1}\ell^{-\frac{\s}{2}}$, we instead use \eqref{limT22}, and obtain the result in the same way.
\end{proof}
%
%
%


\section{Local Energies and Screening}\label{sec: ll prelim}

We turn now to the proof of Theorem \ref{Local Law}. The argument here is inspired by the bootstrap on scales carried out in \cite{L17}, \cite{AS21} and \cite{P24}, and relies heavily on the electric energy formulation introduced in \cite{PS17}. The main notions for this approach were introduced in Section~\ref{sec: fluct prelim}, where the next-order energy $\FN(\XN,\muv)$ was introduced. The typical order of this quantity is well understood, from \cite[Corollary 5.23]{S24}, see \eqref{expmomentcontrol} and the consequence  in \eqref{macrolaw2}
and serves as the base case for our bootstrap on scales. 
This section introduces notation and terminology for examining the system at the blown-up scale and for the localization to length scales $\ell \ll 1$. 
\subsection{Blowup and Subadditive Approximation}\label{subsec: ll neum}
First, it is convenient to change coordinates so that the typical interparticle distance is order $1$. In the blown-up scale the length scale $\rho_\beta N^{-1/\d}\le \ell\le 1$ will be replaced by $\rho_\beta\le L\le N^{\frac1\d}$.

Setting $\XN'=N^{\frac{1}{\d}}\XN$ and $\muv'(x)=\muv (xN^{-\frac{1}{\d}})$, we can compute  that
\begin{equation}\label{rescaleFformula}
\FN(\XN,\muv)=N^{\frac{\s}{\d}}\F(\XN',\muv')-\(\frac{N}{2\d}\log N\) \indic_{\s=0}
\end{equation}
where, for any nonnegative density $\mu$ with $\mu(\R^\d)=N$, we let  $\F$ be defined by 
\begin{equation}\label{next order}
\F(\XN, \mu):=\frac{1}{2}\iint_{\triangle^c}\g(x-y) d\left(\sum_{i=1}^N \delta_{x_i}-\mu\right)(x)d \left(\sum_{i=1}^N \delta_{x_i}-\mu\right)(y).
\end{equation}
Note that \eqref{rescaleFformula} and \eqref{energylowerbound} yield that 
\be \label{energylbbu}
\F(\XN,\mu) \ge - C N
\ee
where $C>0$ depends only on $\d,\s,\|\mu\|_{L^\infty}$.


A similar blow up computation yields
\begin{equation*}
N\zeta_V(x_i)=N^{\frac{\s}{\d}}\zeta_V'(x_i')
\end{equation*}
where
\begin{equation*}
\zeta_V'(x)=h^{\muv'}+N^{1-\frac{\s}{\d}}V\left(\frac{x}{N^{\frac{1}{\d}}}\right)-N^{1-\frac{\s}{\d}}c_V.
\end{equation*}
We can then expand as before
\begin{equation*}
\ZNbeta=\exp \left(-\beta N^{2-\frac{\s}{\d}}\I(\muv)\right)\K_{\beta}(\muv', \zeta_V')
\end{equation*}
where we define more generally
\begin{equation}
\label{defK1}
 \K_{\beta}(\mu,\zeta):=  N^{-N}\int_{(\R^\d)^N} 
\exp\left(- \beta \(\F(X_N,\mu) + \sum_{i=1}^N \zeta(x_i) \) \right)
\, dX_N.\end{equation}


In the subsequent analysis, we will need subadditive and superadditive minimal approximations of our local energy that are purely local quantities, unlike the above energies which depend on the global configuration. We proceed to define the subadditive approximation now, again in analogy with the Coulomb gas \cite[Section 2]{AS21} and the $1$-d log gas \cite[Section 2]{P24}. The superadditive approximation is easier to present following a discussion of the screening procedure, and so we postpone that discussion to later in this section.

\subsubsection{Subadditive approximation}\label{sec:subadditive}
Let us start with our subadditive approximation, which is a Neumann energy, following the steps of \cite[Chap.~7]{S24}. It is now better to work with a generic nonnegative density $\mu$ in a generic domain $U$. Let $U\subset \R^\d$ be a domain with piecewise $C^1$ boundary and such that  $\int_U\mu=N$ is an integer. 
When $U$ is unbounded, we will need an additional decay assumption in the case $\s\le 0$: there exists $m>0$ and a set $\check{U}$ such that $\mu \ge m>0$ in $\check{U}$, such that 
\be\label{asstail} \frac{1}{\mu ((\check{U})^c)}\iint_{(\check{U})^c \times (\check{U})^c} \g(x-y)d\mu(x)d\mu(y)\ge - CN,\ee
which is easily satisfied by the blown-up equilibrium measure in the generic case we are studying.

We need a new version of the minimal distance to make the energy subadditive: we set
\begin{equation}\label{defrrc}
\rrh_i := \frac{1}{4}\min \( \min_{x_j\in U, j\neq i} |x_i-x_j|, \dist(x_i, \pa U), 1\).
\end{equation} 
This will shrink the radii of the balls when they approach $\partial U$, ensuring that all $B(x_i, \rrh_i)$ remain included in $U$ if $x_i \in U$.
Let now $\Lambda$ be a set of the form $U\times [-H,H]$ for some $H\in (0, +\infty]$ (interior case), or of the form $(U^c\times [-H,H])^c$ (exterior case).

If $\mu (U)=N$, for a configuration $\XN$ of points in $U \subset \R^\d$ and $ \Lambda \subset \R^{\d+1}$ of the form above, we let $u$ solve
\be \label{defiu}\left\{
\begin{array}{ll}
-\div (\yg \nab u) = \cds\displaystyle\(\sum_{i=1}^N \delta_{x_i}-\mu \drd\)& \text{in} \ \Lambda \\
 \frac{\pa u}{\pa n}=0 &\ \text{on} \ \partial \Lambda \\ [2mm]
 \nab u \to 0 &\ \text{at } \infty. \end{array}\right.
\end{equation}
If $U$ and $\Lambda$ are bounded, the condition at $\infty$ is dropped. The associated \textit{subadditive approximation} is the Neumann energy
\begin{equation}\label{minneum}
\F(\XN,\mu,\Lambda):=\frac{1}{2\cds}\( \int_{\Lambda}|y|^\gamma|\nab u_{\rrh}|^2 -\cds\sum_{i=1}^N \g\(\rrh_i\)\)-\sum_{i=1}^N \int_U \f_{\rrh_i}(x-x_i)~d\mu(x),
\end{equation} where $\f_\eta$ is defined in \eqref{extended  truncation}.
Note that $\F(\XN, \mu, \R^{\d+1})= \F(\XN, \mu)$, the quantity defined in \eqref{next order}.

Notice also that for any $\Lambda_1 \subset \Lambda_2$,
we have
\be \label{locali2}
\F\(\XN,\mu,\Lambda_2\) \leq \F(\XN,\mu,\Lambda_1)+\F\(\XN,\mu,\Lambda_2\setminus \Lambda_1\).
\ee
The proof is as in  \cite[Corollary 7.6]{S24} and relies on the following projection lemma,  which tells us that gradients minimize energy, cf. \cite{PS17}.
\begin{lem}[Projection lemma]\label{projlem}
  Assume that $U$ is an open  subset of $\R^\d$ with piecewise $C^1$ boundary, and let $\XN \subset U\times \{0\} \subset \Lambda \subset \R^{\d+1}$, where $\Lambda$ is an open subset of $\R^{\d+1}$ with piecewise $C^1$ boundary. Assume $E$ is a vector-field satisfying a relation of the form
  \begin{equation}\label{eqe}
   \left\{\begin{array}{ll}  
  -\div (\yg E)= \cds\( \sum_{i=1}^N \delta_{x_i} -\mu\drd\) &\quad \text{in} \ \Lambda \\
  E \cdot  n=0 &\quad \text{on} \ \partial \Lambda,\end{array}\right.\end{equation}
and $u$ 
  solves 
$$  \left\{\begin{array}{ll}  
  -\div(\yg\nab u)= \cds\( \sum_{i=1}^N \delta_{x_i}- \mu\drd\)& \quad \text{in}  \ \Lambda \\
  \frac{\pa u}{\pa n}=0 & \text{on} \ \partial \Lambda,\end{array}\right.$$ 
  and $ u(\frac{\pa u}{\pa n}- E \cdot n) \to 0$ as $|x|\to \infty, x\in \Lambda$ if $\Lambda$ is unbounded.
  Then
  $$\int_{\Lambda}\yg |\nab u_{\rrh}|^2 \le \int_{\Lambda} \yg  |E_{\rrh} |^2,$$ 
  where $E_{\rrh}:= E-\sum_{i=1}^N \nab \f_{\rrh_i} (x-x_i)$.
  \end{lem}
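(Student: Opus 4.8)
The statement is the classical ``gradients minimize the Dirichlet energy among divergence-compatible vector fields'' lemma, adapted to the weighted space with weight $\yg$ and to the truncated fields $E_{\rrh}$, $\nab u_{\rrh}$. The plan is to expand the square $\int_\Lambda \yg|E_{\rrh}|^2 = \int_\Lambda \yg|\nab u_{\rrh}|^2 + \int_\Lambda \yg|E_{\rrh}-\nab u_{\rrh}|^2 + 2\int_\Lambda \yg \nab u_{\rrh}\cdot(E_{\rrh}-\nab u_{\rrh})$ and show the cross term vanishes; since the middle term is nonnegative this gives the claim. The key point is thus the orthogonality $\int_\Lambda \yg \nab u_{\rrh}\cdot(E_{\rrh}-\nab u_{\rrh})=0$.

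First I would observe that the truncation terms cancel in the difference: $E_{\rrh}-\nab u_{\rrh} = E-\nab u$, because both $E_{\rrh}$ and $\nab u_{\rrh}$ are obtained from $E$, resp. $\nab u$, by subtracting the same quantity $\sum_i \nab\f_{\rrh_i}(x-x_i)$. Hence it suffices to prove $\int_\Lambda \yg\, \nab u_{\rrh}\cdot(E-\nab u)=0$. Next, from \eqref{eqe} and \eqref{defiu}, the vector field $E-\nab u$ is weakly divergence-free in $\Lambda$ (the singular right-hand sides cancel exactly) and satisfies $(E-\nab u)\cdot n=0$ on $\partial\Lambda$. Therefore, if $\zeta$ is any sufficiently regular scalar function on $\Lambda$, an integration by parts gives $\int_\Lambda \yg\,\nab\zeta\cdot(E-\nab u) = -\int_\Lambda \zeta\,\div\bigl(\yg(E-\nab u)\bigr) + \text{boundary terms} = \text{boundary terms}$, and the boundary terms on $\partial\Lambda$ vanish thanks to the Neumann conditions, while the contribution at infinity (when $\Lambda$ is unbounded) vanishes by the decay hypothesis $u\bigl(\tfrac{\pa u}{\pa n}-E\cdot n\bigr)\to 0$ together with the decay of $\nab u$ and $E$. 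One must apply this not with $\zeta=u$ directly but in a way compatible with the truncation: writing $\nab u_{\rrh}=\nab u - \sum_i\nab\f_{\rrh_i}(x-x_i)$, I would treat the $\nab u$ piece by the integration-by-parts argument just described (with $\zeta=u$, justified by approximation since $u$ has only integrable weighted-Dirichlet singularities at the points $x_i$ and the $\f_{\rrh_i}$ are subtracted precisely to make $u_{\rrh}$ bounded near $x_i$), and the $\nab\f_{\rrh_i}$ pieces by noting that each $\f_{\rrh_i}(x-x_i)$ is supported in $B(x_i,\rrh_i)$, a ball contained in $\Lambda$ by the definition \eqref{defrrc} of $\rrh_i$, so integrating $\yg\nab\f_{\rrh_i}(x-x_i)\cdot(E-\nab u)$ by parts produces no boundary term and only $-\int \f_{\rrh_i}(x-x_i)\div(\yg(E-\nab u))=0$.

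The main obstacle is the rigorous justification of the integrations by parts near the point charges and at infinity: $u$ is not in $\dot H^1_{\yg}$ globally because of the logarithmic-type (in the weighted sense) singularities at the $x_i$, so one cannot simply pair $u$ against a distribution. The standard way around this, which I would follow (as in \cite[Chap.~7]{S24} and \cite{PS17}), is to excise small weighted balls $B(x_i,\epsilon)$ around each point, perform the integration by parts on $\Lambda\setminus\bigcup_i B(x_i,\epsilon)$ where everything is smooth, control the boundary integrals on $\partial B(x_i,\epsilon)$ using the explicit local behavior of $\g$ and the fact that $E$ and $\nab u$ have the same singular part $\cds\nab\g(x-x_i)/|y|^\gamma$ there (so their difference is bounded, or at least has lower-order singularity, making the boundary term vanish as $\epsilon\to 0$ against the $O(\epsilon^{\d+\gamma})$ weighted surface measure), and similarly truncate at a large radius $R$ and send $R\to\infty$ using the stated decay. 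Once these limits are taken, the orthogonality $\int_\Lambda\yg\nab u_{\rrh}\cdot(E-\nab u)=0$ follows and the expansion of the square concludes the proof.
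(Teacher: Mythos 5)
Your overall plan is the right one and is the standard argument the paper cites from \cite{PS17} and \cite{S24}: expand $\int\yg|E_{\rrh}|^2$ about $\nab u_{\rrh}$, observe that $E_{\rrh}-\nab u_{\rrh}=E-\nab u$ is weighted-divergence-free in $\Lambda$ with $(E-\nab u)\cdot n=0$ on $\partial\Lambda$, and kill the cross term by integration by parts using the Neumann data and the decay hypothesis at infinity. The middle term is then nonnegative and the claim follows.

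There is, however, one step that does not hold as stated. You justify the excision around the $x_i$ by asserting that ``$E$ and $\nab u$ have the same singular part $\cds\nab\g(x-x_i)/|y|^\gamma$, so their difference is bounded, or at least has lower-order singularity.'' This is not a consequence of \eqref{eqe}: the equation only constrains $\div(\yg E)$ as a distribution, and since $E$ is not assumed to be a gradient it may differ from $\nab u$ near $x_i$ by a divergence-free field that is not bounded — the distributional divergence does not pin down the full local singularity. As written, your control of the boundary term $\int_{\partial B(x_i,\epsilon)}\yg\,|u|\,|E-\nab u|$ therefore rests on an unproved assumption. The standard fix, which also makes the excision unnecessary, is to observe that one may assume $\int_\Lambda\yg|E_{\rrh}|^2<\infty$ (otherwise the inequality is vacuous); then $E-\nab u=E_{\rrh}-\nab u_{\rrh}\in L^2_{\yg}(\Lambda)$, so the cross term $\int_\Lambda\yg\,\nab u_{\rrh}\cdot(E-\nab u)$ is well-defined by Cauchy–Schwarz, and one integrates by parts directly against the \emph{bounded, Lipschitz} function $u_{\rrh}$ (not $u$): this produces only the $\partial\Lambda$ boundary term, which vanishes by the Neumann condition, plus the contribution at $\infty$ handled by the decay hypothesis, with no boundary terms at all appearing near the $x_i$. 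Splitting $\nab u_{\rrh}$ into $\nab u$ minus the $\nab\f_{\rrh_i}$ pieces and pairing the first against $E-\nab u$ reintroduces exactly the singularity problem that truncation was designed to remove.
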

Extending $\nab u_{\rrh}$ by a zero vector field and using the projection lemma  as in \cite[Corollary 7.7]{S24},
 we have that
\be\label{comparaisontoutesp}  \F(\XN,\mu,\Lambda) \ge \F(\XN, \mu\indic_{\Lambda}, \R^\d)=\F(\XN, \mu\indic_{\Lambda}),\ee
which implies in view of \eqref{energylbbu} the lower bound
\be\label{lblocale}
 \F(\XN, \mu, \Lambda) \ge - C N
\ee
with $C$ depending only on $\d,\s,\|\mu\|_{L^\infty}$.

\subsubsection{Local versions}
We next turn to local versions of these energies.
First we define a new minimal distance relative to $\partial \Omega$ where $\Omega \subset \R^\d$: 
\begin{equation}\label{defrrc3}
\rrt_i := \frac{1}{4}\begin{cases} 
\min \( \min_{x_j\in \Omega, j\neq i} |x_i-x_j|, \dist(x_i, \pa U\cap \Omega), 1\) & \text{if} \ \dist(x_i, \partial \Omega) \ge \hal  \\ \min(1, \dist(x_i, \partial U \cap \Omega)) & \text{if } \dist(x_i, \pa \Omega\backslash \partial U ) \le \frac14\\
t \min \( \min_{x_j\in \Omega, j\neq i} |x_i-x_j|, \dist(x_i, \pa U\cap \Omega\backslash \pa U ), 1\)&\\
+(1-t) 
\min(1, \dist(x_i, \partial U \cap \Omega))& \text{if } \dist(x_i, \pa  \Omega\backslash \pa U)= \frac{1+t}{4} 
, t\in [0,1].\end{cases}
\end{equation}
We note that this minimal distance coincides with \eqref{defrrc} when taking $\Omega=\R^\d$.
Here, the  balls are enlarged to their largest possible values for points that approach the boundary of $\Omega$ (except for the part included in $\partial U$). This way, balls can potentially overlap the boundary and not be disjoint. This also ensures that for $x_i\in \Omega$, $\rrt_i$ does not depend on the points of the configuration that lie outside of $\Omega$. 
In addition the definition is made so that the radii are continuous with respect to the position of the points.

Given a set $\tilde \Omega \subset \R^{\d+1}$ of the form $\Omega\times [-h,h]$ or  its complement, we then let 
\be \label{Glocal2} 
\F^{\tilde \Omega}(\XN,\mu,\Lambda) :=\frac1{2\cds}\( \int_{\tilde \Omega \cap \Lambda} \yg|\nab u_{\rrt}|^2-\cds\sum_{i\in I_{\tilde \Omega}} \g\(\rrt_i\)\)-\sum_{i\in I_{\tilde \Omega}} \int_U \f_{\rrt_i}(x-x_i)\, d\mu(x) .\ee
This definition provides the following  important superadditivity property (see \cite[Section 4.5, Lemma 7.8]{S24})
\be\label{superadd}
 \F(\XN,\mu,\Lambda) \geq \F^{\tilde \Omega}(\XN,\mu,\Lambda)+\F^{(\tilde \Omega)^c}(\XN,\mu,\Lambda).
\ee

We have the following control, as in  \cite[Proposition 4.28, Lemma 7.8]{S24}: if $\tilde \Omega \cap \Lambda$ is convex,  there exists $C_0>0$ depending only on $\d,\s$ and $\|\mu\|_{L^\infty} $ such that 
 \be \label{arrivC0} \int_{\tilde \Omega} \yg |\nab h_{\rrt}|^2 \le 4\cds\(  \F^{\tilde \Omega}(\XN, \mu, \Lambda) + C_0 \#I_{\tilde \Omega} \).
 \ee
 
 Moreover, as in \cite[Remark 7.9]{S24}, in the case that $U$ is a hyperrectangle, we can show that 
 \be \label{arrivC02}
 \int_U \yg  |\nab h_{\rrh}|^2\le C (\F(X_N, \mu, U)+C), \qquad \sum_{i=1}^N \g(\rrh_i) \le C ( \F(X_N, \mu, U)+ C).\ee
  
\subsubsection{Local partition function}
Associated to $U\subset \R^\d$ we also define a \textit{local partition function} respect to a height $H$ in the extended dimension. For $\zeta\ge 0$ such that $\int e^{-\zeta(x)}dx$ is convergent, $\zeta$ vanishing in the support of $\mu$,  and $u$ associated to $U$ via \eqref{defiu}, let 
\begin{equation}
\label{defK}
 \K_{\beta,H}(U, \mu,\zeta):=  N^{-N}\int_{U^N} 
\exp \(-\beta \(\F \(\XN,\mu, U \times [-H,H]\)
 + \sum_{i=1}^N \zeta(x_i)  \)\)
\, dX_N.\end{equation}
We associate a measure to this partition function by 
\begin{equation}\label{defQ}
d\Q_{\beta, H}(U,\mu,\zeta):=\frac{1}{N^N\K_{\beta, H}(U,\mu,\zeta)}\exp\(-\beta \left( \F\(\XN,\mu, U \times [-H,H]\)
 + \sum_{i=1}^N \zeta(x_i)  \)\) dX_N,
\end{equation}
and we also introduce external partition functions
\begin{equation}
\label{defKext}
 \K^{\mathrm{ext}}_{\beta,H}(U^c, \mu,\zeta):=  (N')^{-N'}\int_{(U^c)^{N'}} 
\exp\left(- \beta \( \F\(X_{N'},\mu, (U \times [-H,H])^c\)+ \sum_{i=1}^{N'} \zeta(x_i)  \)\right)
\, dX_{N'},\end{equation}
where $N'=\mu(U^c)$, assumed to be integer.


%
%
%
%
%
%
%
%
Coupled with \eqref{locali2}, we have the following superadditivity of partition functions.
\begin{lem}\label{lem: supadd part}
Let $U$ be as above, and suppose $U$ is partitioned into $p$ disjoint sets $Q_i$, $1 \leq i \leq p$ with $\mu(Q_i)=N_i \in \Z$. Let $H, h_1, \dots, h_p \in (0,+\infty]$ and suppose $h_i \leq H$ for all $i$. Then, 
\begin{equation}\label{eq: supadd part}
\K_{\beta,H}(U, \mu, \zeta) \geq \frac{N!N^{-N}}{N_1!\cdots N_p! N_1^{-N_1}\cdots N_p^{-N_p}}\prod_{i=1}^p \K_{\beta,h_i}(Q_i, \mu, \zeta).
\end{equation}
We also have, if $\mu(\R^\d)=N$ and $\mu(U)=\mn$,
\begin{equation}\label{eq: supadd partext}
\K_{\beta,H}(\R^\d, \mu, \zeta) \geq \frac{N!N^{-N}}{\mn!\mn^{-\mn} (N-\mn)! (N-\mn)^{N-\mn}}\K_{\beta,H}(U, \mu, \zeta)\K_{\beta,H}^{\mathrm{ext}}(U^c, \mu, \zeta).
\end{equation}
\end{lem}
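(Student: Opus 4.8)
\textbf{Proof plan for Lemma \ref{lem: supadd part}.}

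The plan is to deduce both inequalities from the energy superadditivity \eqref{locali2} together with a combinatorial bookkeeping of the integration domains. First I would handle \eqref{eq: supadd part}. Start from the definition \eqref{defK} of $\K_{\beta,H}(U,\mu,\zeta)$ as an integral over $U^N$. Restrict the domain of integration to the subset of configurations $\XN \in U^N$ in which exactly $N_i$ of the points lie in $Q_i$ for each $i$; this only decreases the integral since the integrand is positive. The number of ways to assign the $N$ labelled points to the blocks with the prescribed cardinalities is the multinomial coefficient $\frac{N!}{N_1!\cdots N_p!}$, and on each such piece the integral is the same by symmetry, equal to the integral over $\prod_i Q_i^{N_i}$. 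On that product domain, apply \eqref{locali2} iteratively to the partition of $U\times[-H,H]$ into the slabs $Q_i\times[-h_i,h_i]$ and the remaining region: since $h_i\le H$ these slabs are disjoint and contained in $U\times[-H,H]$, so
\[
\F(\XN,\mu,U\times[-H,H]) \le \sum_{i=1}^p \F(\XN\cap Q_i,\mu,Q_i\times[-h_i,h_i]),
\]
using that $\F(\cdot,\mu,\Lambda')\ge 0$-type monotonicity is not needed here, only \eqref{locali2} applied repeatedly (the leftover region $U\times[-H,H]\setminus \bigcup_i Q_i\times[-h_i,h_i]$ contributes a term that we drop because $\F$ on that region is bounded below — actually one must be slightly careful and instead simply iterate \eqref{locali2} keeping all pieces, then drop the nonnegative-after-shift leftover, or note $\F^{\tilde\Omega}$ superadditivity \eqref{superadd}; I would use whichever of \eqref{locali2}/\eqref{superadd} makes the leftover term drop cleanly). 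Also $\sum_i\zeta(x_i)$ splits exactly as $\sum_i\sum_{x_j\in Q_i}\zeta(x_j)$. Exponentiating, the integrand factorizes, and the integral over $\prod_i Q_i^{N_i}$ becomes $\prod_i \big(N_i^{N_i}\K_{\beta,h_i}(Q_i,\mu,\zeta)\big)$ after inserting the normalization $N_i^{-N_i}$ from \eqref{defK}. Collecting the global factor $N^{-N}$ and the multinomial coefficient gives exactly the claimed constant $\frac{N!N^{-N}}{N_1!\cdots N_p!\,N_1^{-N_1}\cdots N_p^{-N_p}}$.

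For \eqref{eq: supadd partext} the argument is the same with two blocks $U$ and $U^c$ and $p=2$, except that the second block is unbounded and its energy is the \emph{exterior} Neumann energy $\F(\cdot,\mu,(U\times[-H,H])^c)$ from \eqref{defKext}; the splitting inequality \eqref{locali2} still applies with $\Lambda_2=\R^{\d+1}$, $\Lambda_1=U\times[-H,H]$, $\Lambda_2\setminus\Lambda_1=(U\times[-H,H])^c$, giving
\[
\F(\XN,\mu)\le \F(\XN\cap U,\mu,U\times[-H,H])+\F(\XN\cap U^c,\mu,(U\times[-H,H])^c).
\]
Restricting the domain of the integral defining $\K_{\beta,H}(\R^\d,\mu,\zeta)$ to configurations with $\mn$ points in $U$ and $N-\mn$ in $U^c$ produces the binomial coefficient $\binom{N}{\mn}$, the integrand factorizes after using the energy splitting and the additivity of $\sum\zeta(x_i)$, and matching the normalization factors $N^{-N}$, $\mn^{-\mn}$, $(N-\mn)^{-(N-\mn)}$ from the three partition-function definitions yields the stated constant. (Note $\F(X_N,\mu,\R^{\d+1})=\F(X_N,\mu)$ by the remark after \eqref{minneum}, so the left-hand side of the displayed inequality is indeed the energy appearing in $\K_{\beta,H}(\R^\d,\mu,\zeta)$.)

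The main obstacle — really the only point requiring care — is the bookkeeping of the leftover region in the energy splitting: applying \eqref{locali2} to chop $U\times[-H,H]$ into the $p$ slabs plus a remainder produces an extra term $\F(\XN,\mu,(U\times[-H,H])\setminus\bigcup_i Q_i\times[-h_i,h_i])$ which is not obviously $\le 0$. I would dispose of it either by invoking the superadditivity \eqref{superadd} of the $\F^{\tilde\Omega}$ version (which splits exactly, with the complement term bounded below but with the right sign after one reorganizes), or more simply by choosing to prove the inequality first in the case $h_i=H$ for all $i$ where the slabs exactly tile $U\times[-H,H]$ so there is no leftover, and then observing that $h_i\le H$ only makes $\K_{\beta,h_i}$ larger or the energy estimate go the right way via \eqref{locali2} applied once more to $Q_i\times[-H,H]$ vs $Q_i\times[-h_i,h_i]$. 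A secondary bit of care is checking the various $\rrh_i$ (resp. the local minimal distances) are consistent under restriction to a subdomain — but this is exactly the design feature of \eqref{defrrc} noted in the text, namely that shrinking radii near $\partial U$ keeps all balls inside the relevant block, so the Neumann energies on the blocks genuinely depend only on the points in that block.
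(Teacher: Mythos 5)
Your plan is correct and matches the paper's argument (which itself defers to \cite[Lemma 7.3]{S24} and adds exactly the height-reduction observation you identify: extend the $h_i$-Neumann field by zero to height $H$ and invoke Lemma \ref{projlem}, equivalently apply \eqref{locali2} with the zero-charge slab $Q_i\times([-H,H]\setminus[-h_i,h_i])$, whose Neumann energy vanishes identically). One small sign slip worth fixing: shrinking $h_i$ \emph{increases} $\F(\cdot,\mu,Q_i\times[-h_i,h_i])$, hence \emph{decreases} $\K_{\beta,h_i}(Q_i,\mu,\zeta)$ — not ``makes it larger'' — which is what makes the right-hand side of \eqref{eq: supadd part} smaller and the inequality easier; likewise the leftover term is not merely ``bounded below'' but exactly zero, since that region misses $\{y=0\}$ so the Neumann potential is constant.
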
 
\begin{proof}
The argument is exactly as in \cite[Lemma 7.3]{S24}, with the added observation in \eqref{eq: supadd part} that we can decrease $H$ by appending a zero electric field for $|y|>h_i$ in the extended dimension to the vector field defining $\F(\XN\vert_{Q_i}, \mu, Q_i \times [-h_i,h_i])$ and applying Lemma \ref{projlem} above.
\end{proof}

\subsection{Preliminary free energy controls}
To obtain  free energy controls, we use, as in \cite{S24},  the following rewriting.
 \begin{lem}
   Let $U$ be an open subset of $\R^\d$  with bounded and piecewise $C^1$ boundary  and $\mu$ a bounded  nonnegative density such that $\mu(U)=N$ is an integer. Let $h\in (0,+\infty]$, and 
   let $G_U$ solve 
   \begin{equation}\label{NGf}\left\{\begin{array}{ll}
   -\div_x (\yg \nab G_U (x,x_0))=\cds\(\delta_{x_0} (x) - \frac{1}{\mu(U)} \mu (x)\drd\) & \text{in} \ U\times [-h,h] \\
   \frac{\partial G_U}{\partial n}= 0  & \text{on} \ \partial \(U\times[-h,h]\)\\ [1mm]
   \nab G_U \to 0 & \text{at} \ \infty.\end{array}\right.\ee  
   Let
   $$H_U(x,x')= G_U(x,x')-\g(x-x'),$$ where $\g$ is naturally extended to $\R^{\d+1}$.
   Then for any configuration  $X_{N}$ of points in $U$, we have 
   \begin{multline}\label{737} \F(X_{ N},\mu, U\times[-H,H]) = \hal \iint_{\R^\d\backslash \triangle } \g(x-y) d\( \sum_{i=1}^{N} \delta_{x_i} - \mu\indic_U\)(x) d\( \sum_{i=1}^{ N} \delta_{x_i} - \mu\indic_U\)(y) 
   \\
   + \hal \iint_{U\times U} H_U(x,y) d\( \sum_{i=1}^{ N} \delta_{x_i} - \mu\)(x) d\( \sum_{i=1}^{ N} \delta_{x_i} - \mu\)(y) .\end{multline}
   \end{lem}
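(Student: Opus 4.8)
The statement is an identity, so the plan is a direct computation in two steps: first decompose the Neumann energy $\F(X_N,\mu,U\times[-H,H])$ using its definition \eqref{minneum} in terms of the truncated potential $u_{\rrh}$, then re-express $u$ itself via the Green's function $G_U$ of the Neumann problem \eqref{NGf}. The starting observation is that if $u$ solves \eqref{defiu} with source $\cds(\sum_i\delta_{x_i}-\mu\delta_{\R^\d})$ in $U\times[-H,H]$ with Neumann boundary conditions, then since $\mu(U)=N$ the source has zero (weighted) mass, so $u$ can be written as a superposition of the Green's kernels: $u(x)=\sum_{i=1}^N G_U(x,x_i)-\int_U G_U(x,y)\,d\mu(y)$, i.e.\ $u=\int_{U} G_U(\cdot,y)\,d(\sum_i\delta_{x_i}-\mu)(y)$. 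Writing $G_U=\g+H_U$ then splits $u$ into the "free" Riesz potential $h_N=\g*(\sum_i\delta_{x_i}-\mu\indic_U)$ plus the smooth correction term $\int_U H_U(\cdot,y)\,d(\sum_i\delta_{x_i}-\mu)(y)$.

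Second, I would plug this decomposition into the energy. The cleanest route is to avoid manipulating $\int\yg|\nabla u_{\rrh}|^2$ directly and instead invoke the already-established electric formulation: by the analogue of Lemma \ref{riesz electric} adapted to the bounded Neumann setting (or directly the computation in \cite[Chap.~7]{S24} that underlies \eqref{minneum}), the Neumann energy equals $\hal\iint_{\triangle^c} G_U(x,y)\,d(\sum_i\delta_{x_i}-\mu)(x)\,d(\sum_i\delta_{x_i}-\mu)(y)$, where the truncation terms $-\cds\sum\g(\rrh_i)$ and $-\sum_i\int_U\f_{\rrh_i}(x-x_i)d\mu$ in \eqref{minneum} are exactly what is needed to renormalize the diagonal singularity of the $\g$-part of $G_U$ (the $H_U$-part is smooth on the diagonal, so no renormalization is needed there). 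Substituting $G_U=\g+H_U$ and using bilinearity of the double integral then separates the right-hand side of \eqref{737} into the renormalized $\g$-double-integral (which is precisely $\F(X_N,\mu)$ restricted to $U$, matching the first line of \eqref{737}) and the $H_U$-double-integral over $U\times U$ (the second line). One must be slightly careful that in the $H_U$ term the integral is genuinely over $U\times U$ with the full measure $\sum_i\delta_{x_i}-\mu$ and no diagonal removal, which is legitimate since $H_U$ is continuous up to the diagonal.

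The main technical point — and the step I'd expect to need the most care — is justifying the passage from the energy $\int_{U\times[-H,H]}\yg|\nabla u_{\rrh}|^2$ to the bilinear form $\iint G_U\,d(\cdots)d(\cdots)$ with the correct truncation bookkeeping: one integrates by parts using the PDE \eqref{defiu} and the Neumann condition $\partial u/\partial n=0$ on $\partial(U\times[-H,H])$, which kills the boundary term; the decay at infinity handles the unbounded case $H=\infty$. Keeping track of which smeared charges $\delta_{x_i}^{(\rrh_i)}$ appear, and verifying that the cross-terms between the smearing and $H_U$ vanish in the limit (because $H_U$ is regular and the smeared minus Dirac charge is mean-zero on a small sphere), is the part where \cite[Prop.~1.6]{PS17} / \cite[Lemma 4.10, Chap.~7]{S24} are invoked essentially verbatim. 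I would therefore structure the proof as: (i) express $u$ via $G_U$; (ii) cite the Neumann electric-energy identity to get $\F(X_N,\mu,U\times[-H,H])=\hal\iint_{\triangle^c}G_U\,d(\cdots)^{\otimes2}+(\text{diagonal renormalization consistent with }\f_{\rrh_i})$; (iii) expand $G_U=\g+H_U$ and regroup, noting $\hal\iint_{\triangle^c}H_U\,d(\cdots)^{\otimes2}=\hal\iint_{U\times U}H_U\,d(\cdots)^{\otimes2}$ by continuity of $H_U$. This yields \eqref{737}.
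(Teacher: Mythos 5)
Your high-level route (express $u$ via $G_U$, integrate by parts, split $G_U=\g+H_U$) is the one the paper is gesturing at, but the intermediate claim in step (ii) is incorrect and the sentence you use to patch it does not actually fix the problem. The identity $\F(X_N,\mu,U\times[-H,H]) = \hal\iint_{\triangle^c}G_U\,d\big(\sum_i\delta_{x_i}-\mu\big)^{\otimes 2}$ is \emph{false}: expanding $G_U=\g+H_U$ in it yields $\hal\iint_{\triangle^c}\g + \hal\iint_{\triangle^c}H_U$, whereas the lemma's conclusion \eqref{737} has $\hal\iint_{U\times U}H_U$. These differ by $\hal\sum_i H_U(x_i,x_i)$, and that difference is not zero — the measure $\nu^{\otimes 2}$ with $\nu=\sum_i\delta_{x_i}-\mu$ carries atoms $\delta_{(x_i,x_i)}$ on the diagonal, so removing $\triangle$ genuinely changes the value even though $H_U$ is continuous there. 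Your remark ``which is legitimate since $H_U$ is continuous up to the diagonal'' is therefore a non-sequitur: continuity of the integrand is irrelevant when the \emph{measure itself} charges the set being removed. Those diagonal terms $\hal\sum_i H_U(x_i,x_i)$ are the Robin-constant-type self-energy corrections associated with confinement to $U$, and they must appear.

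To get the bookkeeping right one should carry out the integration by parts directly rather than invoke a putative ``$\iint G_U$ identity.'' The Neumann boundary condition gives $\frac{1}{2\cds}\int_{U\times[-H,H]}\yg|\nabla u_{\rrh}|^2 = \frac12\int u\,d\nu_\eta + \frac12\sum_j\int_U\f_{\rrh_j}\,d\mu$, where $\nu_\eta=\sum_i\delta_{x_i}^{(\rrh_i)}-\mu\drd$. Writing $u = h_N + w + C$ with $h_N=\g*\nu$ and $w(x)=\int_U H_U(x,z)\,d\nu(z)$, the $h_N$-piece of $\int u\,d\nu_\eta$ reproduces $\hal\iint_{\triangle^c}\g\,d\nu^{\otimes 2}$ after the $\sum\g(\rrh_i)$ and $\sum\int\f_{\rrh_i}d\mu$ subtractions in the definition of $\F$ cancel the truncation corrections coming from Newton's theorem. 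For the $w$-piece, the correct reason one may replace $d\nu_\eta$ by $d\nu$ is \emph{not} that ``the smeared minus Dirac charge is mean-zero'' (that alone only gives an $O(\rrh_i^2)$ error, not an exact identity) but that $w$ is exactly $\yg$-harmonic: $-\div_x(\yg\nabla_x H_U(\cdot,z))=-\tfrac{\cds}{N}\mu\drd$ independently of $z$, so integrating in $z$ against $\nu$ gives zero source because $\nu(\R^\d)=N-\mu(U)=0$ — this is precisely where the hypothesis $\mu(U)=N$ enters. The mean-value property then yields $\int w\,d\delta_{x_i}^{(\rrh_i)}=w(x_i)$ exactly, which is what keeps the diagonal terms $H_U(x_i,x_i)$ and produces the $\iint_{U\times U}$ (not $\iint_{\triangle^c}$) in the second line of \eqref{737}.
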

   The proof is analogous to \cite[Lemma 7.9]{S24} except working in extended space with natural extension of $\g$ and $G_U$.

We can now obtain the a priori bound on the Neumann free energies.
\begin{prop}[Neumann free energy bound]\label{pro718}
 Let $U$ be an open subset of $\R^\d$  with bounded and piecewise $C^1$ boundary  and $\mu$ a bounded  nonnegative density in $U$ such that $\mu(U)=N$ is an integer. Let $H \in (0,+\infty]$.
Under the assumption \eqref{asstail}, we have
\be 
\label{bornesfiU} \log  \K_{\beta,H}(U,\mu,\zeta) +\Ent(\mu) \ge - C(1+ \beta)  N\ee
where $C>0$ depends only on $\d, \s, \|\mu\|_{L^\infty}$ and the constants in the assumptions.

Under the assumption that $U$ is bounded, we  have
 \be \label{bornesfiU2} \log  \K_{\beta,H}(U,\mu,0)  \le C \beta N+ N \log \frac{|U|}{N}\ee
\end{prop}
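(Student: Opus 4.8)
\textbf{Proof proposal for Proposition \ref{pro718}.}

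The plan is to establish the two bounds \eqref{bornesfiU} and \eqref{bornesfiU2} separately, using the rewriting \eqref{737} of the Neumann energy in terms of the explicit kernel $\g$ plus the correction kernel $H_U$, which is regular away from the boundary. For the lower bound \eqref{bornesfiU}, the strategy is the standard Gibbs variational (or Jensen) argument: restrict the integral defining $\K_{\beta,H}(U,\mu,\zeta)$ to a cleverly chosen reference configuration and apply Jensen's inequality with the reference measure $\frac{1}{N}\mu(x_1)\cdots\frac1N\mu(x_N)\,dX_N$ (recall the $N^{-N}$ normalization in \eqref{defK}). This gives
\[
\log \K_{\beta,H}(U,\mu,0) \ge -\beta\, \Esp_{\otimes \frac\mu N}\big[\F(\XN,\mu, U\times[-H,H])\big] - \beta\, \Esp_{\otimes\frac\mu N}\Big[\sum_i \zeta(x_i)\Big],
\]
and since $\zeta$ vanishes on $\supp\mu$ the second term is zero; the entropy term $\Ent(\mu)$ arises from the change of reference measure from $N^{-N}\,dX_N$ to $\bigotimes \frac\mu N$. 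The main work is then to show $\Esp_{\otimes \frac\mu N}[\F(\XN,\mu,U\times[-H,H])]\le C(1+1/\beta) N$ — more precisely that it is $\le CN$. Here I would use \eqref{737}: under the i.i.d.\ $\frac\mu N$ law, expanding $\F$ as in \eqref{737}, the diagonal self-interaction terms (which are infinite) must be handled by the truncation built into the definition \eqref{minneum}, i.e.\ we actually expand $\tilde \F$ with radii $\rrh_i$; after removing the diagonal, the expectation of the double sum over $i\ne j$ contributes $\frac{N-1}{N}\iint \g\,d\mu d\mu$-type terms which are $O(N)$ by \eqref{electric assumption}/\eqref{asstail}, the expectation of $\sum_i \g(\rrh_i)$ and the smearing terms are controlled by the typical interparticle distance being $O(1)$ at the blown-up scale (so $\g(\rrh_i)=O(1)$ in expectation, using a standard computation of $\Esp[\g(\min_j|x_i-x_j|)]$ against a bounded density), and the $H_U$ double integral is $O(N)$ since $H_U$ is a bounded kernel on $U\times U$ (this is where boundedness of $\partial U$ and $\|\mu\|_{L^\infty}$ enter, and where \eqref{asstail} is needed when $U$ is unbounded and $\s\le 0$ to control the far-field contribution to $H_U$ and to the double integral). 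Assembling these gives \eqref{bornesfiU}. This mirrors \cite[Proposition 7.18]{S24}; the only genuinely new point is tracking the dependence on the constants in assumption \eqref{asstail}.

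For the upper bound \eqref{bornesfiU2}, assuming now $U$ bounded (so we may drop the condition at infinity and take $\zeta=0$), the plan is to use the lower bound on the energy. We have from \eqref{comparaisontoutesp} that $\F(\XN,\mu,U\times[-H,H])\ge \F(\XN,\mu\indic_{U\times[-H,H]})$, and from \eqref{energylbbu} (or directly \eqref{lblocale}) that this is $\ge -CN$ with $C$ depending only on $\d,\s,\|\mu\|_{L^\infty}$. Therefore
\[
\K_{\beta,H}(U,\mu,0) = N^{-N}\int_{U^N} e^{-\beta \F(\XN,\mu,U\times[-H,H])}\,dX_N \le N^{-N} e^{C\beta N} |U|^N,
\]
and taking logarithms gives $\log \K_{\beta,H}(U,\mu,0)\le C\beta N + N\log\frac{|U|}{N}$, which is exactly \eqref{bornesfiU2}. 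This direction is essentially immediate once the lower bound \eqref{lblocale} on the localized Neumann energy is in hand, which it is from the projection lemma argument already recorded.

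The main obstacle is the energy upper bound in expectation needed for \eqref{bornesfiU}, specifically controlling the contribution of the truncation/self-energy terms $\sum_i \g(\rrh_i)$ under the i.i.d.\ law: one needs that, for a bounded density $\mu$ on $U$ with total mass $N$ (the blown-up equilibrium measure, of density bounded above and — on $\check U$ — below), the expected nearest-neighbour distance $\Esp[\min_{j\ne i}|x_i-x_j|]$ is bounded below by a constant, equivalently $\Esp[\g(\min_{j\ne i}|x_i-x_j|)] = O(1)$. This is a routine but slightly delicate computation (integrating the tail probability $\P(\min_j |x_i-x_j|\le r)\le \binom{N-1}{1}\|\mu\|_{L^\infty} \omega_\d r^\d$ against $|\g'(r)|$, using $\s<\d$ for integrability near $0$ and the uniform upper bound on $\mu$) and is exactly the place where the hypothesis $\s<\d$ and the boundedness of $\mu$ are essential. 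The far-field estimate for the $H_U$-kernel in the unbounded case, requiring \eqref{asstail}, is the secondary technical point. Everything else follows the Coulomb template of \cite[Chap.~7]{S24} and \cite[Section 2]{P24}.
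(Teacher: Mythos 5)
Your upper bound \eqref{bornesfiU2} matches the paper exactly: apply \eqref{lblocale} (itself a consequence of \eqref{comparaisontoutesp} and \eqref{energylbbu}) to bound the integrand, then integrate over $U^N$ against the normalization $N^{-N}$. No concerns there.

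Your lower bound, however, rests on a misreading of \eqref{737} that introduces a phantom obstacle. The rewriting \eqref{737} is exact and already integrates over $\triangle^c$: the diagonal is \emph{removed by the formula itself}, not by any $\rrh$-truncation, so the quantity on the right-hand side of \eqref{737} contains no infinite self-interaction terms, no $\sum_i\g(\rrh_i)$ contribution, and no smearing correction. When you Jensen against the reference $N^{-N}\mu^{\otimes N}$ and insert \eqref{737}, the $H_U$ double integral cancels in expectation and the $\g$ part contributes $+\frac{\beta}{2N}\iint \g\,d\mu\,d\mu$; together with $\int\zeta\,d\mu$ (which vanishes on $\supp\mu$) and $-\int\mu\log\mu$ you arrive directly at the lower bound when $\s>0$. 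The ``delicate'' nearest-neighbour computation you flag as the main difficulty---$\Esp[\g(\min_{j\ne i}|x_i-x_j|)]=O(1)$---is simply not needed; if you try to run the argument through \eqref{minneum} instead (which is where the $\rrh$-truncation lives), the term $\int|y|^\gamma|\nabla u_{\rrh}|^2$ appears with the \emph{wrong sign} for an upper bound on $\Esp[\F]$, so that route does not close either. You should use \eqref{737} as stated and observe that the diagonal-removed form is what makes the $\mu^{\otimes N}$ reference and Jensen combine so cleanly.

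A second, smaller point: the role of \eqref{asstail} is not to control a far-field contribution to $H_U$. It enters only in the case $\s\le 0$, where $\g$ changes sign and $\iint \g\,d\mu\,d\mu$ can be negative of order $N^2$; one then splits $U$ into cells of a scale $R$ to be optimized and uses \eqref{asstail} to control the tail cells, as in \cite[Propositions 5.14, 7.10]{S24}. Your proposal collapses this into a vague ``far-field estimate'' without indicating the cell decomposition, which is the actual content of that step.
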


\begin{proof}Let us start with the lower bound.
The difference with the proof of 
\cite[Prop 7.10]{S24} is that the reference measure in the definition of $\K_{\beta,H}$ is $e^{-\zeta(x)}dx$ and not $\mu$.
By definition and using Jensen's inequality, we may write 
\begin{multline}\log \K_{\beta,H}(U,\mu,\zeta) = \log  \(N^{-N}\int_{U^N} 
\exp\left(- \beta \F(X_N,\mu,U) + \sum_{i=1}^N \zeta(x_i) -\sum_{i=1}^N \log \mu(x_i)  \right)
\, d\mu^{\otimes N} (\XN)\)\\
\ge  \frac{1}{N^N} \int_{U^N}\( -\beta \F(X_N,\mu,U) + \sum_{i=1}^N \zeta(x_i) -\sum_{i=1}^N \log \mu(x_i)  \) d\mu^{\otimes N}(\XN)\end{multline}
Inserting the rewriting \eqref{737}, expanding all the sums and observing that the terms involving $H_U$ cancel after integration against $d\mu^{\otimes N}$, we are led to 
$$\log \K_{\beta,H}(U,\mu,\zeta)\ge \int_U \zeta d\mu -\int_U \mu \log \mu+ \frac{\beta}{2N}\iint \g(x,y) d\mu(x)d\mu(y).$$ 
If $\s>0$ then $\g \ge 0$ and we deduce the lower bound. If $\s\le 0$, we argue by splitting the region into cells of size $R$ and optimizing over $R$. It is an adaptation of the proof of Propositions~7.10 and 5.14 in \cite{S24}, left to the reader, and using \eqref{asstail}.

For  the upper bound, it suffices to insert the  lower bound \eqref{lblocale} for $\F$  into \eqref{defK}.
\end{proof}
 
\subsection{Riesz Screening and Superadditive Approximation}
The main technical tool that we use in the proof of Theorem \ref{Local Law} is a screening procedure for Riesz gases, adapted from the procedure for Coulomb gases described in \cite[Section 7.2]{S24}. This kind of procedure is based on ideas from \cite{ACO09} and \cite{SS12}, and adapted to  Riesz gases in \cite{PS17}, then also used extensively in \cite{LS15}. The difficulty in the adaptation is in having to work in the extended space.  The version below is optimized from \cite{PS17} as was done for the Coulomb gas in \cite{AS21}, introducing a new approach in dealing with the extended dimension for the so-called ``outer screening''  (which was up to \cite{AS21} called inner screening).

A detailed description of the ideas and motivations involved in the screening can be found in \cite[Section 7.2]{S24}, some of which we summarize here. Consider a hyperrectangle $\Omega$ in which the background measure $\mu$ is bounded from below, the localization superadditivity as in \eqref{superadd}  yields
\begin{equation}\label{superadd2}
 \F(\XN,\mu, \R^\d) \geq \F^{\Omega\times [-h,h]} (\XN,\mu,\R^\d)+ \F^{(\Omega \times [-h,h])^c}(\XN,\mu, \R^\d).
\end{equation}
A matching upper bound is of course not true, but if the energy on $\Omega$ is (reasonably) well controlled we can screen the configuration $\XN$, which means modify it near $\partial \Omega$ and produce new configurations $Y_{\mn}$ and $Y_{N-\mn}$ with corresponding electric fields that have a zero Neumann  boundary condition and energy smaller than  the original ones, up to small errors, i.e.
$$ \F\(Y_{\mn}, \tilde{\mu}, \Omega \times [-h,h]\)\le  \F^{\Omega\times [-h,h]} (\XN,\mu,\R^\d)+ \text{ screening errors}
$$
and 
$$ \F\(Y_{N-\mn}, \tilde{\mu}, (\Omega \times [-h,h])^c \)\le   \F^{(\Omega \times [-h,h])^c}(\XN,\mu, \R^\d)+ \text{ screening errors}
.$$ Here the configurations $Y_{\mn}$ and $Y_{N-\mn}$ coincide with $X_N$ except in a boundary layer near $\partial \Omega$, the same for $\tilde \mu$ with $\mu$.
By subadditivity of the Neumann energy \eqref{locali2}, gluing together $ Y_{\mn}$ and $Y_{N-\mn}$ into a new configuration $Y_N$ on $\R^\d$, 
we then have 
\begin{align*}
 \F(Y_N,\mu,\R^\d)& \le  \F\(Y_{\mn}, \tilde{\mu}, \Omega \times [-h,h]\)+\F\(Y_{N-\mn}, \tilde{\mu}, (\Omega \times [-h,h])^c \)\\
 &\le
 \F^{(\Omega \times [-h,h])^c}(\XN,\mu, \R^\d)
+ \F^{\Omega\times [-h,h]} (\XN,\mu,\R^\d)+ \text{ screening errors}.
 \end{align*}
The configuration  $Y_N$ being considered as an approximation of $\XN$, this constitutes a sort of converse (with error) to the superadditivity of \eqref{superadd2}, providing an estimate of additivity error. 
 We will be able to do this for most point configurations and corresponding electric fields, this will allow the above relations to be integrated over configurations and turned  into a free energy almost additivity result. The crucial point here is that the screening errors hence addivity errors are negligible with respect to the volume of the box $\Omega$, even when $\Omega$ is small, which is what will allow to obtain the local laws down to the microscale of Theorem~\ref{Local Law}. An important fact is that screening can only be performed in regions where the density $\mu$ is bounded below by some constant $m>0$.
\subsubsection{Riesz screenability}
Before delving into a more detailed description of the procedure, we first give a definition of screenable electric fields.
We let $\mathcal Q_R$ be the set of closed hyperrectangles of the form $Q_R\times [-h,h]$ in $\R^{\d+1}$ with the  sidelengths of $Q_R$  in $[\hal R, 2R]$ and which are such that $\int_{Q_R}\mu$ is an integer.

We will screen fairly generic electric fields satisfying relations of the form \eqref{inner screening w}, respecting Neumann boundary data constraints if there are any, and for such vector fields we define their truncations as in \eqref{truncated potential}, with the truncation radii $\rrh$, 
which we notice coincides with $\rrt$ for points at distance larger than 1 from all considered boundaries (hence the truncated fields coincide as well once at distance $\ge 1$ from the boundary).
The main difference with the Coulomb case is in the need to control the energy on horizontal slices parallel to $\R^\d$ in the extended space.
In what follows $m>0$ is a positive constant: screening can only be performed in regions where  the density $\mu$ is bounded away from $0$.

\begin{defi}\label{def: Riesz screenability} Let $\mu$ be a nonnegative bounded density.
Assume $\Lambda$ is either $\R^{\d+1}$ or a  the cartesian product of a  hyperrectangle with an interval $[-H,H]$,
  or the complement of such a set. 
Let  $h \le R/2$ and let
$\tilde\Omega= (Q_{R}\times [-h,h])\cap \Lambda $ (inner case), resp. $\tilde\Omega= \Lambda\backslash (Q_R\times [-h,h])$ (outer case) where $Q_R$ is a hyperrectangle of sidelengths in $[R,2R]$ with sides parallel to those of $\Lambda$, and such that 
 $\mu(\tilde\Omega)=\mn$, an integer.
Let $\ell$ and $\tilde \ell$ be such that  $R\ge \tilde \l \ge \l \ge C$, where $C$ is some constant dependent on $m$. In the inner case, assume that $\mu \ge m>0$  in $((Q_R\backslash Q_{R-2\tilde \ell})\times \{0\})\cap\Lambda$ and in the  outer case,   assume that that 
 $\mu\ge m>0$ in $ ((Q_{R+2\tilde \ell} \backslash Q_R)\times \{0\})\cap \Lambda$.
In the outer case, also assume  that 
 the faces of  $\partial  Q_{R}   $  are at distance $\ge 2 \tilde \ell$ from their respective  parallel faces of $\partial \Lambda$.
 
 In the inner screening, let $X_n$ be a configuration of points in $\tilde \Omega$ and let $w$ solve 
\begin{equation}\label{inner screening w}
\begin{cases}
-\div(|y|^\gamma \nabla w) =\cds \left(\sum_{i=1}^n \delta_{x_i}-\mu \delta_{\R^d}\right) \quad& \text{in } \tilde\Omega\\
\frac{\partial w}{\partial n} =0 & \text{on} \ \partial \Lambda \cap \tilde\Omega.\end{cases}
\end{equation}
In the outer screening, let $X_n$ be a configuration of points in $\tilde\Omega$ and let $w$ solve 
\begin{equation}\label{outer screening w}
\begin{cases}
-\div(|y|^\gamma \nabla w)=\cds \left(\sum_{i=1}^n \delta_{x_i}-\mu \delta_{\R^\d}\right) \quad &\text{in }\tilde \Omega\\
\frac{\partial w}{\partial n} =0 & \text{on} \ \partial \Lambda \cap \tilde\Omega.\end{cases}
\end{equation}

In the inner screening, denote
\begin{align}
&S(X_n,w,h)=\int_{ ((Q_{R-\tilde{\ell}}\setminus Q_{R-2\tilde{\ell}})\times [-h,h])\cap \Lambda}|y|^\gamma|\nabla w_{\rrh}|^2  \\
& S'(X_n,w,h)= \sup_x \int_{(((Q_{R-\tilde{\ell}}\setminus Q_{R-2\tilde{\ell}})\cap \square_\ell(x))\times [-h,h] )\cap \Lambda}|y|^\gamma|\nabla w_{\rrh}|^2 \\
& \label{Riesz upper energy inner screening}e(X_n,w,h)=\int_{Q_{R} \times \{-h,h\}}|y|^\gamma|\nabla w|^2.
\end{align}

In the outer screening, denote
\begin{align}
&S(X_n,w,h)=\int_{((Q_{R+2\tilde{\ell}}\setminus Q_{R+\tilde{\ell}}) \times [-h,h])\cap \Lambda}|y|^\gamma|\nabla w_{\rrh}|^2  \\
& S'(X_n,w,h)= \sup_x \int_{(((Q_{R+2\tilde{\ell}}\setminus Q_{R+\tilde{\ell}})\cap \square_\ell(x))\times [-h,h])\cap \Lambda}|y|^\gamma|\nabla w_{\rrh}|^2 \\
& \label{Riesz upper energy outer screening}e(X_n,w,h)=\int_{Q_{R}\times \{ (-(R+h),(R+h)\} 
}|y|^\gamma|\nabla w|^2.
\end{align}

We say that a configuration $X_n$ and potential $w$ are screenable at height $h$ if  
\begin{equation}\label{Riesz screenability1}
\max\left(
\frac{1}{R^{\d-2} \tilde \ell^2} h^{\gamma} e(X_n,w,h)
, \frac{h^{1+\gamma}}{\ell^{\d+1}}\frac{S(X_n,w,h)}{\tilde{\ell}}\right)\leq \mathsf{c}
\end{equation}
or 
\begin{equation}\label{Riesz screenability2}
\max\left(
\frac{1}{R^{\d-2} \tilde \ell^2} h^{\gamma} e(X_n,w,h)
, \frac{h^{1+\gamma}}{\ell^{\d+1}}S'(X_n,w,h)\right)\leq \mathsf{c}
\end{equation}
where $\mathsf{c}$ is a constant dependent only on upper and lower bounds for $\mu$ (and 
defined in  \eqref{definition of little c}).\end{defi}

With a notion of screenability in hand, we can define the minimal energy approximation that we will need for the screening statement. It is analogous to a Dirichlet energy.

 \begin{defi}[Best screenable potential and energy] \label{defibestpot}
With the same notation as above, we let 
 \begin{multline}
 \label{innernrj}\G_{a,h}^{\mathrm{inn/ext}}(X_n, \tilde \Omega) =\min  \Big\{  \frac{1}{2\cds}  \(\int_{\tilde \Omega} \yg |\nab w_{\rrh} |^2
 - \cds \sum_{i=1}^n \g(\rrh_i) \) - \sum_{i=1}^n \int_{\tilde \Omega }\f_{\rrh_i} (x-x_i) d\mu(x)\drd
  ,\\
  w \, \mathrm{inner/outer  \ screenable\ satisfying\ a\  relation\ of\  the\ form }\\
  \begin{cases}
-\div(|y|^\gamma \nabla w) =\cds \left(\sum_{i=1}^n \delta_{x_i}-\mu \delta_{\R^d}+\sum_j \delta_{x_j}^{(\eta_j)}\right) \quad& \text{in } \tilde\Omega\\
\frac{\partial w}{\partial n} =0 & \text{on} \ \partial \Lambda \cap \tilde\Omega\end{cases}\\
\mathrm{with\ } x_j\notin \tilde \Omega, \eta_j\le \frac14 \min(1,\dist(x_j, \partial \Lambda))\\
  \mathrm{and\ satisfying\ } \eqref{Riesz screenability1}\ \mathrm{or } \ \eqref{Riesz screenability2} \mathrm{ \ at \: level  \ } h \mathrm{ \ and \ } e(X_n,w, h) \leq a \Big\} ,\end{multline}
  (with the $\min$ understood as $+\infty$ if the set is empty),
 with 
  $e(X_n,w, h)$ as in \eqref{Riesz upper energy inner screening} and \eqref{Riesz upper energy outer screening}, respectively. By the direct method in the calculus of variations, one may check that the minima are achieved.
 Note that $\G$ depends on $\Lambda$ and $\mu$ but for the sake of lightness we do not retain it in the notation.
 
 We also define 
 \begin{align}\label{bestS}
&  \bar S_{a,h}(X_n)= \inf \{ S(X_n, w,h), w \ \text{achieving the min in $\G_{a,h}^{\mathrm{inn}}(X_n, \tilde \Omega) $, resp. $\G_{a,h}^{\mathrm{ext}} (X_n, \tilde \Omega) $}\} 
 \\ \label{bestS2}  & \bar S'_{a,h}(X_n)= \inf \{ S'(X_n, w,h), w \ \text{achieving the min in $\G_{a,h}^{\mathrm{inn}}(X_n, \tilde \Omega)$, resp. $\G_{a,h}^{\mathrm{ext}} (X_n, \tilde \Omega) $}\} .
 \end{align}
 
\end{defi}
When $u$ is inner/outer screenable at level $h$ with an $a$-bound on $e$, it is a competitor in the definition of $\G^{\mathrm{inn/out}}$, thus we have
\be \label{locali3}
\F^{\tilde\Omega} (\XN, \mu, \Lambda) \ge \G^{\mathrm{inn/out}}(\XN|_{\tilde \Omega},\tilde \Omega).
\ee

\subsubsection{Riesz screening}
Before we state the screening procedure formally let us give a heuristic description (see Figure \ref{fig1}). For a screenable field and configuration, we select by a mean-value argument a ``good boundary'' enclosing a set $\mathcal{O}$ (like old), in which we keep the configuration and field unchanged. We let $\New $ (like new) be the complement layer to $\Old$ and  place new points in $\New$ in a way that neutralizes the background measure, and define a new field in $\New\times [-h,h]$ with a zero Neumann boundary condition on $\partial \Old \times [-h,h]$. A novel component of the screening in the Riesz case is that we need to complete the field away from  the subspace $\R^\d$; this is done in $\Omega \times ([-R,R]\setminus [-h,h])$ by matching the current field at level $h$ and setting a Neumann zero boundary condition elsewhere.

The inner screening is analogous, except we are now working with the field in $\(\Omega \times [-R,R]\)^c$; see diagram (B) in Figure 1 below, where $E$ stands for the electric field $\nab w$.
\begin{figure}
\begin{subfigure}[t]{\textwidth}
\begin{flushleft}
\begin{tikzpicture}
\draw[->,ultra thick] (-5,0)--(5,0) node[right]{$x$};
\draw[ultra thick] (-3,-3) rectangle (3,3);
\draw[thick] (-2.5,-1.5) rectangle (2.5,1.5);

\node[align=left] at (-1,1) {\footnotesize{keep field fixed here}};
\node[align=left] at (0.25,-0.5) {\tiny{keep configuration fixed}};
\draw [->, thick] (0.3,-0.35) to  (0.3,-0.05);

\draw[dashed] (-3, -1.5) -- (-2.5,-1.5); 
\draw[dashed] (2.5, -1.5) -- (3,-1.5); 
\draw[dashed] (-3, 1.5) -- (-2.5,1.5); 
\draw[dashed] (2.5, 1.5) -- (3,1.5); 


\draw (-2.6,0) node[cross=4pt, cyan, very thick] {};
\draw (-2.75,0) node[cross=4pt, cyan,very thick] {};
\draw (-2.9,0) node[cross=4pt, cyan, very thick] {};

\draw (2.6,0) node[cross=4pt, cyan, very thick] {};
\draw (2.75,0) node[cross=4pt, cyan, very thick] {};
\draw (2.9,0) node[cross=4pt, cyan, very thick] {};


\draw[->] (3.5,1) to [out=180, in=90] (2.75,0);
\node[align=left] at (4.6,1) {\tiny{place new points}};

\draw[->] (-3.5,-1) to [out=0, in=270] (-2.75,0);
\node[align=left] at (-4.6,-1) {\tiny{place new points}};

\draw[->] (-4,-2.5) to [out=60, in =210] (-2.5,-1);
\draw[->] (-4,-2.5) to (-2.75,-1.5);
\draw[->] (-4,-2.5) to [out=0, in=240] (-2,-1.5);
\node[align=left, text width=2cm] at (-4,-3) {\tiny{match boundary conditions}};

\draw[->] (4,2.5) to [out=315, in =15] (2.5,1);
\draw[->] (4,2.5) to (2.75,1.5);
\draw[->] (4,2.5) to [out=180, in=60] (2,1.5);
\node[align=left, text width=2cm] at (4.4,3) {\tiny{match boundary conditions}};

\node[align=left, rotate=90] at (-3.25, 1.5) {$E \cdot \vec{n}=0$};
\node[align=left, rotate=270] at (3.25, -1.5) {$E \cdot \vec{n}=0$};

\node[align=left] at (0,3.25) {$E\cdot \vec{n}=0$};
\node[align=left] at (0,-3.25) {$E \cdot \vec{n}=0$};

\node[align=left] at (-4.5,3) {$E \equiv 0$};

 
\draw [decorate, decoration = {calligraphic brace,
        raise=5pt,
        amplitude=5pt}] (3,3) --  (-3,3);
        
       \node[align=left] at (0, 2.5) {$\Omega \times \{R\}$};
       
       \draw [decorate, decoration = {calligraphic brace,
        raise=5pt,
        amplitude=5pt}] (-2.5,-1.5) --  (2.5,-1.5);
        
       \node[align=left] at (0, -1) {$\mathcal{O} \times \{-h\}$};

\end{tikzpicture}
\caption{inner screening}
\end{flushleft}
\end{subfigure}
\begin{subfigure}{\textwidth}
\begin{flushright}
\begin{tikzpicture}
\draw[->,ultra thick] (-5,0)--(5,0) node[right]{$x$};
\draw[ultra thick] (-3,-3) rectangle (3,3);
\draw[thick] (-2.5,-1.5) rectangle (2.5,1.5);

\node[align=left, text width=2cm] at (-4,1) {\tiny{keep field fixed here}};
\node[align=left] at (4.25,-0.5) {\tiny{keep configuration fixed}};
\draw [->, thick] (4.3,-0.35) to  (4.3,-0.05);

\draw[dashed] (-3, -0.75) -- (-2.5,-0.75); 
\draw[dashed] (2.5, -0.75) -- (3,-0.75); 
\draw[dashed] (-3, 0.75) -- (-2.5,0.75); 
\draw[dashed] (2.5, 0.75) -- (3,0.75); 


\draw (-2.6,0) node[cross=2.5pt, green, thick] {};
\draw (-2.75,0) node[cross=2.5pt, green, thick] {};
\draw (-2.9,0) node[cross=2.5pt, green, thick] {};

\draw (2.6,0) node[cross=2.5pt, green, thick] {};
\draw (2.75,0) node[cross=2.5pt, green, thick] {};
\draw (2.9,0) node[cross=2.5pt, green, thick] {};


\draw[->] (3.5,1) to [out=180, in=90] (2.75,0);
\node[align=left] at (4.6,1) {\tiny{place new points}};

\draw[->] (-3.5,-1) to [out=0, in=270] (-2.75,0);
\node[align=left] at (-4.6,-1) {\tiny{place new points}};

\draw[->] (-2,-2) to  (-3,-2);
\draw[->] (-2,-2) to [out=180, in=270] (-2.75,-0.75);
\draw[->] (-2,-2) to (-2.5,-3);
\node[align=left, text width=2cm] at (-1,-2.5) {\tiny{match boundary conditions}};

\draw[->] (2,2) to (3,2);
\draw[->] (2,2) to [out=0, in=90] (2.75,0.75);
\draw[->] (2,2) to  (2.5,3);
\node[align=left, text width=2cm] at (1,2.5) {\tiny{match boundary conditions}};

\node[align=left, rotate=270] at (-2.25, .9) {\tiny{$E \cdot \vec{n}=0$}};
\node[align=left, rotate=90] at (2.25, -.9) {\tiny{$E \cdot \vec{n}=0$}};

\node[align=left] at (0,1.25) {\tiny{$E\cdot \vec{n}=0$}};
\node[align=left] at (0,-1.25) {\tiny{$E \cdot \vec{n}=0$}};

\node[align=left] at (0,0.5) {$E \equiv 0$};

 
\draw [decorate, decoration = {calligraphic brace,
        raise=5pt,
        amplitude=5pt}] (-3,3) --  (3,3);
        
       \node[align=left] at (0, 3.5) {$\mathcal{O}^c \times \{R+h\}$};
       
       \draw [decorate, decoration = {calligraphic brace,
        raise=3pt,
        amplitude=3pt}] (-2.5,1.5) --  (2.5,1.5);
        
       \node[align=left] at (0, 1.8) {\scriptsize{$\Omega \times \{R\}$}};

\end{tikzpicture}
\caption{outer screening}
\end{flushright}
\end{subfigure}
\caption{The screening procedure}
\label{fig1}
\end{figure}
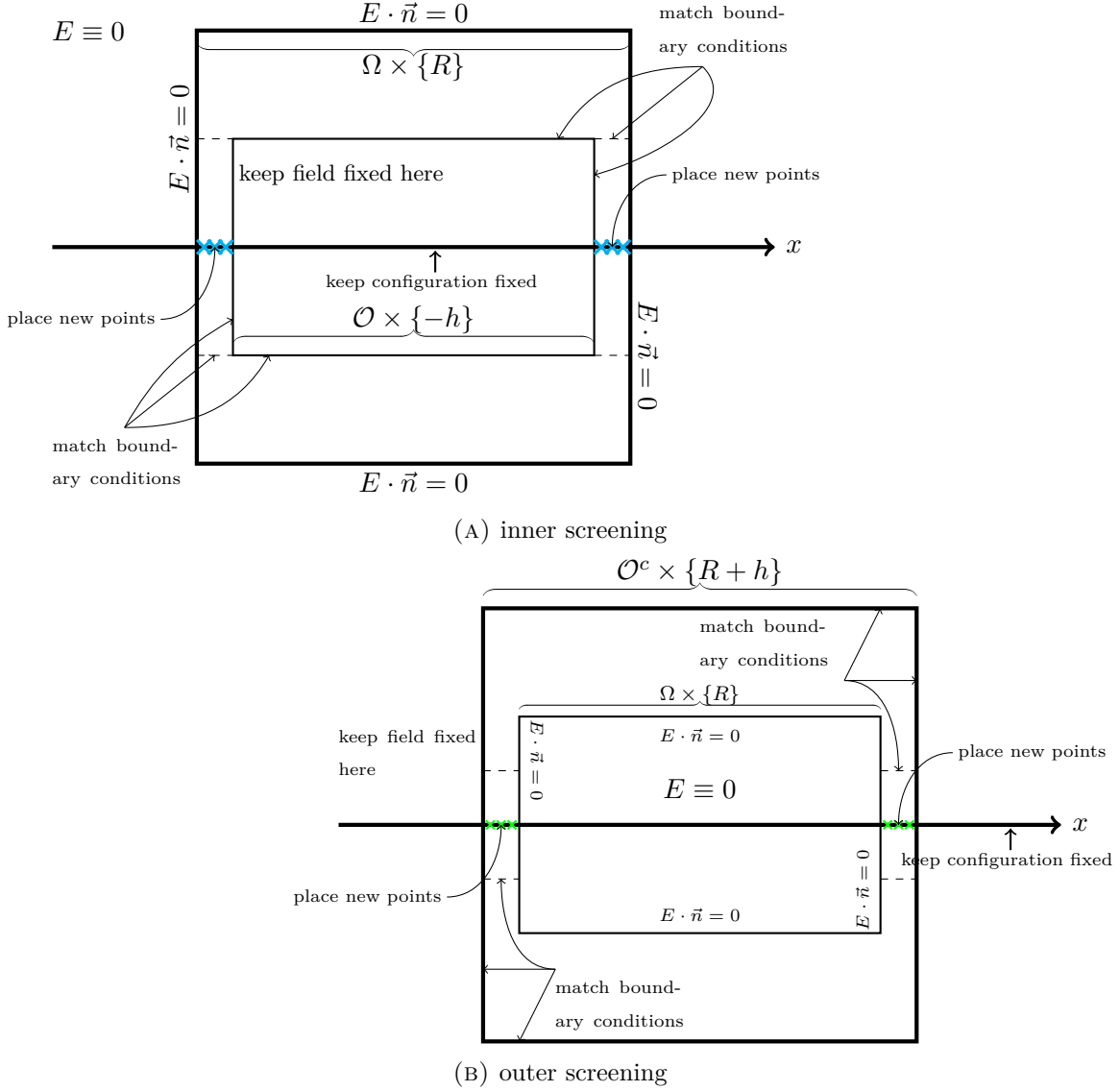

We now state the screening procedure formally.
\begin{prop}[Riesz Screening]\label{Riesz screening result}
Let us take the same assumptions as in Definition \ref{def: Riesz screenability}. Then, there exists a $C>5$ dependent only on $\d,\s, m$ and $\|\mu\|_{L^\infty}$ such that the following holds. Suppose that $X_n$ and  $w$ satisfy  \eqref{inner screening w} (respectively, \eqref{outer screening w}) and are screenable at height $h$ in the sense of Definition \ref{def: Riesz screenability}.  Then, there exists a  set $\Old\subset \R^\d$ such that $$(Q_{R-2\tilde \ell}\times \{0\})\cap \Lambda \subset \Old \times \{0\} \subset (Q_{R-\tilde \ell}\times \{0\}) \cap \Lambda \ \text{ (inner screening)}, $$ resp.  $$ (Q_{R+2\tilde \ell}^c\times \{0\}) \cap \Lambda \subset \Old\times \{0\} \subset (Q_{R+\tilde \ell}^c \times \{0\})\cap \Lambda \ \text{ (outer screening)},$$ a subset $I_{\partial} \subset \{1,\dots,n\}$, and a nonnegative density $\tilde{\mu}$ supported in $\New \subset \R^\d$ such that $\New\times \{0\}=Q_R\times \{0\} \cap \Lambda \setminus (\Old\times \{0\})$ in the inner screening and  $\New\times \{0\}=((Q_R^c\times \{0\})\cap \Lambda) \setminus ( \Old\times \{0\})$ in the outer screening, such that the following holds:
\begin{enumerate}
    \item  $n_\Old$ being the number of points of $X_n$ such that $B(x_i, \rrh_i)$ intersects $\Old$, we have
    \be \label{bornimp}
\tilde \mu(\New)= \mn-\N,\qquad |\mu(\New)-\tilde \mu(\New)|\le C\(R^{\d-1}+ \frac{S(X_n,w,h)}{\tilde \ell}\)
\ee
\be
\label{mmut2}  \|\mu -\tilde \mu\|_{L^\infty(\New)} \le \frac{m}{2},\qquad 
\int_{\New} (\tilde \mu-\mu)^2 \le C \frac{S(X_n,w,h)}{\l\tilde \ell}+ C \frac{1}{\ell} R^{\d-1}
\ee
    \item $\#I_\partial \le C \frac{S(X_n,w,h)}{\tilde{\ell}}$.
    \item The Neumann approximation is comparable to the original energy, i.e.
    \begin{align}\label{Riesz screening error}
 & \F(Y_{\mn}, \tilde{\mu}, Q_R \times [-R,R])-\(\frac{1}{2\cds}\(\int_{Q_R\times[-R,R]}|y|^\gamma|\nabla w_{\rrh}|^2-\cd \sum_{i=1}^n \g(\rrh_i)\) +\sum_{i=1}^n  \int \f_{\rrh_i}(x-x_i) d\mu(x)\)
\\ 
\notag& \le C \frac{h S(X_n,w)}{\tilde{\ell}}+\sum_{(i,j)\in J}\g(x_i-z_j)
+|n-\mn| +\(\frac{R^{2}}{\tilde{\ell}}+R\)e(X_n,w,h) \\
\notag & \hspace{3mm}+  \(\tilde{\ell}+h^{-\gamma}\)R^{\d-1}+\F(Z_{\mn-n_\Old}, \tilde{\mu}, \New\times[-h,h])
\end{align}
where
\begin{equation*}
J=\{(i,j) \in I_\partial \times \{1,\dots,\mn-n_\Old\}:|x_i-z_j|\leq \rrh_i\}
\end{equation*}
in the inner case, and  
\begin{align}\label{Riesz outer screening error}
    &\F(Y_{\mn}, \tilde{\mu}, (Q_R \times [-R,R])^c)\\
   \notag  & \qquad-\(\frac{1}{2\cds}\(\int_{(Q_R \times [-R,R])^c}|y|^\gamma|\nabla w_{\rrh}|^2 -\cd\sum_{i=1}^n \g(\rrh_i)\)+\sum_{i=1}^n  \int \f_{\rrh_i}(x-x_i) d\mu(x)\) \\
&\notag \le C  \frac{h S(X_n,w)}{\tilde{\ell}}+\sum_{(i,j)\in J}\g(x_i-z_j) 
+|n-\mn| +\frac{R}{\min(h,\tilde{\ell})}\(\frac{R^{2}}{\tilde{\ell}}+R\)e(X_n,w,h) \\
&\hspace{3mm}\notag+   \(\tilde{\ell}+h^{-\gamma}\)R^{\d-1}+\F(Z_{\mn-n_\Old}, \tilde{\mu}, \New\times[-h,h])
\end{align}
in the outer case. 
\end{enumerate}
Moreover, if  \eqref{Riesz screenability1} is used, we can take $ \Old$ to be equal to $Q_t\cap U$ for some $t \in [R-2 \tilde \ell, R-\tilde \ell]$ (resp. $U\backslash Q_t $ for some $t \in [R+\tilde \ell, R+ 2\tilde \ell]$) and we can make $t$ a measurable function of $X_n$, while if \eqref{Riesz screenability2} is used, the boundary of $\Old $ is in general piecewise affine and made of  facets parallel to the faces of $Q_R$ of sidelengths bounded above and below by constants times $\ell$, all included in some $Q_{t+\ell}\backslash Q_t$ for $t \in [R- 2\tilde \ell , R-\tilde \ell -\ell] $, resp.  $t\in [R+\tilde \ell, R+2\tilde \ell-\ell]$.

\end{prop}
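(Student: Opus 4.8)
The argument follows the Coulomb-case screening of \cite[Section 7.2]{S24} and \cite{AS21}, with the principal new ingredient being the treatment of the extended vertical dimension. The plan is to do the inner screening in detail and indicate the (minor) modifications for the outer case.

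First I would select a \emph{good boundary}. Using the screenability hypothesis \eqref{Riesz screenability} and a mean-value/pigeonhole argument applied to the dyadic slabs $(Q_{R-j\tilde\ell/K}\setminus Q_{R-(j+1)\tilde\ell/K})\times[-h,h]$ for $1\le j\le K$ (with $K$ a large constant), I would find a hyperrectangle $Q\in \mathcal Q_R$ with $Q_{R-2\tilde\ell}\subset Q\subset Q_{R-\tilde\ell}$ across whose lateral boundary the weighted Dirichlet energy of $w_{\rrh}$ is $\lesssim S(X_n,w,h)/\tilde\ell$, and simultaneously across which the number of points $x_i$ with $B(x_i,\rrh_i)$ intersecting $\partial Q$ is $\lesssim S(X_n,w,h)/\tilde\ell$ (this uses \eqref{eq:13} to convert energy on a slab into a point count). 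Set $\Old$ to be the region enclosed by this boundary, $\New = (Q_R\cap\Lambda)\setminus\Old$, and $\New_\eta$ as in the statement; let $I_\partial$ be the set of the discarded boundary points, giving item (2). Discarding those points and keeping the rest of $X_n$ in $\Old$ defines the configuration there with unchanged field.

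Next I would build the new background $\tilde\mu$ in $\New_\eta$. Following \cite{S24}, $\tilde\mu$ is obtained by redistributing $\mu\indic_{\New}$ off the buffer layer $\New\setminus\New_\eta$ into $\New_\eta$ while preserving total mass $\mn-n_\Old$; since $\mu\ge m$ on the relevant annulus and $\eta\le 1$, the $L^\infty$ perturbation is $\le m/2$ and the mass-conservation error and $L^2$ error are controlled by the boundary terms $R^{\d-1}$ and $S(X_n,w,h)/(\ell\tilde\ell)$ exactly as in \eqref{bornimp}--\eqref{mmut2}. Then I would place $\mn-n_\Old$ new points $z_j$ in $\New_\eta$ neutralizing $\tilde\mu$: this is where the recursive term $\F(Z_{\mn-n_\Old},\tilde\mu,\New_\eta\times[-h,h])$ enters, since we only need the points to be placed with \emph{some} finite Neumann energy at height $h$, and we record that energy rather than bounding it here.

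The heart of the construction is the new potential. In $\New\times[-h,h]$ I would solve a Neumann problem: $-\div(\yg\nabla \tilde w)=\cds(\sum_j\delta_{z_j}-\tilde\mu\drd)$ with $\tilde w$ matching the Neumann data of $w$ on $\partial\Old\times[-h,h]$ (where $\partial w/\partial n$ has controlled $L^2_{\yg}$ norm by the good-boundary selection), zero Neumann data on the top and bottom $\New\times\{\pm h\}$ and on $\partial\Lambda$. The standard elliptic estimate in this slab gives $\int_{\New\times[-h,h]}\yg|\nabla\tilde w|^2 \lesssim h\cdot(\text{boundary flux})^2/(\ldots) + \ldots \lesssim \tilde\ell R^{\d-1} + hS(X_n,w,h)/\tilde\ell$, producing the first and the $\tilde\ell R^{\d-1}$ terms in \eqref{Riesz screening error}. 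The genuinely new step relative to Coulomb: the vector field must also be defined on $Q_R\times([-R,R]\setminus[-h,h])$. There I would take the field equal to what the full field prescribes on the slice $Q_R\times\{\pm h\}$, extend it with zero Neumann data on the remaining faces of $Q_R\times[-R,R]$; the energy of this extension is controlled by $\int_{Q_R\times\{\pm h\}}\yg|\nabla w|^2 = e(X_n,w,h)$ times the geometric factor $(R^2/\tilde\ell + R)$ coming from the size of the vertical box and the Poincaré constant in the $y$-variable, which is why $e$ appears weighted by $(R^2/\tilde\ell+R)$ in the interior case and by the extra $R/\min(h,\tilde\ell)$ in the exterior case \eqref{Riesz outer screening error}. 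Gluing all pieces gives a global Neumann-admissible field on $Q_R\times[-R,R]$; applying the projection Lemma \ref{projlem} bounds $\F(Y_{\mn},\tilde\mu,Q_R\times[-R,R])$ by the energy of this field, and re-truncating (the $\g(\rrh_i)$ and $\f_{\rrh_i}$ corrections, plus the short-distance pair interactions $\sum_{(i,j)\in J}\g(x_i-z_j)$ between discarded old points and nearby new points, and the $|n-\mn|$ self-energy bookkeeping) yields \eqref{Riesz screening error}. The $h^{-\gamma}R^{\d-1}$ term comes from the truncation cost near the new boundary layer where radii may reach size $1$. The outer screening is identical after swapping the roles of $Q_{R-\cdot}$ and $Q_{R+\cdot}$, placing the completing field on $(Q_R\times[-R,R])^c$ and matching on $Q_R\times\{\pm(R+h)\}$ instead.

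\textbf{Main obstacle.} The main difficulty, and the point requiring genuine care beyond \cite{AS21}, is the construction and energy estimate of the completing field in the extended vertical strip $Q_R\times([-R,R]\setminus[-h,h])$: one must verify that matching the trace of $\nabla w$ on $Q_R\times\{\pm h\}$ against a zero-Neumann extension upward is compatible (the flux balances because $\int_{Q_R\times\{h\}}\yg\partial_y w$ accounts for all the charge, by construction screenable), and that the resulting energy is genuinely governed by $e(X_n,w,h)$ with only the stated polynomial-in-$R$ loss — this uses the weighted trace/Poincaré inequalities for the $|y|^\gamma$ weight (as in Lemma \ref{trace}) and is the place where the singular weight forces the $h^\gamma$ and $h^{-\gamma}$ factors to appear. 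Keeping track of which buffer layer ($\ell$ versus $\tilde\ell$) controls which error, so that all errors are $o(R^\d)$ once $R\gg1$, is the bookkeeping crux.
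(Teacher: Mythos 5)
There is a genuine gap in the way you define the completing field in the slab $\New\times[-h,h]$, and it is precisely at the ``main obstacle'' point you flag. You propose to solve a Neumann problem there with the source $\cds(\sum_j\delta_{z_j}-\tilde\mu\,\drd)$, data matching $\nabla w\cdot\vec n$ on $\partial\Old\times[-h,h]$, and \emph{zero} Neumann data on $\New\times\{\pm h\}$. This problem is generically unsolvable: the compatibility condition is
\[
\int_{\partial\Old\times[-h,h]}\yg\,\frac{\partial w}{\partial n}\;=\;\cds\bigl(\tilde\mu(\New)-(\mn-n_\Old)\bigr)\;=\;0,
\]
but applying the divergence theorem to $w$ inside $D_0=\Old\times[-h,h]$ shows the left-hand side equals $\cds\bigl(n_\Old-\mu(\Old)\bigr)$ \emph{plus the flux through the top and bottom faces} $\Old\times\{\pm h\}$, which does not vanish. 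Your appeal to ``the flux balances because $\int_{Q_R\times\{h\}}\yg\partial_y w$ accounts for all the charge'' confirms that this top/bottom flux is nonzero — and that is exactly why it cannot be discarded. Moreover, even setting aside solvability, you match the original field's trace on the entire slice $Q_R\times\{\pm h\}$ for the completing field in $Q_R\times([-R,R]\setminus[-h,h])$, while your slab field has zero Neumann data on the $\New$ portion of that same slice; gluing these two pieces creates spurious divergence along $\New\times\{\pm h\}$, so the glued vector field does not satisfy a relation of the form \eqref{eqe} and the projection Lemma~\ref{projlem} no longer applies.

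The paper's construction repairs exactly this. It introduces the averaged flux constants $M_0^\pm$ (see \eqref{definition of M}--\eqref{definition of M - neg}) and assigns Neumann data $-M_0^\pm$, not zero, on $\New\times\{\pm h\}$ for the slab pieces $E_2$, and uses the same constants for the data $\phi$ of the extended piece $E_4$ on that shared face, so the two fields agree and no divergence is created there. Crucially, $M_0$ is also built into the definition of the local averages $m_k$ in \eqref{defmk}, which determine $\tilde\mu$; this is what makes each cell-wise Neumann problem compatible and makes the identity $\tilde\mu(\New)=\mn-n_\Old$ hold. Your construction of $\tilde\mu$ ``by redistributing $\mu\indic_\New$ off the buffer layer'' omits this $M_0$-dependent correction, so the flux bookkeeping is off from the start. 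In addition, the paper works cell-by-cell in rectangles $\tilde H_k$ of horizontal size $\ell$ rather than solving a single Neumann problem on all of $\New\times[-h,h]$, and this is what produces the aspect-ratio factors ($h/\ell$ from \cite[Lemma~6.4]{PS17}) that yield the stated dependence on $\ell$, $\tilde\ell$, and $h$; the monolithic approach would not recover the screenability threshold in \eqref{definition of little c} or the $S'$-based bound. The $M_0$ constants and the $m_k$ correction are not cosmetic: without them the glued field is not an admissible competitor and the proof collapses.
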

The method of proof essentially adapts and optimizes the approach of \cite[Section 6]{PS17} in an analogous way to the optimization of \cite[Appendix C]{AS21} for Coulomb gases. A thorough discussion can be found in \cite[Chapter 7]{S24}. The approach to outer screening is a novel adaptation for the Riesz gas. We will present the proof in Section \ref{sec: pfscn}.

\begin{remark}\label{height change}
It will be important to screen fields defined at heights $R'>R$ in the proof of almost additivity in Section \ref{sec: partition}. We can start with a field $\nabla w$ defined in $Q_R \times [-R',R']$ for any $R'>R$ and apply the screening procedure to its restriction to $Q_R \times [-R,R]$ to obtain a screened field in $Q_R \times [-R,R]$. Notice then that by appending an electric field that is uniformly zero for $|y|>R$ and using Lemma \ref{projlem} above, we can replace $\F(Y_{\mn}, \tilde{\mu}, Q_R\times [-R,R])$ in \eqref{Riesz screening error} with $\F(Y_{\mn}, \tilde{\mu}, Q_R\times [-R',R'])$ for any $R'>R$.  This yields
\begin{multline*}
\F(Y_{\mn}, \tilde{\mu}, Q_R \times [-R',R'])-\frac{1}{2\cds}\int_{Q_R\times[-R',R']}|y|^\gamma|\nabla w_{\rrh}|^2 \\
\leq \F(Y_{\mn}, \tilde{\mu}, Q_R \times [-R,R])-\frac{1}{2\cds}\int_{Q_R\times[-R,R]}|y|^\gamma|\nabla w_{\rrh}|^2,\end{multline*}
from which it follows that the screening result above allows us to screen in $Q_R \times [-R',R']$ for any $R'>R$ with the same errors as in \eqref{Riesz screening error}-\eqref{Riesz outer screening error}.
\end{remark}

\section{Main Bootstrap}\label{sec: mb}
In this section we prove the main probabilistic control on local energies, following a bootstrap on scales. The argument is a generalization of \cite{P24} to the higher dimensional Riesz case, which was in turn a generalization of \cite{AS21} to the 1d-log case. See also \cite[Chapters 7-8]{S24} for a thorough description of the method.


We are going to prove local laws in blown-up scale. In order to prove Theorem \ref{Local Law}, we need to prove them for $\Lambda=\R^{\d+1}$ and $\mu$ equal to the blown-up of the equilibrium measure $\muv$. For the proof of the almost additivity of the energy in the next section, we will also need to have proven the local laws in cubes. This is why we continue to work with a generic density $\mu$ and  set $\Lambda$ in the setup of Section \ref{sec:subadditive}, and a general probability law $\mathbb{Q}_{\beta , H}$ as in \eqref{defQ}. In the case of a cube, the local laws will be valid up to the boundary. In the case of the whole space, the local laws are valid only in the bulk, i.e.~on cubes separated from the set where $\mu$ is small by a fixed positive distance in original scale.
For that reason, we let, in the case where $\Lambda=\R^{\d+1}$,
\be \bulk':=\{x\in \R^\d, \dist(x, \{\mu=0\}) >\ep N^{1/\d}\},\ee and assume that $\mu \ge m>0$ on $\bulk'$. Note that by  assumption \eqref{eq reg} we know that $\muv$ is bounded below at distance $\ep$ from $\partial \Sigma$, so $\muv'$ satisfies this condition.
In the case where $\Lambda$ is a hypercube of projection onto $\R^\d$ equal to $U$, we just let $\bulk'=U$.
All the constants in the local laws will depend on $m$, hence on  $\ep$.

We wish to prove that there exists constants $ C_1, C_2$ independent of the scale and of $\beta$, $\C_\beta$ independent of $\beta $ when $\beta \ge 1$,  such that,  $u$ being defined in \eqref{defiu}, for all $L>\rho_\beta$ and any cube $\carr_L \subset \bulk'$, there exists an event $\Gc_L$ such that
\be \label{locallawMB}
\forall \XN \in \Gc_L, \quad \ \F^{\carr_L\times [-L,L]}(\XN, \mu, \Lambda) +2 C_0 \#I_{\carr_L} 
 \leq \C_\beta  L^\d
\ee where $C_0$ is the constant in \eqref{arrivC0},  with the event $\Gc_L$ satisfying $\mathbb{Q}_{\beta, H}(\Gc_L^c)\leq C_1e^{-C_2 \beta  L^{\d}}$.

As in \cite{L17}, \cite{AS21} and \cite[Theorem 1]{P24}, this is achieved by an induction on the scale. 
To do so, we consider a cube $\carr_{2\ep}(z)\subset \Sigma'$ (later we will drop the $z$ from the notation). Then $\carr_\ep(z)\subset \bulk'$. 
We then consider $2^{-k_*} \ep=L$, and  assume that  there exists 
 an event $\mc{G}_{2L}$ with $\mathbb{Q}_{\beta, H}(\mathcal G_{2L}^c) \le C_1 e^{-C_2 \beta (2L)^\d}$ such that for every $1\le k \le k_*$
 \be \label{locallawMBind}
\forall \XN \in \Gc_{2L}, \quad \ \F^{\carr_{2^kL}\times [-2^k L,2^kL]}(\XN, \mu, \Lambda) +2 C_0 \#I_{\carr_{2^kL}} 
 \leq \C_\beta (2^kL)^\d.
\ee
Note that for $k \ge k_*$ this is also automatically satisfied thanks to \eqref{macrolaw2} (for a constant depending on $\ep$).
 
In view of \eqref{arrivC0}, this implies in particular 
\begin{multline} \label{induchyp}
\forall \XN \in \Gc_{2L}, \forall 1\le k, \quad  \int_{\square_{2^kL} \times [-2^kL,2^kL ]}\yg |\nabla u_{\rr}|^2 \leq 4\cds \mathcal{C}_\beta\(2^kL\)^\d\\
\quad \text{and} \quad C_0 \#I_{\carr_{2^kL}} \le \C_\beta  \(2^kL\)^\d.
\end{multline}
We then wish to prove that there exists an event $\Gc_L$ such that $\mathbb{Q}_{\beta, H}(\mathcal G_{L}^c) \le C_1 e^{-C_2 \beta L^\d}$ and such that for all $\XN \in \Gc_L$, \eqref{locallawMBind} holds for $k=0$ as long as $L\ge \rho_\beta$, with the same constants $C_1,C_2, \C$. This easily suffices to imply  that  \eqref{locallawMB} holds  for any cube $\carr_L \subset \bulk'$ as long as $L\ge \rho_\beta$.
Note that in order to prove that \eqref{locallawMBind} holds for $k=0$, it suffices to show it over a hyperrectangle $Q_L$ such that $\carr_L\subset Q_L \subset \carr_{\frac{3}{2}L}$, which will allow us to choose $Q_L$ such that $\mu(Q_L)$ is an integer.

\begin{prop}\label{Main Bootstrap} Let the setup be as in Section \ref{sec:subadditive} with $\Lambda$ equal to $\R^{\d+1}$ or a hypercube of height $H$,  $\mathbb{Q}_{\beta, H}(U, \mu, \zeta)$ as in \eqref{defQ} and $u$ as in \eqref{defiu}.
Suppose that there exists an event $\Gc_{2L}$ such that \eqref{induchyp} holds. Then, there is  a scale $\rho_\beta >0$ (depending only on $\beta$), $C>0$, with $4\le \rho_\beta \lesssim_\beta 1$  such that if $L \ge \rho_\beta$, the following holds. 
\begin{equation}\forall \XN \in \Gc_L, \quad \ \F^{\carr_L\times [-L,L]}(\XN, \mu, \Lambda) + 2C_0 \#I_{\carr_L} \le \C_\beta  L^\d
\end{equation} with  
\begin{equation}
\mathbb{Q}_{\beta, h}(U, \mu, \zeta)(\Gc_{2L} \setminus \Gc_L)\leq  e^{-C\beta L^{\d}}
\end{equation}
with $ C>0$ depending only on $\d,\s,m,\ep, \|\mu\|_{L^\infty}$ and $\C_\beta\ge 1$ also possibly depending on $\beta$ when $\beta \le 1$.
\end{prop}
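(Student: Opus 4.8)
The strategy is the standard bootstrap step: assuming the local laws hold on scale $2L$ (hypothesis \eqref{induchyp}), one performs the screening on the cube $Q_L$ to produce a screened configuration, which then allows us to estimate the partition function on $Q_L$ from below and, by comparison with the global partition function and the a priori free energy bounds of Proposition \ref{pro718}, to conclude that the local energy $\F^{\carr_L\times[-L,L]}(\XN,\mu,\Lambda)$ together with $\#I_{\carr_L}$ cannot be much larger than $L^\d$ except on an exponentially small event. Concretely, I would first fix a hyperrectangle $Q_L$ with $\carr_L\subset Q_L\subset \carr_{3L/2}$ and $\mu(Q_L)\in\Z$ (possible by adjusting sidelengths). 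The key quantities to control on the good event $\Gc_{2L}$ are: the energy $\int_{Q_L\times[-L,L]}\yg|\nabla u_{\rr}|^2$, which is bounded by $\C_\beta L^\d$ by \eqref{induchyp}; the ``boundary slice'' energy $e(\XN,u,h)$ on $Q_L\times\{\pm h\}$ for a well-chosen height $h\le L/2$, obtained by a mean-value (pigeonhole) argument over dyadic heights using that the total energy up to height $2L$ is $\lesssim_\beta L^\d$; and the tangential energies $S, S'$ near $\partial Q_L$, again obtained by a mean-value argument over translates of $Q_L$ combined with the discrepancy and number controls in \eqref{induchyp}. Making $h$ and the screening widths $\ell,\tilde\ell$ of order a large constant times $\rho_\beta$ (so $\ge C$ as required), the screenability condition \eqref{Riesz screenability} will be satisfied on a sub-event of $\Gc_{2L}$ of overwhelming probability.

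\textbf{Main steps.} (1) \emph{Choice of good heights and good boundary.} By a union bound / mean-value argument, find a dyadic height $h\simeq \rho_\beta$ and a translate $Q_L$ such that $h^\gamma e(\XN,u,h)\lesssim \C_\beta L^{\d-2}\tilde\ell^2$ and the tangential energy near $\partial Q_L$ satisfies the bound needed in \eqref{Riesz screenability}; here one uses \eqref{induchyp} and the fact that summing the slice energies over $O(\log)$ dyadic heights recovers the full energy. This restricts us to a sub-event $\Gc_L^{(1)}\subset\Gc_{2L}$ with $\mathbb{Q}_{\beta,H}(\Gc_{2L}\setminus \Gc_L^{(1)})$ exponentially small (the exponential tail is \emph{gained} from the Gibbs factor $e^{-\beta\F}$ — if the energy in a box were anomalously large, the Gibbs weight of that event is $e^{-c\beta L^\d}$, this is the classical ``energy localization'' argument as in \cite{L17,AS21,P24}). (2) \emph{Apply Proposition \ref{Riesz screening result}} (inner screening, with $\Lambda$ playing its role, and the Remark \ref{height change} to screen at height $\ge L$) to obtain a screened configuration $Y_{\mn}$ on $Q_L$ with background $\tilde\mu$, such that $\F(Y_{\mn},\tilde\mu,Q_L\times[-L,L])$ is bounded by the local energy plus screening errors which are $o(L^\d)$ in $L$ and controlled by $\C_\beta L^\d$-type quantities, and so that the number of created/modified points is controlled. (3) \emph{Lower bound the local partition function} $\K_{\beta,L}(Q_L,\mu,\zeta)$ by integrating the screening inequality over configurations in $\Gc_L^{(1)}$ and using the superadditivity of partition functions (Lemma \ref{lem: supadd part}) together with the Neumann free energy lower bound \eqref{bornesfiU} applied on $\New_\eta$ — this is where the $o(L^\d)$ nature of the additivity error is essential. (4) \emph{Compare} with the global partition function, using \eqref{eq: supadd part} and \eqref{eq: supadd partext} to split off $Q_L$, and the macroscopic control \eqref{macrolaw2}/\eqref{bornesfiU2}: the resulting inequality forces the Gibbs probability of the event $\{\F^{\carr_L\times[-L,L]}(\XN,\mu,\Lambda)+2C_0\#I_{\carr_L}>\C_\beta L^\d\}$ to be at most $e^{-C\beta L^\d}$, provided $\C_\beta$ is chosen large enough and $\rho_\beta$ large enough (depending on $\beta$ through the $C(1+\beta)$ in \eqref{bornesfiU}). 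Intersecting with the event where \eqref{arrivC0} converts the energy bound into the $\#I_{\carr_L}$ bound, and relabeling, gives $\Gc_L$.

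\textbf{Role of $\rho_\beta$ and choice of constants.} The scale $\rho_\beta$ enters because the screening errors in \eqref{Riesz screening error} contain terms like $(\tilde\ell+h^{-\gamma})R^{\d-1}$ and $(R^2/\tilde\ell+R)e$, which relative to $R^\d=L^\d$ are of order $\ell^{-1}+$ (lower order in $L$); to make these small compared to $L^\d$ one needs $L\gg 1$, and to absorb the $\beta$-dependent constant $C(1+\beta)$ from the entropy/free-energy bound \eqref{bornesfiU} into the screening gain one needs $\rho_\beta$ to grow like a power of $(1+\beta)/\beta$ as $\beta\to0$, which is why $4\le\rho_\beta\lesssim_\beta 1$. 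The constants $C_1,C_2$ in the event probability are kept scale- and $\beta$-independent precisely because the exponential tail comes from the Gibbs factor $e^{-\beta L^\d}$ with an absolute prefactor; one must be careful in step (1) that the union bound over $O(\log N)$ dyadic heights and $O(1)$ translates only costs a polynomial factor, absorbed into $C_1$ (or into decreasing $C_2$ slightly), exactly as in \cite[proof of Theorem 1]{P24}.

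\textbf{Main obstacle.} The delicate point — and the one genuinely new compared to the Coulomb case — is controlling the boundary-slice energy $e(\XN,u,h)$ in the extended dimension and verifying the first alternative in \eqref{Riesz screenability}: the nonlocality of the Riesz kernel means the electric field does not vanish at finite height, so the mean-value argument producing a good height $h$ must be carried out carefully, and it is here that one genuinely needs the fluctuation control (Theorem \ref{FirstFluct}) on scale $\ge 2L$ — this is the ``coupling'' of local laws and fluctuations advertised in the introduction. Concretely, bounding $e(\XN,u,h)$ requires knowing $\int_{Q_L\times\{h\}}\yg|\nabla u|^2$ is not much larger than $L^{\d-2}h^{-\gamma}\tilde\ell^2$ on a good event, and since $u$ at height $h$ reflects the long-range effect of fluctuations outside $Q_L$, one must feed in the fluctuation bound for test functions living on scales comparable to $L$ (localized versions of \eqref{controlfluctrough}), not merely the local law on $Q_L$ itself. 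Once this extended-dimension slice control is in place, the rest of the argument is a routine adaptation of \cite{AS21,P24}.
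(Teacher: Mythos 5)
Your high-level plan correctly identifies the four main ingredients the paper uses (choice of good slices and boundaries, application of the Riesz screening, conversion to a partition-function lower bound on a good event, and a Chernoff bound via exponential moments as in Propositions~\ref{volume of configurations}--\ref{energy control for good points}), and you correctly flag in the ``Main obstacle'' paragraph that the control of the slice energy $e(\XN,u,h)$ in the extended dimension is the genuinely new difficulty and that one must use the fluctuation control Theorem~\ref{FirstFluct} at scales $\geq 2L$ to handle it. However, as written your Step~(1) is inconsistent with that observation, and the mean-value argument you propose there is not only superfluous in the paper's approach --- it actually fails. Two concrete issues. First, a mean-value argument over dyadic heights produces a \emph{configuration-dependent} good height, but the screening framework (Definition~\ref{defibestpot} and Proposition~\ref{volume of configurations}) requires a \emph{fixed} height $h$, because the auxiliary partition function $\K_{\beta,h}(Q_\ell,\mu,\zeta)$ and the Dirichlet-type energy $\G^{\mathrm{inn/ext}}_{a,h}$ appearing in the integration-by-configuration step only make sense at a fixed height; a random $h$ cannot be carried through Proposition~\ref{volume of configurations}. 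Second, even granting a fixed height, the pigeonhole bound only gives $e(\XN,u,h)\lesssim K L^{\d-1}$ with a large $K$ (or $L^{\d-1}/\log L$ if you pigeonhole over dyadic ranges), whereas the screening error term $\bigl(R^2/\tilde\ell+R\bigr)e$ in \eqref{Riesz screening error} then becomes $\gtrsim K K_1 L^\d$, which is \emph{not} $o(\C_\beta L^\d)$. Since the bootstrap must propagate the same constant $\C_\beta$ to every scale, the error has to be absorbable into $\C_\beta L^\d$ with margin, and the $O(L^\d)$ loss from the mean-value bound would force $\C_\beta$ to grow at each step. The paper instead obtains $e(\XN,u,h)\lesssim \epsilon L^{\d-1}$ with arbitrary $\epsilon>0$ (Proposition~\ref{decay control}, case (1), at the fixed height $h=L/K$, not $h\simeq\rho_\beta$), precisely by viewing each component of $\nabla u(\vec a,h)$ as a fluctuation $\Fluct_\mu(\kappa^i_{(\vec a,h)})$ of an explicit test function, decomposing that test function over dyadic annuli, and applying Theorem~\ref{FirstFluct} (plus elliptic regularity in the extended dimension to compare nearby $\vec a$). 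This is how the exponential tail on $\mathcal{G}_{2L}\setminus\mathcal{G}_L$ arises --- from a Chernoff argument on the moment-generating function of those fluctuations, not from the bare Gibbs factor. You name the right tool, but your proof sketch does not actually deploy it, and the step it is meant to replace would fail.
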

Once this proposition is proved, Theorem \ref{Local Law} follows 
 by a bootstrap on the scale starting from \eqref{macrolaw1} and by using \eqref{arrivC0} in the same way as the proof of \cite[Theorem 3.1]{P24} from \cite[Proposition 3.2]{P24}.
 One starts at the macroscopic scale in $\bulk'$, where one has the local law outside of an exponentially small event by \eqref{macrolaw1}, and applies Proposition \ref{Main Bootstrap} iteratively down to scale $L$. At each application at scale $2^kL$, we lose an event of probability no more than $ e^{-C \beta 2^kL^{\d}}$, so the local law at
scale $L$ holds off of an event of size at most
 \begin{equation*}
 \sum_{k=0}^\infty  e^{-C\beta 2^kL^{\d}} \leq C_1 e^{-C_2\beta L^\d},
 \end{equation*} for some constants $C_1$ and $C_2$ independent of scale $L$.

 
The key technical tool that we use in the proof of Proposition \ref{Main Bootstrap} is the screening procedure Proposition \ref{Riesz screening result}, which allows us to localize our rather nonlocal next order energy and exhibit an almost additivity on scales. In order to successfully employ this procedure, we will need to guarantee that the errors generated when we screen are sufficiently small.

\subsection{Control of Screening Errors}
The first result we will need is on the rate of decay of  the electric field  away from the subspace $\R^\d$, which allows to control the $e$ terms (as in \eqref{Riesz upper energy inner screening}) in the screening estimates, a question which is  absent in the Coulomb case. This is done by viewing the electric field as a fluctuation and using as an input the local law at scales $2^kL$ \eqref{induchyp}.
In the sequel we let $\s_+=\max(\s, 0)$.

\begin{prop}[Decay estimate - control of the $e$ term]\label{decay control} Let $\Lambda $ be $\R^{\d+1}$ or a hyperrectangle as above. Let $L$ be such that $L=2^{-k_*}\ep$, with $\carr_{ \ep} \subset \bulk'$ in the case $\Lambda = \R^{\d+1}$. 
 \begin{enumerate}
 \item Let $\mathcal{G}_{2L}$ be the event that \eqref{induchyp} holds. Then, for any $\epsilon>0$ and $K \geq 1$ large enough so that
 \begin{equation*}
 \frac{L}{K} \leq \epsilon^{\frac{2}{4+\s-\d}}N^{\frac{1}{\d}},
 \end{equation*}  
 there exists $4\le \rho_\beta\lesssim_\beta 1$ and a constant $C>0$ dependent only on $K$ and $\epsilon$ such that if $L>C (\beta^{-1}\log \frac1\epsilon )^{\frac1\d}$ and $h=L/K$, there is an event $\mathcal{G} \subset \mathcal{G}_{2L}$ such that
 \begin{equation}\label{local control of top energy}
\int_{\carr_{2L} \times \{\pm h\}}|y|^\gamma|\nabla u|^2 \lesssim_\beta \epsilon L^{\d-1}\quad \mathrm{on}  \ \mathcal G,
\end{equation}with 
\begin{equation}\label{local event}
\mathbb{Q}_{\beta, H}(U, \mu, \zeta)(\mathcal{G}_{2L} \setminus \mathcal{G})\leq e^{-C \beta L^\d}.
\end{equation}
\item 
Let $\mathcal{G}_{h}$ 
be the event that \eqref{induchyp} holds for $2L=\rho_\beta$. Then,  there exists a constant $C>0$ such that given $M \ge 1$,    if $L>h\ge \rho_\beta$, there is  an event $\mathcal{G} \subset \mathcal{G}_{h}$ such that
 \begin{equation}\label{local control of top energy2}
\int_{\carr_{2L} \times \{\pm h\}}|y|^\gamma|\nabla u|^2 \lesssim_\beta M L^{\d} h^{\s_+-\d-1}\quad \mathrm{on} \ \mathcal G,
\end{equation}
with 
\begin{equation}\label{local event2}
\mathbb{Q}_{\beta, H}(U, \mu, \zeta)(\mathcal{G}_{h} \setminus \mathcal{G})\leq   M^{-\frac{\d}{2}}L^\d h^{\frac{\d( \d-\s_+-2)}{2}} e^{-CM}.
\end{equation}
\end{enumerate}
\end{prop}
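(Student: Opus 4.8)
\textbf{Proof plan for Proposition \ref{decay control}.}

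The plan is to view the electric potential $u$ restricted to the horizontal slice $\carr_{2L}\times\{\pm h\}$ as a fluctuation of a linear statistic and use the local law hypothesis \eqref{induchyp} as input. The starting point is the representation $u = \g * (\sum_i \delta_{x_i} - \mu\drd)$, extended to $\R^{\d+1}$, so that $\nabla u(x,\pm h)$ can be written as an integral of $\nabla^{\otimes 1}\g((x,\pm h)-(x',0))$ against the signed measure $\fluct_\mu = \sum_i \delta_{x_i} - \mu$. Since $h>0$, the kernel $(x,x')\mapsto \nabla\g((x,\pm h)-(x',0))$ is now a \emph{smooth} (non-singular) function of $x'$, with good decay, so the integral over $\R^\d$ localizes at the dyadic scales $2^k L$ around the projection of $x$. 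On each dyadic annulus we apply the fluctuation control of Proposition \ref{pro:controlfluct} (in the blown-up normalization, i.e.\ with $N^{-1/\d}$ replaced by $1$), using as input the energy bound and the number bound from \eqref{induchyp} on $\mathcal G_{2L}$. Summing the dyadic contributions — which converge geometrically because $\s<\d$ — produces a pointwise bound on $|\nabla u(x,\pm h)|$ of the form (constant)$\cdot \mathcal C_\beta^{1/2}$ times a negative power of $h$, and integrating $|y|^\gamma |\nabla u|^2 = h^\gamma|\nabla u|^2$ over $\carr_{2L}\times\{\pm h\}$ (a set of measure $\sim L^\d$) gives a bound of the shape $\mathcal C_\beta L^\d h^{\s_+-\d-1}$, which is already \eqref{local control of top energy2} up to the factor $M$.

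For part (1), the refinement $\epsilon L^{\d-1}$ rather than a fixed power of $h$ comes from \emph{upgrading} the energy input on good sub-events: on $\mathcal G_{2L}$ we only know the energy at scales $\ge L$, but by a Markov/Chebyshev argument against the measure $\mathbb Q_{\beta,H}$, using the exponential moment control one has at the macroscale (essentially \eqref{macrolaw2}–\eqref{expmomentcontrol}), we can intersect with an event $\mathcal G$ on which the energy restricted to $\carr_{2L}\times[-2h,2h]$ is \emph{much smaller} than the deterministic bound — of size $\epsilon L^\d$ — at the cost of an event of probability $\le e^{-C\beta L^\d}$ (here the hypothesis $L > C(\beta^{-1}\log\frac1\epsilon)^{1/\d}$ is exactly what makes $\epsilon L^\d \gg \log(1/\epsilon)/\beta$, so the loss is absorbed). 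On that event, re-running the fluctuation estimate with the improved energy input and with the constraint $L/K \le \epsilon^{2/(4+\s-\d)}N^{1/\d}$ (which guarantees the truncation scale $h=L/K$ is still admissible, i.e.\ $\ge N^{-1/\d}$ in original units and small enough that the contribution near the diagonal is controlled) yields \eqref{local control of top energy}. The constant $C$ depends on $K$ and $\epsilon$ through these thresholds.

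For part (2), no improvement of the energy is available — we only have \eqref{induchyp} at the base scale $\rho_\beta$ — so the bound genuinely carries the factor $M$: we intersect $\mathcal G_h$ with the event that the relevant slice energy is $\le M$ times its typical value, using Chebyshev/Markov in $\mathbb Q_{\beta,H}$ together with the lower bound $\F \ge -CN$ and the a priori free energy bounds of Proposition \ref{pro718}. The probability estimate \eqref{local event2} has the non-standard prefactor $M^{-\d/2} L^\d h^{\d(\d-\s_+-2)/2}$ because the tail we are bounding is not exponential in $L^\d$ but rather governed by a Gaussian-type large-deviation at the scale $h$: a union bound over $\sim (L/h)^\d$ cubes of side $h$, each contributing a factor $e^{-CM}$ with a polynomial-in-$h$ normalization from the fluctuation variance, produces exactly that shape after collecting powers.

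The main obstacle I expect is \textbf{the dyadic summation of fluctuation contributions combined with keeping the $h$- and $L$-dependence sharp}: Proposition \ref{pro:controlfluct} must be applied on each annulus $A_k \sim \carr_{2^{k+1}L}\setminus\carr_{2^kL}$ with the correct test function $\varphi_k(x') = \chi_k(x')\,\nabla\g((x,\pm h)-(x',0))$, whose $L^2$ and $\dot H^1$ norms on $A_k$ scale with both $h$ and $2^kL$; one must check that the error term $\# I_{A_k}|\varphi_k|_{C^1}$ in \eqref{rieszfluct2} is dominated by the energy term (using the number bound in \eqref{induchyp}), and that the choice of $\eta$ in Proposition \ref{pro:controlfluct} can be taken $\asymp h$ uniformly in $k$. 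A secondary delicate point is the probabilistic upgrade: turning an exponential-moment bound on the \emph{global} truncated energy into an estimate on the energy restricted to the thin slab $\carr_{2L}\times[-2h,2h]$ requires the superadditivity \eqref{superadd} (or \eqref{locali3}) to isolate the slab's contribution, and then a careful Chebyshev argument to reach the $\epsilon L^\d$ (resp.\ $ML^\d h^{\s_+-\d-1}$) thresholds with the stated probabilities.
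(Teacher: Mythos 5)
Your overall plan — view $\nabla u(x,\pm h)$ as fluctuations of the kernel $\nabla\g((x,\pm h)-(\cdot,0))$ against $\fluct_\mu$, decompose dyadically, and feed the local laws in as input — is the right framework, and this is indeed what the paper does. But there is a genuine gap in the fluctuation input you use, and the Chebyshev upgrade you propose cannot repair it.

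The central step of the paper (Step 2 of the proof) applies \textbf{Theorem \ref{FirstFluct}}, the probabilistic exponential-moment control of fluctuations, to each dyadic piece $(2^k)^{\s+2}\chi_k\varphi_0$ of the rescaled kernel, and then a Chernoff bound with carefully tuned parameters $\tilde\tau_k$ produces $|E(z_p)|^2 \lesssim_\beta \epsilon^2 h^{\d-\s-2}$ (part 1) resp.\ $|E(z_p)|^2\lesssim_\beta \sqrt M\,h^{-1+(-\s/2)_+}$ (part 2) off an event of the stated small probability. You propose instead to apply the deterministic Proposition \ref{pro:controlfluct}, which bounds $|E(z_p)|$ by (test-function norms) times the square root of the \emph{local energy}. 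Plugging in the local law $\int_{\carr_{2h}\times[-2h,2h]}\yg|\nabla u_\rr|^2 \lesssim_\beta h^\d$ gives at best $|E(z_p)|^2\lesssim_\beta h^{\d-\s-2}$ and hence $\int_{\carr_{2L}\times\{\pm h\}}\yg|\nabla u|^2\lesssim_\beta L^\d h^{-1}$, i.e.\ $\sim K L^{\d-1}$ — with no factor $\epsilon$ and, for $\s\ge 0$, with a strictly worse power of $h$ than the target $L^\d h^{\s-\d-1}$. The gap is structural, not just a constant.

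Your proposed fix — intersect with an event where the slab energy is $\ll L^\d$ (resp.\ $\le M\cdot$typical), by Chebyshev against the macroscale exponential-moment bound — cannot close this gap. The local next-order energy in a box of side $L$ concentrates around a \emph{positive} multiple of $L^\d$ (it carries the free-energy density), and the exponential-moment control \eqref{expmomentcontrol}--\eqref{macrolaw2} only gives an upper tail bound; it provides no mechanism to force the energy below its typical scale. What \emph{does} concentrate near zero, with Gaussian-type tails, is the \emph{fluctuation} $\Fluct_\mu(\kappa^i)$ itself — and that is precisely the additional information Theorem \ref{FirstFluct} supplies beyond the variance-level bound of Proposition \ref{pro:controlfluct}. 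This is also why the proposition is coupled to the bootstrap: Theorem \ref{FirstFluct} at scale $\ge 2L$ is the hypothesis allowing this step. Likewise, your invocation of the superadditivity \eqref{superadd} to "isolate the slab's contribution" is not needed and not used; the paper just reads the slab energy off the local law directly. Finally, a smaller structural difference: the paper does not get a pointwise bound on $|\nabla u|$ at every $x$; it controls $E(z_p)$ at $P$ sample points probabilistically and then interpolates using an $L^\infty$ bound on $\nabla_{\R^\d}E$ from elliptic regularity of $-\div(\yg\nabla\cdot)=0$ in $\carr_L\times[h/2,2h]$ (Step 3), which you would need to supply if you wanted a continuous bound.
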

\begin{proof}Let us first consider the case $\Lambda = \R^{\d+1}$. In that case, $u$ given by \eqref{defiu} is, up to an additive constant, equal to 
$\g*\(\sum_{j=1}^N \delta_{x_j'}-\mu\)$. By symmetry in $y$, it suffices to prove the result at height $+h$.
For notational ease we will let $E$ be a shorthand for $\nabla u$. The main idea rests on the observation that the components of $E_i$, $1 \leq i \leq \d+1$, at a point $(\vec a, h)=(a_1,a_2,\dots,a_\d,h)$ with $h>0$, are fluctuations of smooth linear statistics. Namely, denoting $x_j'$ for the points of the configuration (at the blown-up scale) and spelling out $\nab \g$, we have 
\begin{equation}\label{spellout}
E_i(\vec a,h)=\int_{\R^\d} \kappa_{(\vec a,h)}^k(x)d\Bigg(\sum_{j=1}^N \delta_{x_j'}-\mu\Bigg)(x)=\Fluct_{\mu}\(\kappa_{(\vec a,h)}^k\(N^{\frac{1}{\d}}x\)\)
\end{equation}
where
\begin{equation}\label{LL test functions}
\kappa_{(\vec a,h)}^i(x)=\begin{cases}
\frac{-(a-x)_i}{\(|x-a|^2+h^2\)^{\frac{\s+2}{2}}} & \text{if }1 \leq i\leq \d \\
\frac{- h}{\(|x-a|^2+h^2\)^{\frac{\s+2}{2}}} & \text{if }i=\d+1.
\end{cases}
\end{equation}
With this observation, the main idea is as follows:
\begin{itemize}
\item Place $P$ points on $\carr_{2L} \times \{ h\}$; estimate $E(z)=E(z_p)+(E(z)-E(z_p))$
\item Estimate $E(z_p)$ in probability using Theorem \ref{FirstFluct} and the  local laws.
\item Use elliptic regularity and the local laws to control $\nab_{\R^\d} E$ and thus  $E(z)-E(z_p)$.
\end{itemize} 
\textbf{Step 1: Setup.}
We split $\carr_{2L}$ into $P$ equally sized subrectangles $I_p$ of sidelength comparable to $L P^{-1/\d}$, and let $z_p=(\vec a_p, h)$ denote the center of the subcube $I_p$. On each subcube $I_p$, we estimate
\begin{align}\notag
\int_{I_p \times \{h\} }|E|^2&\leq 2\sum_{i=1}^{\d+1}\(\int_{I_p \times \{ h\} }|E_i(z)-E_i(z_p)|^2+\int_{I_p  \times \{ h\} }|E_i(z_p)|^2\)\\
\notag &\lesssim \|\nabla_{\R^\d} E\|_{L^\infty(I_p \times \{ h\} )}^2\int_{I_p \times \{h\} }|z-z_p|^2+\frac{L^\d}{P}\sum_{i=1}^{\d+1}|E_i(z_p)|^2 \\
\label{intIp}&\lesssim \frac{ L^{\d+2}}{P^{1+\frac{2}{\d}}} \|\nabla_{\R^\d} E\|_{L^\infty(I_p \times \{ h\} )}^2+\frac{ L^\d}{P}|E(z_p)|^2.
\end{align}
where $\nabla_{\R^\d}$ denotes gradient in $\R^\d$. We now estimate each term separately.

{\bf Step 2. Control of  $|E(z_p)|^2$}. This is done via estimates on fluctuations of linear statistics associated to the functions $\kappa_{(\vec a,h)}^i$. The main difference between our approach here and that of \cite[Proposition 3.3]{P24} is that we use Theorem \ref{FirstFluct} for a rescaled test function directly, instead of running transport estimates for the functions $\kappa^i$. Even though these test functions are not literally rescaled versions of a compactly supported test function, we can treat them as such after a dyadic splitting. 
%

Without loss of generality, let $\vec a=\vec 0$; we also focus on $i=\d+1$, since the computation for $i\leq \d$ is analogous and produces the same result. Let $x$ be the space  variable at the non-blown-up scale. Notice that 
\begin{equation*}
h^{1+\s}\kappa^{\d+1}_{\vec 0, h}\(N^{\frac{1}{\d}}x\)=\frac{- h^{2+\s}}{\(|N^{\frac{1}{\d}}x|^2+h^2\)^{\frac{\s+2}{2}}}=-\frac{1}{\(| \frac{N^{\frac1\d}x}{h} |^2+1\)^{\frac{\s+2}{2}}}:=\varphi_0\(\frac{x N^{1/\d}}{h}\),
\end{equation*}
where $\varphi_0$ is a smooth scale-independent function. Let $\chi_k$ be a partition of unity associated to dyadic annuli $B(0,2^{k+2})\backslash B(0,  2^{k-2})$ (as in Section \ref{sec: fluct prelim}). We may choose $k_*$ such that $2^{k_*}\ell$ is bounded below by $\ep>0$ and $B(0, 2^{k_*}\ell) $ does not intersect any other connected component of $\supp \mu$ (if there is more than one) than that of $0$.

We may then write $\varphi_0= \sum_{k=0}^{k_*}  (\chi_k \varphi_0)+ \varphi_1$, with $\chi_k\varphi_0$ a function supported in a dyadic annulus and $\|\varphi_1\|_{L^\infty}$ is  bounded by $ O((h N^{-1/\d})^{\s+2})$. Computing directly shows as well that 
\begin{equation*}
\left\|\nab\(\varphi_1\(\frac{\cdot N^{1/\d}}{h}\)\)\right\|_{L^\infty}=O\(\(hN^{-1/\d}\)^{\s+2}\).
\end{equation*}
The function $(2^k)^{\s+2}\chi_k \varphi_0 $ satisfies
\eqref{estxiintro} at scale $2^k\ell$  for some constant $\M>0$. 
 We may thus apply Theorem \ref{FirstFluct} to it. 
For the proof of item (1) of the proposition, we 
use the local laws at scales $2^kL$ to have a control at scale $2^k h N^{-1/\d}= 2^k\frac{L}{K} N^{-1/\d}$,  we thus obtain the bound stated in  Theorem \ref{FirstFluct} : for $\tau_k (2^k h)^{\s-\d}$ small enough, 
\be
\left|\log\Esp_{\PNbeta}\left[\exp\( \tau_k \frac{\beta}{1+\beta}\Fluct_{\mu}\((2^k)^{\s+2}(\chi_k \varphi_0)\( \frac{\cdot N^{1/\d}}{h} \)\)\indic_{\mathcal{G}_{2L}}\)\right]\right|\lesssim_\beta(|\tau_k|+|\tau_k|^2) (2^k h)^\s\ee
where the constant depends on $K$. 
Applying to $\tau_k = \tilde \tau_k (2^k h)^{\d-\s} $, we find that if $\tilde \tau_k$ is small enough, 
\begin{multline*}
\left|\log\Esp_{\PNbeta}\left[\exp\( \frac{\beta}{1+\beta}\tilde \tau_k (2^k h)^{\d-\s} (2^k)^{\s+2}\Fluct_{\mu}\((\chi_k \varphi_0)\( \frac{\cdot N^{1/\d}}{h} \)\)\indic_{\mathcal{G}'_{2L}}\)\right]\right|\\
\lesssim_\beta (\tilde \tau_k (2^k h)^\d+\tilde \tau_k^2 (2^kh)^{2\d-\s} ) .\end{multline*}

For any $\lambda$, Markov's inequality yields 
\begin{multline}\label{fluct Chern}
\PNbeta\(\left\{\Fluct_{\mu}\((\chi_k \varphi_0)\( \frac{\cdot N^{1/\d}}{h} \)\)\geq \lambda\right\} \cap \mathcal{G}_{2L}\)\\
\leq \exp\(C_\beta( \tilde \tau_k(2^k h)^{\d} +\tilde \tau_k^2 (2^k h)^{2\d-\s})- \tilde \tau_k\frac{\beta}{1+\beta}(2^k h)^{\d-\s}(2^k)^{\s+2}\lambda\)
\end{multline}
hence, choosing $ \tilde \tau_k= \epsilon\frac{\beta}{C_\beta}(2^k h)^{\frac{\s-\d}{2}}$ and $ \lambda =2\epsilon \frac{1+ \beta}{\beta} (2^k h)^{\frac{\d+\s}{2}} (2^k)^{-\s-2}$
 we find that as soon as $\epsilon $ is small enough and $h$ is large enough (depending on $K$ and the other constants),  in $\mathcal{G}_{2L}$, it holds that
\be \Fluct_{\mu}\((\chi_k\varphi_0)\(\frac{\cdot N^{1/\d}}{h}\) \)\le 2\epsilon \frac{1+ \beta}{\beta}(2^k h)^{\frac{\d+\s}{2}}(2^k)^{-\s-2} \ee
except on an event of probability $\le \exp(-\beta\epsilon^2(2^{k}h)^\d)$. 
Using Proposition \ref{pro:controlfluct} at macroscopic scale, we also have 
\be  \Fluct_{\mu}\(\varphi_1\(\frac{\cdot N^{1/\d}}{h}\)\) \lesssim_\beta \(hN^{-1/\d}\)^{\s+2}N^{\frac{1}{2}+\frac{\s}{2\d}} .\ee
Taking a union bound on all these events, summing and using that $\s>\d-2$, we obtain that except for an event of probability $\le  \exp(- C'\beta\epsilon^2 h^\d)$, we have 
\begin{multline} \Fluct_{\mu}\(\kappa^i_{\vec{0}, h} \)\lesssim_\beta     h^{-1-\s }\( \epsilon h^{\frac{\d+\s}{2}}  \sum_{k=0}^{k_*} (2^k)^{\frac{\d-\s}{2}-2}+ \(hN^{-1/\d}\)^{\s+2}N^{\frac{1}{2}+\frac{\s}{2\d}} \)\\
\lesssim_\beta   \epsilon   h^{-1+\frac{\d-\s}{2}}+hN^{\frac{\d-\s-4}{2\d}}. \end{multline}
The second term can be absorbed into the first, since $h\le \epsilon^{\frac{2}{4+\s-\d}} N^{1/\d}$  and $ \d-\s-4<0$, hence
\begin{equation*}
hN^{\frac{\d-\s-4}{2\d}}\le \epsilon h^{1+\frac{\d-\s-4}{2}}=\epsilon h^{-1+\frac{\d-\s}{2}}.
\end{equation*}

Hence, we conclude that 
\be \label{boundEzp} |E(z_p)|^2\lesssim_\beta \epsilon^2   h^{\d-\s-2},\ee except on an event of probability $\le \exp(-C\beta\epsilon^2 h^\d)$.
For the proof of item (2) of the proposition where $h$ and $L$ are no longer comparable, we run the same argument, using that local laws hold on scale $h$ directly, except that after \eqref{fluct Chern}, we  choose 
$$\begin{cases}
 \tilde \tau_k= \frac1{\sqrt{C_\beta}}(2^k)^\ep \sqrt M (2^kh)^{-\d+\frac\s2}, \quad\lambda=\frac{2}{\sqrt{C_\beta}} \frac{1+\beta}{\beta}  (2^k)^\ep \sqrt M (2^k h)^{\frac\s2}(2^k)^{-\s-2}& \text{if} \ \s< 0\\
\tilde \tau_k= \frac1{C_\beta}(2^k)^\ep M(2^kh)^{-\d}, \quad  \quad\quad\lambda= 2  \frac{1+\beta}{\beta}  (2^k)^\ep(2^k h)^{\s}(2^k)^{-\s-2} & \text{if} \ \s\ge 0.
\end{cases}
$$ This yields that in $\mathcal G_{h}$, except with probability $\exp(-(2^k)^\ep M)$, we have
$$\Fluct_\mu \((\chi_k\varphi_0)\(\frac{\cdot N^{1/\d}}{h}\) \)\lesssim_\beta  \begin{cases}
\sqrt{M}  (2^k h)^{\frac{\s}{2}}(2^k)^{-\s-2+\ep}& \text{if} \ \s< 0\\ (2^k h)^{\s}(2^k)^{-\s-2+\ep} & \text{if} \ \s\ge 0.\end{cases}
 $$
Taking a union bound over these events and summing over $k$ in the same manner as above, we conclude that, choosing $\ep>0$ small enough,
\be \label{boundEzp2}|E(z_p)|\lesssim_\beta   \begin{cases}
h^{-1-\frac{\s}{2}}\sqrt{M}  & \text{if} \ \s< 0\\
h^{-1} & \text{if} \ \s\ge 0\end{cases}\lesssim_\beta \sqrt M h^{-1+(-\hal\s)_+}.\ee
where $(\cdot)_+$ denotes the positive part.

{\bf Step 3. Bound on $|\nab_{\R^\d} E|$.}  Let us start with the case of item (1).  We observe that, if $K \ge 2$, $2h \le L$, hence in 
$\carr_L \times [h/2, 2h]$, if $\carr_L \subset \bulk'$,  the bound 
$$\int_{\carr_L \times [h/2, 2h]}\yg |E|^2 \lesssim_\beta h^\d$$
is verified from \eqref{locallawMB} in $\mathcal G_{2L}$, with a constant depending on $K$.
In addition, $\nab_{\R^\d} E$ satisfies 
$$-\div (\yg \nab_{\R^\d} E) = 0 \quad \text{in} \ \carr_L \times [h/2, 2h].$$
Elliptic regularity estimates then yield that $x$ being the center of $\carr_L$,
\be\label{bdnabR2E} |\nab_{\R^\d} E(x, h)|^2 \lesssim_\beta  \frac{1}{h^{2+\d+1+\gamma}} \int_{\carr_h \times [h/2, 2h]}\yg |E|^2 \lesssim_\beta \frac{h^\d}{h^{3+\d+\gamma}}\lesssim_\beta h^{-4  +\d-\s}\ee
using $\d-1+\gamma=\s$.
For the proof of item (2), we use that the local laws hold down to scale $h$ hence all the estimates above hold directly on $\carr_h \times [-h/2,2h]$ and obtain the same result.

{\bf Step 4: Conclusion in the case of the whole space.} For item (1),
inserting \eqref{bdnabR2E} and \eqref{boundEzp}  into \eqref{intIp}, and recalling that $\d+\gamma=\s+1$, taking a union bound over the bad events, we obtain that except with probability $\le P e^{-C\beta L^\d}$, we have, using $\gamma+\d-\s=1$,
\begin{equation*}
\int_{\carr_L\times \{h\}} \yg |E|^2 \lesssim_\beta  P h^\gamma \( \frac{L^{\d+2}}{P^{1+\frac2\d}}h^{\d-\s-4}+\frac{L^\d}{P} \epsilon^2  h^{\d-\s-2}\)
\lesssim_\beta \frac{L^{\d-1}}{P^{\frac2\d}}+\epsilon^2  L^{\d-1} .
\end{equation*}
Choosing $P=  \epsilon^{-\frac\d2}$, we obtain the desired result since $\log P$ can be absorbed into $O(\beta L^\d)$ when $L$ is larger than a constant times $(\beta^{-1}\log \frac1\epsilon )^{\frac1\d}$.

For item (2), we obtain instead, inserting \eqref{bdnabR2E} and \eqref{boundEzp2}  into \eqref{intIp},
\begin{align}\notag
\int_{\carr_L\times \{h\}} \yg |E|^2 \lesssim_\beta  P h^\gamma \( \frac{L^{\d+2}}{P^{1+\frac2\d}}h^{\d-\s-4}+ M   \frac{L^\d}{P} h^{\max(-2\s,0)-2}\)\\
\lesssim_\beta \(\frac{L^{\d+2}}{P^{\frac2\d}h^3}+M  L^{\d} h^{\s-\d-1+(-\s)_+}\) .
\end{align}
Taking $P= M^{-\frac{\d}{2}}L^\d h^{\frac{\d( \d-\s_+-2)}{2}}$  to equate the last two terms yields the result.
We have thus concluded the proof in the case where $\Lambda = \R^{\d+1}$. 

{\bf Step 5. The case where $\Lambda$ is a hypercube.} 
First, let us justify that Theorem \ref{FirstFluct} holds as well for the Neumann energy setup as follows: let $U$ be a hyperrectangle in $\R^\d$ of sidelengths in $[\ell/2,2\ell]$ at the original scale, $\mu$ a density bounded below in $U$ with $N\mu(U)=\mn$ integer, $\F_N$ defined  as in \eqref{minneum}, except at the original scale, and the associated Gibbs measure $\mathbb{Q}_{N,\beta, H}(U,\mu)$
and partition function $\K_{N,\beta, H}(U,\mu) $ (we can assume that $\zeta=0$) as in \eqref{defQ} and \eqref{defK} but in original scale.
Let $\varphi$ be a test function in $U$ satisfying $\frac{\partial \varphi}{\partial n}=0$. While our analysis in Sections \ref{sec: tport}-\ref{sec: fluct} is based on an analysis via transport before applying the splitting formula Lemma \ref{Riesz Splitting Formula}, we can conduct a similar analysis post splitting as in \cite[Sections 3-4]{LS18}, finding that the expansion of next-order partition functions is governed by terms $T_1$ and $T_2$. More precisely, we can write analogously to Lemma \ref{lem2.1} that 
\begin{multline*}
\E_{\mathbb{Q}_{N,\beta, H}(U,\mu)} \left[ \exp\(-\beta t N^{1-\frac\s\d}\Fluct_{\mu}(\varphi)\)\indic_{\mathcal G}\right]
\\=\frac{1}{\K_{N,\beta, H}(U,\mu)} \int_{\mathcal G} \exp\(- \beta N^{-\frac\s\d} ( t N\Fluct_{\mu}(\varphi)+ \F_N(\XN, \mu, U\times [-H,H])) \) dX_{N}
\end{multline*}
We can use \eqref{737} to observe, by ``completing the square'' that if $\nu $ solves 
\be\label{intGU}
\int_U G_U(x,y) d\nu(y)=\varphi(x)\ee
then
\begin{multline}\label{rewrLaplaceneum}
\E_{\mathbb{Q}_{N,\beta, H}(U,\mu)} \left[ \exp\(-\beta t N^{1-\frac\s\d}\Fluct_{\mu}(\varphi)\)\indic_{\mathcal G}\right]
\\= \exp\(-\frac{\beta}{2} N^{2-\frac\s\d} t^2 \iint_{U^2}G_U(x,y) d\nu(x)d \nu(y)\)\frac{\K_{N,\beta,H}(U, \mu+t\nu) }{\K_{N,\beta, H}(U,\mu)}.\end{multline}
To solve \eqref{intGU}, we may reflect and periodize $\varphi$ across the faces of the hyperrectangle $U$, in such a way that $\varphi$ remains continuous and $\nab \varphi$ as well (thanks to the assumption $\nab \varphi \cdot \vec{n}=0$ on $\pa U$). Call $\varphi^{\mathrm{per}}$ the reflected and periodized function, then we can check that $\nu= \frac1{\cds}(-\Delta)^{\alpha}(\varphi^{\mathrm{per}})$ computed over $\R^\d$ solves \eqref{intGU} in $U$, where the fractional Laplacian of a periodic function is defined via the Fourier series representation as in \cite{RS15}. 
We can then analyze \eqref{rewrLaplaceneum} as in the case of the full space by using the transport map
\begin{equation*}
\begin{cases}
\div(\psi \mu)=\nu & \text{in }U\\
\psi \cdot \vec n =0 & \text{on }\partial U.
\end{cases}
\end{equation*}
for which estimates as a function of $\varphi$ are easy to obtain, in fact more easily than in the full space case treated in Section \ref{sec: tport}.
Starting from \eqref{rewrLaplaceneum}, and using the Neumann transport calculus as in \cite[Section 9.2]{S24}, we may then obtain the analogue of Theorem \ref{FirstFluct} for $\varphi$. Spelling out $\nab G_U$ we obtain an expression for the electric field in the Neumann case analogous to \eqref{spellout}, which expresses it as the fluctuation of a function in the class of the $\varphi$'s just analyzed and we can complete the proof in the same way.

\end{proof}

This estimate allows us to better understand the screening errors in Proposition \ref{Riesz screening result}, and show that the initial and screened fields are comparable.

\begin{coro}\label{tailored screening} Let $\Lambda $ be $\R^{\d+1}$ or a hyperrectangle as above. Let $\carr_L(z)$ be as above, i.e. such that $\carr_{2^{k_*}L}= \carr_\ep\subset \bulk'$ in the case $\Lambda = \R^{\d+1}$.
Let $\ \Omega=Q_L\cap \Lambda$ with $Q_L$ a hyperrectangle such that $\carr_L\cap \Lambda\subset Q_L \cap \Lambda \subset \carr_{\frac{3}{2}L}\cap \Lambda$, and $\int_\Omega \mu$ is an integer.
Let $X_n=\XN\vert_\Omega$ and let $Y_{\mn}$ be as in Proposition \ref{Riesz screening result}, and recall the definitions of $\G^{\mathrm{inn}}_{a,h}$ and $\G^{\mathrm{ext}}_{a,h}$ from \eqref{innernrj}. 
We next make the choice
\begin{equation}\label{def a}
a=\begin{cases}
C_\beta\epsilon L^{\d-1} & \text{ in case }(1) \text{ below}\\
C_\beta ML^\d h^{\s_+-\d-1} & \text{ in case }(2) \text{ below},
\end{cases}
\end{equation} for the $C_\beta$ implicitly appearing in \eqref{local control of top energy}, resp. \eqref{local control of top energy2}.
 \begin{enumerate}
 \item Let $\mathcal{G}_{2L}$ be the event in \eqref{induchyp}. Let $\epsilon>0$. There exists $K, K_1\ge 1$ such that for  $h=L/K$, $h \ge \rho_\beta $, $\tilde{\ell}=\frac{L}{K_1}$, and for  $\mathcal{G} \subset \mathcal{G}_{2L}$ as in part (1) of Proposition \ref{decay control}, we have
for all configurations $X_N$  in $\mathcal{G}$, 
  \begin{multline}\label{tailored inner screening error - ll}
\F(Y_{\mn}, \tilde{\mu}, \Omega \times [-L,L])-\G^{\mathrm{inn}}_{a,h}(X_n, \Omega \times [-L,L])\lesssim \\\sum_{(i,j)\in J}\g(x_i-z_j)+\F\(Z_{\mn-n_\Old}, \tilde{\mu}, \New \times [-h,h]\) +|n-\mn|+C_\beta \epsilon L^\d,
\end{multline}
and
\begin{multline}\label{tailored outer screening error - ll}
\F(Y_{\mn}, \tilde{\mu}, (\Omega \times [-L,L])^c)-\G^{\mathrm{ext}}_{a,h}(X_{N-n}, (\Omega \times [-L,L])^c) \lesssim \\ \sum_{(i,j)\in J}\g(x_i-z_j)+\F(Z_{\mn-n_\Old}, \tilde{\mu}, \New \times [-L,L]) +|n-\mn|+C_\beta\epsilon L^\d.
\end{multline}
\item 
Let $\mathcal{G}_{h}$ be as in item (2) of Proposition \ref{decay control}.
Let $M \ge 1$,  $L>h\ge \rho_\beta$, and $\mathcal{G} \subset \mathcal{G}_{h}$ as in item (2) of Proposition \ref{decay control}. Suppose that
\begin{equation}\label{height control}
C_\beta \frac{ML^2 h^{\s+\s_+-2\d}}{\tilde{\ell^2}} \ \mathrm{sufficiently } \, \mathrm{small} , \qquad C_\beta h L^{\frac{1-\d}{\d-\gamma}}\ \mathrm{sufficiently } \, \mathrm{large}\,  \text{ if }\, \s \geq \d-1.
\end{equation}
Then, for all configurations in $\Gc$,
 \begin{multline}\label{tailored inner screening error - gen}
\F(Y_{\mn}, \tilde{\mu}, \Omega \times [-L,L])-\G^{\mathrm{inn}}_{a,h}(X_n, \Omega \times [-L,L])\lesssim \sum_{(i,j)\in J}\g(x_i-z_j)+|n-\mn|+\\
\F(Z_{\mn-n_\Old}, \tilde{\mu}, \New \times [-h,h]) +\(\frac{h^2}{\tilde{\ell}}+\tilde{\ell}+h^{-\gamma}\)L^{\d-1}+C_\beta\( \frac{L^{2}}{\tilde{\ell}}+L\)M L^{\d} h^{\s_+-\d-1}
\end{multline}
and
\begin{multline}\label{tailored outer screening error - gen}
\F(Y_{\mn}, \tilde{\mu}, (\Omega \times [-L,L])^c)-\G^{\mathrm{ext}}_{a,h}(X_{N-n}, (\Omega \times [-L,L])^c) \lesssim \sum_{(i,j)\in J}\g(x_i-z_j)+|n-\mn|+\\ \F(Z_{\mn-n_\Old}, \tilde{\mu}, \New \times [-h,h]) +\(\frac{h^2}{\tilde{\ell}}+\tilde{\ell}+h^{-\gamma}\)L^{\d-1}+C_\beta \frac{L}{\min(h,\tilde{\ell})} \( \frac{L^{2}}{\tilde{\ell}}+L\) M L^{\d} h^{\s_+-\d-1}.
\end{multline}
\end{enumerate}
\end{coro}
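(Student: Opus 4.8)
\textbf{Proof proposal for Corollary \ref{tailored screening}.}

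The plan is to feed the decay estimates of Proposition \ref{decay control} into the screening result Proposition \ref{Riesz screening result} applied on the hyperrectangle $\Omega = Q_L \cap \Lambda$. First I would invoke \eqref{locali3}, which says that $\F^{\tilde\Omega}(\XN,\mu,\Lambda)$ dominates $\G^{\mathrm{inn/out}}$; since any minimizer $w$ realizing $\G^{\mathrm{inn}}_{a,h}(X_n,\Omega\times[-L,L])$ is by definition screenable at height $h$ with $e(X_n,w,h)\le a$, I would apply Proposition \ref{Riesz screening result} to this minimizing $w$ (rather than to an arbitrary electric field), so that the terms $\frac{1}{2\cds}(\int_{Q_R\times[-R,R]}\yg|\nabla w_{\rrh}|^2 - \cds\sum\g(\rrh_i)) + \sum\int\f_{\rrh_i}d\mu$ appearing on the right of \eqref{Riesz screening error}--\eqref{Riesz outer screening error} are exactly $\G^{\mathrm{inn}}_{a,h}(X_n,\Omega\times[-L,L])$ up to the screenability constraints. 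Here I should use Remark \ref{height change} to run the screening in $Q_L\times[-L,L]$ even though the minimizing field for $\G$ lives in that box already, which is fine. This reduces the desired inequalities \eqref{tailored inner screening error - ll}--\eqref{tailored outer screening error - gen} to bounding the explicit error terms in \eqref{Riesz screening error}--\eqref{Riesz outer screening error}: namely $\frac{hS(X_n,w,h)}{\tilde\ell}$, the double sum $\sum_{(i,j)\in J}\g(x_i-z_j)$, $|n-\mn|$, the $e$-terms $(\frac{R^2}{\tilde\ell}+R)e(X_n,w,h)$ (times $\frac{R}{\min(h,\tilde\ell)}$ in the outer case), $(\tilde\ell + h^{-\gamma})R^{\d-1}$, and $\F(Z_{\mn-n_\Old},\tilde\mu,\New_\eta\times[-h,h])$.

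Next I would dispatch these error terms one by one. The double sum, $|n-\mn|$, and the $\F(Z_{\cdot},\tilde\mu,\New_\eta\times[-h,h])$ term are simply carried over verbatim into the statement. For the $e$-term, the whole point of Proposition \ref{decay control} is that on the good event $\mathcal{G}$ one has $e(X_n,w,h) = \int_{\carr_{2L}\times\{\pm h\}}\yg|\nabla u|^2 \lesssim_\beta \epsilon L^{\d-1}$ (item (1), $h=L/K$) or $\lesssim_\beta ML^\d h^{\s_+-\d-1}$ (item (2)); this is precisely why the constant $a$ in \eqref{def a} is chosen this way, so that $a$ is a legitimate upper bound for $e$ on $\mathcal{G}$ and the minimizer defining $\G_{a,h}$ is nonvacuous. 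One subtlety: in Proposition \ref{decay control} the field $u$ is the global one attached to $\mu$ via \eqref{defiu}, whereas $w$ in the definition of $\G$ is a local screenable field; I would handle this by noting that in the definition \eqref{innernrj} of $\G_{a,h}^{\mathrm{inn}}$ the competing fields $w$ are exactly constrained by $e(X_n,w,h)\le a$, so once we restrict to that class the bound on $e$ is automatic and no comparison with $u$ is needed — the role of Proposition \ref{decay control} is only to guarantee (on $\mathcal{G}$) that the class is nonempty, i.e. that $u\vert_\Omega$ itself, suitably truncated, is an admissible competitor with $e\le a$. In case (1), with $R\asymp L$, $\tilde\ell = L/K_1$, the term $(\frac{R^2}{\tilde\ell}+R)e \lesssim K_1 L\cdot C_\beta\epsilon L^{\d-1} = C_\beta K_1 \epsilon L^\d$, which after fixing $K_1$ (absorbed into $\lesssim$) gives the $C_\beta\epsilon L^\d$ term; likewise $(\tilde\ell+h^{-\gamma})R^{\d-1} \lesssim (L/K_1 + (L/K)^{-\gamma})L^{\d-1}\lesssim C_\beta\epsilon L^\d$ provided $K_1,K$ are chosen large enough depending on $\epsilon$ (this is where $K,K_1$ in the statement come from), and $\frac{hS(X_n,w,h)}{\tilde\ell}$ is controlled by the local law at scale $2L$ applied to the minimizing $w$ of $\G$, namely $S\lesssim \int_{(\ldots)\times[-h,h]}\yg|\nabla w_{\rrh}|^2 \lesssim \mathcal{C}_\beta L^\d$ via \eqref{arrivC0}, so $\frac{hS}{\tilde\ell}\lesssim \frac{(L/K)(\mathcal{C}_\beta L^\d)}{L/K_1}$... wait, that is $\asymp L^\d$ not $\epsilon L^\d$; I expect this term actually needs the finer bound $\min(S', S/\tilde\ell)$ appearing in the screenability condition \eqref{Riesz screenability} to be controlled using the stronger $S'$-estimate or a sub-additive slicing argument. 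Pinning this down is the main obstacle (see below).

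In case (2), with $h\ll L$ and $\tilde\ell$ a free parameter, I would substitute $e\lesssim_\beta ML^\d h^{\s_+-\d-1}$ into $(\frac{L^2}{\tilde\ell}+L)e$ to get the last term of \eqref{tailored inner screening error - gen}, and keep $(\frac{h^2}{\tilde\ell}+\tilde\ell+h^{-\gamma})L^{\d-1}$ as stated, with $\frac{h^2}{\tilde\ell}L^{\d-1}$ coming from the $\frac{hS}{\tilde\ell}$ term once $S$ is bounded by $\lesssim \mathcal{C}_\beta h L^{\d-1}$ using the local law at scale $h$ on a slice of width $h$ — indeed the set $(Q_{R-\tilde\ell}\setminus Q_{R-2\tilde\ell})\times[-h,h]$ has $\R^\d$-volume $\asymp \tilde\ell L^{\d-1}$, covered by $\asymp \tilde\ell L^{\d-1} h^{-\d}$ cubes of size $h$, on each of which $\int\yg|\nabla w_{\rrh}|^2\lesssim_\beta \mathcal{C}_\beta h^\d$, giving $S\lesssim_\beta \mathcal{C}_\beta \tilde\ell L^{\d-1}$ hence $\frac{hS}{\tilde\ell}\lesssim_\beta \mathcal{C}_\beta h L^{\d-1}$; iterating the screening over the $j$-index as in Lemma \ref{lemsubdi} if necessary. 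The hypotheses \eqref{height control} are exactly what is needed so that $h\le R/2$ and so that in the regime $\s\ge\d-1$ the screenability threshold \eqref{Riesz screenability} is met with room to spare (the second condition in \eqref{height control} forces $\frac{h^\gamma e}{R^{\d-2}\tilde\ell^2}$ below $\mathsf{c}$), and the first condition guarantees $\frac{h^{1+\gamma}}{\ell^{\d+1}}\min(S',S/\tilde\ell)\le \mathsf{c}$ after the slicing bound. The main obstacle I anticipate is precisely verifying the screenability condition \eqref{Riesz screenability} on the good event — i.e. checking that the two quantities $\frac{1}{R^{\d-2}\tilde\ell^2}h^\gamma e$ and $\frac{h^{1+\gamma}}{\ell^{\d+1}}\min(S',S/\tilde\ell)$ are both $\le\mathsf{c}$ — since this requires combining the $e$-bound from Proposition \ref{decay control}, the $S$-bound from the local law via \eqref{arrivC0} and a slicing/covering argument, and the $S'$-bound (a localized version, using the local law on a single size-$\ell$ cube), and then choosing $K,K_1,\tilde\ell,h$ compatibly with all the smallness requirements; once screenability holds, plugging the bounds into \eqref{Riesz screening error}--\eqref{Riesz outer screening error} and simplifying is routine. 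The outer case is identical modulo the extra factor $\frac{R}{\min(h,\tilde\ell)}$ multiplying the $e$-term, which is already reflected in \eqref{tailored outer screening error - gen}.
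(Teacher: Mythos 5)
Your overall architecture is correct and matches the paper: apply Proposition \ref{Riesz screening result} to a screenable competitor $w$, verify the screenability conditions \eqref{Riesz screenability} on the good event using Proposition \ref{decay control} to control the $e$-term, bound the $S$-term by a covering/slicing argument, and then insert the definition of $\G_{a,h}$. Two points, however, deserve attention.

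First, your ``wait, that is $\asymp L^\d$'' worry in case (1) is a false alarm. With the crude bound $S(X_n,w,h) \leq \int_{\Omega\times[-L,L]}\yg|\nabla w_{\rrh}|^2 \lesssim_\beta \mathcal{C}_\beta L^\d$, and with $h = L/K$, $\tilde\ell = L/K_1$, one gets $\frac{h}{\tilde\ell}S \lesssim_\beta \frac{K_1}{K}\mathcal{C}_\beta L^\d$, and since the statement allows you to \emph{choose} $K$ as large as you like relative to $K_1$, $\mathcal{C}_\beta$ and $\epsilon$, this is $\lesssim \epsilon L^\d$. The freedom in $K, K_1$ is exactly why the statement says ``there exists $K,K_1 \ge 1$ such that''. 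No finer slicing is needed for case (1); the paper does carry out the slicing to record the sharper $S \lesssim_\beta \mathcal{C}_\beta h L^{\d-1}$, which is genuinely needed only in case (2) where $h \ll L$.

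Second, your plan to apply the screening to ``the minimizing $w$ of $\G$'' is morally right but is missing the ingredient that makes the $S$-bound go through. The screenability constraint built into \eqref{innernrj} only gives $\frac{h^{1+\gamma}}{\ell^{\d+1}}\min(S',S/\tilde\ell) \leq \mathsf{c}$, which is far too weak a bound on $S(X_n,w^*,h)$ in terms of $L$ and $h$. What rescues the argument is the observation that, on the good event, the true potential $u$ (from \eqref{defiu}) is itself an admissible competitor in $\G_{a,h}$ -- that is exactly what the choice of $a$ in \eqref{def a} via Proposition \ref{decay control} achieves -- so the minimizer $w^*$ satisfies $\int_{\Omega\times[-L,L]}\yg|\nabla w^*_{\rrh}|^2 \leq \int_{\Omega\times[-L,L]}\yg|\nabla u_{\rrh}|^2$, and the latter is controlled by the local laws on $\mathcal{G}$. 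This is exactly how the paper sets things up: rather than singling out the minimizer, it considers the class of $w$ satisfying both $\int \yg|\nabla w_{\rrh}|^2 \leq \int \yg|\nabla u_{\rrh}|^2$ and $e(X_n,w,h)\leq a$, notes that $u$ belongs to it (so it is nonempty and contains the $\G$-minimizer), verifies screenability uniformly over this class using the local laws, and applies the screening; the infimum over this class recovers $\G_{a,h}$. Your remark that Proposition \ref{decay control} ``only guarantees nonemptiness'' is therefore slightly short of the mark: the competitor-class set-up is also what lets you transfer the local-law energy bound on $u$ to a bound on $S(X_n,w^*,h)$. Finally, a small slip: in case (2) of \eqref{height control}, it is the \emph{first} condition that controls the $e$-part $\frac{h^\gamma e}{R^{\d-2}\tilde\ell^2}$ of \eqref{Riesz screenability}, while the \emph{second} condition (only needed for $\s \ge \d-1$) handles the $S$-part via the covering bound $S \lesssim_\beta \mathcal{C}_\beta h L^{\d-1}$; you stated it the other way around.
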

\begin{proof}
We will apply Proposition \ref{Riesz screening result}, and control the error terms using Proposition \ref{decay control}. Let us focus on the proof for the inner screening, since the outer case is analogous.  Let $w$ be any potential associated to $\Omega$ as in \eqref{inner screening w} with 
\begin{align*}
\frac{1}{2\cds}\int_{\Omega \times [-L,L]}\yg |\nabla w_{\rrh}|^2 &\leq \frac{1}{2\cds}\int_{\Omega \times [-L,L]}\yg |\nabla u_{\rrh}|^2 \\
   \int_{\Omega \times \{\pm h\}}|\nabla w|^2 & \leq a.
\end{align*}
At least one such potential exists, namely the true potential $u$ given in \eqref{defiu}. 

\textbf{Step 1: Screenability.}
Let us comment on screenability in the general case (2) first. For $\s \leq \d-1$ we verify screenability on $S'(X_n,w,h)$, obtaining, for $h$ large enough
\begin{equation*}
\frac{h^{1+\gamma}}{h^{\d+1}}S'(X_n,w,h) \leq \frac{\mathcal{C}_\beta h^{\d+1+\gamma}}{h^{\d+1}} \leq h^\gamma \leq \mathsf{c}
\end{equation*}
on $\mathcal{G}_h$, since $\s < \d-1$ implies that $ \gamma < 0$ and $h \gg 1$. If $\s \geq \d-1$, we instead need to verify screenability with $S(X_n,w,h)$. By a covering argument we have
\begin{equation}\label{eq: coverS}
S(X_n,w,h) \leq \(\frac{L}{h}\)^{\d-1}\mathcal{C}_\beta h^\d \leq \mathcal{C}_\beta hL^{\d-1}
\end{equation} 
and obtain
\begin{equation*}
\frac{h^{1+\gamma}}{h^{\d+2}}\mathcal{C}_\beta hL^{\d-1}=\mathcal{C}_\beta \frac{L^{\d-1}}{h^{\d-\gamma}}
\end{equation*}
which can be made $\leq \mathsf{c}$ so long as $h \gg L^{\frac{\d-1}{\d-\gamma}}$, yielding the additional condition on $h$. Coupling these estimates with Proposition \ref{decay control} yields the second item in \eqref{height control}.


For item (1), we specialize to $h=\frac{L}{K}$ and the announced $a$. We also set $\ell=\frac{L}{K}$ and $\tilde{\ell}=\frac{L}{K_1}$, with $K>K_1$ to be determined. Using the bound \eqref{local control of top energy}, we see that the first item of \eqref{Riesz screenability1}--\eqref{Riesz screenability2} becomes
\begin{equation*}
C_\beta\frac{\epsilon L^{\d-1}h^\gamma}{L^{\d-2}\tilde{\ell}^2}=C_\beta \frac{\epsilon K_1^2 L^{1+\gamma}}{K^\gamma L^2}\leq \mathsf{c}
\end{equation*}
which holds on the event $\mathcal{G}$ of Proposition \ref{decay control} for appropriate choice of constants since $\gamma<1$.

For item $(2)$, we instead find using \eqref{local control of top energy2} that the first item in \eqref{Riesz screenability1}--\eqref{Riesz screenability2} becomes 
\begin{equation*}
C_\beta \frac{h^\gamma ML^\d h^{\s_+-\d-1}}{L^{\d-2}\tilde{\ell}^2}=C_\beta \frac{ML^2 h^{\s+\s_+-2\d}}{\tilde{\ell^2}}\leq \mathsf{c}
\end{equation*}
using $\gamma=\s+1-\d$, which is the first item in \eqref{height control}.

\textbf{Step 2: Screening.}
We now apply and control the errors in Proposition~\ref{Riesz screening result}. Let us start with the general case (2). First, we have
\begin{equation*}
\frac{h}{\tilde{\ell}}S(X_n,w,h) \leq \mathcal{C}\frac{h^2}{\tilde{\ell}}L^{\d-1}
\end{equation*}
using the covering argument from the screenability computation in Step 1. 
 Substituting in the bound from Proposition \ref{decay control} yields 
  \begin{multline*}
\F(Y_{\mn}, \tilde{\mu}, \Omega \times [-L,L])-\(\frac{1}{2\cds}\(\int_{Q_R\times[-R,R]}|y|^\gamma|\nabla w_{\rrh}|^2-\cd \sum_{i=1}^n \g(\rrh_i)\) +\sum_{i=1}^n  \int \f_{\rrh_i}(x-x_i) d\mu(x)\)\\
\lesssim \sum_{(i,j)\in J}\g(x_i-z_j)+
\F(Z_{\mn-n_\Old}, \tilde{\mu}, \New \times [-h,h]) +|n-\mn|\\
+\(\frac{h^2}{\tilde{\ell}}+\tilde{\ell}+h^{-\gamma}\)L^{\d-1}+ C_\beta \( \frac{L^{2}}{\tilde{\ell}}+L\)M L^{\d} h^{\s_+-\d-1}.
\end{multline*}
Since the right hand side is uniform for any $w$ associated to $\Omega$ as in \eqref{inner screening w} with the requisite decay, we obtain the result by inserting the definition of $\G_{a,h}^{\mathrm{inn}}(X_n, \tilde \Omega) $.

In case (1), the screening errors in the previous computation then take the form
\begin{multline*}
\mathcal{C}_\beta \(\frac{h^2}{\tilde{\ell}}+\tilde{\ell}+h^{-\gamma}\)L^{\d-1}+\( \frac{L^{2}}{\tilde{\ell}}+L\)e(X_n,w,h) \\
\leq \mathcal{C}_\beta \(\frac{K_1}{K^2}+\frac{1}{K_1}\) L^\d+K^\gamma L^{\d-1-\gamma}+C_\beta\epsilon\(K_1^{2}+1\)L^\d
\end{multline*}
where we have inserted the control on $e(X_n,w,h)$ with the local laws parameters from Proposition \ref{decay control}. By choosing $K$ and $K_1$ appropriately and redefining $\epsilon$ from Proposition \ref{decay control}, we have that this is $\lesssim_\beta \epsilon L^\d$ as desired, since $|\gamma|<1$. A similar computation holds for the error terms in the outer screening.
\end{proof}

\subsection{Analysis of Exponential Moments}
To get control on the level of exponential moments, we need a sufficient volume of configurations for which we can screen. This is given by the following.

\begin{prop}\label{volume of configurations}
Keep the same assumptions and cases as in Corollary \ref{tailored screening}. Let $\mathcal{G}$ denote the good event of Corollary \ref{tailored screening}, and let $a$ be as in \eqref{def a}.
Let $n \leq N$, and define 
\begin{align}\label{integration restriction of good event}
\mathcal{G}_\Omega&=\{X_n \in \Omega^n: X_n=\XN \vert_\Omega \text{ for some }\XN \in \mathcal{G}\} \\
\mathcal{G}_{\Omega^c}&=\{X_{N-n} \in (\Omega^c)^{N-n}: X_{N-n}=\XN \vert_{\Omega^c} \text{ for some }\XN \in \mathcal{G}\}
\end{align}
Then, we have 
\begin{equation*}
n^{-n}\int_{\mathcal{G}_\Omega}\exp\left\{-\beta \G^{\mathrm{inn}}_{a,h}(X_n, \mu,\Omega \times [-L,L])\right\}~d\rho^{\otimes n}(X_n)\leq \K_{\beta,L}(\Omega,\mu,\zeta)\exp(\beta \varepsilon_e+\varepsilon_v)
\end{equation*} 
and 
\begin{align*}
(N-n)^{n-N}\int_{\mathcal{G}_{\Omega^c}}\exp\left\{-\beta \G^{\mathrm{ext}}_{a,h}(X_{N-n}, \mu,(\Omega \times [-L,L])^c)\right\}&~d\rho^{\otimes N-n}(X_{N-n}) \\
&\leq \K_{\beta,L}^{\mathrm{ext}}(\Omega,\mu,\zeta)\exp(\beta \varepsilon_e+\varepsilon_v),
\end{align*}
where $\rho$ denotes the confinement measure $d\rho(x)=e^{-\zeta (x)}~dx$, $\varepsilon_e$ denotes the energy error
\begin{equation*}
\varepsilon_e=\begin{cases}
|\mn-n|+C_\beta\epsilon L^\d+\tilde{\ell}^\d+\frac{\mathcal{C}L^\d}{\tilde{\ell}}  & \text{in case } (1) \\
|\mn-n|+\tilde{\ell}^\d+\frac{ \mathcal{C} hL^{\d-1}}{\tilde{\ell}}+\(\frac{h^2}{\tilde{\ell}}+\tilde{\ell}+h^{-\gamma}\)L^{\d-1}+C_\beta\( \frac{L^{2}}{\tilde{\ell}}+L\)M L^{\d} h^{\s_+-\d-1}
& \text{in case }(2)
\end{cases}
\end{equation*}
in the inner screening, and 
\begin{equation*}
\varepsilon_e=\begin{cases}
|\mn-n|+C_\beta\epsilon L^\d+\frac{\mathcal{C}L^\d}{\tilde{\ell}}  & \text{in case }(1) \\
|\mn-n|+\frac{ \mathcal{C} hL^{\d-1}}{\tilde{\ell}}+\(\frac{h^2}{\tilde{\ell}}+\tilde{\ell}+h^{-\gamma}\)L^{\d-1}+\\C_\beta\frac{L}{\min(h,\tilde{\ell})} \( \frac{L^{2}}{\tilde{\ell}}+L\)ML^\d h^{\s_+-\d-1}&\text{in case }(2)
\end{cases}
\end{equation*}
in the outer screening, and $\varepsilon_v$ denotes the volume error, with estimate
\begin{multline}\label{volerreurs}
\varepsilon_v\leq  C\frac{L^{\d-1}}{\ell^{\d-1}}\log \ell+ 
C\tilde \ell L^{\d-1} 
+(\mn-n-\alpha)\log \frac{\alpha}{\alpha'}+\left(\mn-n-\alpha-\frac{1}{2}\right)\log \left(1+\frac{n-\mn}{\alpha}\right)-\frac{1}{2}\log \frac{n}{\mn},
\end{multline}
where $\alpha,\alpha'$  satisfy
\be \label{cara}
\tilde \ell L^{\d-1}\lesssim \alpha, \alpha' \lesssim \tilde \ell L^{\d-1}
.\ee
\end{prop}

\begin{proof}

We focus on the integral over $\mathcal{G}_\Omega$ (inner screening), since the argument for the integral over $\mathcal{G}_{\Omega^c}$ is an analogous application of Corollary \ref{tailored screening}. Much of the proof is as in \cite[Proposition 3.5]{P24} and \cite[Proposition 4.2]{AS21}, with the screening and combinatorial accounting updated as in \cite[Proposition 8.2]{S24}.

Let us first consider the situation in which \eqref{Riesz screenability1} holds, in which  case, for any configuration $X_n$, the screening contour $\Gamma$ is the boundary of a hyperrectangle $Q_t=\Old$ where we will denote $t:= t(X_n)$. We will also denote $n_{\Old }(X_n)$ the number of points of $X_n$ in $\Old$. We will  abbreviate $S(X_n, w,h)$ by $s$ and recall that for $X_n \in \mathcal G_\Omega$, we have as  in  \eqref{eq: coverS}
\be\label{boundsS} s\le \mathcal C_\beta h L^{\d-1}.\ee

 Let us also denote by $\Phi(X_n)$ the set of  all configurations $Y_{\mn}$ produced by screening the configuration $X_n$, i.e.~all those $Y_\mn$'s coinciding with $X_n$ in $\Old=\Old(X_n)$ and with an arbitrary configuration $Z_{\mn-n_\Old(X_n)}$ in $\New=\New (X_n)$, with all possible relabellings. We wish to integrate the screening estimates over all possible $Y_\mn$'s but we have to do it with care as a given configuration could be produced multiple times by the screening.  We will also condition on $n_\Old(X_n)$ being equal to a given integer $k\le \mn$.

The parameter $0 \le \eta\le \ell$ being given, we split the possibly values of $t$ into intervals $t\in I_p:=[L-\tilde \ell - (p+1)\eta, L-\tilde \ell -p\eta)$, $p$ integer. The integer $p$ ranges from $0$ to $ C\tilde \ell / \eta$. 
We denote by $\mathcal A$ the set of admissible integer $p$'s and $k$'s, and note that it is included in $[0, \lfloor \frac{C\tilde \ell}{\eta}\rfloor]\times [0,\mn]$.


We next let $ \nu = (t(X_n), n_\Old(X_n))\# \rho^{\otimes n}$, i.e.~the ``law'' on $\mathbb{N}\times \mathbb{N}$ equal to the ``law'' of $(t(X_n),
n_\Old(X_n)) $ for  $X_n \sim \rho^{\otimes n}.$
We 
have a natural disintegration 
\be \label{desintegmu} d\rho^{\otimes n}(X_n) = \sum_{(p,k)\in \mathcal A} d(\rho^{\otimes n})_{p,k} (X_n) \nu(p,k).\ee
 Here, the conditional measure $(\rho^{\otimes n})_{p,k}$ is supported on $\{X_n, t(X_n)\in I_p, n_{\Old}(X_n)=k\}$, and given by 
\be \label{2}(\rho^{\otimes n})_{p,k}(A)= \frac{\rho^{\otimes n} (A \cap \{ t(X_n) \in I_p, n_{\Old}(X_n)=k\})}{\rho^{\otimes n} ( \{ t(X_n) \in I_p, n_{\Old}(X_n)=k\})}= 
 \frac{\rho^{\otimes n} (A \cap \{ t(X_n) \in I_p, n_{\Old}(X_n)=k\})}{\nu(p,k)}\ee

 Then, for given $p,k$, we define the multiplicity measure 
\be\label{defmesurem}d \mathsf{m}(Y_{\mn},p,k)= \int_{X_n|Y_{\mn}\in \Phi(X_n)} d\rho^{\otimes (\mn-k)}((Y_{\mn})_{S_k^c} )  d(\rho^{\otimes n})_{p,k}(X_n) ,\ee
where for each $Y_{\mn}\in \Phi(X_n)$, $S_k$ is a set of indices of cardinality $k$ such that the configuration $(Y_{\mn})_{S_k}$ coincides with $\{ X_n\} \cap Q_{t(X_n)}$ (here $(Y_\mn)_{S}$ denotes the restriction of the $\mn$-tuple to the indices in $S$).
By Fubini's theorem, we can write 
\begin{multline}
\int_{\Omega^\mn } \exp\(-\beta \F(Y_\mn,\mu,\Omega\times [-L,L]) \) d\mathsf{m}(Y_\mn,p,k) 
\\ \ge
 \int_{\mathcal G_\Omega } \(\int_{\Phi(X_n)} \exp\(-\beta \F(Y_{\mn}, \mu,\Omega\times [-L,L]) \)d\rho^{\otimes( \mn-k)}((Y_{\mn})_{S_k^c}) \)d(\rho^{\otimes n})_{p,k}(X_n).
\end{multline}
Grouping the error estimate from Corollary \ref{tailored screening} into
\begin{equation*}
\varepsilon_e=C\begin{cases}
|\mn-n|+C_\beta \epsilon L^\d  & \text{in case } (1) \\
|\mn-n|+\(\frac{h^2}{\tilde{\ell}}+\tilde{\ell}+h^{-\gamma}\)L^{\d-1}+C_\beta\( \frac{L^{2}}{\tilde{\ell}}+L\)M L^{\d} h^{\s_+-\d-1}
 & \text{in case }  (2)
\end{cases}
\end{equation*}
with $a$ as in \eqref{def a}, we may insert into the right-hand side the relation
\begin{multline*}\F(Y_\mn, \mu,\Omega\times [-L,L]) \\ \le  \G_{a,h}^{\mathrm{inn}}(X_n, \Omega\times [-L,L]) + 
 C\Bigg(   \F(Z_{\mn-\N} ,\tilde\mu, \New\times [-h,h])   + \varepsilon_e
+
\sum_{(i,j) \in J} \g(x_i-z_j) \Bigg).
\end{multline*}
Thus, we obtain
\begin{multline}\label{33}
\int_{\mathcal G_\Omega} \exp\(-\beta \F(Y_\mn,\mu,\Omega\times [-L,L]) \)d\mathsf{m}(Y_\mn,p,k) 
\\ \ge \int_{\mathcal G_\Omega} \(\int_{\Phi(X_n)} e^{-\beta \G_{a,h}^{\mathrm{inn}}(X_n,\Omega\times [-L,L]) -C\beta\( \F(Z_{\mn -n_\Old},\tilde\mu, \New\times [-h,h])+\varepsilon_e
+
\sum_{(i,j) \in J} \g(x_i-z_j) \)} d\rho^{\otimes( \mn-k)}((Y_{\mn})_{S_k^c})  \)  \\ 
\qquad \times d(\rho^{\otimes n})_{p,k}(X_n)
\\ =\binom{\mn}{k} \int_{\mathcal G_\Omega}\(\int e^{- C\beta  \F(Z_{\mn -n_\Old}, \tilde \mu, \New\times [-h,h]) -C \beta \sum_{(i,j) \in J} \g(x_i-z_j)  } d\rho^{\otimes( \mn-k)}|_{Q_L\backslash Q_{t(X_n)}} (Z_{\mn-k})  \)   \\
\times \exp\(-\beta \G_{a,h}^{\mathrm{inn}}(X_n,\mu, \Omega \times [-L,L])  - C \beta\varepsilon_e\)d(\rho^{\otimes n})_{p,k}(X_n).
\end{multline}

To estimate the inner integral, we  make use of  Jensen's inequality coupled with a tilt as in the proof of Proposition~\ref{pro718}. The argument is exactly as in \cite[Proposition 4.2]{AS21} or \cite[Lemma 8.5]{S24}. We can rewrite the interior integral as
\begin{align*}
I=\int_{\New^{\mn-\N}}\exp \biggl(&-\beta C\left(\sum_{(i,j)\in J}\g(x_i-z_j)+\F(Z_{\mn-\N},\tilde{\mu}, \New \times [-h,h])\right) \\
&-\sum_{i=1}^{\mn-\N}\log \tilde{\mu}(z_i)\biggr)~d\tilde{\mu}\vert_{\New}^{\otimes {\mn-\N}}(Z_{\mn-\N})
\end{align*}
where we have used that $\rho$ agrees with the Lebesgue measure inside of the bulk. Applying Jensen's inequality, we then have 
\begin{equation*}
I\geq \tilde{\mu}(\New)^{\mn-\N}\exp (A+B+C)
\end{equation*}
with 
\begin{align*}
A&=\tilde{\mu}(\New)^{\N-\mn}\int_{\New^{\mn-\N}}-\beta C\sum_{(i,j)\in J}\g(x_i-z_j)~d\tilde{\mu}\vert_{\New}^{\otimes {\mn-\N}}(Z_{\mn-\N}) \\
B&= \tilde{\mu}(\New)^{\N-\mn}\int_{\New^{\mn-\N}}-\beta C\F(Z_{\mn-\N},\tilde{\mu}, \New \times [-h,h])~d\tilde{\mu}\vert_{\New}^{\otimes {\mn-\N}}(Z_{\mn-\N}) \\
C&= -\tilde{\mu}(\New)^{\N-\mn}\int_{\New^{\mn-\N}}\sum_{i=1}^{\mn-\N}\log \tilde{\mu}(z_i)~d\tilde{\mu}\vert_{\New}^{\otimes {\mn-\N}}(Z_{\mn-\N}).
\end{align*}
Recall that $\tilde{\mu}(\New)=\mn-\N$. $B$ is dealt with immediately by Proposition \ref{pro718}, and $A$ and $C$ are dealt with as in \cite[Proposition 4.2]{AS21} and \cite[Proposition 3.5]{P24}, namely
\begin{equation*}
A \gtrsim -\beta \#I_\partial \gtrsim \frac{-\beta \mathcal{C}_\beta hL^{\d-1}}{\tilde{\ell}}, 
\end{equation*}
where we have used $S(X_n,w,h) \leq \mathcal{C}_\beta hL^{\d-1}$ (see \eqref{eq: coverS}) to control $\#I_\partial \lesssim \frac{S(X_n,w,h)}{\tilde{\ell}}$, and 
$$C=- \int_{\New^{\mn-n_\Old}} \tilde \mu \log \tilde \mu \gtrsim -C\tilde{\ell}L^{\d-1}$$
from the boundedness of $\tilde{\mu}$.
Using \eqref{boundsS}, absorbing $\frac{ \mathcal{C}_\beta  hL^{\d-1}}{\tilde{\ell}}$ into the definition of $\varepsilon_e$ yields 
\begin{multline}\label{66}
\int_{\Omega^\mn}  \exp\(-\beta \F(Y_\mn,\mu, \Omega) \)
d\mathsf{m} (Y_\mn,p,k) 
 \ge \exp\(- C \beta \varepsilon_e- C\tilde \ell  L^{\d-1}\)\\
\times \binom{\mn}{k}(\mn-k)^{\mn-k}  \int_{\mathcal G_\Omega}   
\exp\(-\beta \G_{a,h}^{\mathrm{inn}}(X_n,\mu, \Omega\times [-L,L])\)   d(\rho^{\otimes n})_{p,k}(X_n).
\end{multline}


Passing factors onto the left-hand side
and summing over $p$ and $k$ after multiplying by $\nu(p,k)$, in view of \eqref{desintegmu} we obtain
\begin{multline}
\int_{\mathcal G_\Omega} \exp\(-\beta \G_{a,h}^{\mathrm{inn}}(X_n,\mu, \Omega\times [-L,L])\)   d\rho^{\otimes n}(X_n)
\\
\le 
e^{C\beta \varepsilon_e+C\varepsilon_v} \sum_{(p,k)\in \mathcal A}
\frac{\nu(p,k)}{\binom{\mn}{k}(\mn-k)^{\mn-k}  }\int_{\Omega^\mn}  e^{-\beta \F(Y_\mn,\mu, \Omega\times [-L,L]) }
d\mathsf{m} (Y_\mn,p,k) ,
\end{multline}
where we have set $\varepsilon_v=\tilde \ell L^{\d-1}.$
We may rewrite this as 
\begin{multline}\label{825}
\int_{\mathcal G_\Omega} \exp\(-\beta \G_{a,h}^{\mathrm{inn}}(X_n, \mu, \Omega\times [-L,L])\)   d\rho^{\otimes n}(X_n)
\\ \le e^{C\beta \varepsilon_e+C\varepsilon_v} \max_{(k,p)\in \mathcal A} \(  
\frac{\binom{n}{k}\rho(Q_L\backslash Q_{L-(p+1)\eta})^{n-k}} {\binom{\mn}{k}(\mn-k)^{\mn-k}  }\)  \\ \times \sum_{(p,k)\in \mathcal A}  \frac{\nu(p,k)}{ \binom{n}{k}\rho(Q_L\backslash Q_{L-(p+1)\eta})^{n-k} }\int_{\Omega^\mn}  \exp\(-\beta \F(Y_\mn,\mu, \Omega\times [-L,L]) \)
d\mathsf{m} (Y_\mn,p,k) .
\end{multline}
But  by definition of $\mathsf{m}$ \eqref{defmesurem}, we have 
\begin{align*}& 
\sum_{(p,k)\in \mathcal A}  \frac{\nu(p,k)}{ \binom{n}{k}\rho(Q_L\backslash Q_{L-(p+1)\eta})^{n-k} }\int_{\Omega^\mn}  \exp\(-\beta \F(Y_\mn,\mu,\Omega\times [-L,L]) \)
d\mathsf{m} (Y_\mn,p,k)  \\
& =\sum_{(p,k)\in \mathcal A}\int_{\Omega^\mn }  \int_{X_n|Y_{\mn}\in \Phi(X_n)} d\rho^{\otimes (\mn-k)}((Y_{\mn})_{S_k^c} )  d(\rho^{\otimes n})_{p,k}(X_n)\nu(p,k)
\frac{e^{-\beta \F(Y_\mn,\mu,\Omega\times [-L,L]) }}{\binom{n}{k}\rho(Q_L\backslash Q_{L-(p+1)\eta})^{n-k} } . \end{align*}
Inserting \eqref{2}, letting $S_{k}'$ denote a set of indices of cardinality $n-k$ such that the points of $(X_{n})_{S_{k}'}$ coincide with $ \{X_n\} \cap \Old (X_n)$, we obtain
\begin{align*}& 
\sum_{(p,k)\in \mathcal A}  \frac{\nu(p,k)}{ \binom{n}{k}\rho(Q_L\backslash Q_{L-(p+1)\eta})^{n-k} }\int_{\Omega^\mn}  e^{-\beta \F(Y_\mn,\mu,\Omega\times [-L,L]) }
d\mathsf{m} (Y_\mn,p,k)  \\
&=
\sum_{(p,k)\in \mathcal A}\int_{\Omega^\mn }  \int_{X_n|Y_{\mn}\in \Phi(X_n)} d\rho^{\otimes (\mn-k)}((Y_{\mn})_{S_k^c} )  d\rho^{\otimes n}(X_n)\indic_{n_\Old (X_n)= k, t(X_n) \in I_p}
\frac{e^{-\beta \F(Y_\mn,\mu,\Omega\times [-L,L]) }}{\binom{n}{k}\rho(Q_L\backslash Q_{L-(p+1)\eta})^{n-k} } \\
& =\sum_{(p,k)\in \mathcal A} \int_{\Omega^\mn }  \int_{X_n|Y_{\mn}\in \Phi(X_n)} d\rho^{\otimes (\mn-k)}((Y_{\mn})_{S_k^c} )  d\rho^{\otimes k}((X_n)_{S'_{k}})d\rho^{\otimes (n-k)}((X_n)_{(S'_{k})^c})\indic_{n_\Old (X_n)= k, t(X_n) \in I_p}
\\
&\qquad\qquad\times \frac{e^{-\beta \F(Y_\mn,\mu,\Omega\times [-L,L]) }}{\binom{n}{k} \rho(Q_L\backslash Q_{L-(p+1)\eta})^{n-k} } \\
&= \sum_{(p,k)\in \mathcal A} \int_{\Omega^\mn }  \int_{X_n|Y_{\mn}\in \Phi(X_n)} d\rho^{\otimes \mn}(Y_{\mn})  d\rho^{\otimes (n-k)}((X_n)_{(S'_{k})^c})\indic_{n_\Old (X_n)= k, t(X_n) \in I_p}
 \frac{e^{-\beta \F(Y_\mn,\mu,\Omega\times [-L,L]) }}{\binom{n}{k}\rho(Q_L\backslash Q_{L-(p+1)\eta})^{n-k} } \\
& = \int_{\Omega^{\mn}} e^{-\beta \F(Y_\mn,\mu,\Omega\times [-L,L]) }d\rho^{\otimes \mn}(Y_{\mn})  \sum_{(p,k)\in \mathcal A}  \int_{X_n|Y_{\mn}\in \Phi(X_n)}   d\rho^{\otimes (n-k)}((X_n)_{(S'_{k})^c}) \frac{\indic_{n_\Old (X_n)= k, t(X_n) \in I_p}
}{\binom{n}{k}\rho(Q_L\backslash Q_{L-(p+1)\eta})^{n-k} } \\
 &\le  \int_{\Omega^{\mn}} e^{-\beta \F(Y_\mn,\mu,\Omega\times [-L,L]) }d\rho^{\otimes \mn}(Y_{\mn})\sum_{(p,k)\in \mathcal A} 1.
\end{align*}
 Inserting that $\#\mathcal A\le C\frac{\mn \tilde \ell}{\eta}$,  we conclude that 
\begin{multline}\label{838fin}
n^{-n}\int_{\Omega^n} \exp\(-\beta \G_{a,h}^{\mathrm{inn}}(X_n, \mu, \Omega)\)d\rho^{\otimes n}(X_n)
\le 
C \frac{\mn \tilde \ell}{\eta} e^{ C (\beta \varepsilon_e+C\varepsilon_v) }\\
\times \max_{(k,p)\in \mathcal A} \( 
\frac{\mn^{\mn}}{n^n} \frac{\binom{n}{k}\rho(Q_L\backslash Q_{L-(p+1)\eta})^{n-k}} {\binom{\mn}{k}(\mn-k)^{\mn-k}  }\) 
\frac{1}{\mn^{\mn}}\int_{\mathcal G_\Omega } e^{-\beta \F(Y_\mn,\mu,\Omega\times [-L,L]) } d\rho^{\otimes \mn}(Y_{\mn} ) .\end{multline}
We now let  $(k_0,p_0)\in \mathcal A$ be an integer couple where the max is achieved.
Denoting for shortcut  $\alpha=\mn-k_0$ and $\alpha'= 
\rho(Q_L\backslash Q_{L-(p_0+1)\eta})$, the relation \eqref{cara} is then satisfied, and  we next
  study the factor 
$$\frac{\mn^{\mn}}{n^n}\frac{\binom{n}{k_0} 
\rho(Q_L\backslash Q_{L-(p_0+1)\eta})^{n-k_0}} {\binom{\mn}{k_0}(\mn-k_0)^{\mn-k_0}  }=
\frac{\mn^{\mn}}{n^n} \frac{\binom{n}{k_0} 
  (\alpha')^{n-k_0} }{\binom{\mn}{k_0}(\alpha)^{\mn-k_0} }.$$
  By Stirling's formula as in \cite[Proposition 4.2]{AS21}, substituting $k_0=\mn-\alpha$, 
  we have
\begin{align*}
& \log
\( \frac{\mn^{\mn}}{n^n} \frac{\binom{n}{k_0} 
  (\alpha')^{n-k_0} }{\binom{\mn}{k_0}(\alpha)^{\mn-k_0} }
\)
\\& =
 (n-\mn+\alpha) \log \frac{ \alpha' }{(n-\mn+\alpha)}-\hal \log \frac{\mn (n-\mn+\alpha)}{n\alpha}+O(1)\\
& =
-(n-\mn+\alpha) \log \frac\alpha{\alpha'}  -(n-\mn+\alpha+\hal) \log (1+\frac{n-\mn}{\alpha}) -\hal \log \frac{\mn}{n} +O(1).
\end{align*}Choosing $\eta=\ell^{-1}$  and inserting into \eqref{838fin} we obtain the result (the outer case is completely analogous).

We next turn to the second case, where  \eqref{Riesz screenability2} is used in the screening. Then, for each configuration $X_n$, $\Gamma$,  the boundary of $\Old$ is a piecewise affine boundary with facets parallel to the faces of $Q_L$, included in some $Q_{t+\ell}\backslash Q_\ell$.  We need to split the phase-space into regions where we know this boundary location up to a $O(\eta)$ error. There are $O(\frac{\tilde \ell}{\ell})$ strips of the form $Q_{t+\ell}\backslash Q_\ell$ in which $\Gamma$ can lie. For each of them  there are $O(\frac{L^{\d-1}}{\ell^{\d-1}})$ facets forming $\Gamma$, and $O(\ell/\eta)$ choices for each so that we know their location up to an error $\eta$. This allows to partition the phase-space into  $C_1\frac{\tilde \ell}{\ell}\(C_2 \frac{\ell}{\eta}  \)^ {C_3 \frac{L^{\d-1}}{\ell^{\d-1}}}$ sets  such that each $\Gamma$ in one of these sets 
 is  at distance $\le \eta$ of a reference boundary. Up to using translations and increasing  the factor $C_2$ in the multiplicity, we may ensure that each $\Gamma$ in entirely included in one such set. This allows to run the same proof as in the first case, except with the multiplicity $\frac{\mn\tilde \ell}{\eta}$  at the end replaced by 
 $\mn \frac{\tilde \ell}{\ell}\(C \frac{ \ell}{\eta}  \)^ { \frac{L^{\d-1}}{\ell^{\d-1}}}$,
which adds a term $O\(  \frac{L^{\d-1}}{\ell^{\d-1}}\log \frac{\ell}{\eta}\)$ to the volume errors.
The choice $\eta=\ell^{1-\d}$ yields the announced estimate.

\end{proof}
 
\begin{rem} \label{bdd vol error}
When summing the contributions over $\Omega$ where $n$ points fall and $U\backslash \Omega $ where $N-n$ points fall, the errors of~\eqref{volerreurs} compensate and add up to a well bounded error. More precisely, 
if  $\alpha, \alpha'$, respectively $\gamma, \gamma'$ satisfy ~\eqref{cara}, then for every $n$ we have
\begin{multline}\label{648}
-(n-\mn+\alpha) \log \frac{\alpha}{\alpha'} - (n-\mn+\alpha+\hal) \log \(1+\frac{n-\mn}{\alpha}\)  - \hal \log \frac{n}{\mn} 
\\ 
-(\mn-n +\gamma) \log \frac{\gamma}{\gamma'} - (\mn-n+\gamma+\hal) \log \(1+\frac{\mn-n}{\gamma}\)  - \hal \log \frac{N-n}{N-\mn} \\ \le  C\tilde{\ell}L^{\d-1}.
\end{multline}
\end{rem}
\begin{proof}
The proof is exactly as in \cite[Remark 8.4]{S24} and \cite[Remark 3.8]{P24}.
\end{proof}

\subsection{Conclusion}
We now have all of the tools to complete the proof of Proposition \ref{Main Bootstrap}. We specialize to Case 1 of the previous results, focusing on the local laws bootstrap. Coupling Proposition \ref{volume of configurations} with Remark \ref{bdd vol error} and a subadditivity argument yields first the following control on the level of exponential moments.
\begin{prop}\label{energy control for good points}
Keep the same assumptions as in Corollary \ref{tailored screening} and above results, and let $\mathcal{G}$ denote the good event from Corollary \ref{tailored screening}. Denote by $\mathcal{G}^{n}$ the configurations in $\mathcal{G}$ who have $n$ points in $\Omega$. Let $u$ be as in \eqref{defiu}. Then, if $\C_\beta$ is chosen large enough depending on $\d,\s,\|\mu\|_{L^\infty},\ep$ (and $\beta$ for $\beta \le 1$), 
\begin{multline*}
\mathbb{E}_{\mathbb{Q}_{\beta,\frac{L}{K}} (U,\mu,\zeta)}\left(\exp \left(\frac{\beta}{2}\(\F^{\Omega \times [-L,L]} (\XN, \mu, \Lambda) + 2C_0 \# I_{\Omega}\)\right)\indic_{\mathcal{G}^{n}}\right) \\
\le C
\frac{\K_{\frac{\beta}{2},\frac{L}{K}}\left(\Omega, \mu, \zeta\right)}{\K_{\beta,\frac{L}{K}}(\Omega, \mu, \zeta)}\exp \left(\beta \frac{\mathcal{C}_\beta}{8}L^\d+C\beta|\mn-n|\right)
\end{multline*}
for some constant $C$ dependent only on the upper and lower bounds of $\mu$. 
\end{prop}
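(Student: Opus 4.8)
\textbf{Proof plan for Proposition \ref{energy control for good points}.}

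The plan is to combine the superadditivity properties of the Neumann energy and of the partition functions with the volume bound of Proposition \ref{volume of configurations}, in the spirit of \cite[Proposition 4.2]{AS21} and \cite[Proposition 3.6]{P24}. First I would spell out the expectation using the definition \eqref{defQ} of $\mathbb{Q}_{\beta, H}(U,\mu,\zeta)$, so that
\begin{equation*}
\mathbb{E}_{\mathbb{Q}_{\beta,\frac{L}{K}}}\Big(\exp\big(\tfrac{\beta}{2}(\F^{\Omega\times[-L,L]}(\XN,\mu,\Lambda)+2C_0\#I_\Omega)\big)\indic_{\mathcal G^n}\Big)
=\frac{1}{N^N\K_{\beta,\frac LK}(U,\mu,\zeta)}\int_{\mathcal G^n}\exp\big(-\tfrac{\beta}{2}(\cdots)\big)\,dX_N,
\end{equation*}
where in the exponent I retain $\F^{\Omega\times[-L,L]}$ with weight $+\tfrac\beta2$ against the full $-\beta(\F(\XN,\mu,\Lambda\text{-setup})+\sum\zeta)$ coming from the measure. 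The key algebraic step is to use the superadditivity \eqref{superadd}, $\F(\XN,\mu,\Lambda)\ge \F^{\Omega\times[-L,L]}(\XN,\mu,\Lambda)+\F^{(\Omega\times[-L,L])^c}(\XN,\mu,\Lambda)$, to dominate $-\tfrac\beta2\F$ in the integrand after splitting $-\beta\F = -\tfrac\beta2\F-\tfrac\beta2\F$; one copy of $-\tfrac\beta2\F^{\Omega\times[-L,L]}$ cancels the $+\tfrac\beta2\F^{\Omega\times[-L,L]}$ we inserted (up to the $\#I_\Omega$ term, controlled by \eqref{arrivC0}/\eqref{locali3}), and the remaining $-\tfrac\beta2(\F^{\Omega\times[-L,L]}+\F^{(\Omega\times[-L,L])^c})$ factorizes the integral over the $n$ points in $\Omega$ and the $N-n$ points in $\Omega^c$. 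At this stage I would also split $\sum_i\zeta(x_i)$ across $\Omega$ and $\Omega^c$ and absorb the combinatorial factor ${N\choose n}$ from the choice of which points land in $\Omega$; since $n$ is fixed on $\mathcal G^n$ and $|\mn-n|$ will be controlled, this contributes at most $C\beta|\mn-n|$ (plus a Stirling correction absorbed into the volume error).

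Next I would pass from $\F^{\Omega\times[-L,L]}$ to the screened/minimal energy $\G^{\mathrm{inn}}_{a,h}$ via \eqref{locali3}: on the good event $\mathcal G$ the restriction $\XN|_\Omega$ together with its Neumann field $u$ from \eqref{defiu} is screenable at height $h=L/K$ with $e$-bound $a$ as in \eqref{def a} (this is exactly the content of Corollary \ref{tailored screening}, part (1), via Proposition \ref{decay control}), so $\F^{\Omega\times[-L,L]}(\XN,\mu,\Lambda)\ge \G^{\mathrm{inn}}_{a,h}(\XN|_\Omega,\mu,\Omega\times[-L,L])$ and similarly for the exterior piece with $\G^{\mathrm{ext}}_{a,h}$. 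Then I apply Proposition \ref{volume of configurations} to each of the two factors: the integral of $\exp(-\tfrac\beta2\G^{\mathrm{inn}}_{a,h})$ over $\mathcal G_\Omega$ against $d\rho^{\otimes n}$ is bounded by $\K_{\beta/2,L}(\Omega,\mu,\zeta)\exp(\tfrac\beta2\varepsilon_e+\varepsilon_v)$ — note Proposition \ref{volume of configurations} is stated with inverse temperature $\beta$, and I would apply it with $\beta$ replaced by $\beta/2$, which only changes constants — and likewise the exterior integral is bounded by $\K^{\mathrm{ext}}_{\beta/2,L}(\Omega,\mu,\zeta)\exp(\tfrac\beta2\varepsilon_e+\varepsilon_v)$. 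Multiplying, and using \eqref{eq: supadd partext} of Lemma \ref{lem: supadd part} to recombine $\K_{\beta/2,L}(\Omega,\mu,\zeta)\K^{\mathrm{ext}}_{\beta/2,L}(\Omega,\mu,\zeta)$ into $\K_{\beta/2,\frac LK}(\R^\d,\mu,\zeta)$ (up to the combinatorial prefactor, which is $O(1)^{\mn}$ and absorbed), produces the ratio $\K_{\beta/2,\frac LK}(\Omega,\mu,\zeta)/\K_{\beta,\frac LK}(\Omega,\mu,\zeta)$ in the statement, modulo the relabeling of $\Omega$ versus $\R^\d$/$U$; here I would be careful that the denominator in the statement comes directly from the normalization of $\mathbb Q_{\beta,\frac LK}(U,\mu,\zeta)$, and the numerator from the product of half-temperature interior and exterior partition functions. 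Finally I collect the error terms: on $\mathcal G$, by Corollary \ref{tailored screening} case (1), $\varepsilon_e\lesssim |\mn-n|+C_\beta\epsilon L^\d+\tilde\ell^\d+\mathcal C L^\d/\tilde\ell$, and by Proposition \ref{volume of configurations} and Remark \ref{bdd vol error} the two volume errors $\varepsilon_v$ (interior plus exterior) add up to $\lesssim \tilde\ell^\d+\log(\C_\beta\tilde\ell/\eta)$; choosing $\epsilon$ small (depending on $\d,\s,\|\mu\|_{L^\infty},\ep$), $K,K_1$ large, $\tilde\ell=L/K_1$ and $\eta\sim 1$ as in Corollary \ref{tailored screening}, all of these are $\le \tfrac{\mathcal C_\beta}{8}\beta L^\d$ once $\mathcal C_\beta$ is taken large enough (and $L\ge\rho_\beta$ so that $\log$ terms are absorbed), which yields the claimed bound with the $C\beta|\mn-n|$ term kept explicit.

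The main obstacle I anticipate is the bookkeeping around the exterior screening and the partition-function recombination: Proposition \ref{volume of configurations} is phrased with $\G^{\mathrm{ext}}_{a,h}(X_{N-n},\mu,(\Omega\times[-L,L])^c)$ and the external partition function $\K^{\mathrm{ext}}_{\beta,L}$, and one must check that the superadditivity \eqref{superadd} is applied consistently with the truncation radii $\rrt$ (which switch to $\rrh$ away from boundaries, per the remarks before Definition \ref{def: Riesz screenability}) so that the interior and exterior energies genuinely bound $\F^{\Omega\times[-L,L]}$ and $\F^{(\Omega\times[-L,L])^c}$ from below — this is where the definition \eqref{defrrc3} of $\rrt$, made so that radii near $\partial\Omega$ do not see points on the other side, is essential. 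A secondary point is that the statement's ratio involves $\Omega$ in the denominator while the expectation is under $\mathbb Q_{\beta,\frac LK}(U,\mu,\zeta)$ with $\Omega = Q_L\cap\Lambda$ a sub-rectangle of $U$ (in the bootstrap, $U$ is a larger cube $\carr_{2\ep}$ or all of $\R^\d$); I would handle this by applying \eqref{eq: supadd part} once more to relate $\K_{\beta,H}(U,\mu,\zeta)$ to $\K_{\beta,\frac LK}(\Omega,\mu,\zeta)$ times a partition function over $U\setminus\Omega$, which then cancels with the corresponding exterior factor — the details are as in \cite[Proposition 4.3]{AS21} and \cite[Proposition 3.6]{P24} and are routine but must be tracked so that no extra $\beta L^\d$ error is produced.
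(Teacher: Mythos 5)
Your overall architecture matches the paper's proof (superadditivity to factorize the interior/exterior contributions, passage to $\G^{\mathrm{inn/ext}}_{a,h}$ via \eqref{locali3}, Proposition \ref{volume of configurations} on each factor, recombination via Lemma \ref{lem: supadd part}), but the temperature bookkeeping in your algebraic step is wrong, and the error is not cosmetic. Starting from $\tfrac\beta2\F^{\Omega\times[-L,L]}-\beta\F$ and writing $-\beta\F=-\tfrac\beta2\F-\tfrac\beta2\F$, applying superadditivity to one copy and cancelling the added $\tfrac\beta2\F^{\Omega\times[-L,L]}$ leaves $-\tfrac\beta2\F^{(\Omega\times[-L,L])^c}-\tfrac\beta2\F$; a second application of superadditivity then gives $-\tfrac\beta2\F^{\Omega\times[-L,L]}-\beta\F^{(\Omega\times[-L,L])^c}$, so the \emph{exterior} must carry weight $\beta$, not $\beta/2$. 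Your proposal drops the leftover $-\tfrac\beta2\F^{(\Omega\times[-L,L])^c}$ from the first split, keeps only $-\tfrac\beta2(\F^{\Omega\times[-L,L]}+\F^{(\Omega\times[-L,L])^c})$, and consequently applies Proposition \ref{volume of configurations} at inverse temperature $\beta/2$ on the exterior, producing $\K^{\mathrm{ext}}_{\beta/2,\,L/K}$ rather than $\K^{\mathrm{ext}}_{\beta,\,L/K}$ in the numerator.

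This breaks the final recombination and the proof does not close. The reason one keeps the exterior at $\beta$ is that \eqref{eq: supadd partext}, invoked at inverse temperature $\beta$, gives $\K_{\beta,L/K}(U,\mu,\zeta)\gtrsim\K_{\beta,L/K}(\Omega,\mu,\zeta)\,\K^{\mathrm{ext}}_{\beta,L/K}(\Omega^c,\mu,\zeta)$, and so the exterior factor $\K^{\mathrm{ext}}_{\beta,L/K}(\Omega^c,\mu,\zeta)$ produced by the exterior screening cancels exactly against the corresponding piece of the denominator $\K_{\beta,L/K}(U,\mu,\zeta)$ coming from the normalization of $\mathbb{Q}_{\beta,L/K}(U,\mu,\zeta)$, leaving precisely the local ratio $\K_{\beta/2,L/K}(\Omega,\mu,\zeta)/\K_{\beta,L/K}(\Omega,\mu,\zeta)$ that is then controlled by $e^{C(1+\beta)L^\d}$ via Proposition \ref{pro718} at scale $L$. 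With your exterior at $\beta/2$, either you recombine at $\beta/2$ (yielding a \emph{global} ratio $\K_{\beta/2,L/K}(U)/\K_{\beta,L/K}(U)$, whose logarithm is of order $(1+\beta)N$ rather than $L^\d$), or you try to compare $\K^{\mathrm{ext}}_{\beta/2,L/K}(\Omega^c)$ to $\K^{\mathrm{ext}}_{\beta,L/K}(\Omega^c)$, which again costs a factor of order $e^{C(1+\beta)(N-\mn)}$; both are catastrophically too large. Your closing remark that you would "cancel with the corresponding exterior factor" via \eqref{eq: supadd part} is exactly the step that requires the exterior temperature to be $\beta$, which your decomposition does not supply. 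The fix is simply to retain the asymmetric split $-\tfrac\beta2\F^{\Omega\times[-L,L]}-\beta\F^{(\Omega\times[-L,L])^c}$ before passing to the screened energies, applying Proposition \ref{volume of configurations} at $\beta/2$ on the interior and at $\beta$ on the exterior. (Separately, the Stirling prefactor from \eqref{eq: supadd partext} is $O(1)$, not $O(1)^{\mn}$, but that is a slip, not a gap.)
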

\begin{proof}

We recall that $\Omega = Q_L \cap \Lambda$ and we denote $\Omega_L=\Omega \times [-L,L]$ and $\omc_{L}= Q_{L-2\tilde \ell}\cap \Lambda$ where $Q_{L-2\tilde \ell}$ is the hyperectangle with same center as $Q_L$.

The proof relies on the following superadditivity computation based on \eqref{superadd}, using also \eqref{arrivC0} to get some nonnegativity and \eqref{locali3} (here we drop the $\mu$ dependence in the notation for $\F$).
\begin{align*}
\frac{\beta}{2}\F^{\omc_L}(\XN, \Lambda)-\beta \F(\XN,\Lambda)&\leq \frac{\beta}{2}\F^{\omc_L}(\XN, \Lambda)-\beta \F^{\Omega_L}(\XN, \Lambda)-\beta\F^{\Lambda\backslash \Omega_L}(\XN, \Lambda) \\
&\leq \frac{\beta}{2}\F^{\omc_L}(\XN, \Lambda)-\frac{\beta}{2} \F^{\Omega_L}(\XN, \Lambda)-\frac{\beta}{2} \F^{\Omega_L}(\XN, \Lambda)-\beta\F^{\Lambda\backslash \Omega_L}(\XN, \Lambda) \\
&\leq -\frac{\beta}{2}\F^{\Omega_{L} \setminus \omc_{L}}(\XN, \Lambda)
-\frac{\beta}{2} \F^{\Omega_L}(\XN, \Lambda)-\beta\F^{\Lambda\backslash \Omega_L}(\XN, \Lambda) \\
&\leq -\frac{\beta}{2} \F^{\Omega_L}(\XN, \Lambda)-\beta\F^{\Lambda\backslash \Omega_L}(\XN, \Lambda)   +\frac{\beta}{2}C_0n \\
&\leq -\frac{\beta}{2} \G^{\mathrm{inn}}_{a, \frac{L}{K}}(\XN\vert_\Omega, \Omega_L)-\beta \G^{\mathrm{ext}}_{a, \frac{L}{K}}(\XN \vert_{ \Omega^c},\Lambda \backslash\Omega_L)+\frac{\beta}{2}C_0n
\end{align*}
with $a=C_\beta \epsilon L^{\d-1}$ as in \eqref{def a}. We can then compute using Proposition \ref{volume of configurations} exactly as in \cite[Proposition 3.7]{P24}
\begin{align*}
&\Esp_{\mathbb{Q}_{\beta, h} ( U,\mu,\zeta)} \left(\exp \left(\frac{\beta}{2}\F^{\omc_L}(\cdot, \Lambda)\right)\indic_{\mathcal{G}_{n}}\right)\\
&\leq \frac{1}{N^N\K_{\beta, \frac{L}{K}}(U,\mu,\zeta)}{N \choose n}\int_{\Omega^n \cap \mathcal{G}^\Omega}\exp \left(-\frac{\beta}{2}\G^{\mathrm{inn}}_{a, \frac{L}{K}}(X_n,\Omega_L)+\frac{\beta}{2}C_0n\right)~d\rho^{\otimes n}(X_n) \\
&\times \int_{(\Omega^c)^{N-n}\cap \mathcal{G}^{\Omega^c}}\exp \left(-\beta \G^{\mathrm{ext}}_{a, \frac{L}{K}}(X_{N-n},\Omega_L^c)\right)~d\rho^{\otimes N-n}(X_{N-n}) \\
&\leq \frac{{N \choose n}}{N^N \K_{\beta,\frac{L}{K}}(U,\mu,\zeta)}n^n(N-n)^{N-n}\K_{\frac{\beta}{2}, L/K}\left(\Omega,\mu, 0\right)\K_{\beta,L/K}(\Omega^c,\mu,0)\exp \left(\beta \varepsilon_e+\varepsilon_v+\frac{\beta}{2}C_0n\right)
\end{align*}
where $\varepsilon_e$ and $\varepsilon_v$ denote the energy and volume errors 
\begin{equation*}
\varepsilon_e=C(C_\beta \epsilon L^\d+\tilde{\ell}^\d+|\mn-n|) \qquad
\varepsilon_v=
C\tilde \ell L^{\d-1} 
\end{equation*}
where we have dropped the $\frac{L^\d}{\tilde{\ell}}$ and $C\frac{L^{\d-1}}{\ell^{\d-1}}\log \ell$ errors because they are controlled by the remaining terms.
The superadditivity of the Neumann partition functions \eqref{eq: supadd partext} then yields
\begin{multline*}
\Esp_{\mathbb{Q}_{\beta, \frac{L}{K}} ( U,\mu,\zeta)} \left(\exp \left(\frac{\beta}{2}\F^{\omc_L}(\cdot, \Lambda)\right)\indic_{\mathcal{G}_{n}}\right)\\
\leq \frac{\mn! (N-\mn)!n^n(N-n)^{N-n}}{n!(N-n)!\mn^\mn(N-\mn)^{N-\mn}}\frac{\K_{\frac{\beta}{2},\frac{L}K}\left(\Omega,\mu,\zeta\right)}{\K_{\beta,\frac{L}K}(\Omega,\mu,\zeta)}\exp \left(\beta \varepsilon_e+\varepsilon_v + \frac\beta 2 C_0 n\right).
\end{multline*}
Stirling's formula  yields 
\begin{equation*}
\Esp_{\mathbb{Q}_{\beta, h} ( U,\mu,\zeta)} \left(\exp \left(\frac{\beta}{2}\F^{\omc_L}(\cdot, \Lambda)\right)\indic_{\mathcal{G}^{n}}\right)\lesssim \frac{\K_{\frac{\beta}{2},\frac{L}{K}}\left(\Omega,\mu,\zeta\right)}{\K_{\beta,\frac{L}{K}}(\Omega,\mu,\zeta)}\exp \left(\beta \varepsilon_e+\varepsilon_v + \frac\beta 2 C_0 n\right).
\end{equation*}
We then write $C_0n \le C_0 \mn + C_0 |n-\mn|$, and assume (without loss of generality) that $\C_\beta  \ge 8\|\mu\|_{L^\infty} C_0$, which ensures that $C_0 \mn \le \frac{\C}{8}L^\d$. Choosing $\epsilon$ small enough in Proposition \ref{decay control} and $K_1$ defining $\tilde{\ell}=\frac{L}{K_1}$ large enough in Corollary \ref{tailored screening} makes $\ep_e$ and $\ep_v$ small enough and we obtain the result. 
\end{proof}

\subsubsection{Proof of Proposition \ref{Main Bootstrap}}
We establish that
\begin{equation}\label{pre local law}
\Esp_{\mathbb{Q}_{\beta, h} ( U,\mu,\zeta)} \( \exp \(\frac{\beta}{2}\(\F^{\Omega_L}(\cdot, \Lambda) +2C_0 \#I_{\Omega} \) \)\indic_{\mathcal{G}}\) \leq \exp\(\C_\beta \frac{\beta}{4} L^\d\).
\end{equation}
To prove this, we sum the control in Proposition \ref{energy control for good points} over all possible $n$. First, we can restrict the number of $n$ needed to consider by using discrepancy estimates.  Before doing so, we note that we can always reflect the configuration and the potential as in Step 5 of the proof of Proposition \ref{decay control}, in such a way that there are no boundary issues and we may 
use  Proposition~\ref{pro:controlfluct} coupled with Theorem \ref{Local Law} at the length scale $2L$  to obtain  that either $|n-\mn|\lesssim L^{\d-1}$ or 
\begin{equation*}
|\mn -n| \leq K \max \(\mathcal{C}_\beta^{1/2}L^{\frac{\d+\s}{2}}, \mathcal{C}_\beta^{1/3}L^{\frac{2\d+\s}{3}}\) \leq K\mathcal{C}_\beta^{1/2}L^{\frac{2\d+\s}{3}},
\end{equation*} for some $K>0$, 
since $\s<\d$.
%
%
%
%
Thus, 
\begin{multline*}
\sum_{|\mn-n|\leq K\mathcal{C}_\beta^{1/2}L^{\frac{2\d+\s}{3}}}\Esp_{\PNbeta}\left(\exp \(\frac{\beta}{2}
\(\F^{\Omega_L}(\cdot, \Lambda) +2C_0 \#I_{\Omega} \)\)
\indic_{\mathcal{G}^{n}}\right) \\ \lesssim 
\sum_{|\mn-n|\leq K\mathcal{C}_\beta^{1/2}L^{\frac{2\d+\s}{3}}}\frac{\K_{\frac{\beta}{2},\frac{L}{K}}\left(\Omega,\mu,\zeta\right)}{\K_{\beta,\frac{L}{K}}(\Omega,\mu,\zeta)}\exp \left(\beta \left(\frac{\mathcal{C}_\beta}{8}L^\d+C_0 \mn +C C_0 K\mathcal{C}_\beta^{1/2}L^{\frac{2\d+\s}{3}}\right)\right).
\end{multline*}
Using Proposition \ref{pro718}, we can control the ratio of partition functions uniformly by $e^{C(1+\beta)L^\d}$.  The remaining error terms are bounded at strictly smaller order. Making $\C_\beta$ larger (but still scale independent and $\beta$-independent for $\beta \ge 1$) if necessary, in particular $\C_\beta \ge 1$, we find \eqref{pre local law}. A Chernoff bound immediately implies
\begin{multline}
\PNbeta\(\left\{\F^{\Omega_L}(\cdot, \Lambda)+2C_0 \#I_{\Omega}\geq \mathcal{C}_\beta L^\d \right\} \cap \mathcal{G}\) 
 \\ \leq e^{-\frac{\beta}{2}\mathcal{C}_\beta L^\d}\Esp_{\PNbeta}\(\exp \(\frac{\beta}{2}
 \(\F^{\Omega_L}(\cdot, \Lambda) +2C_0 \#I_{\Omega} \)
\indic_{\mathcal{G}} \)\)\leq e^{-\frac{\beta}{4}\mathcal{C}_\beta L^\d}\le e^{-\frac{\beta }{4}L^\d},
\end{multline}
with $\Gc_L=\Gc$ as in \eqref{local event}. Coupling this with the bound in \eqref{local event} and absorbing one constant into the other establishes Proposition \ref{Main Bootstrap}, and as discussed after the statement of that proposition Theorem \ref{Local Law}. 

\section{Expansion of the Next Order Partition Functions}\label{sec: partition}
The goal of this section is to leverage the local laws of Theorem \ref{Local Law} to establish a quantitative approximation of the local Neumann partition functions \eqref{defK} in the form \eqref{2eway0}. This approximation is crucial in the approach in Section \ref{sec: fluct} to improve the error estimates in Theorem \ref{FirstFluct} and obtain the CLT, Theorem \ref{CLT}. 
\subsection{Almost Additivity of the Next-Order Partition Functions}
We start with an almost additivity that generalizes the result for the Coulomb gas from \cite[Proposition 8.10]{S24} to nonCoulomb Riesz gases. The proof is similar in spirit, using the updated terminology and local laws for the Riesz gas alongside of the Riesz screening procedure Proposition \ref{Riesz screening result}. 
We recall the definitions of partition functions in \eqref{defK} and \eqref{defKext}.

\begin{prop}\label{prop: aa}
Assume $\d \geq 1$ and $\s \in (\d-2, \d)$. Let $U \subset \R^\d$ be a bounded, open set with piecewise $C^1$ boundary, and suppose as well that $\mu$ is bounded above and below in $U$. Assume that Theorem \ref{Local Law} holds for $\mu$ down to a minimal order one lengthscale $\rho_\beta$.

Assume that  $U\subset \bulk'$ can be written as a disjoint union of $p$ hyperrectangles $Q_i$ with sidelengths in $[L,2L]$ with $L \geq \rho_\beta$ such that  $\mu(Q_i)=N_i$ integer. Let $h_1, \dots, h_p \in [L,R]$. Then, for any event $\mathcal{G}$ there is some constant $C>0$, depending only on $\d$ and $\mu$ such that 
\begin{equation}\label{part ext}
\left|\log \K_{\beta,\infty}^{\mathcal{G}}(\R^\d, \mu, \zeta)-\(\log \K_{\beta,R}^{\mathrm{ext},\mathcal{G}_U}(U^c, \mu, \zeta)+\sum_{i=1}^p \log \K_{\beta,h_i}^{\mathcal{G}_{Q_i}}(Q_i, \mu, \zeta)\)\right| \lesssim_\beta R^\d \mathscr{E}( L)
\end{equation}
where $\mathcal{G}_\Omega$ denotes the set of configurations in $\Omega$ that are restrictions of a configuration in $\mathcal{G}$ and the error rate $\mathscr{E}(L)=o(1)$ is given by 
\begin{equation}\label{error rate}
\mathscr{E}(L):= \begin{cases}
L^{-1+\frac{4}{4+\d-\s_+}}\log L & \text{if }\d \leq 5 \text{ or }\s <\d-1 \\
L^{-1+\frac{\d-1}{(\d-1)+(\d-\s)}}\log L & \text{otherwise.}
\end{cases}
\end{equation}
Additionally, we have 
\begin{equation}\label{part noext}
\left|\log \K_{\beta,R}(U,\mu,\zeta)-\sum_{i=1}^p \log \K_{\beta,h_i}(Q_i,\mu,\zeta)\right|\lesssim_\beta R^\d \mathscr{E}(L).
\end{equation} 
\end{prop}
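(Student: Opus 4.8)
\textbf{Proof strategy for Proposition \ref{prop: aa}.} The plan is to establish \eqref{part ext} and \eqref{part noext} by combining the superadditivity of partition functions (Lemma \ref{lem: supadd part}) with a matching upper bound obtained through the screening machinery. The lower bound for \eqref{part noext} is immediate from \eqref{eq: supadd part}, after absorbing the combinatorial prefactor: by Stirling's formula the factor $\frac{N! N^{-N}}{\prod_i N_i! N_i^{-N_i}}$ contributes only an $O(\log L)$ term per cube, hence is negligible at the claimed rate since $p \lesssim (R/L)^\d$ and $R^\d \mathscr{E}(L) \gg p \log L$. For \eqref{part ext} one uses \eqref{eq: supadd part} and \eqref{eq: supadd partext} together. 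The real content is the reverse inequality.

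For the upper bound one runs the screening argument of Corollary \ref{tailored screening} and Proposition \ref{volume of configurations} in each hyperrectangle $Q_i$ (interior screening) and in $U^c$ (exterior screening), at height $h_i$, intersected with the local-law event. Concretely, starting from $\K_{\beta,R}(U,\mu,\zeta)$ one restricts the configuration integral to the good event $\mathcal G$ where the local laws of Theorem \ref{Local Law} hold at scale $\sim L$ in each $Q_i$ (losing only $e^{-c\beta L^\d}$, which is far smaller than the target error), then for each such configuration applies Proposition \ref{Riesz screening result} to glue screened local fields into a global one. The superadditivity \eqref{locali2} of the Neumann energy then bounds the global energy by the sum of local screened energies plus screening errors; integrating and using Proposition \ref{volume of configurations} converts this into a product of local partition functions times $\exp(\beta \varepsilon_e + \varepsilon_v)$. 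Here the decay estimate Proposition \ref{decay control} (item (2), with the choice $h = h_i$, $\tilde\ell$ a small fraction of $L$, and $M$ chosen as a suitable power of $L$) controls the $e$-term so that $\varepsilon_e \lesssim L^\d \mathscr E(L)$; the volume error $\varepsilon_v$ is controlled by $\tilde\ell^\d + \log(\C_\beta \tilde\ell/\eta) + (\text{discrepancy terms})$, and Remark \ref{bdd vol error} ensures the interior/exterior volume errors combine into an $O(\tilde\ell^\d)$ contribution per interface. Summing over the $p \lesssim (R/L)^\d$ cubes and interfaces multiplies the per-cube error by $p$, yielding the global bound $R^\d \mathscr E(L)$; here one must check that $\tilde\ell^\d \cdot (R/L)^\d = R^\d (\tilde\ell/L)^\d$ is indeed $\ll R^\d \mathscr E(L)$, which forces $\tilde\ell$ to be taken an appropriate negative power of $L$ smaller than $L$, and one optimizes the exponents exactly as in the definition \eqref{error rate} of $\mathscr E$.

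The optimization producing \eqref{error rate} is the place where the two cases of the formula appear: balancing the screening error $\bigl(\tilde\ell + h^{-\gamma} + \tfrac{h^2}{\tilde\ell}\bigr)L^{\d-1}$ and the $e$-term contribution $\bigl(\tfrac{L^2}{\tilde\ell}+L\bigr) M L^\d h^{\s_+-\d-1}$ against the volume error $\tilde\ell^\d$, under the screenability constraint \eqref{height control} (which for $\s \geq \d-1$ additionally forces $h \gg L^{(\d-1)/(\d-\gamma)}$). For $\d \le 5$ or $\s < \d-1$ the binding constraint comes from the $e$-term and $h^{-\gamma}$ balance, giving the exponent $-1 + \tfrac{4}{4+\d-\s_+}$; in the remaining regime the screenability lower bound on $h$ is active and gives $-1+\tfrac{\d-1}{(\d-1)+(\d-\s)}$. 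I would carry this out by writing $\tilde\ell = L^{a}$, $h = L^{b}$, $M = L^{c}$ with $0 < a, b < 1$, expressing each error as a power of $L$ times $L^\d$, and choosing $a,b,c$ to minimize the largest exponent subject to the constraints; this is a finite linear-programming computation whose answer is \eqref{error rate}.

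\textbf{Main obstacle.} The genuinely delicate step is the control of the boundary/extended-dimension energy via Proposition \ref{decay control} in the \emph{Neumann} setting relevant here (that is, with $\Lambda$ a hyperrectangle rather than $\R^{\d+1}$), and feeding it into the screenability condition \eqref{Riesz screenability} uniformly over the good event: one needs the $e$-term $h^\gamma e(X_n,w,h)$ to be below the absolute constant $\mathsf c$ of Definition \ref{def: Riesz screenability} simultaneously with the volume error staying $\ll L^\d \mathscr E(L)$, and these pull the parameters $h, \tilde\ell, M$ in competing directions. Handling this requires invoking Step 5 of the proof of Proposition \ref{decay control} (reflection/periodization to transfer the full-space fluctuation estimates and Theorem \ref{FirstFluct} to the Neumann geometry) and checking that the minimal scale $\rho_\beta$ and all constants remain $\beta$-independent for $\beta \ge 1$. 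Everything else — the superadditivity inequalities, Stirling bookkeeping, and the summation over cubes — is routine once the per-cube screening estimate with the optimized rate is in hand.
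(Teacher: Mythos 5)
Your overall approach is essentially the paper's: restrict to a good event where local laws and the extended-dimension energy bound hold, use superadditivity of the Neumann energy plus $\G \le \F$ for the lower bound on the global energy, screen in each $Q_i$ via Corollary \ref{tailored screening} and Proposition \ref{volume of configurations} to convert to a product of local Neumann partition functions with errors $\varepsilon_e, \varepsilon_v$ controlled by Remark \ref{bdd vol error}, and then optimize $\tilde\ell, h, M$ as powers of $L$ to obtain \eqref{error rate}.

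However, there is one genuine gap and an internal inconsistency. You first say you screen ``at height $h_i$'' (and again specify ``the choice $h=h_i$'' when invoking Proposition \ref{decay control}), but then later, correctly, write $h=L^b$ with $b<1$ in the optimization. These cannot both hold, since $h_i\in[L,R]$ and $R$ can be much larger than $L$: screening at height $h_i\sim R$ would make the term $\tfrac{h^2}{\tilde\ell}L^{\d-1}$ as large as $\tfrac{R^2}{\tilde\ell}L^{\d-1}$, which is far above the target $R^\d\mathscr E(L)$. What is missing from your argument is Remark \ref{height change}: the screening is performed at a \emph{small} height $r=L^b\ll L$, producing a Neumann field in $Q_i\times[-r,r]$, and then the zero-extension trick of Remark \ref{height change} (appending a zero field and applying the projection lemma) lets one replace $\F(Y,\tilde\mu,Q_i\times[-r,r])$ by $\F(Y,\tilde\mu,Q_i\times[-h_i,h_i])$ for any $h_i\ge r$ without worsening the error. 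Without this step there is no way to end up with the partition functions $\K_{\beta,h_i}$ with $h_i\in[L,R]$ on the right-hand side of \eqref{part ext}. A small further quibble: you write that $\tilde\ell$ should be ``an appropriate negative power of $L$'', but $\tilde\ell$ must satisfy $\tilde\ell \gtrsim 1$ for the screening to apply; the paper takes $\tilde\ell = \ell = r = L^a$ with $0<a<1$ and $M = C\log L$ rather than a power of $L$.
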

As we will see below, a key ingredient in optimizing the error rate $\mathcal{R}$ is that we have local laws valid down to the microscale, which will allow us to make use of Remark \ref{height change} to screen at smaller heights than we were able to in the proof of Theorem \ref{Local Law}.

\begin{proof}
We give the proof for $\mathcal{G}=\R^{\d N}$, it is a straightforward modification to extend to general $\mathcal{G}$. As in the proof of \cite[Proposition 8.10]{S24}, it suffices to prove upper bounds since the corresponding lower bounds follow from the superadditivity result \eqref{eq: supadd partext} and Stirling's formula; shrinking the heights from $h$ to $h_i$ is a consequence of the same argument as in the proof of \eqref{eq: supadd part}. Namely, we can decrease $h$ by appending a zero electric field for $|y|>h_i$ in the extended dimension to the electric field defining $\F(\XN\vert_{Q_i}, \mu, Q_i \times [-h_i,h_i])$ and applying Lemma \ref{projlem}.   

The key input is that we now have local laws down to the minimal length scale in $\bulk'$. 

\textbf{Step 1: restricting to a good event.}
Let $N$ denote the number of points in $U$, $n_i$ be the number of points in $Q_i$, and $\mn_i=\mu(Q_i)$. We also set 
\begin{equation*}
\hat{Q}_i:=\{x \in Q_i: \dist(x,\partial Q_i) \leq r\}
\end{equation*}
for $r$ to be determined, and 
\begin{multline}\label{good event}
\Gc=\bigg\{\XN \in (\R^\d)^N: |n_i-\mn_i|\leq \epsilon ~\forall i, \: \sup_x \int_{\(\hat{Q}_1 \cap \square_r(x)\) \times [-r,r]}|y|^\gamma |\nabla u_{\rr}|^2 \leq \mathcal{C}_\beta r^\d, \\
e(X_n, u, r) \leq C_\beta  M L^{\d} r^{(\s)_+-\d-1}
 \bigg\}
\end{multline}
with $\epsilon, \tilde{\ell}, M, K$ and $r$ to be determined and $u$ the  potential \eqref{defiu} for $\Lambda = \R^{\d+1}$.

The supremum condition is satisfied by $O\(\frac{L^{\d-1}}{r^{\d-1}}\)$ applications of the local law Theorem~\ref{Local Law}, each whose complement has probability bounded by $ \exp \(-C\beta r^\d\)$. The control on $e(X_n, w, r)$ follows from Proposition \ref{decay control} on a further restricted event for an appropriate $C_\beta$; applying Proposition \ref{decay control}, and using from Theorem \ref{Local Law} that $\PNbeta(\Gc_r^c) \leq \(\frac{L}{r}\)^\d e^{-\beta r^\d}$, we find 
\begin{equation*}
\PNbeta(\mathcal{G}^c) \leq \(\frac{L}{r}\)^{\d}e^{-C r^\d}+ 
M^{-\frac{\d}{2}}L^\d h^{\frac{\d( \d-\s_+-2)}{2}} e^{-CM}+\sum_{i=1}^p\PNbeta(|n_i-\mn_i|>\epsilon)
\end{equation*}
redefining $C$. 

Let's next analyze the event with bad discrepancy. Applying  the discrepancy estimate  \eqref{disc2buriesz}--\eqref{disc1buriesz}, covering  $\Omega_\delta \setminus \Omega$, where $\Omega=Q_i$ by $\frac{L^{\d-1}\delta}{\mathfrak{R}^\d}$ balls of size $\mathfrak{R}>\rho_\beta$, and using Theorem~\ref{Local Law} at scale $\mathfrak{R}$  one finds for each $i$ that
\begin{equation}\label{eq: aa disc}
|n_i-\mn_i|\lesssim L^{\d-1}\delta+\sqrt{L^{2\d-1+\gamma}\frac{\mathfrak{R}}{\delta}}=\epsilon
\end{equation}
off of an event of size 
\begin{equation*}
\PNbeta(|n_i-\mn_i|>\epsilon)\lesssim \frac{L^{\d-1}\delta}{\mathfrak{R}^\d}e^{-C\beta \mathfrak{R}^\d}.
\end{equation*}
Hence we find
\begin{equation}\label{bad event aa}
\PNbeta(\mathcal{G}^c) \leq \(\frac{L}{r}\)^{\d}e^{-C r^\d}+ L^\d r^{\frac{\d(\s_+-(\d-2)}{2}}e^{-CM}+Cp\frac{L^{\d-1}\delta}{\mathfrak{R}^\d}e^{-C \mathfrak{R}^\d}
\end{equation}
again redefining $C$. The third term in \eqref{bad event aa} will be smaller than $\frac{1}{4}$ as soon as we take $\mathfrak{R}^\d \gtrsim \log L$ for large enough constant; comparing the terms in the discrepancy bound \eqref{eq: aa disc} then and optimizing leads to taking $\delta=L^{\frac{1+\gamma}{3}}$, and 
\begin{equation}\label{eq: def ep}
|n_i-\mn_i|\lesssim L^{\d-1+\frac{1+\gamma}{3}}\log^{\frac{1}{2\d}}L:=\epsilon.
\end{equation}
We will optimize the remaining parameters in the above in such a way that the probability in \eqref{bad event aa} is no more than $\frac{1}{2}$.

\textbf{Step 2: superadditivity and screening.}
First, we write
\begin{equation*}
N^{-N}\int_{\Gc^c}\exp\(-\beta \F\(\XN,\mu,\R^{\d+1}\)\)~d\rho^{\otimes N}=\K_{\beta,\infty}\(\R^\d, \mu, \zeta\)\PNbeta(\mathcal{G}^c)\leq\frac{1}{2}\K_{\beta,\infty}\( \R^\d, \mu, \zeta\)
\end{equation*}
where we are again using $\rho(x)~dx$ as notation for the measure $e^{-\beta \zeta(x)}~dx$, under the assumption that we have tuned the parameters in \eqref{bad event aa} to guarantee $\PNbeta(\mathcal{G}^c)\leq \frac{1}{2}$.

Then, via the superadditivity \eqref{superadd} and \eqref{locali3}, 
\begin{multline*}
\frac{N^N}{2}\K_{\beta,\infty}\(\R^\d, \mu, \zeta \)\leq \int_{\Gc}\exp\(-\beta \F\(\XN,\mu,\R^{\d+1}\)\)~d\rho^{\otimes N} \\
\leq \sum_{\substack{|n_i-\mn_i|\leq \epsilon, \\ \sum n_i=N}}{N \choose n_1 \: n_2 \dots n_p}\prod_{i=1}^p\int_{\Gc}\exp\(-\beta \G_{b,r}^{\mathrm{inn}}(\cdot, Q_i\times [-h_i,h_i] )\)~d\rho^{\otimes n}(X_n)\\
\times \int_{\Gc}\exp\(-\beta \G_{b,r}^{\mathrm{ext}}\(\cdot, \(U\times [-R,R]\)^c\)\)~d\rho^{\otimes N-n}(X_{N-n}).
\end{multline*}
where $b$ denotes the expression $C_\beta M L^{\d} r^{\s_+-\d-1}$ controlling $e(X_n,u,r)$
and we have replaced $Q_{i} \times [-L, L]$ in the equation above with any $h_{i} \in [L,R]$ using Remark \ref{height change}.

We next replace the energies $\G$ in the exponents with the corresponding Neumann energies \eqref{minneum} using the screening on the level of partition functions, Proposition \ref{volume of configurations}. Applying Remark~\ref{bdd vol error}, we obtain 
\begin{multline}\label{firstpass}
\K_{\beta,\infty}\(\R^\d, \mu, \zeta\)\leq 2\prod_{i=1}^p\K_{\beta,h_i}(Q_i,\mu, \zeta)\K_{\beta,R}^{\mathrm{ext}}(U^c, \mu, \zeta)\sum_{\substack{|n_i-\mn_i|\leq \epsilon, \\ \sum n_i=N}}{N \choose \mn_1, \: \mn_2, \dots, \mn_p}N^{-N}\prod_{i=1}^p \mn_i^{\mn_i} \\
\times \exp\(Cp\(\epsilon+ \tilde{\ell}^\d+ \(\frac{r^2}{\tilde{\ell}}+\tilde{\ell}+r^{-\gamma}\)L^{\d-1}+C_\beta \frac{L}{\min(r,\tilde{\ell})}\(\frac{L^{2}}{\tilde{\ell}}+L\)
M L^{\d} r^{\s_+-\d-1}\)\)
\end{multline}
using $r <L$, provided \eqref{height control} holds and, if $\s \geq \d-1$, $r \gg L^{\frac{\d-1}{\d-\gamma}}$. An application of Stirling's formula yields
\begin{equation*}
{N \choose \mn_1 \: \mn_2 \dots \mn_p}N^{-N}\prod_{i=1}^p \mn_i^{\mn_i} \leq C\sqrt{\frac{N}{\prod_{i=1}^p\mn_i}}\leq C
\end{equation*}
and so the summands are controlled by $C\epsilon p$.



\textbf{Step 3: error optimization.} 
We now need to optimize the choice of $r$, with competing terms that prefer $r$ smaller or larger. Observe that we want to take $M\ge 1$ as small as possible while keeping the estimate in \eqref{bad event aa} small. 

Notice that \eqref{bad event aa}  
can be made smaller than $\frac{1}{2}$ for any $r$ at order $L^\nu$ and $M\gtrsim \log L$ . So, we may as well optimize 
\begin{equation}\label{opt 1}
L^{\d-1+\frac{1+\gamma}{3}}\log^{\frac{1}{2\d}}L+\tilde{\ell}^\d +\(\frac{r^2}{\tilde{\ell}}+\tilde{\ell}+r^{-\gamma}\)L^{\d-1}+C_\beta\frac{L}{\min(r,\tilde{\ell})}\(\frac{L^{2}}{\tilde{\ell}}\)M L^{\d} r^{\s_+-\d-1}
\end{equation} in \eqref{firstpass} using \eqref{eq: def ep} subject to 
\begin{equation}\label{opt 2}
C_\beta \frac{L^2}{\tilde{\ell}^2}Mr^{\s+\s_+-2\d}\leq \mathsf{c}, \qquad r \gg L^{\frac{\d-1}{\d-\gamma}} \text{ if }\s\geq \d-1,
\end{equation}
where we have dropped $L$ in the parenthesis in \eqref{opt 1} because $\tilde{\ell}<L$ implies the $\frac{L^{2\d}}{\tilde{\ell}^{2\d-1}}$ term is dominant. Comparing the optimization problems for $\tilde{\ell}<r$ and $r<\tilde{\ell}$, one sees in either case that the optimum choice is as the parameters approach $r$. So, we select $\ell=\tilde{\ell}=r$, and choose $M=C\log L$ for $C$ large enough.
Making the simplification $\tilde{\ell}=r$, we see that we need to optimize (absorbing some terms)
\begin{equation}\label{eq: opt 2}
L^{\d-1+\frac{1+\gamma}{3}}\log^{\frac{1}{2\d}}L+rL^{\d-1}+L^{\d+3}r^{\s_+-\d-3}M
\end{equation}
subject to 
\begin{equation}\label{eq: modcon}
C_\beta ML^2 r^{\s+\s_+-2\d-2}\leq \mathsf{c}, \qquad r \gg L^{\frac{\d-1}{\d-\gamma}} \text{ if }\s\geq \d-1,
\end{equation}
where we have absorbed the error term $\ell^{1-\d}r^{-\gamma}L^{\d-1}=r^{-\s}L^{\d-1}$ into $rL^{\d-1}$ since $\s>-1$.
Balancing the second and third terms in \eqref{eq: opt 2} requires 
\begin{equation}\label{def }
r^{\d+4-\s_+}=ML^4.
\end{equation}
The first term in \eqref{eq: opt 2} is always smaller with this choice of $r$, since $\frac{1+\gamma}{3}=\frac{\s+2-\d}{3}$ and
\begin{equation*}
\frac{4}{\d+4-\s_+}>\frac{\s+2-\d}{3}\iff 12>(\s+2-\d)(\d-\s_++4)
\end{equation*}
which follows from $\s \in (\d-2,\d)$. Notice also that this choice of $r$ satisfies \eqref{eq: modcon}. Indeed, for the first condition we observe that, with $\gamma=\s+1-\d$ and this choice of $r$,
\begin{equation*}
C_\beta ML^2 r^{\s+\s_+-2\d-2}=\frac{C_\beta ML^2}{r^{\d+4-\s_+}r^{\d-2-\s}}=\frac{C_\beta r^{\s-(\d-2)}}{L^2} \ll 1 
\end{equation*}
since $r<L$ and $\s>\d-2$. For the second condition, since $M=C\log L$ it is sufficient to check that $\frac{4}{4+(\d-\s)} \geq\frac{\d-1}{\d-\gamma}$ (since $\s\geq \d-1 \implies \s \geq 0 \implies \s_+=\s)$. Indeed, since $\gamma=\s+1-\d$ this is true if and only if 
\begin{equation*}
\frac{4}{4+(\d-\s)} \geq \frac{\d-1}{(\d-1)+(\d-\s)}.
\end{equation*}
which is true for $\d \leq 5$ because for $a>0$, the quantity $\frac{x}{x+a}$ is increasing in $x$. For $\d>5$, we instead need to take then $r=CL^{\frac{\d-1}{\d-\gamma}}\log L$.


We conclude that 
\begin{equation}\label{aa bd}
\K_{\beta,\infty}\(\R^\d,\mu,\zeta\)\leq C\epsilon p \prod_{i=1}^p\K_{\beta,h_i}(Q_i, \mu, \zeta)\K_{\beta,R}^{\mathrm{ext}}(U^c,\mu, \zeta)\exp \( C_\beta L^\d\mathscr{E}(L)\)
\end{equation}
where $\mathscr{E}(L)$ is as in \eqref{error rate}.
Taking logarithms yields the result.

The proof of \eqref{part noext} is the same using the local laws for the measure associated to $\K_{\beta, R}(U,\mu,\zeta)$
\end{proof}

The goal is now to use this to derive a precise asymptotic expansion of local partition functions. First, we prove the analogue of \cite[Proposition 6.2]{AS21} at constant density. It is largely a corollary of the previous proposition.
\begin{lem}\label{lem: const dens}
Assume $\d \geq 1$ and $\s \in (\d-2,\d)$. Then, there exists a function $\mf:(0,\infty)\rightarrow \R$ and a constant $C>0$ depending only on $\d$ such that  if $L^\d \in \Z$ we have
\begin{equation}
\left|\frac{\log \K_{\beta,L}(\carr_L,1,0)}{\beta|\carr_L|}+\mf(\beta)\right|\lesssim_\beta \mathscr{E}(L),
\end{equation}
with $\mathscr{E}$ as in \eqref{error rate}.
\end{lem}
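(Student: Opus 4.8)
The plan is to establish existence of the thermodynamic limit $\mf(\beta) := -\lim_{L\to\infty} \frac{\log \K_{\beta,L}(\carr_L,1,0)}{\beta |\carr_L|}$ together with a quantitative rate, by a standard subadditivity argument upgraded with the almost-additivity of Proposition \ref{prop: aa}. First I would note that the constant background density $\mu \equiv 1$ on $\carr_L$ trivially satisfies the hypotheses needed: it is bounded above and below, and the local laws of Theorem \ref{Local Law} apply (the assumptions \eqref{itemnondeg}--\eqref{onecut} are replaced by the trivial verification that a cube with uniform background is in its own ``bulk'', i.e. the Neumann setup of Section \ref{sec: mb} with $\Lambda$ a hypercube applies, giving \eqref{locallawMB} for $\mu \equiv 1$). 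Then for $L, L'$ with $L^\d, (L')^\d \in \Z$ and $L$ a multiple of $L'$ (so $\carr_L$ tiles into $(L/L')^\d$ copies of $\carr_{L'}$, each with integer mass $(L')^\d$), Proposition \ref{prop: aa}, specifically \eqref{part noext} applied with $U = \carr_L$, $Q_i = $ the subcubes, $h_i = L'$, $R = L'$, $\zeta = 0$, gives
\begin{equation*}
\left| \log \K_{\beta,L}(\carr_L,1,0) - \(L/L'\)^\d \log \K_{\beta,L'}(\carr_{L'},1,0) \right| \lesssim_\beta L^\d \mathscr{E}(L').
\end{equation*}

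Dividing by $\beta L^\d$, this says the sequence $a_L := -\frac{\log \K_{\beta,L}(\carr_L,1,0)}{\beta L^\d}$ is almost additive: $|a_L - a_{L'}| \lesssim_\beta \mathscr{E}(L')$ whenever $L'\mid L$. Combined with the two-sided a priori bounds from Proposition \ref{pro718} — the lower bound \eqref{bornesfiU} giving $\log \K_{\beta,L}(\carr_L,1,0) \ge -C(1+\beta)L^\d$ (here $\Ent(\1_{\carr_L}) = 0$ and the tail assumption \eqref{asstail} is vacuous for a bounded set), and the upper bound \eqref{bornesfiU2} giving $\log \K_{\beta,L}(\carr_L,1,0) \le C\beta L^\d + L^\d \log(|\carr_L|/L^\d) = C\beta L^\d$ — the sequence $(a_L)$ is bounded. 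A Cauchy argument along a dyadic (or any multiplicatively cofinal) subsequence then shows $a_L$ converges; call the limit $\mf(\beta)$. Quantitatively, summing the almost-additivity errors along a chain $L \to L/2 \to L/4 \to \cdots$ down to an order-one scale $\rho_\beta$, and using that $\mathscr{E}$ decays at a polynomial rate (up to a log) so that the geometric sum is dominated by its largest term $\mathscr{E}(L)$, one obtains $|a_L - \mf(\beta)| \lesssim_\beta \mathscr{E}(L)$, which is precisely the claimed estimate. Some care is needed because \eqref{part noext} as stated requires the subcubes to have integer mass and sidelengths in $[L,2L]$; for $\mu \equiv 1$ this forces working along scales $L$ with $L^\d \in \Z$ and tiling into equal subcubes, which is exactly the restriction $L^\d \in \Z$ in the statement. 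To pass from the dyadic subsequence to all admissible $L$, one compares a general $L$ (with $L^\d \in \Z$) to a nearby scale in the chosen subsequence via one more application of \eqref{part noext} after subdividing into subcubes of comparable (not necessarily equal) sidelengths in $[L', 2L']$, all with integer mass, which \eqref{part noext} permits.

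The main obstacle, and the only genuinely delicate point, is the bookkeeping of the integrality constraints: Proposition \ref{prop: aa} demands that every subcube in the partition carry an integer amount of background mass, and for a uniform density this is a nontrivial arithmetic restriction on the allowed partitions. The clean way around this is to only ever tile $\carr_L$ (with $L^\d \in \Z$) into a grid of congruent subcubes $\carr_{L/k}$ whose common mass $(L/k)^\d$ is an integer — e.g. take $k$ so that $k \mid L$ in the appropriate sense, or more robustly restrict the whole argument to $L$ of the form $L = 2^j \rho$ for a fixed base scale $\rho$ with $\rho^\d \in \Z$ — and then argue that this suffices because such scales are multiplicatively cofinal, and a final comparison step handles arbitrary $L$ with $L^\d\in\Z$. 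A secondary point to check is that the implicit constants in Proposition \ref{prop: aa} and Proposition \ref{pro718} depend only on upper and lower bounds for $\mu$ and on $\d, \s$ (and $\beta$ for $\beta \le 1$), which is exactly what is asserted there, so that $\mf$ and the constant $C$ in the conclusion are genuinely independent of $L$. Everything else is the routine ``subadditivity with summable errors'' mechanism, and I would present it compactly rather than spelling out the geometric sum.
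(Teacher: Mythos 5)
Your overall strategy — extract the thermodynamic limit from the almost additivity of Proposition \ref{prop: aa} applied to constant background on a cube, with a priori bounds from Proposition \ref{pro718} guaranteeing boundedness — is the right idea and matches the paper's, but the quantitative step has the telescoping chain running in the wrong direction. You propose to sum the errors along $L \to L/2 \to L/4 \to \cdots \to \rho_\beta$ and claim the sum is dominated by $\mathscr{E}(L)$. This is wrong on two counts. First, $\mathscr{E}$ in \eqref{error rate} is \emph{decreasing}, so going \emph{down} in scale makes the per-step errors \emph{larger}: the sum $\sum_k \mathscr{E}(L/2^{k+1})$ is dominated by its last term $\mathscr{E}(\rho_\beta)$, which is $O(1)$, not by $\mathscr{E}(L)$. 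Second, the telescoping down to $\rho_\beta$ compares $a_L$ with $a_{\rho_\beta}$, not with $\mf(\beta)=\lim_{L\to\infty}a_L$, so it cannot produce a bound on $|a_L-\mf(\beta)|$ at all.

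The correct iteration is \emph{upward}, exactly as in the paper: apply \eqref{part noext} with the big cube at scale $2L$ partitioned into $2^\d$ subcubes at scale $L$ to get $|\phi(2L)-\phi(L)|\lesssim_\beta \mathscr{E}(L)$, then telescope $|\phi(L)-\phi(\infty)|\le\sum_{k\ge 0}|\phi(2^{k+1}L)-\phi(2^kL)|\lesssim_\beta\sum_{k\ge 0}\mathscr{E}(2^kL)\lesssim_\beta\mathscr{E}(L)$, the last step using the polynomial decay of $\mathscr{E}$. Going up also dissolves your integrality worries: if $L^\d\in\Z$ then $(2L)^\d=2^\d L^\d\in\Z$ automatically, so the subcubes always have integer mass without any divisibility condition on $L$. (The paper also sharpens one direction by using the \emph{exact} superadditivity \eqref{eq: supadd part}, which gives $\phi(2L)\ge\phi(L)+O(\log L/(\beta L^\d))$ with no $\mathscr{E}$-error, but since $\mathscr{E}(L)$ dominates $\log L/(\beta L^\d)$ the two-sided $\mathscr{E}(L)$ bound from Proposition \ref{prop: aa} alone suffices for the stated conclusion, so that difference is cosmetic.)
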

\begin{proof}
The proof is exactly as in \cite[Proposition 6.2]{AS21} using Proposition \ref{prop: aa}, with our new error term. 
One first uses superadditivity \eqref{eq: supadd part} of the partition function to derive
\begin{equation*}
\frac{1}{\beta}\log \K_{\beta, 2L} \(\square_{2L},1, 0\) \geq O\(\frac{\log N}{\beta}\)+\frac{2^\d}{\beta}\log \K_{\beta,L}\(\square_L,1, 0\)
\end{equation*}
and so, setting $\phi(L):=\frac{\log \K_{\beta, L}(\square_L,1,0)}{\beta L^\d}$ one has
\begin{equation*}
\phi(2L)\geq \phi(L)+O\(\frac{\log L}{\beta L^\d}\).
\end{equation*}
Iterating the previous estimate, we find
\begin{equation*}
\phi(\infty) \geq \phi(L)+O\(\sum_{k=1}^\infty \frac{\log L}{\beta\(2^k L\)^\d}\)
\end{equation*}
where $\phi(\infty)$ is defined by $\limsup_{L \rightarrow \infty}\phi(L)$, which simplifying yields 
\begin{equation*}
\phi(L)\leq \phi(\infty)+O\(\frac{\log L}{\beta L^\d}\)
\end{equation*}
as in \cite{AS21}.
On the other hand, using Proposition \ref{prop: aa} we can write
\begin{equation*}
\phi(2L) \leq \phi(L)+C_\beta\mathscr{E}(L)
\end{equation*}
which we can sum to find $\phi(\infty) \leq \phi(L)+C_\beta\mathscr{E}(L)$. Defining $\mf(\beta):=-\phi(\infty)$ yields the result.
\end{proof}
This Lemma allows us to compute the next-order partition function for other constant densities as well; indeed, by scaling and setting $Q_L'=m^{\frac{1}{\d}}Q_L$, we find 
\begin{align}\label{part scaling}
\nonumber\frac{\log \K_{\beta,\infty}(Q_L,m,0)}{\beta|Q_L|}&=m^{1+\frac{\s}{\d}}\frac{\log \K_{\beta m^{\frac{\s}{\d}}, \infty}(Q_L',1,0)}{\beta m^{\frac{\s}{\d}}|Q_L'|}-\frac{m}{\beta}\log m+\frac{1}{2\d}(m\log m) \indic_{\s=0} \\
&=-m^{1+\frac{\s}{\d}}\mf\(\beta m^{\frac{\s}{\d}}\)-\frac{m}{\beta}\log m+\frac{m}{2\d}(\log m )\indic_{\s=0}+O\(\mathscr{E}(L)\)
\end{align}

In what follows, we will need the  assumption  \eqref{425}.

\subsection{Comparison of Partition Functions by Transport}
We would now like to leverage the formula \eqref{part scaling} to obtain expansions for inhomogeneous densities $\mu$. In this section we return to normal coordinates, since the expansions we will derive will be leveraged to obtain results about fluctuations of linear statistics stated in usual coordinates. Just as in \cite[Chapter 9]{S24}, \eqref{defK} has the counterpart at the usual scale \begin{equation}\label{defK bd}
\K_{N,\beta,\ell}(U,\mu,\zeta):=\int_{U^N}\exp\(-\beta N^{-\frac{\s}{\d}}\(\F_N(\XN,\mu,U\times [-\ell,\ell])+N\sum_{i=1}^N\zeta(x_i)\)\)~d\XN
\end{equation}
where $\F_N(\XN,\mu,U\times[-\ell,\ell])$ and $N\zeta$ are analogous to $\F\(\XN', \mu', UN^{\frac{1}{\d}}\times \left[-\ell N^{\frac{1}{\d}}, \ell N^{\frac{1}{\d}}\right]\)$ and $\zeta'$ as in Section \ref{subsec: ll neum}. 
A superscript $\mathcal{G}$ in \eqref{defK bd} will indicate a restriction of the integral to some good event $\mathcal{G}\subset U^N$. As in \eqref{defK fluct}, when $U=\R^\d$ and $h=\infty$ we abbreviate \eqref{defK bd} by $\K_{N,\beta}$. We also define a Gibbs measure associated to this next-order partition function by 
\begin{equation}\label{defQ bd}
d\Q_{N,\beta,\ell}(U,\mu,\zeta):=\frac{1}{\K_{N,\beta,\ell}(U,\mu,\zeta)}\exp\(-\beta N^{-\frac{\s}{\d}}\(\F_N(\XN,\mu,U\times [-\ell,\ell])+N\sum_{i=1}^N\zeta(x_i)\)\)~d\XN.
\end{equation}

If $\mu$ is sufficiently smooth then it cannot vary too much in a small cube of sidelengths  $\ell \ll 1$. In particular then, we should be able to compare $\K_{N,\beta,\ell}(Q_\ell,\mu, \zeta_V)$ to that previously obtained for constant densities, more precisely $\K_{N,\beta,\ell}\(Q_\ell,\fint_{Q_\ell}\mu, \zeta\)$. We will make this comparison explicit by a transport as in \cite[Chapter 9]{S24}. First, let us recall some information about transporting Neumann energies in cubes from \cite[Section 9.2]{S24}. Let $\psi_t: Q_\ell \to Q_\ell$ be a Lipschitz vector field depending continuously on a  parameter $t\in [0,1]$ and satisfying $\psi_t \cdot \nu=0$ on $\partial Q_\ell$. Let us define the flow $\Phi_t: Q_\ell \to Q_\ell$ to be the solution to 
\be \label{defflow}
\left\{ \begin{aligned}
&\frac{d \Phi_t}{dt} (x)= \psi_t (\Phi_t(x))\\
& \Phi_0(x)=x.\end{aligned}\right. \ee
This flow is well-defined for $t\in [0,1]$ by standard ODE theory. Moreover, it is standard to check that if $\mu$ is a probability density then the push-forward 
\be \label{defmutflow}\mu_t := \Phi_t \# \mu\ee solves 
\be\label{continuityeq} \partial_t \mu_t + \div (\psi_t \mu_t)= 0\quad \text{in} \ Q_\ell .\ee
In \cite{S24}, the following lemma is shown, it is valid in the same way in $Q_\ell$.
 \begin{lem}\label{1221}Let $\mu_0$ and $\mu_1$ be two probability densities on $\R^\d$ and let  $\mu_s= (1-s) \mu_0 + s\mu_1$ be their linear interpolant. 
 Assume that for  $s \in [0,1]$,  $\psi_s$ is a Lipschitz vector field on $\R^\d$, depending continuously on $s$, and satisfying
 \be \label{divpsimu}
- \div (\psi_s\mu_s)= \mu_1-\mu_0.\ee
 Then defining $\Phi_s$ as in \eqref{defflow}, we have that $\mu_s = \Phi_s \# \mu_0$.
 \end{lem}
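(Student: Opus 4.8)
The statement is the classical fact that a linear interpolant of probability densities is realized as the push-forward under the flow of a vector field solving the associated continuity equation. The plan is to invoke the uniqueness of solutions to the continuity equation and match the two candidate solutions. First I would define $\nu_s := \Phi_s \# \mu_0$, where $\Phi_s$ is the flow of $\psi_s$ as in \eqref{defflow}; by the general transport identity \eqref{continuityeq} (applied with the time-dependent field $\psi_s$), $\nu_s$ solves
\begin{equation*}
\partial_s \nu_s + \div(\psi_s \nu_s) = 0, \qquad \nu_0 = \mu_0.
\end{equation*}
On the other hand, the linear interpolant $\mu_s = (1-s)\mu_0 + s\mu_1$ satisfies $\partial_s \mu_s = \mu_1 - \mu_0$, which by the hypothesis \eqref{divpsimu} equals $-\div(\psi_s \mu_s)$; hence $\mu_s$ also solves the continuity equation with the same vector field and the same initial datum $\mu_0$. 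It then remains to argue that the continuity equation $\partial_s \rho_s + \div(\psi_s \rho_s) = 0$ with given Lipschitz (in $x$), continuous (in $s$) field $\psi_s$ and prescribed initial datum has a unique solution in the relevant class, so that $\nu_s = \mu_s$ for all $s \in [0,1]$, which is the assertion.

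For the uniqueness step I would use the standard method of characteristics: since $\psi_s$ is Lipschitz in space and continuous in time, the flow $\Phi_s$ is a well-defined bi-Lipschitz homeomorphism of $\R^\d$ for each $s$ (Cauchy--Lipschitz/Pic--Lindelöf, as already noted in the text around \eqref{defflow}), and any (say, weak or distributional) solution $\rho_s$ of the continuity equation with initial datum $\rho_0$ must coincide with $\Phi_s \# \rho_0$. Concretely, for a test function $\varphi$ one computes $\frac{d}{ds}\int \varphi(\Phi_s^{-1}(x))\, d\rho_s(x) = 0$ using the continuity equation for $\rho_s$ and the transport equation $\partial_s(\varphi \circ \Phi_s^{-1}) + \psi_s \cdot \nabla(\varphi \circ \Phi_s^{-1}) = 0$ satisfied by the pulled-back test function; this forces $\int \varphi \circ \Phi_s^{-1}\, d\rho_s = \int \varphi\, d\rho_0$, i.e. $\rho_s = \Phi_s \# \rho_0$. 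Applying this to both $\rho_s = \mu_s$ and $\rho_s = \nu_s$ gives $\mu_s = \Phi_s\#\mu_0 = \nu_s$.

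The one point requiring a little care — and the only real obstacle — is regularity: to run the characteristics argument cleanly one wants enough smoothness of $\psi_s$ and of the densities to justify the chain-rule manipulations and the integration by parts, and to know the flow and its inverse are well-behaved. This is not serious here because $\psi_s$ is assumed Lipschitz and depends continuously on $s$, and the densities $\mu_0,\mu_1$ (hence $\mu_s$) are bona fide $L^1$ probability densities; in the intended application $\psi_s$ comes from Lemma \ref{transport} and is in fact $C^k$ on each of $\Sigma$, $\Sigma^c$ and continuous across $\partial\Sigma$, with the relevant integrability of $\div(\psi_s\mu_s)$, so all steps are legitimate. Hence the proof reduces to: (i) check $\mu_s$ solves \eqref{continuityeq} via \eqref{divpsimu}; (ii) recall $\Phi_s\#\mu_0$ solves \eqref{continuityeq} with the same data; (iii) conclude by uniqueness along characteristics.
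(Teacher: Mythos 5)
Your proof is correct; it is the standard continuity-equation argument and is surely the one intended. Note, however, that the paper does not actually include a proof of this lemma — it simply cites \cite{S24} — so there is no in-text argument to compare against. Your three-step structure (verify that $\mu_s$ solves the continuity equation via $\partial_s\mu_s = \mu_1-\mu_0 = -\div(\psi_s\mu_s)$, recall that $\Phi_s\#\mu_0$ solves the same equation with the same initial datum, and invoke uniqueness via characteristics for Lipschitz-in-$x$, continuous-in-$s$ fields) is complete and the duality computation $\frac{d}{ds}\int \varphi\circ\Phi_s^{-1}\,d\rho_s=0$ correctly reduces to the transport equation for $\varphi\circ\Phi_s^{-1}$.
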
    

\begin{lem}\label{lem: vardens}
Assume $\ell N^{\frac{1}{\d}}>\rho_\beta$, and let $Q_\ell$ be a hyperrectangle of sidelengths in $(\ell, 2\ell)$. Let $\mu_0,\mu_1$ be two Lipschitz  densities bounded  above and below by positive constants in $Q_\ell$, a hyperrectangle of sidelengths in $[\ell, 2\ell]$ with $N\mu_0(Q_\ell)=N\mu_1(Q_\ell)=\mn$ an integer. Let $h>0$ and let $\mathcal{G}$ denote an event on which the local laws hold for $\Q_{N,\beta,h}(Q_\ell, \mu_s,0)$. Then 
\begin{multline}\label{compdeskcube}
|\log \K_{N,\beta,h}^{\mathcal{G}}(Q_\ell, \mu_1,0)-\log \K_{N,\beta,h}(Q_\ell, \mu_0,0)+N(\Ent_{Q_\ell}(\mu_1)-\Ent_{Q_\ell}(\mu_0))|\\
\lesssim_\beta  (1+\beta)  N\ell^\d \(\ell^2  ( |\mu_0|_{C^1}+  |\mu_1-\mu_0|_{C^1} )  |\mu_1-\mu_0|_{C^1} 
+ \ell  |\mu_1-\mu_0|_{C^1}\),\end{multline}
where $C$ depends only on $\d$ and a lower bound for $\mu_0$ and $\mu_1$, and where $\Ent_{Q_\ell}(\mu)$ is the entropy restricted to $Q_\ell$, i.e.~$\int_{Q_\ell}\mu \log \mu$. 
\end{lem}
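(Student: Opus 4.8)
The plan is to mimic the proof of the analogous statement in the Coulomb case (\cite[Chapter 9]{S24}) via a transport argument, using the change of variables identity established earlier in Lemma \ref{lem2.1} but now in the local Neumann setting. First I would set $\psi_s$ to be a solution of the continuity-type equation $-\div(\psi_s\mu_s)=\mu_1-\mu_0$ in $Q_\ell$ with $\psi_s\cdot\vec n=0$ on $\partial Q_\ell$, where $\mu_s=(1-s)\mu_0+s\mu_1$; such a $\psi_s$ exists and, since $\mu_s$ is bounded below, one gets the estimates $\|\psi_s\|_{L^\infty(Q_\ell)}\lesssim \ell\,|\mu_1-\mu_0|_{C^0}$ (or rather $\ell^{\d+1}$ times an $L^\infty$ bound on the difference, using that the difference has zero mean on $Q_\ell$) and $\|D\psi_s\|_{L^\infty}\lesssim \ell\,|\mu_1-\mu_0|_{C^1}$ — these are the cube analogues of Proposition \ref{transport}, but much easier because there is no fractional obstacle problem, no blow-up at $\partial\Sigma$, and the domain is a fixed cube. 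By Lemma \ref{1221}, the flow $\Phi_s$ of $\psi_s$ satisfies $\Phi_s\#\mu_0=\mu_s$, so in particular $\Phi_1\#\mu_0=\mu_1$.

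Next I would perform the change of variables $y_i=\Phi_1(x_i)$ in the integral defining $\K_{N,\beta,h}^{\mathcal G}(Q_\ell,\mu_1,0)$, exactly as in the proof of Lemma \ref{lem2.1} (or its post-splitting version), but now using the localized Neumann energy $\F_N(\cdot,\cdot,Q_\ell\times[-h,h])$. This produces the identity
\begin{multline*}
\log\K_{N,\beta,h}^{\mathcal G}(Q_\ell,\mu_1,0)-\log\K_{N,\beta,h}(Q_\ell,\mu_0,0)+N(\Ent_{Q_\ell}(\mu_1)-\Ent_{Q_\ell}(\mu_0))\\
=\log\Esp_{\Q_{N,\beta,h}(Q_\ell,\mu_0,0)}\Big(\exp\big(-\beta N^{-\frac\s\d}(\F_N(\Phi_1(\XN),\mu_1,Q_\ell\times[-h,h])-\F_N(\XN,\mu_0,Q_\ell\times[-h,h]))\\+\Fluct_{\mu_0}(\log\det D\Phi_1)\big)\indic_{\mathcal G}\Big),
\end{multline*}
where the entropy terms come from $\det D\Phi_1=\mu_0/(\mu_1\circ\Phi_1)$ as in the earlier lemmas. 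The remaining task is to bound the right-hand side. For the energy difference I would integrate $\frac{d}{ds}\F_N(\Phi_s(\XN),\mu_s,Q_\ell\times[-h,h])$ and recognize the derivative as the first-order commutator term $\Ani_1$ (cf. \eqref{dtF}), which by the localized commutator estimate Proposition \ref{pro:commutator} and the Neumann energy control \eqref{arrivC0} is bounded on $\mathcal G$ by $\lesssim \|D\psi_s\|_{L^\infty}(\int_{Q_\ell\times[-h,h]}\yg|\nabla h_{N,\rr}|^2+\#I_{Q_\ell}N^{\s/\d})$, hence by $\lesssim_\beta \ell\,|\mu_1-\mu_0|_{C^1}\cdot N^{1+\frac\s\d}\ell^\d$ using the local laws on $\mathcal G$ together with the $\#I$ control. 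There is a subtlety in that one must also account for the fact that $\mu_s\circ\Phi_s$ varies; the careful bookkeeping here parallels the error estimates in the proof of Corollary \ref{tailored screening} or the $T_1$/$T_2$ analysis, and contributes the quadratic-in-$(\mu_1-\mu_0)$ terms $\ell^2(|\mu_0|_{C^1}+|\mu_1-\mu_0|_{C^1})|\mu_1-\mu_0|_{C^1}$. For $\Fluct_{\mu_0}(\log\det D\Phi_1)$, I would use that $\|\log\det D\Phi_1\|_{L^\infty(Q_\ell)}\lesssim \ell\,|\mu_1-\mu_0|_{C^1}$ and apply Lemma \ref{RoughFluct} to get $\lesssim_\beta N\ell^\d\cdot\ell|\mu_1-\mu_0|_{C^1}$. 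Finally, since the argument of the exponential is bounded pointwise on $\mathcal G$ by the claimed right-hand side, a two-sided application of Jensen's inequality (or Hölder, $\Esp(e^L)\Esp(e^{-L})\ge1$) yields the two-sided estimate \eqref{compdeskcube}.

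The main obstacle I anticipate is not the commutator estimate itself but the careful propagation of the regularity bounds on $\psi_s$ through the flow and the push-forward: one needs $\|D\psi_s\|_{L^\infty}$, $\|\psi_s\|_{L^\infty}$ and $\|\log\det D\Phi_s\|_{L^\infty}$ to remain controlled uniformly in $s\in[0,1]$ by the stated combinations of Hölder norms of $\mu_0$ and $\mu_1-\mu_0$, and to track exactly which terms are linear versus quadratic in $\mu_1-\mu_0$ — this is what dictates the precise form of the error in \eqref{compdeskcube}. The second delicate point is ensuring the local laws hypothesis is genuinely available along the whole interpolation: one needs $\mathcal G$ to be an event on which the local laws hold for $\Q_{N,\beta,h}(Q_\ell,\mu_s,0)$ simultaneously for all $s$, which is fine since $\mu_s$ stays bounded above and below with uniform constants, but the dependence of $\rho_\beta$ and the local-law constants on these bounds must be kept uniform in $s$. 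Everything else is a routine adaptation of the transport machinery already developed in Sections \ref{sec: tport}–\ref{sec: fluct} to the simpler, compactly-supported-in-a-cube, Neumann setting.
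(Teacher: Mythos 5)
Your proposal follows essentially the same route as the paper's proof: construct a Neumann transport $\psi_s$ with $-\div(\psi_s\mu_s)=\mu_1-\mu_0$, invoke Lemma~\ref{1221}, change variables as in the post-splitting version of Lemma~\ref{lem2.1}, control the energy difference via $\Ani_1$ and Proposition~\ref{pro:commutator} on the good event, and handle the $\Fluct_{\mu_0}(\log\det D\Phi_1)$ term with the rough bound Lemma~\ref{RoughFluct}. That is the paper's argument.

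The one substantive discrepancy is in the estimate you claim for the transport map and where the quadratic-in-$(\mu_1-\mu_0)$ error comes from. You assert $\|D\psi_s\|_{L^\infty}\lesssim \ell\,|\mu_1-\mu_0|_{C^1}$ and then say the term $\ell^2(|\mu_0|_{C^1}+|\mu_1-\mu_0|_{C^1})|\mu_1-\mu_0|_{C^1}$ appears from tracking $\mu_s\circ\Phi_s$ along the flow. In fact that second term is already present in $|\psi_s|_{C^1}$ itself. The paper takes the explicit construction $\psi_s=\nabla\varphi/\mu_s$, with $\varphi$ solving the Neumann Poisson problem $-\Delta\varphi=\mu_1-\mu_0$ in $Q_\ell$, so that $D\psi_s = D^2\varphi/\mu_s - \nabla\varphi\otimes\nabla\mu_s/\mu_s^2$; using $|\varphi|_{C^1}\lesssim\ell^2|\mu_1-\mu_0|_{C^1}$, $|\varphi|_{C^2}\lesssim\ell|\mu_1-\mu_0|_{C^1}$, and $|\mu_s|_{C^1}\le|\mu_0|_{C^1}+|\mu_1-\mu_0|_{C^1}$ gives exactly
\[|\psi_s|_{C^1}\lesssim \ell^2\bigl(|\mu_0|_{C^1}+|\mu_1-\mu_0|_{C^1}\bigr)|\mu_1-\mu_0|_{C^1}+\ell\,|\mu_1-\mu_0|_{C^1},\]
which is precisely the factor multiplying $N\ell^\d$ in \eqref{compdeskcube}. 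No separate $\mu_s\circ\Phi_s$ bookkeeping is needed (apart from using Proposition~\ref{pf loiloc} to propagate local laws along the flow, which you correctly flag). So your simpler $C^1$ bound on $\psi_s$ is not achievable in general, since the $\nabla\mu_s$ contribution cannot be avoided; once you replace it with the correct estimate, the rest of the argument goes through as you outline, and the ``extra bookkeeping'' step you anticipate becomes unnecessary.
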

\begin{proof} 
Since we are working with the Neumann energy in a cube, we need to find a transport that preserves the cube and solves \eqref{divpsimu}. For that  we let $\varphi$ solve 
\be \label{solxi}
\left\{
\begin{array}{ll}
-\Delta \varphi= \mu_1-\mu_0& \text{in} \ Q_\ell\\ [2mm]
\frac{\pa \varphi}{\pa n}=0 & \text{on} \ \pa Q_\ell .\end{array}\right.
\ee
By elliptic regularity and scaling we have  
$$|\varphi|_{C^1}\le C \ell^2|\mu_1-\mu_0|_{C^1}, \quad |\varphi|_{C^2}\le C \ell|\mu_1-\mu_0|_{C^1}.$$
Define, for $0\le s \le 1$, the linear interpolant
$ \mu_s= (1-s) \mu_0+ s\mu_1$.
Setting $$\psi_s:= \frac{\nab \varphi}{\mu_s},$$ 
we thus have $-\div (\psi_s\mu_s)= \mu_1-\mu_0$ i.e.~\eqref{divpsimu} is satisfied, thus 
$ \mu_1=\Phi_1\#\mu_0.$
Moreover, by simple estimates 
\begin{multline} \label{418}|\psi_s|_{C^1}\le C \(\left\| \frac{1}{\mu_s}\right\|^2_{L^\infty} |\mu_s|_{C^1}    |\varphi|_{C^1}+   \left\|\frac{1}{\mu_s}\right\|_{L^\infty}  |\varphi|_{C^2}\) \\
\le C \( \ell^2  \(  |\mu_0|_{C^1} + |\mu_1-\mu_0|_{C^1}\)  |\mu_1-\mu_0|_{C^1} 
+ \ell  |\mu_1-\mu_0|_{C^1}\),
\end{multline}
where $C$ depends only on $\d$ and a lower bound for $\mu_0$ and $\mu_1$. Computing as in \eqref{09}, we have 
\begin{multline*}
\log \K_{N,\beta,h}^{\mathcal{G}}(Q_\ell, \mu_1, 0)-\log \K_{N,\beta,h}(Q_\ell, \mu_0, 0)+N \left(\Ent_{Q_\ell}(\mu_1)- \Ent_{Q_\ell}(\mu_0)\right) \\
=\log \Esp_{\Q_{N,\beta,h}\(Q_\ell, \mu_0,0\)} \exp\biggl(-\beta N^{-\frac{\s}{\d}}\(\F_N\(\Phi_1(\XN), \mu_1, Q_\ell\times [-h,h]\) -\F_N\(\XN,\mu_0, Q_\ell \times [-h,h]\)\) \\
+\Fluct_{\mu_0}(\log \det D\phi_1)\biggr)\indic_{\mathcal{G}}.
\end{multline*}
The first term in the exponent is given by (cf. \eqref{dtF})
\begin{equation*}
-\beta N^{-\frac{\s}{\d}}\int_0^1 \Ani_1(\Phi_s(\XN), \mu_s, \psi \circ \Phi_s^{-1},Q_\ell) ds
\end{equation*}
where $\Ani_1 (\XN, \mu, \psi, U)$ is the Neumann variant of \eqref{dtF} defined in \cite[Section 9.2]{S24},  for which the same control as in 
 Proposition \ref{pro:commutator} holds. \cm{I had to change here} 
 Using then the local laws on $\mathcal{G}$ coupled with Proposition~\ref{pf loiloc}, this is controlled by
\begin{equation*}
\beta N^{-\frac{\s}{\d}}\int_0^1 |\Ani_1(\Phi_s(\XN), \mu_s, \psi \circ \Phi_s^{-1},Q_\ell)|~ds\lesssim_\beta \beta N\ell^\d \|\psi\|_{C^1}
\end{equation*} 
The second term is similarly well-controlled; \eqref{Linf} and the local laws yield
\begin{equation*}
\left|\Fluct_\mu(\log \det D\phi_1)\right|\lesssim_\beta N\ell^\d\|\psi\|_{C^1}.
\end{equation*}
With \eqref{418} we deduce the result.
\end{proof}

We now are able to write an expansion of the next-order partition functions as in \eqref{2eway0}.

\subsection{Relative Expansion of Next-Order Partition Functions}
The idea for obtaining \eqref{2eway0} is based on tools from \cite{S22}, with the added difficulty that the transport \eqref{Rtransport} is nonlocal. We will use the almost additivity Proposition \ref{prop: aa} to split the consideration between $U=Q_R$ for some $R>\ell$ to be determined and $Q_R^c$, applied to $Q_i$ with sidelengths in $[r,2r]$. We will compare $\K_{N,\beta, r}(Q_i,\mu,\zeta)$ to that of $\K_{N,\beta,r}(Q_i, \fint \mu, \zeta)$ in Lemma \ref{lem: const dens} using Lemma \ref{lem: vardens}, and optimize over the choice of $r$ to improve our error bounds. Our analysis of $\K_{N,\beta,R}^{\mathrm{ext}}(Q_R,\mu,\zeta)$ will rely on the decay of \eqref{Rtransport} away from $Q_\ell$.
 \begin{prop}[Comparison of Partition Functions]\label{Kcomp}
 Assume $\s \in (\d-2,\d)$ and that $\ell N^{\frac{1}{\d}}>\rho_\beta$. Assume $\supp \varphi  \subset \carr_\ell\subset \carr_{2\ep}\subset \Sigma$.  Let $\psi$ be given by Proposition \ref{transport}, and $\mu_t:=(\mathrm{Id}+t\psi)\#\muv$. Let $\mathcal{G}_\ell$ be the event of Theorem~\ref{FirstFluct}, and set
\begin{equation}\label{def kappa}
\kappa:=
\begin{cases}
1-\frac{4}{4+\d-\s_+}\ & \text{if }\d \leq 5 \text{ or }\s <\d-1 \\
1-\frac{\d-1}{(\d-1)+(\d-\s)} & \text{otherwise.}
\end{cases}
\end{equation}
 Then, for all $|t|\ell^{\s-\d}$ is small enough, we have
 \begin{multline*}
 \log \K_{N,\beta,\infty}^{\mathcal{G}_\ell}\(\R^\d, \mu_t,\zeta_V\circ \Phi_t^{-1}\)- \log \K_{N,\beta,\infty}\(\R^\d,\mu_0,\zeta_V\)+N(\Ent(\mu_t)-\Ent(\mu_0) ) \\=
\mathcal{Z} (\beta, \mu_t)-\mathcal{Z}(\beta, \mu_0)+ O_\beta((1+\beta)  N\ell^\d \mathcal R_t)
 \end{multline*}
 with 
 \be \mathcal{Z}(\beta, \mu)= -N\beta \int \mu^{1+\frac\s\d} \mf(\beta \mu^{\frac\s\d}) + \frac{N\beta}{2\d}\indic_{\s=0}\int\mu \log \mu\ee
and 
\begin{equation}\label{final error rate}
\mathcal{R}_t= \max\( ( |t|\ell^{\s-\d})^{\frac\d{2\d-\s}}\(\mathscr{E}\((N^{\frac{1}{\d}} \ell)^{\frac{1}{\kappa +1}}\)  \)^{\frac{\d-\s}{2\d-\s}} ,|t|\ep^{\s-\d}\)
\end{equation}where $\mathscr{E}$ is as in \eqref{error rate} and $\ep $ is as in \eqref{bulk}.
 \end{prop}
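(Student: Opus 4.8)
The strategy is to combine the almost-additivity Proposition~\ref{prop: aa} with the constant-density asymptotics of Lemma~\ref{lem: const dens} (via the scaling formula \eqref{part scaling}) and the variable-density comparison Lemma~\ref{lem: vardens}, splitting space into a box $Q_R$ around $\supp\varphi$ and its exterior. Fix a scale $r$ with $\rho_\beta N^{-1/\d}\le r\le R$ (in blown-up units $r N^{1/\d}$), both to be optimized at the end. First I would apply Proposition~\ref{prop: aa} (the identity \eqref{part ext}) to the partition functions $\K_{N,\beta,\infty}^{\mathcal G_\ell}(\R^\d,\mu_t,\zeta_V\circ\Phi_t^{-1})$ and $\K_{N,\beta,\infty}(\R^\d,\mu_0,\zeta_V)$, writing each as a sum over a tiling of $Q_R$ by hyperrectangles $Q_i$ of sidelength $\sim r$ plus an exterior term over $Q_R^c$, paying an error $\lesssim_\beta R^\d\mathscr E(rN^{1/\d})$ per density. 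The difference of exterior terms is then controlled using the decay estimate \eqref{tscale}: since $\psi$ decays like $\ell^\d|x-z|^{-(2\d-\s+m-1)}$ in $U\setminus\carr_{2\ell}$ and $\mu_t-\mu_0$ is supported (at leading order) where $\psi$ is, the change in $\zeta_V$ and $\mu$ in $Q_R^c$ contributes a negligible error provided $R$ is taken large enough relative to $\ell$. This is where the factor $|t|\epsilon^{\s-\d}$ in \eqref{final error rate} enters — outside of $\bulk$ (distance $\ge\epsilon$ from $\partial\Sigma$) we only have the crude controls, so the exterior comparison error is at worst $O(|t|N\ell^\d\epsilon^{\s-\d})$; I would make this precise by a dyadic sum of the contributions of the tails of $\psi$ as in the proof of Lemma~\ref{lem: t1} and Lemma~\ref{computationratio}.

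Second, for each interior box $Q_i$ I would run a three-step comparison: (i) use Lemma~\ref{lem: vardens} to compare $\K_{N,\beta,h_i}(Q_i,\mu_t,\cdot)$ with $\K_{N,\beta,h_i}(Q_i,\fint_{Q_i}\mu_t,\cdot)$ and similarly for $\mu_0$, picking up an error controlled by $r^2|\mu|_{C^1}^2+r|\mu|_{C^1}$ per box — and here $|\mu_t-\mu_0|_{C^1}\lesssim |t|\ell^{\s-\d}$ on $Q_i$ by \eqref{tscale}, so summing over the $\sim (R/r)^\d$ boxes this is manageable; (ii) use the constant-density scaling formula \eqref{part scaling} to evaluate $\log\K_{N,\beta,\infty}(Q_i,m_i,0)$ with $m_i=\fint_{Q_i}\mu_t$ in terms of $\mf(\beta m_i^{\s/\d})$, paying $O(\mathscr E(rN^{1/\d}))$ per box; (iii) Riemann-sum the resulting expressions $-\beta m_i^{1+\s/\d}\mf(\beta m_i^{\s/\d})+\tfrac{\beta}{2\d}m_i\log m_i\,\indic_{\s=0}$ over the tiling to recover $\frac1{N\ell^\d}\mathcal Z(\beta,\mu_t)$, where the assumption \eqref{425} (i.e.\ the $C^p$ bounds on $y\mf(y)$, used here only at $p=1$, or rather the differentiability of $y\mf(y)$) guarantees the Riemann-sum error is $O(r)$ times a $C^1$-norm of $x\mapsto \mu_t^{1+\s/\d}\mf(\beta\mu_t^{\s/\d})$. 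The $\Ent$ terms are handled by the identical tiling/Riemann-sum argument, and the $N(\Ent(\mu_t)-\Ent(\mu_0))$ on the left cancels against the corresponding sums. Taking the difference of the $t$ and $0$ expansions, the leading terms assemble into $\mathcal Z(\beta,\mu_t)-\mathcal Z(\beta,\mu_0)$.

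Third, collecting all errors: the interior contributions total roughly $R^\d\mathscr E(rN^{1/\d})+ (R/r)^\d(r^2(|t|\ell^{\s-\d})^2 + r|t|\ell^{\s-\d}) + R^\d r^{?}$-type Riemann errors, while the exterior and $C^1$-density errors are absorbed into $|t|\ell^\d\epsilon^{\s-\d}$-type terms; then one optimizes the free parameters. The screening-induced rate $\mathscr E$ in \eqref{error rate} is $\mathscr E(L)=L^{-\kappa}\log L$ with $\kappa$ as in \eqref{def kappa}; balancing $\mathscr E(rN^{1/\d})$ against the density-variation error $\sim r(|t|\ell^{\s-\d})$ (the $C^1$ term, which dominates the $C^2$ one for $r\ll 1$) across the tiling leads to the choice $rN^{1/\d}\sim (N^{1/\d}\ell)^{1/(\kappa+1)}$ and, after tracking powers, to the rate $\mathcal R_t=\max\big((|t|\ell^{\s-\d})^{\frac{\d}{2\d-\s}}(\mathscr E((N^{1/\d}\ell)^{1/(\kappa+1)}))^{\frac{\d-\s}{2\d-\s}},\,|t|\epsilon^{\s-\d}\big)$ of \eqref{final error rate}. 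I expect the main obstacle to be the exterior estimate and, more precisely, making the interface between the ``bulk'' region (where local laws of Theorem~\ref{Local Law} hold down to the microscale and Lemma~\ref{lem: vardens} applies with sharp constants) and the boundary layer near $\partial\Sigma$ genuinely harmless: one must choose $R$ large enough that the tails of $\psi$ and of $\mu_t-\mu_0$ past $Q_R$ are negligible, yet the boxes near $\partial\Sigma\cap Q_R$ still need the local laws, which degrade there. The resolution, as elsewhere in the paper, is to exploit that $\carr_\ell\subset\carr_{2\epsilon}\subset\Sigma$ so that the relevant boxes stay at distance $\ge\epsilon$ from $\partial\Sigma$, and to lump everything happening at distance $O(\epsilon)$ or more from $z$ into the single crude error $|t|\epsilon^{\s-\d}$, exactly as in \eqref{final error rate}; verifying that this crude bound is indeed what comes out, with the right power of $|t|$, is the delicate bookkeeping step.
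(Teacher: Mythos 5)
Your overall architecture is exactly the paper's: apply Proposition~\ref{prop: aa} on a box $Q_R \subset \bulk$ tiled by hyperrectangles $Q_i$ of scale $r$ plus an exterior, compare each $\K_{N,\beta,r}(Q_i,\mu,\cdot)$ to the constant-density value via Lemma~\ref{lem: vardens} and Lemma~\ref{lem: const dens} (through \eqref{part scaling}), Riemann-sum using \eqref{425}, treat the exterior by the commutator/transport argument using the decay \eqref{tscale}, and finally optimize over $r$ and $R\le\ep$. The paper's Step~3 also uses the commutator estimate and $\Fluct_{\muv}(\log\det D\Phi_t)$ dyadically to get the exterior error $\lesssim_\beta |t|N\ell^\d R^{\s-\d}$, which is what you gesture at.

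However, there is a genuine gap in the balancing. You claim that the per-box density-variation error from Lemma~\ref{lem: vardens} is $\sim r(|t|\ell^{\s-\d})$, and that this is what gets balanced against $\mathscr{E}(rN^{1/\d})$. That is not correct: when comparing $\K(Q_i,\mu_t,0)$ to $\K(Q_i,\fint_{Q_i}\mu_t,0)$, the relevant quantity in \eqref{compdeskcube} is $r\|\mu_t\|_{C^1}$ (and $r^2\|\mu_t\|_{C^1}^2$), \emph{not} $|\mu_t-\mu_0|_{C^1}$ -- the comparison is to a piecewise constant, not to $\muv$, so there is no cancellation with the $t=0$ side to exploit. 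By \eqref{maxmu}, $\|\mu_t\|_{C^1}\lesssim\ell^{-1}$ (uniformly in $t$ for $|t|\ell^{\s-\d}$ small), which is generally much larger than $|t|\ell^{\s-\d}$ (and also larger than the correct value $|t|\ell^{\s-\d-1}$ for $|\mu_t-\mu_0|_{C^1}$). The balance $\mathscr{E}(rN^{1/\d}) \sim r\ell^{-1}$ is what produces $r = N^{-\kappa/(\d(\kappa+1))}\ell^{1/(\kappa+1)}$ and hence $\mathscr{E}((N^{1/\d}\ell)^{1/(\kappa+1)})$; your claimed balance $\mathscr{E}(rN^{1/\d})\sim r|t|\ell^{\s-\d}$ would not give that $r$, so your derivation of the interior error rate is internally inconsistent even though you quote the correct answer.

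A related, smaller imprecision: you treat the exterior error as fixed at $O(|t|N\ell^\d\ep^{\s-\d})$, but it is really $O(|t|N\ell^\d R^{\s-\d})$, a decreasing function of $R$ since $\s<\d$. The final rate $\mathcal R_t$ comes from balancing this against the $R^\d\mathscr{E}$ interior contribution (which is how the power $\frac{\d}{2\d-\s}$ on $|t|\ell^{\s-\d}$ and the power $\frac{\d-\s}{2\d-\s}$ on $\mathscr E$ arise), and only falls back to $|t|\ep^{\s-\d}$ when the unconstrained optimal $R$ exceeds $\ep$. Without this balancing step you would not recover the first branch of \eqref{final error rate}.
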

Together with \eqref{maxmu} this will provide \eqref{2eway0}.
 \begin{proof} 
 Before we start, we replace $\log \K_{N,\beta,\infty}\(\R^\d,\mu_0,\zeta_V\)$ with $\log \K_{N,\beta,\infty}^{\mathcal{G}_\ell}\(\R^\d,\mu_0,\zeta_V\)$; since the difference between the two is just $\log \PNbeta(\mathcal{G}_\ell)=\log (1+O(e^{-C\beta N \ell^\d}))=O(e^{-C\beta N \ell^\d})$ so  exponentially small, we will be able to absorb it into the error terms we generate below.
 
If $\ell$ is large enough, we just use simple additivity at some scale $R$. Otherwise, it is better to include $Q_\ell$ into some larger cube $Q_R$ with $R$ to be determined. 

\textbf{Step 1: Cutoff via Almost Additivity.}
Since $\supp \varphi \subset \bulk$, we may include $\supp \varphi$ in a hyperrectangle $Q_R$ at distance $\ge \ep>0$ from $\partial \Sigma$, and such that $\muv(Q_R) $ is an integer, for some $R\le \ep$.

First, we apply almost additivity (Proposition \ref{prop: aa})  on $U_1=Q_R$ and $U_2=\Phi_t(Q_R)$. Note that if $t\ell^{\s-\d}$ is sufficiently small, $|\Phi_t-\id |\le t\|\psi\|_{L^\infty}$ is small as well, and thus both $U_1$ and $U_2$ are  at distance $\ge \ep$ from $\partial \Sigma$, hence sets  where $\muv$ and $\mu_t$ are bounded below and $\zeta_V$, $\zeta_V\circ \Phi_t$ vanish, and local laws hold.

We will partition $U_1$ and $U_2$  into sets $Q_i$ with sidelengths in $[r,2r]$ with $r$ to be optimized; this yields   
\begin{multline}\label{eq: firstexp}
\log \K_{N,\beta,\infty}^{\mathcal{G}_\ell}\(\R^\d,\mu_t,\zeta_V\circ \Phi_t^{-1}\)- \log \K_{N,\beta,\infty}^{\mathcal{G}_\ell}\(\R^\d,\mu_0,\zeta_V\)=\\
\sum_{i=1}^p \( \log \K_{N,\beta,r}^{\mathcal{G}_{\Phi_t(Q_i)}}\(\Phi_t(Q_i), \mu_t, 0\)- \log \K_{N,\beta,r}^{\mathcal{G}_{Q_i}}\(Q_i, \mu_0, 0\)\) +
NR^\d O_\beta\(\mathscr{E}( rN^{\frac{1}{\d}})\)\\
+\log \K_{N,\beta,R}^{\mathrm{ext}, \mathcal{G}_{\Phi_t(Q_R^c)}}\(\Phi_t(Q_R^c),\mu_t,\zeta_V \circ \Phi_t^{-1}\)-\log \K_{N,\beta,R}^{\mathrm{ext},\mathcal{G}_{Q_R^c}}\(Q_r^c,\mu_0, \zeta_V\)
\end{multline}
where $\mathscr{E}$ is as in \eqref{error rate} and $\mathcal{G}_\Omega$ denotes the set of configurations in $\Omega$ that are restrictions of a configuration in $\mathcal{G}_\ell$.

Since $\mu_t= (\id+ t\psi) \# \mu_0$, a  computation based on \eqref{explicitmu} and \eqref{tscale} yields that  if $t \ell^{\s-\d}$ is small enough, 
\be \label{maxmu}
\|\mu_t\|_{C^1} \lesssim  |t| \|\psi\|_{C^2}\lesssim \ell^{\d-\s}\frac{1}{\ell^{\d-\s+1}}=\ell^{-1}.\ee 

\textbf{Step 2: Comparison of Local Partition Functions.}
We next analyze 
\begin{equation*}
\sum_{i=1}^p\log \K_{N,\beta,r}^{\mathcal{G}_{Q_i}}\(Q_i, \mu, 0 \)
\end{equation*}
with $\mu$ either $\mu_0$ or $\mu_t$ above. The idea is reduce to Lemma \ref{lem: const dens} using Lemma \ref{lem: vardens}. Using the scaling identity from \eqref{part scaling} we find  for each $i$ that 
\begin{multline*}
\log \K_{N,\beta,r}^{\mathcal{G}_{Q_i}}\(Q_i, \mu, 0 \) +N\Ent_{Q_i}(\mu)\\
=-\beta N |Q_i|\(\overline{\mu}^{1+\frac{\s}{\d}}\mf\(\beta \overline{\mu_i}^{\frac{\s}{\d}}\)-\frac{1}{2\d}\overline{ \mu}\log \overline{\mu} \indic_{\s=0}\)+\(\frac{\beta}{2\d}\mn \log N\)\indic_{\s=0}\\
+(1+\beta)Nr^\d O_\beta\(r\|\mu\|_{C^1}+r^2\|\mu\|_{C^1}^2\indic_{\s=0}+\mathscr{E}\(r N^{\frac{1}{\d}}\)\)
\end{multline*}
with $\mn=N\int_{Q_U}\mu$,  and $\overline{\mu}_i=\fint_{Q_i}\mu$. Running the same computation as in the proof of \cite[Lemma 6.1]{S22} to replace $\mu$ with $\overline{\mu}$ using \eqref{425}, and summing over $i$, we obtain
\begin{multline}
\sum_{i=1}^p\log \K_{N,\beta,r}^{\mathcal{G}_{Q_i}}\(Q_i, \mu, 0 \)+N\Ent_{Q_i}(\mu)=-\beta N\int_{U}\mu^{1+\frac{\s}{\d}}\mf\(\beta \mu^{\frac{\s}{\d}}\)-\frac{\beta}{2\d}N\indic_{\s=0}\int_U \mu\log \mu\\+\(\frac{\beta}{2\d}\mn \log N\)\indic_{\s=0}
+
(1+\beta)NR^\d O_\beta\(r\|\mu\|_{C^1}+r^2\|\mu\|_{C^1}^2\indic_{\s=0}+\mathscr{E}\(r N^{\frac{1}{\d}}\)\)
\end{multline}
with $\mathscr{E}$ as in \eqref{error rate}. 
Requiring $r\|\mu\|_{C^1}\lesssim 1$, equivalently $r< \ell$ in view of \eqref{maxmu},   to absorb some errors, we need to optimize 
$\mathscr{E}(r N^{\frac1\d})+r\|\mu\|_{C^1}$
over choices of $r<\ell$. Letting $-\kappa$ with $\kappa>0$ denote the exponent on $L$ in \eqref{error rate}, we see that the right choice of $r$ is 
\begin{equation*}
r=N^{-\frac{\kappa}{\d(\kappa+1)}}\|\mu\|_{C^1}^{-\frac{1}{\kappa+1}}.
\end{equation*} With this choice, we have $r\|\mu\|_{C^1}\lesssim 1$ thanks to the fact that $\ell \ge \rho_\beta N^{-1/\d}$ and $\|\mu\|_{C^1}\lesssim \ell^{-1}$. Thus, the above optimization is the correct one, moreover $r \lesssim \ell$, and up to multiplying $r$ by a constant, we have $ r <\ell \le R$, hence the partitioning of $Q_R$ into size $r$ hyperrectangles makes sense (at least provided $R/\ell$ is large enough).
If $R/r$ is large enough, we use \cite[Lemma 5.13]{S24} to  partitioning $Q_R$ into hyperrectangles of size comparable to $r$ and with quantized $\mu$-mass, i.e.~$N\mu(Q_i)$ integer. If  $R/r$ is not, then we do not partition further $Q_R$.
In all cases this yields 
\begin{multline}\label{eq: innerexp}
\sum_{i=1}^p\log \K_{N,\beta,r}^{\mathcal{G}_{Q_i}}\(Q_i, \mu, 0 \)+N\Ent_{Q_i}(\mu)=-\beta N\int_{U}\mu^{1+\frac{\s}{\d}}\mf\(\beta \mu^{\frac{\s}{\d}}\)-\frac{\beta}{2\d}N \indic_{\s=0}\int_U \mu\log \mu\\ +\(\frac{\beta}{2\d}\mn \log N\)\indic_{\s=0}+O_\beta \( (1+\beta)NR^\d\mathscr{E}\( (N^{\frac{1}{\d}} \|\mu\|_{C^1}^{-1})^{\frac{1}{\kappa +1}}\)\).
\end{multline}

\textbf{Step 3: External estimate.}
We turn to estimating
\begin{equation*}
\log \K_{N,\beta,R}^{\mathrm{ext}, \mathcal{G}_{\Phi_t(Q_R^c)}}\(\Phi_t(Q_R^c),\mu_t,\zeta_V \circ \Phi_t^{-1}\)-\log \K_{N,\beta,R}^{\mathrm{ext},\mathcal{G}_{Q_R^c}}\(Q_r^c,\mu_0, \zeta_V\)
\end{equation*}
in \eqref{eq: firstexp}. By the same computation as in \eqref{09} we obtain 
\begin{multline*}
\log \K_{N,\beta,R}^{\mathrm{ext}, \mathcal{G}_{\Phi_t(Q_R^c)}}(\Phi_t(Q_R^c), \Phi_t\#\muv, \zeta_V \circ \Phi_t^{-1})-\log \K_{N,\beta,R}^{\mathrm{ext},\mathcal{G}_{Q_R^c}}(Q_R^c, \muv, \zeta_V)\\
+N \left(\Ent_{\Phi_t(Q_R^c)}(\Phi_t\#\muv)- \Ent_{Q_R^c}(\muv)\right) \\
=\log \Esp_{\Q_{N,\beta,R}^{\mathrm{ext}, \mathcal{G}_{Q_R}}\(Q_R^c, \muv,\zeta_V\)} \exp\biggl(-\beta N^{-\frac{\s}{\d}}\(\F_N\(\Phi_t(\XN), \Phi_t\# ({\muv}_{|Q_R^c}), \(\Phi_t(Q_R)\times [-R,R]\)^c\)\) \\
-\F_N\(\XN,{\muv}_{|Q_R^c}, (Q_R\times [-R,R])^c\)+\Fluct_{\muv}(\log \det D\phi_t)\biggr)\end{multline*}
The first term in the exponent is equal by \eqref{dtF} to
\begin{equation*}
\int_0^t \Ani_1(\Phi_s(\XN), \Phi_s\#(\mu_{|Q_R^c}), \psi \circ \Phi_s^{-1})~ds
\end{equation*}
which we can control, using the local laws and modifying the proof of Corollary \ref{lemsubdi}, by 
\begin{multline*}
\int_0^t \Ani_1(\Phi_s(\XN), \Phi_s\#(\mu_{|Q_R^c}), \psi \circ \Phi_s^{-1})~ds\lesssim_\beta |t|N\|\varphi_0\|_{C^4}\ell^\s\sum_{k\ge \log_2 \frac{R}\ell } \frac{1}{(2^{\d-\s})^k}\\
\lesssim |t|N\|\varphi_0\|_{C^4}\ell^\s\(\frac{\ell}R\)^{\d-s}\lesssim_\beta   |t|N\|\varphi_0\|_{C^4}\ell^\d R^{\s-\d}
\end{multline*} The second term is similarly well-controlled;  \eqref{Linf} and \eqref{tscale} yield also
\begin{equation*}
\left|\Fluct_{\muv}(\log \det D\phi_t)\right|\lesssim_\beta |t| \|\varphi_0\|_{C^4}\sum_{k \geq \log_2\frac{R}{\ell}} \frac{\ell^\d N\(2^k\ell\)^\d}{\(2^k \ell\)^{2\d-\s}}\lesssim_\beta  |t|N\|\varphi_0\|_{C^4}\ell^\d R^{\s-\d}.
\end{equation*}
We thus have the estimate
\begin{equation}\label{733}
\log \K_{N,\beta,R}^{\mathrm{ext}, \mathcal{G}_{\Phi_t(Q_R^c)}}\(\Phi_t(Q_R^c),\mu_t,\zeta_V \circ \Phi_t^{-1}\)-\log \K_{N,\beta,R}^{\mathrm{ext},\mathcal{G}_{Q_R^c}}\(Q_r^c,\mu_0, \zeta_V\)\lesssim_\beta |t|N\|\varphi_0\|_{C^4}\ell^\d R^{\s-\d}.
\end{equation}
\textbf{Step 4: Conclusion.} We apply  
\eqref{eq: innerexp} to $\mu_t$ and to $\muv$,
 use that $\|\mu_t\|_{C^1}\lesssim \ell^{-1}$,  insert into \eqref{eq: firstexp} and  add the bounds of the previous step  to obtain
\begin{multline*}
\log \K_{N,\beta,\infty}^{\mathcal{G}_\ell}\(\R^\d, \mu_t,\zeta_V\circ \Phi_t^{-1}\)- \log \K_{N,\beta,\infty}\(\R^\d,\mu_0,\zeta_V\) +N\(\Ent(\mu_t)-\Ent(\muv)\)=\\
 -N\beta \int_{Q_R} \mu_t^{1+\frac\s\d} \mf(\beta \mu_t^{\frac\s\d}) + \frac{N\beta}{2\d}\indic_{\s=0}\int_{Q_R}\mu_t \log \mu_t 
 +N\beta \int_{Q_R} \muv^{1+\frac\s\d} \mf(\beta \muv^{\frac\s\d}) - \frac{N\beta}{2\d}\indic_{\s=0}\int_{Q_R}\muv \log \muv\\
 +O_\beta\(|t|N\|\varphi_0\|_{C^4}\ell^\d R^{\s-\d}\) +O_\beta\( (1+\beta)NR^\d
\mathscr{E}\((N^{\frac{1}{\d}} \ell)^{\frac{1}{\kappa +1}}\)\).
\end{multline*}
We can insert the remaining integral over $Q_R^c$ into the above as well as in the computation in \eqref{borneB}. In particular, one has
\begin{multline*}
\bigg|-N\beta \int_{Q_R^c} \mu_t^{1+\frac\s\d} \mf(\beta \mu_t^{\frac\s\d}) + \frac{N\beta}{2\d}\indic_{\s=0}\int_{Q_R^c}\mu_t \log \mu_t \\
 +N\beta \int_{Q_R^c} \mu^{1+\frac\s\d} \mf(\beta \mu^{\frac\s\d}) - \frac{N\beta}{2\d}\indic_{\s=0}\int_{Q_R^c}\muv \log \muv\biggr|\lesssim_\beta N\beta \int_{Q_R^c}|t||D\psi| \lesssim_\beta |t| N \beta \ell^{\d}R^{\s-\d}
\end{multline*}
using \eqref{tscale}. Incorporating this into the error terms we find 
\begin{multline*}
\log \K_{N,\beta,\infty}^{\mathcal{G}_\ell}\(\R^\d, \mu_t,\zeta_V\circ \Phi_t^{-1}\)- \log \K_{N,\beta,\infty}\(\R^\d,\mu_0,\zeta_V\)  +N\(\Ent(\mu_t)-\Ent(\muv)\)\\
=\mathcal{Z}(\beta,\mu_t)-\mathcal{Z}(\beta,\mu_0) 
+O_\beta\(|t|N\|\varphi_0\|_{C^4}\ell^\d R^{\s-\d}\) +O_\beta\( (1+\beta)NR^\d
\mathscr{E}\( (N^{\frac{1}{\d}} \ell)^{\frac{1}{\kappa +1}}\)\).
\end{multline*}
We now have to optimize the sum of the errors over $R\le \ep$. Equating the two terms leads to the choice $$R= \min \( (|t|\ell^\d)^{\frac{1}{2\d-\s}} \( \mathscr{E}\( (N^{\frac{1}{\d}} \ell)^{\frac{1}{\kappa +1}}\)\)^{-\frac{1}{2\d-\s}},\ep\)$$ and an error rate   $\max(( |t|\ell^{\s-\d})^{\frac\d{2\d-\s}} \mathscr{E}^{\frac{\d-\s}{2\d-\s}}, |t|\ep^{\s-\d}).$



\end{proof}

\section{Fractional harmonic extension}\label{sec: flap}
In this section we state and prove existence and regularity of the fractional harmonic extension, i.e.~the solution $\varphi^\Sigma$ to
\begin{equation}
\begin{cases}
\varphi^{\Sigma}=\varphi & \text{in }\Sigma \\
(-\Delta)^\alpha \varphi^\Sigma=0 & \text{in } \Sigma^c.
\end{cases}
\end{equation}
for $\Sigma \subset \R^\d$, as well as provide estimates for it.  We could not find a succinct statement of the existence result for solutions of fractional elliptic problems in unbounded domains in the literature, although it is certainly known. For completeness we offer its proof here. Interior and boundary regularity for solutions of equations like \eqref{def aharmext} are well-studied, and indeed the optimal regularity is understood to be $C^\alpha$, see \cite{RoS14}, \cite[Theorem 1.1]{RoW24} and references therein. The regularity of $\frac{\varphi^{\Sigma}-\varphi}{\dist(x,\Sigma)^\alpha}$ and its scaling properties then becomes an interesting question, which we rely crucially on here.

\begin{lem}\label{def aharmext}
Let $U$ be a neighborhood of $\Sigma$.
Let  $0<\alpha<1$ and let $\varphi\in \dot H^\alpha(\R^\d)$, where $\dot{H}^\alpha$ is the Sobolev space defined via \eqref{soboFT}. Suppose that $\partial \Sigma$ is a $C^{1,1}$ boundary. Then, there exists a unique function $\varphi^{\Sigma} \in \dot{H}^\alpha(\R^\d)$ solving \eqref{aharmext}. If $(-\Delta)^\alpha \varphi \in L^\infty(\R^\d)$ we also have the boundary estimate
\begin{equation}\label{aharm reg}
\left\|\frac{\varphi^{\Sigma}-\varphi}{\dist(x,\Sigma)^\alpha}\right\|_{C^{\sigma}(\overline{U \setminus \Sigma})} \lesssim \|(-\Delta)^\alpha \varphi\|_{L^\infty(U \setminus \Sigma)}
\end{equation}
for all $\sigma \in (0,\alpha)$. Furthermore, if 
 $\supp \varphi \subset \carr_\ell $, for some cube $\carr_\ell$ of size $\ell$ included  in $\bulk$, then
\begin{equation}\label{D2}
\left\|\frac{\varphi^{\Sigma}-\varphi}{\dist(x,\Sigma)^\alpha}\right\|_{C^{\sigma}(\overline{U \setminus \Sigma})} \lesssim \|\varphi\|_{L^\infty}\ell^\d
\end{equation} 
for all $\sigma \in (0,\alpha)$. 

If in addition we assume that $(-\Delta)^\alpha \varphi\in C^{k}(\R^\d)$ and $\partial \Sigma$ is $C^{k+1}$,  we also have 
\begin{equation}\label{aharm higher reg}
\left\|\frac{\varphi^{\Sigma}-\varphi}{\dist(x,\Sigma)^\alpha}\right\|_{C^{\sigma}(\overline{U \setminus \Sigma})} \lesssim \|(-\Delta)^\alpha \varphi\|_{C^{(\sigma-\alpha)}(U \setminus \Sigma)}
\end{equation}
for all $\sigma \in (\alpha,k)$ and $\sigma, \sigma\pm \alpha \notin \mathbb{N}$. \eqref{D2} holds as well in the mesoscopic interior case under the assumption that $\partial \Sigma$ is $C^{k+1}$;
no additional regularity on the test function $\varphi$ is needed.
\end{lem}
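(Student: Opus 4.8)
The plan is to establish Lemma~\ref{def aharmext} in four stages: existence/uniqueness of $\varphi^\Sigma$, the low-regularity boundary estimate \eqref{aharm reg}, the mesoscopic rescaled estimate \eqref{D2}, and finally the higher-regularity estimate \eqref{aharm higher reg}. For existence and uniqueness, I would work directly with the variational (Gagliardo) seminorm on $\dot H^\alpha(\R^\d)$. The function $v:=\varphi^\Sigma-\varphi$ should be characterized as the unique minimizer of $\|w+\varphi\|_{\dot H^\alpha}^2$ over the closed affine subspace $\{w\in\dot H^\alpha(\R^\d): w=0 \text{ in }\Sigma\}$; equivalently, $v$ is the $\dot H^\alpha$-orthogonal projection onto the subspace of functions supported in $\overline{\Sigma^c}$ of $-\varphi$. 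The Euler--Lagrange equation is precisely $(-\Delta)^\alpha(v+\varphi)=0$ weakly in $\Sigma^c$, i.e. \eqref{aharmext}; the constraint $v=0$ in $\Sigma$ gives the first line of \eqref{aharmext}. Convexity and completeness of $\dot H^\alpha$ (using that $\varphi\in\dot H^\alpha$ and that, by \cite[Theorem 6.5]{DPV12}, $\dot H^\alpha\hookrightarrow L^1_{\mathrm{loc}}$, so the trace in $\Sigma$ makes sense) yield existence and uniqueness. Here the $C^{1,1}$ regularity of $\partial\Sigma$ is used only so that the constraint set is well-behaved and the standard fractional obstacle/Dirichlet theory applies.

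For \eqref{aharm reg}, I would invoke the boundary regularity theory for the fractional Dirichlet problem. Since $v=\varphi^\Sigma-\varphi$ solves $(-\Delta)^\alpha v=-(-\Delta)^\alpha\varphi$ in $U\setminus\Sigma$ with $v=0$ in $\Sigma$ and $(-\Delta)^\alpha\varphi\in L^\infty$, the optimal boundary regularity result of Ros-Oton--Serra \cite[Theorem 1.2]{RoS14} (the $C^{1,1}$-domain version) gives $v/d^\alpha\in C^\sigma(\overline{U\setminus\Sigma})$ for every $\sigma<\alpha$, with the stated quantitative control by $\|(-\Delta)^\alpha\varphi\|_{L^\infty(U\setminus\Sigma)}$, after using \cite[Theorem 1.2]{RoS16} to control the global $L^\infty$ norm of $v$ by $\|(-\Delta)^\alpha\varphi\|_{L^\infty}$ plus a compactly supported correction. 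Then \eqref{D2} follows from \eqref{aharm reg} combined with a quantitative bound on $(-\Delta)^\alpha\varphi$ tailored to the mesoscopic setting: when $\supp\varphi\subset\carr_\ell\subset\bulk$ and $x\in U\setminus\Sigma$ is at distance $\gtrsim\ep$ from $\supp\varphi$, the pointwise formula \eqref{def fraclap} gives $|(-\Delta)^\alpha\varphi(x)|\lesssim \|\varphi\|_{L^\infty}\ell^\d$ (this is Lemma~\ref{lemestxi}-type content, which we may cite), and this decay is exactly what produces the factor $\|\varphi\|_{L^\infty}\ell^\d$ in place of $\|(-\Delta)^\alpha\varphi\|_{L^\infty}$. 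I would note that since $(-\Delta)^\alpha\varphi$ is smooth and rapidly decaying away from $\carr_\ell$, no further regularity of $\varphi$ is needed beyond $\varphi\in C^2_c$.

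For the higher-regularity estimate \eqref{aharm higher reg}, the approach is to bootstrap using the higher-order boundary Schauder theory for the fractional Laplacian, namely \cite[Lemma 3.2]{FR2} (iterated) or the boundary Harnack-type estimates already invoked in the paper, e.g. \cite[Theorem 1.3]{ARo20}, applied to $v$ and to the distance function $d^\alpha$. Concretely, writing $v=\varphi^\Sigma-\varphi$, which satisfies $(-\Delta)^\alpha v=-(-\Delta)^\alpha\varphi\in C^{\sigma-\alpha}$ in $U\setminus\Sigma$ and vanishes in $\Sigma$, the expansion $v=c(x_0)d^\alpha+o(d^\alpha)$ at each regular boundary point upgrades to $v/d^\alpha\in C^\sigma$ once the right-hand side is $C^{\sigma-\alpha}$ and $\partial\Sigma\in C^{\sigma+1}$; the condition $\sigma,\sigma\pm\alpha\notin\mathbb N$ is precisely the non-integer hypothesis required by these fractional Schauder estimates (this is why the paper is careful throughout to avoid integer H\"older exponents). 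The mesoscopic statement follows by the same decay argument as for \eqref{D2}: because $(-\Delta)^\alpha\varphi$ restricted to $U\setminus\Sigma$ is controlled in every $C^m$ by $\|\varphi\|_{L^\infty}\ell^\d$ with $U\setminus\Sigma$ kept at distance $\gtrsim\ep$ from $\carr_\ell$, the $C^{k+1}$ regularity of $\partial\Sigma$ suffices and no extra smoothness of $\varphi$ is used.

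The main obstacle I anticipate is the higher-regularity step \eqref{aharm higher reg}: the available boundary regularity theorems for the fractional Laplacian in the literature are stated for specific (often non-integer) H\"older classes and under domain-regularity hypotheses that must be matched carefully, and the iteration that raises $\sigma$ past $\alpha$, then $2\alpha$, etc., requires keeping track of which exponents hit integers and which estimates apply on each step. A secondary subtlety is ensuring that the quantitative constants are genuinely controlled by the claimed norms of $(-\Delta)^\alpha\varphi$ on $U\setminus\Sigma$ only (not on all of $\R^\d$), which is what makes the mesoscopic scalings \eqref{D2} work; this forces the use of the interior-versus-tail splitting of $(-\Delta)^\alpha\varphi$ at every stage. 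Everything else — existence via the projection argument, and \eqref{aharm reg} via \cite[Theorem 1.2]{RoS14} — is essentially a matter of correctly citing the fractional obstacle/Dirichlet regularity toolkit already in use elsewhere in the paper.
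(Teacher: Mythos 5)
Your proposal matches the paper's proof quite closely in structure and in the toolbox invoked (variational existence, Ros-Oton--Serra $C^{1,1}$-boundary regularity for \eqref{aharm reg}, the $\ell^\d$-decay of $(-\Delta)^\alpha\varphi$ away from $\carr_\ell$ for \eqref{D2}, and the Abatangelo--Ros-Oton boundary Schauder theory for \eqref{aharm higher reg}). Your framing of existence as an $\dot H^\alpha$-orthogonal projection of $-\varphi$ onto functions vanishing in $\Sigma$ is equivalent to the paper's direct minimization of the Gagliardo energy $\Lambda(u)$ over the affine constraint set $\dot H^\alpha_\varphi$, and both yield the same Euler--Lagrange equation.

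The one point where your plan is heavier than it needs to be is the higher-regularity step \eqref{aharm higher reg}. You anticipate a bootstrap/iteration (citing \cite[Lemma 3.2]{FR2} iterated, or the boundary Harnack \cite[Theorem 1.3]{ARo20}), and you flag the bookkeeping of exponents crossing integers as the main obstacle. The paper avoids this entirely: it applies \cite[Theorem 1.4]{ARo20} once to $\tilde u=\varphi^\Sigma-\varphi$, which directly gives $\|\tilde u/d^\alpha\|_{C^\sigma}\lesssim \|(-\Delta)^\alpha\varphi\|_{C^{\sigma-\alpha}}+\|\tilde u\|_{L^\infty}$ for any admissible non-integer $\sigma\in(\alpha,k)$ in one shot, with the $L^\infty$ term controlled by \cite[Theorem 1.2]{RoS16}. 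The non-integer restriction on $\sigma,\sigma\pm\alpha$ is inherited from that single theorem rather than accumulated across iterations, so the anticipated difficulty does not materialize. (A minor further discrepancy: for \eqref{aharm reg} the paper cites \cite[Theorem 1.2]{RoS16} directly rather than \cite[Theorem 1.2]{RoS14}, but these are interchangeable here; your split of roles between the two is also fine.) Everything else — in particular your handling of \eqref{D2} via the pointwise bound $|(-\Delta)^\alpha\varphi|\lesssim\|\varphi\|_{L^\infty}\ell^\d$ on $U\setminus\Sigma$, exploiting that $\dist(\carr_\ell,\partial\Sigma)\gtrsim\ep$, and the observation that the $C^{k+1}$-boundary variant needs no extra regularity of $\varphi$ because the derivatives $\nabla^{\otimes m}(-\Delta)^\alpha\varphi$ off $\supp\varphi$ are controlled by $\|\varphi\|_{L^\infty}\ell^\d$ — is what the paper does.
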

The regularity assumptions on $(-\Delta)^\alpha \varphi$ and $\partial \Sigma$ are so that we can apply the local regularity results \cite[Theorem 1.2]{RoS16} and \cite[Theorem 1.4]{ARo20}.

\begin{proof}[Proof of Lemma \ref{def aharmext}]
First, let us prove existence of $\varphi^\Sigma$ in $\dot{H}^\alpha$. The proof requires us to solve a fractional Dirichlet problem on the unbounded domain $\Sigma^c$. Existence is standard, and follows for instance the variational technique used on bounded domains in \cite{R16} (and \cite{FKV15} for more general equations). For completeness and to clarify that it applies to unbounded domains in our situation, we sketch a proof here.

Let 
\begin{equation}\label{H^alpha energy}
\Lambda(u):=\frac{1}{2}\int_{\R^{2\d}}(u(x)-u(z))^2\frac{c_{\d,\alpha}}{|z-x|^{\d+2\alpha}}~dxdz,
\end{equation}
where $c_{\d,\alpha}$ is the constant associated to the fractional Laplace kernel in (\ref{def fraclap}). Notice that $\Lambda(u)$ is an equivalent definition of $[u]_{\dot H^\alpha}^2$, where  $[u]_{\dot H^\alpha}^2$ is as in (\ref{soboFT}) (see \cite[Chapter 2]{DPV12} for a discussion of the kernel definition of fractional Sobolev spaces). Since $\d\geq 2>2\alpha$, $\dot{H}^\alpha$ continuously imbeds into $L^q$ for all $q \in \left[p, \frac{2\d}{\d-2\alpha}\right]$ \cite[Theorem 6.5]{DPV12}. $\Lambda(u)$ is in fact defined then for all $u \in \dot{H}^\alpha$ and is equivalent to $\|u\|_{\dot{H}^\alpha}^2$.

Let us minimize $\Lambda(u)$ over the set $\dot{H}^\alpha_\varphi=\{u \in \dot H^\alpha:u=\varphi \text{ on }\Sigma\}$; first, $\Lambda(u)$ is bounded below on $\dot H^\alpha_\varphi$, and there is at least one element of $\dot H^\alpha_\varphi$ for which $\Lambda(u)$ is finite (namely $\varphi$). Thus, $\inf_{\dot H^\alpha_\varphi}\Lambda(u)$ exists and is finite. Now, consider a minimizing sequence $u_n$ in $\dot H^\alpha_\varphi$, with $\Lambda(u_n)=\|u_n\|_{\dot H^\alpha}^2 \rightarrow \inf_{\dot H^\alpha_\varphi}\Lambda(u)$. $\dot H^\alpha$ is complete, so there is some $u \in \dot H^\alpha$ such that $u_n \rightarrow u$ and 
\begin{equation*}
\Lambda(u)=\|u\|_{\dot H^\alpha}^2=\lim_{n \rightarrow \infty}\|u_n\|_{\dot H^\alpha}^2=\lim_{n \rightarrow \infty}\Lambda(u_n)=\inf_{\dot H^\alpha_\varphi}\Lambda(u).
\end{equation*}
Furthermore, $u \in \dot H^\alpha_\varphi$: for any $n$ and $\varphi \in C_c^\infty(\Sigma)$ we have
\begin{equation*}
\left|\int(\varphi-u)\varphi\right|=\left|\int(u_n-u)\varphi \right|\leq \|u_n-u\|_{L^2}\|\varphi\|_{L^2} \leq \|u_n-u\|_{H^\alpha}\|\varphi\|_{L^2} \rightarrow 0
\end{equation*}
and so $u=\varphi$ a.e.~on $\Sigma$. It remains to see that $u$ solves \eqref{aharmext}. Let $\varphi \in C^\infty_c(\Sigma^c)$ be an arbitrary test function. Then, $u+\ep \varphi \in \dot H^\alpha_\varphi$ and since $u$ is a minimizer of $\Lambda$ we have
\begin{align*}
0=\frac{d}{d\ep}\big\vert_{\ep=0}\Lambda(u+\ep \varphi)&=\frac{d}{d\ep}\big\vert_{\ep=0}\frac{1}{2}\int_{\R^{2\d}}(((u(x)-u(z))+\ep (\varphi(x)-\varphi(z)))^2\frac{c_{\d,\alpha}}{|z-x|^{\d+2\alpha}}~dxdz \\
&=\int_{\R^{2\d}}(u(x)-u(z))(\varphi(x)-\varphi(z))\frac{c_{\d,\alpha}}{|z-x|^{\d+2\alpha}}~dxdz.
\end{align*}
So,
\begin{equation*}
\int_{\R^{2\d}}(u(x)-u(z))\varphi(x)\frac{c_{\d,\alpha}}{|z-x|^{\d+2\alpha}}=\int_{\R^{2\d}}(u(x)-u(z))\varphi(z)\frac{c_{\d,\alpha}}{|z-x|^{\d+2\alpha}}
\end{equation*}
and thus by symmetry
\begin{align*}
\int_{\R^{2\d}}(u(x)-u(z))\varphi(z)\frac{c_{\d,\alpha}}{|z-x|^{\d+2\alpha}}&=\int_{\R^\d}\varphi(z)\left(\int_{\R^\d}(u(x)-u(z))\frac{c_{\d,\alpha}}{|z-x|^{\d+2\alpha}}~dx\right)dz\\
&=\int \varphi(z) (-\Delta)^\alpha u=0.
\end{align*}
Since $\varphi \in C_c^\infty(\Sigma^c)$ was arbitrary,  we conclude that $(-\Delta)^\alpha u=0$ a.e.~in $\Sigma^c$ as desired, and we set $\varphi^\Sigma =u$.


We now apply the regularity result of \cite[Theorem 1.2]{RoS16} to
\begin{equation*}
\tilde{u}=\varphi^\Sigma-\varphi
\end{equation*}
solving
\begin{equation*}
\begin{cases} 
\tilde{u}=0 & \text{in }\Sigma\\
(-\Delta)^\alpha \tilde{u}=-(-\Delta)^\alpha \varphi & \text{in }\Sigma^c
\end{cases}
\end{equation*}
since $\Sigma$ has a $C^{1,1}$ boundary and $(-\Delta)^\alpha \varphi \in L^\infty$ to conclude the boundary estimate (\ref{aharm reg}).

Now, let us examine the case where $\varphi=\varphi_0\(\frac{\cdot-z}{\ell}\)$. Let
\begin{equation}\label{def f}
\frac{\varphi^\Sigma-\varphi}{\dist(x,\partial \Sigma)^\alpha}:=f,
\end{equation}
which we have just seen is $C^\sigma$ regular, with $C^\sigma$ norm controlled by $\left\|(-\Delta)^\alpha\varphi\right\|_{L^\infty(U \setminus \Sigma)}$. The bound \eqref{D2} then follows from a careful analysis of \eqref{aharm reg}. The second term on the right hand side can be carefully analyzed from the definition of the fractional Laplacian;
\begin{multline*}
(-\Delta)^\alpha \varphi(x)=\text{P.V.}\int (\varphi(x)-\varphi(y))  \frac{\cds}{|x-y|^{\d+2\alpha}}~dy 
=-\int \varphi(y)\frac{\cds}{|x-y|^{\d+2\alpha}}~dy\lesssim \|\varphi\|_{L^\infty}\ell^\d.
\end{multline*}
The second equation follows from $x \notin \supp \varphi$, and the bounds in the integral come from the fact that $|x-y|$ is bounded from below at order $1$ and $\varphi$ is only supported on a set of size $\ell$. This is \eqref{D2}.

For higher regularity, we use \cite[Theorem 1.4]{ARo20}. There, if $\sigma \in (\alpha,k)$ (the $k$ is because $\partial \Sigma$ is $C^{k+1}$ regular by assumption) is such that $\sigma, \sigma\pm \alpha \notin \mathbb{N}$ then, with $f$ as in \eqref{def f}
\begin{equation*}
\|f\|_{C^{\sigma}(\overline{U \setminus \Sigma})} \lesssim \|(-\Delta)^\alpha \varphi\|_{C^{\sigma-\alpha}(\overline{U \setminus \Sigma})}+\|\varphi^\Sigma-\varphi\|_{L^\infty(\R^\d)}  \lesssim \|(-\Delta)^\alpha \varphi\|_{C^{\sigma-\alpha}(\overline{ \Sigma^c})}
\end{equation*}
where we have used \cite[Theorem 1.2]{RoS16} to control the $L^\infty$ term. We have $(-\Delta)^\alpha \varphi \in C^{\sigma-\alpha}(\Sigma^c)$ since we assume $\sigma<k$ and $\varphi \in C^{k+\alpha}(\R^\d)$; this yields \eqref{aharm higher reg}. 

Specializing now to the mesoscopic case,  it is slightly easier to differentiate 
\begin{equation*}
(-\Delta)^\alpha \varphi=\text{P.V.} \int \frac{\varphi(x)-\varphi(y)}{|x-y|^{\d+2\alpha}}~dy=\int_{\Sigma} \frac{-\varphi(y)}{|x-y|^{\d+2\alpha}}~dy
\end{equation*}
since $\varphi(x)=0$ for $x \in \Sigma^c$. A computation shows that for  $m\leq k$ we have, for $x \in \Sigma^c$, using $\dist(\supp \varphi, \partial \Sigma) \ge \ep>0$,
\begin{equation}\label{fraclap der}
\left|\nab^{\otimes m} (-\Delta)^\alpha \varphi(x)\right|\lesssim \int_{\Sigma} \frac{|\varphi(y)|}{|x-y|^{\d+2\alpha+m}}~dy \lesssim \|\varphi\|_{L^\infty} \ell^\d.
\end{equation}
H\"older interpolation then yields the result \eqref{D2}.
\end{proof}

We will also need more precise information about the decay of $\varphi^\Sigma$ at infinity; this is given by the following.

\begin{lem}\label{lem: decayxisigma}
Let $\varphi$ be as in the assumptions of Proposition \ref{tscale}. For any $m \le k$, if $x\in U^c$,  we have
\begin{equation}\label{decayxisigma}
 \left|\nab^{\otimes m} \varphi^\Sigma(x)\right|\lesssim \frac{\left\|(-\Delta)^\alpha \varphi\right\|_{L^\infty}}{|x-z|^{\s+m+2}}.
 \end{equation}

 Additionally, if   $\supp \varphi\subset \bulk$, then we have for any $m\le k$ that, for $x \in U^c$, 
\begin{equation}\label{decayxisigma-scale}
 \left|\nab^{\otimes m} \varphi^\Sigma(x)\right|\lesssim \frac{\ell^{\d+2}\|(-\Delta)^\alpha \varphi\|_{L^\infty}+\ell^\d \|\varphi\|_{L^\infty}}{|x|^{\s+m+2}} .
 \end{equation}
\end{lem}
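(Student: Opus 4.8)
\textbf{Proof proposal for Lemma \ref{lem: decayxisigma}.}

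The plan is to exploit that, away from the support of $\varphi$ and outside $\Sigma$, the function $\varphi^\Sigma$ is $\alpha$-harmonic and can be recovered from its values on a compact set through an explicit representation formula, so that its decay is governed by the decay of the Riesz/Poisson-type kernel for the fractional Laplacian on $\Sigma^c$ (or, more simply here, on the complement of a large ball). The key point is that $(-\Delta)^\alpha \varphi^\Sigma = 0$ in $\Sigma^c$ and $\varphi^\Sigma = \varphi$ in $\Sigma$, with $\varphi$ compactly supported; hence $\varphi^\Sigma$ is fixed data on a compact set and satisfies a homogeneous equation far away. Concretely, I would proceed as follows.

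First I would write $\tilde u = \varphi^\Sigma - \varphi$, which solves $(-\Delta)^\alpha \tilde u = -(-\Delta)^\alpha\varphi$ in $\Sigma^c$ and $\tilde u = 0$ in $\Sigma$, with $\|(-\Delta)^\alpha \varphi\|_{L^\infty}$ controlling the right-hand side; for $x \in U^c$ one has $\varphi(x) = 0$, so $\varphi^\Sigma(x) = \tilde u(x)$ and it suffices to bound $\tilde u$ and its derivatives. Since $\tilde u$ vanishes on $\Sigma$ and $\|\tilde u\|_{L^\infty(\R^\d)} \lesssim \|(-\Delta)^\alpha\varphi\|_{L^\infty}$ by \cite[Theorem 1.2]{RoS16} (as used already in the proof of Lemma \ref{def aharmext}), I would use the exterior representation formula for $\alpha$-harmonic functions: writing $B = B_{R_0}$ for a fixed ball containing $U$, the function $\tilde u$ restricted to $B^c$ satisfies $(-\Delta)^\alpha \tilde u = g$ in $B^c$ with $\tilde u = \tilde u|_B$ prescribed on $B$ and $\|g\|_{L^\infty(B^c)} = \|(-\Delta)^\alpha\varphi\|_{L^\infty(B^c)}$ (note $\varphi \equiv 0$ on $B^c$ so $g$ is literally $-(-\Delta)^\alpha\varphi$, and by \eqref{fraclap der}-type estimates this is controlled, with the extra gain $\ell^\d$ in the mesoscopic case). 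Then the Green representation on $B^c$ gives
\[
\tilde u(x) = \int_{B} P_{B^c}(x,y)\,\tilde u(y)\,dy + \int_{B^c} G_{B^c}(x,y)\, g(y)\, dy,
\]
where $P_{B^c}$ is the Poisson kernel of $B^c$ for $(-\Delta)^\alpha$ and $G_{B^c}$ its Green function. Both kernels are explicit (Riesz): $P_{B^c}(x,y) \sim c_{\d,\alpha} \big(\tfrac{|x|^2 - R_0^2}{R_0^2 - |y|^2}\big)^\alpha |x-y|^{-\d}$ for $x\in B^c$, $y\in B$, which for $|x|$ large and $y\in B$ behaves like $|x|^{2\alpha - \d} = |x|^{-\s}$ divided by... — more precisely one gets $P_{B^c}(x,y)\lesssim |x|^{2\alpha}|x|^{-\d} = |x|^{-\s-(\d-2\alpha)} $; combined with $\int_B |\tilde u| \lesssim \|(-\Delta)^\alpha\varphi\|_{L^\infty}$ and the fact that $\d - 2\alpha = \s$, this produces exactly the $|x|^{-\s-2}$ rate. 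The Green-function term contributes a faster-decaying piece because $g$ itself decays (by \eqref{fraclap der}). Differentiating under the integral sign in $x$ — which is legitimate since $x$ stays at distance $\gtrsim 1$ from $B$ (hence from the singularities of $P_{B^c}$ and $G_{B^c}$) — each derivative costs one factor of $|x|^{-1}$, giving $|\nab^{\otimes m}\varphi^\Sigma(x)| \lesssim \|(-\Delta)^\alpha\varphi\|_{L^\infty}\, |x-z|^{-\s-m-2}$, which is \eqref{decayxisigma}. Here one should recenter at $z$ (the center of $\carr_\ell$) rather than the origin; since $\supp\varphi$ and $\Sigma$ are bounded and $|x-z| \sim |x|$ for $|x|$ large this is harmless, but keeping $z$ explicit matters for the application in \eqref{tscale}.

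For the mesoscopic refinement \eqref{decayxisigma-scale}, I would rerun the same representation but now keep track of the $\ell$-dependence: the boundary data $\tilde u|_B$ satisfies $\|\tilde u\|_{L^\infty} \lesssim \|(-\Delta)^\alpha\varphi\|_{L^\infty}$, but the improved scaling comes from splitting $\tilde u = -\varphi$ on $\Sigma$ (contributing $\ell^\d\|\varphi\|_{L^\infty}$ once integrated, since $\varphi$ is supported on a set of volume $\ell^\d$) plus the genuinely $\alpha$-harmonic part whose $L^1(B)$ norm is controlled by $\ell^{\d+2}\|(-\Delta)^\alpha\varphi\|_{L^\infty}$ via the Green term and the pointwise bound \eqref{D2} (which gives $|\varphi^\Sigma - \varphi| \lesssim \dist(x,\partial\Sigma)^\alpha \|\varphi\|_{L^\infty}\ell^\d$ near $\partial\Sigma$, bounded by $\|\varphi\|_{L^\infty}\ell^\d$ uniformly). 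Feeding these into the kernel estimate $P_{B^c}(x,\cdot) \lesssim |x-z|^{-\s-2}$ (absorbing the $(\d-2\alpha)=\s$ bookkeeping as before) and differentiating yields \eqref{decayxisigma-scale}.

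The main obstacle I anticipate is not conceptual but bookkeeping: the exterior Poisson/Green kernels for the fractional Laplacian on the complement of a ball have a mildly singular weight $(|x|^2-R_0^2)^\alpha$ and one must verify that differentiating in $x$ truly costs only one power of $|x|^{-1}$ per derivative uniformly for $|x|$ large — this requires the classical scaling/Kelvin-transform structure of the Riesz kernel (the complement of a ball maps to a ball under inversion, and $(-\Delta)^\alpha$ has a conformal covariance). An alternative that avoids the exterior-domain kernel altogether — and which I would fall back on if the kernel estimates get unwieldy — is to instead use the whole-space representation $\varphi^\Sigma = \varphi + \cds^{-1}\g * \big((-\Delta)^\alpha\varphi^\Sigma\big)$ together with the fact that $(-\Delta)^\alpha\varphi^\Sigma$ is supported in $\Sigma \cup \supp\varphi$ (it vanishes in $\Sigma^c\setminus\supp\varphi$, is $-(-\Delta)^\alpha\varphi$ nowhere outside, and in $\Sigma$ equals $\cds \muv$ times the transported-density term, an $L^1$ function with an integrable $\dist(\cdot,\partial\Sigma)^{-\alpha}$ singularity by Lemma \ref{lem1}); then $\varphi^\Sigma(x) = \cds^{-1}\int \g(x-y)\,d\nu(y)$ for $x\in U^c$ with $\nu$ a finite measure supported in $U$, and since $\g(x-y) = |x-y|^{-\s}/\s$ has derivatives $|\nab^{\otimes m}\g(x-y)| \lesssim |x-y|^{-\s-m} \sim |x-z|^{-\s-m}$ for $x\in U^c$, $y\in U$, one gets $|\nab^{\otimes m}\varphi^\Sigma(x)| \lesssim |x-z|^{-\s-m}\,\|\nu\|$. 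This gives the rate $|x-z|^{-\s-m}$ rather than the claimed $|x-z|^{-\s-m-2}$, so the extra two powers must come from a cancellation: $\nu$ has the same total mass and, more importantly, $(-\Delta)^\alpha\varphi^\Sigma$ integrates to zero (mean-zero property of $(-\Delta)^\alpha$, used repeatedly in the paper) and in fact its first moment vanishes as well when $\varphi$ is supported in $\bulk$ (because $\div(\psi\muv) = \cds^{-1}(-\Delta)^\alpha\varphi^\Sigma$ is a divergence, by \eqref{Rtransport0}), so a two-term Taylor expansion of $\g(x-y)$ in $y$ about $z$ kills the $|x-z|^{-\s}$ and $|x-z|^{-\s-1}$ terms, leaving the $|x-z|^{-\s-2}$ decay. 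I expect this second route to be the cleaner one to write up, and the vanishing-moment argument to be the crux.
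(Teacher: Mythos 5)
Your second ``fallback'' route --- the whole-space representation $\varphi^\Sigma = \cds^{-1}\,\g * \big((-\Delta)^\alpha\varphi^\Sigma\big)$, followed by a two-term Taylor expansion of $\g(x-y)$ in $y$ about the center $z$, with the $|x-z|^{-\s}$ and $|x-z|^{-\s-1}$ terms killed by the vanishing zeroth and first moments of $(-\Delta)^\alpha\varphi^\Sigma$ --- is essentially the paper's proof, and you correctly identified it as the cleaner one. The first route via exterior Poisson/Green kernels on a ball complement is genuinely different; the paper does not use it, and your instinct that the kernel bookkeeping would be unwieldy seems right. There is no need to invoke the Kelvin transform or exterior-domain representation formulas when the whole-space representation is available.

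However, there is a genuine gap in the way you justify the vanishing of the first moment. You write that $\int y_i(-\Delta)^\alpha\varphi^\Sigma\,dy = 0$ ``because $\div(\psi\muv)=\cds^{-1}(-\Delta)^\alpha\varphi^\Sigma$ is a divergence.'' For a compactly supported vector field $F$, integration by parts gives $\int y_i\,\div F\,dy = -\int F_i\,dy$, which need not vanish: here $F_i = \psi_i\muv$ and there is no reason for $\int\psi_i\muv$ to be zero. So the divergence structure does not by itself produce the cancellation you need. Moreover, tying the argument to the transport map artificially restricts the reasoning to the case $\supp\varphi\subset\bulk$, whereas estimate \eqref{decayxisigma} in the statement is asserted without that hypothesis. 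The correct, unconditional reason is the one the paper uses: write
\begin{equation*}
\int y_i\,(-\Delta)^\alpha\varphi^\Sigma(y)\,dy \;=\; \int \big((-\Delta)^\alpha y_i\big)\,\varphi^\Sigma(y)\,dy \;=\;0,
\end{equation*}
via (formal) fractional integration by parts together with the observation that $(-\Delta)^\alpha y_i = 0$ because $y_i$ is odd, so the principal-value integral defining $(-\Delta)^\alpha y_i$ vanishes by symmetry. With this replacement your outline is sound; in particular the separation of $(-\Delta)^\alpha\varphi^\Sigma$ into $(-\Delta)^\alpha\varphi$ plus the boundary singularity $\overline{c}_\alpha c_\circ\,\dist(\cdot,\partial\Sigma)^{-\alpha}$, controlled by \eqref{aharm reg}/\eqref{D2}, is exactly how the paper extracts the two rates $\ell^{\d+2}\|(-\Delta)^\alpha\varphi\|_{L^\infty}$ and $\ell^\d\|\varphi\|_{L^\infty}$ in the mesoscopic bound \eqref{decayxisigma-scale}.
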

\begin{proof} Without loss of generality, let us assume that $z=0$ (thus $0\in \Sigma$).
Let us start with the case $\ell=1$. Following the proof of Lemma \ref{lem1} with $v=\varphi^\Sigma-\varphi$, we may write  
 \begin{equation*}
 \dist(x,\partial \Sigma)^\alpha\((-\Delta)^\alpha \varphi^\Sigma(x)-(-\Delta)^\alpha \varphi(x)\)=\overline{c_\alpha}c_\circ+o\(\dist(x,\partial \Sigma)^\alpha\)
 \end{equation*}
in $\Sigma$, where 
 \begin{equation}\label{ccirc}
 |c_\circ|\leq \left\|\frac{\varphi^\Sigma-\varphi}{\dist(x,\partial \Sigma)^\alpha}\right\|_{L^\infty(U \setminus \Sigma)}\leq \|(-\Delta)^\alpha \varphi\|_{L^\infty(U \setminus \Sigma)};
 \end{equation}
 the second inequality follows from (\ref{aharm reg}). As a result, we may expand 
 \begin{equation}\label{RO expansion}
 (-\Delta)^\alpha \varphi^\Sigma(x)=(-\Delta)^\alpha \varphi(x)+\frac{\overline{c_\alpha}c_\circ}{\dist(x,\partial \Sigma)^\alpha}+o(1)
 \end{equation}
 in $\Sigma$. Now, the key is that, as discussed in \eqref{eqars2}, we can recover $\varphi^\Sigma=\g \ast (-\Delta)^\alpha \varphi^\Sigma$, i.e.
 \begin{equation*}
 \varphi^\Sigma(x)=\int \frac{(-\Delta)^\alpha \varphi^\Sigma(y)}{|x-y|^{\s}}~dy,
 \end{equation*}
 where $(-\Delta)^\alpha \varphi^\Sigma(y)$ is only supported in $\Sigma$. Taylor expanding the denominator to second order, we find for $x\in \Sigma^c$, 
 \begin{multline}\label{Texp}
 \varphi^\Sigma(x)=\int (-\Delta)^\alpha \varphi^\Sigma (y)\( \frac{1}{|x|^\s}-\s \sum_i \frac{x_iy_i}{|x|^{\s+2}}+O\(\frac{|y|^2}{|x|^{\s+2}}\)\)~dy \\
 =\frac{1}{|x|^\s}\int (-\Delta)^\alpha \varphi^\Sigma(y)~dy-\s\sum_i \frac{x_i}{|x|^{\s+2}} \int y_i(-\Delta)^\alpha \varphi^\Sigma(y)~dy+O\(\frac{1}{|x|^{\s+2}}\int_{\Sigma} |(-\Delta)^\alpha \varphi^\Sigma(y)||y|^2~dy\).
 \end{multline}
Since the fractional Laplacian is a mean zero operator, the first term cancels. Furthermore, for all $i$ we have 
\begin{equation*}
\int y_i(-\Delta)^\alpha \varphi^\Sigma(y)~dy=\int (-\Delta)^\alpha y_i \varphi^\Sigma=0
\end{equation*}
via fractional integration by parts, where we have used the odd symmetry of $y_i$ to conclude that $(-\Delta)^\alpha y_i=0$. Integrating  \eqref{RO expansion} over $\Sigma$ and using \eqref{ccirc} to control $c_\circ$, it follows that
 \begin{equation*}
 \left| \varphi^\Sigma(x)\right|\lesssim\frac{1}{|x|^{\s+2}}\int_{\Sigma} |y|^2|(-\Delta)^\alpha \varphi^\Sigma(y)|~dy \lesssim \frac{\left\|(-\Delta)^\alpha \varphi\right\|_{L^\infty}}{|x|^{\s+2}}.
  \end{equation*}
 The argument for the derivative is exactly analogous; passing the derivative through the integral, we find for any multiindex $\gamma$ of length $|\gamma|=m$,
 \begin{equation*}
D^\gamma \varphi^\Sigma(x)=\int (-\Delta)^\alpha \varphi^\Sigma(y) D^\gamma\(\frac{1}{|x-y|^{\s}}\)~dy.
 \end{equation*}
 Taylor expanding $ D^\gamma\(\frac{1}{|x-y|^{\s}}\)$ about $x$ and using that 
 \begin{equation*}
 \left|\partial_i\partial_j D^\gamma\(\frac{1}{|x|^{\s}}\)\right|\lesssim \frac{1}{|x|^{\s+m+2}}
 \end{equation*}
 yields the result via the same argument as above.

Next, consider the mesoscopic case $\ell<1$. We may write the same expansion \eqref{RO expansion}, although we now have from (\ref{D2}) that 
\begin{equation}\label{ccirc-scale}
 |c_\circ|\leq \left\|\frac{\varphi^\Sigma-\varphi}{\dist(x,\partial \Sigma)^\alpha}\right\|_{L^\infty(U \setminus \Sigma)}\lesssim \ell^\d\|\varphi\|_{L^\infty}.
 \end{equation}
 We expand as in (\ref{Texp}); the integral of the $\frac{\overline{c_\alpha}c_\circ}{\dist(x,\partial \Sigma)^\alpha}$ term in \eqref{RO expansion} now yields 
 \begin{equation*}
 O\(\frac{1}{|x|^{\s+2}}\int_\Sigma \left|\frac{\overline{c_\alpha}c_\circ}{\dist(x,\partial \Sigma)^\alpha}\right||y|^2~dy\)=O\(\frac{\ell^\d\|\varphi\|_{L^\infty}}{|x|^{\s+2}}\)
 \end{equation*}
 using (\ref{ccirc-scale}). The contribution from $(-\Delta)^\alpha \varphi$ in \eqref{RO expansion} now yields 
 \begin{equation*}
 \frac{1}{|x|^{\s+2}}\int_\Sigma|y|^2|(-\Delta)^\alpha \varphi(y)|~dy
 \end{equation*}
 which we split into two contributions: those in $\carr_{2\ell}$ and those away. In $\carr_{2\ell}$, we  find 
 \begin{equation*}
  \frac{1}{|x|^{\s+2}}\int_{\carr_{2\ell}}|y|^2|(-\Delta)^\alpha \varphi(y)|~dy \lesssim \frac{\ell^{2+\d}\|(-\Delta)^\alpha \varphi\|_{L^\infty}}{|x|^{\s+2}} .
 \end{equation*}
 Away from $\carr_{2\ell}$ we seek to estimate the decay of $(-\Delta)^\alpha \varphi$. Since we are at distance $\ge \ell $ from $\supp \varphi$, we can write 
 \begin{equation*}
\left| (-\Delta)^\alpha \varphi(x)\right|=\left|\int \frac{-\varphi(y)}{|x-y|^{\d+2\alpha}}~dy\right| \lesssim \frac{\ell^\d\|\varphi\|_{L^\infty}}{|x-z|^{\d+2\alpha}}
 \end{equation*}
 since $|x-y|\gtrsim |x-z|$ due to $\dist(x,\supp \varphi)\gtrsim \ell$. Thus, 
  \begin{equation*}
  \frac{1}{|x|^{\s+2}}\int_{\Sigma \setminus \carr_{2\ell}}|y|^2|(-\Delta)^\alpha \varphi(y)|~dy \lesssim \frac{\ell^\d\|\varphi\|_{L^\infty}}{|x|^{\s+2}}\int_\ell^1 \frac{r^{\d-1}}{r^{\d+2\alpha-2}}~dr \lesssim \frac{\ell^{\s+2}\|\varphi\|_{L^\infty}}{|x|^{\s+2}}.
 \end{equation*}
 Combining these estimates, we have
 \begin{equation*}
 \frac{1}{|x|^{\s+2}}\int_\Sigma|y|^2|(-\Delta)^\alpha \varphi(y)|~dy \lesssim \frac{\ell^{\d+2}\|(-\Delta)^\alpha \varphi\|_{L^\infty}}{|x|^{\s+2}}+\frac{\ell^{\s+2}\|\varphi\|_{L^\infty}}{|x|^{\s+2}}.
  \end{equation*} Combining with the $c_\circ $ term contribution, we get the announced estimate for $m=0$.
  A similar argument as in the $\ell=1$ case recovers the analogous estimates for the behavior of $\nab^{\otimes m} \varphi^\Sigma$ as $|x|\rightarrow +\infty$.
  Since $\s+2>\d$ and $\ell<1$, we conclude with the result.

 \end{proof}

We next state a result that controls the fractional Laplacian in terms of the derivatives of a function.
\begin{lem}\label{lemestxi}
Assume $\varphi$ is supported in some cube $\carr_\ell$ of sidelength $\ell$ and 
\eqref{estxi} holds for $k \ge 2$. Then 
\be \|(-\Delta)^{\alpha}\varphi\|_{L^\infty}\lesssim \M \ell^{-2\alpha}.\ee
\end{lem}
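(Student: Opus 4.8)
\textbf{Proof plan for Lemma \ref{lemestxi}.}
The statement to prove is the pointwise bound $\|(-\Delta)^\alpha \varphi\|_{L^\infty} \lesssim \M \ell^{-2\alpha}$ for $\varphi$ supported in a cube $\carr_\ell$ and satisfying the scaling estimates \eqref{estxi} at order $k \ge 2$, i.e. $\|\varphi\|_{C^\sigma} \le \M \ell^{-\sigma}$ for all $\sigma \le k$. The natural approach is a direct estimate on the singular integral defining $(-\Delta)^\alpha$, splitting the integration into a near region and a far region relative to the point where we evaluate.

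First I would fix $x \in \R^\d$ and use the symmetric form of the fractional Laplacian,
\begin{equation*}
(-\Delta)^\alpha \varphi(x) = \frac{c_{\d,\alpha}}{2}\,\text{P.V.}\int_{\R^\d} \frac{2\varphi(x) - \varphi(x+y) - \varphi(x-y)}{|y|^{\d+2\alpha}}\, dy,
\end{equation*}
which removes the principal-value issue since the numerator is $O(|y|^2)$ near $y = 0$ when $\varphi \in C^2$. I would then split the integral at radius $|y| = \ell$. On the near region $|y| \le \ell$, I would use the second-order Taylor bound $|2\varphi(x) - \varphi(x+y) - \varphi(x-y)| \le |y|^2 \|D^2\varphi\|_{L^\infty} \le \M \ell^{-2}|y|^2$ (using \eqref{estxi} with $\sigma = 2 \le k$), so that
\begin{equation*}
\int_{|y|\le \ell} \frac{|2\varphi(x) - \varphi(x+y) - \varphi(x-y)|}{|y|^{\d+2\alpha}}\, dy \lesssim \M \ell^{-2}\int_0^\ell r^{1-2\alpha}\, dr \lesssim \M \ell^{-2}\cdot \ell^{2-2\alpha} = \M \ell^{-2\alpha},
\end{equation*}
which is valid since $2 - 2\alpha > 0$ (because $\alpha = \frac{\d-\s}{2} < 1$ by \eqref{intervalles}). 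On the far region $|y| > \ell$, I would use the crude bound $|2\varphi(x) - \varphi(x+y) - \varphi(x-y)| \le 4\|\varphi\|_{L^\infty} \le 4\M$ (from \eqref{estxi} with $\sigma = 0$) and integrate:
\begin{equation*}
\int_{|y|>\ell} \frac{4\M}{|y|^{\d+2\alpha}}\, dy \lesssim \M \int_\ell^\infty r^{-1-2\alpha}\, dr \lesssim \M \ell^{-2\alpha},
\end{equation*}
convergent at infinity since $2\alpha > 0$. Summing the two contributions and absorbing $c_{\d,\alpha}$ into the implicit constant gives the claimed bound uniformly in $x$.

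This argument is essentially routine and I do not anticipate a genuine obstacle; the only point requiring mild care is confirming that the exponents $2 - 2\alpha$ and $2\alpha$ are both strictly positive so that both radial integrals converge, which is guaranteed by $\alpha \in (0,1)$ coming from \eqref{defalpha} and \eqref{intervalles}. One could alternatively phrase this via a rescaling: set $\varphi(x) = \varphi_0((x-z)/\ell)$-type normalization is not available in general here since $\varphi$ is only assumed to satisfy \eqref{estxi} rather than to be an exact rescaling, so the direct split-integral estimate above is the cleanest route and is the one I would carry out.
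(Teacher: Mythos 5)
Your proof is correct, and the bound it gives is exactly what is claimed. The paper's proof starts from the one-sided P.V.~form of \eqref{def fraclap}, Taylor-expands $\varphi(x+y)$ about $x$, kills the linear term by the spherical symmetry of the principal value on the ball $B_{4\ell}$, and then splits into the two cases $x\in\carr_{2\ell}$ and $x\notin\carr_{2\ell}$ (the latter using $\varphi(x)=0$ to drop the P.V.). You instead switch to the symmetric second-difference kernel $\frac{c_{\d,\alpha}}{2}\int\frac{2\varphi(x)-\varphi(x+y)-\varphi(x-y)}{|y|^{\d+2\alpha}}\,dy$, which cancels the first-order term automatically; this makes the near-field estimate $\lesssim|y|^2\|D^2\varphi\|_{L^\infty}$ hold pointwise with no cancellation argument and, more importantly, makes the bound uniform in $x$ so that no case split on whether $x\in\carr_{2\ell}$ is needed. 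Both arguments are routine radial-split estimates using \eqref{estxi} at orders $\sigma=0$ and $\sigma=2$ and the fact that $\alpha\in(0,1)$ makes both $\int_0^\ell r^{1-2\alpha}\,dr$ and $\int_\ell^\infty r^{-1-2\alpha}\,dr$ converge; your version is marginally cleaner, and the paper's version has the minor advantage of staying with the kernel representation \eqref{def fraclap} already declared in the preliminaries rather than invoking the (equivalent but unstated) second-difference form.
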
 \begin{proof}
Using the integral definition of the fractional Laplacian \eqref{def fraclap}, 
we need to compute 
\begin{equation*}
\lim_{\eta \downarrow 0}\int_{B_\eta(0)^c} \(\varphi(x)-\varphi(x+y)\)\frac{c_{\d,\alpha}}{|y|^{\d+2\alpha}}~dy. 
\end{equation*}
First consider $x\in \carr_{2\ell}$.
For $y \in \carr_{2\ell}$, Taylor expanding,  we find that the integrand is given by
\begin{multline*}
\left|\int_{B_{4\ell}(0)\setminus B_\eta(0)}\left(-\nabla \varphi \cdot y+|y|^2O\(D^2\varphi\)\right)\frac{c_{\d,\alpha}}{|y|^{\d+2\alpha}}~dy\right| \\
=\left|\int_{B_{4\ell}(0)\setminus B_\eta(0)}\left(|y|^2O\(D^2\varphi\)\right)\frac{c_{\d,\alpha}}{|y|^{\d+2\alpha}}~dy\right| \lesssim  \|\varphi\|_{C^2} \int_0^\ell r^{1-2\alpha}~dr \lesssim \|\varphi\|_{C^2}\ell^{2-2\alpha}
\end{multline*}
by spherical symmetry of the order $y$ term. For $|y|>4\ell$ we estimate roughly, 
\begin{equation*}
\left|\int_{B_{4\ell}(0)^c}\(\varphi(x)-\varphi(x+y)\)\frac{c_{\d,\alpha}}{|y|^{\d+2\alpha}}~dy\right| \lesssim \|\varphi\|_{L^\infty}\int_{4\ell}^\infty r^{-1-2\alpha}~dr \lesssim \|\varphi\|_{L^\infty}\ell^{-2\alpha}.
\end{equation*}
Assembling these results, we find that  for $x \in \carr_{2\ell}$, 
\begin{align}\label{casintS0}
\left|(-\Delta)^\alpha \varphi(x)\right|\lesssim \ell^{2-2\alpha}\|\varphi\|_{C^2}+\ell^{\s}\left\| \varphi\right\|_{L^\infty}\lesssim \M \ell^{-2\alpha}.
\end{align}

We next turn to the case $x\in \R^\d\backslash \carr_{2\ell}$. Since $\varphi $ vanishes outside $\carr_{\ell}$, we don't need to use the principal value in the definition of the fractional Laplacian, and instead have
\begin{equation*}
(-\Delta)^\alpha \varphi(x)=-\int_{x+y\in \carr_{\ell}} \varphi(x+y)\frac{c_{\d,\alpha}}{|y|^{\d+2\alpha}}~dy.
\end{equation*} We note that for $x\notin \carr_{2\ell}$, $x+y \in \carr_\ell$ implies that $|y|\ge \ell$. The integral is thus bounded 
by $\|\varphi\|_{L^\infty} \ell^{-2\alpha}\le \M \ell^{-2\alpha}$. Combining with  \eqref{casintS0}, the conclusion follows.

\end{proof}

We can now state the scaling that we need for the behavior of the fractional Laplacian of the fractional harmonic extension in $\Sigma$.
\begin{lem}\label{Linfty} 
Let $\varphi$ be as in the assumptions of Lemma \ref{lem: decayxisigma}. Let $m\leq k$. If $\supp \varphi \subset \carr_\ell \subset \bulk$ with $\ell \leq 1$, then 
\begin{equation}\label{ptwise fraclap}
\left|\nab^{\otimes m} \(\dist(x,\partial \Sigma)^\alpha(-\Delta)^\alpha \varphi^\Sigma(x)\)\right|\lesssim  \ell^{\d+2}\|(-\Delta)^\alpha \varphi\|_{L^\infty}+ \begin{cases}
\|\varphi\|_{C^{2+m}}\ell^{2+\s-\d} & \text{if }x \in \carr_{2\ell} \\
\|\varphi\|_{L^\infty} \frac{\ell^\d}{|x-z|^{2\d-\s+m}} & \text{if }x \in \Sigma \setminus \carr_{2\ell} 
\end{cases}
\end{equation}
\end{lem}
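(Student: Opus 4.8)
\textbf{Proof plan for Lemma \ref{Linfty}.}
The plan is to combine the expansion of $(-\Delta)^\alpha\varphi^\Sigma$ near $\partial\Sigma$ coming from Lemma \ref{lem1} (and the regularity of the ratio $f=(\varphi^\Sigma-\varphi)/\dist(x,\partial\Sigma)^\alpha$ established in Lemma \ref{def aharmext}) with the decay estimates for $\varphi^\Sigma$ from Lemma \ref{lem: decayxisigma} and the control of $(-\Delta)^\alpha\varphi$ itself from Lemma \ref{lemestxi}. The quantity we must estimate, $\dist(x,\partial\Sigma)^\alpha(-\Delta)^\alpha\varphi^\Sigma$, is, inside $\Sigma$, essentially bounded and as regular as $f$ up to the boundary, so the key is to track how its $C^m$ norm scales with $\ell$ in the two regimes $x\in\carr_{2\ell}$ and $x\in\Sigma\setminus\carr_{2\ell}$.

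First I would write, following Lemma \ref{lem1}, the identity $(-\Delta)^\alpha\varphi^\Sigma=(-\Delta)^\alpha\varphi+\overline{c}_\alpha\, w(x)\dist(x,\partial\Sigma)^{-\alpha}$ in $\Sigma$, where $w$ is (up to the constant) the boundary trace $f|_{\partial\Sigma}$ extended inside, so that $\dist(x,\partial\Sigma)^\alpha(-\Delta)^\alpha\varphi^\Sigma=\dist(x,\partial\Sigma)^\alpha(-\Delta)^\alpha\varphi+\overline{c}_\alpha w(x)$. Differentiating $m$ times with the Leibniz rule, the term $\nab^{\otimes m}(\dist^\alpha(-\Delta)^\alpha\varphi)$ is handled by combining the pointwise bounds on derivatives of $(-\Delta)^\alpha\varphi$: inside $\carr_{2\ell}$ one uses $\|(-\Delta)^\alpha\varphi\|_{C^{m}}\lesssim \|\varphi\|_{C^{2+m}}\ell^{-2\alpha}$ from (the proof of) Lemma \ref{lemestxi} together with $\ell^\alpha$ from the distance factor (which is $\gtrsim\ep$, hence the net power is $\ell^{-2\alpha+\,?}$, and since $2\alpha=\d-\s$ this gives $\ell^{2+\s-\d}$ after reconciling with the interior Schauder norm), while outside $\carr_{2\ell}$ one uses the decay $|\nab^{\otimes m}(-\Delta)^\alpha\varphi(x)|\lesssim \|\varphi\|_{L^\infty}\ell^\d|x-z|^{-\d-2\alpha-m}=\|\varphi\|_{L^\infty}\ell^\d|x-z|^{-2\d-\s-m}$ from \eqref{fraclap der}. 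For the $w$ term, I would invoke assumption \eqref{assumpw}, namely $\|w\|_{C^k}\lesssim \|f\|_{C^k(U\setminus\Sigma)}$, together with the $\ell$-dependent bound \eqref{D2} for $\|f\|_{C^\sigma}$, i.e. $\|f\|\lesssim \|\varphi\|_{L^\infty}\ell^\d$, plus the additional contribution $\ell^{\d+2}\|(-\Delta)^\alpha\varphi\|_{L^\infty}$ that appears uniformly in \eqref{ptwise fraclap} and which comes from the $c_\circ$/remainder terms in the Taylor expansion of $\varphi^\Sigma$ as in the proof of Lemma \ref{lem: decayxisigma}.

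The main obstacle I anticipate is bookkeeping the exact $\ell$-powers near the boundary: one has to be careful that \eqref{D2}, \eqref{aharm higher reg}, and \eqref{assumpw} are applied at noninteger $\sigma$ (as stressed after Proposition \ref{transport}), then H\"older-interpolated up to integer order $m\le k$, and that the power $2\d-\s$ in the far-field regime matches the one produced by differentiating both the distance weight and $\varphi^\Sigma$. A secondary subtlety is that in $\carr_{2\ell}\setminus\Sigma$ there is no distance weight to exploit, so one must verify that the stated bound $\|\varphi\|_{C^{2+m}}\ell^{2+\s-\d}$ genuinely dominates; this follows from $\dist(\carr_{2\ell},\partial\Sigma)\ge\ep$ being order one, so the weight contributes only a harmless constant there. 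Once the two regimes are matched to the claimed right-hand side and the $\ell^{\d+2}\|(-\Delta)^\alpha\varphi\|_{L^\infty}$ term is carried along from Lemma \ref{lem: decayxisigma}, the estimate \eqref{ptwise fraclap} follows, completing the proof.
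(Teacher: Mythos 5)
Your plan takes a genuinely different route from the paper's proof. The paper estimates $(-\Delta)^\alpha\varphi^\Sigma$ directly from the integral definition \eqref{def fraclap}, splitting the $y$-integration into $|y|\le 4\ell$ (where it Taylor-expands $\varphi^\Sigma=\varphi$ and uses the odd-symmetry cancellation of the linear term), $x+y\in U\setminus\carr_{2\ell}$ (where it uses \eqref{reprlema1}), and $x+y\in U^c$ (where it uses \eqref{decayxisigma-scale}); for $x\in\Sigma\setminus\carr_{2\ell}$ it exploits that $\varphi^\Sigma(x)=0$ so no principal value is needed. Your proposal instead tries to subtract $(-\Delta)^\alpha\varphi$ and control the remainder abstractly via the boundary-asymptotics lemma and assumption \eqref{assumpw}. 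This is not what the paper does, and it has two concrete gaps.

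The main gap is that the formula $(-\Delta)^\alpha\varphi^\Sigma=(-\Delta)^\alpha\varphi+\overline{c}_\alpha\,w(x)\dist(x,\partial\Sigma)^{-\alpha}$, with $w$ a $C^k$ extension of the boundary trace, is not an identity. Lemma~\ref{lem1} only gives the asymptotic expansion $(-\Delta)^\alpha(\varphi^\Sigma-\varphi)=\overline{c}_\alpha c_\circ\,\dist^{-\alpha}+O(|x-x_\circ|^{\sigma'})\dist^{-\alpha}$ near a boundary point $x_\circ$, with a remainder of lower order only by a fixed small power. Differentiating this expansion $m$ times has no meaning; and if instead you define $w$ \emph{by} the formula $w:=\dist^\alpha[(-\Delta)^\alpha\varphi^\Sigma-(-\Delta)^\alpha\varphi]$ (so the identity holds tautologically), then proving $C^m$ control of this $w$ with the right $\ell$-scaling is exactly the content of Lemma~\ref{Linfty}, and you cannot simply quote \eqref{assumpw}: that assumption concerns the boundary trace of $\dist^\alpha(-\Delta)^\alpha\varphi^\Sigma$ extended into $U$, not the interior expression itself together with the full decomposition you propose. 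The Leibniz-rule step therefore does not get off the ground.

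The secondary problem is quantitative. Your intermediate estimate $\|(-\Delta)^\alpha\varphi\|_{C^m}\lesssim\|\varphi\|_{C^{2+m}}\ell^{-2\alpha}$ is off by a factor $\ell^{-2}$: applying Lemma~\ref{lemestxi} to $\nab^{\otimes m}\varphi$, whose natural scale constant is $\M\ell^{-m}$, yields $\|(-\Delta)^\alpha\nab^{\otimes m}\varphi\|_{L^\infty}\lesssim\M\ell^{-m-2\alpha}=\M\ell^{\s-\d-m}$, which is $\|\varphi\|_{C^{2+m}}\ell^{2-2\alpha}$ under the scaling \eqref{estxi}, not $\|\varphi\|_{C^{2+m}}\ell^{-2\alpha}$. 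Since $\dist^\alpha$ contributes only a bounded constant in $\carr_{2\ell}\subset\bulk$, your bound inside $\carr_{2\ell}$ would be $\|\varphi\|_{C^{2+m}}\ell^{\s-\d}$, which is $\ell^{-2}$ larger than the $\|\varphi\|_{C^{2+m}}\ell^{2+\s-\d}$ appearing in \eqref{ptwise fraclap}. This loss would propagate into \eqref{Schauder estimate2} and the transport bounds \eqref{tscale}; the claim that the Schauder norm ``reconciles'' the exponent is not supported. To recover the $\ell^{2-2\alpha}$, one must Taylor-expand $\varphi(x+y)$ to second order in the near integral and use the vanishing of the first-order term by symmetry, which is precisely what the paper's proof does inside $\carr_{2\ell}$ and what your plan does not.
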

\begin{proof}

Let us first consider
$m=0$.   Note that by Lemma \ref{def aharmext}, we have 
\begin{equation}\label{reprlema1}
\varphi-\varphi^\Sigma = \dist(\cdot , \partial \Sigma)^\alpha f, \quad \|f\|_{C^{\sigma}(\overline{U\backslash \Sigma})}\lesssim \|\varphi\|_{L^\infty}\ell^\d.
\end{equation}

Next, we argue similarly as in the proof of Lemma \ref{lemestxi}. First, take $x \in \carr_{2\ell}$. 
  Using the integral definition of the fractional Laplacian \eqref{def fraclap}, 
we need to compute 
\begin{equation*}
\lim_{\eta \downarrow 0}\int_{B_\eta(0)^c} \(\varphi^\Sigma(x)-\varphi^\Sigma(x+y)\)\frac{\cds}{|y|^{\d+2\alpha}}~dy. 
\end{equation*}
For $|y| \leq 4\ell$,  $\varphi^\Sigma=\varphi$ in the domain of integration.  Taylor expanding,  we find that the integrand is given by
\begin{multline*}
\left|\int_{B_{4\ell}(0)\setminus B_\eta(0)}\left(-\nabla \varphi \cdot y+|y|^2O\(D^2\varphi\)\right)\frac{\cds}{|y|^{\d+2\alpha}}~dy\right| \\
=\left|\int_{B_{4\ell}(0)\setminus B_\eta(0)}\left(|y|^2O\(D^2\varphi\)\right)\frac{\cds}{|y|^{\d+2\alpha}}~dy\right| \lesssim  \|\varphi\|_{C^2} \int_0^\ell r^{1-2\alpha}~dr \lesssim \|\varphi\|_{C^2}\ell^{2-2\alpha}
\end{multline*}
by spherical symmetry of the order $y$ term. For $|y|>4\ell$ we estimate roughly. First,
\begin{equation*}
\left|\int_{B_{4\ell}(0)^c}\(\varphi^\Sigma(x)-\varphi^\Sigma(x+y)\)\frac{\cds}{|y|^{\d+2\alpha}}~dy\right| \leq |\varphi(x)|\left|\int_{B_{4\ell}(0)^c}\frac{\cds}{|y|^{\d+2\alpha}}~dy\right|+\left|\int_{B_{4\ell}(0)^c}\frac{\cds\varphi^\Sigma(x+y)}{|y|^{\d+2\alpha}}~dy\right| 
\end{equation*}
using $\varphi^\Sigma(x)=\varphi(x)$ for $x \in \carr_{2\ell}$. The first term directly yields 
\begin{equation*}
|\varphi(x)|\left|\int_{B_{4\ell}(0)^c}\frac{\cds}{|y|^{\d+2\alpha}}~dy\right|\lesssim \|\varphi\|_{L^\infty}\int_\ell^\infty r^{-1-2\alpha}~dr \lesssim \|\varphi\|_{L^\infty}\ell^{-2\alpha}.
\end{equation*}
On the other hand, using \eqref{reprlema1}, for $x+y \in U$, we have
\begin{equation*}
\left|\varphi^\Sigma(x+y)\right|\lesssim \|\varphi\|_{L^\infty}\ell^\d
\end{equation*} since for $y \in B_{4\ell}(0)^c$, $\varphi(x+y)=0$. Using \eqref{decayxisigma-scale} for $x+y\notin U$, we obtain 
\begin{multline*}
\left|\int_{B_{4\ell}(0)^c}\varphi^\Sigma(x+y)\frac{\cds}{|y|^{\d+2\alpha}}~dy\right| \lesssim \|\varphi\|_{L^\infty}\ell^\d \int_\ell^\infty r^{-1-2\alpha}~dr\\
+\( \ell^\d\left\| \varphi\right\|_{L^\infty}+ \ell^{\d+2}\|(-\Delta)^\alpha \varphi\|_{L^\infty}\)  \int_1^\infty r^{-\s-3-2\alpha}dr\\
\lesssim \( \ell^{\s}\left\| \varphi\right\|_{L^\infty}+ \ell^{\d+2}\|(-\Delta)^\alpha \varphi\|_{L^\infty}\) 
\end{multline*}
where we used that $2\alpha=\d-\s$.
Assembling these results,  we find that  for $x \in \carr_{2\ell}$, 
\begin{align}\label{casintS}
\left|(-\Delta)^\alpha \varphi^\Sigma(x)\right|\lesssim \ell^{2+\s-\d}\|\varphi\|_{C^2}+\ell^{\s-\d}\left\| \varphi\right\|_{L^\infty}+ \ell^{\d+2}\|(-\Delta)^\alpha \varphi\|_{L^\infty}.
\end{align}

We next turn to the case $x\in \Sigma \backslash \carr_{2\ell}$. Since $\varphi^\Sigma (x)=0$, we don't need to use the principal value in the definition of the fractional Laplacian, and instead have
\begin{equation*}
(-\Delta)^\alpha \varphi^\Sigma(x)=-\int \varphi^\Sigma(x+y)\frac{\cds}{|y|^{\d+2\alpha}}~dy.
\end{equation*} We split this integral into three parts: $x+y \in \Sigma$, $x+y \in \Sigma^c\cap U$ and $x+y\in \R^\d\backslash U$. For $x+y \in \Sigma$, since $\varphi^\Sigma (x+y)= \varphi(x+y)$ vanishes unless $x+y \in Q_{\ell}$, we have
\begin{equation*}
-\int_{x+y \in \Sigma} \varphi^\Sigma(x+y)\frac{\cds}{|y|^{\d+2\alpha}}~dy \lesssim \|\varphi\|_{L^\infty}  \frac{\ell^\d}{|x-z|^{\d+2\alpha}}.
\end{equation*}
 For $x+y \in \Sigma^c\cap U$, notice that we have $|y|\ge \dist(x,\partial \Sigma)$ and $|y|\ge \dist(x+y, \partial \Sigma)$. Since $\varphi=0$ in $\Sigma^c$, using \eqref{reprlema1}, we have
\begin{multline*}
-\int_{x+y \in \Sigma^c\cap U} \varphi^\Sigma(x+y)\frac{\cds}{|y|^{\d+2\alpha}}~dy=-\int_{x+y \in \Sigma^c\cap U}f(x+y)\dist(x+y, \partial \Sigma)^\alpha \frac{\cds}{|y|^{\d+2\alpha}}~dy \\ \lesssim \ell^\d \|\varphi\|_{L^\infty}\int_{x+y \in \Sigma^c\cap U}\frac{|y|^\alpha}{|y|^{\d+2\alpha}}\, dy \lesssim \ell^\d \|\varphi\|_{L^\infty}\int_{\dist(x,\partial \Sigma)}^\infty r^{-1-\alpha}\lesssim \ell^\d \|\varphi\|_{L^\infty} \dist(x,\partial \Sigma)^{-\alpha}.
\end{multline*}
Finally,  if $x+y\in U^c$, this implies that $|y|$ is bounded below by some positive $\ep>0$. Using \eqref{decayxisigma-scale}, we then find
\begin{multline*}
-\int_{x+y \in U^c} \varphi^\Sigma(x+y)\frac{\cds}{|y|^{\d+2\alpha}}~dy\lesssim 
(\ell^{\d+2}\|(-\Delta)^\alpha \varphi\|_{L^\infty}+\ell^\d \|\varphi\|_{L^\infty} )
 \int_{x+y\in U^c} \frac{1}{|y|^{2\d-\s} }\frac{1}{|x+y|^{\s+2}}dy\\ \lesssim
 (\ell^{\d+2}\|(-\Delta)^\alpha \varphi\|_{L^\infty}+\ell^\d \|\varphi\|_{L^\infty} )
  \int_{\ep}^\infty r^{-\d-3} dr\lesssim (\ell^{\d+2}\|(-\Delta)^\alpha \varphi\|_{L^\infty}+\ell^\d \|\varphi\|_{L^\infty} ).
\end{multline*}
This yields
\begin{equation*}
\left|(-\Delta)^\alpha \varphi^\Sigma(x)\right|\lesssim\|\varphi\|_{L^\infty} \( \frac{\ell^\d}{|x-z|^{2\d-\s}}
+ 
\ell^\d \dist(x,\partial \Sigma)^{-\alpha} \)+(\ell^{\d+2}\|(-\Delta)^\alpha \varphi\|_{L^\infty}+\ell^\d \|\varphi\|_{L^\infty} )
\end{equation*} in this case.

In view of \eqref{casintS}, this concludes the proof of  \eqref{ptwise fraclap} for $m=0$.

Now, let us turn to derivatives, i.e.~$m\ge 1$. The same argument as above works for the estimate inside of $\carr_{2\ell}$. For the decay, we start with $x \in \hat \Sigma$; there, $\dist(x,\partial \Sigma)^\alpha$ and its derivatives are bounded so we need only consider $(-\Delta)^\alpha \varphi^\Sigma$. It is again easier to use the definition
\begin{equation*}
(-\Delta)^\alpha \varphi^\Sigma=\text{P.V.}  \int \frac{\varphi^\Sigma(x)-\varphi^\Sigma(y)}{|x-y|^{\d+2\alpha}}~dy=\int \frac{-\varphi^\Sigma(y)}{|x-y|^{\d+2\alpha}}\, dy
\end{equation*}
since $\varphi^\Sigma(x)=0$ for $x \in \bulk \setminus \carr_{2\ell}$. A computation shows that for any mutiindex $\gamma$ with $|\gamma|=m\leq k$ we have
\begin{equation*}
\left|D^\gamma (-\Delta)^\alpha \varphi^\Sigma\right|\lesssim \int \frac{|\varphi^\Sigma(y)|}{|x-y|^{\d+2\alpha+m}}\, dy 
\end{equation*}
We again split this integral into three parts: $y \in \Sigma$, $y \in U \setminus \Sigma$, and $y \in U^c$. For $y \in \Sigma$, since $\varphi^\Sigma=\varphi$ there we have
\begin{equation*}
\int_{y \in \Sigma}\left| \varphi^\Sigma(y)\right|\frac{1}{|x-y|^{\d+2\alpha+m}}~dy \lesssim \|\varphi\|_{L^\infty}\int_{O(\ell)}\frac{O(1)}{|x-z|^{\d+2\alpha+m}}~dy \lesssim \frac{\ell^\d\|\varphi\|_{L^\infty}}{|x-z|^{\d+2\alpha+m}}
\end{equation*}
since the integral is only defined for $y \in \supp(\varphi)$, which forces $|x-y| \gtrsim |x-z|$. For $y \in U \setminus \Sigma$, we have using \eqref{reprlema1} that 
\begin{multline*}
\int_{y \in U \setminus \Sigma} \left|\varphi^\Sigma(y)\right|\frac{\cds}{|y|^{\d+2\alpha+m}}\,dy=\int_{y \in U \setminus \Sigma}\left|f(y)\right|\dist(y, \partial \Sigma)^\alpha \frac{1}{|y|^{\d+2\alpha+m}}\,dy \\ \lesssim \ell^\d \|\varphi\|_{L^\infty}\int_{y \in U \setminus \Sigma}\frac{|y|^\alpha}{|y|^{\d+2\alpha+m}}~dy \lesssim \ell^\d \|\varphi\|_{L^\infty}\int_{\dist(x,\partial \Sigma)}^\infty r^{-1-\alpha-m}\lesssim \ell^\d \|\varphi\|_{L^\infty} \dist(x,\partial \Sigma)^{-\alpha-m}
\end{multline*}
since $y$ needs to be order $\dist(x,\partial \Sigma)^\alpha$ for $x+y \in \Sigma^c$. Since we are inside of the bulk, $\dist(x,\partial \Sigma)\sim 1$ and thus the integral contribution $y \in \Sigma$ dominates. Finally,  if $y\in U^c$, this implies that $|y|$ is bounded below by some positive $\ep>0$. Using \eqref{decayxisigma-scale}, we then find
\begin{multline*}
-\int_{y \in U^c}| \varphi^\Sigma(y)|\frac{\cds}{|y|^{\d+2\alpha+m}}~dy\lesssim( \ell^{\d+2}\|(-\Delta)^\alpha \varphi\|_{L^\infty}+\ell^\d \|\varphi\|_{L^\infty})  \int_{x+y\in U^c} \frac{1}{|y|^{2\d-\s+m} }\frac{1}{|y|^{\s+2}}dy\\ \lesssim (\ell^{\d+2}\|(-\Delta)^\alpha \varphi\|_{L^\infty}+\ell^\d \|\varphi\|_{L^\infty})\int_{\ep}^\infty r^{-\d-3-m} dr\lesssim \ell^{\d+2}\|(-\Delta)^\alpha \varphi\|_{L^\infty}+\ell^\d \|\varphi\|_{L^\infty},
\end{multline*}
and again the contribution from $y \in \Sigma$ dominates, yielding the required decay bound.

Finally, the desired control in $\Sigma \setminus \bulk$ follows from \eqref{assumpw} and \eqref{D2}.

\end{proof}

Finally, we establish the formula for the limit variance in the mesoscopic case.
\begin{lem}\label{variancemeso}
Assume that $\supp \varphi \subset \bulk$ and  $\varphi= \varphi_0(\frac{\cdot -z}{\ell}) $ for some fixed function $\varphi_0$.
As $\ell \to 0$ we have  
\be\label{limitvariance}
\ell^{-\s}\|\varphi^\Sigma\|_{\dot H^{\frac{\d-\s}{2}}}^2\to  \|\varphi_0\|_{\dot H^{\frac{\d-\s}{2}}}^2.\ee
and 
\be\label{limitmean}
 \ell^{-\s}\int_\Sigma(-\Delta)^\alpha \varphi^\Sigma(\log \muv) \to 0.\ee
\end{lem}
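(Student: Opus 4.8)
The strategy is to reduce everything to a rescaling of the fixed profile $\varphi_0$, using that in the mesoscopic regime $\varphi^\Sigma$ is asymptotically the $\alpha$-harmonic extension of $\varphi_0$ rescaled to the whole space (forgetting the presence of $\partial\Sigma$, which is far away since $\supp\varphi\subset\bulk$). Concretely, I would first introduce the \emph{free} extension $\varphi_0^{\R^\d}$, i.e.\ the function in $\dot H^\alpha(\R^\d)$ with $\varphi_0^{\R^\d}=\varphi_0$ on $\carr_1$ (where $\carr_1$ is the support cube of $\varphi_0$ rescaled) and $(-\Delta)^\alpha\varphi_0^{\R^\d}=0$ outside. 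Writing $\varphi=\varphi_0(\tfrac{\cdot-z}{\ell})$ and using the scaling of the fractional Laplacian and of the $\dot H^\alpha$-seminorm under $x\mapsto \ell x$, one has exactly $\ell^{-\s}\|(\varphi_0)_\ell^{\R^\d}\|_{\dot H^{\alpha}}^2=\|\varphi_0^{\R^\d}\|_{\dot H^{\alpha}}^2$ where $(\varphi_0)_\ell^{\R^\d}(x)=\varphi_0^{\R^\d}(\tfrac{x-z}{\ell})$ and $\alpha=\frac{\d-\s}{2}$; and since $\supp\varphi_0$ is fixed, $\varphi_0^{\R^\d}=\varphi_0$ on its support so $\|\varphi_0^{\R^\d}\|_{\dot H^\alpha}=\|\varphi_0\|_{\dot H^\alpha}$ (the extension does not change the seminorm because it is energy-minimizing and agrees with $\varphi_0$ on $\carr_1$; in fact $\|\varphi_0\|_{\dot H^\alpha}=\|\varphi_0^{\R^\d}\|_{\dot H^\alpha}$ holds because $\varphi_0\in\dot H^\alpha$ already, cf.\ the variational characterization in Lemma~\ref{def aharmext}). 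This gives the ``model'' value on the right-hand side of \eqref{limitvariance}.

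**Comparing $\varphi^\Sigma$ with the rescaled free extension.** The main work is to show $\ell^{-\s}\big(\|\varphi^\Sigma\|_{\dot H^\alpha}^2-\|(\varphi_0)_\ell^{\R^\d}\|_{\dot H^\alpha}^2\big)\to 0$. Both functions equal $\varphi$ on $\carr_\ell(z)\subset\Sigma$, so their difference $g:=\varphi^\Sigma-(\varphi_0)_\ell^{\R^\d}$ is supported in $\carr_\ell(z)^c$. Using the representation $\|v\|_{\dot H^\alpha}^2=\frac{2}{c_{\d,\s}}\int (-\Delta)^{\alpha/2}v\cdot(-\Delta)^{\alpha/2}v=\frac{2}{c_{\d,\s}}\int v\,(-\Delta)^\alpha v$ (fractional integration by parts), and polarizing, the cross terms and the error term both reduce to integrals of $(-\Delta)^\alpha$ of these functions against $g$. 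On $\carr_\ell(z)^c$ inside $\Sigma$ one has $(-\Delta)^\alpha\varphi^\Sigma$ controlled by Lemma~\ref{Linfty} (the bound \eqref{ptwise fraclap}), while $(-\Delta)^\alpha(\varphi_0)_\ell^{\R^\d}$ decays like $\ell^\d\|\varphi_0\|_{L^\infty}|x-z|^{-\d-2\alpha}$ by the same type of computation as in \eqref{fraclap der} (exact, since the free extension is $\alpha$-harmonic there). The size of $g$ itself near $\partial\Sigma$ is governed by \eqref{D2}. Plugging these estimates in and integrating in spherical shells around $z$, the resulting bound is of order $\ell^{2\d}$ times a constant, hence $\ell^{-\s}$ times it is $O(\ell^{2\d-\s})=o(1)$ since $\s<\d$. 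This is the step I expect to be the main obstacle: one has to be careful about the region near $\partial\Sigma$ (distance $\gtrsim\ep$ from $z$, so harmless) versus the region at distance $\sim\ell$ from $z$ where the two extensions differ by the most, and organize the bookkeeping so that every contribution is genuinely $o(\ell^\s)$; the bounds of Section~\ref{sec: flap} are tailored exactly for this.

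**The mean term.** For \eqref{limitmean}, I would bound directly: $(-\Delta)^\alpha\varphi^\Sigma$ is supported in $\Sigma$ and there, by \eqref{ptwise fraclap}, it is $\lesssim \ell^{\s-\d}\|\varphi_0\|_{C^{2}}$ on $\carr_{2\ell}(z)$ and $\lesssim \ell^\d\|\varphi_0\|_{L^\infty}|x-z|^{-2\d+\s}$ on $\Sigma\setminus\carr_{2\ell}(z)$ (plus the lower-order $\ell^{\d+2}\|(-\Delta)^\alpha\varphi\|_{L^\infty}$ term, which by Lemma~\ref{lemestxi} is $\lesssim\ell^{\d+2}\ell^{-2\alpha}=\ell^{2+\s}$). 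Since $\supp\varphi\subset\bulk$, $\log\muv$ is bounded on a fixed neighborhood of $z$, so
$$\Big|\int_\Sigma(-\Delta)^\alpha\varphi^\Sigma\,\log\muv\Big|\lesssim \ell^{\s-\d}\|\log\muv\|_{L^\infty}\,\ell^\d + \ell^\d\|\log\muv\|_{L^\infty}\int_{2\ell}^{C}\frac{r^{\d-1}}{r^{2\d-\s}}\,dr + \ell^{2+\s}\lesssim \ell^\s + \ell^\d\cdot\ell^{\s-\d} + \ell^{2+\s}\lesssim \ell^\s,$$
where the middle integral converges at its upper limit and behaves like $\ell^{\s-\d}$ at the lower limit because $2\d-\s>\d$. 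To get the strict $o(\ell^\s)$ decay claimed in \eqref{limitmean}, one uses in addition that $(-\Delta)^\alpha\varphi^\Sigma$ has mean zero on $\R^\d$ and is supported on $\Sigma$, so one may subtract $\log\muv(z)$ inside the integral; then the $\carr_{2\ell}(z)$ contribution gains a factor $\ell\,\|\nabla\log\muv\|_{L^\infty}$ (Taylor expansion of $\log\muv$ around $z$, using \eqref{itemeqreg} so that $\log\muv$ is $C^1$ near $z$), yielding $O(\ell^{1+\s})=o(\ell^\s)$, and the tail contribution is already $o(\ell^\s)$ as computed above. Multiplying through by $\ell^{-\s}$ gives \eqref{limitmean}, completing the proof of Lemma~\ref{variancemeso}.
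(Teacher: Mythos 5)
Your argument for \eqref{limitmean} is a correct (if slightly terse) alternative to the paper's: the paper rewrites the integral as $-\cds\ell^{-\s}\int_\Sigma \psi\cdot\nabla\muv$ by integrating by parts through the transport map $\psi$ and then uses \eqref{tscale}, whereas you exploit the fact that $(-\Delta)^\alpha\varphi^\Sigma$ integrates to zero on $\Sigma$ to subtract $\log\muv(z)$ and gain a factor of $\ell$ near $z$. Both work.

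The proof of \eqref{limitvariance}, however, contains a genuine error. You define $\varphi_0^{\R^\d}$ as the $\alpha$-harmonic extension of $\varphi_0$ off a \emph{fixed} cube $\carr_1$ and then assert $\|\varphi_0^{\R^\d}\|_{\dot H^\alpha}=\|\varphi_0\|_{\dot H^\alpha}$, citing the variational characterization in Lemma~\ref{def aharmext}. But that characterization only gives $\|\varphi_0^{\R^\d}\|_{\dot H^\alpha}\le\|\varphi_0\|_{\dot H^\alpha}$, with equality iff $\varphi_0$ is already $\alpha$-harmonic on $\carr_1^c$ — which it never is for nonzero compactly supported $\varphi_0$, since there $(-\Delta)^\alpha\varphi_0(x)=-c_{\d,\alpha}\int\varphi_0(y)|x-y|^{-\d-2\alpha}\,dy\ne 0$. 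So $\|\varphi_0^{\R^\d}\|_{\dot H^\alpha}<\|\varphi_0\|_{\dot H^\alpha}$ strictly. This breaks the subsequent reduction: the quantity you declare to be the ``main work,'' namely
$$\ell^{-\s}\left(\|\varphi^\Sigma\|_{\dot H^\alpha}^2-\|(\varphi_0)_\ell^{\R^\d}\|_{\dot H^\alpha}^2\right)\to 0,$$
is \emph{false}. Indeed $\ell^{-\s}\|\varphi^\Sigma\|_{\dot H^\alpha}^2=\|\varphi_0^{\frac{\Sigma-z}{\ell}}\|_{\dot H^\alpha}^2$, and since $\frac{\Sigma-z}{\ell}\nearrow\R^\d$ the extension constraint is imposed on smaller and smaller sets, so this converges to $\|\varphi_0\|_{\dot H^\alpha}^2$; meanwhile $\ell^{-\s}\|(\varphi_0)_\ell^{\R^\d}\|_{\dot H^\alpha}^2=\|\varphi_0^{\R^\d}\|_{\dot H^\alpha}^2$ is independent of $\ell$. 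Their difference therefore converges to $\|\varphi_0\|_{\dot H^\alpha}^2-\|\varphi_0^{\R^\d}\|_{\dot H^\alpha}^2>0$, not to $0$, and no amount of careful estimation of the tails will change that.

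The fix is to drop the auxiliary extension entirely: compare $\varphi^\Sigma$ with $\varphi$ itself. Then the scaling identity $\ell^{-\s}\|\varphi\|_{\dot H^\alpha}^2=\|\varphi_0\|_{\dot H^\alpha}^2$ holds trivially, one has $\|\varphi^\Sigma\|_{\dot H^\alpha}\le\|\varphi\|_{\dot H^\alpha}$ by variational minimality, and the remaining work is to show $\ell^{-\s}\left(\|\varphi\|_{\dot H^\alpha}^2-\|\varphi^\Sigma\|_{\dot H^\alpha}^2\right)\to 0$. Your polarization idea then applies with $g=\varphi^\Sigma-\varphi$ supported in $\Sigma^c$: using \eqref{D2} for the size of $g$ near $\partial\Sigma$, \eqref{decayxisigma-scale} for the far field, and the trivial decay of $(-\Delta)^\alpha\varphi$ at distance $\ge\ep$ from $\carr_\ell(z)$, one indeed gets $O(\ell^{2\d})=o(\ell^\s)$. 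This would be a different (and arguably more direct) argument than the paper's, which works instead on the Gagliardo seminorm: it changes variables to move the rescaling into the domain, notes that the bulk part is exactly $\|\varphi_0\|^2$ up to the tail over $\big(\frac{\Sigma-z}{\ell}\times\frac{\Sigma-z}{\ell}\big)^c$, and combines the one-sided bound $\le$ from minimality with positivity of the complementary tail $I_2$ to squeeze the limit.
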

\begin{proof}

Let us start with \eqref{limitvariance}; we will use the equivalent characterization of the homogeneous Sobolev norm from the proof of Lemma \ref{def aharmext}, namely
\begin{equation}\label{seminorm}
\|u\|_{\dot{H}^{\frac{\d-\s}{2}}}^2=\frac{1}{2}\int_{\R^{2\d}}(u(x)-u(y))^2\frac{c_{\d,\alpha}}{|y-x|^{2\d-\s}}~dxdz
\end{equation}
using $2\alpha=\d-\s$. Notice that for $\varphi=\varphi_0\(\frac{\cdot-z}{\ell}\)$,
\begin{equation*}
(-\Delta)^\alpha \varphi=\ell^{-2\alpha}\(\(-\Delta\)^\alpha \varphi_0\)\(\frac{\cdot -z}{\ell}\)
\end{equation*}
and hence $\varphi^\Sigma(\cdot)=\varphi_0^{\frac{\Sigma-z}{\ell}}\(\frac{\cdot-z}{\ell}\).$  We will separate 
\begin{equation*}
\ell^{-\s}\|\varphi^\Sigma\|_{\dot H^{\frac{\d-\s}{2}}}=I_1+I_2
\end{equation*}
where
\begin{align*}
I_1&:=\frac{c_{\d,\alpha}\ell^{-\s}}{2}\int_{\Sigma \times \Sigma}\frac{\(\varphi_0^{\frac{\Sigma-z}{\ell}}\(\frac{x-z}{\ell}\)-\varphi_0^{\frac{\Sigma-z}{\ell}}\(\frac{y-z}{\ell}\)\)^2}{|x-y|^{2\d-\s}}~dxdy ,\\
I_2&:=\frac{c_{\d,\alpha}\ell^{-\s}}{2}\int_{(\Sigma \times \Sigma)^c}\frac{\(\varphi^\Sigma(x)-\varphi^\Sigma(y)\)^2}{|x-y|^{2\d-\s}}~dxdy.
\end{align*} Notice that via a change of variables we may write
\begin{multline*}
I_1=\frac{c_{\d,\alpha}\ell^{-\s}}{2}\int_{\Sigma \times \Sigma}\frac{\(\varphi_0^{\frac{\Sigma-z}{\ell}}\(\frac{x-z}{\ell}\)-\varphi_0^{\frac{\Sigma-z}{\ell}}\(\frac{y-z}{\ell}\)\)^2}{|x-y|^{2\d-\s}}~dxdy\\
=\frac{c_{\d,\alpha}}{2}\int_{\frac{\Sigma-z}{\ell}\times \frac{\Sigma-z}{\ell}}\frac{\(\varphi_0^{\frac{\Sigma-z}{\ell}}\(u\)-\varphi_0^{\frac{\Sigma-z}{\ell}}\(w\)\)^2}{|u-w|^{2\d-\s}}~dudw=\frac{c_{\d,\alpha}}{2}\int_{\frac{\Sigma-z}{\ell}\times \frac{\Sigma-z}{\ell}}\frac{\(\varphi_0\(u\)-\varphi_0\(w\)\)^2}{|u-w|^{2\d-\s}}~dudw\end{multline*}
and so the difference $\ell^{-\s}\|\varphi^\Sigma\|_{\dot H^{\frac{\d-\s}{2}}}^2-  \|\varphi_0\|_{\dot H^{\frac{\d-\s}{2}}}^2$ is merely
\begin{equation}\label{intest}
\ell^{-\s}\|\varphi^\Sigma\|_{\dot H^{\frac{\d-\s}{2}}}^2-  \|\varphi_0\|_{\dot H^{\frac{\d-\s}{2}}}^2=I_2-\frac{c_{\d,\alpha}}{2}\int_{\(\frac{\Sigma-z}{\ell}\times \frac{\Sigma-z}{\ell}\)^c}\frac{\(\varphi_0\(u\)-\varphi_0\(w\)\)^2}{|u-w|^{2\d-\s}}~dudw.
\end{equation}
Notice that the same change of variables implies 
\begin{equation*}
\ell^{-\s}\|\varphi^\Sigma\|_{\dot H^{\frac{\d-\s}{2}}}^2=\left\|\varphi_0^{\frac{\Sigma-z}{\ell}}\right\|_{\dot{H}^{\frac{\d-\s}{2}}}^2.
\end{equation*}
Since $\varphi_0^{\frac{\Sigma-z}{\ell}}$ is defined as the function coinciding with $\varphi_0$ on $\frac{\Sigma-z}{\ell}$ and minimizing $\dot{H}^{\frac{\d-\s}{2}}$ norm, it follows that 
\begin{equation*}
\ell^{-\s}\|\varphi^\Sigma\|_{\dot H^{\frac{\d-\s}{2}}}^2-  \|\varphi_0\|_{\dot H^{\frac{\d-\s}{2}}}^2=\left\|\varphi_0^{\frac{\Sigma-z}{\ell}}\right\|_{\dot{H}^{\frac{\d-\s}{2}}}^2-  \|\varphi_0\|_{\dot H^{\frac{\d-\s}{2}}}^2\leq 0.
\end{equation*}
It is thus sufficient in \eqref{intest} to show that 
\begin{equation*}
\left|\frac{c_{\d,\alpha}}{2}\int_{\(\frac{\Sigma-z}{\ell}\times \frac{\Sigma-z}{\ell}\)^c}\frac{\(\varphi_0\(u\)-\varphi_0\(w\)\)^2}{|u-w|^{2\d-\s}}~dudw\right|\rightarrow 0
\end{equation*}
since $I_2 \geq 0$. Notice that since $2 \, \supp \varphi_0 \subset \frac{\Sigma-z}{\ell}$, we may rewrite it via symmetry as 
\begin{multline*}
\left|-c_{\d,\alpha}\int_{(\supp\varphi_0)\times \(\frac{\Sigma-z}{\ell}\)^c}\frac{\varphi_0(u)^2}{|u-w|^{2\d-\s}}~dudw\right|\lesssim \|\varphi_0\|_{L^\infty}^2 \int_{\supp \varphi_0} \int_{\dist\(\supp \varphi_0, \partial \(\frac{\Sigma-z}{\ell}\)\)}^\infty \frac{r^{\d-1}}{r^{2\d-\s}}~drdu \\
\lesssim \|\varphi_0\|_{L^\infty}^2 \dist\(\supp \varphi_0, \partial \(\frac{\Sigma-z}{\ell}\)\)^{\s-\d} \lesssim \|\varphi_0\|_{L^\infty}^2\ell^{\d-\s} \rightarrow 0
\end{multline*}
and we conclude the result.

Now for \eqref{limitmean} it is easier to use the equivalent characterization of the mean from the proof of \eqref{defmean}, namely
\begin{equation*}
\ell^{-\s}\int_\Sigma (-\Delta)^\alpha \varphi^\Sigma (\log \muv)=\cds \ell^{-\s} \int_{\R^\d}(\div \psi)\muv=\cds \ell^{-\s} \int_\Sigma (\div \psi)\muv
\end{equation*}
where $\psi$ is the transport map defined by \eqref{Rtransport}. Integrating by parts and using that $\psi \muv \equiv 0$ on $\partial \Sigma$,
\begin{equation*}
\cds \ell^{-\s}\int_\Sigma (\div \psi)\muv=-\cds \ell^{-\s}\int_\Sigma \psi \cdot \nabla \muv.
\end{equation*}
In the bulk, we use that $\nabla \muv$ is bounded below and \eqref{tscale} to bound
\begin{multline*}
\left|-\cds \ell^{-\s}\int_{\bulk} \psi \cdot \nabla \muv\right|\lesssim \ell^{-\s}\int_{\carr_{2\ell}} \frac{\|\varphi_0\|_{C^3}}{\ell^{\d-\s-1}}+\ell^{-\s}\int_{\bulk \setminus \carr_{2\ell}}\frac{\ell^\d\|\varphi_0\|_{C^3}}{|x-z|^{2\d-\s-1}}\\
\lesssim \ell \|\varphi_0\|_{C^3}+\ell^{\d-\s}\|\varphi_0\|_{C^3}\int_\ell^1\frac{r^{\d-1}}{r^{2\d-\s-1}}~dr \lesssim \ell \|\varphi_0\|_{C^3}\rightarrow 0.
\end{multline*}
Outside of $\bulk$, we use that $\nabla \muv$ decays like $\dist(x,\partial \Sigma)^{-\alpha}$ coupled with \eqref{tscale} for $|x-z|$ at order one to see that 
\begin{equation*}
\left|-\cds \ell^{-\s}\int_{\Sigma \setminus \bulk} \psi \cdot \nabla \muv\right|\lesssim \ell^{-\s}\int_{\Sigma \setminus \bulk}\|\varphi_0\|_{C^3}\ell^\d \dist(x,\partial \Sigma)^{-\alpha} \lesssim \|\varphi_0\|_{C^3}\ell^{\d-\s}\rightarrow 0.
\end{equation*}
since $\dist(x,\partial \Sigma)^{-\alpha}$ is an integrable singularity. Coupling the estimates in $\bulk$ and $\Sigma \setminus \bulk$ yields \eqref{limitmean}. 


\end{proof}

\section{Proof of the Screening Result - Proposition \ref{Riesz screening result}}\label{sec: pfscn}

 We focus on the proof of outer screening; we will discuss the main idea of inner screening and what computational changes are necessary at the end of this section. From this point on, we denote $E_{\rr}=\nabla w_{\rr}$ to emphasize that we are considering the electric field. 
\setcounter{subsection}{0}
\subsection{The Setup}
The first part of the proof consists of using the energy bounds to find a good boundary outside of which to construct the screened configuration. We have two separate cases, depending on which screenability condition is satisfied.

If  \eqref{Riesz screenability1} is verified, then using a mean value argument as in \cite[Section 6.2, Step 1]{PS17} and \cite[Section C.1]{AS21}, we can find a $T \in [R-2\tilde{\ell}+2,R-\tilde{\ell}-2]$ such that 
\begin{align}\label{boundM}
    \int_{(Q_{T+2}\setminus Q_{T-2})\times[-h,h]}|y|^\gamma|E_{\rrh}|^2 &\leq \frac{S(X_n,w,h)}{\tilde{\ell}} :=M\\
  \label{boundMbord}  \int_{\partial Q_T \times [-h,h]}|y|^\gamma|E_{\rrh}|^2 &\lesssim M 
\end{align}
where $Q_T \subset Q_R$ and $Q_T \in \mathcal{Q}_T$ (i.e. in particular $\mu(Q_T)$ is integer). We then take $\Gamma:=\partial Q_T$, which in one dimension is simply two points on the axis.

Otherwise, \eqref{Riesz screenability2} holds. Then, via a mean-value argument analogous to \cite[Appendix A]{S24}, we can find some $t \in [R-2\tilde{\ell}, R-\tilde{\ell}-\ell]$ such that 
\begin{equation*}
\int_{(Q_{t+\ell} \setminus Q_t) \times [-h,h]}\yg |E_{\rrh}|^2 \leq C\frac{S(X_n,w,h)\ell}{\tilde{\ell}}.
\end{equation*}
We then apply a mean-value argument in the strip $Q_{t+\ell}\setminus Q_t$ and find a piecewise affine boundary $\Gamma$ in $Q_{t+\ell}\setminus Q_t$ with faces parallel to those of $Q_R$ and sidelengths of order $\ell$ such that 
\begin{equation}\label{intGamma}
\int_{\Gamma \times[-h,h]}\yg |E_{\rrh}|^2 \leq C\frac{S(X_n,w,h)}{\tilde{\ell}}, \qquad \sup_x \int_{\(\Gamma \cap \square_{\ell}(x)\) \times [-h,h]}\yg |E_{\rrh}|^2 \leq CS'(X_n,w,h)
\end{equation}
and 
\begin{equation*}
\int_{\Gamma_1 \times [-h,h]}\yg |E_{\rrh}|^2 \leq C\frac{S(X_n,w,h)}{\tilde{\ell}}
\end{equation*}
where $\Gamma_1$ denotes the $1$-neighborhood of $\Gamma$. In both cases we let $M=\frac{S(X_n,w,h)}{\tilde{\ell}}$ and in the latter case $M_\ell=CS'(X_n,w,h)$.

$\Gamma$ encloses a set $\Old$, in which we will keep $X_n$ and the associated electric field $E$ unchanged. The modifications to the electric field will take place in the set $\New:=Q_L \setminus \Old$. 
 We also let $M_0^+$ and $M_0^-$ be averaged electric fluxes on the top and bottom of our hyperrectangular region:
\begin{align}
    M_0^+:=\frac{h^{-\gamma}}{|\New|}\int_{\Old \times \{h\}}|y|^\gamma E_{\rr}\cdot \vec{n}, \label{definition of M} \\
    M_0^-:=\frac{h^{-\gamma}}{|\New|}\int_{\Old \times \{-h\}}|y|^\gamma E_{\rr}\cdot \vec{n} ,\label{definition of M - neg}
\end{align}
and set 
\be M_0= h^\gamma M_0^++h^\gamma M_0^-.\ee

With this region in tow, we partition our space (as in \cite[Section 6.2, Step 2]{PS17}) into the following subregions, and solve elliptic problems in each:
\begin{enumerate}
    \item $D_0:= \Old \times [-h, h]$
    \item $D_\partial:=\New \times[-h, h]$
    \item $D_1:=(Q_R \times [-\max(R,H), \max(R,H)])\setminus (D_0 \cup D_\partial).$
\end{enumerate}
where $H$ is the height of $\Lambda \subset \R^{\d+1}$ in the case where we are proving local laws in $\Lambda$. We partition $Q_R\setminus \Old$ into regions $H_k$ with piecewise affine boundary and sidelengths at scale $\ell$, in $\left[\frac{\ell}{C}, \ell C\right]$. Let $\tilde{H}_k$ denote the cells $H_k \times [-h, h]$.

Observe that the delineation of our points into old and new sets might intersect some of the ``smeared'' points; these smeared regions will have to be modified appropriately. We let $I_\partial$ denote the set of charges that are smeared by the boundary $\Gamma$, i.e.
\begin{equation*}
    I_\partial:=\{i: B(x_i, \rr_i) \cap \Gamma \ne \emptyset \}.
\end{equation*}
Set
\begin{equation}\label{defnkapp}
    n_k:=\cds \int_{\tilde{H}_k} \sum_{i \in I_\partial}\delta_{x_i}^{(\rr_i)}
\end{equation}
to be the amount of smear in a region $\tilde{H}_k$. We let $\N$ denote the number of smeared charges and the number of charges we want wholly unchanged in $\Old$, i.e.
\begin{equation*}
    \N:=\# I_\partial + \# (\{i:x_i \in \Old\} \setminus I_\partial).
\end{equation*}
The goal will be to place $\mn-\N$ sampled points in $\New:=Q_R\setminus \Old$, where $\mn= \mu(Q_R)$. For each $k$, choose constants $m_k$ such that 
\begin{equation}\label{defmk}
    \cds m_k |H_k|=\int_{\partial D_0 \cap \partial \tilde{H}_k}|y|^\gamma E_{\rr}\cdot \vec{n}+M_0|H_k|-n_k.\end{equation}
If $m_k$ is small enough, namely $|m_k|\leq \frac{1}{2}m$ (where $m$ is a lower bound for $\mu$), then we can guarantee $\int_{H_k}\mu+m_k|H_K| \in \mathbb{N}$.

Define 
\begin{equation*}
    \tilde{\mu}:=\mu+\sum_{k}m_k\indic_{H_k}.
\end{equation*}
On the other hand, we have immediately from the divergence theorem and \eqref{outer screening w}
\begin{equation*}
    \frac{1}{\cds}\int_{\partial D_0}|y|^\gamma E_{\rr}\cdot \vec{n}=\int_\Old d\mu-\N+\frac{1}{\cds}\sum_{k}n_k.
\end{equation*}
Hence, by definition \eqref{defmk},
\begin{align*}
    \tilde{\mu}(\New)&=\mu(\New)+\sum_k m_k|H_k| \\
    &=\mn-\mu(\Old)+\frac{1}{\cds}\int_{\partial D_0 \setminus (\Old \times \{-h,h\})}|y|^\gamma E_{\rr}\cdot\vec{n}\\ &+\frac{M_0}{\cds}\sum_k|H_k|-\frac{1}{\cds}\sum_k n_k \\
    &=\mn-\mu(\Old)+\frac{1}{\cds}\int_{\partial D_0}|y|^\gamma E_{\rr}\cdot\vec{n}-\frac{1}{\cds}\sum_k n_k \\
    &=\mn-n_\Old.
\end{align*}
With all of these quantities defined, we are in a position to construct a new screened field outside of $D_0$.

\subsection{Defining the Electric Field}
We define the screened electric field in each of the different subregions.

First we have $E_1$, which completes the smeared charges, defined by
\begin{equation*}
    E_1:=\sum_k \indic_{\tilde{H}_k}\nabla h_{1,k},
\end{equation*}
where $h_{1,k}$ solves 
\begin{equation*}
    \begin{cases}
    -\div(|y|^\gamma \nabla h_{1,k})=\cds \sum_{i \in I_\partial}\delta_{x_i}^{(\rr_i)} & \text{in } \tilde{H}_k \\
   \frac{ \partial h_{1,k}}{\partial n} =0 & \text{on }\partial \tilde{H}_k \setminus \partial D_0 \\
   \frac{ \partial h_{1,k}}{\partial n}=-\frac{n_k}{\int_{F_k}|y|^\gamma} & \text{on } F_k,
    \end{cases}
\end{equation*}
where $F_k$ is the face of $\partial \tilde{H}_k$ touching $\partial D_0$, if it exists. This is solvable by \eqref{defnkapp}.

$E_2$ balances the top region, $D_1$ and is defined by 
\begin{equation*}
    E_2:=\sum_k \indic_{\tilde{H}_k}\nabla h_{2,k},
\end{equation*}
where $h_{2,k}$ solves
\begin{equation*}
    \begin{cases}
    -\div(|y|^\gamma h_{2,k})=\cds m_k\indic_{ H_k} \delta_{\R^\d} & \text{in } \tilde{H}_k \\
   \frac{ \partial h_{2,k}}{\partial n}=-M_0^+ & \text{on }H_k \times \{h \} \\
   \frac{ \partial h_{2,k}}{\partial n} =-M_0^- & \text{on }H_k \times \{-h\} \\
  \frac{  \partial h_{2,k}}{\partial n}=g_k & \text{on the rest of }\partial \tilde{H}_k,
    \end{cases}
\end{equation*}
where $g_k\equiv 0$ if $H_k$ doesn't touch $\Gamma$, and $g_k=-E_{\rr}\cdot \vec{n}+\frac{n_k}{\int_{F_k}|y|^\gamma}$ otherwise, with $\vec{n}$ throughout the outward normal from $D_0$. This is solvable by \eqref{defmk}.

$E_3$ gives us the sampled configuration $Z_{\mn-n_\Old}$ in $\New$ and is defined by 
\begin{equation*}
    E_3=\nabla h_3\indic_{D_\partial},
\end{equation*}
where $h_3$ solves the Neumann problem
\begin{equation*}
    \begin{cases}
    -\div(|y|^\gamma \nabla h_3)=\cds \left(\sum_{j=1}^{\mn-\N}\delta_{z_j}-\tilde{\mu}\delta_{\R^\d}\right) &\text{in }\New \times [-h,h] \\
       \frac{ \partial h_3}{\partial n}=0 & \text{on }\partial \(\New \times [-h,h]\).
    \end{cases}
\end{equation*}

Finally, $E_4$ gives us the screened electric field in $D_1$ and is defined by 
\begin{equation*}
    E_4:=\nabla h_4,
\end{equation*}
where $h_4$ solves
\begin{equation*}
    \begin{cases}
    -\div(|y|^\gamma \nabla h_4)=0 &\text{in }D_1 \\
    \frac{\partial h_4 }{\partial n} =-\phi & \text{on }\partial D_1,
    \end{cases}
\end{equation*}
where $$\phi:=\indic_{\partial D_1 \cap \partial D_0}E\cdot \vec{n}-\indic_{\partial D_1 \cap \partial D_\partial \cap \{y>0\}}M_0^+-\indic_{\partial D_1 \cap \partial D_\partial \cap \{y<0\}}M_0^-.$$

Now, set $\Escr_{\rrh}:=(E_1+E_2+E_3)\indic_{D_\partial}+E_4\indic_{D_1}+E_{\rrh}\indic_{D_0}$ and add back in the truncations, by setting
\begin{equation*}
    \Escr:=\Escr_{\rrh}+\sum_{i=1}^{\mn}\nabla \f_{\overline{\mathsf r}_i}(x-y_i),
\end{equation*}
where $y_i$ correspond to the points  (in $\R^\d$) of the new configuration $Y_{\mn}=(\{X_n\} \cap \Old) \cup Z_{\mn-\N}$, and $\overline{\mathsf r}$ are the (possibly changed) minimal distances, blown up versions of \eqref{defrrc} for the new configuration $Y_{\mn}$. Due to the Neumann condition, no divergence is created across boundaries when we set $\Escr$ to vanish outside of our region. By definition, we have
\begin{equation*}\left\{
\begin{array}{ll}
    -\text{div}(|y|^\gamma \Escr)=\cds \Big(\displaystyle\sum_{i \in Y_{\mn}}\delta_{y_i}-\mu\delta_{\R^\d} \Big) &\text{in } Q_R\times [-R,R] \\
    \Escr \cdot \vec{n}=0 &\text{on }\partial (Q_R \times[-R,R]).\end{array}\right.
\end{equation*}

\subsection{Estimating Constants}
Instead of estimating $M_0^+$ and $M_0^-$ using Cauchy-Schwarz immediately as in \cite{PS17}, we instead carry these constants through our calculations. This will allow us to be as precise as possible in our estimates of $M_0$. As we discussed above, the screening process requires that $|m_k|\le\frac{m}{2}$. It will be convenient in the proof of the local laws to have a bound $\left\|\frac{\mu-\tilde{\mu}}{\tilde{\mu}}\right\|_{L^\infty(\New)} \leq C<1$; in order to obtain this, we will actually need a bit more than $|m_k|\le \frac{m}{2}$. So, we seek $|m_k| \le \frac{m}{3}$. 

First observe, using the bound \eqref{boundM} and the discrepancy estimates \eqref{discest} applied on balls $B_\alpha$ of radius $1$ (at blown-up scale) near $\Gamma$, we have
\be\label{prelim nk}
n_{k,\alpha} \lesssim  \|\mu\|_{L^\infty} + \( \int_{H_k \cap B_\alpha} \yg |E_{\rrh}|^2 \)^{\hal}
\ee
and summing over $\alpha$  (choosing the $O(\ell^{\d-1})$ balls to form a finite covering of the desired region), we find 
$$
 n_k \lesssim  \|\mu\|_{L^\infty} \ell^{\d-1} + \ell^{\frac{\d-1}{2}}  (\min (M, M_\ell))^\hal .
$$
We also have
\begin{equation*}
n_k^2=\(\sum_\alpha n_{k,\alpha}\)^2 \lesssim \ell^{\d-1}\sum_\alpha n_{k,\alpha}^2 
\lesssim \|\mu\|_{L^\infty}\ell^{2\d-2}+\ell^{\d-1}\int_{H_k \cap [Q_{T+2}\setminus Q_{T-2}] \times [-h,h]}\yg |E_{\rrh}|^2
\end{equation*}
using \eqref{prelim nk}, which yields 
\be\label{final nk}
\sum n_k^2 \lesssim R^{\d-1}\ell^{\d-1}\|\mu\|_{L^\infty}^2+M\ell^{\d-1}
\ee
since the number of $H_k$ intersecting $\Gamma$ is bounded by $O\(\frac{R^{\d-1}}{\ell^{\d-1}}\)$. We will make use of this estimate below. In the same way $\#I_\partial \lesssim M+R^{\d-1}$.
Hence, by definition \eqref{defmk}, using the above bounds, Cauchy-Schwarz and \eqref{intGamma}, we have
\begin{align*}
    |m_k| 
   & \le C\( \frac{1}{|H_k|}\int_{\partial D_0 \cap \partial \tilde{H}_k}|y|^\gamma |E_{\rrh}\cdot \vec{n}|+M_0+\frac{n_k}{|H_k|}\) \\
    &\le C\( M_0+\ell^{-\d} \int_{\partial D_0 \cap \partial \tilde{H}_k}|y|^\gamma|E_{\rrh} |+\ell^{-\d}(\min (M,M_\ell))^\hal \ell^{\frac{\d-1}{2}}+\|\mu\|_{L^\infty}\ell^{\d-1})\) \\
    &\le C \( M_0+\ell^{-\d}\sqrt{M}\sqrt{|\partial D_0 \cap \partial \tilde{H}_k|h^\gamma}+(\min (M,M_\ell))^\hal \ell^{-\frac{\d+1}{2}}+\|\mu\|_{L^\infty}\ell^{-1})\) \\
    &\le C\( M_0+\ell^{-\d}\ell^{\frac{\d-1}{2}}h^{\frac{1+\gamma}{2}}(\min (M,M_\ell))^\hal \)+\frac{C}{\ell}\|\mu\|_{L^\infty},
\end{align*}  after absorbing some terms, using $\eta\le 1$ and $h> 1$. To obtain that this is  less than  $\frac{m}3$, 
squaring, rearranging and using the bounds on $M$ we reduce to  the sufficient conditions that \begin{equation}\label{definition of little c}
  \frac{C}{\ell}\|\mu\|_{L^\infty}<\frac{m}{6}, \qquad M_0^2+\frac{h^{1+\gamma}\min (M,M_\ell)}{\ell^{\d+1}} \leq \mathsf{c}
\end{equation}
for some fixed constant $\mathsf{c}$ (depending on $m$), as long as $\ell$ is large enough and $h>1$. The first condition is satisfied trivially for large enough $\ell>1$. Inserting the definition of $M$ and $M_\ell$, this corresponds to the second part of the  screenability condition \eqref{Riesz screenability1}--\eqref{Riesz screenability2}. 
Furthermore, using Cauchy-Schwarz and the  definition \eqref{Riesz upper energy inner screening},  we have 
\begin{equation}\label{Jensen bd}
(M_0^+)^2=\(\frac{h^{-\gamma}}{|\New|}\int_{\Old \times \{h\}}|y|^\gamma E \cdot \vec{n}\)^2\le \frac{|\Old|}{|\New|^2} h^{-\gamma} e(X_n,w,h)\lesssim \frac{R^\d}{R^{2(\d-1)} \tilde \ell^2} h^{-\gamma} e(X_n,w,h)
\end{equation}
and the same for $M_0^-$, 
hence 
\begin{equation*}
M_0^2 \le\frac{1}{R^{\d-2} \tilde \ell^2} h^{\gamma} e(X_n,w,h).\end{equation*}
Substituting this into the above yields the screenability condition \eqref{Riesz screenability1}, resp. \eqref{Riesz screenability2}. 

Notice that this condition yields a nice $L^\infty$ bound
\begin{equation*}
\left\|\frac{\mu-\tilde{\mu}}{\tilde{\mu}}\right\|_{L^\infty(H_k)}=\left\|\frac{m_k}{\mu+m_k}\right\|_{L^\infty(H_k)} \leq \frac{\frac{m}{3}}{m-\frac{m}{3}}=\frac{1}{2}.
\end{equation*}
Since $\tilde{\mu}$ is defined separately on the $H_k$, we then also have
\begin{equation*}
\left\|\frac{\mu-\tilde{\mu}}{\tilde{\mu}}\right\|_{L^\infty(\New)}\leq\frac{1}{2}.
\end{equation*}

%
%
%
%
%
%
%
%
%

%
%
\subsection{Estimating the screened field}
We first estimate $E_1$; the idea is to use an analog of \cite[Lemma A.2]{S24}. An examination of the proof shows that the $(\#I)^2\ell^{2-\d}$ term there is obtained from applying the Coulomb version of \cite[Lemma 6.4]{PS17} to 
\begin{equation*}
\begin{cases}
-\div(\yg \nabla v)=c\frac{|F_k|}{\tilde{H_k}}
   \frac{ \partial v}{\partial n} =0 & \text{on }\partial \tilde{H}_k \setminus \partial D_0 \\
   \frac{ \partial v}{\partial n}=c & \text{on } F_k,
\end{cases}
\end{equation*}
with $c=-\frac{n_k}{\int_{F_k}|y|^\gamma}$. We instead just apply \cite[Lemma 6.4]{PS17} to control this term; however, since $\tilde{H_k}$ has a height of length $h$, an examination of the proof induces an aspect ratio $\frac{h}{\ell}$. Hence,
\begin{equation}
    \int_{\tilde{H_k}}|y|^\gamma|\nabla h_{1,k, \rrh}|^2 \lesssim \frac{hn_k^2}{\int_{F_k}\yg}+\sum_{i \in H_k}\g(\rrh_i).
    \end{equation}
    Using \eqref{eq:13} we can bound the sum over all $\g(\rr_i)$ in $D_\partial$ by $M+\#I_\partial \lesssim M+R^{\d-1}$ as seen above, and so 
    
 \begin{multline*}
    \int_{D_\partial}\yg |E_{1,\rrh}|^2 \lesssim h\sum_k \frac{n_k^2}{\int_{F_k}|y|^\gamma}+M+R^{\d-1}\lesssim \frac{1}{\ell^{\d-1}h^\gamma}\(\ell^{\d-1}R^{\d-1}+\ell^{\d-1}M\)+M+R^{\d-1} \\
    \lesssim \(1+h^{-\gamma}\)\(M+R^{\d-1}\)
\end{multline*}
using \eqref{final nk} and 
\begin{equation*}
\int_{F_k}|y|^\gamma \sim |\partial H_k|\int_{-h}^h |y|^\gamma \sim \ell^{\d-1}h^{1+\gamma}.
\end{equation*}

We next turn to $E_2$. 
which bounding similarly with \cite[Lemma 6.4]{PS17} yields
\begin{align*}
\int_{D_\partial}|y|^\gamma|E_{2,\rrh}|^2&\lesssim \frac{h}{\ell} \ell \sum_k\left(\int_{F_k}|y|^\gamma |g_k|^2+(M_0^+)^2h^\gamma \ell^\d+(M_0^-)^2h^\gamma \ell^\d\right) \\
&\lesssim h\int_{\Gamma \times [-h,h]}|y|^\gamma |E_{\rrh}|^2+h\sum_k \frac{n_k^2}{\int_{F_k}|y|^\gamma}+h\left((M_0^+)^2h^\gamma \ell^\d+(M_0^-)^2h^\gamma \ell^\d\right)\frac{R^{\d-1}\tilde{\ell}}{\ell^\d} \\
&\lesssim hM+Mh^{-\gamma}+R^{\d-1}h^{-\gamma}+\tilde{\ell}R^{\d-1}h^{1-\gamma}M_0^2
\end{align*}
where we have borrowed from above the estimate on $h\sum_k \frac{n_k^2}{\int_{F_k}|y|^\gamma} $. Since $\gamma<1$ and $h>1$, we may rewrite this as 
 \begin{equation*}
 \int_{D_\partial}|y|^\gamma|E_{2,\rrh}|^2 \lesssim hM+R^{\d-1}h^{-\gamma}+\tilde{\ell}R^{\d-1}h^{1-\gamma}M_0^2.
 \end{equation*}

 For $E_3$, we obtain directly from the definition of $  \F(Z_{\mn-n_\Old}, \tilde{\mu}, \New\times [-h,h])$ (see \eqref{minneum}) and the fact that $\f_\eta$ is uniformly bounded in $L^1$ for small $\eta$ by \eqref{eq:intf}  that 
\begin{equation*}
 \frac{1}{2\cds}   \int_{\New \times [-h,h]}\yg|\nabla h_{3, \rrh}|^2 \leq  \F(Z_{\mn-n_\Old}, \tilde{\mu}, \New \times [-h,h])+\cds\sum_{j=1}^{\mn-n_\Old}\g(\rrh_j)+C(\mn-n_\Old).
\end{equation*}

Finally, to bound the top field $E_4$, we use \cite[Lemma 6.4]{PS17} at scale $R$. This yields
\begin{align*}
    \int_{D_1}|E_4|^2 \lesssim R \int_{\partial D_1}|y|^\gamma|\phi|^2&\leq R|\New|h^{-\gamma} M_0^2+ R\int_{\square_{T+1}\times \{-h,h\}}|y|^\gamma|\nabla w|^2 \\
    &\leq R\tilde{\ell}R^{\d-1}h^{-\gamma}M_0^2+Re(X_n,w,h) \\
    &\lesssim R^{\d}\tilde{\ell}h^{-\gamma}M_0^2+Re(X_n,w,h)
\end{align*}
using the definition of $\phi$; the multiplication by $|\New|$ comes from integrating the constants $M_0^+$ and $M_0^-$ over where they are supported in the definition of $\phi$, namely $\New \times \{h\}$ and $\New \times \{-h\}$, respectively.
It remains to put it all together and obtain the requisite screening estimate. We kept the original electric field fixed in $D_0$, so combining the above estimates allows us to write
\begin{align*}
\int_{Q_R \times [-R,R]}|\Escr_{\rrh}|^2 &\leq \int_{D_0}|y|^\gamma|\nabla w_{\rrh}|^2+Ch M+CR^{\d-1}+R^{\d-1}h^{-\gamma}+CR^{\d}\tilde{\ell}h^{-\gamma}M_0^2+CRe(X_n,w,h)\\
&+C(\mn-n_\Old)+C \(2\cds \F(Z_{\mn-n_\Old}, \tilde{\mu}, \New \times [-h,h])+\cds\sum_{i=1}^{\mn-n_\Old}\g(\rrh_j)\).
\end{align*}
where we have used $h \le R$ to absorb $\tilde{\ell}R^{\d-1}h^{1-\gamma}M_0^2$ from $E_2$ into the $\tilde{\ell}R^\d h^{-\gamma}M_0^2$ error above.
Using Lemma \ref{projlem} we can replace the screened electric field with the gradient defining $ \F(Y_{\mn}, \tilde{\mu}, Q_R \times [-R,R])$ , we find
with a uniform bound on $\f_\eta$ in $L^1$ for small $\eta$ that
\begin{multline*}
\F(Y_{\mn}, \tilde{\mu}, Q_R \times [-R,R])-\(\int_{Q_L\times[-R,R]}|y|^\gamma|\nabla w_{\rrh}|^2-\cd \sum_{i=1}^n \g(\rr_i)\) \\
\leq -\frac{1}{2\cds}\int_{\New \times [-R,R]}|y|^\gamma|\nabla w_{\rrh}|^2+\frac{1}{2}\sum_{\{i \in \{1,\dots,n\}:x_i \notin \Old\}}\g(\rrh_i) \\
+C\sum_{j=1}^{\mn-n_\Old}\g(\rrh_j)+ChM+CR^{\d-1}+R^{\d-1}h^{-\gamma}+C\tilde{\ell}R^{\d} h^{-\gamma}M_0^2+CR e(X_n,w,h)+\\
C \F(Z_{\mn-n_\Old}, \tilde{\mu}, \New \times [-h,h])
+C\sum_{i,j}\g(x_i-z_j)+C|n-\mn|+C(\mn-n_\Old).
\end{multline*}
Using \eqref{Jensen bd} we can rewrite the above bound as  
\begin{multline*}
\F(Y_{\mn}, \tilde{\mu}, Q_R \times [-R,R])-\(\int_{Q_R\times[-R,R]}|y|^\gamma|\nabla w_{\rrh}|^2-\cd \sum_{i=1}^n \g(\rrh_i)\) \\
\leq -\frac{1}{2\cds}\int_{\New \times [-R,R]}|y|^\gamma|\nabla w_{\rrh}|^2+\frac{1}{2}\sum_{\{i \in \{1,\dots,n\}:x_i \notin \Old\}}\g(\rrh_i) 
+C\sum_{j=1}^{\mn-n_\Old}\g(\rrh_j)\\+ChM+CR^{\d-1}+R^{\d-1}h^{-\gamma}+C\(\frac{R^{2}}{\tilde{\ell}}+R\)e(X_n,w,h)+
C \F(Z_{\mn-n_\Old}, \tilde{\mu}, \New \times [-h,h])\\
+C\sum_{i,j}\g(x_i-z_j)+C|n-\mn|+C(\mn-n_\Old).
\end{multline*}

Next, we would like to control $\frac{1}{2}\sum_{\{i \in \{1,\dots,n\}:x_i \notin \Old\}}\g(\rrh_i) -\frac{1}{2\cds}\int_{\New \times [-R,R]}|y|^\gamma|\nabla w_{\rrh}|^2$ by the number of points not in $\Old$, but the possible blowup of $\g(\rrh_i)$ presents an issue. We adjust the truncation parameter and apply \cite[Lemma 4.13]{S24}, but need to shrink $\Old$ a tad in order to guarantee that it does not intersect $B(x_i, \frac{1}{4})$ for all $x_i \notin \Old$. To do this, we simply observe that $\square_{T-4}\subset \Old$ and write 
\begin{align*}
&\frac{1}{2}\sum_{\{i \in \{1,\dots,n\}:x_i \notin \Old\}}\g(\rrh_i) -\frac{1}{2\cds}\int_{\New \times [-R,R]}|y|^\gamma|\nabla w_{\rrh}|^2 =\frac{1}{2}\sum_{\{i \in \{1,\dots,n\}:x_i \notin \Old\}}\g(\rrh_i) \\
 &-\frac{1}{2\cds}\int_{(\square_R \setminus \square_{T-4})\times [-R,R]}|y|^\gamma|\nabla w_{\rrh}|^2+\frac{1}{2\cds}\int_{( \Old \setminus\square_{T-4})\times [-R,R]}|y|^\gamma|\nabla w_{\rrh}|^2 \\
&\leq \frac{1}{2}\sum_{\{i \in \{1,\dots,n\}:x_i \notin \square_{T-4}\}}\g(\tilde{\rr}_i) -\frac{1}{2\cds}\int_{(\square_R \setminus \square_{T-4})\times [-R,R]}|y|^\gamma|\nabla w_{\tilde{\rr}}|^2-\sum_{\{i: x_i \in \Old\setminus \square_{T-4}\}}\g(\rrh_i)\\
&+\frac{1}{2\cds}\int_{(\Old \setminus \square_{T-4} )\times [-R,R]}|y|^\gamma|\nabla w_{\rrh}|^2 +\sum_{\{i: x_i \notin \square_{T-4}\}} \int_{\square_R \setminus \square_{T-4}} (\f_{\tilde{\rr}_i}-\f_{\rrh_i})(x-x_i)~d\mu  \\
& \leq C(n-\N)+CM,
\end{align*}
where $\tilde{\rr}_i$ is defined to be $\frac{1}{4}$ for $x_i \notin \Old$ and is kept fixed otherwise. This allows us to cancel all contributions of $\f$ and $\g$ for $x_i \in \Old \setminus \square_{T-4}$, and bound the remaining contributions of $\f$ and $\g$ by $C(n-\N)$ since $\tilde{\rr}_i$ is bounded below for such $i$ and $\f_\eta$ is again controlled uniformly in $L^1$ for small $\eta$. We have bounded negative contributions of $L^2$ energy by zero, and $\frac{1}{2\cds}\int_{(\Old \setminus \square_{T-4})\times [-R,R]}|y|^\gamma|\nabla w_{\rrh}|^2 \leq CM$. 
Finally, using \eqref{arrivC02} we can bound 
\begin{align*}
\sum_{j=1}^{\mn-n_\Old}\g(\rrh_j)&\leq C\left( \F(Z_{\mn-n_\Old}, \tilde{\mu}, \New \times [-h,h])+C(|\mn-\N|)\right) \\
&\leq C\left(\F(Z_{\mn-n_\Old}, \tilde{\mu}, \New \times [-h,h])+\tilde{\ell}R^{\d-1}\right),
\end{align*}
controlling $\mn-\N=\tilde{\mu}(\New)\lesssim \tilde{\ell}R^{\d-1}$ by our $L^\infty$ control on $\tilde{\mu}$, completing the argument.


\subsection{Outer Screening}
Let us first comment on the changes to the setup that are required for screening in $(Q_R \times [-R,R])^c$. First, we choose our good boundary $\Gamma$ exterior to $Q_R$. If \eqref{Riesz screenability1} is satisfied, we find $T \in [R+\tilde{\ell}+2, R+2\tilde{\ell}-2]$ such that
\begin{align}
\nonumber    M:=\int_{(Q_{T+4}\setminus Q_{T-4})\times[-h,h]^k}|y|^\gamma|E_{\rrh}|^2 &\leq \frac{S(X_n,w,h)}{\tilde{\ell}} \\
\label{newT}    \int_{\partial Q_T \times [-h,h]^k}|y|^\gamma|E_{\rrh}|^2 &\lesssim M 
\end{align}
and again take $\Gamma=\partial Q_T$.

Otherwise, the second condition  \eqref{Riesz screenability2} is satisfied. Then, via a mean-value argument analogous to \cite[Appendix A]{S24}, we can find some $t \in [ R+\tilde{\ell}+\ell, R+2\tilde{\ell}]$ such that 
\begin{equation*}
\int_{(Q_{t} \setminus Q_{t-\ell}) \times [-h,h]}\yg |E_{\rrh}|^2 \leq C\frac{S(X_n,w,h)\ell}{\tilde{\ell}}.
\end{equation*}

We then apply a mean-value argument in the strip $Q_{t} \setminus Q_{t-\ell}$ and find a piecewise affine boundary $\Gamma$ in $Q_{t} \setminus Q_{t-\ell}$ with faces parallel to those of $Q_L$ and sidelengths of order $\ell$ such that 
\begin{equation*}
\int_{\Gamma \times[-h,h]}\yg |E_{\rrh}|^2 \leq C\frac{S(X_n,w,h)}{\tilde{\ell}}, \qquad \sup_x \int_{\(\Gamma \cap \square_{\ell}(x)\) \times [-h,h]}\yg |E_{\rrh}|^2 \leq CS'(X_n,w,h)
\end{equation*}
and 
\begin{equation*}
\int_{\Gamma_1 \times [-h,h]}\yg |E_{\rrh}|^2 \leq C\frac{S(X_n,w,h)}{\tilde{\ell}}
\end{equation*}
where $\Gamma_1$ denotes the $1$-neighborhood of $\Gamma$. In both cases we let $M=C\frac{S(X_n,w,h)}{\tilde{\ell}}$ and in the latter case $M_\ell=CS'(X_n,w,h)$.

We will leave the configuration unchanged in $\Old=Q_T^c$, and only place new points in $\New=Q_R^c \setminus Q_T^c$. We now partition space so that we only change the field near $\partial (Q_R\times [-R,R])$. Namely, we define
\begin{enumerate}
    \item $D_0:= \left(Q_T \times \left[-(R+h), (R+h)\right]\right)^c$
    \item $D_\partial:=\New \times[-h, h]$
    \item $D_1:=\left(\square_T \times [-(R+h),(R+h)]\right)\setminus(\Omega \times [-R,R] \cup D_\partial).$
\end{enumerate}
We partition $\New$ into cells  $H_k$ with sidelengths at scale $\ell$, in $\left[\frac{\ell}{C}, \ell C\right]$, and let $\tilde{H}_k$ denote the rectangles $H_k \times [-h, h]$. Then, we set 
\begin{align*}
    M_0^+:=\frac{h^{-\gamma}}{|\New|}\int_{(\partial D_1 \cap \{y>0\}) \setminus \partial (D_\partial \cup (Q_R \times [-R,R]))}E\cdot \vec{n}  \\
    M_0^-:=\frac{h^{-\gamma}}{|\New|}\int_{(\partial D_1 \cap \{y<0\})  \setminus \partial (D_\partial \cup (Q_R \times [-R,R))}E\cdot \vec{n}
\end{align*}
and denote by $M_0$ the sum $h^\gamma M_0^++h^\gamma M_0^-$. The sets and quantities $\New$, $H_k$, $n_k$, $I_\partial$, $n_\Old$, $m_k$ and $\tilde{\mu}$ are then all defined analogously to the outer screening, as is the screenability condition. $E_1$, $E_2$, $E_3$ and $E_4$ are defined in exactly the same manner as in the outer screening. Setting $\Escr_{\rrh}:=(E_1+E_2+E_3)\mathbf{1}_{D_\partial}+E_4\mathbf{1}_{D_1}+E_{\rrh}\mathbf{1}_{D_0}$ and adding back the truncations we have
\begin{equation*}
    \Escr:=\Escr_{\rrh}+\sum_{i=1}^{\mn}\nabla \f_{\overline{\mathsf r}_i}(x-y_i),
\end{equation*}
where $Y_{\mn}=(\{X_n\} \cap \Old) \cup Z_{\mn-\N}$, and $\overline{\mathsf r}$ are the (possibly changed) minimal distances for the new configuration $Y_{\mn}$. Due to the Neumann condition, no divergence is created across boundaries when we set $\Escr$ to vanish outside of our region. By definition then, we have
\begin{equation*}
    \begin{cases}
    -\text{div}(|y|^\gamma \Escr)=\cds \left(\sum_{i \in Y_{\mn}}\delta_{y_i}-\mu \right) &\text{in } (Q_R \times [-R,R])^c \\
    \Escr \cdot \vec{n}=0 &\text{on }\partial (Q_R \times [-R,R]).
    \end{cases}
\end{equation*}
Since the geometry of $D_\partial$ is unchanged and the equations are the same as with outer screening, all of the estimates on $E_1$, $E_2$, $E_3$ are the same. The sidelengths of $D_1$ are not necessarily of the same order, so the estimate on $E_4$ needs to be multiplied by an aspect ratio of $\frac{R}{\min(h,\tilde{\ell})}$. 
%

Thus,
\begin{align*}
&\int_{(Q_R \times [-R,R])^c}|y|^\gamma|\Escr_{\rrh}|^2 \leq \int_{D_0}|y|^\gamma |\nabla w_{\rrh}|^2+Ch M+C\frac{R}{\min(h,\tilde{\ell})}\(R^{\d}\tilde{\ell}h^{-\gamma}M_0^2+Re(X_n,w,h)\)\\
&+CR^{\d-1}+R^{\d-1}h^{-\gamma}+C(\mn-n_\Old)+C\(2\cds  \F(Z_{\mn-n_\Old}, \tilde{\mu}, \New \times [-h,h])+\cds\sum_{i=1}^{\mn-n_{\Old}}\g(\rrh_i)\).
\end{align*}


We find exactly as before then that
\begin{multline*}
\F(Y_{\mn}, \tilde{\mu}, (Q_R \times [-R,R])^c)-\(\int_{(Q_R\times[-R,R])^c}|y|^\gamma|\nabla w_{\rrh}|^2-\cd \sum_{i=1}^n \g(\rrh_i)\) \\
\leq -\frac{1}{2\cds}\int_{D_\partial \cup D_1}|y|^\gamma|\nabla w_{\rrh}|^2+\frac{1}{2}\sum_{\{i \in \{1,\dots,n\}:x_i \notin \Old\}}\g(\rr_i)+C\sum_{j=1}^{\mn-n_\Old}\g(\rrh_j) +CR^{\d-1}+R^{\d-1}h^{-\gamma}\\
+Ch M +C\frac{R}{\min(h,\tilde{\ell})}\(\frac{R^{2}}{\tilde{\ell}}+R\)e(X_n,w,h)+C\F(Z_{\mn-n_\Old}, \tilde{\mu}, \New \times [-h,h])+C\sum_{i,j}\g(x_i-z_j) \\
+C|n-\mn|+C(\mn-n_\Old).
\end{multline*}
Next, we would like to control $\frac{1}{2}\sum_{\{i \in \{1,\dots,n\}:x_i \notin \Old\}}\g(\rrh_i) -\frac{1}{2\cds}\int_{D_\partial \cup D_1}|y|^\gamma|\nabla w_{\rrh}|^2$ by the number of points not in $\Old$, but the possible blowup of $\g(\rr_i)$ again presents an issue. We adjust the truncation parameter and apply \cite[Lemma 4.13]{S24}, but again need to shrink $\Old$ a tad in order to guarantee that it does not intersect $B(x_i, \frac{1}{4})$ for all $x_i \notin \Old$. To do this, we simply observe that $\square_{T+4}^c\subset \Old$ and write 
\begin{align*}
&\frac{1}{2}\sum_{\{i \in \{1,\dots,n\}:x_i \notin \Old\}}\g(\rrh_i) -\frac{1}{2\cds}\int_{D_\partial \cup D_1}|y|^\gamma|\nabla w_{\rrh}|^2  \leq \frac{1}{2}\sum_{\{i \in \{1,\dots,n\}:x_i \notin \Old\}}\g(\rrh_i) -\frac{1}{2\cds}\int_{D_\partial }|y|^\gamma|\nabla w_{\rrh}|^2 \\
&=\frac{1}{2}\sum_{\{i \in \{1,\dots,n\}:x_i \notin \Old\}}\g(\rrh_i) -\frac{1}{2\cds}\int_{(\square_R^c \setminus \square_{T+4}^c)\times [-h,h]}|y|^\gamma|\nabla w_{\rrh}|^2\\
&\hspace{2cm}+\frac{1}{2\cds}\int_{( \square_{T+4})\setminus \New\times [-h,h]}|y|^\gamma|\nabla w_{\rrh}|^2 \\
&\leq \frac{1}{2}\sum_{\{i \in \{1,\dots,n\}:x_i \notin \square_{T+4}^c\}}\g(\tilde{\rr}_i) -\frac{1}{2\cds}\int_{(\square_R^c \setminus \square_{T+4}^c)\times [-h,h]}|y|^\gamma |\nabla w_{\tilde{\rr}}|^2-\sum_{\{i: x_i \in \Old\setminus \square_{T+4}^c\}}\g(\rrh_i) \\
&+\frac{1}{2\cds}\int_{(  \square_{T+4}\setminus \New)\times [-h,h]}|y|^\gamma|\nabla w_{\rrh}|^2 +\sum_{\{i: x_i \notin \square_{T+4}^c\}} \int_{\square_R^c \setminus \square_{T+4}^c} (\f_{\tilde{\rr}_i}-\f_{\rrh_i})(x-x_i)~d\mu  \\
& \leq C(n-\N)+CM,
\end{align*}
where $\tilde{\rr}_i$ is defined to be $\frac{1}{4}$ for $x_i \notin \Old$ and is kept fixed otherwise. This allows us to cancel all contributions of $\f$ and $g$ for $x_i \in \Old \setminus \square_{T+4}^c$, and bound the remaining contributions of $\f$ and $\g$ by $C(n-\N)$ since $\tilde{\rr}_i$ is bounded below for such $i$ and $\f_\eta$ is again controlled uniformly in $L^1$ for small $\eta$. We have bounded negative contributions of $L^2$ energy by zero, and $\frac{1}{2\cds}\int_{(\square_{T+4} \setminus \New)\times [-h,h]}|\nabla w_{\rr}|^2 \leq CM$. 
Finally, using Proposition \ref{prop:MElb} we can bound 
\begin{align*}
\sum_{j=1}^{\mn-n_\Old}\g(\rrh_j)&\leq C\left( \F(Z_{\mn-n_\Old}, \tilde{\mu}, \New \times [-h,h])+\mn-\N\right) \\
&\leq C\left(  \F(Z_{\mn-n_\Old}, \tilde{\mu}, \New \times [-h,h])+\tilde{\ell}R^{d-1}\right),
\end{align*}
controlling $\mn-\N=\tilde{\mu}(\New)\lesssim \tilde{\ell}R^{d-1}$ by our $L^\infty$ control on $\tilde{\mu}$, completing the argument.

\appendix

\newcommand{\de}{\partial}
\newcommand{\on}{\overline{\nabla}}
\newcommand{\fls}{(-\Delta)^s}

\newcommand{\So}{\mathcal{S}}
\newcommand{\Lo}{\mathcal{L}}

\newcommand{\Co}{{\mathcal{C}}}

\newcommand{\average}{{\mathchoice {\kern1ex\vcenter{\hrule height.4pt
width 6pt depth0pt} \kern-9.7pt} {\kern1ex\vcenter{\hrule
height.4pt width 4.3pt depth0pt} \kern-7pt} {} {} }}
\newcommand{\ave}{\average\int}
\newpage
\section{Some remarks about the fractional obstacle problem - by Xavier Ros-Oton}
Let $s\in(0,1)$, and consider the operator
\begin{equation} \label{L}
(-\Delta)^s u(x) = c_{n,s}\int_{\R^n}\big(u(x)-u(x+y)\big)\frac{dy}{|y|^{n+2s}},
\end{equation}
We want to study the \textbf{obstacle problem}
\begin{equation} \label{obst-pb}
\min\big\{(-\Delta)^s u,\, u-\psi\big\}=0\quad \textrm{in}\quad B_1\subset\R^n.
\end{equation}
The function $u$ is assumed to be bounded in $\R^n$.
In some cases, we want to consider the \emph{global} problem, which for $n>2s$ is
\begin{equation} \label{obst-pb-global}
\begin{split}
\min\big\{(-\Delta)^s u,\, u-\psi\big\}  = & \,\,0 \quad \textrm{in}\quad \R^n \\
u\longrightarrow & \,\,0\quad \textrm{at}\quad \infty.
\end{split}
\end{equation}
When $n=2s=1$ the global problem becomes
\begin{equation} \label{obst-pb-global-1}
\begin{split}
\min\big\{\sqrt{-\Delta} u,\, u-\psi\big\}  = & \,\,0 \quad \textrm{in}\quad \R \\
\frac{u(x)}{-\log|x|}\longrightarrow & \,\,\kappa\quad \textrm{at}\quad \infty,
\end{split}
\end{equation}
while when $n=1<2s$ we have
\begin{equation} \label{obst-pb-global-1<}
\begin{split}
\min\big\{(-\Delta)^s u,\, u-\psi\big\}  = & \,\,0 \quad \textrm{in}\quad \R \\
\frac{u(x)}{-|x|^{1-2s}}\longrightarrow & \,\,\kappa\quad \textrm{at}\quad \infty,
\end{split}
\end{equation}
for some constant $\kappa>0$.
Notice that in that case we need $\psi\ll -\log|x|$ or $\psi\ll-|x|^{1-2s}$ at $\infty$.

Moreover, the constant $\kappa$ is the total mass of $(-\Delta)^s u$ in $\R$.

\subsection{Known results}

The optimal regularity of solutions was first established in \cite{CSS} for $\psi\in C^{2,1}$, and these arguments were refined in \cite{CDS} in order to establish\footnote{There is actually an important detail that was omitted in \cite{CDS}, and is the fact that one needs to prove that blow-ups are \textit{convex}. This was not done in \cite{CDS}, but it follows from the results in \cite{FJ,RTW,CDV}.} the same result under minimal assumptions on the obstacle $\psi$.

\begin{theo}[\cite{CDS}]
Let $\psi\in C^{1+s+\delta}(B_1)$ for some $\delta>0$, and $u$ be any viscosity solution of \eqref{obst-pb}.
Then, $u$ is $C^{1+s}$ in $B_{1/2}$, with 
\[\|u\|_{C^{1+s}(B_{1/2})} \leq C\left(\|\psi\|_{C^{1+s+\delta}(B_1)} + \|u\|_{L^\infty(\R^n)}\right).\]
The constant $C$ depends only on $n$, $s$, and $\delta$.
\end{theo}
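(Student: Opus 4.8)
The plan is to reduce the statement to an optimal growth estimate at free boundary points, since the regularity of $u$ elsewhere is classical. In the open set $\{u>\psi\}$ one has $(-\Delta)^su=0$, so interior Schauder estimates for \eqref{L} give $u\in C^\infty$ there with quantitative (scale-invariant) bounds in terms of $\|u\|_{L^\infty(\R^n)}$; in the interior of the contact set $\{u=\psi\}$ one simply has $u=\psi\in C^{1+s+\delta}$. Hence the only real issue is a uniform control of $u$ near the free boundary $\Gamma=\partial\{u>\psi\}\cap B_{1/2}$. Fix $x_0\in\Gamma$ and let $\ell_{x_0}(x)=\psi(x_0)+\nabla\psi(x_0)\cdot(x-x_0)$ be the affine tangent of $\psi$ at $x_0$; then $w:=u-\ell_{x_0}$ solves the obstacle problem \eqref{obst-pb} with obstacle $\psi-\ell_{x_0}$ and, since $(-\Delta)^s\ell_{x_0}=0$, with the same (zero) right-hand side on $\{w>\psi-\ell_{x_0}\}$. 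The decisive gain is that, because $\psi\in C^{1+s+\delta}$, the new obstacle vanishes at $x_0$ to order $1+s+\delta$, i.e. $|(\psi-\ell_{x_0})(x)|\le C\,\|\psi\|_{C^{1+s+\delta}(B_1)}|x-x_0|^{1+s+\delta}$; this extra exponent $\delta>0$ is exactly what will make the obstacle a lower-order perturbation at the scale $|x-x_0|^{1+s}$.

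Before the core argument I would first record the preliminary regularity $u\in C^{1,\alpha}_{\rm loc}(B_{3/4})$ for some small $\alpha=\alpha(n,s)>0$, which follows from the semiconvexity of solutions in directions tangential to the free boundary together with the De Giorgi--Nash--Moser theory for the degenerate-elliptic operator $\mathrm{div}(|y|^{1-2s}\nabla\,\cdot\,)$ coming from the Caffarelli--Silvestre extension; this also sets up the right functional framework, in which the problem becomes a thin (Signorini) obstacle problem on $\{y=0\}$. Equipped with this, the heart of the proof is the growth estimate
\[
\sup_{B_r(x_0)}\,|u-\ell_{x_0}|\ \le\ C\,r^{1+s}\qquad\text{for all }r\le \tfrac14,
\]
with $C$ depending only on $n,s,\delta,\|\psi\|_{C^{1+s+\delta}(B_1)},\|u\|_{L^\infty(\R^n)}$, uniformly in $x_0\in\Gamma$. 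I would prove it by a compactness/contradiction scheme: if it failed there would exist free boundary points $x_j$, radii $r_j\downarrow0$ and an almost-maximal sequence of rescalings $w_j(x)=(u-\ell_{x_j})(x_j+r_jx)/\rho_j$ with $\rho_j=\sup_{B_{r_j}(x_j)}|u-\ell_{x_j}|\gg r_j^{1+s}$, which (after the extension and using the preliminary $C^{1,\alpha}$ bound) converge locally uniformly to a nonzero global solution $w_\infty$ of the Signorini obstacle problem with zero obstacle and zero right-hand side --- the obstacle disappears in the limit precisely because of the $|x-x_0|^{1+s+\delta}$ vanishing just noted, and the right-hand side was already zero after subtracting $\ell_{x_j}$. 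By the Almgren-type monotonicity of the frequency function (valid up to lower-order corrections that the $\delta$-gain renders negligible) together with the classification of such global solutions --- in particular the fact that their blow-ups are \emph{convex}, which is the subtle point and relies on \cite{FJ,RTW,CDV} --- $w_\infty$ is homogeneous of degree $\ge 1+s$, while the almost-maximality of the scales forces $w_\infty$ to have at most $(1+s)$-growth at infinity and the normalization $\sup_{B_1}|w_\infty|=1$ forces non-degeneracy, which together contradict a Liouville-type iteration. Equivalently, one can run the frequency-function argument directly on $w$ (without extracting a limit) and read off the $(1+s)$-decay of $\int_{\partial B_r^+}|y|^{1-2s}w^2$.

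Finally I would upgrade the pointwise growth to the norm bound $\|u\|_{C^{1+s}(B_{1/2})}\le C(\|\psi\|_{C^{1+s+\delta}(B_1)}+\|u\|_{L^\infty(\R^n)})$ by a routine dyadic interpolation: if $x\in\{u>\psi\}$ lies at distance $d$ from $\Gamma$, then $B_d(x)\subset\{u>\psi\}$, so applying interior estimates for $(-\Delta)^s(u-\ell_{x_0})=0$ in $B_d(x)$ (with $x_0\in\Gamma$ nearest to $x$) and the growth estimate gives $|D^2u(x)|\le C d^{s-1}$; in the interior of the contact set the same bound is immediate from $u=\psi\in C^{1+s+\delta}$. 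Integrating these derivative bounds over segments (an argument of the type of \cite[(4.7)]{GT01}) yields $\nabla u\in C^{s}(B_{1/2})$ with the claimed dependence of the constant. A minor technical point to dispatch along the way is localization: one replaces $\psi$ by $\chi\psi$ with a cutoff $\chi\equiv1$ on $B_{3/4}$, absorbing the resulting bounded, compactly supported change of $(-\Delta)^su$ into a $C^{1+s+\delta}$ correction. The main obstacle throughout is the blow-up classification, and specifically the convexity of blow-ups, which is precisely the ``important detail'' flagged in the footnote above; everything else is either classical fractional-Laplacian Schauder theory or the by-now-standard frequency-function machinery.
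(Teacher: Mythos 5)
The paper does not give a proof of this theorem: it is quoted directly from \cite{CDS} (with the footnote pointing out that the convexity of blow-ups, which \cite{CDS} omitted, is supplied by \cite{FJ,RTW,CDV}). So there is no ``paper's own proof'' to compare against, and the review is only of the soundness of your proposal.

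Your outline is essentially the \cite{CSS}-style argument (subtract the first-order Taylor polynomial of $\psi$, pass to the Caffarelli--Silvestre extension, run an Almgren frequency function, classify blow-ups, iterate), and there is a genuine gap in the first step: you ground the preliminary $C^{1,\alpha}_{\rm loc}$ regularity on \emph{semiconvexity}. Semiconvexity of $u$ (bounds on $\partial_{ee}u$ from below) is obtained in \cite{CSS} by translation arguments that crucially use $\psi\in C^{1,1}$; here $\psi$ is only $C^{1+s+\delta}$, which is strictly below $C^{1,1}$ whenever $1+s+\delta<2$, i.e.\ for every $s<1$ once $\delta$ is small. This is exactly the obstruction that makes the \cite{CDS} result nontrivial: with a merely $C^{1+s+\delta}$ obstacle, tangential second derivatives of $u$ have no a priori one-sided bound, and the semiconvexity machinery collapses. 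Without a replacement for that step, the affine subtraction $w=u-\ell_{x_0}$ still makes sense (since $\psi\in C^1$), but your compactness scheme needs some preliminary modulus on $u$ to extract a nontrivial, nondegenerate blow-up; you cannot borrow the $C^{1,\alpha}$ bound from semiconvexity, and the FKS/De Giorgi--Nash--Moser theory for the $A_2$-weighted extension only hands you H\"older continuity. You would either need to show that a $C^\alpha$ modulus suffices to close the frequency-function argument (it plausibly does, but this needs to be spelled out, particularly for the almost-monotonicity corrections coming from the nonzero right-hand side $-(-\Delta)^s\psi\in C^{1-s+\delta}$), or replace the semiconvexity input by an iterative improvement-of-flatness scheme as in \cite{CDS}, which builds the Lipschitz and then $C^{1+s}$ bounds scale by scale and never uses second-derivative information on $\psi$.

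Apart from this, the rest of the sketch is coherent: the $\delta$-gain does make the rescaled obstacle disappear in the blow-up limit, the convexity-of-blow-ups issue is correctly flagged as the delicate part of the classification (this is precisely what the footnote addresses), and the dyadic interpolation to pass from the pointwise growth $|u-\ell_{x_0}|\lesssim d^{1+s}$ to the $\|u\|_{C^{1+s}(B_{1/2})}$ bound is standard.
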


The regularity of free boundaries was also established for the first time in \cite{CSS} for $\psi\in C^{2,1}$, and under weaker assumptions
on the obstacle in \cite{CDS,RTW}.
Combining the results in \cite{CSS,CDS} with those in \cite{RS17} (see also \cite{CRS}) and in \cite{FRS}, we get the following.
Here, the function $d$ denotes the distance to the contact set $\{u=0\}$.

\begin{theo}[\cite{CSS,CDS,RS17,RTW}] \label{FB-thm}
Let $\psi$ be such that $\psi\in C^{1+2s+\delta}(B_1)$ for some $\delta>0$, and $u$ be any solution of~\eqref{obst-pb}.
Then, for every free boundary point $x_\circ\in \{u>\psi\}\cap B_{1/2}$ we have:
\begin{itemize}
\item[(i)] either 
\[(u-\psi)(x) = c_{x_\circ}d^{1+s}(x) + O(|x-x_\circ|^{1+s+\alpha}),\]
with $c_{x_\circ}>0$.

\item[(ii)] or
\[(u-\psi)(x) = O(|x-x_\circ|^{1+s+\alpha}).\]
\end{itemize}
Moreover, in case (i) the free boundary is $C^{1,\alpha}$ in a neighborhood of $x_\circ$.

Furthermore, we have 
\[\|(u-\psi)/d^{1+s}\|_{C^{\alpha}(\overline{\{u>\psi\}}\cap B_{1/2})} \leq C\left(\|(-\Delta)^s\nabla\psi\|_{L^\infty(B_1)} + \|u\|_{L^\infty(\R^n)}\right).\]
The constants $C$ and $\alpha>0$ depend only on $n$, $s$, and $\delta$.
\end{theo}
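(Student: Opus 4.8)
The final statement to be proven is Theorem~\ref{FB-thm} in the appendix, on the fine structure of the free boundary for the fractional obstacle problem. The plan is to assemble this from known results in the literature rather than prove anything from scratch, since the point of the appendix is to package the pieces needed in the main text. First I would reduce to the standard formulation: given a solution $u$ of \eqref{obst-pb} with obstacle $\psi\in C^{1+2s+\delta}(B_1)$, set $v=u-\psi$, so that $v\ge 0$, $v$ satisfies $(-\Delta)^s v = -(-\Delta)^s\psi$ in the non-contact set $\{v>0\}$, and $v=0$ in the contact set. The optimal regularity theorem (\cite{CDS}, quoted just above) gives $v\in C^{1+s}$ with the stated bound, which is the starting point for any blow-up analysis.

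Next I would invoke the dichotomy at free boundary points. By the classification of blow-ups at free boundary points — originally \cite{CSS} for smooth obstacles, extended in \cite{CDS,RTW}, with the convexity of blow-ups supplied by \cite{FJ,RTW,CDV} as noted in the footnote — each point $x_\circ\in \partial\{u>\psi\}\cap B_{1/2}$ is either \emph{regular}, where the blow-up is (a rotation of) the model solution $c\,(x\cdot e)_+^{1+s}$ with $c>0$, or \emph{degenerate}, where the blow-up has vanishing order strictly larger than $1+s$. At regular points the $C^{1,\alpha}$ regularity of the free boundary in a neighborhood is exactly the main theorem of \cite{CSS}/\cite{CDS}. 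Once the free boundary is $C^{1,\alpha}$ near $x_\circ$, the higher-order expansion $(u-\psi)(x)=c_{x_\circ} d^{1+s}(x)+O(|x-x_\circ|^{1+s+\alpha})$ follows: one uses that $d$, the distance to the contact set, is itself $C^{1,\alpha}$ near such points, and then the expansion of $v/d^{1+s}$ comes from the boundary-regularity results for the fractional Laplacian in $C^{1,\alpha}$ domains — precisely \cite{RS17} (and \cite{CRS}), which show $v/d^{1+s}\in C^\alpha$ up to the free boundary, together with the finer expansion from \cite{FRS}. At degenerate points, the bound $(u-\psi)(x)=O(|x-x_\circ|^{1+s+\alpha})$ is the statement that the vanishing order exceeds $1+s$, again from the blow-up classification. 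Finally, the global estimate $\|(u-\psi)/d^{1+s}\|_{C^\alpha(\overline{\{u>\psi\}}\cap B_{1/2})}\le C(\|(-\Delta)^s\nabla\psi\|_{L^\infty(B_1)}+\|u\|_{L^\infty(\R^n)})$ is obtained by combining the interior estimate of \cite{CDS} with the boundary estimate of \cite{RS17}, where the right-hand side naturally involves $(-\Delta)^s\nabla\psi$ because differentiating the equation $(-\Delta)^s v=-(-\Delta)^s\psi$ in the non-contact set produces $(-\Delta)^s\nabla v=-(-\Delta)^s\nabla\psi$, and the quotient $v/d^{1+s}$ is controlled in terms of $\nabla v$ up to the boundary.

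The part that requires the most care — and the reason this is stated as a theorem rather than a one-line citation — is the careful bookkeeping of regularity thresholds: one must track exactly how the assumption $\psi\in C^{1+2s+\delta}$ feeds through the blow-up analysis (which needs $\psi\in C^{1+s+\delta}$ for optimal regularity of $u$) and through the free-boundary and boundary-Harnack estimates (which need one more derivative on $\psi$, hence $C^{1+2s+\delta}$) to produce a single $\alpha>0$ and a single constant $C$ depending only on $n,s,\delta$. A secondary subtlety, flagged in the footnote, is that the argument of \cite{CDS} implicitly requires blow-ups to be convex, which is not self-evident; I would cite \cite{FJ,RTW,CDV} for this and make clear that the classification into cases (i) and (ii) relies on it. I do not expect to reprove any of \cite{CSS}, \cite{CDS}, \cite{RS17}, \cite{RTW}, \cite{FRS}; the contribution here is to state the combined result in the form \eqref{obst-pb} with the precise expansion and the precise norm on the right-hand side, so that it can be used verbatim as assumptions \eqref{itemnondeg}--\eqref{itemeqreg} in the body of the paper via the correspondence $h^{\mu_V}\leftrightarrow u$, $\varphi\leftrightarrow\psi$, $\zeta_V\leftrightarrow u-\psi$.
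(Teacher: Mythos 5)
Your proposal is correct and takes essentially the same approach as the paper: Theorem~\ref{FB-thm} is not proved in the appendix but is stated as a recall of results from the literature, with the paper explicitly saying ``Combining the results in \cite{CSS,CDS} with those in \cite{RS17} (see also \cite{CRS}) and in \cite{FRS}, we get the following,'' and flagging the convexity-of-blow-ups caveat via \cite{FJ,RTW,CDV} in a footnote. Your unpacking of which reference supplies which ingredient — \cite{CDS} for optimal $C^{1+s}$ regularity, \cite{CSS,CDS,RTW} for the regular/degenerate dichotomy and $C^{1,\alpha}$ free boundary at regular points, \cite{RS17} (and \cite{FRS}) for the boundary-Harnack-type expansion of $v/d^{1+s}$ and the quantitative $C^\alpha$ estimate — is consistent with the paper's citation list and with the role the theorem plays downstream.
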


Notice that this gives a dichotomy between \textbf{regular points} (i), and \textbf{degenerate/singular points} (ii).
We have no information a priori on how many regular or singular points there could be.

Notice also that at any regular point $x_\circ$ we have
\[c_{x_\circ}=\lim_{\Omega\ni x\to x_\circ} \frac{u-\psi}{d^{1+s}},\]
while at degenerate/singular points this limit is zero.

Concerning the \emph{generic} regularity of free boundaries, the best known results are those in \cite{FR,FT,CC} and \cite{colombofigalli,KM}.

\begin{theo}[\cite{CC}]
Let $s\in(0,1)$ with $n>2s$ and $\psi\in C^5_c(B_1)$.
Let $u_t$ be the family of solutions of \eqref{obst-pb-global} with $\psi_t:=\psi+t$, for $t\in(-1,1)$.
Assume in addition $n\leq 3$.

Then, for almost every $t$, all points of the free boundary $\partial\{u_t>0\}$ are regular, in the sense of (i) above.
\end{theo}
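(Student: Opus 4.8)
The plan is to adapt the space--time method for generic regularity of obstacle problems (due to Figalli--Ros-Oton--Serra in the classical case) to the nonlocal setting. First I would pass to the Caffarelli--Silvestre extension, turning \eqref{obst-pb-global} into a thin (Signorini) obstacle problem for $\mathrm{div}(|y|^{1-2s}\nabla\,\cdot\,)$ in $\R^{n+1}$. This makes available the Almgren--Garofalo--Petrosyan frequency function and the classification of blow-ups: at a free boundary point $x_\circ$ of $u_t$ the frequency $\kappa_{x_\circ,t}$ equals $1+s$ precisely at the regular points of Theorem \ref{FB-thm}(i), and otherwise belongs to $\{2m:m\in\mathbb N,\ m\ge 1\}$, i.e.\ the degenerate/singular points (ii). Write $\Sigma_t\subset B_{1/2}$ for the set of singular points of $u_t$; the statement to prove is that $\{t\in(-1,1):\Sigma_t\neq\varnothing\}$ is Lebesgue--null.

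The first key step is monotonicity of the family $\{u_t\}$ in $t$. Since the obstacle problem \eqref{obst-pb-global} is monotone with respect to the obstacle (by the comparison principle, using $u\to 0$ at infinity), $t\mapsto u_t$ is nondecreasing; and since $u_t+h$ is itself a supersolution for the obstacle $\psi_{t+h}$, one also has $u_{t+h}\le u_t+h$. Evaluating these two inequalities at a contact point shows that the coincidence sets $\Lambda_t:=\{u_t=\psi_t\}$ are nondecreasing, $\Lambda_{t_1}\subseteq\Lambda_{t_2}$ for $t_1\le t_2$; moreover for each $x$ the map $t\mapsto u_t(x)$ is monotone and $1$--Lipschitz, hence a.e.\ differentiable, and $\dot u_t:=\partial_t u_t$ satisfies $0\le\dot u_t\le 1$, equals $1$ on the interior of $\Lambda_t$, and is $(-\Delta)^s$--harmonic in $\{u_t>\psi_t\}$. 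Consequently each $x$ has a well-defined entrance time into the contact set and stays there, the free boundaries $\partial\Lambda_t$ sweep out a monotone family, and every singular point of $u_{t_0}$ lies on $\partial\Lambda_{t_0}$ at or after its entrance time.

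Next I would stratify, for fixed $t$, the singular set as $\Sigma_t=\bigcup_{m\ge 1}\bigcup_{j=0}^{n-1}\Sigma^{2m}_{t,j}$, where $\Sigma^{2m}_{t,j}$ collects the singular points at which the $2m$--homogeneous blow-up is invariant under translations along a $j$--dimensional subspace (the spine). By the structure theory for the thin obstacle problem (Garofalo--Petrosyan, Focardi--Spadaro, Colombo--Spolaor--Velichkov) each $\Sigma^{2m}_{t,j}$ is locally contained in a $C^1$ $j$--dimensional manifold. The heart of the argument is then to control the space--time anomalous set $\mathcal S^{2m}_j:=\{(x,t):x\in\Sigma^{2m}_{t,j}\}\subset B_{1/2}\times(-1,1)$: using a Monneau--type monotonicity formula written jointly in the variables $(x,t)$ --- which is legitimate precisely because $\dot u_t=1$ on $\Lambda_t$, so that the $t$--direction enlarges the spine of the blow-up by one flat dimension --- one shows $\mathcal S^{2m}_j$ is covered by countably many $C^1$ manifolds of dimension $\le j$ in $\R^{n+1}$, on each of which the projection onto the $t$--axis has image of measure zero. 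A coarea argument then gives $\Sigma_t=\varnothing$ for a.e.\ $t$.

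The main obstacle I expect is closing the dimensional bookkeeping in this last step, and this is where the hypothesis $n\le 3$ enters. One must check that every space--time stratum $\mathcal S^{2m}_j$ has dimension small enough for its $t$--projection to be null; the dangerous strata are the top ones --- quadratic and quartic blow-ups with spine of dimension close to $n-1$ --- and the available regularity of the obstacle ($\psi\in C^5_c$, so only finitely many homogeneities $2m$ are relevant and only a limited amount of manifold regularity for the strata is accessible) caps what can be proved, forcing $n\le 3$, just as the classical smooth case forces $n\le 4$. The remaining ingredients --- the reduction to the extension problem, the comparison-based monotonicity in $t$, and the fixed-$t$ stratification of $\Sigma_t$ --- are by now standard and can be quoted from the thin obstacle literature.
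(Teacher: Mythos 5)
This statement is presented in the paper's appendix (Section ``Known results'') as a theorem \emph{cited} from the reference \cite{CC}; the paper supplies no proof of it whatsoever, so there is no in-paper argument to compare your proposal against. You have effectively reconstructed a plausible sketch of the external proof strategy rather than matched something the present paper establishes.

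On the merits of your sketch: the broad architecture (Caffarelli--Silvestre extension, Almgren frequency, monotonicity of $t\mapsto u_t$ with $0\le\dot u_t\le 1$, monotonicity of the contact sets $\Lambda_t$, space--time Monneau/epiperimetric analysis, and a coarea-type projection to kill the bad $t$'s) is the right family of ideas for generic regularity of obstacle problems. Two caveats are worth flagging. First, your frequency classification is too coarse: at free boundary points of the thin obstacle problem the non-regular homogeneities are not exhausted by $\{2m\}$; there are also degenerate points with homogeneity $2m+2s$ (and further exceptional values), and Theorem \ref{FB-thm}(ii) lumps all of these together. A complete proof must show that \emph{all} non-regular points --- not only the integer-frequency singular ones --- disappear for a.e.\ $t$, which requires a separate treatment (typically via cleaning/dimension-reduction arguments for the other strata). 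Second, your comparison step $u_{t+h}\le u_t+h$ is fine as an obstacle-problem comparison even though $u_t+h$ does not decay at infinity, since one only needs it to be a supersolution dominating $u_{t+h}$ near infinity, but you should spell this out: the decay of both $u_t$ and $u_{t+h}$ is precisely what forces the ordering at infinity. Finally, the crucial quantitative dimensional bookkeeping that pins down $n\le 3$ is exactly what you acknowledge as unresolved; at the level of a proposal this is acceptable, but it is where the substance of the cited result lives and it cannot be waved away as ``standard.''
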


On the other hand, we also have a nondegeneracy condition:

\begin{prop}[\cite{BFR}] 
Let  $s\in(0,1)$, $n>2s$, and  $\psi\in C^{3,\gamma}_c(\R^n)$ satisfying
\[\Delta \psi \leq 0 \quad \textrm{in}\quad \{\psi>0\}.\]
Let $u$ be the global solution of \eqref{obst-pb-global}.
Then, for any free boundary point $x_\circ$ we have 
\[\|u-\psi\|_{B_r(x_\circ)} \geq c_1r^2>0\]
for all $r\in (0,1)$.
The constant $c_1$ might depend on $u$, but not on the point $x_\circ$.
\end{prop}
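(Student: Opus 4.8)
\textbf{Plan for the proof of the nondegeneracy statement (Proposition~\ref{BFR} in the appendix).}

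The claim is the classical quadratic nondegeneracy $\|u-\psi\|_{L^\infty(B_r(x_\circ))}\ge c_1 r^2$ at every free boundary point, under the sign condition $\Delta\psi\le 0$ on $\{\psi>0\}$. The natural approach is the maximum-principle argument of Caffarelli adapted to the fractional setting, essentially as in \cite{BFR}. First I would set $v:=u-\psi$, which by \eqref{obst-pb-global} satisfies $v\ge 0$ in $\R^n$, $v=0$ on the contact set $\{u=\psi\}$, and $\fls v = -\fls\psi \ge$ (something we must control) on the noncoincidence set $\{u>\psi\}$. The key algebraic point is that on $\{u>\psi\}$ one has $\fls u=0$, hence $\fls v=-\fls\psi$; and where $\psi>0$ the hypothesis $\Delta\psi\le 0$ does not directly give a sign for $\fls\psi$, so one must instead argue on the contact set's complement using that $u$ is $s$-harmonic there and compare against an explicit subsolution.

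The concrete steps: (1) Fix a free boundary point $x_\circ$, and by continuity of $v$ pick a point $z\in\{u>\psi\}\cap \overline{B_{r/2}(x_\circ)}$ with $v(z)$ comparable to $\sup_{B_{r/2}(x_\circ)}v$ (if the sup were much smaller than $r^2$ we derive a contradiction). (2) In the connected component $U$ of $\{u>\psi\}$ containing $z$, the function $u$ is $s$-harmonic: $\fls u=0$ in $U$. Compare $u$ with the function $w(x)=u(z)+\tfrac{c}{2}(|x-z|^2-\rho^2)$-type barrier, or more precisely with $\psi(x)+\tfrac{C}{2}|x-x_\circ|^2$, using that $\fls$ of a paraboloid-plus-bounded-correction can be made to have a favorable sign because $-\Delta(\tfrac{C}{2}|x|^2)=-Cn<0$ and $\Delta\psi\le 0$ where relevant. (3) Apply the comparison principle for $\fls$ on $U\cap B_r(x_\circ)$: on $\partial U$ we have $u=\psi$, and on $\partial B_r(x_\circ)$ the barrier is dominated by $\sup v$, so if $\sup_{B_r} v$ were too small the barrier would force $u<\psi$ somewhere in $U$, contradicting $u\ge\psi$. (4) Iterate/rescale (a standard doubling argument: the estimate at scale $r$ bootstraps to all smaller scales) to get the bound for all $r\in(0,1)$, with $c_1$ independent of the point but possibly dependent on the global solution $u$ (through $\|u\|_{L^\infty}$ and the tail behavior at infinity which enters the nonlocal operator).

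The main obstacle is handling the nonlocal tail in the comparison argument: $\fls$ at a point depends on the values of $u-\psi$ over all of $\R^n$, so a barrier built only in $B_r(x_\circ)$ has an uncontrolled contribution from far away. The remedy is to build the subsolution globally — e.g. take the barrier to be $\psi$ plus a truncated paraboloid that is $\equiv 0$ outside $B_{2r}(x_\circ)$, and absorb the resulting (sign-unfavorable) far-field term of order $r^{n+2s}/r^{n+2s}\sim$ a controlled constant times $\sup v$ into the estimate; this is exactly where the $C^{3,\gamma}_c$ regularity of $\psi$ and the compact support are used, to make $\fls\psi$ bounded and to control the tails. A secondary subtlety is that in the borderline dimensions $n\le 2s$ the "global" problem \eqref{obst-pb-global-1}--\eqref{obst-pb-global-1<} has logarithmic/power growth at infinity rather than decay; but since the statement is phrased for $n>2s$ with $\psi\in C^{3,\gamma}_c$, this case does not arise and \eqref{obst-pb-global} with decay at infinity is the relevant setting. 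I would simply cite \cite{BFR} for the detailed execution once the scheme above is in place.
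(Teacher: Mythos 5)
Your proposal has the right overall shape (a maximum-principle/barrier argument) but misses the single key idea, and without it the argument does not close. The heart of the matter is \emph{where} the lower bound on "curvature'' comes from. For the classical obstacle problem one has $\Delta u=1>0$ on the noncoincidence set, and a quadratic barrier works because the local Laplacian of a paraboloid is constant. Here $\fls u=0$ on $\{u>\psi\}$, which is useless as a source of nondegeneracy, and as you yourself note the hypothesis $\Delta\psi\le 0$ does not give a sign for $\fls\psi$. Your plan tries to patch this by building a truncated-paraboloid subsolution for $\fls$ and "absorbing'' the tail contribution, but this cannot work: cutting off a paraboloid produces positive second differences near the edge of the cutoff, so $\fls$ of the truncated barrier becomes strongly negative precisely where you need it to be nonnegative, and the comparison in step (3) fails irrespective of how one balances the tail.

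The proof in \cite{BFR} (also reproduced in the proof of the lemma following the proposition in the appendix) instead converts the problem to a \emph{local} one via the semigroup identity $(-\Delta)=(-\Delta)^{1-s}\circ(-\Delta)^s$. One sets $w:=\fls u\ge 0$, observes that $w$ is supported in the bounded contact set (since $u>\psi\equiv 0$ outside a compact set), and then for $x\in\{u>\psi\}$,
\[
\Delta u(x)=-(-\Delta)^{1-s}w(x)=c_{n,1-s}\int_{\R^n}\frac{w(z)\,dz}{|x-z|^{n+2(1-s)}}\ \ge\ c\int w\ \ge\ c_M>0,
\]
where the strict positivity of $\int w$ is obtained from a Riesz-potential estimate using $u\ge\psi\ge0$ and a lower bound on $\|\psi\|_{L^1}$. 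Combined with $\Delta\psi\le0$ on $\{\psi>0\}$ one gets $\Delta(u-\psi)\ge c_M>0$ on the noncoincidence set near the free boundary, and from there the \emph{classical} Caffarelli nondegeneracy argument for the local Laplacian runs with no tail issues. The parenthetical in your step (2) — that "$-\Delta(\tfrac{C}{2}|x|^2)=-Cn<0$ and $\Delta\psi\le 0$'' — is exactly the local reasoning one needs, but it can only be invoked once $\Delta u\ge c_M>0$ has been established, and producing that bound is the step your proposal is missing.
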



Finally, the following result shows that without the sign assumption on $\Delta\psi$, such nondegeneracy may fail at \emph{every} free boundary point.

\begin{prop}[\cite{FR}]\label{weird}
Let $s\in(0,1)$ and $m\in \{1,2,3,...\}$.
Given any bounded~$C^\infty$ domain $\Omega\subset\R^n$, there exists an obstacle $\psi\in C^\infty(\R^n)$ with $\psi\to0$ at $\infty$, and a global solution to the obstacle problem  \eqref{obst-pb-global}, such that the contact set is exactly $\{u=\psi\}=\Omega$, and moreover at every free boundary point $x_\circ$ we have
\[(u-\psi)(x) = c_\circ d^{m+s}(x) + O(|x-x_\circ|^{m+1+s}),\]
where $d$ is the distance to $\Omega$.

In particular, when $m\geq2$, all free boundary points are of the type (ii) above.
\end{prop}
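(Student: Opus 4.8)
The plan is to \emph{reverse-engineer} the solution: one first constructs a function $u$ on $\R^n$ whose contact set is exactly $\overline{\Omega}$ and which vanishes across $\partial\Omega$ precisely to order $m+s$, and only afterwards produces an obstacle $\psi$ for which $u$ is the solution of \eqref{obst-pb-global}. Recall that the solution of the obstacle problem is characterised by the four conditions $(-\Delta)^s u\ge 0$ on $\R^n$, $u\ge\psi$, $(-\Delta)^s u=0$ on $\{u>\psi\}$, and $u=\psi$ on the contact set; by uniqueness it therefore suffices to exhibit a pair $(u,\psi)$ satisfying these. To that end one takes $u$ to be the potential (with respect to the fundamental solution of $(-\Delta)^s$) of a nonnegative measure $\mu$ supported in $\overline{\Omega}$, so that automatically $(-\Delta)^s u=\mu\ge 0$ on $\R^n$, $(-\Delta)^s u=0$ in $\R^n\setminus\overline{\Omega}$, and $u$ has the prescribed decay at infinity (when $n\le 2s$ one uses the logarithmic, resp.\ negative-power, kernel and keeps the growth in \eqref{obst-pb-global-1}--\eqref{obst-pb-global-1<}). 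Then one chooses $\psi\in C^\infty(\R^n)$ with $\psi=u$ on $\overline{\Omega}$, $\psi<u$ on $\R^n\setminus\overline{\Omega}$, and $\psi$ decaying faster than $u$ at infinity; this forces $\{u=\psi\}=\overline{\Omega}$ and makes $u$ the solution.

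The real content is to choose $\mu$ so that $u-\psi=c_\circ\,d^{m+s}+O(d^{m+1+s})$ at every point of $\partial\Omega$, with $c_\circ>0$. Fix a smooth extension $\bar g\in C^\infty(\R^n)$ of $u|_{\overline{\Omega}}$ and put $v:=u-\bar g$, so that $v$ vanishes on $\overline{\Omega}$ and solves $(-\Delta)^s v=-(-\Delta)^s\bar g=:f$ in $\R^n\setminus\overline{\Omega}$ with $f$ smooth. By the fine boundary regularity and the higher-order boundary expansions for solutions of $(-\Delta)^s v=f$ vanishing in $\overline{\Omega}$ (as in \cite{RoS14}, \cite{ARo20} and the references therein), $v$ admits, from the exterior side, an expansion
\[
v(x)=\sum_{j=0}^{m-1} a_j(x_\circ)\, d(x)^{s+j}+O\big(|x-x_\circ|^{s+m}\big),\qquad x\to x_\circ\in\partial\Omega ,
\]
where each coefficient $a_j$ depends continuously on the jet of $f$ along $\partial\Omega$ and on the local geometry of $\partial\Omega$, and is determined recursively. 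The goal is to select $f$ (equivalently $\mu$, through $\bar g$) so that $a_0\equiv\cdots\equiv a_{m-1}\equiv 0$ on $\partial\Omega$ while $a_m>0$ there.

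This matching is the main obstacle, and is where the freedom of the construction is spent. One treats it as a finite-codimension adjustment: starting from a fixed datum $f_0$ producing $a_m>0$, one looks for $f=f_0+\sum_k t_k\chi_k$ with $\{\chi_k\}$ a rich family of smooth functions supported near $\partial\Omega$; since $(t_k)\mapsto(a_0,\dots,a_{m-1})\big|_{\partial\Omega}$ is affine-continuous and, by the explicit recursive dependence of the $a_j$ on the jet of $f$, has dense image, one cancels the first $m$ coefficients by an arbitrarily small perturbation, keeping $a_m>0$ by continuity. Two global sign conditions must be preserved along the way: first, $v\ge 0$ on all of $\R^n\setminus\overline{\Omega}$ — hence $u\ge\bar g=\psi$ there — which follows from the maximum principle for $(-\Delta)^s$ once $f\ge 0$ near $\partial\Omega$ and $v\ge 0$ is arranged on a large ball; and second, $\mu=(-\Delta)^s u\ge 0$ on $\Omega$, i.e.\ $(-\Delta)^s\bar g(x)\ge c_{n,s}\int_{\R^n\setminus\overline{\Omega}}v(y)\,|x-y|^{-n-2s}\,dy>0$ for $x\in\Omega$, which is guaranteed by taking $\bar g$ with $(-\Delta)^s\bar g$ large on $\Omega$ (for instance by adding to $\bar g$ the potential of a large positive bump in $\Omega$, which contributes only smoothly to the $a_j$ and so does not disturb the cancellation).

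Granting this construction, the conclusion is immediate: the four defining conditions hold, so by uniqueness $u$ is the solution of \eqref{obst-pb-global}, its contact set is $\overline{\Omega}$, and at every free boundary point $u-\psi=v=c_\circ d^{m+s}+O(d^{m+1+s})$ with $c_\circ>0$. Finally, when $m\ge 2$ this rules out alternative (i) of Theorem \ref{FB-thm} at every point, since $0\le(u-\psi)(x)\le C\,d(x)^{m+s}\le C\,|x-x_\circ|^{m+s}=O(|x-x_\circ|^{1+s+\alpha})$ for every $\alpha<1\le m-1$; hence all free boundary points are of type (ii). To reiterate, the crux is the simultaneous cancellation of the first $m$ boundary coefficients with the correct positive sign of the $m$-th, uniformly along $\partial\Omega$, while maintaining the global positivity of $v$ in $\R^n\setminus\overline{\Omega}$ and of $\mu$ in $\Omega$.
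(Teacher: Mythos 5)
This proposition is stated in the paper purely as a citation to \cite{FR}, so there is no in-paper argument to compare against; I can only assess your proposal on its own terms. Your high-level strategy --- reverse-engineering the pair $(u,\psi)$, factoring out the contact set via $v=u-\psi$, and trying to arrange the boundary expansion of $v$ so that $a_0=\dots=a_{m-1}=0$ while $a_m>0$ --- is a plausible line of attack and in the right spirit. But the proof as written has a genuine gap exactly at the step you yourself call ``the real content'': you never establish that the required cancellation can actually be achieved.

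Concretely, you assert that the affine map from perturbations $(t_k)$ of $f$ to the first $m$ boundary coefficients $(a_0,\dots,a_{m-1})|_{\partial\Omega}$ ``has dense image,'' and then conclude one can ``cancel the first $m$ coefficients by an arbitrarily small perturbation.'' Dense image does not give exact realizability of the zero vector; you are trying to hit one specific point (the origin) in an infinite-dimensional space of boundary functions, and density of the image is logically irrelevant for that. You would need genuine surjectivity of the map $f\mapsto(a_0,\dots,a_{m-1})$ onto the appropriate space of functions on $\partial\Omega$, and that is the actual technical content of the construction --- in effect an invertibility statement for a certain Poisson-type boundary operator. No argument is given; it is simply claimed via ``explicit recursive dependence of the $a_j$.'' Moreover, even a small perturbation of $f$ changes all the $a_j$ by small amounts, so if the starting $a_0,\dots,a_{m-1}$ are not already small, the cancellation cannot be arranged by an \emph{arbitrarily small} perturbation; your phrasing is internally inconsistent.

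A second, smaller but real gap concerns the sign constraints. To conclude $v\ge 0$ in $\Omega^c$ from the maximum principle you need $f=-(-\Delta)^s\bar g\ge 0$ \emph{throughout} $\Omega^c$ (together with $v=0$ on $\overline\Omega$ and decay); your weaker hypothesis ``$f\ge 0$ near $\partial\Omega$ and $v\ge 0$ on a large ball'' does not suffice for a nonlocal operator, for which one cannot localize the comparison argument to an annulus. Likewise, verifying $\mu=(-\Delta)^su=(-\Delta)^s v+(-\Delta)^s\bar g\ge0$ in $\Omega$ requires $(-\Delta)^s\bar g$ to be positive and sufficiently large inside $\Omega$ while $-(-\Delta)^s\bar g\ge0$ outside $\Omega$, i.e.\ $(-\Delta)^s\bar g$ must change sign across $\partial\Omega$ in a controlled way; you assert this can be done by ``adding a large positive bump,'' but whether that addition is compatible with the simultaneously needed coefficient cancellation, and with keeping the perturbation small enough not to destroy $a_m>0$, is exactly what needs to be proved. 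Taken together, the outline identifies the right ingredients, but the central step (surjectivity onto the jet of boundary coefficients, compatible with the three global sign and decay constraints) is missing.
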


Notice that, thanks to the results in \cite{FR}, such type of degenerate solutions ($m\geq2$) are very rare, in the sense that generically we expect $m=1$.

\subsection{Some new results: nondegeneracy}

Our first goal is to prove some quantitative nondegeneracy results for the global obstacle problem \eqref{obst-pb-global}.
More precisely, we want to get uniform lower bounds for the constants $c_{x_\circ}$ in Theorem \ref{FB-thm}(i).

\begin{prop}\label{quantitative-regular}
Let $s\in(0,1)$, $n>2s$, and $\psi\geq0$ satisfying
\[\|\psi\|_{L^1}=1,\qquad {\rm supp}\,\psi\subset B_M,\qquad \|\psi\|_{C^{2+\gamma}(B_{\rho}(x_\circ))} \leq M, \qquad \Delta \psi \leq 0 \quad \textrm{in}\quad \{\psi>0\},\]
with $\gamma>s$.
Let $u$ be global solution to \eqref{obst-pb-global}, and assume that $x_\circ \in \partial\{u>\psi\}$ and that the free boundary can be written as a Lipschitz graph in $B_{\rho}(x_\circ)$, with Lipschitz constant bounded by $M$.

Then, $x_\circ$ is a regular point, and the constant in Theorem \ref{FB-thm}(i) satisfies 
\[c_{x_\circ} \geq \delta>0,\]
where $\delta$ depends only on $n$, $s$, $M$, and $\rho$.
\end{prop}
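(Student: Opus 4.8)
The plan is to prove Proposition \ref{quantitative-regular} by a compactness/contradiction argument, reducing the quantitative bound $c_{x_\circ}\ge\delta$ to the qualitative statements already available: Theorem \ref{FB-thm}, which gives the dichotomy regular/degenerate and the $C^\alpha$ bound on $(u-\psi)/d^{1+s}$, and the classification of degenerate points together with the nondegeneracy Proposition \ref{weird}-type obstructions. First I would normalize: fix $x_\circ=0$, and reduce $B_\rho(0)$ to $B_1(0)$ by scaling; the hypotheses $\|\psi\|_{L^1}=1$, $\mathrm{supp}\,\psi\subset B_M$, $\|\psi\|_{C^{2+\gamma}(B_\rho)}\le M$, the Lipschitz-graph bound $\le M$ on the free boundary, and $\Delta\psi\le 0$ on $\{\psi>0\}$ all pass to a compact class of obstacles and solutions after scaling (the scaling is the natural $1+s$-homogeneous one, $u_r(x)=r^{-1-s}u(rx)$, under which $\|\psi\|_{L^1}=1$ is only used to fix the far-field normalization and the mass $\kappa$, hence to keep $\|u\|_{L^\infty}$ under control via the representation $u=\g\ast(-\Delta)^s u$ and $\int(-\Delta)^s u=\|\psi\|_{L^1}$ after the obstacle problem's balance).

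The core of the argument is then: suppose for contradiction that there is a sequence $\psi_k$ in this class, global solutions $u_k$, and free boundary points (at the origin after translation) with $c_{0}(u_k)\to 0$. By the uniform estimates of Theorem \ref{FB-thm} and interior elliptic estimates for $(-\Delta)^s$, the $u_k$ are bounded in $C^{1+s}_{\rm loc}$ and the quotients $(u_k-\psi_k)/d_k^{1+s}$ are bounded in $C^\alpha(\overline{\{u_k>\psi_k\}}\cap B_{1/2})$; the free boundaries are uniformly Lipschitz graphs. Passing to a subsequence we obtain a limit obstacle $\psi_\infty\in C^{2+\gamma'}$ (for any $\gamma'<\gamma$) still satisfying $\Delta\psi_\infty\le0$ on $\{\psi_\infty>0\}$, a limit solution $u_\infty$ of the global obstacle problem, with $0$ a free boundary point whose free boundary is Lipschitz near $0$, and with $c_0(u_\infty)=\lim c_0(u_k)=0$ (using that the quotient converges uniformly up to the free boundary, so the value at $0$ passes to the limit). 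Thus $0$ is a degenerate point of $u_\infty$. But now Proposition \ref{quantitative-regular}'s hypothesis $\Delta\psi_\infty\le 0$ on $\{\psi_\infty>0\}$ puts us exactly in the setting of the nondegeneracy result cited as coming from \cite{BFR}: at a degenerate point one would have $(u_\infty-\psi_\infty)(x)=O(|x|^{1+s+\alpha})$, yet the $\Delta\psi\le0$ condition forces the quadratic nondegeneracy $\|u_\infty-\psi_\infty\|_{B_r}\ge c_1 r^2$, which is incompatible with $1+s+\alpha<2$ when $\alpha$ is chosen small (this is precisely why Theorem \ref{FB-thm}(ii) with $\alpha+s<1$ cannot coexist with $\Delta\psi\le0$). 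This contradiction shows $0$ must be regular for $u_\infty$, and more: it rules out $c_0(u_\infty)=0$, giving the desired uniform lower bound.

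I expect the main obstacle to be making the compactness rigorous, specifically two points. First, one must control the far-field behavior and $L^\infty$ norm of the global solutions $u_k$ uniformly: $u_k=\g\ast(-\Delta)^s u_k$ with $(-\Delta)^s u_k\ge0$ supported in the contact set and total mass $\|\psi_k\|_{L^1}=1$, so $u_k$ decays like $\g$ at infinity with a uniform constant, and Harnack/De Giorgi-type estimates plus the $C^{1+s}$ bound of Theorem \ref{FB-thm} give a uniform $L^\infty$ bound on bounded sets; one then passes to the limit in the viscosity (or variational) formulation. Second, one must ensure the limiting free boundary still passes through the origin and is not degenerate in a trivial way — this is where the uniform Lipschitz-graph hypothesis is essential, as it prevents the contact set from collapsing or the free boundary from escaping in the limit, and guarantees the limit free boundary is again a Lipschitz graph through $0$. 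Once these are in place, the incompatibility of Theorem \ref{FB-thm}(ii) with the $\Delta\psi\le0$-nondegeneracy closes the argument; I would also remark that, since $\delta$ produced by compactness is not explicit, the statement is consistent with the Proposition as phrased (dependence only on $n,s,M,\rho$). Finally, the borderline dimension/range issues ($n>2s$, $n\le3$ in the generic-regularity theorems) are not needed here since we are only using the nondegeneracy and the a priori dichotomy, both valid for all $s\in(0,1)$ and $n>2s$.
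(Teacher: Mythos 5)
Your proposal takes a genuinely different route from the paper, and it has a real gap.

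The paper's proof is a direct, quantitative comparison with no compactness. It first establishes a quantitative version of the BFR nondegeneracy (Lemma \ref{lem-BFR}): the constant in $\|u-\psi\|_{L^\infty(B_r(x_\circ))}\ge c_M r^2$ depends only on $n,s,M,\rho$, not on $u$ itself. This uses $\|\psi\|_{L^1}=1$ to give a uniform lower bound on the mass of $(-\Delta)^s u$, then $\Delta u = -(-\Delta)^{1-s}(-\Delta)^s u\ge c_M>0$ on $\{u>\psi\}\cap B_{M+1}$, plus $\Delta\psi\le 0$. Together with Theorem \ref{FB-thm} this already gives regularity of $x_\circ$ (so far similar in spirit to your argument). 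The step you do not touch is the \emph{quantitative} Taylor expansion: the paper then invokes the higher boundary regularity results of Abatangelo--Ros-Oton to show that the free boundary is $C^{2+\gamma-s}$ with an explicit norm bound, and that $|u-\psi-c_{x_\circ}d^{1+s}|\le C|x-x_\circ|^{2+\gamma}$ with $C$ depending only on $n,s,M,\rho$. Comparing $c_M r^2 - c_{x_\circ}r^{1+s}\le Cr^{2+\gamma}$ and choosing $r$ small then gives $c_{x_\circ}\ge\delta$ directly. Nothing is passed to a limit.

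The gap in your compactness argument is exactly where you flag the "main obstacle" but do not resolve it. The estimate you rely on --- the uniform $C^\alpha$ control on $(u_k-\psi_k)/d_k^{1+s}$ up to the free boundary from Theorem \ref{FB-thm}, which is what lets you pass $c_0(u_k)\to c_0(u_\infty)$ to the limit --- has constants that depend on $\|(-\Delta)^s\nabla\psi\|_{L^\infty(B_1)}$ and on $\|u\|_{L^\infty(\R^n)}$. Neither of these is bounded uniformly over the class of obstacles allowed by the hypotheses: $\psi$ is only in $C^{2+\gamma}$ near $x_\circ$, and elsewhere only $L^1$-normalized with compact support, so it (and hence $u$) can concentrate along a sequence, and $(-\Delta)^s\nabla\psi$ need not be in $L^\infty$ away from $x_\circ$. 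For the same reason, your limit obstacle $\psi_\infty$ need not exist as a function (the $\psi_k$ can converge weak-$*$ to a measure), and your representation $u_k = \g\ast(-\Delta)^s u_k$ with mass $1$ controls the far-field decay but not the near-field $L^\infty$ norm. Without these uniform bounds the sequence $(u_k)$ need not converge in a topology strong enough to pass the quotient to the limit, and the contradiction cannot be closed. The paper's quantitative/local approach is precisely designed to avoid this: Lemma \ref{lem-BFR} and the higher-regularity expansion are both local near $x_\circ$ and need only the local $C^{2+\gamma}$ bound on $\psi$ plus the global $L^1$ normalization, not a uniform $L^\infty$ bound on $u$. If you want to salvage a compactness proof, you would need to first localize Theorem \ref{FB-thm} so that the $C^\alpha$ bound on the quotient depends only on the local data --- which is essentially re-proving the higher-regularity estimate that the paper uses directly.
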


\begin{rem}
Notice that the main two assumptions in this result are:
\begin{itemize}
\item The global assumption $\Delta \psi\leq0$

\item The local assumption on the geometry of the free boundary
\end{itemize}
Both assumptions are somewhat necessary, in the following sense.
Without the first one we may have solutions with smooth free boundaries, for which all points are degenerate (recall Proposition \ref{weird}).
Without the second one, we may still have solutions with degenerate points (of order 2).
\end{rem}

In case $n\leq 2s$ we have the following analogue result.

Here, we denote $\Gamma(x)=-\log|x|$ if $s=\frac12$, and $\Gamma(x)=-|x|^{1-2s}$ if $s>\frac12$.

\begin{prop}\label{quantitative-regular-1}
Let $n=1$, $s\in[\frac12,1)$, $\gamma>s$, and $u$ be global solution to \eqref{obst-pb-global-1} or \eqref{obst-pb-global-1<} with $\psi$ satisfying
\[\|\psi\|_{C^{2+\gamma}([x_\circ-\rho,x_\circ+\rho]))} \leq M, \quad \psi'' \leq 0 \ \, \textrm{for}\ \,|x|<M, \quad \frac{u-\psi}{\kappa\Gamma} \geq \rho \quad \textrm{for}\quad |x|>M. \]
Assume that $x_\circ \in \partial\{u>\psi\}$ with $u>\psi$ in $(x_\circ,x_\circ+\rho)$ and $u=\psi$ in $(x_\circ-\rho,x_\circ]$.

Then, $x_\circ$ is a regular point, and the constant in Theorem \ref{FB-thm}(i) satisfies 
\[c_{x_\circ} \geq \delta>0,\]
where $\delta$ depends only on $s$, $M$, and $\rho$.
\end{prop}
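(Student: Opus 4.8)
\textbf{Proof proposal for Proposition \ref{quantitative-regular-1}.}

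The plan is to mirror the structure of the higher-dimensional Proposition \ref{quantitative-regular} but work in one dimension, where the distance function $d$ and the geometry of the free boundary are trivial (the contact set is an interval $(x_\circ-\rho, x_\circ]$ near $x_\circ$, so $d(x) = (x-x_\circ)_+$ in a neighborhood and the Lipschitz-graph hypothesis is automatic). First I would reduce the statement to a compactness-contradiction argument: suppose there is a sequence of obstacles $\psi_j$ and solutions $u_j$ satisfying the hypotheses with the same constants $M, \rho$, free boundary points $x_j \to 0$ (after translation we may normalize $x_\circ = 0$), but with $c_{x_j} = c_{u_j - \psi_j}(x_j) \to 0$. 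The one-sided contact condition ($u_j = \psi_j$ on $(-\rho, 0]$, $u_j > \psi_j$ on $(0,\rho)$) guarantees via Theorem \ref{FB-thm} that each $x_j$ is either regular with constant $c_{x_j}$, or degenerate (constant zero); in either case $c_{x_j}\to 0$ lets us treat them uniformly. Using the uniform $C^{2+\gamma}$ bound on $\psi_j$ near $0$, the uniform bound $\psi_j'' \le 0$ on $\{|x|<M\}$, and the far-field nondegeneracy $\frac{u_j-\psi_j}{\kappa\Gamma} \ge \rho$, I would extract (after normalization by the blow-up factor at scale-one, and using the optimal $C^{1+s}$ regularity from Theorem \ref{CDS}) a limiting global solution $u_\infty$ of the obstacle problem with obstacle $\psi_\infty$, with $0$ a free boundary point at which the coincidence-constant vanishes.

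The key step is then to derive a contradiction from the degeneracy of the limit at $0$. Here the hypothesis $\psi'' \le 0$ on the noncoincidence set is crucial, exactly as in the Nondegeneracy Proposition attributed to \cite{BFR}: from $(-\Delta)^s(u_\infty - \psi_\infty) = (-\Delta)^s u_\infty - (-\Delta)^s\psi_\infty \ge 0$ where $u_\infty > \psi_\infty$ (since $(-\Delta)^su_\infty \ge 0$ everywhere and $-(-\Delta)^s\psi_\infty$ can be controlled using $\psi_\infty'' \le 0$ together with a tail estimate), one obtains that $w := u_\infty - \psi_\infty$ is $s$-superharmonic in its positive set, and the far-field condition $\frac{w}{\kappa\Gamma}\ge \rho$ at infinity forces $w$ to carry a definite amount of fractional mass. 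Running the standard nondegeneracy argument (comparison with the explicit $d^{1+s}$-type subsolution, or a monotonicity-formula / Weiss-type argument in dimension one) shows $\|w\|_{L^\infty(B_r)}\ge c_1 r^{1+s}$ for all small $r$, which contradicts the vanishing of the coincidence constant at $0$. Equivalently, and perhaps more cleanly, one shows the rescalings $r^{-(1+s)}w(x_\circ + rx)$ cannot converge to $0$ as $r\to 0$: their limit would be a global $1$-homogeneous (of degree $1+s$) solution of the obstacle problem with a degenerate point, but the sign condition on $\psi''$ rules out degenerate homogeneous blow-ups of order $\ge 2$, and the only admissible blow-up is the regular one $c\,d^{1+s}$ with $c>0$, contradicting $c = 0$.

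Finally I would note that the dichotomy (regular vs. degenerate) from Theorem \ref{FB-thm} plus the just-proved lower bound $c_{x_\circ}\ge \delta$ automatically places $x_\circ$ in case (i), so "$x_\circ$ is a regular point" follows; the uniformity of $\delta$ in $s, M, \rho$ (and independence of the particular solution) is inherited from the compactness argument, since all hypotheses were quantitative in exactly these parameters. The main obstacle I anticipate is the compactness/blow-up extraction in the unbounded one-dimensional setting: one must be careful that the logarithmic ($s=1/2$) or $|x|^{1-2s}$ growth at infinity is preserved in the limit and that the normalization $\|\psi\|_{L^1}$-type control (which in dimension one is replaced by the mass condition on $\kappa$ and the growth rate) is genuinely uniform — this is where the hypotheses $\psi'' \le 0$ on $\{|x|<M\}$ and $\frac{u-\psi}{\kappa\Gamma}\ge\rho$ for $|x|>M$ do the work, and one needs to check that together they prevent the obstacle or the solution from degenerating in the limit. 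The local regularity input $\psi \in C^{2+\gamma}$ with $\gamma > s$ is exactly what is needed to apply Theorem \ref{FB-thm} and to get the $C^{1+s+\alpha}$-type expansion with a quantitative remainder, so no further regularity theory beyond what is cited is required.
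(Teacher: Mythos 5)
Your proposal takes a genuinely different route from the paper, and it contains a couple of significant gaps.

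The paper's argument is direct and quantitative, with no compactness. It first establishes (Lemma \ref{lem-BFR}, Case 2) that in dimension one $u'' = -(-\Delta)^{1-s}w \geq c\kappa>0$ on $\{u>\psi\}\cap\{|x|<M+1\}$, where $w=(-\Delta)^s u\ge 0$ is supported in the contact set; combined with $\psi''\le 0$ this gives the \emph{quadratic} nondegeneracy $\|u-\psi\|_{L^\infty(B_r(x_\circ))}\ge c_M r^2$. It then uses boundary Harnack regularity (\cite[Theorems 1.2, 1.4]{AR}) to get the one-sided expansion $|u-\psi-c_{x_\circ}d^{1+s}|\le C r^{2+\gamma}$, and closes by comparing the two inequalities: $c_M r^2 - c_{x_\circ}r^{1+s}\le Cr^{2+\gamma}$ gives $c_{x_\circ}\ge c_M r^{1-s}-Cr^{\gamma+1-s}$, and choosing $r$ small finishes. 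Both regularity of $x_\circ$ and the explicit lower bound $\delta$ drop out simultaneously, with no limit extraction.

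Your compactness-and-contradiction scheme could in principle work, but two points need repair. First, you claim the nondegeneracy yields $\|w\|_{L^\infty(B_r)}\ge c_1 r^{1+s}$; that exponent is not available from $\psi''\le 0$ alone. What you get (and what the paper proves) is the quadratic bound $\ge c_M r^2$, which is strictly weaker than $r^{1+s}$ for small $r$ since $1+s<2$. Asserting the $r^{1+s}$ bound directly is essentially assuming the conclusion, because $\|u-\psi\|_{L^\infty(B_r)}\sim c_{x_\circ}r^{1+s}$ is precisely what you are trying to bound below. Second, after the blow-up extraction you would need to rule out degenerate limits, but the nondegeneracy $\ge c_M r^2$ only excludes frequencies strictly above $2$; frequency exactly $2$ (singular points) is not excluded by that estimate alone and requires a further structural input. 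You would also need to check that the sign condition $\psi''\le 0$ survives the blow-up (typically it becomes $\psi''=0$, which no longer carries the force you invoke), and that the far-field normalization $\frac{u-\psi}{\kappa\Gamma}\ge\rho$ passes to the limit in the logarithmic/$|x|^{1-2s}$ regime — this is the delicate part you correctly flag, but the paper avoids it entirely. Compared with your compactness scheme, the paper's direct comparison between the $r^2$ nondegeneracy and the $O(r^{2+\gamma})$ remainder of the $d^{1+s}$ expansion bypasses all of these issues and gives a cleaner, explicitly quantitative $\delta$.
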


To prove Propositions \ref{quantitative-regular} and \ref{quantitative-regular-1}, we will first need the following.

\begin{lem}\label{lem-BFR}
Let $s\in(0,1)$, $n\geq1$, and $\psi$ as in Proposition \ref{quantitative-regular} or \ref{quantitative-regular-1}.
Let $u$ be global solution to \eqref{obst-pb-global} or \eqref{obst-pb-global-1} or \eqref{obst-pb-global-1<}.
Then, for any free boundary point $x_\circ\in B_{\rho/2}(z)$ we have 
\begin{equation}\label{nondeg-quant}
\|u-\psi\|_{L^\infty(B_r(x_\circ))} \geq c_Mr^2>0
\end{equation}
for all $r\in (0,1)$, where  $c_M$ depends only on $n$, $s$, $\kappa$, $\rho$, $\delta$, and $M$.
\end{lem}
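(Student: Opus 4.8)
\textbf{Proof plan for Lemma \ref{lem-BFR} (quantitative nondegeneracy).}

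The plan is to adapt the nondegeneracy argument of \cite{BFR} and make the constant explicit in terms of the stated data. The starting point is the classical observation that, since $u$ solves the obstacle problem, $(-\Delta)^s u \le 0$ in $\{u>\psi\}$ and $(-\Delta)^s u = (-\Delta)^s\psi$ on the contact set, so that the function $w:=u-\psi$ satisfies $(-\Delta)^s w \le -(-\Delta)^s\psi \le \Delta$-type control where $\Delta\psi\le 0$ in $\{\psi>0\}$. The key point is that $(-\Delta)^s$ of a function whose Laplacian is nonpositive has a sign that can be exploited: more precisely, one should first establish, as in \cite[Lemma]{BFR}, that in a neighborhood of the free boundary point $x_\circ$ one has $(-\Delta)^s w \le -c$ in the \emph{non-coincidence set} near $x_\circ$, for a constant $c$ controlled by $M$ (this uses $\Delta\psi\le 0$ on $\{\psi>0\}$, the compact support $\supp\psi\subset B_M$, the normalization $\|\psi\|_{L^1}=1$, and the $C^{2+\gamma}$ bound on $\psi$ near $x_\circ$ to bound $(-\Delta)^s\psi$ from below locally). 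This is the ingredient that is genuinely global and explains why the assumption $\Delta\psi\le 0$ on $\{\psi>0\}$ is needed.

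Next I would run a comparison/barrier argument. Fix $r\in(0,1)$ and suppose for contradiction that $\|w\|_{L^\infty(B_r(x_\circ))} = \ep r^2$ with $\ep$ small. One builds an explicit subsolution: take a quadratic-type barrier $\varphi(x) = c|x-x_\circ|^2/2 + (\text{suitable }s\text{-harmonic correction})$ chosen so that $(-\Delta)^s\varphi \ge -c'$ in $B_r(x_\circ)$ while $\varphi \le 0$ on $\partial B_r(x_\circ)$ and $\varphi(x_\circ)>0$, with $c,c'$ explicit multiples of $c_M$. Comparing $w$ with a rescaled multiple of this barrier inside $B_r(x_\circ)\cap\{u>\psi\}$, using that $w\ge 0$ everywhere and $w$ vanishes on the contact set (which meets $B_r(x_\circ)$ by the free-boundary assumption), the maximum principle for $(-\Delta)^s$ forces $w \ge \delta' r^2$ somewhere in $B_{r}(x_\circ)$, contradicting smallness of $\ep$ for $\ep$ below a threshold depending only on $n,s,\kappa,\rho,\delta,M$. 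One has to be slightly careful that the barrier argument is carried out in the set $B_r(x_\circ)\setminus\{u=\psi\}$ and that the nonlocal contributions from outside $B_r$ are controlled — here the uniform bounds $\|u\|_{L^\infty}\lesssim_M 1$ (from $\|\psi\|_{L^1}=1$ and the representation $u = \g*(-\Delta)^s u$ with total mass $\kappa$) and, in the $n\le 2s$ cases, the lower bound $\frac{u-\psi}{\kappa\Gamma}\ge\rho$ at infinity keep the tail terms bounded in the right direction.

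For the one-dimensional cases \eqref{obst-pb-global-1} and \eqref{obst-pb-global-1<} the same scheme works with $\Gamma(x)=-\log|x|$ or $-|x|^{1-2s}$ replacing the decaying fundamental solution; the only change is that the growth at infinity is logarithmic (resp.\ $|x|^{1-2s}$), so the tail estimates in the comparison argument must use the explicit asymptotics $u\sim\kappa\Gamma$ together with the hypothesis $\frac{u-\psi}{\kappa\Gamma}\ge\rho$ for $|x|>M$, which plays the role that decay at infinity plays when $n>2s$. The main obstacle I anticipate is making the first step — the local lower bound $(-\Delta)^s w \le -c_M$ near $x_\circ$ — genuinely quantitative: one must track how the constant degenerates as $x_\circ$ approaches $\partial B_M$ or as the free boundary Lipschitz constant grows, which is why the hypotheses fix $\supp\psi\subset B_M$, the Lipschitz bound $M$, and the interior radius $\rho$. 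Everything else is a routine barrier comparison once that sign is in hand.
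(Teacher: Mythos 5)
The key step of your plan --- establishing $(-\Delta)^s(u-\psi)\le -c<0$ in the non-coincidence set near $x_\circ$ and then running a barrier argument for $(-\Delta)^s$ --- is not what the paper does, and the inequality does not follow from the hypotheses. Inside $\{u>\psi\}$ one has $(-\Delta)^s u = 0$ (not $\le 0$, as you wrote, and on the contact set $(-\Delta)^s u\ge 0$, not $(-\Delta)^s u=(-\Delta)^s\psi$), so $(-\Delta)^s(u-\psi) = -(-\Delta)^s\psi$ there, and your claim amounts to a pointwise lower bound $(-\Delta)^s\psi\ge c>0$ in that set. But $\Delta\psi\le 0$ holds only on $\{\psi>0\}$, so $\psi$ is not globally superharmonic (in fact $(-\Delta)^s\psi\le 0$ on $\{\psi=0\}$), and even where $\psi>0$ the tail of the singular integral for $(-\Delta)^s\psi(y)$ contributes a term of order $-\|\psi\|_{L^\infty}\delta^{-2s}$ that your data do not let you offset against $\psi(y)\delta^{-2s}$: you have no lower bound on $\psi(y)$ comparable to $\|\psi\|_{L^\infty}$. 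So the starting inequality of your barrier comparison is missing, and the nonlocal barrier argument you sketch would in any case require controlling the exterior data for the comparison function, which you do not construct.

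The paper's route is to bound the genuinely local quantity $\Delta u$. Set $w:=(-\Delta)^s u\ge 0$, which is supported in $\{u=\psi\}\subset B_M$, and use $\Delta u = -(-\Delta)^{1-s}w$. For $x_1\in\{u>\psi\}\cap B_{M+1}$ one has $w(x_1)=0$, so $-(-\Delta)^{1-s}w(x_1)=c_{n,s}\int w(z)|x_1-z|^{-(n+2-2s)}\,dz\ge c\int|w|$, and a Riesz potential estimate combined with $u\ge\psi\ge0$ and $\|\psi\|_{L^1}=1$ gives $\int|w|\ge c_M>0$; hence $\Delta u\ge c_M$ in $\{u>\psi\}\cap B_{M+1}$. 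A separate step, which your plan omits entirely, shows $\psi(x_\circ)\ge c_M>0$ at free boundary points by combining $0=(-\Delta)^s u(x_\circ)\le (-\Delta)^s\psi(x_\circ)$ with a local/tail splitting of $(-\Delta)^s\psi(x_\circ)$; this produces a uniform distance $r_M$ between the contact set and $\{\psi=0\}$, which is exactly what licenses applying $\Delta\psi\le 0$ in the comparison region. The lemma then follows from the purely local inequality $\Delta(u-\psi)\ge c_M$ in $\{u>\psi\}$ near $x_\circ$ and comparison with a quadratic, as in the cited step of [BFR]; no barrier for $(-\Delta)^s$ and no nonlocal tail control are needed.
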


\begin{proof}
\noindent \textsc{Case 1.} Assume first $n>2s$.

As in \cite{BFR}, we consider $w:=(-\Delta)^s u\geq0$ in $\R^n$.
Since $ {\rm supp}\,w\subset \{u=\psi\}$, then for any $x_1\in \{u>\psi\}\cap B_{M+1}$ we have
\[-(-\Delta)^{1-s}w(x_1) = c_{n,s}\int_{\R^n}\frac{w(z)dz}{|x_1-z|^{n+2(1-s)}} \geq c\int_{B_M} w = c\int_{\R^n} |w| .\]

Now, by classical estimates for Riesz potentials (see \cite{Stein}), it follows that
\[\int_{\R^n} \big|(-\Delta)^s u\big| \geq c\|u\|_{L^{\frac{n}{n-2s}}_{\rm weak}(\R^n)} \geq c\|\psi\|_{L^{\frac{n}{n-2s}}_{\rm weak}(\R^n)} \geq c_M>0,\]
where we used that $u\geq \psi\geq0$ and that $\|\psi\|_{L^1}=1$.

Since $u$ is a global solution, we deduce
\[\Delta u = -(-\Delta)^{1-s}w \geq c_M>0 \quad \textrm{in}\quad \{u>\psi\}\cap B_{M+1}.\]

We now observe that at every free boundary point $x_\circ\in B_{\rho/2}(z)$ and for any $r\in(0,\rho/2)$ we have $u(x_\circ)=\psi(x_\circ)$ and
\[0=(-\Delta)^s u(x_\circ) \leq (-\Delta)^s \psi(x_\circ) \leq Cr^{2-2s}\|\psi\|_{C^{1,1}(B_r(x_\circ))} + C\int_{B_r^c(x_\circ)} \frac{\psi(x_\circ) - \psi(x_\circ+y)}{|y|^{n+2s}}. \]
Rearranging terms and using the assumptions on $\psi$, we get
\[c\|\psi\|_{L^1} -Cr^{2-2s}  \leq Cr^{-2s}\psi(x_\circ).\]
Choosing a small $r>0$ (depending only on $n$, $s$, $\rho$, and $M$), we deduce that 
\[\psi(x_\circ)\geq c_M>0\]
for some constant $c_M$.
Since $x_\circ$ was an arbitrary free boundary point, and by regularity of $\psi$, it follows that there exists $r_M>0$ such that 
\[{\rm dist}\big(\{\psi=0\}\cap B_{\rho/2}(z),\,\{u=\psi\}\cap B_{\rho/2}(z)\big)  \geq {r_M}>0.\]
Then, the proof of \eqref{nondeg-quant} finishes exactly as in \cite[Proof of Lemma 3.1]{BFR}.

\vspace{2mm}

\noindent \textsc{Case 2.} 
Assume now $n=1\leq 2s$. 
Then, $w:=(-\Delta)^s u\geq0$ satisfies
 ${\rm supp}\,w\subset \{u=\psi\}$, and then for any $x_1\in \{u>\psi\}$ with $|x_1|<M+1$
\[-(-\Delta)^{1-s}w(x_1) = c_{1,s}\int_{\R^n}\frac{w(z)dz}{|x_1-z|^{1+2(1-s)}} \geq c\int_{B_M} w=c\kappa>0.\]
Hence,
\[u'' = -(-\Delta)^{1-s}w \geq c\kappa> 0 \quad \textrm{in}\quad \{u>\psi\}\cap \{|x|<M+1\}.\]
Thus, around such free boundary point $x_\circ\in \{|x|<M-\rho\}$ we have $u''\geq c\kappa>0$ and $\psi''\leq 0$ in $(x_\circ-\rho,x_\circ+\rho)$, and in particular $(u-\psi)(x_\circ\pm r)\geq c\kappa>0$.
\end{proof}

We now prove the following.

\begin{proof}[Proof of Propositions \ref{quantitative-regular} and  \ref{quantitative-regular-1}]
By Lemma \ref{lem-BFR} and Theorem \ref{FB-thm} above, it follows that $x_\circ$ is a regular point, and the free boundary is a $C^{1,\alpha}$ graph in $B_{\rho/2}(x_\circ)$, for some $\alpha>0$.
Moreover, thanks to \cite[Theorem 1.2]{AR} this implies that the free boundary is a $C^{2+\gamma-s}$ graph in $B_{\rho/4}(x_\circ)$, and a quick inspection of the proof shows that its $C^{2+\gamma-s}$ norm is bounded by a constant depending only on $n$, $s$, $\gamma$, $M$, and $\rho$.
Then, by \cite[Theorem 1.4]{AR} (applied to the derivatives of $u-\psi$) we deduce that 
\[\left\|\frac{\partial_i (u-\psi)}{d^s}\right\|_{C^{1+\gamma-s}(B_{\rho/8}(x_\circ))} \leq C\]
and therefore
\[\big|u-\psi - c_{x_\circ} d^{1+s}\big| \leq C|x-x_\circ|^{2+\gamma},\]
with $C$ depending only on $n$, $s$, $M$, and $\rho$.

Combining this with the nondegeneracy condition \eqref{nondeg-quant}, we get 
\[ c_Mr^2 - c_{x_\circ} r^{1+s} \leq \left\|(u-\psi)-c_{x_\circ} d^{1+s}\right\|_{L^\infty(B_r(x_\circ))} \leq Cr^{2+\gamma}.\]
Hence, we deduce
\[c_Mr^{1-s} - Cr^{\gamma+1-s} \leq c_{x_\circ}.\]
Choosing $r>0$ such that $Cr^\gamma=\frac12 c_M$, the result follows.
\end{proof}

\subsection{Some new results: behavior of $(-\Delta)^su$ near the free boundary}

We next want to prove a new result concerning the local behavior of $(-\Delta)^su$ near the free boundary.

\begin{prop} \label{prop1}
Let $\psi$ be such that $\psi\in C^{1+2s+\delta}(B_1)$ for some $\delta>0$, and $u$ be any solution of~\eqref{obst-pb}.
Define 
\[d(x)={\rm dist}(x,\{u=\psi\})\qquad \textrm{and}\qquad d_-(x):={\rm dist}(x,\{u>\psi\})\]
Assume that  $x_\circ\in \{u>\psi\}\cap B_{1/2}$ is a regular free boundary point, i.e., we have
\[(u-\psi)(x) = c_{x_\circ}d^{1+s}(x) + O(|x-x_\circ|^{1+s+\alpha}),\]
with $c_{x_\circ}>0$, and the free boundary is $C^{1,\alpha}$ in a neighborhood of $x_\circ$.

Then, we have 
\[\big|(-\Delta)^s u(x) - \bar c_s c_\circ d_-^{1-s}(x)\big| \leq C|x-x_\circ|^{1-s+\alpha}\quad \textrm{in}\quad \{u=\psi\}\cap B_{1/2},\]
where $\kappa_s:= \bar c_s/(1-s)$ and $\bar c_s$ is given by \eqref{eq.constant0}.
The constant $C$ depends only on $n$, $s$, $\delta$, $\alpha$, $\|(-\Delta)^s\nabla \psi\|_{L^\infty}$, and the $C^{1,\alpha}$ norm of the free boundary in $B_{1/2}$.
\end{prop}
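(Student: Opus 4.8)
The statement asserts that if $x_\circ$ is a regular free boundary point of the fractional obstacle problem, then $(-\Delta)^s u$ detaches from zero on the contact side at a precise rate $\dist(x,\{u>\psi\})^{1-s}$, with coefficient determined by the coefficient $c_{x_\circ}$ of the $d^{1+s}$ expansion of $u-\psi$. The natural strategy is to reduce to a half-space model computation: near a regular point, after the known $C^{1,\alpha}$ regularity of the free boundary, $u-\psi$ is close to the global half-space profile $c_{x_\circ}(x_n)_+^{1+s}$ (in suitable rotated coordinates), and the explicit computation $(-\Delta)^s (x_n)_+^{1+s} = \bar c_s (x_n)_-^{1-s}$ in the half-space (valid with the constant $\bar c_s$ from \eqref{eq.constant0}) furnishes the leading behavior; the remaining terms are controlled as errors using the $O(|x-x_\circ|^{1+s+\alpha})$ control on $u-\psi$ and the regularity of $\psi$.

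\textbf{Key steps, in order.} First I would set up coordinates: by the $C^{1,\alpha}$ regularity of the free boundary at $x_\circ$ (Theorem~\ref{FB-thm}(i)) we may, after a translation and rotation, assume $x_\circ=0$ and the contact set near $0$ is $\{x_n \le \gamma(x')\}$ with $\gamma \in C^{1,\alpha}$, $\gamma(0)=0$, $\nabla\gamma(0)=0$. Second, I would write the expansion $u - \psi = c_\circ d^{1+s} + R$ with $|R(x)|\le C|x|^{1+s+\alpha}$ where $d(x) = \dist(x, \{u=\psi\})$; since $d$ is comparable to and $C^{1,\alpha}$-close to the ``flattened'' distance $(x_n - \gamma(x'))_+$ up to errors of order $|x|^{1+\alpha}$, one replaces $d^{1+s}$ by $d_0^{1+s}$, $d_0(x):=(x_n-\gamma(x'))_+$, absorbing the difference into $R$ (the exponent bookkeeping is $d^{1+s}-d_0^{1+s} = O(|x|^s\cdot|x|^{1+\alpha}) = O(|x|^{1+s+\alpha})$). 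Third, I would compute $(-\Delta)^s$ of the three pieces at a point $x$ in the contact set: (a) $(-\Delta)^s \psi(x)$ is bounded by $\|(-\Delta)^s\psi\|_{L^\infty}$, but more to the point we need its Hölder-type behavior — since $(-\Delta)^s\nabla\psi \in L^\infty$ one gets $(-\Delta)^s\psi(x) = O(|x|)$ after noting $(-\Delta)^s\psi(x_\circ) = (-\Delta)^su(x_\circ)$ cannot be used directly here; instead one uses $\psi \in C^{1+2s+\delta}$ so that $(-\Delta)^s\psi \in C^{1+\delta}$ hence $(-\Delta)^s\psi(x) - (-\Delta)^s\psi(0) = O(|x|)$, and $(-\Delta)^s u(0)=0$, $u(0)=\psi(0)$ plus the $C^{1+s}$ regularity of $u$ forces $(-\Delta)^s\psi(0)\le 0$ but we actually only need the $O(|x|^{1-s+\alpha})$ bound, which follows since $1 \ge 1-s+\alpha$; (b) $(-\Delta)^s(c_\circ d_0^{1+s})(x)$: here I localize — split $(-\Delta)^s d_0^{1+s}(x) = c_{n,s}\int (d_0^{1+s}(x) - d_0^{1+s}(x+y))|y|^{-n-2s}dy$, and compare with the model half-space profile $p(x) = (x_n)_+^{1+s}$ whose fractional Laplacian on $\{x_n\le 0\}$ is exactly $\bar c_s (x_n)_-^{1-s}$. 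The difference $d_0^{1+s} - p$ is, near $0$, of order $|x|^s \cdot |\gamma(x')| \lesssim |x|^{1+s+\alpha}$ in a neighborhood and globally controlled by the boundedness assumptions; one shows $(-\Delta)^s(d_0^{1+s} - p)(x) = O(|x|^{1-s+\alpha})$ by splitting the integral into the region $|y|\le |x|/2$ (Taylor/Hölder estimate using the $C^{1+s+\alpha}$-type modulus of $d_0^{1+s}-p$ away from the free boundary together with the behavior of second derivatives which blow up like $\dist^{s-1}$) and $|y|>|x|/2$ (direct estimate using pointwise bounds). Finally $d_-(x) = \dist(x,\{u>\psi\})$ coincides with $(x_n)_-$ flattened, i.e. $d_-(x) = -(x_n-\gamma(x'))$ up to $O(|x|^{1+\alpha})$ in the contact set, so $(x_n)_-^{1-s}$ can be replaced by $d_-^{1-s}(x)$ at the cost of $O(|x|^{(1-s) } |x|^\alpha) = O(|x|^{1-s+\alpha})$. (c) $(-\Delta)^s R(x)$: this is the delicate term; one uses that $R = (u-\psi) - c_\circ d_0^{1+s}$ satisfies $|R|\le C|x|^{1+s+\alpha}$ near $0$ and is globally bounded, and that $(-\Delta)^s$ of a function with this modulus of continuity at a point, combined with the fact that $(-\Delta)^s R = (-\Delta)^s u - (-\Delta)^s\psi - c_\circ(-\Delta)^s d_0^{1+s}$ where the first term vanishes in the contact set — so in fact $(-\Delta)^s R(x) = -(-\Delta)^s\psi(x) - c_\circ(-\Delta)^s d_0^{1+s}(x)$ is \emph{not} independently estimated but rather the whole identity is reorganized: on $\{u=\psi\}$ we have $(-\Delta)^s u = 0$, hence $0 = (-\Delta)^s\psi + c_\circ(-\Delta)^s d_0^{1+s} + (-\Delta)^s R$, and we instead prove the target by estimating $(-\Delta)^s R$ directly via the integral representation and the $C^{1+s+\alpha}$-regularity of $u-\psi$ up to the free boundary (which is part of the regular-point package) together with the same regularity of $d_0^{1+s}$ — so $R$ is itself $C^{1+s+\alpha}$ up to the free boundary with the right modulus, and one splits the $(-\Delta)^s R$ integral exactly as in (b) getting $O(|x|^{1-s+\alpha})$. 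Assembling (a), (b), (c) gives $(-\Delta)^s u(x) = \bar c_s c_\circ d_-^{1-s}(x) + O(|x|^{1-s+\alpha})$, which is the claim.

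\textbf{Main obstacle.} The crux is step (b)–(c): carefully justifying that the fractional Laplacian of $c_\circ d_0^{1+s} + R = u-\psi$, restricted to the contact set, picks out exactly $\bar c_s c_\circ d_-^{1-s}$ as its leading term with the stated error exponent. This requires (i) the explicit one-dimensional/half-space computation identifying the constant $\bar c_s$ (the computation $(-\Delta)^s(t_+^{1+s}) = \bar c_s\, t_-^{1-s}$ on the negative axis, extended to half-spaces in $\R^n$), and (ii) a quantitative ``frozen coefficient'' argument that the actual curved free boundary $\{x_n = \gamma(x')\}$ produces only a lower-order perturbation — this is where the $C^{1,\alpha}$ (in fact, via \cite[Theorem 1.2]{AR}, $C^{2+\gamma-s}$) regularity of the free boundary and the $C^{1+s+\alpha}$ expansion of $u-\psi$ must be combined with a scaling/splitting estimate on the singular integral, tracking how the $\dist^{s-1}$-blowup of second derivatives interacts with the $|y|^{-n-2s}$ kernel. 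Everything else is routine once these two ingredients are in place; the half-space model computation and the blow-up classification at regular points (from \cite{CSS,CDS}, Theorem~\ref{FB-thm}) do most of the conceptual work.
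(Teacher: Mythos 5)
Your approach is genuinely different from the paper's, which uses a slick "differentiate, apply Lemma~\ref{lem1}, integrate" trick rather than a direct comparison with a half-space profile. Let me address both the difference and a couple of issues in your proposal.

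\emph{The paper's route.} The paper applies Lemma~\ref{lem1} to $v=\partial_n(u-\psi)$. Since $u-\psi$ behaves like $d^{1+s}$ at a regular point, its normal derivative behaves like $d^s$ inside $\{u>\psi\}$ and vanishes identically in the contact set; Lemma~\ref{lem1} is precisely tailored to such a function, giving control of $(-\Delta)^s v = \partial_n (-\Delta)^s(u-\psi)$ of the form $\bar c_s c\, d_-^{-s}$ plus error on the contact side. Integrating this along the normal segment from a free boundary point $z$ to the target point $x$, and using that $(-\Delta)^s u(z)=0$ by continuity from $\{u>\psi\}$, recovers the $d_-^{1-s}$ law with the error exponent $1-s+\alpha$. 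The integration is what produces the $\tfrac{1}{1-s}$ factor in $\kappa_s$. The whole point of the trick is that the $d^s$ profile is covered by the known fractional boundary Harnack identity $(-\Delta)^s(x_n)_+^s = \bar c_s(x_n)_-^{-s}$; one never has to estimate $(-\Delta)^s$ of a $d^{1+s}$-type function near a curved boundary, which is exactly the delicate computation your plan requires.

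\emph{Issues in your proposal.} There are two concrete problems. First, you write that $(-\Delta)^s(x_n)_+^{1+s}=\bar c_s(x_n)_-^{1-s}$ with \emph{the same} $\bar c_s$ from \eqref{eq.constant0}; but \eqref{eq.constant0} is the identity for the exponent $s$, not $1+s$. Applying $(-\Delta)^s$ to a $(1+s)$-power yields a different constant — the calculation is forced through an intermediate derivative (as the paper does), which is what introduces the $\kappa_s=\bar c_s/(1-s)$ factor (and, if one is careful, also a factor from $\partial_n d^{1+s} = (1+s)d^s\partial_n d$). Second, you state "on $\{u=\psi\}$ we have $(-\Delta)^s u = 0$"; this is backwards. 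The complementarity condition in~\eqref{obst-pb} forces $(-\Delta)^s u = 0$ on $\{u>\psi\}$, but on the contact set $\{u=\psi\}$ one only has $(-\Delta)^s u\geq 0$, and the whole content of Proposition~\ref{prop1} is that $(-\Delta)^s u$ detaches from zero on the contact side like $d_-^{1-s}$. If $(-\Delta)^s u$ were identically zero on the contact set, the statement would be trivially false (or vacuous). You do seem to abandon the identity you derive from this claim and fall back on a direct estimate of $(-\Delta)^s R$, so the error is not load-bearing for your final plan, but it reflects a confusion that should be fixed. Finally, your step (b)--(c) --- comparing $(-\Delta)^s$ of $d_0^{1+s}$ with the flat-boundary model, and bounding $(-\Delta)^s R$ --- is indeed the crux, and it is considerably more technical than the paper's scheme: you would need to handle both the growth at infinity of the half-space profile (which makes the naive pointwise formula for $(-\Delta)^s(x_n)_+^{1+s}$ ill-defined without a renormalization or cutoff) and the singular nature of $D^2(u-\psi)$ near the free boundary when splitting the integral. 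The paper avoids all of this by differentiating to the $d^s$ level where the cited boundary Harnack machinery (Lemma~\ref{lem1}, cf.~\cite{FR2}) handles everything. So your plan is conceptually sensible as an alternative, but would require substantially more work, a corrected constant for the half-space $(1+s)$-profile, and the correction of the vanishing claim.
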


We will first need the following result, similar to \cite[Lemma 2.6]{FR2}.

\begin{lem} \label{lem1}
Let $s\in(0,1)$, and  $\Omega\subset \R^n$ be any $C^{1,\alpha}$ domain, with $0\in \partial \Omega$.
Let $d(x)={\rm dist}(x,\partial\Omega)$.
Assume that $f\in L^\infty(\Omega\cap B_1)$ and $v$ solves 
\[
\left\{
\begin{array}{rcll}
\fls v & = & f & \quad\textrm{in } B_1\cap \Omega\\
v & = & 0 & \quad\textrm{in } B_1\setminus \Omega,
\end{array}
\right.
\]
and let us define $c_\circ$ as the unique constant such that
\[
v(x)= c_\circ d^s(x) + O(|x|^{s+\alpha})\quad\text{in}\quad \Omega.
\]
Then, 
\[
\big|\fls v(x) -  \kappa_s c_\circ d^{-s}(x)\big|\leq  C|x|^{\alpha}\,d^{-s}(x) \quad\text{in}\quad B_{1/2}\setminus \Omega,
\]
where $\bar c_s$ is given by \eqref{eq.constant0} below, and $C$ depends only on $n$, $s$, $\Omega$, $\|f\|_{L^\infty}$.
\end{lem}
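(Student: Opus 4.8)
\textbf{Proof proposal for Lemma \ref{lem1}.}

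The plan is to proceed by a one-dimensional-type reduction via an explicit profile, following the spirit of \cite[Lemma 2.6]{FR2}. First I would straighten the boundary: since $\partial\Omega$ is $C^{1,\alpha}$ with $0\in\partial\Omega$, after a rotation we may assume the inner normal at $0$ is $e_n$, and the $C^{1,\alpha}$ diffeomorphism flattening $\partial\Omega$ near $0$ changes $d(x)$ into a function comparable to $x_n^+$ up to a $C^{1,\alpha}$ factor (cf. the straightening computation used in the proof of Proposition~\ref{transport}). The key structural input is the explicit identity for the half-space: the function $w_0(x)=(x_n)_+^s$ satisfies $(-\Delta)^s w_0 = 0$ in $\{x_n>0\}$ and $(-\Delta)^s w_0(x) = \bar c_s (x_n)_-^{-s} \cdot (1-s)^{-1}$... more precisely $(-\Delta)^s\big((x_n)_+^s\big)(x) = \bar c_s\,(-x_n)^{-s}$ for $x_n<0$, with $\bar c_s$ the constant in \eqref{eq.constant0}; this is the model computation giving $\kappa_s = \bar c_s/(1-s)$. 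So the first step is to record this model identity and its scaled/translated versions.

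The second step is the comparison. Write $v = c_\circ d^s + (v - c_\circ d^s)$ and set $g := v - c_\circ d^s$, which by hypothesis is $O(|x|^{s+\alpha})$ in $\Omega$ and vanishes in $B_1\setminus\Omega$. Using the flattening, I would compare $c_\circ d^s$ with $c_\circ (x_n)_+^s$: the difference is controlled in $C^{s+\alpha}$-type norms by the $C^{1,\alpha}$ regularity of $\partial\Omega$, so its fractional Laplacian at a point $x\in B_{1/2}\setminus\Omega$ at distance $d_-(x)$ from $\Omega$ is $O(|x|^\alpha d_-^{-s}(x))$; this is the standard "lower-order perturbation of a homogeneous profile" estimate, proved by splitting the singular integral defining $(-\Delta)^s$ into the region $|y|\lesssim d_-(x)$ (where a second-order Taylor expansion of $c_\circ d^s$ along the $C^{1,\alpha}$-flattened coordinates gives the bound) and $|y|\gtrsim d_-(x)$ (where one uses the pointwise bound on the difference directly). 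For the term $g$: since $\fls v = f \in L^\infty$ in $B_1\cap\Omega$ and $v=0$ outside, while $c_\circ d^s$ has a controlled fractional Laplacian in $\Omega$, we get that $\fls g$ is bounded in $B_{1/2}\cap\Omega$; combined with $g = O(|x|^{s+\alpha})$ and $g=0$ outside $\Omega$, a barrier/comparison argument (again splitting the integral at scale $d_-(x)$) yields $|\fls g(x)| \lesssim |x|^\alpha d_-^{-s}(x)$ for $x\in B_{1/2}\setminus\Omega$. Adding the three contributions gives the claim.

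The main obstacle I anticipate is making the perturbation estimate for $c_\circ(d^s - (x_n)_+^s)$ and for $g$ genuinely uniform: one must track how the $C^{1,\alpha}$ norm of $\partial\Omega$ enters, and ensure the Taylor-expansion step near the boundary is legitimate despite $d^s$ being only $C^s$ up to $\partial\Omega$ (the point is that in the interior of $\Omega$, $d^s$ is smooth with derivatives blowing up like $d^{s-k}$, so one needs the standard weighted Schauder-type bounds for the half-space profile and then transfers them through the flattening map). A secondary technical point is the precise identification of the constant $\kappa_s = \bar c_s/(1-s)$; this requires carefully computing $(-\Delta)^s\big((x_n)_+^s\big)$ in the lower half-space, which is a classical but delicate one-variable integral, and checking it matches \eqref{eq.constant0}. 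Once those are in place, the rest is routine splitting of the nonlocal integral and summation of the error terms.
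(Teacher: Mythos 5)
Your overall strategy — compare $v$ with the half-space profile $(x_n)_+^s$ via the explicit identity \eqref{eq.constant0}, and estimate the difference — is in the same spirit as the paper's. However, your proposal has a genuine structural gap, and it is precisely at the step where the paper does something different.

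You decompose $v = c_\circ(x_n)_+^s + c_\circ(d^s - (x_n)_+^s) + g$ in coordinates fixed at the origin. That produces the leading term $c_\circ(-\Delta)^s(x_n)_+^s(x) = \bar c_s c_\circ(x_n)_-^{-s}$, and you then need to pass from $(x_n)_-^{-s}$ to $d^{-s}(x)$. This conversion breaks down as soon as $x\in B_{1/2}\setminus\Omega$ is close to $\partial\Omega$ but away from the origin: writing $x = z - d(x)\nu_z$ with $z\in\partial\Omega$ the nearest point, one has $-x_n = d(x)(1+O(|z'|^\alpha)) - O(|z'|^{1+\alpha})$, so $|(-x_n)^{-s} - d^{-s}(x)| \lesssim |x|^\alpha d^{-s}(x) + |x|^{1+\alpha} d^{-s-1}(x)$, and the second term blows up relative to the claimed bound when $d(x)\ll|x|$. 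Worse, in the transition layer of thickness $\sim|x'|^{1+\alpha}$ where $x_n>0$ but $x\notin\Omega$, one has $(-\Delta)^s(x_n)_+^s(x)=0$, so your main term vanishes and the whole contribution would have to come from the remainder — contradicting the claim that the remainder is $O(|x|^\alpha d^{-s})$. Flattening the boundary does not rescue this: $(-\Delta)^s$ does not commute with a $C^{1,\alpha}$ change of variables, so the explicit half-space identity is no longer available in the flattened picture.

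The paper avoids this entirely by \emph{re-centering the computation at each boundary point} $z\in\partial\Omega\cap B_{1/2}$: it uses the boundary regularity $v/d^s\in C^\alpha(\overline{\Omega}\cap B_{1/2})$ (from \cite{RS17}) to get an expansion $|v-c_z(x_n)_+^s|\lesssim |x-z|^{s+\alpha}$ in rotated coordinates adapted to the normal $\nu_z$ at $z$, and then evaluates $(-\Delta)^s v$ at $x=z-t\nu_z$ with $t=d(x)$. Along this direction the model profile gives exactly $\bar c_s c_z d^{-s}(x)$, with error $O(d^{-s+\alpha})\leq O(|x|^\alpha d^{-s})$, and the final step replaces $c_z$ by $c_\circ$ at a cost $O(|z|^\alpha d^{-s}) = O(|x|^\alpha d^{-s})$. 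That re-centering is the key move your proposal is missing; without it the estimate is simply not obtainable at all points of $B_{1/2}\setminus\Omega$. As a minor point, the constant in the conclusion should be $\bar c_s$ (as in the paper's proof); the factor $1/(1-s)$ that produces $\kappa_s = \bar c_s/(1-s)$ only appears later, after an integration in $t$, in the proof of Proposition~\ref{prop1}, and your attribution of it to the model computation for $(x_n)_+^s$ is a confusion likely induced by a typo in the lemma statement.
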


\begin{proof}
First, it follows from \cite[Lemma 2.6]{FR2} that for any $e\in \mathbb S^{n-1}$
\begin{equation}
\label{eq.constant0}
\fls (x_n)_+^{s} = \bar c_s (x_n)_-^{-s}\quad\text{with}\quad \bar c_s = -\frac{\Gamma(1+s)}{\Gamma(1-s)}.
\end{equation}

Then, by \cite{RS17}, we know that $v/d^s\in C^\alpha(\overline\Omega\cap B_{1/2})$.
Thus, if we define $c_\circ\coloneqq(v/d^s)(0)$ we then have
\[\big|(v/d^s)(x) - c_\circ\big|\leq C|x|^{\alpha},\]
and multiplying this by $d^s$ we get
\[
\big|v(x)- c_\circ d^s(x)\big| \leq C|x|^{s+\alpha}\quad\text{in}\quad \Omega.
\]
Notice that such expansion also implies that
\[\big|v(x) - c_\circ (x_n)_+^s\big| \leq C|x|^{s+\alpha},\]
where we assume that $\nu=e_n$ is the normal vector to $\partial\Omega$ at the origin.

Now, thanks to the previous expansion, and since $v\equiv0$ in $\Omega^c\cap B_1$, we find that for $x=-te_n \in \Omega^c$, with $t>0$,
\[(-\Delta)^s v(x) = c_\circ(-\Delta)^s (x_n)_+^s+O(t^{-s+\alpha}) = \bar c_s c_\circ t^{-s}+O(t^{-s+\alpha}),\]
where we used \eqref{eq.constant0}.
Since this can be done not only at the origin but at every boundary point $z\in \partial\Omega \cap B_{1/2}$, we deduce that for every $x=z-t\nu_z\in \Omega^c\cap B_{1/2}$
\[(-\Delta)^s u(x) = \bar c_s c_z t^{-s}+O(t^{-s+\alpha}).\]
For each $x\in\Omega^c$ we can choose $z\in\partial\Omega$ such that $|x-z|=d(x)$, and then we deduce
\[(-\Delta)^s u(x) = \bar c_s c_z d^{-s}(x)+O(d^{-s+\alpha}(x)).\]
Since $c_z=c_\circ+O(|z|^\alpha) = c_\circ+O(|x|^\alpha)$, we finally get
\[(-\Delta)^s u(x) = \bar c_s c_\circ d^{-s}(x)+O(|x|^\alpha)\,d^{-s}(x),\]
as wanted.
\end{proof}

We can now give the:

\begin{proof}[Proof of Proposition \ref{prop1}]
We want to apply Lemma \ref{lem1} to the functions $\partial_i(u-\psi)$.
More precisely, let $x\in \{u=\psi\}\cap B_{1/2}$, and let $z$ be its closest free boundary point, and denote $x=z-t_\circ\nu$ with $\nu\in \mathbb S^{n-1}$ and $t_\circ>0$.
Up to a rotation, we may assume $\nu=e_n$.
Then, it follows from Lemma \ref{lem1} (applied to $v=\partial_n(u-\psi)$) that 
\[\big|\fls v(x) - \bar c_s c_\circ t_\circ^{-s}\big|\leq  Ct_\circ^{\alpha-s}.\]
Moreover, the same holds for any point $y$ in the segment joining $x$ and $z$: if $y=z-te_n$, with $t\in[0,t_\circ]$, then
\[\big|\partial_n \fls (u-\psi)(z-te_n) - \bar c_s c_\circ t^{-s}\big|\leq  Ct^{\alpha-s}.\]
where we used the definition of $v$.
Integrating in $t$, and using that ---thanks to \eqref{obst-pb} and the fact that $(-\Delta)^s(u-\psi)$ is continuous---
\[\fls (u-\psi)(z)=0,\]
we deduce 
\[\left|\fls (u-\psi)(z-te_n) - \frac{\bar c_s}{1-s} c_\circ t^{1-s}\right|\leq  Ct^{1-s\alpha}.\]
The result follows by recalling that $t=d_-(x)$ and the fact that $x\in \{u=\psi\}\cap B_{1/2}$ was arbitrary.
\end{proof}

\bibliographystyle{amsalpha}
\bibliography{Riesz.bib}{}

\end{document}